\newcommand*{\formatfirst}[1]{\MakeUppercase{#1}}
\newcommand*{\mymacro}[1]{%
  \expandafter\formatfirst\expandafter{\@car #1\@empty\@nil}%
  \@cdr #1\@empty\@nil}
\newcommand*\capletter[1]{%
  \def\@myuppercasewords{\myuppercase@i#1 \@nil}%
    {\@myuppercasewords}\index{#1@\@myuppercasewords}}
\def\myuppercase@i#1 #2\@nil{%
  \mymacro{#1}%
  \ifx\\#2\\%
  \else
    \@ReturnAfterFi{%
      \space
      \myuppercase@i#2\@nil
    }%
  \fi} 
\long\def\@ReturnAfterFi#1\fi{\fi#1} 
\newcommand*\VerBar[2]{%
	\begin{tikzpicture}
	\fill[#1] (0,0)--(2,0)--(2,2)--(0,2)--cycle;
	\draw[white] (1,1)node{\rotatebox{-90}{\HUGE \textbf \thechapter}};
	\end{tikzpicture}
}
\newcolumntype{P}[1]{>{\centering\let\newline\\\arraybackslash\hspace{0pt}\scriptsize}p{#1}}
\newcolumntype{R}[0]{>{\let\newline\\\arraybackslash\hspace{0pt}\scriptsize}r}
\newcolumntype{g}{>{\columncolor{gray!70}}p{\textwidth}}
\newenvironment{definition}[1][Definition]{\begin{trivlist}
\item[\hskip \labelsep {\bfseries #1}]}{\end{trivlist}}
\definecolor{shadecolor}{RGB}{210,210,210}
\DeclareFontFamily{U}{mathx}{\hyphenchar\font45}
\DeclareFontShape{U}{mathx}{m}{n}{<-> mathx10}{}
\DeclareSymbolFont{mathx}{U}{mathx}{m}{n}
\DeclareMathAccent{\widebar}{0}{mathx}{"73}
\newcommand*{\declarecommand}{%
  \@star@or@long\declare@command
}
\newcommand*{\declare@command}[1]{%
  \provide@command{#1}{}%
  \renew@command{#1}%
}
\newcommand*\cleartoleftpage{%
  \clearpage
  \ifodd\value{page}\hbox{}\newpage\fi
}
\newtheorem{preposition}{Proposition}
\newtheorem{theorem}{Theorem}
\numberwithin{theorem}{section}
\newtheorem*{theorem*}{Theorem}
\newtheorem{lemma}{Lemma}
\numberwithin{lemma}{section}
\newcommand{\SCTLR}{SCTLR}
\newcommand{\priv}{\mathit{mode}}
\newcommand{\regs}{\mathit{regs}}
\newcommand{\psrs}{\mathit{psrs}}
\newcommand{\coregs}{\mathit{coregs}}
\newcommand{\mem}{\mathit{mem}}
\newcommand{\pgtype}{\mathit{pgtype}}
\newcommand{\pgrefs}{\mathit{pgrefs}}
\newcommand{\mode}{\mathit{mode}}
\newcommand{\refs}{\mathit{refs}}
\newcommand{\Data}{\mathit{D}}
\newcommand{\sound}{\mathit{sound}}
\newcommand{\countt}{\mathit{count}}
\newcommand{\block}{\mathit{block}}
\newcommand{\fnsize}{\footnotesize \mathit{size}}
\newcommand{\fnref}{\footnotesize \mathit{ref}}
\newcommand{\armv}{ARMv7}
\newcommand{\tuple}[1]{\langle #1 \rangle}
\newcommand{\RealStateSpace}{\Sigma}
\newcommand{\IdealStateSpace}{\mathcal{Q}}
\title{Secure System Virtualization: \\
     {End-to-End Verification of Memory Isolation}}
\author{Hamed Nemati}
\date{October 2017}
\address{KTH Royal Institute of Technology \\ School of Computer Science and Communication\\ 
SE-100 44 Stockholm\\
SWEDEN}
\begin{document}
\frontmatter
\maketitle

\newcommand{\sk}{security kernel}
\newcommand{\Sk}{Security kernel}
\begin{abstract}
Over the last years, {\sk}s have played a promising role in reshaping the landscape of platform security on today's ubiquitous embedded devices. {\Sk}s, such as separation kernels, 
enable constructing high-assurance mixed-criticality execution platforms. They reduce the software portion of the system's trusted computing base to a thin layer, which enforces isolation
between low- and high-criticality components. The reduced trusted computing base  minimizes the system attack surface and facilitates the use of formal methods to ensure functional correctness and security of the kernel.

In this thesis, we explore various aspects of building a provably secure separation kernel using virtualization technology. In particular, we examine techniques related to the appropriate management
of the memory subsystem. Once these techniques were implemented and functionally verified, they provide reliable a foundation for application scenarios that require strong guarantees of isolation and facilitate 
formal reasoning about the system's overall security. 

We show how the memory management subsystem can be virtualized to enforce isolation of system components. Virtualization is done by using direct paging that enables a guest software under 
some circumstances to manage its own memory configuration. We demonstrate the soundness of our approach by verifying that the high-%
level model of the system fulfills the desired security properties. Through refinement, we then propagate these properties (semi-) automatically to the machine-code of the virtualization mechanism.


An application of isolating platforms is to provide external protection to an untrusted guest operating system to restrict its attack surface.
We show how a runtime monitor can be securely deployed alongside a Linux guest on a hypervisor to prevent code injection attacks targeting Linux.
The monitor takes advantage of the provided separation to protect itself and to retain a complete view of the guest software.

Separating components using a low-level software, while important, is not by itself enough to guarantee security. Indeed, current processors architecture involves features, such as caches, 
that can be utilized to violate the isolation of components.  In this  thesis, we present  a new low noise attack vector constructed by measuring cache effects. The vector is capable of breaching isolation between components
of different criticality levels, and it invalidates the verification of software that has been verified on a memory coherent (cacheless) model. To restore isolation, we  provide a number of countermeasures. Further, 
we propose a new methodology to repair the verification of the software by including data-caches in the statement of the top-level security properties of the system.

\end{abstract}
\newpage

\begin{otherlanguage}{swedish}
  \begin{abstract}
Inbyggda system finns överallt idag. Under senare år har utvecklingen av platformssäkerhet för inbyggda system spelat en allt större roll. Säkerhetskärnor, likt isoleringskärnor, möjliggör konstruktion av tillförlitliga 
exekveringsplatformar ämnade för både kritiska och icke-kritiska tillämpningar. Säkerhetskärnor reducerar systemets kritiska kodmängd i ett litet tunt mjukvarulager. Detta mjukvarulager upprättar en tillförlitlig 
isolering mellan olika mjukvarukomponenter, där vissa mjukvarukomponenter har kritiska roller och andra inte. Det tunna mjukvarulagret minimerar systemets attackyta och underlättar användningen av formella metoder för
att försäkra mjukvarulagrets funktionella korrekthet och säkerhet.

I denna uppsats utforskas olika aspekter för att bygga en isoleringskärna med virtualiseringsteknik sådant att det går att bevisa att isoleringskärnan är säker. I huvudsak undersöks tekniker för säker hantering av
minnessystemet. Implementering och funktionell verifiering av dessa minneshanteringstekniker ger en tillförlitlig grund för användningsområden med höga krav på isolering och underlättar formell argumentation om att
systemet är säkert.

Det visas hur minneshanteringssystemet kan virtualiseras i syfte om att isolera systemkomponenter. Virtualiseringstekniken bakom minnessystemet är direkt paging och möjliggör gästmjukvara, under vissa begränsningar,
att konfigurera sitt eget minne. Denna metods sundhet demonstreras genom verifiering av att högnivåmodellen av systemet satisfierar de önskade säkerhetsegenskaperna. Genom förfining överförs dessa egenskaper 
(semi-) automatiskt till maskinkoden som utgör virtualiseringsmekanismen.

En isolerad gästapplikation ger externt skydd för ett potentiellt angripet gästoperativsystem för att begränsa den senares attackyta. 
Det visas hur en körningstidsövervakare kan köras säkert i systemet ovanpå en hypervisor bredvid en Linuxgäst för att förhindra kodinjektionsattacker mot Linux.
Övervakaren utnyttjar systemets isoleringsegenskaper för att skydda sig själv och för att ha en korrekt uppfattning av gästmjukvarans status.

Isolering av mjukvarukomponenter genom lågnivåmjukvara är inte i sig tillräckligt för att garantera säkerhet. Dagens processorarkitekturer har funktioner, som exempelvis cacheminnen, som kan utnyttjas för att kringgå 
isoleringen mellan mjukvarukomponenter. I denna uppsats presenteras en ny attack som inte är störningskänslig och som utförs genom att analysera cacheoperationer. Denna attack kan kringgå isoleringen mellan olika 
mjukvarukomponenter, där vissa komponenter har kritiska roller och andra inte. Attacken ogiltigförklarar också verifiering av mjukvara om verifieringen antagit en minneskoherent (cachefri) modell. För att motverka
denna attack presenteras ett antal motmedel. Utöver detta föreslås också en ny metodik för att återvalidera mjukvaruverifieringen genom att inkludera data-cacheminnen i formuleringen av systemets säkerhetsegenskaper.

  \end{abstract}
\end{otherlanguage}

\thispagestyle{empty}

\section*{Acknowledgements}

I would like to express my sincere gratitude and appreciation to my main supervisor, Prof. Mads Dam for his encouragement and support, 
which brought to the completion of this thesis. Mads' high standards have significantly enriched my research.

I am deeply grateful to my colleagues Roberto Guanciale, Oliver Schwarz (you are one of ``the best'' ;)), Christoph Baumann, Narges Khakpour, Arash Vahidi, and Viktor Do for their supports, 
valuable comments, and advices during the entire progress of this thesis.


I would also like to thank all people I met at TCS, especially
Adam Schill Collberg, Andreas Lindner, Benjamin Greschbach, Cenny Wenner, Cyrille Artho (thanks for reading the thesis draft and providing good suggestions), Daniel Bosk (trevligt att tr\"{a}ffas Daniel), Emma Enstr{\"o}m,
Hojat Khosrowjerdi, Ilario Bonacina, Jan Elffers, Jes{\'u}s Gir{\'a}ldez Cr{\'u},
Jonas Haglund (thanks for helping with the abstract), Joseph Swernofsky, Linda Kann, Marc Vinyals, Mateus de Oliveira Oliveira, Mika Cohen,
Mladen Mik{\v s}a, Muddassar Azam Sindhu, Musard Balliu, Pedro de Carvalho Gomes, Roelof Pieters (I like you too :)), Sangxia Huang,
Susanna \linebreak Figueiredo de Rezende, Thatchaphol Saranurak, Thomas Tuerk, and Xin Zhao for lunch company and making TCS a great place.

I am truly thankful to my dear friends Vahid Mosavat, Mohsen Khosravi, Guillermo Rodr{\'i}guez Cano, and Ali Jafari, who made my stay in Sweden more pleasant.

Last but not the least, huge thanks to my parents, my sister and Razieh for their endless love, moral support and encouragement, without 
which I would have never been able to complete this important step of my life. 

\newpage

\tableofcontents
\mainmatter

\part{Introduction and Summary}

\newacronym{mmu}{MMU}{Memory Management Unit}
\newacronym{os}{OS}{Operating System}
\newacronym{tcb}{TCB}{Trusted Computing Base}
\newacronym{cots}{COTS}{Commercial Off The Shelf}
\newacronym{risc}{RISC}{Reduced Instruction Set Computing}
\newacronym{mpu}{MPU}{Memory Protection Unit}
\newacronym{iommu}{IOMMU}{Input/Output Memory Management Unit}
\newacronym{mils}{MILS}{Multiple Independent Levels of Security}
\newacronym{vm}{VM}{Virtual Machine}
\newacronym{lsm}{LSM}{Linux Security Module}
\newacronym{vmi}{VMI}{Virtual Machine Introspection}
\newacronym{cfg}{CFG}{Control Flow Graph}
\newacronym{bap}{BAP}{Binary Analysis Platform}
\newacronym{bil}{BIL}{BAP Intermediate Language}
\newacronym{cnf}{CNF}{Conjunctive Normal Form}
\newacronym{smt}{SMT}{Satisfiability Modulo Theories}
\newacronym{lts}{LTS}{Labeled Transition System}
\newacronym{tls}{TLS}{Top Level Specification}
\newacronym{psos}{PSOS}{Provably Secure Operating System}
\newacronym{hdm}{HDM}{Hierarchical Development Methodology}
\newacronym{kit}{KIT}{Kernel for Isolated Tasks}
\newacronym{dia}{dia}{Direct Interaction Allowed}
\newacronym{ed}{ED}{Embedded Devices}
\newacronym{sva}{SVA}{Secure Virtual Architecture}
\newacronym{sgx}{SGX}{Software Guard Extensions}
\newacronym{dma}{DMA}{Direct Memory Access}
\newacronym{tocttou}{TOCTTOU}{Time-Of-Check-To-Time-Of-Use, asdasdasd}
\newacronym{isa}{ISA}{Instruction Set Architecture}
\newacronym{ast}{AST}{Abstract Syntax Tree}
\newacronym{dacr}{DACR}{Domain Access Control Register}
\newacronym{iot}{IoT}{Internet of Things}
\newacronym{gdb}{GDB}{GNU Debugger}

\newglossaryentry{angelsperarea}{
  name = $a$ ,
  description = The number of angels per unit area,
}
\newglossaryentry{numofangels}{
  name = $N$ ,
  description = The number of angels per needle point
}
\newglossaryentry{areaofneedle}{
  name = $A$ ,
  description = The area of the needle point
}

\renewcommand{\glossarypreamble}{This list contains the acronyms used in the first part of the thesis. The page numbers indicate primary occurrences.\\}
 
\printglossary[type=\acronymtype,title=Abbreviations]

\chapter{Introduction}\label{kappa:ch:intro}
\label{problemDomain}

Nowadays, for better or worse, the demand for new (embedded) devices and applications to manage all aspects of daily life from traffic control to public safety continues unabated.
However, as reliance on automated sensors and software has improved productivity and people's lives, it also increases risks of information theft, data security breaches, and vulnerability
to privacy attacks. This also raises concerns about illicit activities, sometimes supported by governments, to exploit bugs in systems to take over the control of security-critical
infrastructures or gain access to confidential information. Series of attacks that hit Ukrainian targets in the commercial and government sectors~\cite{symantec}, including power grid and 
the railway system, are examples reinforcing how vulnerable a society can be to attacks directed at its core infrastructure.

Ever-evolving cyber-attacks are expanding their targets, and techniques used to mount these attacks are becoming increasingly complex, large-scale, and multifaceted. The advent of new 
attack techniques in rapid succession and the number of daily uncovered zero-day vulnerabilities~\cite{zerodayvulnerability} are convincing evidences that if the approach  to 
security does not become more formal and systematic, building  trustworthy and reliable computing platforms seems as distant as ever.

\section{Overview}
In order to secure a system against attacks, including those that are currently unknown, one needs to consider the security of all layers from hardware to applications available to the end users. Nevertheless,  among all 
constituents, the correctness of the most privileged software layer within the system architecture, henceforth the \textit{kernel}, is an important factor for the security of the system.
This privileged layer can be the kernel of an \gls{os}  or any other software that directly runs on  bare hardware,  manages resources, and is allowed to execute processor privileged instructions. Therefore, any bug at 
this layer can significantly undermine the overall security of the  system.

It is widely acknowledged that the monolithic design and huge codebase of mainstream execution platforms such as \gls{cots} operating systems make them vulnerable to security
flaws~\cite{KargerS02, Kemerlis:2014:RRK:2671225.2671286, Xu:2015:CEU:2810103.2813637}. This monolithic design is also what makes integrating efficient and fine-grained security mechanisms into current OSs, if not 
impossible, extremely hard. The vulnerability of modern desktop platforms to malwares can be attributed to these issues, and it  reveals the inability of the underlying kernel to protect system components properly
from one another. In order to increase the security and reliability, existing operating systems such as Linux  use access control and supervised execution techniques, such as LSM~\cite{Wright:2002:LSM:647253.720287} or 
SELinux~\cite{Loscocco:2001:IFS:647054.715771},  together with their built-in process mediation mechanisms. However, while these techniques represent a significant step towards improving security, they are insufficient
for application scenarios which demand higher levels of trustworthiness. This is essentially because the kernel of the hosting OS 
as the \gls{tcb}\footnote{The TCB of a system is the set of all components that are critical to establishing and maintaining the system's security} of the system is itself susceptible to attacks which can compromise
these security mechanisms.

A practical solution to increase the system security and to mitigate the impact of any probable malfunctioning is reducing the size and  complexity of the kernel. Minimizing the kernel can be done by splitting it into smaller 
components with restricted\footnote{The least authority necessary for a component to function correctly.} privileges that are isolated from each other and can interact through controlled communication channels.
This concept is an old one. Systems based on such a design principle are usually said to implement the \gls{mils} philosophy~\cite{Alves-foss06themils}. \gls{mils} is a design
approach to build high-assurance systems and is the main idea behind the development of  \textit{{\sk}s}.

A \sk, in the context of this thesis, is an executive that partitions system components into different security classes, each with certain privileges, manages system resources,
allocates resources to \textit{partitions}, and controls the interaction of partitions with each other and the outside world. Among others, \textit{microkernels}~\cite{liedtke1995micro,heiser2010okl4, DBLP:conf/sosp/KleinEHACDEEKNSTW09}, 
\textit{separation kernels}~\cite{DBLP:conf/sosp/Rushby81,Heitmeyer:2008:AFM:1340674.1340715}, and security \textit{hypervisors}~\cite{rutkowska2010qubes,sHype,McDermott:2012:SVM:2420950.2421011}
are important representatives of {\sk}s.

\begin{figure}
\centering
  \includegraphics[width=1\linewidth]{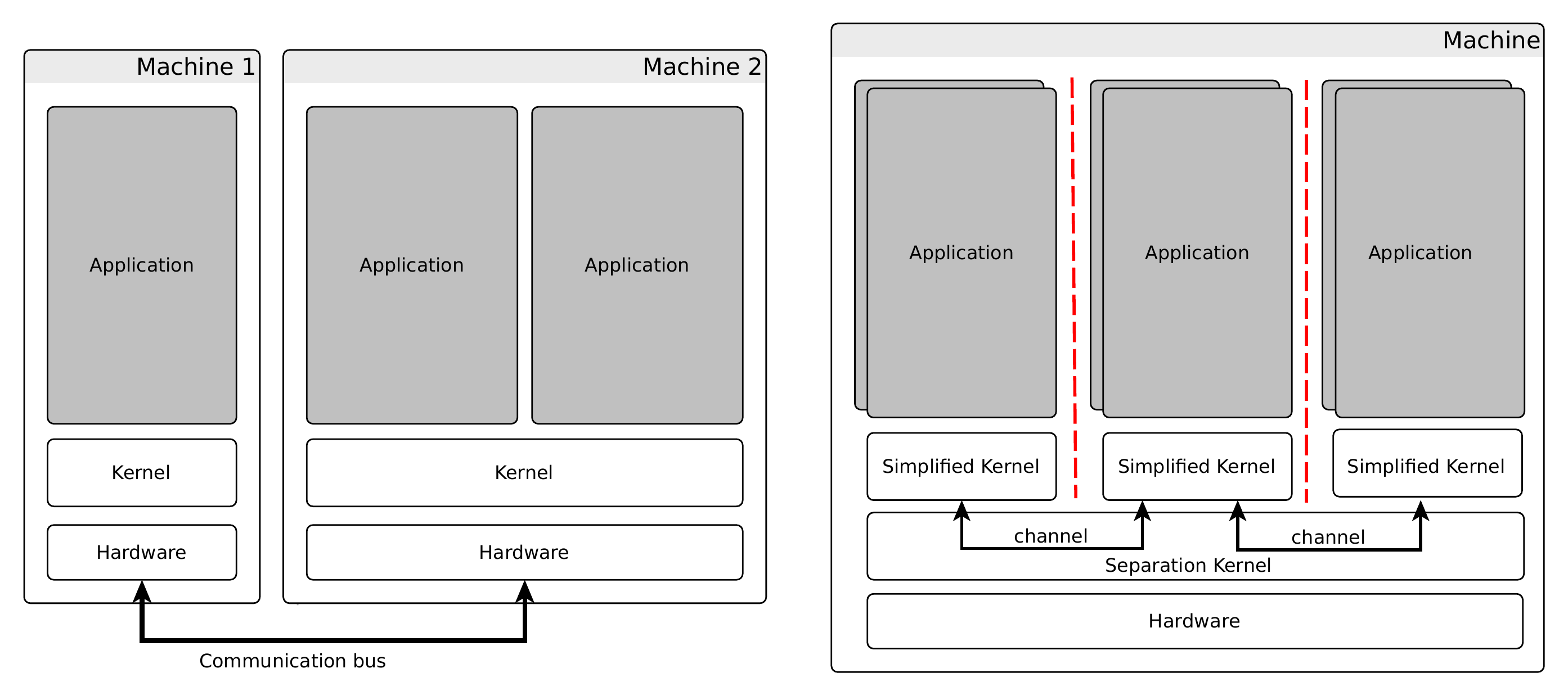}
  \caption{Traditional design (left) isolating components using discrete hardware platforms. A separation kernel (right) provides isolation on single processor.}
  \label{fig:separationKernel}
\end{figure}

Rushby was the first to introduce the notion of separation kernels~\cite{DBLP:conf/sosp/Rushby81}. A separation kernel resembles a physically distributed system
to provide a more robust execution environment for \textit{processes} running on the kernel~\cite{DBLP:conf/sosp/Rushby81} (cf. Figure~\ref{fig:separationKernel}).
The primary security property enforced by a separation kernel is 
\textit{isolation}; i.e., separation of processes in a system to prevent a running process from unauthorized reading  or writing 
into memory space and register locations allocated to other ones. 
Separation kernels reduce the software portion of the system's TCB to a thin layer which provides the isolation between partitions and implements communication channels.

The reduced TCB  minimizes the attack surface of the system and enables the use of rigorous inspection techniques, in the form of a formal proof, to ensure bug-freeness of the partitioning layer.
Formal verification lends a greater degree of trustworthiness to the system and ensures that --- within an explicitly given set of assumptions --- the system design behaves as expected.

Separating components using a low-level system software, assisted with some basic hardware features, reduces overhead in terms of the hardware complexity and increases modularity. 
However, it is not by itself enough to guarantee security. In fact, current processor architectures involves a wealth of features that, while invisible to
the software executing on the platform, can affect the system behavior in many aspects. For example, the \gls{mmu} relies on a caching mechanism 
to speed up accesses to \textit{page-tables} stored in the memory. A cache is a shared resource between all partitions, and it thus potentially  affects and is affected by activities of 
each partition. Consequently, enabling caches may cause unintended interaction between software running on the same processor, which can lead to cross-partition information leakage
and violation of the principle of isolation. Therefore, it is essential to consider the impact of such hardware pieces  while designing and verifying isolation solutions. 

\section{Thesis Statement}
In this thesis, we investigate various aspects of building a provably secure separation kernel using \textit{virtualization technology} and \textit{formal methods}. In particular, we examine
techniques related to the appropriate  management of the memory subsystem. Once these techniques were implemented and verified, they provide reliable mechanisms for application scenarios that
require strong guarantees of isolation and facilitate formal reasoning about the system functional correctness and its security. We are concerned with understanding how formal verification 
influences the design of a separation kernel and its trustworthiness, and what the impact of resource sharing  on the verification of the system is (cf. Figure~\ref{fig:contribution}).

\begin{figure}[]
\centering
  \includegraphics[width=0.5\linewidth]{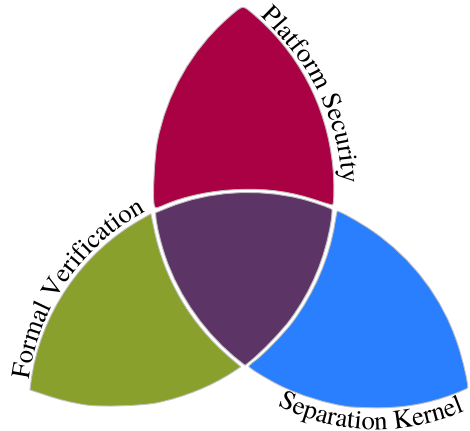}
  \caption{Contribution areas.}
  \label{fig:contribution}
\end{figure}

We limit our discussion to single-core processors to provide a detailed analysis, rather than dealing with complexities which arise due to the use of multiple cores.
More specifically, in this  work we tried to address the following issues:

\begin{itemize}
 \item How can we build a light-weight and performant virtualized memory management subsystem that allows dynamic management of the memory and is small enough to make formal 
 verification possible? This is important since  supporting dynamic memory management is essential to run commodity operating systems on separation kernels.

 \item How can we formally ensure that machine code executing on hardware implements the desired functionalities and complies with the security properties verified at the source-code level? 
 This obviates the necessity of trusting the compilers and provides a higher degree of security assurance.
 
 \item How can we take advantage of isolation of system components on a separation kernel to guarantee that only certified programs can execute on the system?
 
 \item What are the effects of enabling low-level hardware features such as caches on the security of the system? How can one neutralize these effects to ascertain proper isolation of the
 system components?
 \end{itemize}


\section{Thesis Outline}
This thesis is divided into two parts. The first part provides the required background to understand the results, and gives a summary of the papers. This part is structured as 
follows: Chapter~\ref{kappa:ch:background}  discusses the general context and different approaches to formal verification and process isolation, describes the tools we have used and provides 
the formalization of our main security
properties; Chapter~\ref{kappa:ch:relatedWorks} gives an overview of related work, including some historical background and recent developments; Chapter~\ref{kappa:ch:contributions}
describes briefly each of the included papers together with the author's individual contribution; Chapter~\ref{kappa:ch:conclusions} presents concluding remarks. 
The second part includes four of the author's papers.
\chapter{Background}\label{kappa:ch:background}
Embedded computing systems face two conflicting phenomena. First, security is becoming an increasingly important aspect of new devices such as smartphones and \gls{iot} systems. Secondly,
the design of current operating systems and hardware platforms puts significant limitations on the security guarantees that these systems can provide. The problem arises mainly because:

\begin{itemize}
 \item Commodity operating systems are mostly designed with focus on usability rather than security and inadequately protect 
 applications from one another, in the case of mission-critical scenarios. Therefore, bugs in one application can potentially lead to the compromise of the whole system. This
 limits the capability of the system to securely execute in parallel applications with different criticality levels.


 \item Mainstream operating systems  are very complex programs with large software stacks. This complexity undermines the system trustworthiness~\cite{KargerS02, Kemerlis:2014:RRK:2671225.2671286, Xu:2015:CEU:2810103.2813637}
 and  imposes significant hurdles  on building applications with high-criticality on these platforms. 
 
 \item Current platforms do not provide the end users with any reliable method to verify the identity of applications.
 For example, a user has no way of verifying if they are interacting with a trusted banking application or with a malicious or compromised program which impersonates the original one.
 
 \item Finally, modern processor architectures comprises a wealth of features that, while important for performance, could potentially open up
 for  unintended and invisible paths for sensitive information to escape.
\end{itemize}

A possible mitigation of these problems is employing special-purpose closed platforms~\cite{Garfinkel:2003:TVM:945445.945464}, customized for particular  application scenarios, e.g., missile controlling systems and game
consoles. The design of such closed systems, both at the hardware level and for the software stack,  can be carefully tailored to meet required security guarantees, thus significantly reducing vulnerability to attacks 
and the cost of formal verification. 

Despite multiple security benefits of using closed platforms, in most cases, flexibility and rich functionalities offered by general-purpose open systems make the choice of these systems
preferable. In this thesis, we look for a compromise between these two rather conflicting approaches. We try to keep a well-defined structure (i.e., minimal and suitable for a formal analysis)
and security guarantees of the closed systems while taking the advantages of using a general-purpose open platform. Our solution is based on a separation kernel that
brings together the capabilities of these two systems. Further, we apply mathematical reasoning to show correctness  and security of our solution.
Even though the focus of the work in this thesis is on design and verification of a system software for embedded devices, namely ARM processors, our results can be adapted to other
architectures. 


In the following, we give some preliminaries that will be used in this thesis. In particular, 
Section~\ref{sec:platform:security} briefly surveys techniques used in practice to provide a trusted execution environment for high-criticality applications on commodity platforms,
introduces the concept of security kernel and compare it to other approaches to clarify the advantage of using such kernels.
This section also shows how a security kernel can be used to help a commodity OS to protect itself against external attacks and explains why enabling low-level hardware features can potentially affect the security of
a system. Section~\ref{sec:fromal:verification} introduces the notion of formal analysis and explains techniques that can be used to verify a software.  Additionally, in this section, we 
formalize properties which we have used to analyze the security of a separation kernel and explain our verification methodology.

\section{Platform Security}\label{sec:platform:security}

\subsection{Process Protection}\label{sec:platform:security:informal:separation}
Anderson~\cite{Anderson72computersecurity} argued that resource sharing is the key cause of many security issues in operating systems, and programs together with their assigned resources must be securely compartmentalized 
to minimize undesired interaction between applications.
The classical method of preventing applications concurrently executing on the same processor from interfering with one another is \textit{process\footnote{A process is an instance of an executable program in 
the system.} isolation}~\cite{Anderson72computersecurity}. Isolation is a fundamental concept in platform security, and it aims to segregate different applications' processes to prevent the private data of a process
from being written or read by  other ones; the exception is when an explicit communication channel exists.
Process isolation preserves the \textit{integrity} of processes and guarantees their \textit{confidentiality}.

\paragraph{Integrity} refers to the ability of the system to prevent corruption of information, as both program and data. Corruption can happen either intentionally with the
aim of benefiting from information rewriting or erasure, or unintentionally, for example, due to hardware malfunction. The integrity property restricts who can create and modify  
trusted information.

\paragraph{Confidentiality} is the property of preventing information disclosure; i.e., making information unreachable by unauthorized actors.
This property covers both the existence of information and its flow in the system. 

Isolation of processes can be done by hardware or software based solutions; however both these approaches try to limit access to system resources and to keep programs isolated to their assigned resources.
In what follows we briefly describe some of the widely applied techniques to achieve isolation.

\subsubsection{Memory Management Unit}
The notion of isolation in current operating systems is provided at a minimum by abstraction of virtual memory, constructed through a combination of core hardware functionalities and kernel level
mechanisms. For all but the most rudimentary processors, the primary device used to enable isolation is the memory management unit, which provides in one go both virtual addressing and memory
protection. Operating systems use the MMU to isolate processes by assigning to each  a separated virtual space. This prevents errors in one user mode program from being propagated within the system.
It is the role of the kernel to configure the MMU to confine memory accesses of less privileged applications.
Configurations of the MMU, which are also called page-tables, determine the binding of physical memory locations to virtual addresses and hold  restrictions
to access  resources. Hence page-tables are critical for security and must be protected. Otherwise, malicious processes can bypass the MMU and gain illicit accesses.

\begin{figure}
\centering
  \includegraphics[width=0.4\linewidth]{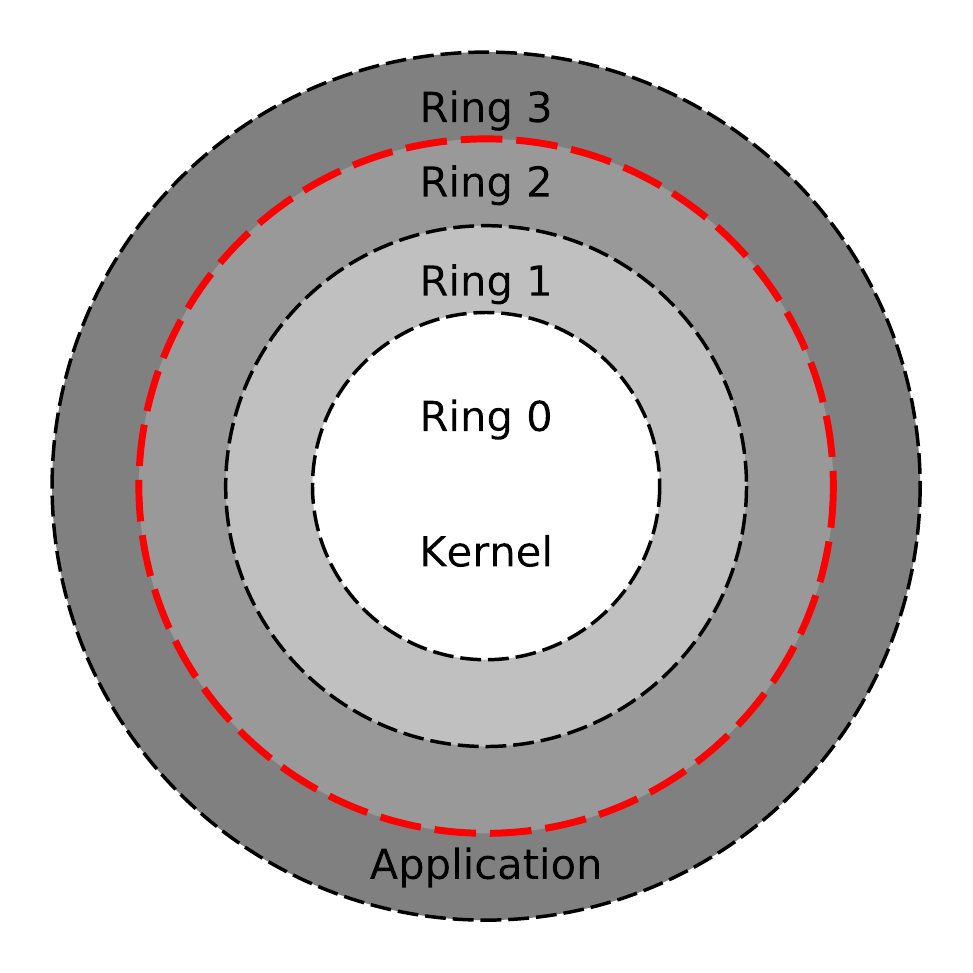}
  \caption{Hierarchical protection domains. Ring $0$ denotes the most privileged domain.}
  \label{fig:ring}
\end{figure}

 To protect security-critical components  such as  configurations of the MMU, most processors allow execution of 
processes with different criticality in separate \textit{processor modes} or \textit{protection rings} (c.f. Figure~\ref{fig:ring}).
On these processors, a privileged program such as the kernel of an OS runs in a special kernel domain, also called kernel space/mode, that is protected from the user specific domains, also referred to as user space/mode,
in which less privileged and potentially malicious applications execute.

Some processors such as ARM family CPUs --- instead of providing a memory management unit --- protect system resources through a \gls{mpu}. The MPU is a light weight specialization of the MMU which
offers fewer features and is mainly used  when the software executing on the platform is far simpler than off-the-shelf operating systems.
In contrast to the MMU which uses hardware protection together with the virtual memory capability, the MPU provides only hardware protection over software-designated memory regions.

Process isolation solely based on these approaches has been proven inadequate to guarantee  isolation in most cases. Due to the security concerns related to weak isolation, research 
on platform security has been focused on complementary techniques to strengthen isolation of processes, e.g., \textit{access control} mechanisms, \textit{sandboxing}, and \textit{hardware based} solutions.

\subsubsection{Access Control}
To improve the basic protection provided by the virtual memory abstraction, traditionally operating systems also use access enforcement methods to validate processes' request
(e.g., read, write) to access resources  (e.g., files, sockets). Each access control mechanism consists of two main parts: 
an \textit{access policy store} and an \textit{authorization module}. The access policy describes the set of allowed operations that processes can perform on resources and is
specific to each system. An example of such a policy  is Lampson's access matrix~\cite{Lampson:1974:PRO:775265.775268}.
At the heart of this protection system is the authorization module which is commonly referred to as \textit{reference monitor}. 
For a given input request, the reference monitor returns a binary response showing if the request is authorized by the monitor's policy.
AppArmor~\cite{Bauer:2006:PPI:1149826.1149839} is an example of such an access control mechanism, which is used 
in some Unix-based operating systems.

%
%

\subsubsection{Sandbox Based Isolation}
Sandboxing, as defined in~\cite{Wahbe:1993:ESF:168619.168635}, is the method of encapsulating an unknown and potentially malicious code in a region of memory to  restrict its impact on the system state.
Accesses of a sandboxed program are limited to only memory locations that are inside the assigned address space, and the program cannot execute binaries not placed in its code segment.
In this approach effects of a program are hidden from the outside world. However, for practical reasons sometimes this isolation has to be violated to allow data transmission.

\textit{Instruction Set Architecture} based sandboxing is the technique of controlling activities of a program at the instruction level, through adding instructions to  the binary of the program to check its 
accesses. Software Fault Isolation~\cite{Wahbe:1993:ESF:168619.168635} and Inline Reference Monitors~\cite{Erlingsson:2000:IEJ:867132}  are two examples of sandboxes that are implemented using this technique.
Application sandboxing can also be achieved  by taking control of the interface between the application and its libraries or the underlying operating system~\cite{Wagner:1999:JAC:894360}. Further, one can 
restrict permissions of a program to access system resources using access control mechanisms~\cite{chroot,Cheswick92anevening}.

\subsubsection{Hardware-Extension Based Isolation}
Process isolation through features embedded in hardware provides a strong form of separation. These features are either part of the processor
or hardware extensions that augment basic protection supplied with the CPU. Examples of these features include {IOMMU}~\cite{Ben-yehuda06utilizingiommus}, {ARM TrustZone}~\cite{TrustZone}, and 
{Intel \gls{sgx}}~\cite{mckeen2013innovative, anati2013innovative}.
The \gls{iommu} is a hardware extension to control memory accesses of I/O devices.
The IOMMU  prevents malicious devices from  performing arbitrary \gls{dma} operations  and can be used to isolate device drivers.
\begin{figure}
\centering
  \includegraphics[width=0.6\linewidth]{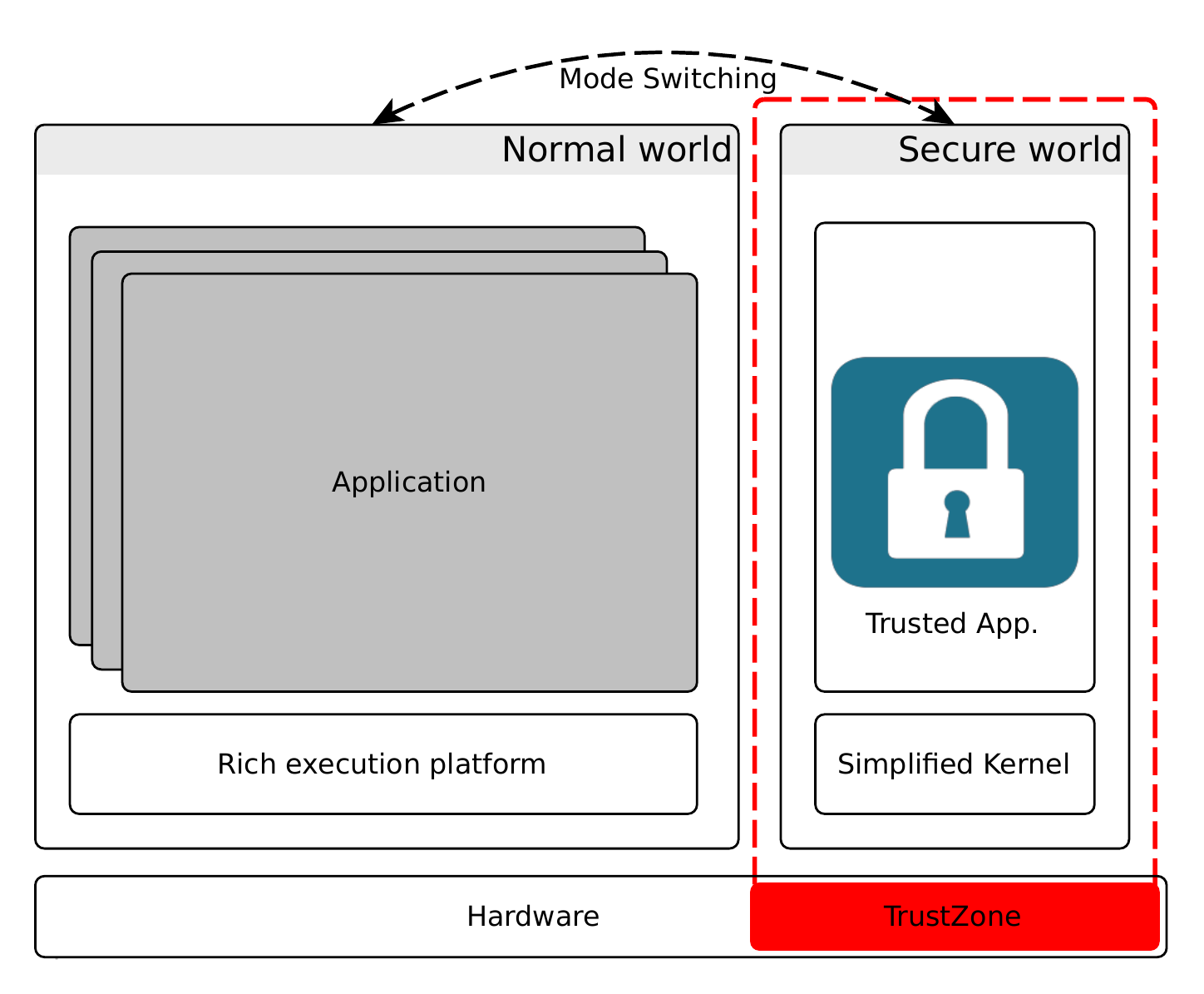}
  \caption{ARM TrustZone}
  \label{fig:trustzone}
\end{figure}

TrustZone (c.f. Figure~\ref{fig:trustzone}) is a set of security extensions added to some ARM architecture CPUs to improve the system security. TrustZone creates a secure execution environment, also called \textit{secure world},
for software that must be isolated from less critical components. These extensions allow the processor to switch between two security domains that are orthogonal to the standard 
capabilities of the CPU. TrustZone can be used to execute an unmodified commodity OS in the less secure domain, and to run security-critical subsystems (e.g., cryptographic algorithms)
inside the secure domain. This hardware feature guarantees that the critical subsystems will remain safe regardless of  malicious activities influencing the system outside the 
\textit{secure world}.

Intel's SGX are extensions to x86 processors that aim to guarantee integrity and confidentiality of the code executing inside SGX's secure containers, called 
\textit{enclaves}. The main application of the SGX is in \textit{secure remote  computation}. In this scenario the user uploads his
secret data and computation method into an enclave and SGX guarantees the confidentiality and integrity of  the secret user data  while the computation is being carried out.


\subsection{Strong Isolation and Minimal TCB}
Process isolation is essential to platform security. Nevertheless, in today's increasingly sophisticated malware climate, methods commonly used to achieve isolation, such as the
ones we have discussed above, are gradually proving insufficient.
The main typical problems related to the kernel-level solutions like access control and sandboxing mechanisms is that they enlarge the trusted computing base of the system
and their trustworthiness depends heavily on the characteristics of the underlying OS.
Language-based techniques~\cite{Schneider:2001:LAS:647348.724331} are mostly experimental, and tools available to enforce them are research prototypes and not applicable to large-scale
real-world systems. Using type systems~\cite{Schneider:2001:LAS:647348.724331} to enforce isolation is not feasible since most system software 
mix C (which is not generally a type-safe programming language~\cite{Necula:2002:CTR:503272.503286}) with assembly (which does not support fine-grained types).
Hardware-based solutions are helpful, but they do not solve the problem entirely either. They increase the bill-of-materials costs, 
and (potential) bugs in their implementation can be exploited by attackers to violate isolation enforced using these features~\cite{shen2015exploiting,costan2016intel,intelsgxvulnerability,sparks2007security,Gotzfried:2017:CAI:3065913.3065915}.

Isolation can be most reliably achieved by deploying high- and low-criticality components onto different CPUs. This, however, leads to higher
design complexity and costs. These make such an approach less appealing 
and emphasize the significance of software based solutions again. {\Sk} (e.g., microkernels~\cite{liedtke1995micro,heiser2010okl4, DBLP:conf/sosp/KleinEHACDEEKNSTW09}, separation
kernels~\cite{INTEGRITY}, and hypervisors~\cite{xen,Seshadri:2007:STH:1294261.1294294}) are recognized as
practical solutions to mitigate the problems of the aforementioned techniques that bring together the isolation of dedicated hardware and the enjoyments of having a small TCB.

\subsubsection{Microkernels}
The primary objective of microkernels is to minimize the trusted computing base of the system while consolidating both high- and low-criticality components on a single processor. 
This is usually done by retaining inside the most privileged layer of the system only those kernel services that are security-critical such as memory and thread management subsystems 
and inter-process communication. Other kernel level functionalities can then be deployed as user space processes with limited access rights.
Using a microkernel the user level services are permitted to perform only accesses that are deemed secure based on some predefined policies. 

A fully-fledged operating system can be executed on a microkernel by delegating  the process management of the hosted OS completely to the microkernel (e.g., L$^4$Linux) through  mapping the guest's 
threads directly to the microkernel threads. However, this generally involves an invasive and error-prone OS adaptation
process. Alternatively, the microkernel can be extended to virtualize the memory subsystem of the guest OS (e.g., using shadow-paging~\cite{Alkassar:FMCAD2010-b} or
nested-paging~\cite{Bhargava:2008:ATP:1346281.1346286}).

\subsubsection{Separation Kernels}
Separation kernels are software that enforce separation among system components and are able  
to control the flow of information between partitions existing on the kernel~\cite{DBLP:conf/sosp/Rushby81}. Programs running on separation kernels should behave equivalently as they were executing on distributed
hardware. Communication between partitions  is only allowed to flow as authorized along well-defined channels. Similar to microkernels, separation kernels implement the \gls{mils} philosophy~\cite{Alves-foss06themils}.

The idea of separation kernels is primarily developed to enforce (security-) isolation and many such kernels do not support essential functionalities to host a complete operating 
system including  device drivers and file systems. The commonly used technique to compensate this shortcoming is virtualization.
The advent of virtualization technologies provides significant improvements in efficiency and capabilities of the {\sk}. Virtualization allows building high-assurance systems
having the same functionalities as the commodity execution platforms. In this thesis, we use virtualization as an enabler for isolation and to implement a separation kernel capable of hosting
a complete operating system.

\subsubsection{Virtualization}
Virtualization, as it is used in this thesis, is the act of abstracting the underlying hardware to multiplex resources among multiple guests and security-check 
accesses to system resources before executing them.
The virtualization layer, which is also called \textit{hypervisor}, executes at the most privileged mode of the processor and can interpose between the guests and hardware.
This enables the hypervisor to intercept guests' sensitive instructions before being executed on hardware.
The complete mediation of events allows the creation of isolated partitions (sometimes referred to as \gls{vm}) in which applications with an unknown degree of trustworthiness
can execute safely. 
Platform virtualization can be done in many ways, two predominant approaches to virtualization are: \textit{full virtualization} and \textit{paravirtualization}.

\begin{figure} \footnotesize
\begin{subfigure}[t]{0.5\textwidth}
 \includegraphics[width=0.9\linewidth]{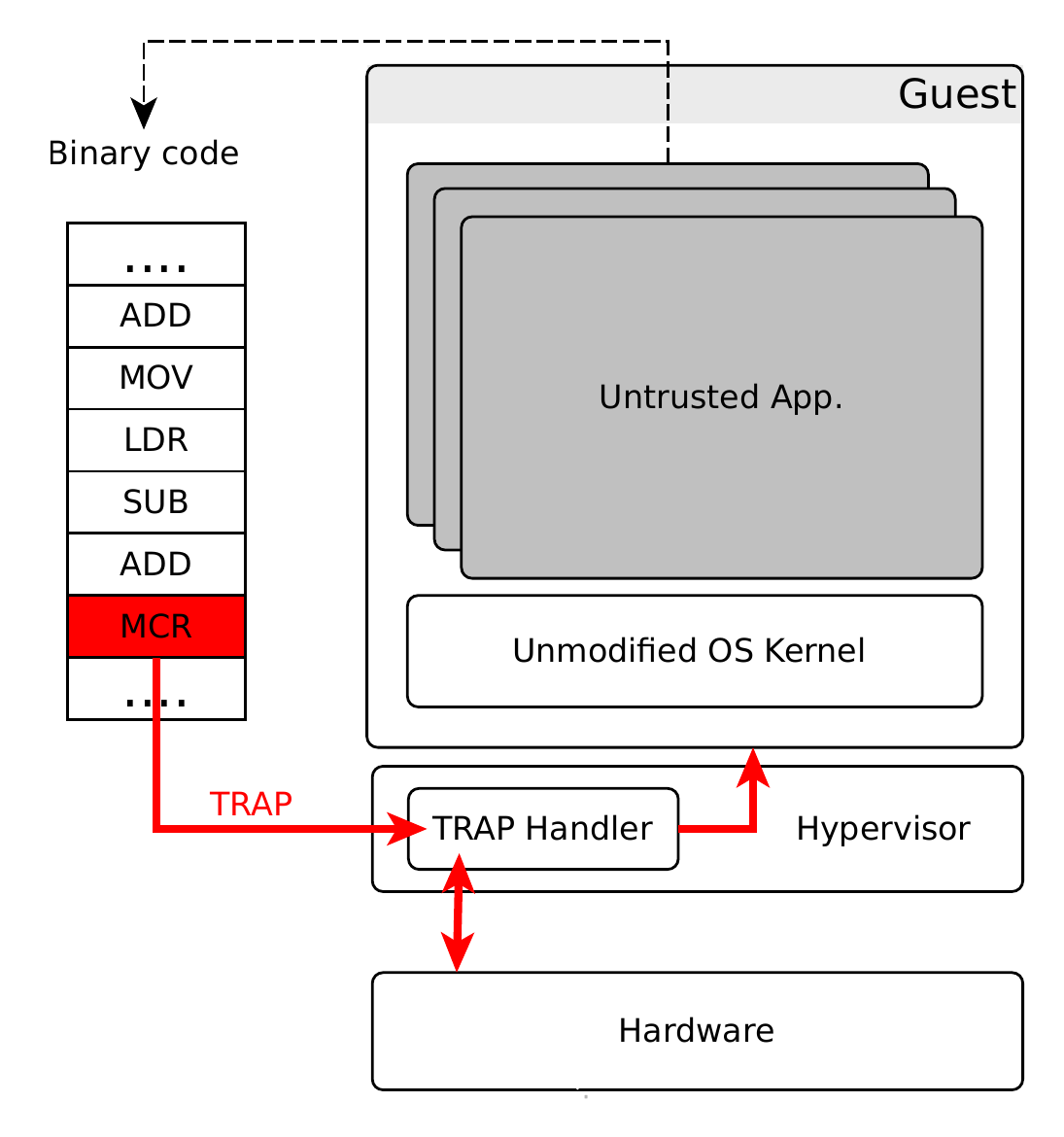}
\caption{Full Virtualization}
\label{fig:fullVirtualizationApproache}
\end{subfigure}
\begin{subfigure}[t]{0.5\textwidth}
\includegraphics[width=0.9\linewidth]{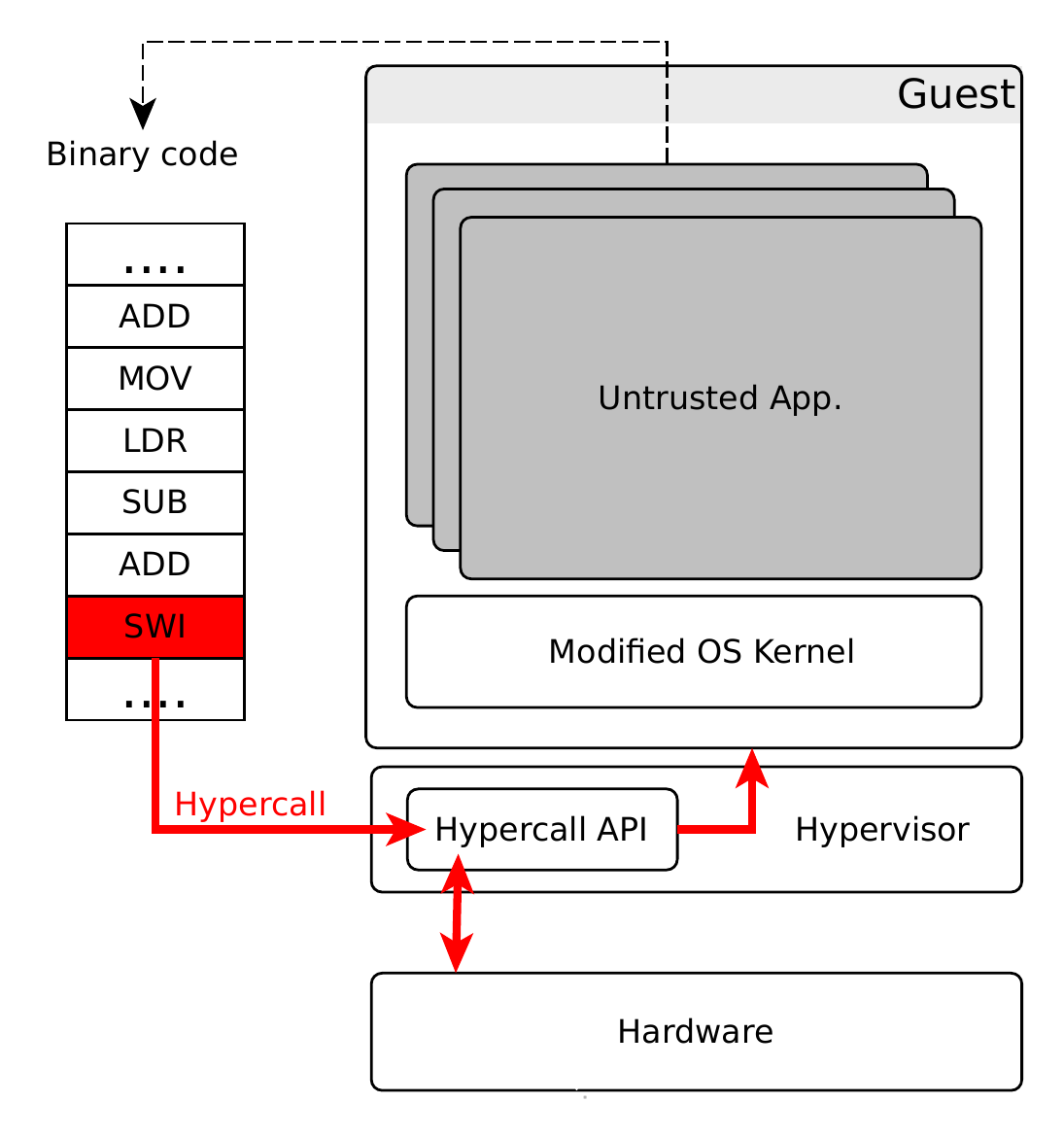}
\caption{Paravirtualization}
\label{fig:paraVirtualizationApproache}
\end{subfigure}
\caption{Different virtualization approaches.}
\end{figure}

\begin{itemize}
 \item Full virtualization is the technique of providing a guest software with the illusion of having sole access to the underlying hardware; i.e., the virtualization layer is transparent to the guest.
In this approach, the hypervisor resides at the highest privilege level and controls the execution of sensitive (or privileged) instructions. Whenever the deprivileged guest software tries to execute one of these sensitive 
instructions, the execution ``traps'' into the virtualization layer and the hypervisor emulates the execution of the instruction for the guest (c.f. Figure~\ref{fig:fullVirtualizationApproache}). The advantage of using
this approach is that  binaries  can execute on the hypervisor without any changes, neither the operating systems nor their applications need any adaptation to the virtualization layer. However, this technique 
increases complexity of the hypervisor design and introduces relatively high performance penalties.

 \item In contrast, in a paravirtualized system, the guest is aware of the virtualization layer. Using this technique, the execution of sensitive instructions in the guest software is replaced with an
explicit call, by invoking a \textit{hypercall} or a \textit{software interrupts}, to the hypervisor (c.f. Figure~\ref{fig:paraVirtualizationApproache}). Each hypercall is connected to a handler in
the hypervisor which is used to serve the requested service. Paravirtualization is proven to be more performant. However, it requires adaptation of the guest to the interface of the hypervisor, which can be a very
difficult task.
\end{itemize}

Paravirtualization can be implemented using the same hardware features that operating systems use. Nevertheless, efficient full virtualization of the system usually requires
support from hardware primitives such as the processor or I/O devices. Essential to full virtualization is the reconfiguration of the guests' privileges so 
that any attempt to execute  sensitive instructions traps into the hypervisor. This entails emulation of hardware functionalities, such as interrupt controller, within the hypervisor to 
allow execution of  hosted software inside partitions.
 
Modern processors provide features that can be used to simplify hypervisors design and increase their performance, e.g., extended privilege domains and 2-stage MMUs. However, since our main goal 
is  formal verification of the virtualization software we intentionally avoid using these features, which otherwise complicate formal analysis by shifting the verification burden from a flexible software 
to the fixed hardware.

In order for a hypervisor to host a general-purpose OS, it is generally necessary to allow the guest software to dynamically manage its internal memory hierarchy and to impose its own access restrictions. To achieve
this, a mandatory security property that must be enforced is the complete mediation of the MMU settings through virtualizing the memory management subsystem. In fact, since the MMU is the key functionality used by the
hypervisor to isolate the security domains, violation of this security property enables an attacker to bypass the hypervisor policies which could compromise the security of the entire system. This criticality is also
what makes a formal analysis of correctness a worthwhile enterprise.

Widely adopted solutions to virtualize the memory subsystem are shadow paging, nested paging, microkernels, and \textit{direct paging}. 
A hypervisor implemented using shadow paging keeps a copy of the guest page-tables in its memory to perform (intermediate-) virtual to physical address translation. 
This copy is updated by the hypervisor whenever the guest operates on its page-tables.
Nested paging, is a hardware-assisted virtualization technology which frees hypervisors from implementing the virtualization mechanism of the memory subsystem 
(e.g.,~\cite{hwang2008xen,McCoyd:2013:BHF:2470776.2471156}).
Direct paging was first introduced by Xen~\cite{xen} and is proved to show better
performance compared to other techniques. In paper~\ref{paper:JCS} we show a minimal and yet verifiable design of the direct paging algorithm and prove its functional correctness and the guarantees of isolation at the
machine code level.

\paragraph{PROSPER kernel}
Our implemented separation kernel is called PROSPER kernel, which  is a hypervisor developed using the paravirtualization technique to improve the security of embedded devices.
The hypervisor runs bare bone in the processor's most privileged mode, manages resource allocation and enforces access restrictions.
Implementation of the hypervisor targets BeagleBoard-xM and Beaglebone~\cite{bbone} equipped with an ARM Cortex-A8 processor (with no hardware virtualization extensions support) and allows execution of 
Linux (kernel 2.6.34 and 3.10) as its untrusted guest. 
Both user space applications and kernel services (or other trusted functionalities) on the PROSPER kernel execute in unprivileged mode. To this end, the hypervisor splits user mode in two virtual
CPU modes, namely \textit{virtual user mode} and \textit{virtual kernel mode}, each with their own execution context. The hypervisor is in charge of controlling context switching between these modes and
making sure that the memory configuration is setup correctly to enable separation of high- and low-criticality components.
In ARM these virtual modes can be implemented through ``domains''. These domains implement an access control regime orthogonal to the CPU's execution modes.
We added a new domain for the  hypervisor to reside in, and with the help of MMU configurations and the \gls{dacr}, the hypervisor can set the access
control depending on the active virtual guest mode.

The PROSPER kernel\footnote{A full virtualization variant of the PROSPER kernel was also developed by HASPOC project~\cite{7561034}, which supports additional features such as multicore
support and secure boot.} has a very small codebase, and its design is kept intentionally simple (e.g., the hypervisor does not support preemption) to make the formal verification of the 
hypervisor affordable.

\subsection{Secure Runtime Monitoring}

The increasing complexity of modern computing devices has also contributed to the development of new vulnerabilities in the  design and implementation of systems.
Among dangerous vulnerabilities are those that enable an adversary to impersonate trusted applications by either injecting malicious binaries into the executable memory of the applications or
causing anomalies in the system control flow to execute arbitrary code on target platforms.

\begin{figure} 
\begin{subfigure}[t]{0.5\textwidth}
 \includegraphics[width=0.95\linewidth]{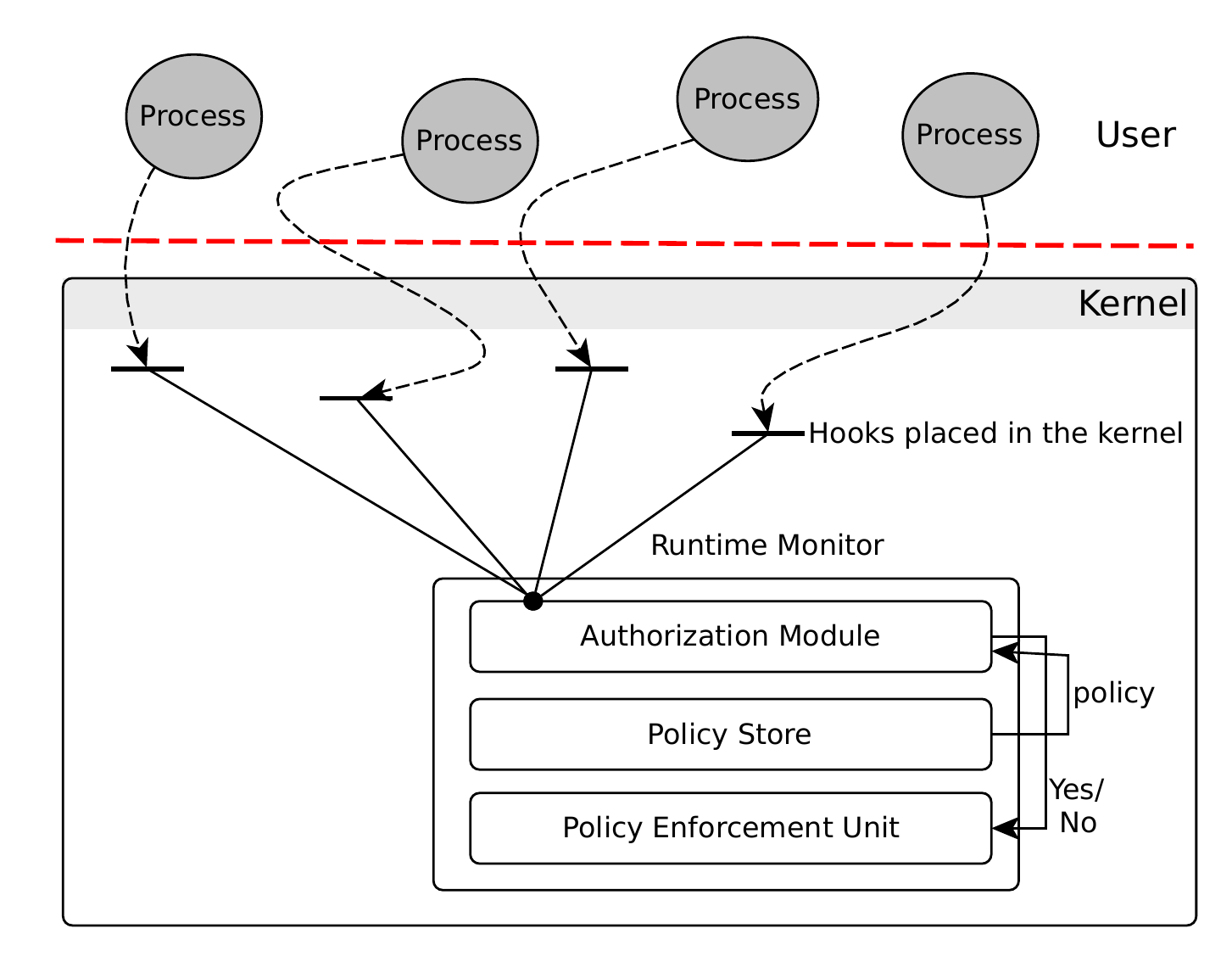}
 \caption{Classical runtime monitor}
 \label{fig:runtimeMonitorBasics}
\end{subfigure}
\begin{subfigure}[t]{0.5\textwidth}
\includegraphics[width=0.95\linewidth]{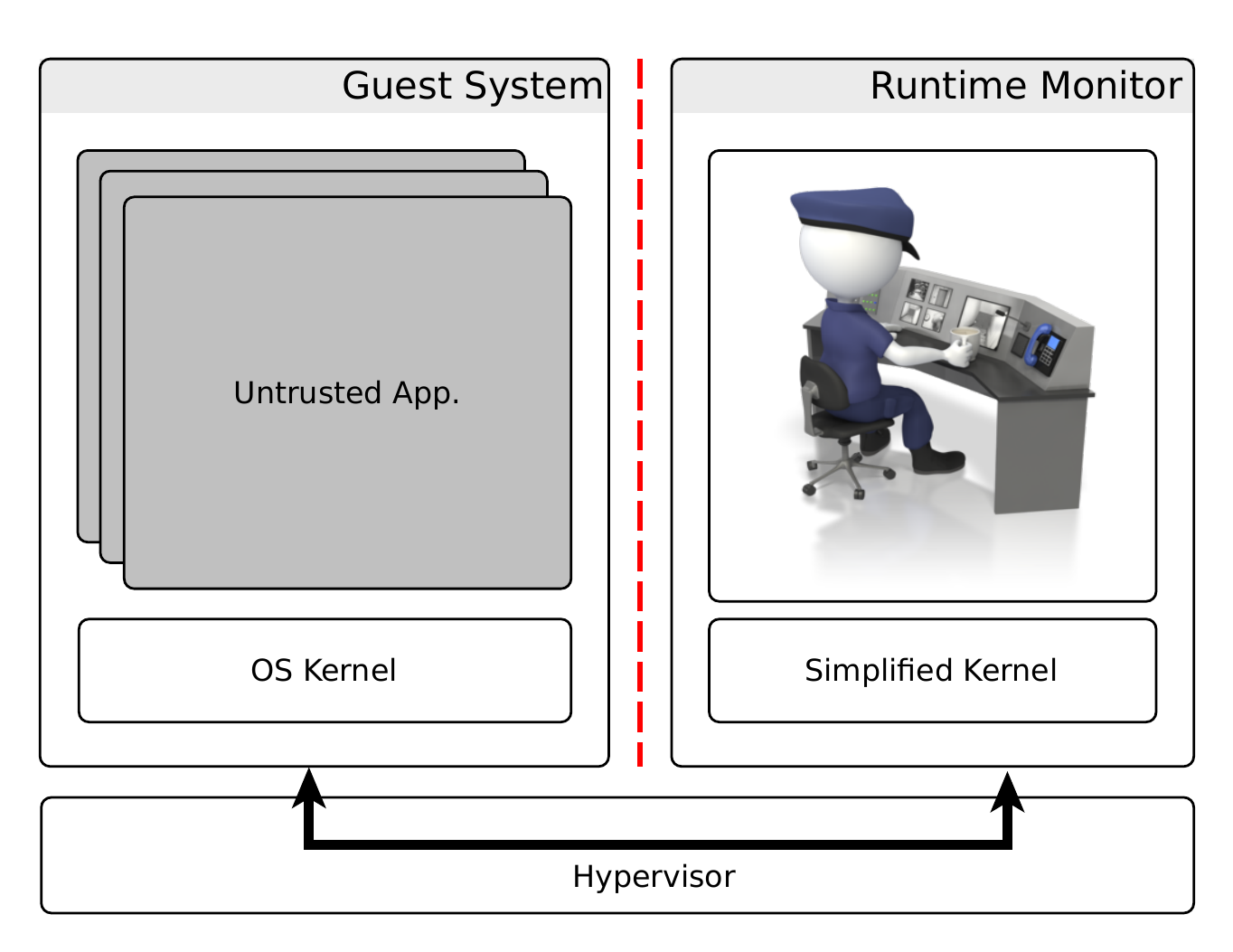}
 \caption{Secure runtime monitoring assisted with a hypervisor}
 \label{fig:runtimeMonitorVMI}
\end{subfigure}
\caption{Runtime monitor.}
\end{figure}

The success of these attacks mostly hinges on  unchecked assumptions that the system makes about executables in the memory. Therefore,
countermeasures against vulnerabilities of these types  include certifying all binaries that are safe to execute, and employing a monitoring mechanism which checks events at runtime to
prevent the execution of uncertified programs. 
A runtime monitor is a classical access enforcement method. A monitor checks validity of the requests to execute binaries
 through placing hooks (e.g., by patching \textit{system calls}) that invoke the monitor's authorization module  (c.f. Figure~\ref{fig:runtimeMonitorBasics}).
The main application domain for a runtime monitor is inside a rich execution environment such as a COTS OS, where the monitor runs in parallel with other applications in the system and
enforces some security policy.

Essential to making the monitoring process trustworthy and effective is to provide the monitor with a complete view of the system and to make it tamper resistant.
\gls{lsm}~\cite{Wright:2002:LSM:647253.720287} is an example of such a monitoring mechanism that is integrated into the Linux kernel. However, kernel-level techniques 
are not trustworthy as the hosting OS is itself susceptible to attacks~\cite{KargerS02}.

An interesting use-case scenario for isolation provided by a {\sk} is when the trusted isolated components are used as an aid for a commodity operating system to restrict 
the attack surface of the OS. In a virtualized environment the monitoring module can be deployed in a different partition (c.f. Figure~\ref{fig:runtimeMonitorVMI}). Since the hypervisor runs in 
most privileged mode, it has full control over the target operating system and can provide the monitor with a complete view of the events happening in the system. Such a virtualization assisted
monitoring was first introduced by Garfinkel and Rosenblum~\cite{DBLP:conf/ndss/GarfinkelR03}. \gls{vmi}, in the terminology of~\cite{DBLP:conf/ndss/GarfinkelR03}, places the monitoring 
subsystem outside of the guest software, thus making the monitoring module tamper proof. A further advantage of VMI-based solutions is that access enforcement can be done based on information
retrieved directly from the underlying hardware, which can not be tampered by an attacker. Security mechanisms that rely on the ability of observing the system state can also benefit from  
VMI's properties, e.g., isolation.

\subsection{Side-Channel Attacks}\label{background:attacksOnIsolation}
Despite ongoing efforts, developing trusted unbypassable mechanisms to achieve isolation remains a challenge. The problem arises due to the design of current hardware platforms and 
operating systems. Contemporary hardware platforms provide a limited number of resources such as  caches  that are shared among several processes by operating systems to enable
multitasking on the same processor.
While resource sharing is fundamental for the cost-effective implementation of system software, it is essential to do so carefully to avoid initiating 
unintentional channels  which may lead to the disclosure of sensitive information to unauthorized parties. This raises the potential of attack vectors that are not specified by the system specification
and some of which are available for user applications to exploit.

Broadly speaking, leakage channels are classified, based on the threat model, into two types:
\begin{itemize}
 \item[(i)] \textit{Covert-channels} that are channels used to deliberately transfer secret information to parties not allowed to access it by exploiting 
 hidden capabilities of system features, and
 \item[(ii)] \textit{Side-channels}  that refer to paths, which exist accidentally to the otherwise secure flow of data, for sensitive information to
 escape through.
\end{itemize}

In covert-channel attacks,  both sides of the communication are considered malicious. However, in
side-channels attacks, only the receiver has malicious intents and tries to get access to secret information through measuring  (unintended)
side effects of victim computations. Therefore, in COTS system software, we are mostly interested in studying side-channels.

Side-channels can be further subdivided into two groups, namely \textit{storage channels} and \textit{timing channels}.
Storage channels, in the current usage, are attacks conducted by exploiting aspects of the system that are directly observable by the adversary,
such as values stored in memory locations accessible by the attacker or registers content.
In contrast to storage channels, timing attacks rely on monitoring  variations in execution time to discover hidden hardware state. 
Timing channels are, in general, believed to have severe impact and they can occur even when the capability of the attacker to observe system 
resources is fully understood. However, the noise introduced by actuators operating on the system usually makes timing analysis hard.

One very important representative of side-channels are attacks based on measuring effects of caching on system performance.
Caches are hardware components that are widely adopted in computing devices and used to provide quicker response to memory requests to avoid wasting precious processor cycles.
Caches can hold recently computed data, not existing in the main memory, or they can be duplicates of original values in the memory. For
each memory access, if the requested data is in the cache (cache hit) the request can be served by simply reading the cache line containing data. 
If data is not present in the cache (cache miss), it has to be recomputed or loaded from its original storage location.
The MMU  controls accesses to caches, that is, it checks if data can be read from or written into the cache.

While enabling caches is important for performance, without proper management they can be used to extract sensitive information.
Cache timing side-channels are attack vectors that have been extensively studied, in terms of both exploits and countermeasures, cf.~\cite{StefanScheduling13,
Osvik:2006:CAC:2117739.2117741, AESattack,Kim:2012:SSP:2362793.2362804,GodfreyZ14,cock2014last}. However, cache usage has
pitfalls other than timing differentials. For instance, for the ARMv7 architecture, memory coherence may fail if the
same physical address is accessed using different cacheability attributes. This opens up for \gls{tocttou}\footnote{Time-Of-Check-To-Time-Of-Use is a class of attacks mounted by changing the victim system state between
the checking of a (security related) condition and the use of the results of that check.}-
like vulnerabilities since a trusted agent may check and later evict a
cached data item, which is subsequently substituted by an unchecked item placed in the main memory using an uncacheable alias. Moreover, an untrusted agent can similarly
use uncacheable address aliasing to measure which lines of the cache are evicted. This results in storage channels that are not visible in information flow analyses 
performed at the \gls{isa} level.

In practice, chip and IP manufacturers provide programming guidelines that guarantee memory coherence, and they routinely discourage the use of
mismatched memory attributes such as cacheability. However, in rich environments like hypervisors or OSs, it is often essential  to delegate the 
assignment of memory attributes to user processes that may be malicious, making it difficult to block access to the vulnerable features.

There are more side-channel attacks that deserve further studies, for instance, attacks based on analysis of
power, electromagnetic patterns, acoustic emanations.
However, exploring their impact on the system security is out the scope of this thesis.
Note also, in contrast to timing and storage channel attacks which can be conducted remotely~\cite{Brumley:2003:RTA:1251353.1251354} these attacks need physical access to the victim machine.
We are mainly interested in improving aspects of platform security that are relevant to the memory management subsystem.
In particular, in paper~\ref{paper:sp} we use the L1 data-cache to create low noise storage channels  to break isolation between system components.
Moreover, we show how these channels can be neutralized by exploiting proper countermeasures.

\subsection{ARM Architecture}

The ARM family processors are based on \gls{risc} architecture. The RISC architecture aims to provide a simple, yet powerful set of instructions that are able to execute 
in a single CPU cycle. This is achieved mostly by placing greater demands on the compiler to reduce the complexity of instructions that hardware executes.
In this thesis, we consider Harvard implementations of the ARM cores. The Harvard architecture uses separate buses, and consequently separates caches, for data and instructions
to improve performance (cf. Figure~\ref{fig:harvardArch}).

\begin{figure}[]
 \centering
 \includegraphics[width=0.5\linewidth]{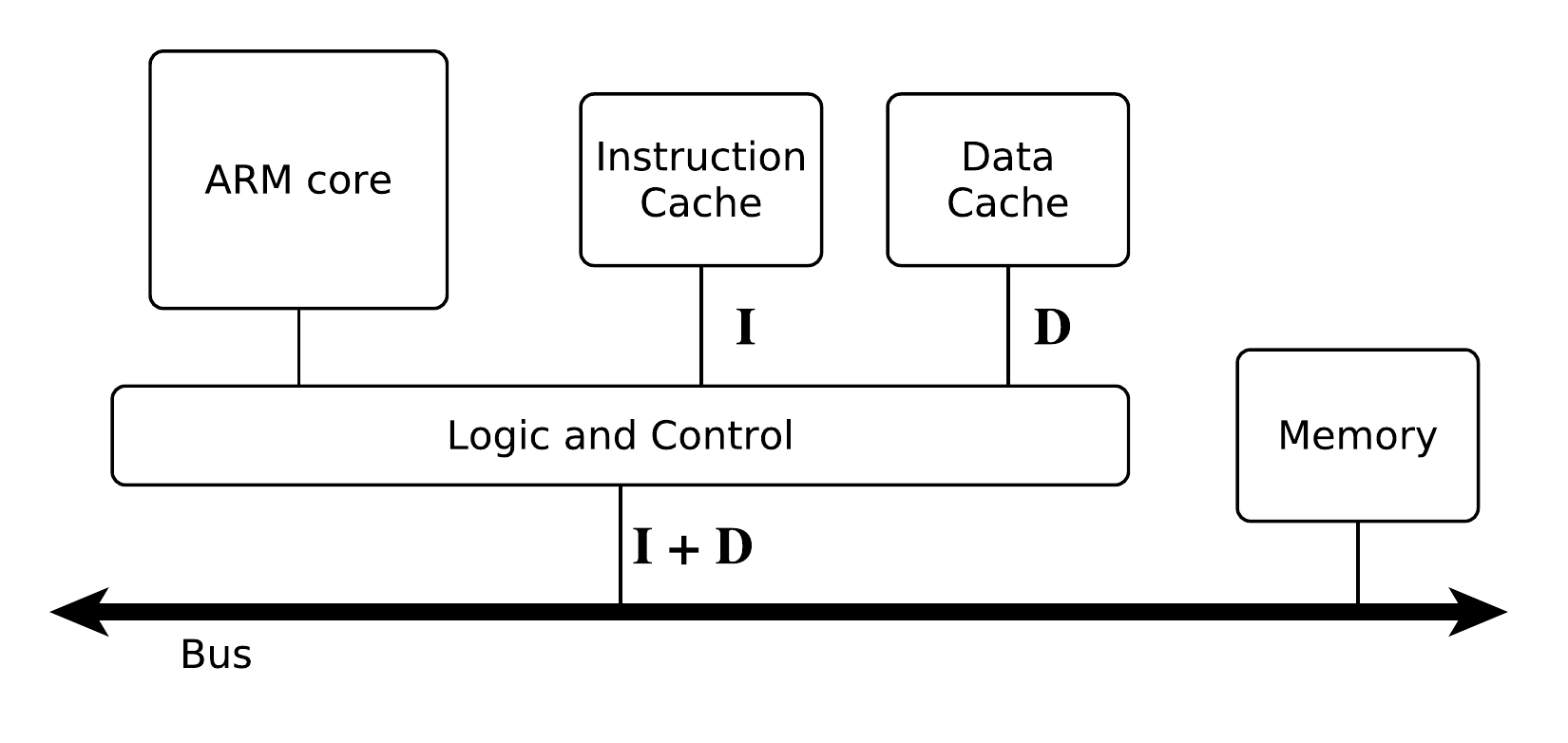}
 \caption{Harvard architecture with separate data and instruction caches.}
 \label{fig:harvardArch}
\end{figure}

The ARM architecture can operate in several execution modes; e.g., ARMv7 has seven modes. 
Among them, unprivileged user mode is used to execute user space applications, and the others are protected modes each having specific interrupt sources and are reserved for the privileged operations.
Privileged modes are accessible through the interrupt \textit{vector table}, which is a table associating interrupt handlers with the corresponding interrupt request.

ARM processors are \textit{load-store} architectures; i.e., the processor operates on data held in registers. Registers are hardware stores that act as the fast local memory and hold both data and addresses. 
Among others, general-purpose registers are accessible in all modes and of which the register 15 is the \textit{program counter} and contains the address of the next instruction that CPU will
execute, the register number 14 is called \textit{link register} and holds the return address of function calls, and 
the register 13 is used as the \textit{stack pointer}.
In addition to the general-purpose registers, ARM processors also include a number of \textit{control registers} that are used to determine the current execution mode, the active page-table,
and to control context switching between modes.

\section{Formal Verification}\label{sec:fromal:verification}
The increasing importance of {\sk}{s} in system security makes them an interesting target for attackers. This emphasizes the significance of applying formal methods 
to verify the correctness and isolation guarantees of these kernels. A key research contribution of this work is the formal verification
of the PROSPER kernel. We discuss our verification approach in papers~\ref{paper:JCS},\ref{paper:esorics},\ref{paper:csf} and provide more details on properties that we have verified.
In this section, we give a short exposition of our verification methodology and describe tools that are involved in this exercise.

Establishing trust on software can be achieved in a number of ways. We can make sure that the program design is fully understood and experienced developers are chosen to implement it;  we can 
conduct an extensive software testing procedure to check the quality of the program and its conformance with the system requirements; etc. Unfortunately, while these techniques are helpful to implement
high-quality programs with fewer bugs, still we keep finding errors in extensively adopted code, such as the binary search algorithm of the Java API~\cite{Norvig2006} and the open-source cryptographic software 
library OpenSSL~\cite{Heartbleed,Heartbleed2}. OpenSSL is behind many secure communication protocols over the Internet and the bug, dubbed Heartbleed, discovered in this cryptographic library could seriously compromise 
the security of systems using this protocol. Such incidents underline the need of using more rigorous methods to  verify the trustworthiness of programs that are security/safety-critical.

Over the last decades, formal verification has emerged as a powerful tool to provide enhanced trustworthiness to systems  software like hypervisors, microkernels, and separation kernels 
\cite{DBLP:conf/sosp/KleinEHACDEEKNSTW09,AlkassarHLSST09,WildingGRH10,dam2013formal, Heitmeyer:2008:AFM:1340674.1340715,zhao2011armor,SteinbergK10,GuVFSC11}. Formal verification provides strong guarantees backed by mathematical proofs
that all behaviors of a system meet some logical specification. 

There exist several approaches to formal verification with various degrees of automation. For example, \textit{model checking}~\cite{Emerson:2008:BMC:1423535.1423537} is a method of
automatically proving the correctness of a program based on a  logical specification, which expresses certain (temporal) properties of the program such as termination. While model checking is
a widely adopted verification approach in industry, it suffers from the \textit{state space explosion} problem\footnote{State space explosion refers to the problem that the memory needed to store the states required to 
model a system exceeds the available memory.},  and its use is limited to prove properties about programs with a finite state space. In what follows we briefly describe the techniques used to 
formally verify computer programs most relevant to the work presented in this thesis.

\newcommand{\htriple}[3]{\{\mathit{#1}\}~\mathit{#2}~\{\mathit{#3}\}}
\subsection{Static Program Analysis}
Program analysis is the process of examining a program to find bugs and to detect possible misbehavior. In general, research in program verification categorizes this process along two dimensions: dynamic vs. static,
and binary vs. source~\cite{Schwartz2014}. In static analysis, reasoning is done without actually running programs to  determine their runtime properties through analyzing the code structure. This is in contrast to 
the dynamic approach which verifies a program by inspecting values assigned to variables  while the program executes.

The main idea underlying static verification is to specify properties of programs by some \textit{assertions} and to prove that these assertions hold when the execution reaches them. 
Each assertion is a \textit{first-order} formula constructed using the program's constants, variables, and function symbols, and it describes logical properties of program variables.
In this approach, a program is a sequence of \textit{commands}  $\mathit{C}$ (or \textit{statements}). The correctness condition of each command is described by an annotation of the 
form $\htriple{P}{C}{Q}$, also called a \textit{Hoare-triple}~\cite{Hoare:1969:ABC:363235.363259}. We say the command $C$ is \textit{partially} correct 
if whenever ${C}$ is executed in a state satisfying the \textit{precondition} $\mathit{P}$ and if ${C}$ terminates,  then the state in which the execution of ${C}$ terminates must meet the \textit{postcondition} $\mathit{Q}$. 
By extending this notion, using an inference system, the entire program can be verified using this technique. The inference system is specific to the language of the program's source-code and consists of a set of axioms
and rules that allow to derive and combine such triples based on the operational semantics of the language.
Note that to prove the \textit{total correctness} of the program, proving that execution will eventually terminate is an additional proof obligation.

For a given postcondition, rules of this inference system also allow computing a \textit{weakest (liberal) precondition}. The weakest precondition ($wp$) computed  using this method
can be utilized to verify the triple $\htriple{P}{C}{Q}$ by checking the validity of a first-order predicate $P \Rightarrow wp(C,Q)$~\cite{Bradley:2007:CCD:1324777}. Such a predicate
often (e.g., if it is quantifier-free) can be resolved using an \text{SMT solver}.

\subsubsection{Binary Verification}
Verifying programs at the level of their source-code, while necessary, is not sufficient on its own. Indeed, most program verification (e.g., verification of a hypervisor) should reach the binaries (or machine-code)
of programs, as \textit{object code}
that  ultimately will execute on hardware. Machine-code needs to be examined to ensure that  properties established at the source-code level hold for programs binary as well. 

Binary analysis has its roots in work published by Goldstein and von Neumann~\cite{goldstine1947planning}, where they studied specification and correctness of machine-code programs.
Later, Clutterbuck and Carr\'{e}~\cite{6898} stressed the significance of formal analysis at the binary level and applied Floyd-Hoare-style verification condition generator to machine codes.
Similarly, Bevier~\cite{Bevier87averified} showed in his PhD thesis how  the kernel of an operating system can be verified down to its binary. Myreen~\cite{DBLP:conf/fmcad/MyreenGS08}  automates the whole
binary verification process inside the HOL4 theorem prover~\cite{hol4} by introducing a proof-producing decompilation procedure to transform machine codes to a function in the language of HOL4, which can be used
for reasoning.

At the binary level, machine state contains few components (e.g., memory and some registers and status bits), and instructions perform only very minimal and well-defined updates~\cite{Myreen2014}. Analysis of machine codes 
provides a precise account of the actual behavior of the code that  will execute on hardware. Programs' behavior at the level of source-code is generally undefined or vaguely defined; e.g., enabling some aggressive
optimization in C compilers  can lead to missing code intended to detect integer overflows~\cite{Wang:2013:TOS:2517349.2522728}. Moreover, low-level programs mix structured code (e.g., implemented in C) with assembly 
and use instructions (e.g., mode switching and coprocessor interactions) that are not part of the high-level language. This makes it difficult to use verification tools that target user space codes. A further 
advantage of using the binary verification approach is that it obviates the necessity of trusting compilers.

Despite the importance of machine-code verification, it suffers from some problems. Binary codes lack abstractions such as variables, types, functions, and control-flow 
structure, which makes it  difficult to use static analysis techniques to examine machine codes. Furthermore, the indirect jumps existing at the binary level prevent constructing a precise  \gls{cfg}
of programs, which are often used by static analyses. Tackling these sorts of problems researchers have been focused on developing techniques to over-approximate the CFG of programs~\cite{Bardin:2011:RCR:1946284.1946290}
and  to generate an intermediate representation of binaries  (or IL)~\cite{brumley2008analysis,Brumley:2011:BBA:2032305.2032342}. The IL representation is useful since it helps to mitigate complexities of modern
instruction sets and to restore required abstractions.
Paper~\ref{paper:JCS} elaborates on our solution to produce a certified IL representation of the PROSPER kernel and shows how we resolved indirect jumps at the kernel's binary to facilitate machine-code verification.

Binary verification can be automated to a large extent, for example, Bitblaze~\cite{Song:2008:BNA:1496255.1496257} and Binary Analysis Platform~\cite{Brumley:2011:BBA:2032305.2032342} are tools developed to automate 
verification of functional and safety properties at the machine-code level.

\subsubsection{Binary Analysis Platform}
\begin{figure}
 \centering
 \includegraphics[width=0.9\linewidth]{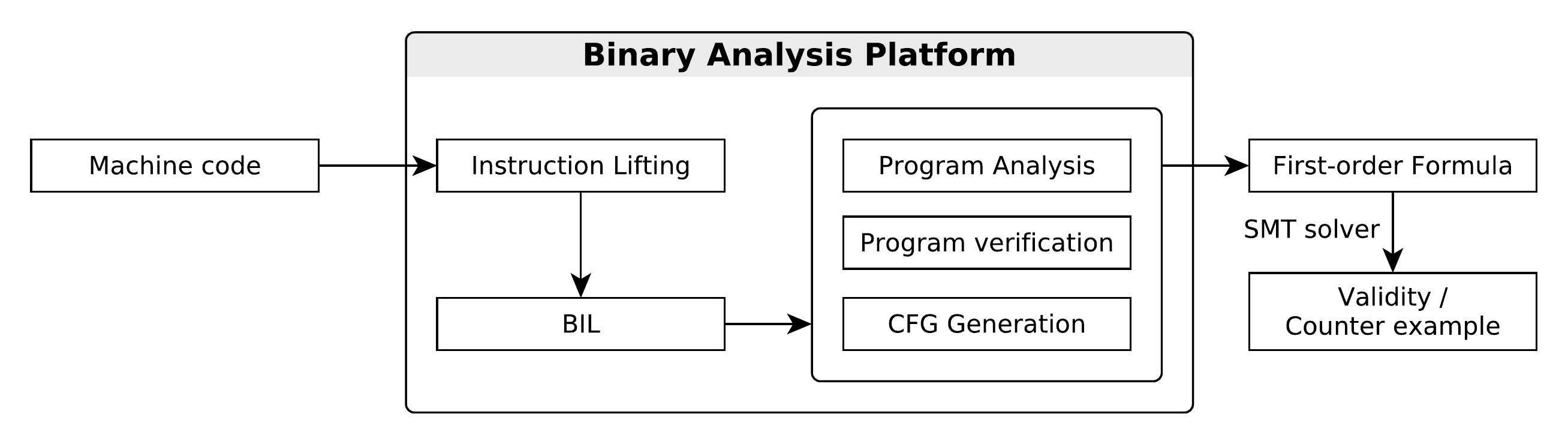}
 \caption{Binary Analysis Platform work-flow}
  \label{fig:bap}
\end{figure}

The \gls{bap}~\cite{Brumley:2011:BBA:2032305.2032342} (cf. Figure~\ref{fig:bap}) is a framework for analyzing and verifying binary codes. The BAP front-end includes tools to lift  input binaries to the \gls{bil}.
BIL provides an architecture independent representation which exposes all the side-effects of machine instructions.
BIL code is represented as an \gls{ast} that can be traversed and transformed using several methods to perform desired analysis.
 The back-end of BAP supports utilities to construct and analyze
control-flow graphs and program dependence graphs, to perform symbolic execution and to compute and verify contracts by generating the weakest preconditions.
The weakest precondition algorithm of BAP provides an effective method to speed up the verification of loop-free assembly fragments, by reducing the problem of verifying Hoare-triples to
proving a first-order formula. This formula can in many cases be validated by an external SMT solver~(e.g.,~\cite{DeMoura:2008:ZES:1792734.1792766}). 

\subsubsection{Satisfiability Modulo Theories Solver}
Deciding whether a formula that expresses a constraint\footnote{Constraints are formulas in \gls{cnf}.} has a solution (or a model) is a fundamental problem in the theoretical computer science. 
There are several problems that can be reduced to constraint satisfiability, including software and hardware verification,  type inference, static program analysis, etc.
Many of these problems, if expressed  using first-order Boolean formulas, can be solved by a \textit{Boolean Satisfiability} (a.k.a. SAT) solver.

A SAT solver determines whether a formula over boolean variables can be made true. For a first-order formula, this amounts to finding an interpretation of its variables,
function and predicate symbols which makes the formula true.
Generalizing concept of SAT solving, solvers for \gls{smt} can also be used to decide satisfiability of boolean formulas. SMT solvers
allow to include domain-specific theorems (e.g., real numbers or arrays theories) into the reasoning. This makes
\gls{smt} solvers more efficient for propositional formulas, but restricts application of these solvers to more specific areas.
 
 
 
\subsection{Interactive Theorem Proving}
\textit{Interactive theorem provers}, or proof-assistants, are computer programs commonly utilized as an aid for the human user to write and machine-check formal proofs.
Interactive provers rely on hints given by the user in constructing proofs rather than generating proofs fully  automatically.
In this approach, the user specifies the proof-structure  and provides some guidance communicated via a domain-specific language to the prover, while the machine checks the proofs  and uses the provided hints to
automate the process as much as possible. The automation can be achieved by proving proof slices (i.e., sub-goals) that are automatically inferable. 
In contrast to model checking, theorem proving can be used to verify programs with  probably infinitely many states and is the leading approach to deal with new verification challenges, such as verification
of system software.

\subsubsection{HOL4 Theorem Prover}
Among the most popular interactive theorem provers are Isabelle/HOL~\cite{Nipkow:2002:IPA:1791547}, HOL4 ~\cite{hol4}, and Coq~\cite{coq}.  HOL4 is a LCF-style~\cite{DBLP:books/sp/Gordon79,Plotkin:2000:PLI:345868} proof
assistant for Hight-Order-Logic built on a minimal proof kernel that  implements the axioms and basic inference rules. Higher-Order-Logic is an  extension of  first-order logic with
types and quantification over functions. HOL4 uses Standard ML as its meta-language~\footnote{The language used to implement the prover itself.} and provides a rich environment with a variety of libraries and theories 
to prove theorems and to implement proof tools. False statements have no possibility of being proved in HOL4. This is coded through the ML type system to force all proofs to pass the logical kernel of HOL4.

A user of HOL4 has the possibility of steering  the system via a number of constructs; namely (i) proof \textit{tactics}, which reduce a \textit{goal} (i.e. theorem to be proved) to simpler subgoals and are used to 
automate the process of theorem proving, (ii) proof \textit{rules}, which can be used to transform theorems to new ones, (iii) \textit{conversions}, which convert a logical expression into a theorem that establishes the 
equivalence  of the initial expression to the one computed by the applied conversion, and (iv) custom built-in ML programs. HOL4 uses backward reasoning to prove goals. Using this technique,
the input theorem is transformed into simpler subgoals by applying proper tactics.  Generated  subgoals can then be either directly discharged using a simplifier or will be further split into
simpler subgoals until they are discharged.

\newcommand{\intormem}[1]{\mathit{M_{#1}}}
\newcommand{\dmem}[1]{\mathit{M_{\{#1,d\}}}}
\newcommand{\bmem}[1]{\mathit{M_{\{#1,b\}}}}
\newcommand{\alphabet}{\mathit{E}}
\newcommand{\inevent}{\mathit{D}}
\newcommand{\exevent}{\mathit{P}}
\newcommand{\identifier}{\mathit{i}}
\newcommand{\transition}{\mathit{T}}
\newcommand{\SeqState}{\sigma}
\newtheorem{property}{Property}

\subsection{Security Properties}
In this subsection, we turn to formalize properties we have used to analyze the security of a hypervisor. The hypervisor plays the role of a {\sk} that provides a minimal software base to
enforce separation and control the flow of information within a single-core processor. Heitmeyer et al.~\cite{Heitmeyer:2008:AFM:1340674.1340715} showed that \textit{data-separation}
and \textit{information flow security} properties are strong enough to demonstrate isolation, modulo the explicit communication link, between partitions executing on the hypervisor. 

\subsubsection{Data Separation}
Informally, separation means prohibiting processes running in one partition from encroaching on protected parts of the system, such as the memory allocated to other guests or the hypervisor internal data-structures.
Separation, as discussed in Subsection~\ref{sec:platform:security:informal:separation}, can be expressed in terms of two properties, namely 
integrity and confidentiality or \textit{no-exfiltration} and \textit{no-infiltration} in the terminology of~\cite{Heitmeyer:2008:AFM:1340674.1340715}.

In the following, we first present formalization of the data-separation property as  presented in~\cite{Heitmeyer:2008:AFM:1340674.1340715}. 
Then, we instantiate the separation property with a model analogous to the model of our target platform to get properties similar to the ones we have shown in 
papers~\ref{paper:JCS},~\ref{paper:esorics}, and~\ref{paper:csf}.
To this end, we start with a gentle description of notions that are needed to understand the formalism.

We assume the behavior of the kernel is modeled as a state machine defined using a set of states $\SeqState \in \RealStateSpace$, an initial state $\SeqState_0$, an input alphabet $\alphabet$, and
a transition function $\transition: \RealStateSpace \times \alphabet \to \RealStateSpace$. 
Each partition is assigned an identifier $\identifier$ and a dedicated region of the memory $\intormem{\identifier}$ marked with the partition identifier~\footnote{While in~\cite{Heitmeyer:2008:AFM:1340674.1340715} an additional memory area is defined
that is shared among partitions, we skip presenting it here to simplify definition of the main properties. This also allows us to combine no-exfiltration with the kernel integrity property of~\cite{Heitmeyer:2008:AFM:1340674.1340715}.}.
The memory of partitions is further subdivided into two parts:  (i) a ``data memory'' area $\dmem{\identifier}$ which contains all the data-structures belonging to the partition $\identifier$,
and (ii) a number of ``input and output buffers'' $B_{\identifier} \in \bmem{\identifier}$ to communicate with other partitions.
The input alphabet consists of a number of internal actions $e_{\mathit{in}} \in \inevent$ and external actions $e_{\mathit{ext}} \in \exevent$.
The internal actions are used to invoke a service handler or to manipulate the data memory, and external actions are those
 that can be executed by fellow partitions or the kernel and have access to the communication buffers.
We use $\inevent_{i}$ ($\exevent_{i}$) to denote the internal (external) actions of the partition $i$ and $\inevent$ ($\exevent$) is the union of the internal (external) actions of all the existing
partitions in the system.
Moreover, the transition function $\transition$ transforms the system states by consuming the input events. 
Having defined this machinery, we can now proceed to give the formal account of no-exfiltration and no-infiltration properties.

No-exfiltration  guarantees the integrity of resources not allocated to the active partition (i.e., protected data).
This property is  defined in terms of the entire memory  $\intormem{}$, including the data memory and I/O buffers of all partitions, and says that: for every partition $i$,
event $e \in \inevent_{i} \cup \exevent_{i}$, and states $\SeqState$ and $\SeqState'$ such that $\SeqState'$ is reachable from $\SeqState$ by the event $e$, if 
a transition from $\SeqState$ to $\SeqState'$ changes the content of a memory location $m$, then $m$ is inside a memory region that is  modifiable by $i$.
\begin{property}(No-exfiltration)
 For all partition  $\identifier$, states $\SeqState$ and $\SeqState'$ in $\RealStateSpace$, event $e \in \inevent_{i} \cup \exevent_{i}$, and memory location $m \in \intormem{}$
 such that   $\SeqState' = \transition(\SeqState, e)$, if $m_{\SeqState} \neq m_{\SeqState'}$, then $m \in \intormem{\identifier}$.
\end{property}

On the other hand, \textit{no-infiltration} enforces confidentiality by ensuring that data processing of a partition is free of any influences from ``secret'' values stored in resources that 
are inaccessible to the partition.
No-infiltration is a 2-safety property~\cite{Clarkson:2010:HYP:1891823.1891830, Terauchi:2005:SIF:2156802.2156828} and requires reasoning about two parallel executions of the system. 
This property requires that for every partition $i$, event $e \in \inevent \cup \exevent$, and states $\SeqState_1$, $\SeqState_2$, $\SeqState'_1$, and $\SeqState'_2$ such that $\SeqState_2$ and $\SeqState'_2$ are, respectively,
reachable from $\SeqState_1$ and $\SeqState'_1$, if two executions of the system start in states having the same value in $m$ (which is located inside memory of the partition $i$),
after the event $e$, the content of $m$ should be changed consistently in both the final states.
\begin{property} (No-infiltration)
 For all partition  $\identifier$, states $\SeqState_1$, $\SeqState_2$, $\SeqState'_1$, and $\SeqState'_2$ in $\RealStateSpace$, and event $e \in \inevent \cup \exevent$, 
  such that $\SeqState_2 = \transition(\SeqState_1, e)$  and $\SeqState'_2 = \transition(\SeqState'_1, e)$, if for all $m \in \intormem{\identifier}$, $m_{\SeqState_1} = m_{\SeqState'_1}$ then
  it must hold that for all $m \in \intormem{\identifier}$, $m_{\SeqState_2} = m_{\SeqState'_2}$.
\end{property}

The no-exfiltration and no-infiltration properties impose constraints on the behavior of the active partition. On models allowing communication, an additional
property would be needed to restrict  effects of  the communication protocol on the partitions' state. \textit{Separation of control} expresses how information can flow  through
input/output buffers. In particular, this property says that the external events are only allowed to change the buffers of the receiving partition, and this write should not modify the private data memory of the receiver.
\begin{property} (Separation of Control)
 \label{prop:separation:of:control}
 For all partitions $\identifier$, $\identifier'$ and $\identifier''$, states $\SeqState$ and $\SeqState'$ in $\RealStateSpace$, and event $e \in \inevent \cup \exevent$ such that $\identifier'$ and
 $\identifier''$ are  the identifiers of the active partitions in $\SeqState$ and $\SeqState'$ respectively  and $\SeqState' = \transition(\SeqState, e)$, if $\identifier \neq \identifier'$ and 
 $\identifier \neq \identifier''$ then for all $m \in \dmem{\identifier}$, $m_{\SeqState'_1} = m_{\SeqState'_2}$.
\end{property}

This last property is not covered in this thesis. However, the writer co-authored the paper, ``\textit{Formal Verification of Information Flow Security for a Simple ARM-Based Separation Kernel}''~\cite{dam2013formal},
which presents the verification of the (simplified) PROSPER kernel for partitions communicating through \textit{message boxes} (a variant of input/output buffer). A property which subsumes Separation of Control and is
proved for the kernel in this verification exercise is to show that, while the hypervisor is able to change message boxes, it cannot modify the other state components belonging to guests.


The hypervisor we use here has the full control of the platform and provides for each guest a virtual space in much the same way that a process runs in an OS-provided virtual memory area.
The hypervisor hosts a few guests each assigned statically allocated and non-overlapping memory regions. 
Execution of the guests is interleaved and controlled by the hypervisor. 
That is, at the end of a specific time slice (when a timer interrupt happens) or when the active guest explicitly invokes a functionality of a fellow partition (through explicitly invoking a hypercall),
the hypervisor suspends the active guest and resumes one of idle guests (e.g., the one which hosts the functionality invoked by the active partition).  
The guests are virtualization aware (paravirtualized) software running entirely in unprivileged user mode and  
ported on the hypervisor to use the exposed APIs. Moreover, the only software that executes in privileged mode is the hypervisor and its execution cannot be interrupted, i.e., the hypervisor does not support preemption.

The models of our system in this thesis are (ARM-flavoured) instruction set architecture models. The memory in these models is partitioned into two segments, namely a code segment and a data segment.
Executable code (i.e., a sequence of instructions) resides in the memory code segment, and the processor fetches and executes instructions according to the value of a hardware store, called
\textit{program counter}. In such models the data memory $\dmem{-}$ is the aggregation of both the code and data memories, and  events can be replaced by execution of instructions, which makes
the notion of internal/external events superfluous. Also, in this section, we assume that the partitions on the hypervisor are non-communicating.

\paragraph{System state} A state $\SeqState \in \RealStateSpace$ in our model consists of the values of various  machine (i.e., the hardware platform) components such as registers (including both
 general-purpose and control registers), memory, and caches. 

\newcommand{\ltstransition}[5]{{#1 \rightarrow^{#3}_{#4} #5}}
\newcommand{\instructionSpace}{\mathit{Evnt}}
\newcommand{\naturalNumber}{\mathbb{N}}
\newcommand{\execution}{\mathcal{P}}
\newcommand{\ow}{\mathit{otherwise}}
\newcommand{\equivdef}[1]{\ensuremath{\stackrel{\text{#1}}{\equiv}}}
\tikzset{decorate sep/.style 2 args=
{decorate,decoration={shape backgrounds,shape=circle,shape size=#1,shape sep=#2}}}
\newcommand*{\DashedArrowi}{\mathbin{\tikz 
\draw [->,
line join=round,
decorate, decoration={
    zigzag,
    segment length=3,
    amplitude=1.2,post=lineto,
    post length=2pt
}]  (0,0.8ex) -- (1.0em,0.8ex);
}}
\newcommand{\WkTrsi}[1]{\DashedArrowi_{#1}}

\paragraph{Execution} 
An execution in this model is defined as a sequence of configurations from the state space $\RealStateSpace$. We represent the transition of states using a  deterministic \gls{lts}
$\ltstransition{}{}{n}{m}{} \subseteq \RealStateSpace \times \RealStateSpace$, where $n \in \naturalNumber$ is the number of taken steps, and $m \in \{0, 1\}$ determines the execution mode (i.e., either privileged 
mode $1$ or unprivileged mode $0$). Then, for states $\SeqState$ and $\SeqState'$ a transition from $\SeqState$ to $\SeqState'$ can be defined as the following:
if the number of taken steps is greater than zero $n > 0$ then there exist an intermediate state $\SeqState''$ which is reachable from $\SeqState$ in $n -1$ steps and $\SeqState'$ is the state immediately after $\SeqState''$,
otherwise $\SeqState$ and $\SeqState'$ are the same.

\[\ltstransition{\SeqState}{}{n}{m}{\SeqState'} \equivdef{def} 
 \begin{cases}
  \exists\ \SeqState''.\ \ltstransition{\SeqState}{}{n-1}{m}{\SeqState''}\ \land\ \ltstransition{\SeqState''}{}{1}{m}{\SeqState'} & : n > 0 \\
  \SeqState = \SeqState'                                                                                                                      & : n = 0
 \end{cases}
\]

In this transition system, single step transitions are denoted as $\ltstransition{\SeqState}{}{}{m}{\SeqState'} \equivdef{def}  \ltstransition{\SeqState}{}{1}{m}{\SeqState'}$, 
we use $\ltstransition{\SeqState}{}{*}{m}{\SeqState'}$ for arbitrary long executions, and if $\ltstransition{\SeqState}{}{}{m}{\SeqState'}$ then $\SeqState$ is in mode $m$.
Moreover, we use $\SeqState_0 \WkTrsi{} \SeqState_n$ to represent the weak  transition relation that holds if there is a finite execution 
$\ltstransition{\SeqState_0}{}{}{}{\cdots}\ltstransition{}{}{}{}{\SeqState_n}$ such that $\SeqState_n$ is in unprivileged mode and all the intermediate states $\SeqState_j$ for $0 < j < n$ are in privileged mode
(i.e., the weak transition hides internal states of the hypervisor).

For models having states consisting of components  other than memory, the no-exfiltration and no-infiltration properties as we defined above are not sufficient to ensure 
separation. The behavior of the system in such models does not solely depend on the memory and interference with other components can lead to unexpected misbehavior. Therefore, we try to extend
the definition of the security properties to accommodate these additional components. To this end, we give some auxiliary definitions.

\newcommand{\observation}{\mathit{O}}
\newcommand{\consistentStateSpace}{\mathit{Q}}

\begin{definition}(Observation) 
 For a given guest $g$ on the hypervisor, we define the guest observation $\observation_{g}$ as all state components that can affect its execution.
\end{definition}

\begin{definition}(Secure Observation) 
The remaining part of the state (i.e., the memory of other partitions and some control registers) which are not directly observable by the guest
constitute the secure observations $\observation_{s}$ of the state.
\end{definition}

\begin{definition} (Consistent State)
 We define consistent states as states in which value of components are
 constrained by a functional invariant. The invariant consists of properties that enforce well-definedness of states; e.g., there is no mapping in a page-table that permits guest accesses to the hypervisor memory. 
 Moreover, $\consistentStateSpace$ represents the set of all possible states that satisfy the functional invariant.
\end{definition}

The extended no-infiltration guarantees that instructions executed in unprivileged user mode and services executed inside the hypervisor  on behalf of the guest 
maintain equality of the guest observation if the execution starts in consistent states having the same view of the system.

\begin{property}
 Let $\SeqState_1  \in \consistentStateSpace$ and $\SeqState_2 \in \consistentStateSpace$ and assume that $\observation_g(\SeqState_1) = \observation_g(\SeqState_2)$, if 
 ${\SeqState_1} \WkTrsi{} {\SeqState'_1}$ and ${\SeqState_2}\WkTrsi{}{\SeqState'_2}$ then $\observation_g(\SeqState'_1) = \observation_g(\SeqState'_2)$.
\end{property}

Proving the no-infiltration property over the course of several instructions entails showing the inability of the guest software in changing the critical state components.
This prevents a guest from elevating its permissions to access resources beyond its granted privileges,
such as the memory allocated for the page-tables or the value of  control registers. This can be done by showing that
guest transitions preserve the consistency of states and that the secure observation remains constant in all reachable states. 

\begin{property}
 Let $\SeqState \in \consistentStateSpace$, if  $\ltstransition{\SeqState}{}{}{0}{\SeqState'}$  then $\observation_{s}(\SeqState) = \observation_{s}(\SeqState')$ and $\SeqState' \in \consistentStateSpace$.
\end{property}

Due to the interleaving of the hypervisor and guests executions, one has to check that context switching (i.e., changing processor's execution mode) to the hypervisor is done securely to 
prevent a guest from  gaining privileged access rights. More importantly, it must
be ensured that \textit{program counter} cannot be loaded with an address of the guest's choice, and all entry points into privileged mode should be in an exception handler. This  guarantees that the guest cannot 
execute arbitrary code in privileged mode. Likewise, it needs to be ensured that interrupts are correctly masked, and the return address belongs to the interrupted guest and is properly stored.
To check these properties, we define a \textit{switching convention} that serves to restore the context for the activated mode and checks that mode switching from unprivileged mode to privileged mode is done securely.

\begin{property}\label{prop:secureContextSwitching}
 Let $\SeqState \in \consistentStateSpace$ be an arbitrary state in unprivileged mode and $\SeqState'$ be a state in which the hypervisor is active. If $\ltstransition{\SeqState}{}{}{0}{\SeqState'}$
 then context switching complies with the \textit{switching convention}.
\end{property}

Additionally, it is also essential to show that execution of hypervisor handlers maintains the functional invariant. 
Since the hypervisor's transitions can break the invariant, we do not check the intermediate privileged steps. However, it must be ensured that the hypervisor yields control to the guest only in a state which satisfies
the invariant.

\begin{property}\label{prop:hypervisorPreserveInvariant}
 Let $\SeqState \in \consistentStateSpace$ is the state immediately after switching to the kernel, if  ${\SeqState}\WkTrsi{}{\SeqState'}$ then  $\SeqState' \in \consistentStateSpace$.
\end{property}

\subsubsection{Information Flow Security}
\textit{Information flow analysis}, for systems implementing the MILS concept, is the study of controlling the propagation of information among security levels. The primary objective of 
this study is to rigorously ensure that there are no ``illegal'' flows of high-criticality data to low-criticality observers.
\begin{figure} 
 \centering
 \includegraphics[width=0.62\linewidth]{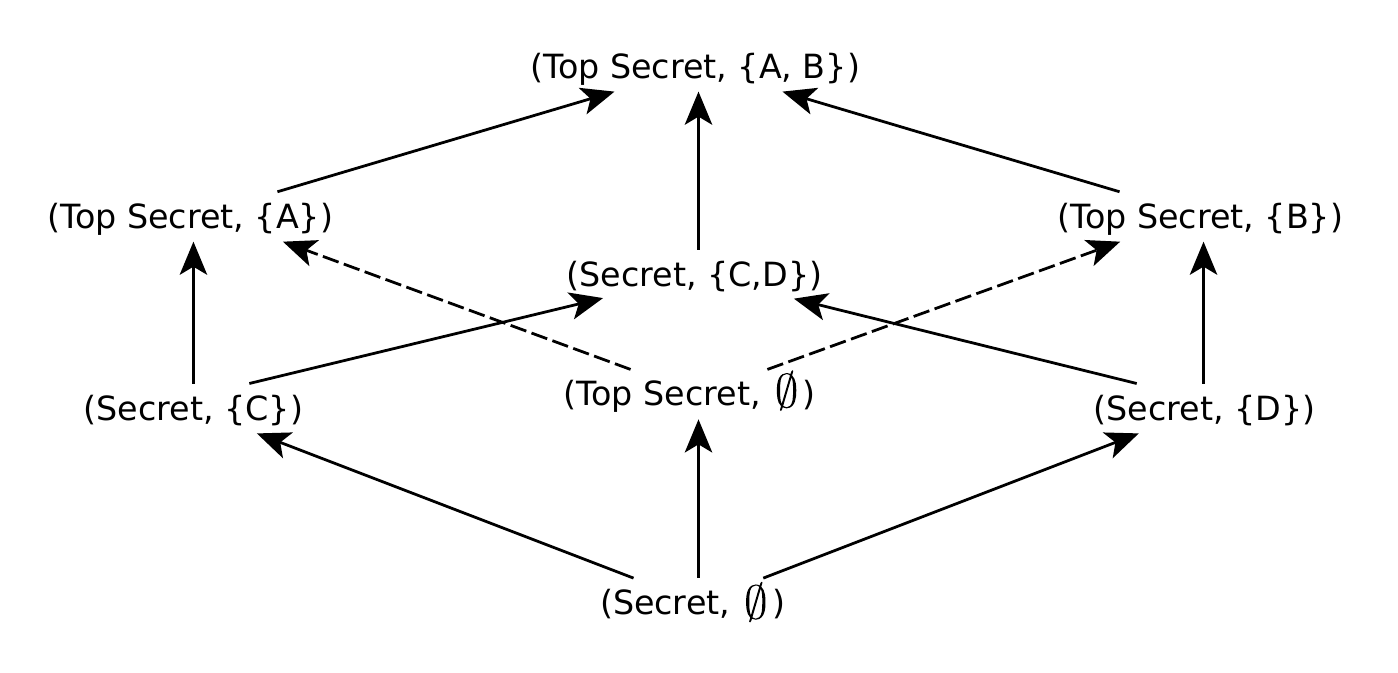}
 \caption{A lattice of security labels. Arrows show the intended information channels, moreover, $A$, $B$, $C$, and $D$ are objects in the system.}
 \label{fig:slattice}
\end{figure}
Denning~\cite{Denning:1976:LMS:360051.360056} used a lattice of security labels to perform this analysis. In her approach security labels indicate criticality level of objects (e.g., files or program variables) or
information receptacles, and the lattice structure represents  the \textit{information flow policy} within the system. An example of such a lattice is depicted in Figure~\ref{fig:slattice}.

The seminal paper by Goguen and Meseguer~\cite{Dblp:conf/sp/GoguenM82,6234812} was first to coin the concept of \textit{noninterference}.
They divide the system into a number of \textit{security domains} and information can flow between domains only according to the information flow policy of the system.
Goguen-Meseguer noninterference prevents actuators in one domain from learning about the occurrence of certain events in other ones.
Pure noninterference, which is also commonly known as no-infiltration, forbids  any flow of information among domains. This would ensure that
actions performed by one domain cannot influence subsequent outputs seen by another one, and thus the system does not leak confidential data.


Rushby defined the notion of \textit{intransitive noninterference}~\cite{csl-92-2} as a declassification of classical noninterference, which can accommodate the flow of information between
domains. In a later work, von Oheimb~\cite{vonOheimb2004} revisited  noninterference defined by Rushby and proposed concepts of \textit{nonleakage} and \textit{noninfluence} for state-based systems.
Nonleakage is a confidentiality property prohibiting domains from learning about private data of one another, but it does not forbid them from learning about the occurrence of transitions of other domains.
Noninfluence is defined as the combination of nonleakage and Goguen-Meseguer noninterference.


\subsection{Verification Methodology}
\newcommand{\states}{\mathit{states}}

We conclude this section by giving an overview of our proof strategy and drawing a connection  between tools that we have used for formal verification. Our strategy to analyze the PROSPER kernel, to examine its security
and correctness, consists of lifting the main body of  our reasoning to a high-level (design) model of the system,  which is derived from the real implementation of the kernel. The main purpose of defining such an
abstraction is to facilitate formal reasoning. This abstract model, which is also called \gls{tls} or \textit{ideal model}, defines the desired behavior of the system and serves as a framework to check the validity of the security 
properties~\cite{Roever:2008:DRM:1525579}.

The connection between different layers of abstraction relies on \textit{data refinement}~\cite{Back:1990:RCP:91930.91936,Roever:2008:DRM:1525579}. Refinement establishes the correspondence 
between the high-level (abstract) model  and real implementation, and it shows that results of the analysis on the abstract model can be transferred to the system implementation, too.  Since
verifying properties is in general easier on an abstract model, a refinement-based approach simplifies  the verification process.

Back was first to formalize the notion of \textit{stepwise refinement}, as a part of the \textit{refinement calculus}~\cite{Back78:thesis,Back:1998:RCS:551462}. 
This approach is later extended by Back and von Wright to support refinement of data representations, called \textit{data refinement}~\cite{Back:1990:RCP:91930.91936}.
Data refinement correctness is often validated by establishing either \textit{bisimulation} or \textit{forward simulation}~\cite{Lynch:1995:FBS:220262.220273}, as logical tools to show the behavioral
equivalence between two systems.


\begin{figure}[]
 \centering
 \includegraphics[width=0.4\linewidth]{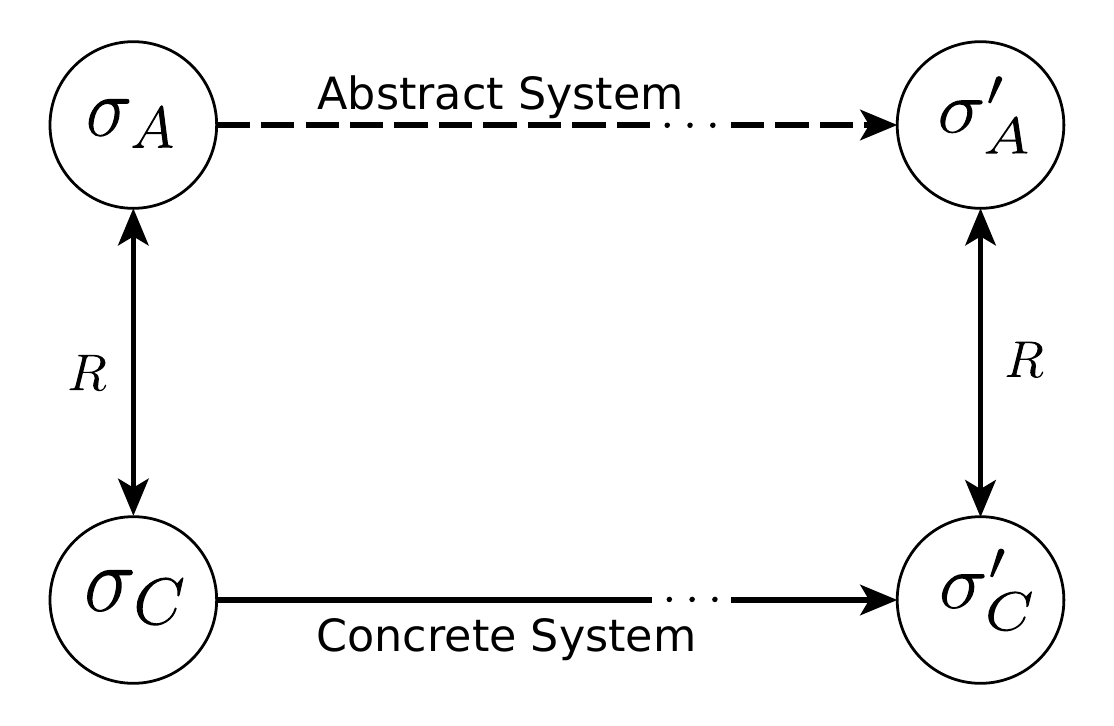}
 \caption{Forward simulation.}
 \label{fig:simulationRelation}
\end{figure}

For the given  concrete implementation $C$ and abstract model $A$ ,  the implementation $C$ refines $A$, or equivalently $A$  simulates $C$, if 
starting at the corresponding initial states, execution of $C$ and $A$ terminates in the corresponding final states (cf. Figure~\ref{fig:simulationRelation}).
The proof of such a refinement is usually done by defining an \textit{abstraction relation} $R \subseteq \RealStateSpace_{A} \times \RealStateSpace_{C}$\footnote{Establishing the simulation relation $R$ 
between two models usually entails showing several properties, among which demonstrating \textit{observational equivalence} between the models is an important property in this thesis.}
between states of the models and showing (by an inductive proof) that $R$ is a simulation relation  if $A$ and $C$ are advanced (\textit{step forward}) in parallel. 
Note that for deterministic systems whose executions depend on the initial state, since for every corresponding initial states there is only one pair of corresponding executions,
it is enough to prove the simulation once.

%

The validity of the simulation relation $R$  between the two models ensures that  the abstraction $A$ models all possible behavior of $C$; this is called \textit{soundness} of the simulation. However, the inverse relation does not 
necessarily hold.
On the other hand, the relation $R$ is called a bisimulation relation between the two models if, and only if, both the models are simulating each other\footnote{This condition does not necessarily
is true for  nondeterministic systems. That is, it is possible to have two machines that simulate each other, but are not bisimilar.}.
This requires showing \textit{completeness} of the relation; i.e., it must be proved that for every execution of $A$ there exists a simulating execution in $C$.


If the bisimulation relation is proved to hold for all executions of $A$ and $C$, any security property shown for $A$ also holds on all executions of the implementation $C$.
Additionally, hyperproperties (i.e., properties relating different executions of the same system) proven about $A$ can be transferred to $C$ as well.


\begin{figure}[]
 \centering
 \includegraphics[width=0.7\linewidth]{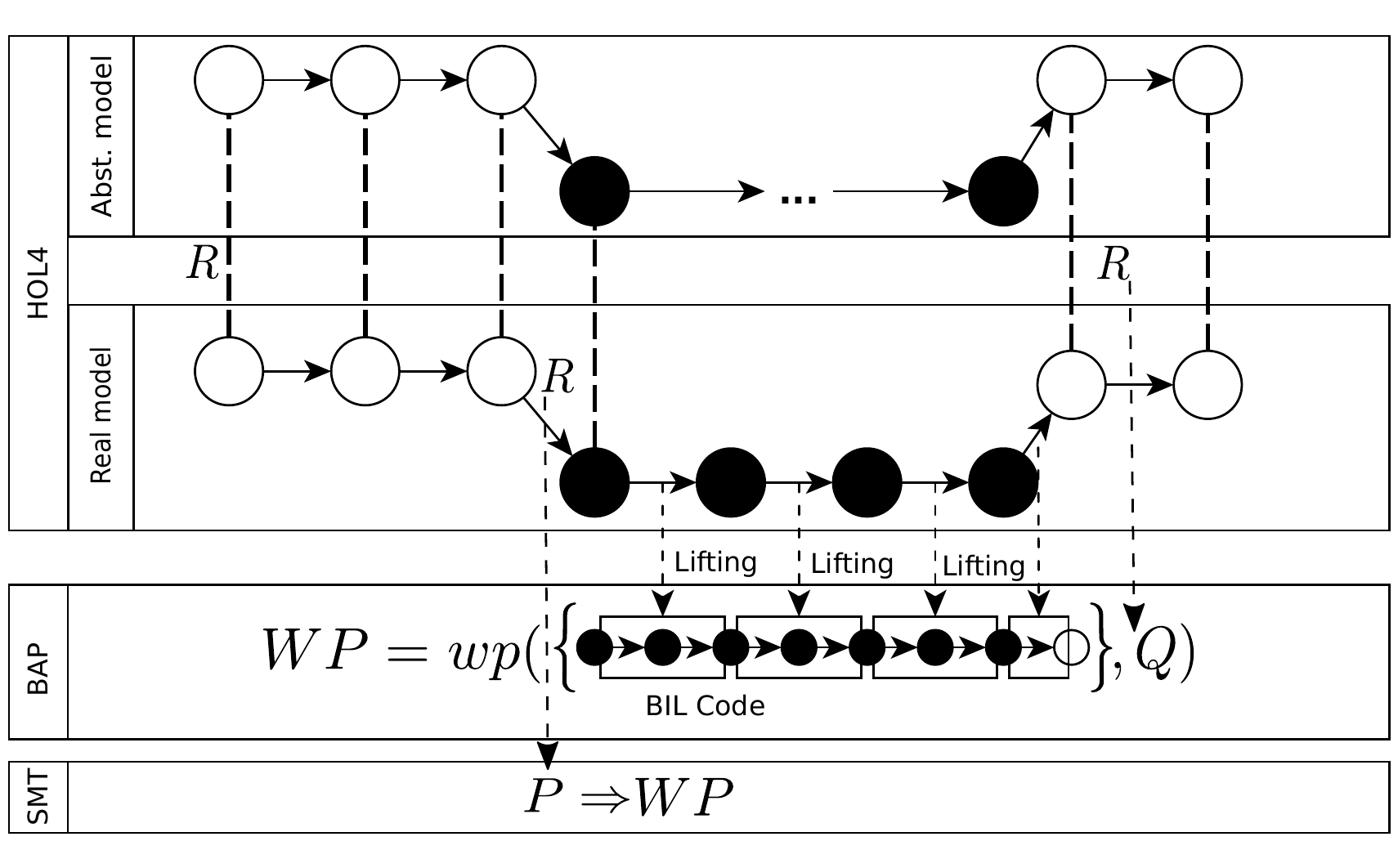}
 \caption{Executions of a real machine (middle), and the Top Level Specification (top) and the
relations between them. In addition the binary verification methodology (bottom) is depicted. Here, ``Lifting'' refers to the process of translating the binary code of the kernel to BAP's IL,
$\mathit{P}$ and $\mathit{Q}$  are pre- and post-conditions serving as the contract for a kernel handler, and $\mathit{WP}$ is the weakest precondition computed by taking into account effects of the handler machine code.}
 \label{fig:proofOverview}
\end{figure}

Figure~\ref{fig:proofOverview} illustrates our verification strategy. In particular,  it depicts the different layers of abstraction, how they are related, and the tools we have
used to verify properties at each layer.
White circles in Figure~\ref{fig:proofOverview} represent states in unprivileged user mode, and
black circles indicate states where the hypervisor is active. 
In this figure the ``Abstract model'' represents the top level specification, ``Real model'' is actual implementation of the system where each transition represents the execution of one
binary instruction, and the relation $R$ denotes the refinement relation.

In our approach, the (bi)similarity of unprivileged transitions in two models is established in HOL4. For the privileged transitions, however,
proof of the refinement relation $R$ is done using a combination of HOL4 and the BAP tools. 
Since during the execution of handlers no guest is active, internal hypervisor steps cannot be observed by guests.
Moreover, as the hypervisor does not support preemption, the execution of handlers cannot be interrupted. Therefore, we disregard internal states of the handlers and limit
the relation $R$ to relate only states where the guests are executing.
To show refinement, we use HOL4 to verify that the refinement relation transfers security properties to the Real model and
to prove a theorem that transforms the relational reasoning into a set of contracts for the handlers and guarantees
that the refinement is established if all contracts are satisfied. Contracts are derived from the
hypervisor specification and the definition of the relation $R$. Then we use BAP to check that the hypervisor code respects the contracts, which are expressed as Hoare triples.


Paper~\ref{paper:JCS} elaborates more on this pervasive approach that we have adopted to verify the hypervisor. 
Verification presented in paper~\ref{paper:esorics} is restricted to  proving the correctness of a runtime monitor only in an abstract model of the system.
Moreover, paper~\ref{paper:csf} shows how  properties transferred to an implementation model (using a similar
approach as of paper~\ref{paper:JCS}) can be further transferred down to a model augmented with additional hardware features, namely caches. 
\section{Summary}
A hypervisor is a system software that enables secure isolation of critical programs from less trusted (potentially malicious) applications coexisting on the same processor. Hypervisors reduce 
software portion of the system TCB to a thin layer which is responsible for isolating partitions and handling communication between them. The small codebase of hypervisors minimizes their attack
surface and enables the use of rigorous reasoning to prove their security and correctness. 

Commonly used properties in verification of hypervisors are data-separation  and information 
flow security. One way of verifying these properties is by constructing an abstract model of the system, proving the properties at this abstract level, and then transferring
the verified properties to the system's machine-code by showing a refinement relation. 

When modeling a system, one has to  take also into account hardware features such as caches which their effects on the system state can invalidate properties verified in the absence of these features.
\chapter{Related Work}\label{kappa:ch:relatedWorks}
\newcommand{\separation}{data-separation }
In this chapter, we review works on different aspects related to the main contributions of this thesis, namely formal verification of low-level execution platforms (Section~\ref{ch3:sec:verifcationOfIsolation}),
provably secure supervised execution (Section~\ref{ch3:sec:verifcationOfSuperviedExecution}), and attacks on isolation (Section~\ref{ch3:sec:attacksOnIsolation}).

Verification of system software is a goal pursued for several decades. Past works on formal verification of operating systems include
analysis of \capletter {Provably Secure Operating System} (PSOS)~\cite{Feiertag79thefoundations} and \capletter{UCLA Secure Unix}~\cite{Walker:1980:SVU:358818.358825}.
Neumann et al.~\cite{Feiertag79thefoundations} used the \gls{hdm}~\cite{robinson1977formal} to design an operating system with provable security properties.
\gls{hdm} is a software development methodology to facilitate formal verification of the design and implementation.
UCLA Secure Unix~\cite{Walker:1980:SVU:358818.358825} aimed to formally prove that \separation is properly enforced by the kernel of a Unix-based OS. Verification of UCLA Secure Unix was based on a refinement
proof from a top-level abstraction down to the Pascal code of the kernel, and verification assumed the correctness of underlying hardware. 
KIT~\cite{Bevier:1989:KSO:76141.76147}, for \capletter{kernel} of \capletter{isolated tasks}, is probably the first fully verified operating system. Software 
used in this exercise was a small idealized kernel which was proved to implement distributed communicating processes. 
Bevier~\cite{Bevier:1989:KSO:76141.76147} showed that properties verified at an abstract model of KIT could be refined to hold also for its real implantation.
Among others, the \textit{VFiasco}\cite{hohmuth2005vfiasco} and Robin~\cite{Tews:2008:FMM:1390859.1390960} projects were also conducted to  verify system level software.

\section{Verification of Security-Kernels}\label{ch3:sec:verifcationOfIsolation}
Next, we look at a selection of projects on verification of microkernels, separation kernels, and hypervisors; a rather comprehensive list of formally verified kernels can be found
in~\cite{DBLP:journals/corr/ZhaoSZL17}.

Formal verification of the seL4 microkernel~\cite{DBLP:conf/sosp/KleinEHACDEEKNSTW09} is probably the leading work in verification of system software.
The functional correctness of seL4 has been verified in Isabelle/HOL~\cite{Nipkow:2002:IPA:1791547} by proving a refinement. The refinement relation shows the correspondence between  different layers of abstractions
from a high-level abstract specification down to the machine code~\cite{Sewell:2013:TVV:2491956.2462183} of seL4.
Murray et al.~\cite{Murray:2012:NOS:2428345.2428357, murray2013sel4} later extended the verification of seL4 by showing its information flow security 
based on the notions of \textit{nonleakage} and \textit{noninfluence} as introduced in~\cite{vonOheimb2004}.
The verification of seL4 assumes a sequential memory model  and ignores leakages via cache timing and
storage channels. Nevertheless, the bandwidth of timing channels in seL4 and possible countermeasures were examined,
later, by Cock et al.~\cite{CockGMH14}. 

Heitmeyer et al.~\cite{Heitmeyer:2008:AFM:1340674.1340715} proposed a practical approach to security analyze separation kernels for embedded devices. 
The top-level specification used in that project was a state-machine model of the system which provides a precise description of the required behavior and is proved to enforce separation properly.
The information flow security of the kernel is shown by annotating its source-code by Floyd-Hoare style assertions and showing that non-secure information flows cannot occur.
Verification is done in the PVS theorem prover using a memory coherent (cacheless) model of the system. However, no machine-checked verification was directly done at the implementation level of the kernel.

The primary verification objective of the Verisoft project~\cite{Verisoft} was to achieve
a pervasive formal analysis of the system from hardware to programs running in unprivileged user mode. Verisoft aimed to dismiss the assumption of compiler and instruction set model correctness. 
Verisoft-XT~\cite{VerisoftXT} continued the verification started under the Verisoft project. Verisoft-XT initially  targeted the verification of the Microsoft Hyper-V hypervisor using the VCC tool~\cite{Cohen:2009:VPS:1616077.1616080}.
For this exercise, guests are modeled as full x64 machines, and caches are not transparent if the same memory location is  accessed in cacheable and uncacheable mode. 
The verification of Hyper-V was later
dropped to use the developed techniques in verification of an idealized hypervisor for a baby VAMP architecture~\cite{alkassar2010automated}. They fully automate the verification process using VCC and 
proved that the virtualization of the memory management subsystem is correct~\cite{Alkassar:FMCAD2010-b}.
Verisoft-XT also made contributions to multi-core concurrency and verification under weak memory models~\cite{Cohen:2010:TSO:2176728.2176759}.
A subproject of Verisoft-XT was to demonstrate the functional correctness of the PikeOS~\cite{baumann2011proving} microkernel at its source-code level. The verification is done in VCC 
by establishing a simulation relation between a top-level abstract model and  the real implementation of the system.

mCertiKOS~\cite{gu2015deep} is a hypervisor that uses the hardware virtualization extensions to virtualize the memory subsystem and to enforce isolation of partitions. mCertiKOS runs on a single-core processor,
and its functional correctness~\cite{gu2015deep} and noninterference property~\cite{Costanzo:2016:EVI:2908080.2908100} has been verified based on a  sequential memory model within the proof assistant Coq.
The follow-up work by Gu et al.~\cite{Gu:2016:CEA:3026877.3026928} extended the CertiKOS hypervisor to run on a multi-core processor. The Contextual functional correctness of the extend kernel also has been verified
using the Coq theorem prover. CertiKOS is implemented in ClightX, and they used CompCertX to compile and link kernel programs.

Among other related works, Ironclad~\cite{Ironclad} is a research project which applied formal methods to verify the system's entire software stacks, including
the kernel, user mode applications, device drivers, and cryptographic libraries. The verification of Ironclad to prove its functional correctness and information flow security has been performed on a  cacheless model
down to the assembly implantation of the system. Similarly, Barthe et al.~\cite{DBLP:conf/fm/BartheBCL11,Barthe:2012:CRO:2354412.2355248} applied formal reasoning to show security of an idealized 
hypervisor, they also included an abstract model of caches in their analysis and demonstrated how the isolation property can be verified when caches are enabled.

To formally verify the INTEGRITY-178B separation kernel~\cite{INTEGRITY} the GWV policy~\cite{greve2003separation}\footnote{The GWV security policy allows controlled  communication
of partitions executing on a separation kernel. GWV restricts effects on memory \textit{segments} in a partition to memory regions that are (i) associated with the current 
partition and (ii) allowed to interact with  the segments, according to the \capletter{direct interaction allowed} (dia) function.}
was extended to describe the dynamic scheduling of the kernel~\cite{Hardin:2010:DVM:1841184}. This extended policy
was also able to  capture the flow of information within the system completely.
The analysis is done by creating three specifications, namely  a functional  specification  which represents the functional interfaces of the system, a high- and low-level designs that are
the semi-formal representation of the system with different levels of details. The correspondence between the specifications was shown by a ``code-to-spec''  review process. The verification of INTEGRITY-178B to some
extent is accomplished using the ACL2 theorem prover.

\subsection{Trustworthy MMU Virtualization}

The memory subsystem is a critical resource for the security of low-level software such as OS kernels and hypervisors. Since devices like MMUs determine the binding of physical memory locations to 
locations addressable at the application level, circumventing the MMU provides a path for hostile applications to gain illicit access to protected resources. It is therefore of interest to develop methods for the MMU 
virtualization that enables complete mediation of MMU settings and can  protect trusted components.
Examples of systems that use virtualization of the memory subsystem to protect security-critical components include KCoFi~\cite{criswell2014kcofi} based on the 
\capletter{Secure Virtual Architecture} (SVA)~\cite{criswell2007secure}, Overshadow~\cite{Chen:2008:OVA:1346281.1346284}, Inktag~\cite{Hofmann:2013:ISA:2451116.2451146}, and Virtual Ghost~\cite{Criswell:2014:VGP:2541940.2541986}.

KCofi~\cite{criswell2014kcofi}, for \capletter{Kernel Control Flow Integrity}, is a system to provide complete Control-Flow Integrity protection for COTS OSs. KCofi relies on SVA~\cite{criswell2007secure} to trap
sensitive operations such as the execution of instructions that can change MMU configurations. SVA is a compiler-based virtual machine that can interpose between hardware and the operating system.
The functional correctness of KCofi has been verified in Coq. The verification covers page-table management, trap handlers, context switching, and signal delivery in KCofi.

Overshadow~\cite{Chen:2008:OVA:1346281.1346284} is a hypervisor to protect confidentiality and integrity of legacy applications from commodity operating systems.
Overshadow provides OS with only an encrypted view of application data. This encrypted view prevents the OS's kernel from performing unauthorized accesses to application data while enabling the kernel
to manage system resources. Overshadow yields a different view of the memory depending on the context performing a memory operation.

Techniques mostly used to virtualize the memory subsystem are shadow paging, nested paging, and microkernels. The functional correctness of mechanisms based on shadow page-tables has been verified 
in~\cite{Alkassar:FMCAD2010-b,alkassar2010automated,heiser2010okl4,CavalcantiD09}.
XMHF~\cite{Vasudevan:2013:DIV:2497621.2498126} and CertiKOS~\cite{gu2015deep} are examples of  verified hypervisors for  the x86 architecture that control memory operations through hardware virtualization
extensions.

XMHF~\cite{Vasudevan:2013:DIV:2497621.2498126}, or {e\capletter{X}tensible} and {\capletter{Modular Hypervisor Framework}}, is a formally verified security hypervisor,
which relies on hardware extensions to virtualize the memory subsystem and to achieve high performance.
The XMHF hypervisor design  does not support interrupts. This design choice enabled automated verification of the memory subsystem integrity
using model checking techniques.

\newcommand{\monitor}{\mathit{SM}}
\newcommand{\policySpace}{\mathbb{P}}
\newcommand{\Bin}{\mathbb{B}}
\section{Trustworthy Runtime Monitoring} \label{ch3:sec:verifcationOfSuperviedExecution}
\textit{Security kernels}~\cite{Ames:1983:SKD:1319721.1319947} or reference monitors are programs used traditionally to enforce access control policies in the kernel of mainstream operating systems. 
To fulfill the requirements of having an efficient monitoring subsystem (i.e., providing the monitor with a complete view of the system and  making it tamper-resistant), the monitor can be implemented as a loadable
kernel module. Nevertheless, kernel level solutions~\cite{Loscocco:2001:IFS:647054.715771, Wright:2002:LSM:647253.720287} are not reliable, as the kernel itself is vulnerable to  attacks, such as code injection
attacks and kernel rootkits.

An alternative is protecting the monitor by a more privileged layer, such as a hypervisor. The protection can be done through either making inaccessible from the hypervisor, 
 memory of the monitor, e.g.,~\cite{Sharif:2009:SIM:1653662.1653720}, or integrating the monitor into the virtualization layer~\cite{Seshadri:2007:STH:1294261.1294294, Kim:1994:DIT:191177.191183}.
The later, however, has the inconvenience of increasing the complexity of the hypervisor itself, invalidating the principle of keeping the TCB as minimal as
possible. Secvisor~\cite{Seshadri:2007:STH:1294261.1294294, franklin2008attacking} is a formally verified hypervisor based runtime monitor to preserve the guest's kernel integrity. Secvisor uses the MMU 
virtualization to protect the kernel memory and to defend itself against attacks. SecVisor was verified using the model checking approach to ensure that only user-approved code can execute in 
privileged mode of the CPU.

More reliable solutions to implement a secure runtime monitor are protecting the monitor using hardware extensions, e.g., ARM TrustZone~\cite{TrustZone}, or {\sk}. Example of softwares which use 
hardware extensions to protect the monitoring module includes~\cite{Payne:2008:LAS:1397759.1398072,Wang:2010:HLA:1849417.1849986}.
Lares~\cite{Payne:2008:LAS:1397759.1398072} is a runtime monitor built on a hypervisor. Lares consists of three components: a partition to execute a guest software, a monitoring module
deployed in a separate partition, and the hypervisor which creates isolated partitions and supplies the monitor with a complete view of the guest. 
Hypervision~\cite{Wang:2010:HLA:1849417.1849986} uses isolation at the hardware level to protect the monitoring subsystem and to have full control over the target kernel's memory management. Hypervision 
is entirely located within ARM TrustZone and uses hardware features to intercept security-critical events of the software that it protects.

\section{Attack On Isolation}\label{ch3:sec:attacksOnIsolation}
Creating an isolated, integrity-protected processing environment for security-critical computations is an active research area in platform security. Over the last years,
the advent of  new technologies~\cite{anati2013innovative,mckeen2013innovative,TrustZone}  has made this long-standing goal (almost) attainable. However,
attacks targeting these mechanisms raised concerns about security of
these  technologies themselves as the system \textit{root of trust}. There is a large body of papers that demonstrate attacks successfully conducted to subvert techniques employed to create trusted execution
environments. This section overviews a few works on exploiting vulnerabilities to compromise isolation solutions.

Hypervisor level attacks can be, in general, categorized into two groups: \textit{side-channel attacks} and \textit{hypervisor level malware attacks}.
Side-channels, as defined in Chapter~\ref{kappa:ch:background}, are paths that exist accidentally to the otherwise secure flow of data and through which confidential information can escape.
Side-channels are usually built using hidden features of system components; among others, memory and caches are extensively used  to construct such channels.
Jankovic et al.~\cite{Philippe-Jankovic2017} showed how to construct a side-channel between virtual machines on a hypervisor using the \textit{Flush+Reload} attack.
Apecechea et al.~\cite{Irazoqui:2014:FGC:2758335.2758750} used \textit{Bernstein's correlation attack}~\cite{Bonneau:2006:CTA:2105334.2105357} to create a side-channel between partitions executing on 
Xen and VMware~\cite{rosenblum1999vmwares} to extract the secret key of the AES cryptographic algorithm.
Similarly, in~\cite{Yarom:2014:FHR:2671225.2671271} the last level cache (i.e., L3 cache) is used to mount a side-channel attack.

Hypervisor level malware attacks are used mostly to detect the presence of the virtualization layer~\cite{6014696} and to  mimic the structure of a hypervisor~\cite{King:2006:SIM:1130235.1130383}. The later, 
also called \textit{hyperjacking}, installs as a bare-metal hypervisor and moves the victim software inside a partition without being detected. An example of this type of attacks is the 
\textit{Blue-Pill} root-kit developed by security researcher Joanna Rutkowska.

Hypervisors are not the only victim of attacks on isolation. There is an extensive list of attack techniques that target  hardware level solutions such as ARM TrustZone and Intel SGX.
TrustZone was first introduced in ARMv6  to create an isolated execution environment. However, it is vulnerability to attacks from unprivileged user space, to execute arbitrary code inside 
TrustZone's secure world, 
is shown in~\cite{shen2015exploiting}.
Intel \textit{Software Guard Extensions} is a technology to protect select data from disclosure or modification. Nevertheless, a study by Costan and Devadas~\cite{costan2016intel} and
similarly~\cite{DBLP:journals/corr/BrasserMDKCS17, Gotzfried:2017:CAI:3065913.3065915} revealed the vulnerability of this technology to  cache attacks and software side-channel attacks.
It is also acknowledged by  Intel~\cite{intelsgxvulnerability} that SGX does not provide protection against side-channel attacks including those that constructed through exploiting performance
counters.

\section{Summary}
Formal verification of system software has made great strides in recent years. Verifications are mostly done on models that are simpler than current processors. 
These verification exercises represent significant investments in development and verification. However, excluding hardware features such as caches, TLBs, and pipelines from the analysis makes
properties verified in the absence of these features not reliable in a richer setting.

Taking control of the memory management subsystem, e.g., through virtualizing the MMU, is a widely adopted technique to enforce isolation of components in the system. Many 
projects used memory isolation to deploy a runtime monitor to control system activities securely.
\chapter{Contributions}\label{kappa:ch:contributions}
The main contributions of this thesis can be split into three parts: 
(i) design and implementation of a security hypervisor, 
(ii) attacking the isolation guarantees of the hypervisor through constructing a low-noise cache storage side-channel, and
(iii) formal verification of the hypervisor down to its machine-code.
Our approach to secure the hypervisor is based on taking control of the memory management subsystem by designing an algorithm to virtualize the MMU. Furthermore, we 
use formal methods to verify the functional correctness and security of the virtualization mechanism.

This thesis consists of four articles, some of which originally published in peer-reviewed journals and conference proceedings. In this chapter, we give a summary of these papers
 together with a statement of the author's contributions. Some papers are revised to improve presentation and to include proofs of key theorems.
In particular, some details in paper~\ref{paper:esorics} concerning  the proof of important theorems have been changed from the original paper.

\section{Summary of Included Papers}
\paragraph{Paper A:  Provably secure memory isolation for Linux on ARM.} 
\begin{center}
\begin{minipage}{30em}
   \small{ Originally published as 
      \itshape {Roberto Guanciale, Hamed Nemati, Mads Dam, and Christoph Baumann. Provably secure memory isolation for Linux on ARM. Journal of Computer Security, 24(6):793-837, 2016. 
       }
    }
\end{minipage}
\end{center}

\paragraph{Content} 
Necessary for a hypervisor to host a general purpose operating system is permitting the guest software to  manage its internal memory hierarchy dynamically
and to impose its access restrictions. To this end, a  mandatory security requirement is taking control of the memory management subsystem by the hypervisor through virtualizing the MMU to 
mediate all its configurations. The MMU virtualization allows isolating security-critical components from a commodity OS running on the same processor. This enables the guest OS to
implement noncritical functionalities while critical parts of the system are adequately protected. In this paper, we present the design, implementation, and verification of a memory virtualization
mechanism. The hypervisor targets a common architecture for embedded devices, namely ARMv7 architecture, and it supports execution of Linux without requiring special hardware extensions. Our virtualization
approach is based on \textit{direct paging}, which is inspired by the paravirtualization mechanism of Xen~\cite{xen} and Secure Virtual Architecture~\cite{criswell2007secure}. We show that direct
paging can be implemented using a compact design that is suitable for formal verification. Moreover, using a refinement-based approach, we prove complete mediation along with memory isolation, and information 
flow correctness of the virtualization mechanism. Verification is done on a high-level model that augments a real machine state with additional components which represent the hypervisor
internal data structure. We also demonstrate how the verified properties at the high-level model of the system can be propagated to the hypervisor's machine-code.

The binary verification relies on Hoare logic and reduces the high-level  (relational) reasoning in HOL4 into verification of some contracts expressed in terms 
of Hoare triples $\{P\}\ C\ \{Q\}$.
To validate contracts, we compute the weakest precondition of binary fragments in the initial state to  ensure that the execution of the fragment
terminates in a state satisfying the postcondition.  We then prove that the precondition entails the weakest precondition.  
The validity of these contracts shows properties verified at high-level models are also valid for the machine-code of the hypervisor.

\begin{figure}
\centering
  \includegraphics[width=0.8\linewidth]{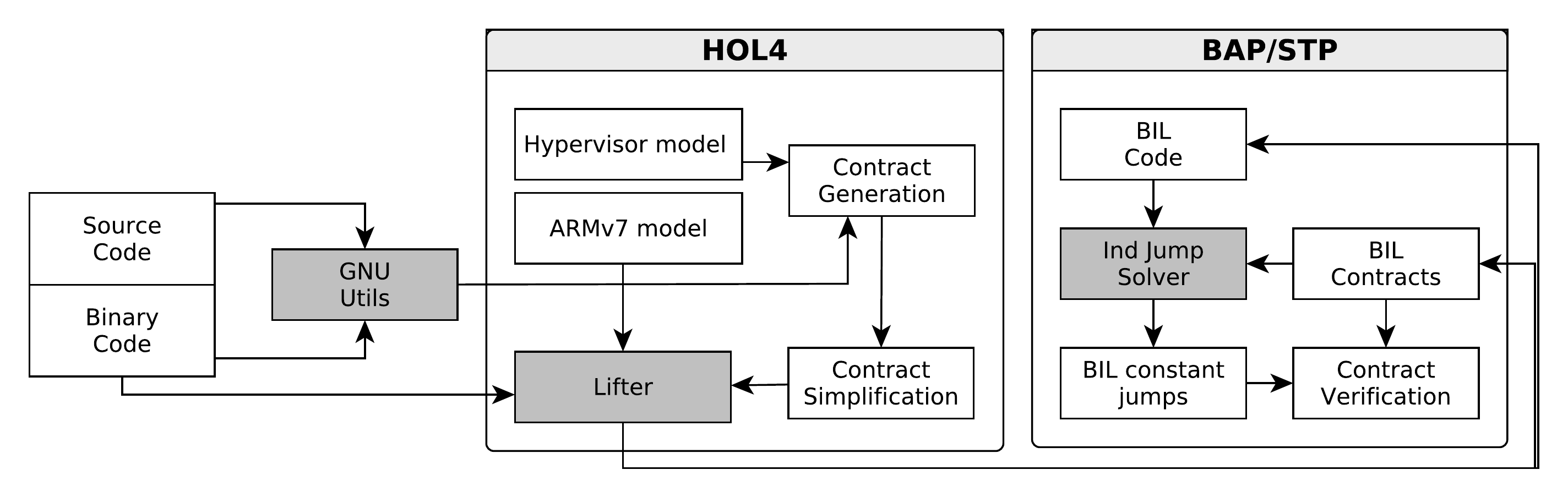}
  \caption{Binary verification workflow: \textit{Contract Generation}, generating pre- and post conditions based on the system specification and the refinement relation; \textit{Contract Simplification}, 
  massaging contracts to make them suitable for verification; \textit{Lifter}, lifting handlers machine-code and the generated contracts in HOL4 to BIL; \textit{Ind Jump Solver}, the procedure to resolve indirect jumps 
  in the BIL code; \textit{BIL constant jumps}, BIL fragments without indirect jumps; \textit{Contract Verification} using SMT solver to verify contracts. Here, gray boxes are depicting the tools that have been
  developed/extended to automate verification.}
  \label{fig:BAPworkFlow}
\end{figure}

The binary verification of the hypervisor is automated to a large extent using the BAP tool (cf. Figure~\ref{fig:BAPworkFlow}). To use BAP we \textit{lift} the ARMv7 assembly to the intermediate language of BAP. Lifting is
done by utilizing a tool developed in HOL4 that  generates for each expression a certifying theorem showing the equality of the expression's BIL fragment and the corresponding predicate in HOL4.
Several other tools have also been developed to automate the verification process and to optimize the weakest precondition generation algorithm of BAP. 
The optimization is needed to reduce the size of generated predicates, which otherwise can grow exponentially due to the number of instructions. Validating contracts by computing the weakest precondition relies on
the absence of indirect jumps. To fulfill this requirement, we  have implemented a simple iterative procedure that uses STP (an SMT solver)~\cite{DBLP:conf/cav/GaneshD07} to resolve indirect jumps in 
the code. Moreover, writing predicates and invariants usually need information on data types together with location, alignment, and size of data structure fields. Since machine-code (and BIL) lacks such information,
we developed a set of tools that integrate HOL4 and the \gls{gdb}~\cite{gdb} to extract the required information from the C source code and the compiled assembly.


In this paper, we also present several applications of the hypervisor. For instance, we show how a runtime monitor can be deployed on the hypervisor to check internal activities
of a Linux guest executing in a separate partition.

\paragraph{Statement of Contributions} For this paper, I was responsible for adapting the direct paging algorithm to the PROSPER kernel, implementing the memory management subsystem of the 
hypervisor, porting the Linux kernel on the hypervisor, and developing the new front-end for the BAP tool. The Linux port is done  jointly with other coauthors and our colleagues from \capletter{Swedish Institute} of
\capletter{Computer Scienc} (SICS). Further, I together with Roberto Guanciale did all the modeling and verification of the hypervisor. I also contributed to preparing the text.

\paragraph{Paper B: Trustworthy Prevention of Code Injection in Linux on Embedded Devices.} 
\begin{center}
\begin{minipage}{30em}
   \small{ Originally published as 
      \itshape {Hind Chfouka, Hamed Nemati, Roberto Guanciale, Mads Dam, and Patrik Ekdahl. Trustworthy prevention of code injection in Linux on embedded devices. In Computer Security - ESORICS 2015 - 20th European
        Symposium on Research in Computer Security, Vienna, Austria, September 21-25, 2015, Proceedings, Part I, pages 90-107, 2015.
       }
    }
\end{minipage}
\end{center}

\paragraph{Content} This paper demonstrate a use-case of isolation provided by the PROSPER hypervisor. In particular, the paper presents the design, implementation, and verification of a \gls{vmi}-based runtime monitor. 
The primary goal of the monitor is thwarting code injection attacks in an untrusted guest OS. The monitor is placed  in an isolated virtual machine  to
check internal activities of a Linux guest running in a separate partition. Deploying the monitor inside a dedicated partition has the advantage of decoupling 
the policy enforcement mechanism from the other hypervisor's functionalities and keeping the TCB of the system minimal.
The security policy enforced by the monitor is $W \oplus X$; i.e., each memory block within the Linux memory space can be either writable or executable but not both at the same time. 
This prevents buffer overflow attacks and guarantees that the untrusted guest is not able to directly modify executable memory blocks. 
The monitor uses a signature-based validation approach to check the authenticity of executables in the Linux memory. For this, the monitor keeps a database of valid signatures for known applications.
Whenever Linux tries to run an executable, the monitor intercepts this operation, due to the executable space protection policy, and checks if this executable has a valid signature in the 
database. If not, the monitor prohibits this operation.

The verification  is carried out on  an abstract model of the system to show that the monitor guarantees the integrity of the system.
The verification is performed using the HOL4 theorem prover and by extending the existing hypervisor model with the formal specification of the monitor.

\paragraph{Statement of Contributions} For this paper,  I contributed to the design, implementation, and verification of the monitor. The design and implementation are done together with Roberto Guanciale, and I
also contributed to the verification of the monitor and writing the paper text.

\paragraph{Paper C: Cache Storage Channels: Alias-Driven Attacks and Verified Countermeasures.} 
\begin{center}
\begin{minipage}{30em}
   \small{ Originally published as 
      \itshape {Roberto Guanciale, Hamed Nemati, Christoph Baumann, and Mads Dam. Cache storage channels: Alias-driven attacks and verified countermeasures. In IEEE Symposium on Security and Privacy, SP 2016, 
        San Jose, CA, USA, May 22-26, 2016, pages 38-55, 2016.
       }
    }
\end{minipage}
\end{center}

\paragraph{Content}
Resource sharing, while inevitable on contemporary hardware platforms, can impose significant challenges to security. First level data-cache is one such shared resource that is
transparent to programs running in user mode. However, it influences the system's behavior in many aspects.

Great as the importance of caches for the system performance is, caches are widely used to construct side-channels. These channels enable unauthorized parties
to gain access to confidential data through measuring caches' effect on the system state. As an instance, cache-timing channels are attack vectors built 
via monitoring variations in execution time, due to the presence of caches, and used in several cases to extract secret key of cryptographic algorithms.

This paper presents a new attack vector on ARMv7 architecture, which exposes a low-noise cache side-channel. 
The vector uses virtual aliasing with mismatched cacheability attributes and self-modifying code to build cache ``storage'' side-channels. We use this attack vector to break
the integrity of an ARMv7 hypervisor and its trusted services, such as a runtime monitor which was verified on a cacheless model, as presented in paper~\ref{paper:esorics} of this thesis.
Furthermore, we show how this vector can be used to attack the confidentiality of an AES cryptographic algorithm that was placed in the secure world of platform's TrustZone. 

To counter the cache storage side-channel attack we propose several countermeasures and implemented some of them to show their effectiveness. Additionally, we informally discuss how the formal verification 
of software which has been previously verified in a memory coherent (cacheless) model can be restored on platforms with enabled caches.

\paragraph{Statement of Contributions} Roberto Guanciale and I initially developed the idea of building cache storage side-channels using mismatched cacheability attributes. Then, I investigated more the
possibility of  constructing such channels and implemented the poof-of-concept attack based on the code provided by Arash Vahidi (at SICS). I was also responsible for developing the integrity attack
against the PROSPER kernel and implementing some of the countermeasures inside the hypervisor and the Linux kernel. Moreover, I contributed to discussions on fixing the formal verification of software
previously verified in a cacheless model, and writing the paper text.

\paragraph{Paper D: Formal Analysis of Countermeasures against Cache Storage Side Channels} 
\begin{center}
\begin{minipage}{30em}
   \small{
      \itshape {Hamed Nemati, Roberto Guanciale, Christoph Baumann, and Mads Dam. Formal Analysis of Countermeasures against Cache Storage Side Channels. 
       }
    }
\end{minipage}
\end{center}

\paragraph{Content} Over the past decades, formal verification has emerged as a powerful tool to improve the trustworthiness of security-critical system software. Verification of these systems, however, are mostly done on
models abstracting from low-level platform details such as caches and TLBs. In~\cite{DBLP:conf/sp/GuancialeNBD16} we have demonstrated several attacks built using caches against a verified hypervisor. Such 
attacks are facts showing that excluding low-level features from  formal analysis makes the verification of a system unreliable and
signify the need to develop verification frameworks that can adequately reflect the presence of caches.

We informally discussed in~\cite{DBLP:conf/sp/GuancialeNBD16} how to restore  verified  (on a cacheless model)  guarantees of software when caches are enabled.
In a similar vein, in paper~\ref{paper:csf} we formally show  how a countermeasure against cache attacks helps (1) restoring the integrity of a software and (2) proving the absence of cache storage side-channels. 
To ease the burden of verification task, we also try to reuse the previous analysis of the software on the cacheless model, as much as possible.

To carry out the analysis, we define a cacheless model, which is memory coherent by construction, and a cache-aware model. The cache-aware model is a generic ARM core augmented with caches and is sufficiently detailed to 
reflect the behavior of an untrusted application that can break memory coherence. We show for privileged transitions that  integrity and confidentiality can be directly transferred to the cache-aware model if the two models 
behave equivalently. 
However, since user level applications  are unknown and free to break their coherency, these properties cannot be transferred without overly restricting applications' transitions. Instead, we
revalidated security of unprivileged transitions in the cache-aware model.

Showing behavioral equivalence of the models entails proving that countermeasures are correctly implemented and are able to restore memory coherence. This, in turn, generates some proof obligations that can be imposed on the 
cacheless model, thus permitting to use existing tools~\cite{DBLP:conf/ccs/BalliuDG14,Brumley:2011:BBA:2032305.2032342,Song:2008:BNA:1496255.1496257} (mostly not available on a cache enabled model) to automate
verification to a large extent.

The main feature of our approach is the decomposition of proofs into (i) software dependent (ii) countermeasure dependent and (iii) hardware dependent parts. This helps to verify the countermeasures once and
independent of the software executing on the system and makes easier to adopt the verified properties for different software or hardware platforms.
We used HOL4 to machine-check our verification strategy to prove kernel integrity and user level properties. However, kernel confidentiality is so far mainly a pen-and-paper proof.

\paragraph{Statement of Contributions} Initially, I tried to develop a proof methodology to fix verification of a low-level execution platform on a cache-aware model, and I wrote the initial draft of the paper. Then,
all authors contributed to improve the technical development and writing the text.

\section{Further Publications}
In addition to the included papers, the following papers have been produced in part by the author of this thesis:

\begin{itemize}
 \item Mads Dam, Roberto Guanciale, Narges Khakpour, Hamed Nemati, Oliver Schwarz: Formal verification of information flow security for a simple arm-based separation kernel. ACM Conference 
 on Computer and Communications Security 2013: 223-234.
 
 \item Mads Dam, Roberto Guanciale, Hamed Nemati: Machine code verification of a tiny ARM hypervisor. TrustED@CCS 2013: 3-12.
 
 \item Hamed Nemati, Mads Dam, Roberto Guanciale, Viktor Do, Arash Vahidi: Trustworthy Memory Isolation of Linux on Embedded Devices. TRUST 2015: 125-142.
 
 \item Hamed Nemati, Roberto Guanciale, Mads Dam: Trustworthy Virtualization of the ARMv7 Memory Subsystem. SOFSEM 2015: 578-589.
\end{itemize}
   
\chapter{Conclusions}\label{kappa:ch:conclusions}
In this thesis, we explore the design, implementation, and verification of a security hypervisor as an enabler for isolation between system components.
The provided isolation allows building mixed-criticality systems which consolidate critical functionalities with convenience features on the same processor.
The hypervisor reduces the size of  the system's TCB, and its small codebase permitted us to  apply formal methods to show that the virtualization layer is functionally correct and behaves as specified. 
Further proofs are also used to ensure that the hypervisor preserves the system integrity and confidentially, guaranteeing that separated components are not able to affect each other.


\section{Contribution}
The contributions of this thesis are summarised as follows. 
In paper~\ref{paper:JCS}, we use direct paging to implement a provably secure virtualized memory management subsystem for embedded devices. We formally verify on a high-level 
model of the system that our approach is functionally correct, and it properly isolates partitions. By proving a refinement theorem, we then show how these properties can be  (semi-) automatically propagated to 
the binary code of the hypervisor. 
A use-case scenario for isolation obtained using memory virtualization  is presented in paper~\ref{paper:esorics}. In that paper, we show how a trustworthy runtime monitor can be securely deployed in a partition on
the hypervisor to prevent  code injection attacks within a fellow partition, which hosts a \gls{cots} operating system. The monitor inherits the security properties verified for the hypervisor, and we proved
that it enforces the executable space protection policy correctly.

In the related works chapter, we have seen that verification of system software is mostly done on models that are far simpler than contemporary processors. As a result, there are potential attack vectors that are not
uncovered by formal analysis. Paper~\ref{paper:sp} presents one such attack vectors constructed by measuring caches effects. The vector enables an adversary to breach isolation guarantees of a verified 
system on sequential memory model when is deployed on richer platforms with enabled caches. To restore integrity and confidentiality, we provide several countermeasures. 
Further, in paper~\ref{paper:csf}, we propose a new verification methodology to repair verification of the system by including data-caches in the statement of the top-level 
security properties. 

\section{Concluding Remarks and Future Work}
Whether the results presented in this work will contribute to any improvement in a system's overall security posture depends on many factors. In particular, it 
depends on characteristics of hardware, software executing on the platform, and capabilities of adversaries. In this thesis, we tried to address only a subset of
problems related to  platform security and soundness of our results are constrained by the hypervisor design, model of the processor in HOL4, and hardware platforms that we have used as
our testbeds. 


For instance, the model of the memory management subsystem which we have used in our formal analysis is manually derived from  ARM Architecture Reference Manuals~\cite{ARMV7,Cortexa8, ARMV8}.  ARM specifications are mostly written in a 
combination of natural language and pseudocode and leave some aspects of the processor behavior (e.g., effects of executing some system-level instructions) under-specified. This informal and imprecise description 
makes developing formal models from such specifications laborious and error-prone and potentially undermines results achieved based on these models.
ARM admits this problem and  declares the significance of having more precise and trustworthy specifications as a motivation behind their recently published machine-readable and executable specifications~\cite{Reid:2016:TSA:3077629.3077658}.
A future direction for our work is adopting models produced directly from such  machine-readable specifications, which also makes easier to keep up-to-date the formal model of the system as hardware changes.

Moreover, the complexity of today's hardware and system software is such that a verification approach allowing reuse of models and proofs as new features are added is
essential for formal verification of low-level execution platforms to be economically sustainable. 
The HASPOC project~\cite{7561034} showed how to adopt a compositional verification strategy to attain this goal.
They modeled system using several automatons, each with a separate specification, which interact via message passing. 
Such compositional approach allows delaying the implementation of detailed models for some components while making feasible to verify high-level security properties of the system like integrity and confidentiality.
A further advantage of decomposing the model of a system into separate components is that guarantees on constant parts can be reused when other parts change.

There are a few other issues that should be addressed before our hypervisor can get adopted in real-world scenarios. For example, the current hypervisor does not allow explicit communication
between partitions.  Enabling communication, however, makes formal verification challenging. The reason for this is that permitting the flow of information between system components makes it infeasible to use classical
noninterference properties like \textit{observational determinism}~\cite{McLean:1992:PNF:2699855.2699858} to show system confidentiality. In such a setting, proving isolation 
requires a careful analysis to ensure partitions cannot infer anything more that what is allowed by the information flow policy about one another's internal states, e.g., the secret key of a cryptographic algorithm 
stored in partition memory.

Finally, in this thesis, we restrict our experiments to software running on a single-core processor connected to first-level caches, namely L1- data and instruction caches. 
Further investigations would be needed to  understand the security impact of other components such second-level caches, pipelines, branch prediction unit, TLBs, and enabling multi-core 
processing on the platform, some of which are left for future work and discussed in more details in respective papers.

\part{Included Papers}

\setcounter{chapter}{0}
\renewcommand{\chaptername}{Paper}
\renewcommand{\thechapter}{\Alph{chapter}}

\chapter{Provably secure memory isolation for Linux on ARM}\label{paper:JCS}
\chaptermark{Provably secure memory isolation}
\backgroundsetup{position={current page.north east},vshift=1cm,hshift=-3cm,contents={\VerBar{OliveGreen}{3cm}}}
\BgThispage

\begin{center}
Roberto Guanciale, Hamed Nemati, Mads Dam, Christoph Baumann
\end{center}

\begin{abstract}
The isolation of security-critical components from an untrusted OS allows to both protect applications and to harden
the OS itself. Virtualization of the memory subsystem is a key component to
provide such isolation. We present the design, implementation and verification of a memory virtualization platform for 
ARMv7-A processors. The design is based on direct paging, an MMU virtualization mechanism previously introduced
by Xen. It is shown that this mechanism can be implemented using a compact design,
suitable for formal verification down to a low level
of abstraction, without penalizing system performance. The verification is performed using the HOL4 theorem prover and uses
a detailed model of the processor. We prove memory isolation along with
information flow security for an abstract top-level model of the virtualization mechanism. The abstract model is refined down
to a transition system closely resembling a C implementation. Additionally, it is demonstrated how the gap between the
low-level abstraction and the binary level-can be filled, using tools that check Hoare contracts.
The virtualization mechanism is demonstrated on real hardware via a hypervisor hosting Linux and 
 supporting a tamper-proof run-time monitor that provably prevents code
 injection in the Linux guest.
\end{abstract}

\newenvironment{defn}[1][Definition]{\begin{trivlist}
\item[\hskip \labelsep {\bfseries #1}]}{\end{trivlist}}
\newtheorem{corollary}{Corollary}
\newtheorem{proposition}{Proposition}
\newcommand{\HypState}{h}
\newcommand{\type}{\mathit{pgtype}}
\newcommand{\ActiveGuestReal}[1]{\mathit{act}_#1}
\newcommand{\ActiveGuestIdeal}[1]{\mathit{act}_#1}
\newcommand{\Trace}{\omega}
\newcommand{\project}{\mathit{prj}}
\newcommand{\map}{\mathit{map}}
\newcommand{\TraceSet}[2]{\mathit{tr}_{#1,#2}}
\newcommand{\TraceSetb}[1]{\mathit{tr}_{#1}}
\newcommand{\ArmState}{\sigma}
\newcommand{\UserMode}{\mathit{PL0}}
\newcommand{\KernelMode}{\mathit{PL1}}
\newcommand{\Sec}{S}
\newcommand{\PLm}{\mathit{pl}}
\newcommand{\req}{\mathit{req}}
\newcommand{\bregs}{\mathit{bregs}}
\newcommand{\uregs}{\mathit{uregs}}

\section{Introduction}
A basic security requirement for systems that allow software to
execute at different levels of security is memory isolation: The
ability to store a secret or to enforce data integrity within a designated part of memory
and prevent the contents of this memory to be affected by, or leak to,
parts of the system that are not authorised to access it.
Without the usage of special hardware, trustworthy memory isolation
is dependent on the OS kernel being correctly implemented. 
However, given the size and complexity of modern OSs, the vision of comprehensive and
formal verification of commodity OSs is as distant as ever.

An alternative to verifying the entire OS is to delegate critical
functionality to special low-level execution platforms such as
hypervisors, separation kernels, or microkernels. Such an approach has
some significant advantages. First, the size and complexity of the
execution platform can be made much smaller, potentially opening up
for rigorous verification. The literature has many recent examples of
this, in seL4 \cite{DBLP:conf/sosp/KleinEHACDEEKNSTW09}, Microsoft's
Hyper-V project~\cite{CavalcantiD09}, Green Hills' CC certified
INTEGRITY-178B separation kernel~\cite{INTEGRITY}, and the Singularity~\cite{hunt2007singularity} microkernel Second, the
platform can be opened up to public scrutiny and certification, independent of
application stacks.

Virtualization-like mechanisms can also be used to support various forms
of application hardening against untrusted OSs. Examples of this include
KCoFi \cite{criswell2014kcofi} based on the Secure Virtual Architecture~(SVA)~\cite{criswell2007secure}, Overshadow 
\cite{Chen:2008:OVA:1346281.1346284}, Inktag \cite{Hofmann:2013:ISA:2451116.2451146}, and Virtual Ghost \cite{Criswell:2014:VGP:2541940.2541986}.
All these examples rely crucially on memory isolation to provide the required security guarantees, typically by virtualizing the memory management unit (MMU) hardware.
MMU virtualization, however, can be exceedingly tricky to get right, motivating the use of formal methods for its verification.

In this paper we present an MMU virtualization API for the ARMv7-A
processor family and its formal verification down to the binary
level.
A distinguishing feature of our design is the use of direct
paging, a virtualization mechanism introduced by Xen \cite{xen} and used later with some variations by the SVA.
In direct paging, page tables are kept in guest memory and allowed to be read and directly manipulated by the
untrusted guest OS (when they are not in active use by the MMU). Xen demonstrated that this approach has better
performance than other software virtualization approaches (e.g. shadow page tables) on the x86 architecture. Moreover, 
since direct paging does not require shadow data structures, this approach has small memory overhead. 
The engineering challenge inherent to this project is to  design a minimal
API that (i) is sufficiently
expressive to host a paravirtualized Linux, (ii) introduces an
acceptable overhead and (iii) whose implementation is sufficiently
small to be subject to pervasive verification for a commodity CPU architecture such as ARMv7.
 

The security objective is to allow an untrusted guest system to operate freely, invoking the hypervisor at will, without
being able to access memory or processor resources for which the guest has not received static permission. In this paper
we describe the design, implementation, and evaluation of our memory virtualization API, and the formal verification of its security properties. The verification is performed using a formal model of the ARMv7 architecture \cite{DBLP:conf/itp/FoxM10}, 
implemented in the HOL4 interactive theorem prover.

The proof strategy is to establish a bisimilarity between the hypervisor
executing on a formal model of the ARMv7 instruction set architecture and the top level specification (TLS).
The TLS describes the desired behaviour of the system consisting of handlers implementing the virtualization mechanism and the behaviour of machine instructions executed by the
untrusted guest. The specification of the MMU virtualization API involves an abstract model state that is not represented in memory and thus by design invulnerable to direct guest access.
Due to the direct paging approach, however, the page tables that control the MMU are residing in guest memory and need to be modelled explicitly.
Hence, it is no longer self-evident that the desired memory isolation properties, no-exfiltration and no-infiltration in the terminology of \cite{Heitmeyer:2008:AFM:1340674.1340715}, 
hold for guests in the TLS, and an important and novel part of the verification is therefore to formally validate that these properties indeed hold.

To keep the TLS as simple and abstract as possible, the TLS addresses page tables directly using
their physical addresses. A real implementation cannot do this, but must use virtual addresses instead, in addition to managing its internal data structures.
To this end an implementation model is introduced,
%
which uses virtual addresses instead of
physical ones and stores the abstract model state
explicitly in memory. This provides a very low-level C-like model of handler
execution, directly reflecting all algorithmic features of the memory subsystem virtualization
implemented by the binary code of the handlers, on the real ARMv7 state, as represented by the HOL4 model.
We exhibit a refinement from the TLS to the implementation
model, prove its correctness, and show, as a corollary, that the memory isolation properties proved
for the TLS transfer to the implementation model. This constitutes the second part of the verification.

The next step is to fill the gap between the verification of this low-level
  abstraction and the binary level.
To accomplish this an additional refinement must be established. Using the same approach as \cite{TrustED},
we demonstrate how this can be achieved using 
a combination of theorem proving and tools that check contracts
for binary code. The machine code verification is then in charge of establishing
that the hypervisor code fragments respect these contracts, expressed as Hoare triples.
Pre and post conditions are generated semi-automatically starting from the specification of the low-level abstraction and
the refinement relation. They are then transferred to the
binary analysis tool BAP~\cite{Brumley:2011:BBA:2032305.2032342}, which is used to verify the hypervisor handlers at the assembly
level. Several tools have been developed to support this task,
including a lifter that transforms ARM code to the machine independent
language that can be analysed by BAP and a procedure to resolve indirect jumps.
The binary verification of the hypervisor has not been completed yet. However, 
we demonstrate the methodology outlined above by applying it to prove correctness of the binary code of one of the
API calls. The scalability of the approach has been shown in~\cite{dam2013formal}, where
it was used to verify the binary code of a complete separation kernel.

An alternative approach would be to focus the code verification at the C level.
First, such an approach does not directly give assurances at the ISA level,
which is our objective. 
This can be partly addressed by a certifying compiler such as
CompCert\cite{Leroy-Compcert-CACM}. However, system level code is currently not supported by such
compilers. Moreover, this type of code is prone to break the standard
C-semantics, for example by reconfiguring the MMU and changing the virtual
memory mapping of the program under verification as is the case here.

The verification highlighted three classes of bugs in the initial design of the virtualization mechanism:
\begin{enumerate}
\item Arithmetic overflows, bit field and offset mismatches, and signed operators where
the unsigned ones were needed.
\item Missing checks of self referencing page tables.
\item Approval of guest requests that cause unpredictable behaviours of the ARMv7 MMU.
\end{enumerate}
Moreover, the verification of the implementation model identified 
additional bugs exploitable by requesting the
validation of physical blocks residing outside the guest memory.
This last class of bugs was identified because the implementation model takes into
account the virtual memory mapping used by the handlers.
Finally, the binary code
verification identified a buffer overflow.

We report on a port of Linux kernel 2.6.34 and demonstrate the prototype implementation of a hypervisor for
which the core component is the verified MMU virtualization API.
The complete hypervisor augments the memory virtualization API by handlers that
route aborts and interrupts inside Linux.
Experiments  demonstrate that the hypervisor can run with reasonable performance on
real hardware (Beagleboard-xM based on the Cortex-A8 CPU). 
Furthermore an application scenario is demonstrated based on a trusted run-time
monitor. The monitor executes alongside the untrusted Linux system, 
enforces the W$\oplus$X  policy (no memory area can be writable and executable simultaneously) and uses code signing to prevent binary
code injection in the untrusted system.

\subsection{Scope and limitations}
The binary verification of the hypervisor has not been completed yet. However, 
we demonstrate the methodology outlined above by applying it to prove correctness of the binary code of one of the
API calls. The scalability of the approach has been shown in~\cite{dam2013formal}, where
it was used to verify the binary code of a complete separation kernel.
In Section~\ref{sec:limitation} we comment on the
tasks that are not automated and need to be manually accomplished to complete
the verification.


\section{Related Work}
The size and complexity of commodity
OSs make them susceptible to attacks that can bypass their
security mechanisms, as demonstrated in e.g. \cite{kernelvuln,KargerS02}.
The ability to isolate security-critical components from an untrusted OS allows non critical parts of a system to be implemented while the critical software remains adequately protected.  This isolation can be used both to protect applications from an untrusted OS as well as to protect the OS itself from internal threats.
For example, KCoFI~\cite{criswell2014kcofi} uses Secure Virtual Architecture~\cite{criswell2007secure} to isolate the OS from a 
run-time checker. The checker instruments the OS and monitors its activities
to guarantee the control-flow integrity of the OS itself. Related examples are application hardening frameworks such as Overshadow \cite{Chen:2008:OVA:1346281.1346284}, Inktag \cite{Hofmann:2013:ISA:2451116.2451146}, and Virtual Ghost \cite{Criswell:2014:VGP:2541940.2541986}. In all these cases some form of virtualization of the MMU hardware is a critical component to provide the required isolation guarantees.


Shadow page tables (SPT) is a common approach to MMU virtualization.  The virtualization layer maintains a shadow copy of page tables created and
maintained by the guest OS. The MMU uses only the shadow pages, which are updated after the virtualization layer validates the OS
changes. The Hyper-V hypervisor which uses shadow pages on x86, has been formally verified using the semi automated VCC tool \cite{CavalcantiD09}.
Related work \cite{alkassar2010automated,PSS12} uses shadow page tables  to provide full virtualization, including virtual memory, for  ``baby VAMP'', a simplified MIPS, using VCC. This work, along with later work~\cite{ACKP12} on TLB virtualization for an abstract mode of x64, has been verified using Wolfgang Paul's VCC-based simulation framework~\cite{CohenPS13}. Also, the OKL4-microvisor uses shadow paging to virtualize the memory subsystem~\cite{heiser2010okl4}. However, this hypervisor has not been verified.

Some modern CPUs provide native hardware support for virtualization.
The ARM Virtualization Extensions~\cite{arm:cortex15} augment
the CPU with a new execution mode and provide a two stage address translation.
These features greatly reduce the complexity of the virtualization layer~\cite{Varanasi:2011:HVA:2103799.2103813}.
XHMF~\cite{Vasudevan:2013:DIV:2497621.2498126} and CertiKOS~\cite{gu2015deep} are examples of verified hypervisors for the x86 architecture that control memory operations of guests .
using virtualization extensions.
%
The availability of hardware virtualization extensions, however, does not make software based solutions obsolete.
For example, the recent Cortex-A5 (used in feature-phones) 
and the legacy ARM11 cores (used in home network appliances and the 2014 ``New Nintendo 3DS'') do not make use of such extensions.
Today, the Internet of Things (IoT) and wearable computing are dominated by microcontrollers (e.g. Cortex-M). As the 
recent Intel Quark demonstrates, the necessity of executing legacy stacks (e.g. Linux) is pushing 
towards equipping these microcontrollers with an MMU. Quark and the upcoming ARMv8-R 
both support an MMU  and lack two stage page-tables. Generally, there is
  no universal sweet spot that reconciles the demands for low cost, low power
  consumption and rich hardware features.
For instance, solutions based on FPGAs and soft-cores such as LEON can benefit from software based virtualization by freeing gates not used for virtualization extensions to be used for application specific logic (e.g. digital signal processing, software-defined radio, cryptography). 

A virtualization layer provides to the guest OS an interface similar to the underlying hardware. An alternative approach is to execute
the commodity OS as a partition of a
microkernel, by mapping the OS threads directly to the microkernel threads, thus delegating completely
the process management functionality from the hosted OSes to the
microkernel (e.g. L$^4$Linux).
This generally involves an invasive and error-prone OS adaptation process, however.
The formal verification of seL4~\cite{DBLP:conf/sosp/KleinEHACDEEKNSTW09} demonstrated that a
detailed analysis of the security properties of a complete microkernel
is possible even at the machine code level~\cite{Sewell:2013:TVV:2491956.2462183}. Similarly, the Ironclad Apps framework \cite{Ironclad} hosts security services in a remote operating system. Its functional correctness and information flow properties are verified on the assembly level.

In order to achieve trustworthy isolation between partitions, more light-weight
solutions can also be employed, namely formally verified separation kernels
\cite{INTEGRITY, dam2013formal, baumann2011proving} and Software Fault Isolation
(SFI)~\cite{Wahbe:1993:ESF:168619.168635,zhao2011armor}
. The latter has the advantage over the former in that it is a software-only approach, not relying on common hardware components such as MMU and memory protection units (MPU). Nevertheless, both mechanisms are generally not equipped with the functionality needed to host a commodity OS. Conversely, formally verified processor architectures specifically designed with a focus on logical partitioning \cite{WildingGRH10} and information flow control \cite{azevedo2014verified} can be used to achieve isolation.

\subsection{Contributions}
We present a platform to virtualize the memory subsystem
of a real commodity CPU architecture: The ARMv7-A. The virtualization platform is based on direct paging, 
a virtualization approach inspired by the
paravirtualization mechanism of Xen~\cite{xen} and 
Secure Virtual Architecture~\cite{criswell2007secure}.
The design of the platform is sufficiently slim to enable 
its formal verification without penalizing the system performance.
The verification is performed down to a detailed model of the architecture,
including a detailed model of the ARMv7 MMU. This
enables our threat model to consist of an arbitrary guest
that can execute any ARMv7 instruction in user mode.
We prove complete mediation of the MMU configurations,
memory isolation of the hosted components, and information flow correctness. Additionally, we present our methodology for the binary verification of hypervisor code and report on first results. So far, one handler has been verified on the binary level. Completing the binary verification for all handlers is work in progress.
The viability of the platform is demonstrated via a prototype hypervisor that is 
capable of hosting a Linux system while provably isolating it
from other services. The hypervisor supports BeagleBoard-xM (a 
development board based on ARM Cortex-A8) and is used to benchmark the platform on real hardware. 
 As the main application it is shown how the virtualization mechanism can be used to support
a tamper-proof run-time monitor that prevents code injection in an untrusted Linux guest.


\section{Verification Approach}
In Figure~\ref{fig:trace} we give an overview of the entire verification flow
presented in this paper. In particular it depicts the different layers of
modelling, how they are related, and the tools used. This is discussed in more
detail in Section~\ref{sec:proof-engeneering}.

Our MMU virtualization API is designed for paravirtualization and
targets a commodity CPU (ARMv7-A). In such a scenario, the hosting CPU must provide
two levels of execution: privileged and unprivileged. 
The hypervisor is the only software component that is executed at the privileged level;
at this level the software has complete control of the underlying
hardware.
All other software components (including operating
system kernels, user processes, etc.) are executed in unprivileged mode;
direct accesses to the sensitive resources must be prevented and
all transitions to privileged mode are controlled through the use of exceptions and interrupts.

In addition to the MMU virtualization API 
itself, as part of the hypervisor, the system is intended to support two types of clients:
\begin{itemize}
\item An untrusted commodity OS guest (Linux)  running non-critical software (e.g. GUI, browser, server, games). 
\item A set of trusted services such as controllers that drive physical actuators, run-time monitors, sensor drivers, or cryptographic services.
\end{itemize}

An example computation of such system is shown in the row labelled ``Real
model'' of Figure~\ref{fig:trace}. White circles represent states in unprivileged execution level where the untrusted guest (either its kernel or one of its
user processes) are running. Gray circles represent
unprivileged states where one of the trusted services are in control.
Finally, black circles represent states in privileged
level where the hypervisor is active. Transitions between two
unprivileged states (e.g. $1 \rightarrow 2$) do not cause any exceptions.
The transition between the states 2
and 3 is caused by an exception, for example the
execution of a software interrupt. Finally, transitions from
privileged to unprivileged levels (e.g. $6 \rightarrow 7$) are caused by
instructions that explicitly change the execution level.

\begin{figure}
  \centering
  \includegraphics[width=0.8\linewidth]{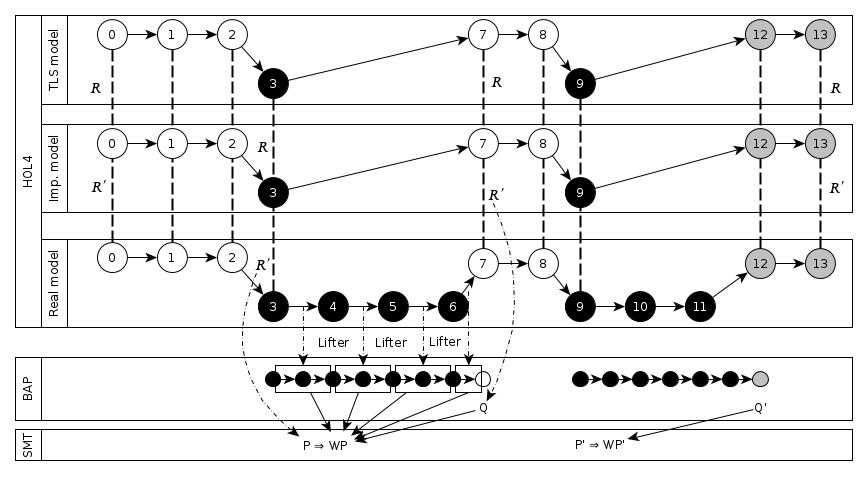}
  \caption{Executions of a real machine (middle), the implementation model (above), and the Top Level Specification (top) and the relations between them. 
  In addition the dependencies of the binary verification methodology (bottom) are depicted.}
  \label{fig:trace}
\end{figure}

\subsection{Attack Model}
Due to the size and complexity of a complete Linux system, a
realistic adversary model must consider the Linux partition compromised. 
For this reason, the attacker is an untrusted paravirtualized
Linux kernel and its user processes, that maliciously or due to an error
may attempt to gain access to resources outside the guest partition. 
Thus, the attacker is free to execute any CPU instruction in
unprivileged mode; 
it is initially not able to directly access the
coprocessor registers, and all attacker memory accesses are initially
mediated by the MMU.
However, by exploiting possible flaws in the hypervisor the attacker
may during the course of a computation gain such access to the
MMU configuration, something our security proof shows is in fact not possible.
In this work, we assume absence of external mechanisms that can directly modify
the internal state of the machine (e.g. external devices or physical
tampering). The analysis of temporal partitioning properties (e.g. timing channels as investigated in~\cite{cock2014last}) is also deliberately left out of this work. 

\subsection{Security Goals}
The verification must demonstrate that the low level platform
does not allow undesired interference between
guest and sensitive resources.
That is:
\begin{enumerate}
  \item The hypervisor 
must play the role of a security monitor of the
MMU settings. If \emph{complete mediation} of the MMU settings is violated, then
an attacker may bypass the hypervisor policies and compromise
the security of the entire system. 
We show this by proving that neither the untrusted
  guest nor the trusted services can directly change the MMU
  configuration.
\item Executions of an arbitrary guest cannot affect the ``trusted
  world'', i.e. the parts of the state the guest is not allowed to
  modify, such as memory of trusted services, system level
registers and status flags, and the hypervisor state. This is an integrity
property, similar to the no-exfiltration property
of~\cite{Heitmeyer:2008:AFM:1340674.1340715}.
  \item Absence of information flow from the trusted world to the
    guest, i.e. confidentiality, similar to no-infiltration
of~\cite{Heitmeyer:2008:AFM:1340674.1340715}.

\end{enumerate}

Note that these properties, as in
\cite{Heitmeyer:2008:AFM:1340674.1340715}, are qualitatively
different: The integrity property is a single-trace property, and concerns the
inability of the guest to directly write some other state
variables. Since it is under guest control when and how to invoke the
virtualization API, there are plenty of indirect communication
channels connecting guests to the hypervisor. For instance, a guest
decision to allocate or deallocate a page table affects large parts of
the hypervisor state,  without ever directly writing to any internal
hypervisor state variable. Enforcing this is in a sense the very
purpose of the hypervisor. On the other hand, the only desired effects
of hypervisor actions should be to allocate/deallocate, map, remap,
and unmap virtual memory resources, leaving any other observation a guest
may make unaffected, thus preventing the guest from extracting information from
inaccessible resources even indirectly. This is essentially a two-trace
information
flow property, needed to break guest-to-guest (or guest-to-service)
information channels in much the same way as intransitive
noninterference is used in \cite{murray2013sel4} to break
guest-to-guest channels passing through the scheduler in seL4.

In this work we establish these properties via successive refinements
that add more details (that in turn can highlight different misbehaviour of
the system) to the virtualization API, starting from an abstract model refining down to the binary code of the
low level execution platform.
We first demonstrate that the intended security property holds for the
most abstract model. 
At each refinement, the proof consist of (i) identifying a relation that is strong enough to transfer the security property from
the higher abstract model to the more real one (we call this a \emph{candidate relation}) and (ii)
demonstrating that the candidate relation actually satisfies the properties required from a refinement
relation. For the first task it turns out that one needs a bisimulation relation in order to transfer higher-order information flow properties like confidentiality. The latter task is reduced to subsidiary properties, which
have natural correspondences in previous kernel verification
literature~\cite{Heitmeyer:2008:AFM:1340674.1340715,INTEGRITY}:
\begin{itemize}
\item A malicious guest cannot violate isolation while it is executing. 
\item Executions of the abstract vs the more real model preserve the candidate bisimulation relation.
\end{itemize}
These two tasks are qualitatively different. The former task, due to our use of
memory protection, is really a noninterference-like
property of the hosting architecture rather than a property of the
hypervisor.
This property must hold independently of the hosted guest, which is unknown at verification time since the attacker can take complete
control of the untrusted Linux.
By contrast, the latter task consists in verifying at increasing levels of
detail the functional correctness of the individual handlers.

\subsection{Top Level Specification}
The first verification task focuses on establishing correctness of the
design of the virtualization API.
With this goal, in Section~\ref{sec:tls} we specify the desired
behaviour of the virtualization API as a transition system, called the
\textit{Top Level Specification}  (TLS).
This specification models unprivileged execution of an
arbitrary guest system on top of a CPU with MMU support,
alternating with abstract handler events. These events model
invocations of the hypervisor handlers as atomic transformations 
operating on an abstract machine state. Abstract states are real CPU states
extended by auxiliary (model) data that reflect the internal state of the hypervisor.  We refer
to this auxiliary data as the abstract hypervisor state. Handler
events represent the execution of several instructions at privileged
level, in response to exceptions or interrupts. Modelling handler
effects as atomic state transformations is possible, since the
hypervisor is non-preemptive, i.e. nested exceptions/interrupts are
ruled out by the implementation.

Since in direct paging the guest systems can directly manipulate
inactive page tables,
 the TLS needs to explicitly model page tables in
memory. This contrasts simpler models such as the one presented
in \cite{dam2013formal} where the hypervisor state was represented in the TLS using abstract model variables only. 
For this reason, establishing complete mediation, integrity, and confidentiality for the TLS is far from trivial.

\subsection{Implementation Model}\label{sec:method:refinement}
Extending the security properties to an actual implementation, however, requires additional work, for the following reasons:
\begin{itemize}
\item The TLS uses auxiliary data structures (the abstract hypervisor state) that are not stored inside the system memory.
\item The TLS accesses the memory directly using physical addresses.
\end{itemize}
As is common practice, the virtualization code executes under the same address translation as
the guest (but with different access permissions), in order to reduce the number of context switches
required. 
For this approach it is critical to verify that all low-level
operations performed by the hypervisor correctly implement the TLS
specification; 
these operations include reads and updates of the page tables, and
reads and updates of the hypervisor data structures. 
To show implementation soundness  we exhibit a refinement property
relating TLS states with states of the implementation. 
The refinement relation is proven to be preserved by all atomic hypervisor operations;
reads and updates of the page tables, reads and updates of the hypervisor data structures. In particular it is established that these virtual memory operations access the correct 
physical addresses and never produce any data abort exceptions.
Moreover, it is shown that the refinement relation directly transfers
both the integrity properties and the information flow properties of
the TLS to the implementation level.

\subsection{Binary Verification}
The last verification step consists in filling the gap between the
implementation and the binary code executed on the actual hardware.
This requires to exhibit a refinement 
relation between the implementation model and the real model of
the system (i.e.~where each transition represents the execution of one
binary instruction). 

Intuitively, internal hypervisor {steps} cannot be observed by
the guests, since during the execution of the handler no guest is active.
Moreover, {as} the hypervisor does not support preemption,
then the execution of handlers cannot be interrupted.
These facts permit to disregard internal states of the handlers and
limit the refinement to relate only states where the guests are executing.

Thus, the binary verification can be accomplished in three
steps:
(i) verification that the refinement relation directly transfers
the isolation properties to the real model,
(ii) verification of a top level theorem that transforms the
relational reasoning into a set of contracts for the
handlers and guarantees that the refinement
is established if all contracts are satisfied, and 
(iii) 
verification of the machine code.
The last step establishes if the hypervisor code
fragments respect the contracts, expressed as Hoare triples $\{P\}C\{Q\}$,
where $P$ and $Q$ are the pre/post conditions of the assembly fragment $C$.

\subsection{Proof Engineering}\label{sec:proof-engeneering}
We use Figure~\ref{fig:trace} and Table~\ref{tbl:summary}
to summarise the models, theorems and tools that
are described in the following sections.
We use three transition systems; the TLS
(Section~\ref{sec:tls}), 
the Implementation Model (Section~\ref{sec:implmodel}) and the
ARMv7 model (Section~\ref{sec:armv7}). These transition systems have been
defined in the HOL4 theorem prover and differ in the level of abstraction they
use to represent the hypervisor behaviour.
The three transition systems model guest behaviour
identically (e.g. transitions $0 \rightarrow 1$); these transitions 
obey the access privileges computed by the MMU
and satisfy properties~\ref{prop:User-No-Exfiltration} and
~\ref{prop:User-No-Infiltration} of Section~\ref{sec:armv7}.
These properties have been verified for a simplified MMU model in~\cite{khakpour2013machine}.

We use HOL4 to verify that the security properties hold for the TLS 
(Theorems~\ref{lem:invariant}, ~\ref{lem:mmu-integrity}, ~\ref{lem:switch},
~\ref{lem:No-Exfiltration} and ~\ref{lem:No-Infiltration} of
Section~\ref{sec:tls}). The reasoning used to implement the proofs in the
interactive theorem prover is summarised in Section~\ref{sec:tlsproof}.

The refinement ($\mathcal{R}$) between the TLS and the
implementation model is verified in HOL4 (Theorem~\ref{lem:refinement} of
Section~\ref{sec:implmodel}). We also use HOL4 to prove that the refinement
transfers the security properties of the TLS to the implementation model
(Corollary~\ref{lem:MLS-sec}). 

The refinement ($\mathcal{R'}$) between the implementation model 
and the real model is formally defined in HOL4,
allowing us to prove that the refinement transfers the security properties to
the ARMv7 model (Corollary~\ref{lem:MLS-binary}).

The verification of the refinement (Theorem~\ref{lem:binary-verification} of
Section~\ref{sec:ver:binary}) is only partial: we demonstrate
the verification of the binary code of the hypervisor only for a part of
the code-base and we rely on some assumptions in order to fill the semantic
gap between HOL4 and the external tools. 
We prove Theorem~\ref{lem:binary-verification} for non-privileged transitions in
HOL4 (i.e. transitions not involving the hypervisor code such as $1 \rightarrow
2$ and $12 \rightarrow 13$). 

For the hypervisor code, we show that the task can be partially automated  by
means of external tools. For this purpose  we use the HOL4 model of ARMv7 to
transform the binary code of the hypervisor (e.g. the code executed between
states 3 and 7 in the real model) to the input language of BAP
(represented in the figure by the arrow labelled ``Lifter'').
The usage of HOL4 for this task allows us to reduce the assumptions needed to
fill the gap between the HOL4 ARMv7 model and BAP, as described in
Section~\ref{sec:binary}. The methodology to complete the verification is the
following: given a hypervisor handler whose code has been translated to the BAP
code $C$, we use a 
HOL4 certifying procedure that generates a contract $\{P\}C\{Q\}$ starting
from the hypervisor implementation model and the refinement relation. The
certifying procedure yields a HOL4 theorem stating that the refinement relation 
$\mathcal{R'}$ is preserved if the hypervisor handler $C$ establishes the
postcondition $Q$ starting from the precondition $P$.
We use BAP to compute the weakest precondition $WP$ of the postcondition
$Q$ and the code $C$ and a finally an SMT solver checks that the weakest
precondition is entailed by the precondition.

\begin{table}
  \begin{tabular}{|l|p{5cm}|c|c|l|}
    \hline
    Artefact & Description & HOL4 & BAP & In\\
    \hline

     TLS & Model of the abstract design of the hypervisor + attacker (guest) & $\circ$ & & \cite{SOFSEM}\\
     \hline
     Implementation model & Low level model of the hypervisor + attacker & $\circ$ & & \cite{TrustED}\\
     \hline
     ARM model & Real model of the system & $\circ$ &  & \cite{DBLP:conf/itp/FoxM10,SOFSEM}\\
     \hline
    Properties~\ref{prop:User-No-Exfiltration} and ~\ref{prop:User-No-Infiltration} & Properties of the ARM instruction set (here only assumed) &
     $\circ$ & & \cite{khakpour2013machine}\\
     \hline
     \multicolumn{5}{|l|}{
     Properties of the TLS
     }\\
     \hline
     Theorem~\ref{lem:invariant} & Verification of the functional invariant & $\circ$ &  & \cite{SOFSEM}\\
     \hline
     Theorem~\ref{lem:mmu-integrity} & Verification of MMU integrity & $\circ$ &  &  \cite{SOFSEM}\\
     \hline
     Theorem~\ref{lem:switch} & Verification of no context switch & $\circ$ &  &  \cite{SOFSEM}\\
     \hline
     \multirow{ 2}{*}{\parbox{3cm}{Theorems~\ref{lem:No-Exfiltration} and ~\ref{lem:No-Infiltration}}} 
      & Verification of No exfiltration + No infiltration = isolation 
                           & $\circ$ &  &  \cite{SOFSEM}\\
     \hline
     \multicolumn{5}{|l|}{
     Properties of the Implementation model
     }\\
     \hline
     Theorem~\ref{lem:refinement} & Verification of Refinement & $\circ$ &  &  \cite{TrustED}\\
     \hline
     Corollary~\ref{lem:MLS-sec} & Verification of MMU integrity + No exfiltration + No infiltration & $\circ$ &  &  \cite{TrustED}\\
     \hline
     \multicolumn{5}{|l|}{
     Properties of the real model
     }\\
     \hline
     Theorem~\ref{lem:binary-verification} & Refinement.
     For non-privileged transitions proved in HOL4. One of the API function proved using BAP & $\circ$ &  $\circ$ &  Here \\
     \hline
     Corollary~\ref{lem:MLS-binary} & Verification of MMU integrity + No exfiltration + No infiltration & $\circ$ &  &  Here \\
     \hline
     \multicolumn{5}{|l|}{
     Miscellaneous
     }\\
     \hline
     Lifter &  Translation of ARMv7 binary to BIL &  $\circ$ &  $\circ$ & \cite{TrustED,dam2013formal}\\
     \hline
     Certifying procedure & Generates a contract starting from the  model of one of the API function and the refinement relation  & $\circ$ &  &  Here \\
     \hline
     Indirect jump solver & Computes all possible target of indirect jumps for a BIL loop free program. Here extended and re-implemented as BAP extension & & $\circ$ & \cite{TrustED}
     \\
     \hline
  \end{tabular}
  \caption{
    List and first appearance of models, theorems and tools. 
  }
  \label{tbl:summary}
  
\end{table}


\section{The ARMv7 CPU}\label{sec:armv7}
ARMv7 is the currently dominant processor architecture in embedded devices.
Our verification relies on the HOL4 model of ARM developed at
Cambridge~\cite{DBLP:conf/itp/FoxM10}. The use of a theorem prover
allows the verification goals to be stated in a manner which is faithful
to the intuition, without resorting to approximations and abstractions
that would be needed when using a fully automated tool such as a model
checker. Furthermore, basing the verification on the Cambridge ARM
model lends high trustworthiness to the exercise: The Cambridge model
is well-tested and phrased in a manner that retains a high
resemblance to the pseudocode used by ARM in the architecture
reference manual \cite{ARMV7}. The Cambridge model has been extended
by ourselves to include MMU functionality. The resulting model gives a
highly detailed account of the ISA level instruction semantics at the
different privilege levels, including relevant MMU coprocessor
effects. It must be noted that the Cambridge ARM model assumes
linearizable memory, and so can be used out of the box only for
processor and hypervisor implementations that satisfy this property, for instance through adequate cache flushing as discussed in Section~\ref{sec:cache}. 

We outline the HOL4 ARMv7 model in sufficient detail to make the formal results
presented later understandable.
An ARMv7 machine state is a record $$\sigma = \tuple{\uregs,\bregs,\coregs,\mem}\in\Sigma\ ,$$
where $\uregs$, $\bregs$, $\coregs$, and  $\mem$, respectively, represent
the user registers, banked registers (used for handling exceptions), coprocessors, and
memory.
The function $\mode(\sigma)$ returns the current  privilege execution
mode in the  state $\sigma$, which can be either $\UserMode$
(unprivileged or user mode, used by the guest) or $\KernelMode$ (privileged mode, used by the
hypervisor). The memory is the function $\mem \in 2^{32} \rightarrow
2^8$. The coprocessor registers $\mathit{\coregs}$ control the MMU. 

System behaviour is modelled by the state transition relation
$\to_{l \in \{\UserMode, \KernelMode\}} \subseteq \RealStateSpace
\times \RealStateSpace$, where a transition is performed by  the execution of an ARM instruction.
Unprivileged transitions ($\ArmState \to_{\UserMode} \ArmState'$)
start from and end in states that are in unprivileged execution mode
(i.e. $\priv(\ArmState) = \priv(\ArmState') = \UserMode$).
All the other transitions ($\ArmState \to_{\KernelMode} \ArmState'$)
involve at least one state in privileged level.
The raising of an exception is modelled by a transition that enables the
level $\KernelMode$.
An exception can be raised because: (i) a software interrupt (SWI) is
executed, (ii) the current instruction is undefined, (iii) a memory access
is attempted that is disallowed by the MMU, or (iv) an hardware interrupt is
received. 
Whenever an exception occurs, the CPU
disables the interrupts  and jumps to a predefined address in the vector
table to transfer control to the corresponding exception handler.

The ARMv7 MMU uses a two level translation scheme.
The first level (L1) consists of a 4096 entry table that divides up to 4GB of memory into 1MB sections.
These sections can either point to an equally large region of physical memory or to a level 2 (L2)
page table with 256 entries that maps the 1MB section into 4 KB physical pages. 
MMU behaviour is modelled by the function $mmu(\ArmState, \PLm, va, \req)$, which
takes a state $\ArmState$, a privilege level, a virtual address
$va$ and an access request $\req \in \{rd, wt,ex\}$
 (representing read, write and execute accesses)
and yields $pa \in 2^{32} \cup \{\bot\}$, where $pa$ is the
translated physical address or an access denied. The ARMv7 documentation
describes the possibility of unpredictable behaviour due to erroneous
setup of the MMU through coprocessor registers and page tables.
In this work the hypervisor completely mediates the MMU configuration
and aims to rule out this kind of behaviour.

In the ARM architecture \emph{domains} provide a discretionary access control
mechanism. This mechanism is orthogonal to the one provided by CPU execution modes.
There are sixteen domains, each on activated independently in one of the coprocessor registers $\mathit{\coregs}$. 
The page tables map each virtual page/section to one of the domains and the MMU forbids accesses to a page/section
if the corresponding domain is not active.

The state transition relation queries the MMU
whenever a virtual address is accessed, and raises an
exception if the requested access mode is not allowed. 
To describe the security properties guaranteed by an ARMv7 CPU
we introduce some auxiliary definitions.

\begin{definition}[Physical memory access rights] The predicate $\mathit{mmu}_{ph}$ 
takes a state $\ArmState$, the privilege level $\PLm$, a physical
address $pa$ and an access permission $\req \in \{rd, wt,ex\}$ and holds
if the access permission is granted for physical address $pa$.
$$\mathit{mmu}_{ph}(\ArmState, \PLm, pa, \req)\ \Leftrightarrow\ \exists va .\ mmu(\ArmState, \PLm, va, \req) = pa$$
\end{definition}
The ARMv7 MMU mediates accesses to the virtual memory, enabling or forbidding
operations to virtual addresses. Intuitively, a physical address $pa$ can be
read (written) if it exists at least a virtual addresses $va$ that can be read
(written) and that is mapped to $pa$ according with the current page tables.
\begin{definition}[Write-derivability]\label{def:mmu-der} We say that a state $\ArmState'$ is \emph{write-derivable} from a state $\ArmState$ in privilege level $\PLm$ if their memories differ only for physical addresses that are writable in $\PLm$. 
  $$wd(\sigma, \sigma', \PLm) \ \Leftrightarrow\ \forall pa.\ \sigma.\mathit{mem}(pa) \neq \sigma'.\mathit{mem}(pa) \Rightarrow \mathit{mmu}_{ph}(\ArmState, \PLm, pa, wt)\ .$$
\end{definition}
According with the MMU configuration in $\sigma$, only a subset of physical
addresses are writable ($\mathit{mmu}_{ph}(\ArmState, \PLm, pa, wt)$).
Write-derivability
identifies the set of states that can be
produced by changing the memory content of an arbitrary number of
such physical addresses with arbitrary values.
\begin{definition}[MMU-equivalence] We say that two states are \textit{MMU-equivalent} if for any virtual
address $va$ the MMU yields the same translation and the same access
permissions.
$$\ArmState \equiv_{\mathit{mmu}} \ArmState' \ \Leftrightarrow\ \forall va, \PLm, \req.\ mmu(\ArmState, \PLm, va, \req) = mmu(\ArmState', \PLm, va, \req)$$
\end{definition}
Informally, two states are \textit{MMU-equivalent} if their MMUs are configured exactly in
the same way.
\begin{definition}[MMU-safety]\label{def:mmu-safe} Finally, a state is \textit{MMU-safe} if it has the same MMU behaviour as any state with the same coprocessor registers whose memory differs only for addresses that are writable in $\UserMode$.
$$\mathit{mmu}_s(\sigma) \ \Leftrightarrow\ \forall \sigma'.\ \sigma.\coregs=\sigma'.\coregs \land wd(\sigma, \sigma', \UserMode) \Rightarrow (\sigma \equiv_{\mathit{mmu}} \sigma')$$
\end{definition}
A state is \textit{MMU-safe} if there is no way to change the MMU configuration
(i.e. the page tables) by writing into addresses that are writable in non-privileged
mode. That is the MMU configuration prevents non-privileged SW to tamper the
page tables.

An ARMv7 processor that obeys the access privileges computed by the MMU satisfies the following two properties:
\begin{property}[ARM-integrity]\label{prop:User-No-Exfiltration}
  Assume $\sigma \in \RealStateSpace$ with $\mode(\ArmState) = \UserMode$.
  If $\sigma \to_{\UserMode} \sigma'$ and $\mathit{mmu}_s(\sigma)$
  then
  $wd(\ArmState, \ArmState', \UserMode)$ and $\ArmState.\coregs= \ArmState'.\coregs$, i.e., unprivileged steps from MMU-safe states can only lead into write-derivable states and do not affect the coprocessor registers.
\end{property}

Note, that the MMU-safety prerequisite is not redundant here, because single instructions in ARM may result in a series of write operations, e.g., for ``store pair'' and unaligned store instructions. If the MMU configuration was not safe from manipulation in unprivileged mode, then such a series of writes could lead to an intermediate MMU configuration granting more write permissions than the initial one and the resulting state would not be write-derivable from $\ArmState$.

\begin{property}[ARM-confidentiality]\label{prop:User-No-Infiltration}
  Let $\sigma_1, \sigma_2 \in \RealStateSpace$ with $\mode(\ArmState_1) =
  \mode(\ArmState_2) = \UserMode$, and let $A$ contains all physical addresses
  accessible in $\ArmState_1$, i.e.,  $A \supseteq \{ pa \mid \exists \req.\
  \mathit{mmu}_{ph}(\sigma_1, \UserMode,  pa, \req) \}$.
  Suppose that $\sigma_1.\uregs = \sigma_2.\uregs$, 
  $\sigma_1.\coregs= \sigma_2.\coregs$,
  $\sigma_1 \equiv_{\mathit{mmu}} \sigma_2$,
and $\forall pa\in A.\ \sigma_1.\mathit{mem}(pa) = \sigma_2.\mathit{mem}(pa)$.
  If $\sigma_1 \to_{\UserMode} \sigma'_1$, there exists $\sigma'_2$ such that
  $\sigma_2 \to_{\UserMode} \sigma'_2$, $\mathit{mmu}_s(\sigma_1)$, and $\mathit{mmu}_s(\sigma_2)$
  then
  $$\sigma'_1.\uregs = \sigma'_2.\uregs\ , 
  \sigma'_1.\coregs= \sigma'_2.\coregs
\ \text{and}\ \ \forall pa \in A.\ \sigma'_1.\mathit{mem}(pa) = \sigma'_2.\mathit{mem}(pa)\ .$$
\end{property}
Intuitively, Property~\ref{prop:User-No-Infiltration} establishes that in MMU-safe configurations unprivileged transitions only can access information stored in the registers and in the part of memory that is readable in $\UserMode$ according to access permissions. 
Within this paper we take Properties~\ref{prop:User-No-Exfiltration} and \ref {prop:User-No-Infiltration} for granted. In~\cite{khakpour2013machine} the authors validated the HOL4 ARMv7
model against these properties assuming an identity-mapped address translation. Extending the result for an arbitrary but MMU-safe page table setup is currently nearing completion.


\newcommand{\GuestMemType}{0}
\newcommand{\SecMemType}{1}
\newcommand{\DataType}{\textit{data}}
\newcommand{\LTwoType}{\textit{L2}}
\newcommand{\LOneType}{\textit{L1}}

\section{The Memory Virtualization API}\label{sec:architecture}
 
%
%

The memory virtualization API is designed for the ARMv7-A architecture\footnote{In practice,
  the presented design also supports the ARMv6 and ARMv5
  architectures. } and assumes neither hardware  virtualization extensions nor TrustZone~\cite{arm:trustzone}
support. 
To properly isolate the trusted components from the untrusted guest, which hosts a commodity OS, the memory virtualization subsystem needs to provide two main functionalities:
\begin{itemize}
  \item Isolation of memory resources used by the trusted components.
  \item Virtualization of the memory subsystem to enable the untrusted OS to dynamically manage its own memory hierarchy, and to enforce access restrictions.
\end{itemize}
The physical memory region allocated to each type of client is statically defined.  Inside its own region the guest OS is free to manage its own memory, and the virtualization API does not provide any additional guarantees for the security of the guest OS kernel against attacks from its user processes.
However, using trusted services such as a run-time monitor it is possible to provide provable security guarantees to the guest OS, for instance to enforce the W$\oplus$X policy or to secure software updates, as explained in Section~\ref{sec:applications}.

\begin{figure}[h]
  \begin{center}
  \includegraphics[width=6cm]{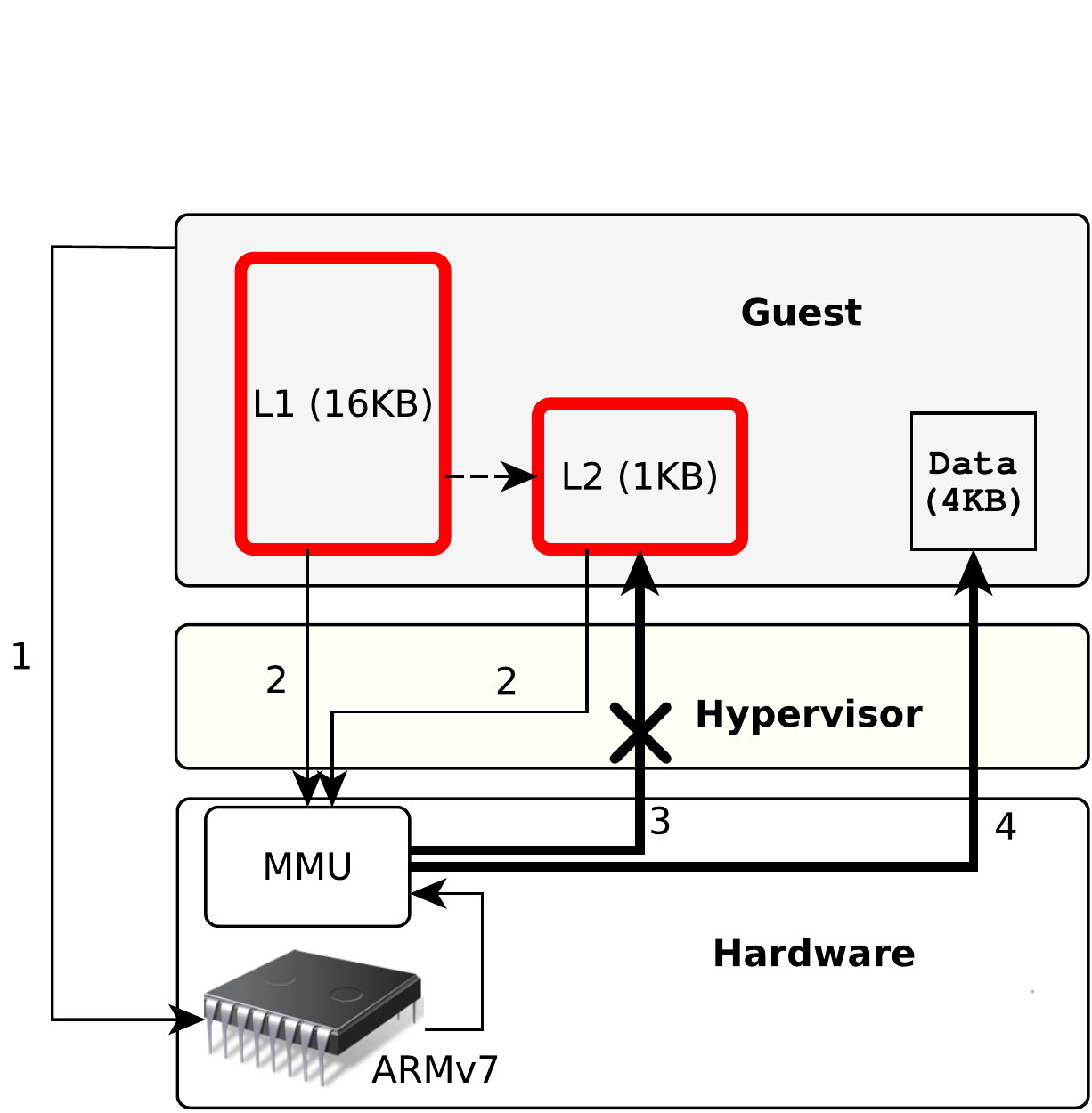}
  \end{center}
\caption{Direct paging:
  1) guest writes to virtual memory are mediated by the MMU as usual;
  2) page tables are allocated in guest memory;
  3) the hypervisor prevents writable mappings to guest memory regions holding
  page tables, forbidding the guest to directly modify them;
  4) the hypervisor allows writable mappings to data blocks in guest memory. 
 }
\label{fig:directPaging}
\end{figure}

\subsection{Memory Management}
The virtual memory layout is defined by a set of page tables that reside in physical memory. 
The configuration of these page tables is security-critical and must not be directly manipulated by untrusted parties.
At the same time, the untrusted Linux kernel needs to manage its memory layout, which requires constant access to the page tables.
Hence the hypervisor must provide a secure access mechanism, which we refer to as virtualizing the memory subsystem.

We use direct paging~\cite{xen} to virtualize the memory subsystem. 
Direct paging allows the guest to allocate the page tables inside its
own memory and to directly manipulate them while the tables are not in active use by the MMU.
Once the page tables are activated, the hypervisor must guarantee that
further updates are possible only via the virtualization API to
modify, allocate and free the page tables.

Physical memory is fragmented into  blocks of 4 KB. Thus, a 32-bit
architecture has $2^{20}$ physical blocks.
Since L1 and L2 page tables have size 16 KB and 1 KB respectively,
 an L1 page table is stored in four
contiguous physical blocks and a physical block can
contain four L2 page tables. 
We assign a type to each physical block, that can be:
 \begin{itemize}
  \item \textit{data}: the block can be written by the guest.
  \item \textit{L1}: contains part of an L1 and is not writable in unprivileged mode.
  \item \textit{L2}: contains four L2 and is not writable in unprivileged mode.
 \end{itemize}
The virtualization API shown in Figure~\ref{APIfig} is very similar to the MMU interface of the Secure Virtual Architecture \cite{criswell2007secure}
and consists of 9 hypercalls that select, create, free, map, or unmap memory blocks or page tables.

Figure \ref{fig:directPaging} indicates the address translation procedure and the connection between components of memory subsystem.
\begin{figure}
\centering
\begin{tabular}{|l|l|}
\hline
\textit{switch} & Select the active L1\\
\hline
\textit{L1create} & Create page table of type L1 \\
\hline
\textit{L2create} & Create page table of type L2 \\
\hline
\textit{L1free} & Change the type of an L1 block to \DataType \\
\hline
\textit{L2free} &  Change the type of an L2 block to \DataType \\
\hline
\textit{L1unmap} & Clear an entry of an L1 page table \\
\hline
\textit{L2unmap} & Clear an entry of an L2 page table \\
\hline
\textit{L1map}  & Set an entry of an L1 page table \\
\hline
\textit{L2map} & Set an entry of an L2 page table \\ \hline
\end{tabular}
\caption{The virtualization API of the hypervisor to support direct paging.}
\label{APIfig}
\end{figure}
\subsection{Enforcing The Page Type Constraints}\label{sec:arch:constraints}
Each API call needs to validate the page type,
 guaranteeing that page tables are write-protected.
This is illustrated in Figure~\ref{fig:invariant}.
\begin{figure}
  \centering
  \raisebox{-0.5\height}{\includegraphics[width=0.5\linewidth]{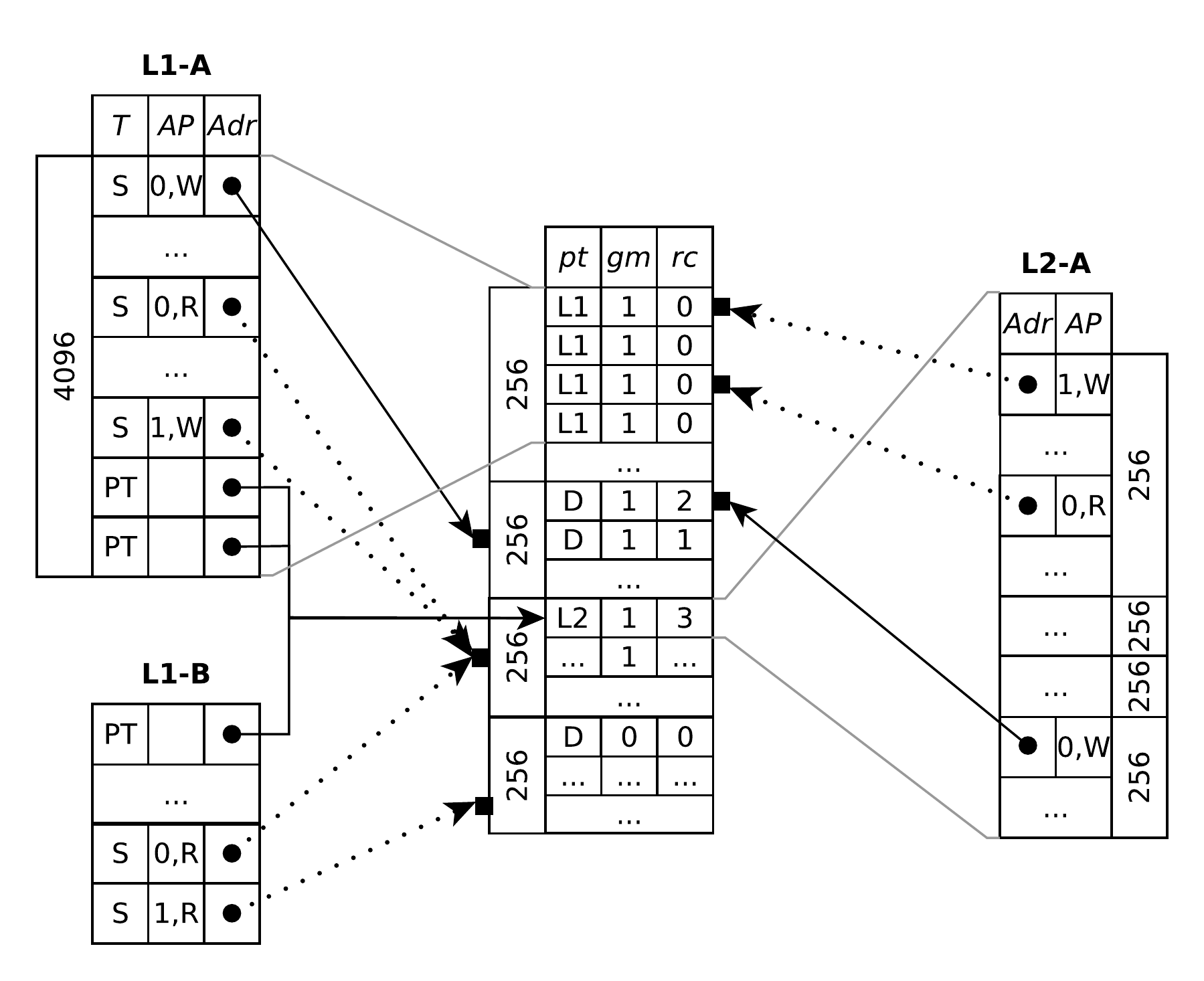}}
  ~
  \raisebox{-0.5\height}{\includegraphics[width=0.47\linewidth]{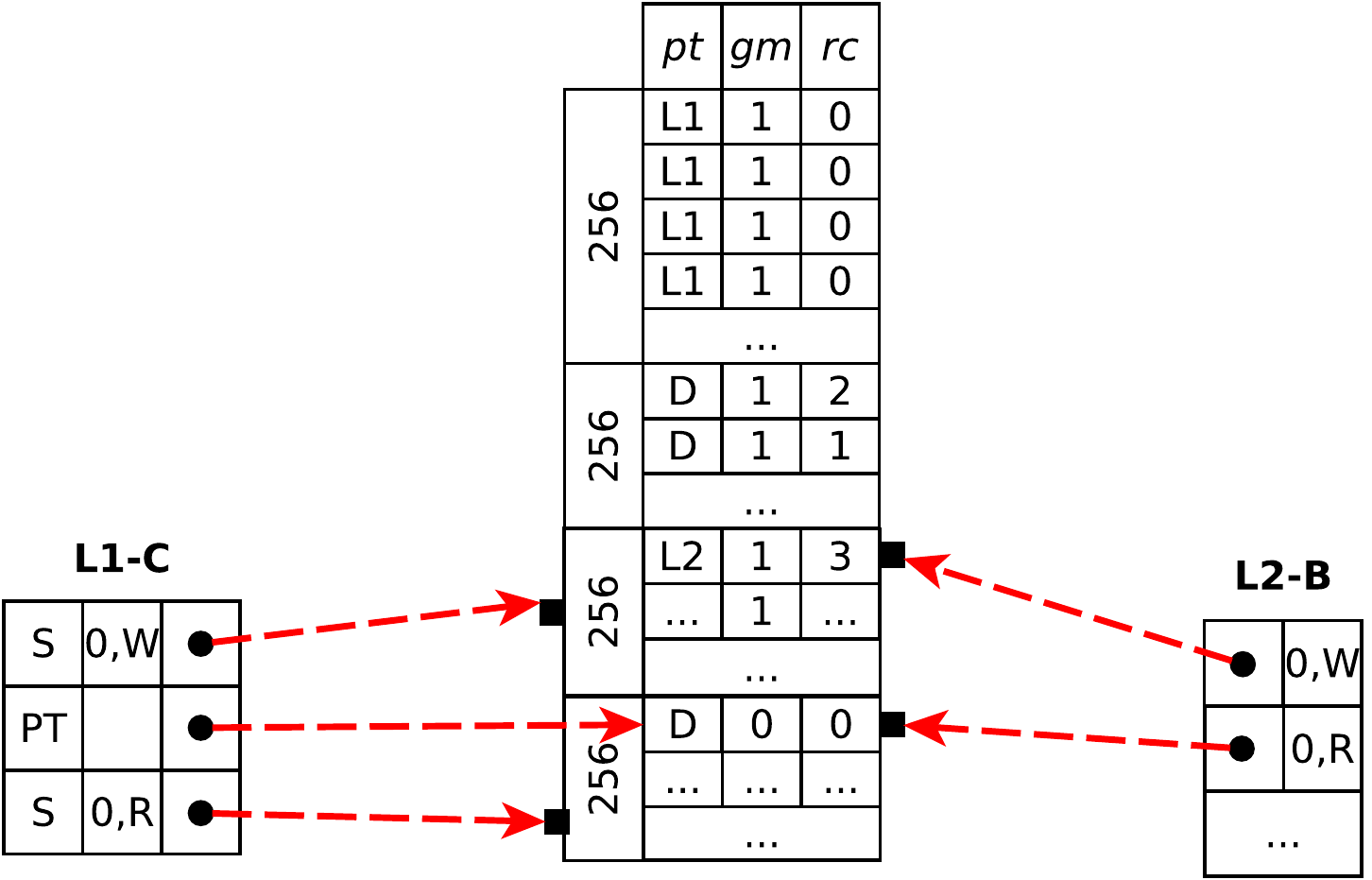}}
\caption{Direct-paging mechanism. We use solid arrows to represent the
L2 page table references and unprivileged write permissions, dotted arrows to represent other allowed
references, and dashed arrows for references violating the page table policy.
}
\label{fig:invariant}
\end{figure}
The table in the centre represents the physical memory and stores the
virtualization data structures  for each physical block;
the page type ($pt$), a flag informing if the block is allocated to
guest memory ($gm$) and a reference counter ($rc$). 

The four top most blocks contain an L1
page table, whose 4096 entries
are depicted by the table \textit{L1-A}. The top entry of the page
table is a section descriptor ($T=S$) that grants write permission to the guest
($\mathit{AP} = (\GuestMemType, w)$).
This entry's address component ($\mathit{Adr}$) points to the second physical section, which consists of
256 physical blocks. Two more section descriptors of L1-A are depicted in
the table: the first one grants read-only permission to the guest ($\GuestMemType, r$), the second descriptor prevents any
guest access and enables write permission for the privileged mode
($\SecMemType, w$).
The last two entries of L1-A are PT-descriptors. Each entry points
to an L2 page table in the same physical block described by table L2-A and containing four L2 page tables.
 
The API calls manipulating an L1 enforce the following policy:
\begin{quotation}Any section descriptor that allows the guest to access the memory
must point to a section for which every physical block resides in the guest memory
space.
Moreover, if a descriptor  enables a guest to write then each block must be typed $\DataType$. Finally, all PT-descriptors must point to 
physical blocks of type L2.
\end{quotation}

The Figure depicts two additional L1 page tables; \textit{L1-B} and
\textit{L1-C}.
The type of a physical block containing \textit{L1-B} can be
transformed to \LOneType~by invoking \textit{L1create}.
On the other hand, a block containing \textit{L1-C}
 is rejected by \textit{L1create} since the block contains three entries that violate the policy.
In fact,
\begin{itemize}
\item the first descriptor
grants guest write permission over a section which has at least one non
data block, in this case L2,
\item the second section descriptor allows the guest to access a
section of the physical memory in which there exists a block that is outside
the guest memory,
and 
\item the third entry is a PT-descriptor, but points to a physical
block that is not typed L2.
\end{itemize}
The first setting clearly breaks MMU-safety, since the guest is now able to write directly to a page table, circumventing the complete mediation of MMU configurations by the hypervisor. The second situation compromises confidentiality and possible integrity of the system if the guest has write access to the block outside its own memory. The third issue may again break MMU-safety if the referenced block is a writable data block. In case the referenced block contains (part of) another L1 page table this setting can lead to unpredictable MMU behaviour, since the L1 page table entries have a different binary format than the expected L2 entries.

The table \textit{L2-A} depicts the content of a physical block typed
L2 that contains four L2 page tables, each consisting of 256 entries.
Each hypercall that manipulates an L2 enforces the following policy:
\begin{quotation}If any entry of the four L2 page tables grants access permission to the
guest then the pointed block must be in the guest memory.
If the entry also enables guest write access then the pointed
block must be typed $\DataType$. \end{quotation}
For example a block containing \textit{L2-B} is rejected by
\textit{L2create},
 since the block contains at least two entries that violate the policy and thus threaten MMU-safety and integrity (in case of the first entry shown) as well as confidentiality (in case of the second one).

A naive run-time check of the page-type policy is not efficient, since
it requires to re-validate the L1 page table whenever the
\textit{switch} hypercall is invoked.
To efficiently enforce that only $\DataType$ blocks 
can be written by the guest, the hypervisor maintains a reference counter,
tracking for each block the sum of:
\begin{enumerate}
  \item The number of descriptors
    providing writable access in user mode to the block.
  \item The number of PT-descriptors that point
     to the block.
\end{enumerate}
 The intuition is that a hypercall can change the type of a
 physical block (e.g. allocate or free a page table) only if the
 corresponding reference counter is zero.
 Lemmas~\ref{lem:inv-sound-type} and ~\ref{Lemma:ref_cnt_counters} in Section
\ref{sec:tlsproof}  demonstrate that this approach is sound and that the page table policy described above is sufficient to guarantee MMU-safety.

\begin{figure}
  \begin{center}
  \includegraphics[width=0.6\linewidth]{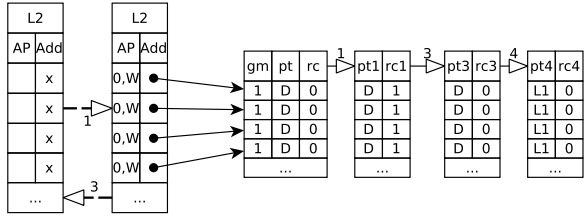}
  \end{center}
\caption{Spawning a new process using the virtualization API. The guest (1) requests a writable mapping of four physical blocks to allocate a new L1 page table. After (2) setting up the table in memory (not shown), it asks (3) to remove the writable mapping, (4) to interpret the blocks as an L1 table, and (5) to make this one the new active L1 page table (not shown).}
\label{fig:example}
\end{figure}
In Figure~\ref{fig:example} we exemplify how an OS can use the API to
spawn a new process. The OS selects four blocks from
its physical memory to allocate a new L1 page table. 
We assume that initially the OS has no virtual
mapping that enables it to access this part of the memory (i.e. the
reference counter $rc$ of these blocks is zero and the type $pt$
is \textit{data}).
\begin{enumerate}
\item Using \textit{L2map}, the OS requests to change an existing L2 page
table, establishing a writable mapping to the four blocks. The hypercall
increases the reference counter accordingly (i.e. $rc1=1$).
\item Without any mediation of the hypervisor, the OS uses the new
mapping to write the content of the new L1
page table.
\item Using \textit{L2unmap}, the guest removes the mapping established
in (1) and decreases the reference counters
(i.e. $rc3=0$).
\item The guest invokes \textit{L1create}, requesting the page table to be validated and 
the block type changed to L1.
The request is granted only if the reference counter is zero, guaranteeing that
there does not exist any mapping in the system that allows the guest to
directly write the content of the page table.
\item Finally, the OS invokes \textit{switch} to perform the context
switch and to activate the new L1.
\end{enumerate}
The example demonstrates some of the principles used to design
the minimal API:
(i) the address of the page tables are chosen by the
guest, thus we do not need to change the OS allocators,
(ii) the preparation of the page table can be done by the OS without
mediation of the hypervisor,
(iii) the content of the page table is not copied into the hypervisor
memory, 
thus reducing memory accesses and memory overhead and
not requiring dynamic allocation in the
hypervisor, 
(iv) tracking the reference counter is used to guarantee the absence of page tables
granting the guest write access to another page table, thus we can allow context
    switches among all created L1s without needing to re-validate
    their content.

\subsection{Integrity of the Hypervisor Memory Map}
When an exception is raised, the CPU redirects execution flow to a fixed location according to the exception vector.
In ARMv7, subsequent instructions are executed in privileged mode but under the same virtual memory mapping as the interrupted guest. 
The hypervisor must enforce that the memory mapping of the exception vector, handler code, and hypervisor data structures is
accessible during an exception without being modifiable by the guests. To this end, the hypervisor maintains its own static virtual memory mapping in a master page table and mirrors the corresponding regions to all L1s of the guest (with restricted access permissions).

  
%

\subsection{Hypervisor Accesses to Guest Page Tables}
The hypervisor APIs must be able to read and write the page tables
allocated by the guest, in order to check the
soundness of the requests and to apply the corresponding changes. 
The naive solution requires the hypervisor to change the current page
table, enabling a hypervisor master page table whenever the guest
memory must be accessed  and then re-enabling the original page
table before the guest is restored.
This solution is expensive as it requires to flush TLB and
caches.
A solution tailored for Unixes can rely on the injective
mapping built by the guest, which can be used by the hypervisor
to access the guest kernel memory. In our settings the hosted
guest is not trusted, thus this solution cannot guarantee that the
injective mapping is obeyed by the guest.
Some ARMv7 CPUs support special coprocessor instructions for virtual-to-physical
address translation. These instructions can be used to validate the
guest injective mapping  at run-time. 
However, this approach is platform dependent and can result in
  nested exceptions that complicate the architecture and verification
  of the hypervisor.
Instead, our design reserves a subset of the virtual 
address space for hypervisor use. The hypervisor master page table
is built so that this address space is always mapped according to an
injective translation (1-to-1) allowing the hypervisor 
 to easily compute the virtual address for each physical address in the guest memory, similar to the direct memory maps supported by FreeBSD \cite{mckusick2004design} and Linux \cite{bovet2005understanding}. 
As with the hypervisor code and data structures, 
 these regions are mirrored in all guest L1 tables.

  

\subsection{Memory Model and Cache Effects}\label{sec:cache}
Hypervisors are complex software interacting directly with many low level hardware components, like processor, MMU, etc. 
Furthermore, there are hardware pieces that, while being invisible to the software layer, still can affect the system behaviour in many aspects.
For example, the memory management unit relies on a caching mechanism, which is used to speed up accesses to page table entries. Basically, a data-cache is a shared resource between all partitions and it thus affects and is affected by activities of each partition. 
Consequently, data-caches may cause unintended interaction between software components running on the
same processor, which can lead to cross-partition information leakage. 

Moreover, for the ARMv7 architecture cache usage may cause sequential consistency to fail if the same physical address is accessed using different cacheability
attributes. This opens up for TOCTTOU\footnote{TOCTTOU -- Time Of Check To Time Of Use}-like vulnerabilities since a trusted agent may
check and later evict a cached data item, which is subsequently substituted by an unchecked item placed in the main memory using an 
uncacheable alias. Furthermore, an untrusted agent can similarly use uncacheable address aliasing to easily measure which lines of the cache are evicted. This results in storage channels that are not visible in information flow analyses performed at the ISA level.

As an example (Figure~\ref{cache:attacks}), 
the guest can use an uncacheable virtual alias of a page table entry in physical memory to bypass the page type constraints and install a potentially harmful page table.
If the cache contains a valid page table entry PTE A for the physical address from a previous check by the hypervisor and this cache entry is clean (i.e., it will not be written back to memory upon eviction),
the guest can (1) store an invalid (i.e. violating the page table policy) page table entry PTE B in a data page and
(2) request the data page to become a page table.
If the guest write is (3) directly applied to the memory, bypassing the cache using
a uncacheable virtual address, and (4) the
hypervisor accesses the same physical location through the cache, then the
hypervisor potentially validates stale data (5).
At a later point in time, the validated data PTE A is evicted from the cache and not written back to memory since it is clean. Then (6) the MMU will use the invalid page table
containing PTE B instead and its settings become untrusted.

\begin{figure}
{\small
  
  \begin{center}
\includegraphics[width=0.6\linewidth]{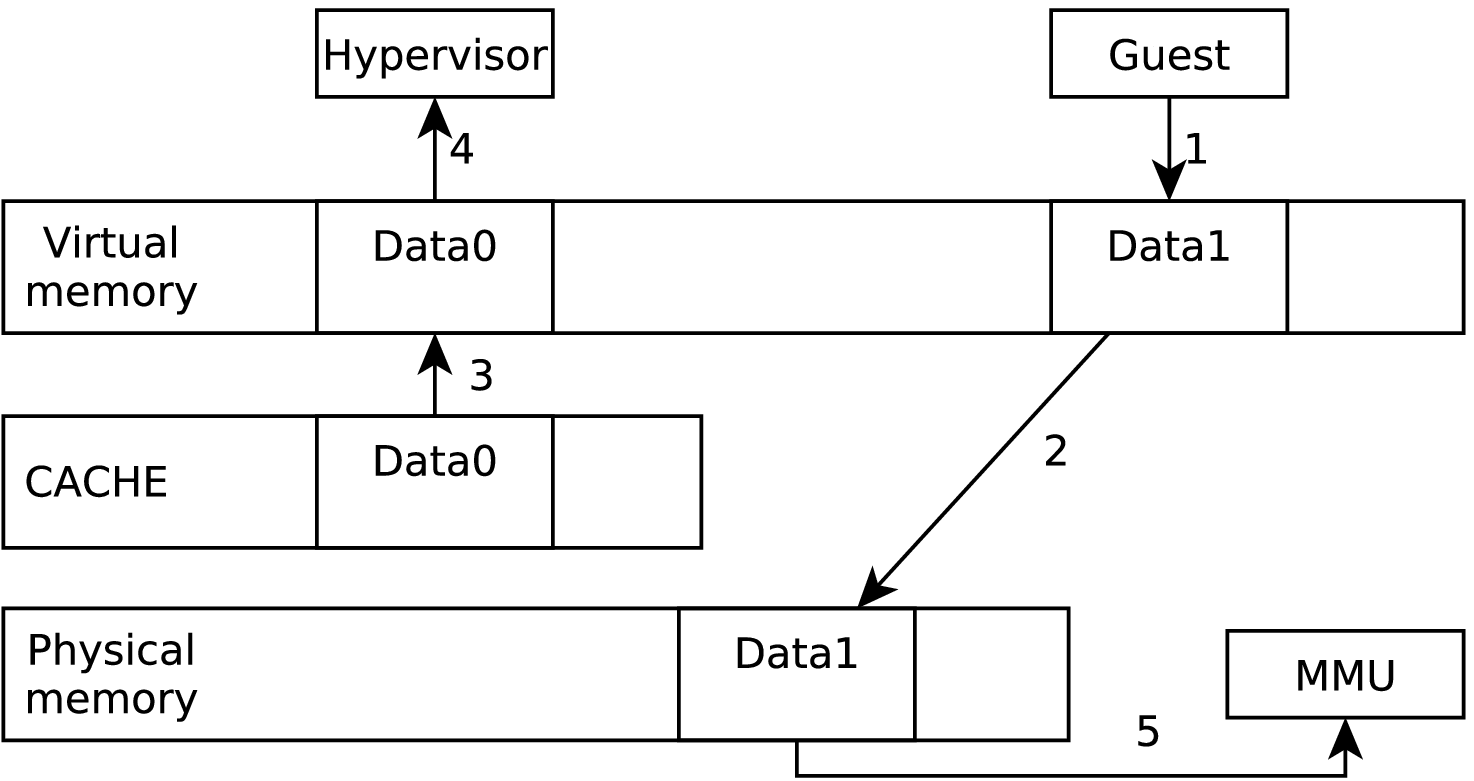}
  \end{center}
  \caption{Integrity threat due to incoherent memory caused by mismatched cacheability attributes. PTE A is validated by the hypervisor (4) but PTE B will be used as a page table entry for the guest (6).}
  \label{cache:attacks}
}
\end{figure}

This kind of behaviour undermines the properties assured by formal verification
that assumes a sequentially consistent model. In this paper, to counter this threat
we use a naive solution; we prevent memory incoherence by cleaning the
complete cache before accessing data stored by the guest. 
Clearly, this can introduce a substantial performance overhead, as shown in
Section~\ref{sec:benchmark}. In~\cite{DBLP:conf/sp/GuancialeNBD16}, we demonstrate more efficient
countermeasures to such threats and propose techniques to fix the verification. 


\section{Formalizing the Proof Goals}\label{sec:goal}


\subsection{The Top Level Specification}~\label{sec:tls}
A state of the Top Level Specification (TLS) is a tuple $\tuple{\ArmState, \HypState}$, 
consisting of an ARMv7 state $\ArmState$  and an abstract hypervisor state $\HypState$.
An abstract hypervisor state  has
the form $\tuple{\pgtype, \pgrefs}$ where $\pgtype$ indicates memory block types and $\pgrefs$ maintains
reference counters. Specifically,  $\pgtype \in 2^{20} \rightarrow \{D,L1,L2\}$ tracks the type of each
4 KB physical block; a block can either be (D) memory writable from the guest or data page, (L1) contain a L1 page table or (L2) contain a L2 page table.
The map $\pgrefs \in 2^{20} \rightarrow 2^{30}$ tracks the references to each physical block, as described in Section~\ref{sec:architecture}.

The TLS interleaves standard unprivileged transitions with abstract handler invocations. Formally, the TLS transition relation $\tuple{\sigma, h} \to_{i \in \{0,1\}} \tuple{\sigma', h'}$ is defined as follows:
\begin{itemize}
  \item If $\sigma \to_{\UserMode} \sigma'$  then $\tuple{\sigma, h} \to_0 \tuple{\sigma', h}$;
    instructions executed in unprivileged mode that do not
    raise exceptions behave equivalently to the standard ARMv7
    semantics and do not affect the abstract hypervisor state.
  \item If $\sigma \to_{\KernelMode} \sigma'$ and $\priv(\sigma) =
    \UserMode$ then $\tuple{\sigma, h} \to_1 H_a(\tuple{\sigma', h})$;
    whenever an exception is raised, the hypervisor is
    executed, modelled by the abstract handler $H_a$. The abstract handler
    always yields a state whose execution mode is unprivileged.
\end{itemize}
In our setup the trusted services and the untrusted guest are both executed in unprivileged mode.
To distinguish between these two partitions, we use ARM domains. 
We reserve the domains 2-15 for the secure services.

\begin{definition}[Secure services]
  Let $\sigma \in \Sigma$, the predicate $\Sec(\sigma)$ identifies if the active partition is the one hosting the secure services: the predicate holds if at least one of the reserved domains (2-15) is enabled in the coprocessor registers $\mathit{\coregs}$ of $\sigma$.
\end{definition}

%
%
Intuitively, guaranteeing spatial isolation means confining the guest
to manage a part of the physical memory available for the guest uses. 
In our setting, this part is determined statically and identified by the set of physical addresses $G_m$. 
Clearly, no security property can be guaranteed if the system
starts from a insecure state; for example the guest must not be allowed to change the MMU behaviour by directly writing the page tables.
For this reason we introduce a system invariant $I(\tuple{\ArmState,
  h})$ that is used to constrain the set of secure initial states
of the TLS. The set of all possible TLS
states that satisfy the invariant is denoted by $\IdealStateSpace_I$. Then one needs to show:
\begin{theorem}[Invariant preserved]\label{lem:invariant}
  Let $\tuple{\sigma,h} \in \IdealStateSpace_I$ and $i\in\{0,1\}$.
  If 
  $\tuple{\sigma,h} \to_i \tuple{\sigma',h'}$
  then
  $I(\tuple{\sigma',h'})$.
\end{theorem}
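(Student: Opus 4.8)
The plan is to proceed by case analysis on the transition label $i \in \{0,1\}$, mirroring the two clauses of the TLS transition relation. For the unprivileged case $i=0$ we have $\sigma \to_{\UserMode} \sigma'$ and $h' = h$, so the goal $I(\tuple{\sigma',h'})$ reduces to showing that each conjunct of the invariant that constrains $\sigma$ is stable under an arbitrary user-mode step, given that it held in $\tuple{\sigma,h}$. The key observation here is that $I$ should entail $\mathit{mmu}_s(\sigma)$ (MMU-safety): the page-type discipline and the reference-counter bookkeeping together guarantee that no unprivileged-writable address holds part of an active page table. With MMU-safety in hand, Property~\ref{prop:User-No-Exfiltration} (ARM-integrity) gives $wd(\sigma,\sigma',\UserMode)$ and $\sigma.\coregs = \sigma'.\coregs$. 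So the coprocessor registers are untouched, the MMU configuration is preserved (since only non-writable addresses determine it, and those are unchanged by write-derivability from an MMU-safe state), and the only memory that can have changed lies in data blocks. Since $h' = h$, the type map and reference counters are literally unchanged, so every invariant clause relating block types, reference counters, the guest region $G_m$, the master page table mirroring, and the injective hypervisor mapping still holds; one just has to check that each such clause only ever refers to memory at non-data-block addresses or to $\sigma.\coregs$, both untouched.

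The substantial case is $i = 1$: $\sigma \to_{\KernelMode} \sigma'$ with $\priv(\sigma) = \UserMode$, and $\tuple{\sigma',h'} = H_a(\tuple{\sigma',h})$ — wait, more precisely $\tuple{\sigma,h} \to_1 H_a(\tuple{\sigma',h})$ where $\sigma'$ is the post-exception state. Here the proof must descend into the definition of the abstract handler $H_a$, which dispatches on which exception occurred and, in the hypercall case, on which of the nine API calls (\textit{switch}, \textit{L1create}, \textit{L2create}, \textit{L1free}, \textit{L2free}, \textit{L1map}, \textit{L2map}, \textit{L1unmap}, \textit{L2unmap}) was requested. For each of these I would show: (a) the validation checks performed by the handler exactly capture the page-type policy stated in Section~\ref{sec:arch:constraints} — i.e. a request is approved only when granting it keeps all section/PT-descriptors consistent with block types, keeps guest-writable targets inside $G_m$ and typed $\DataType$, and keeps PT-descriptors pointing at $\LTwoType$ blocks; (b) when a request is approved, the simultaneous update of $\pgtype$, $\pgrefs$, and the page-table memory re-establishes every conjunct of $I$; in particular the reference-counter clause (a block's $rc$ equals the number of user-writable descriptors plus PT-descriptors pointing to it) is maintained because each \textit{map}/\textit{unmap} adjusts $rc$ by exactly the delta it causes, and each \textit{create}/\textit{free} is gated on $rc = 0$ so no stale mapping survives; (c) when a request is rejected, nothing changes and $I$ trivially persists; (d) the handler always returns to unprivileged mode and restores the hypervisor's own mirrored mappings and 1-to-1 region, so the "hypervisor memory map integrity" conjuncts hold. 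The exception-but-not-hypercall cases (undefined instruction, data abort, hardware IRQ routed to the guest) are easier: $H_a$ does not touch $\pgtype$ or $\pgrefs$ and only manipulates banked registers and the guest's own state, so the argument resembles the $i=0$ case.

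The main obstacle is step (b) for the \textit{map} and \textit{free} families, specifically keeping the reference-counter invariant and the page-type invariant in sync across an update that simultaneously changes memory (a page-table entry) and the auxiliary state. One has to be careful that the arithmetic on $\pgrefs$ (values in $2^{30}$) does not overflow — this is exactly one of the bug classes the real verification uncovered — and that the "self-referencing page table" corner case is excluded, since a page table that maps itself writably would both pass a naive check and violate MMU-safety. I expect the proof to lean on two auxiliary lemmas foreshadowed in the text, namely the soundness-of-type lemma and the reference-counter characterization lemma (Lemmas~\ref{lem:inv-sound-type} and~\ref{Lemma:ref_cnt_counters}), which package the nontrivial reasoning that the reference counter correctly over-approximates the existence of dangerous mappings; with those in hand each handler case becomes a mechanical but lengthy check in HOL4. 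Termination/totality of $H_a$ is not an issue since $H_a$ is defined as a total function on abstract states.
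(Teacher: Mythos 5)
Your proposal matches the paper's proof essentially exactly: the paper also splits the theorem into Lemma~\ref{lem:inv-usr} (guest transitions, discharged via Lemma~\ref{lem:mmu-safe} and Property~\ref{prop:User-No-Exfiltration}, noting that the invariant depends only on non-$\Data$ blocks and the unchanged $h$) and Lemma~\ref{lem:inv-hyper} (case analysis on the abstract handler $H_a$, first re-establishing the invariant at the intermediate exception state and then relying on Lemmas~\ref{lem:inv-sound-type}, \ref{Lemma:ref_cnt_counters}, and~\ref{Lemma:unchanged_counters} to localise the check to the blocks actually touched). No gaps; your identification of the reference-counter-zero gating and the self-reference corner case as the crux of the \textit{map}/\textit{create}/\textit{free} cases is precisely what those supporting lemmas encapsulate.
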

Section~\ref{sec:tlsproof} elaborates the definition of the invariant and
summarises the proof of the Theorem.

Complete mediation (MMU-integrity) is demonstrated by showing
that neither the guest nor the secure services are able to directly change the content of the page tables and affect the address translation mechanism.
\begin{theorem}[MMU-integrity]\label{lem:mmu-integrity}
  Let $\tuple{\sigma,h} \in \IdealStateSpace_I$.
  If $\tuple{\sigma,h} \to_0 \tuple{\sigma',h'}$
  then
  $\sigma \equiv_{mmu} \sigma'$.
\end{theorem}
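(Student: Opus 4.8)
The plan is to reduce the statement to the ARM-integrity property (Property~\ref{prop:User-No-Exfiltration}) together with the MMU-safety definition (Definition~\ref{def:mmu-safe}), using the invariant $I$ as the bridge. First I would unfold the TLS transition $\tuple{\sigma,h} \to_0 \tuple{\sigma',h'}$: by definition of $\to_0$ this transition comes from an unprivileged step $\sigma \to_{\UserMode} \sigma'$ of the underlying ARMv7 machine, and moreover $h' = h$, so the abstract hypervisor state plays no role here. Thus the goal collapses to a pure statement about the real machine: from a state $\sigma$ satisfying the invariant, an unprivileged step cannot alter the MMU behaviour, i.e.\ $\sigma \equiv_{mmu} \sigma'$.

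Next I would extract from the invariant $I(\tuple{\sigma,h})$ the fact that $\sigma$ is MMU-safe, $\mathit{mmu}_s(\sigma)$. This is exactly the place where the page-type discipline of Section~\ref{sec:arch:constraints} pays off: the invariant must assert (and this is part of what Theorem~\ref{lem:invariant} maintains) that no descriptor reachable by the guest grants unprivileged write access to any block of type $L1$ or $L2$, and that reference counters correctly track such mappings — so that no address writable in $\UserMode$ lies inside a page table. Lemmas~\ref{lem:inv-sound-type} and~\ref{Lemma:ref_cnt_counters} are precisely what connects the invariant to $\mathit{mmu}_s(\sigma)$. Given $\mathit{mmu}_s(\sigma)$ and the unprivileged step, Property~\ref{prop:User-No-Exfiltration} yields $wd(\sigma,\sigma',\UserMode)$ and $\sigma.\coregs = \sigma'.\coregs$. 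Then Definition~\ref{def:mmu-safe} applied to $\sigma$, instantiated with $\sigma'$ (which has the same coprocessor registers and whose memory differs from $\sigma$ only at $\UserMode$-writable addresses, by write-derivability), gives directly $\sigma \equiv_{mmu} \sigma'$, which is the claim.

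The main obstacle is not the logical skeleton above but establishing the implication $I(\tuple{\sigma,h}) \Rightarrow \mathit{mmu}_s(\sigma)$ with a model of the ARMv7 MMU detailed enough to be faithful. One has to argue that the only way the MMU translation could change under a memory write is by modifying an L1 or L2 page-table entry (or a coprocessor register, already excluded), and that the page-type invariant together with the reference-counter soundness forbids the guest from holding a writable virtual alias to any such entry — including subtle cases like self-referencing page tables and descriptors pointing just outside $G_m$, which the excerpt flags as sources of bugs found during verification. Handling all descriptor formats (section descriptors, PT-descriptors, the four L2 tables packed in one block, domain bits) exhaustively in HOL4 is the real work; once $\mathit{mmu}_s(\sigma)$ is in hand, the reduction to Property~\ref{prop:User-No-Exfiltration} and Definition~\ref{def:mmu-safe} is essentially immediate.
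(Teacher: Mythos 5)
Your proposal matches the paper's proof: unfold $\to_0$ to an unprivileged ARM step with $h'=h$, obtain $\mathit{mmu}_s(\sigma)$ from the invariant (the paper's Lemma~\ref{lem:mmu-safe}), apply Property~\ref{prop:User-No-Exfiltration} to get write-derivability and unchanged coprocessor registers, and conclude via Definition~\ref{def:mmu-safe}. The only quibble is attribution: the paper derives $I \Rightarrow \mathit{mmu}_s$ directly from the well-typedness part of the invariant rather than from Lemmas~\ref{lem:inv-sound-type} and~\ref{Lemma:ref_cnt_counters}, which serve the invariant-preservation proof for hypervisor transitions instead.
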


We use the approach of~\cite{Heitmeyer:2008:AFM:1340674.1340715} to
analyse the hypervisor data separation properties.
The observations of the guest in a state $\tuple{\sigma,h}$ is represented by the
structure $O_g(\tuple{\sigma,h}) = \tuple{ \sigma.\mathit{uregs},
\mathit{mem}_g(\sigma), \sigma.\coregs}$ of user registers,  
guest memory $\mathit{mem}_g(\sigma)$ (the restriction of $\sigma.\mathit{mem}$ to $G_m$), and coprocessor registers. The latter are visible to the guest since they directly
affect guest behaviour by controlling the address translation, and do not contain any
information the guest should not be allowed to see. 
Evidently, however, all writes to the coprocessor registers
must be mediated by the hypervisor.

The remaining part of the state (i.e. the content
of the memory locations that are not part of the guest memory, banked
registers) and, again, the coprocessor registers constitute the secure observations $O_s(\tuple{\sigma,h})$
of the state, which guest transitions are not supposed to affect.

The following theorem demonstrates that the context switch between the untrusted guest and the trusted services is not possible without the mediation of the hypervisor. The proof is straightforward, since $\Sec$ only depends on coprocessor registers that are not accessible in unprivileged mode.
\begin{theorem}[No context switch]\label{lem:switch}
  Let $\tuple{\sigma,h} \in \IdealStateSpace_I$.
  If $\tuple{\sigma,h} \to_0 \tuple{\sigma',h'}$
  then
  $\Sec(\sigma) = \Sec(\sigma')$.
\end{theorem}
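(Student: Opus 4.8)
The plan is to reduce the claim to the \emph{ARM-integrity} property of the underlying architecture (Property~\ref{prop:User-No-Exfiltration}) together with the observation that membership in $\Sec$ is a function of the coprocessor registers alone. First I would unfold the definition of the TLS transition $\tuple{\sigma,h}\to_0\tuple{\sigma',h'}$: by construction this is exactly an unprivileged ARMv7 step $\sigma\to_{\UserMode}\sigma'$ with $h'=h$, so in particular $\priv(\sigma)=\priv(\sigma')=\UserMode$ and there is no hypervisor activity to reason about; the abstract component $h$ plays no role since $\Sec$ inspects only $\sigma$.

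Next I would extract $\mathit{mmu}_s(\sigma)$ from the hypothesis $\tuple{\sigma,h}\in\IdealStateSpace_I$. The system invariant $I$ is precisely the predicate pinning down the secure states, and --- as is needed already to state complete mediation --- it entails MMU-safety of $\sigma$, i.e.\ that no unprivileged write can alter the page tables and hence the MMU configuration. With $\mathit{mmu}_s(\sigma)$ in hand, Property~\ref{prop:User-No-Exfiltration} applies to the step $\sigma\to_{\UserMode}\sigma'$ and yields $\sigma.\coregs=\sigma'.\coregs$: the unprivileged step cannot touch the coprocessor registers.

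It then remains only to unfold $\Sec$: by definition $\Sec(\sigma)$ holds iff one of the reserved domains (2--15) is enabled in $\sigma.\coregs$, and likewise for $\sigma'$. Since $\sigma.\coregs=\sigma'.\coregs$, the two evaluations of the domain-enable check coincide, so $\Sec(\sigma)=\Sec(\sigma')$, which closes the proof. In HOL4 this should amount to instantiating Property~\ref{prop:User-No-Exfiltration} and rewriting with the definitions of $\to_0$ and $\Sec$.

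The only real subtlety --- and the step I would be most careful about --- is the appeal to $\mathit{mmu}_s(\sigma)$: the MMU-safety prerequisite of Property~\ref{prop:User-No-Exfiltration} is genuinely needed, since a single ARM instruction may issue several stores and, absent safety, an intermediate page-table state could grant extra permissions that taint the resulting configuration. Hence the argument relies on MMU-safety being one of the conjuncts of the invariant $I$; given that, and given that invariant membership is assumed in the hypothesis (and preserved by Theorem~\ref{lem:invariant}), everything else is a one-line unfolding.
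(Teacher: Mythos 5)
Your proof is correct and is essentially the paper's argument spelled out: the paper dismisses this theorem in one line ("straightforward, since $\Sec$ only depends on coprocessor registers that are not accessible in unprivileged mode"), and the formal content of that remark is exactly your chain — unfold $\to_0$ to an unprivileged ARM step, obtain $\mathit{mmu}_s(\sigma)$ from the invariant via Lemma~\ref{lem:mmu-safe}, apply Property~\ref{prop:User-No-Exfiltration} to get $\sigma.\coregs=\sigma'.\coregs$, and conclude from the definition of $\Sec$. Your remark about why the $\mathit{mmu}_s$ hypothesis is genuinely needed mirrors the paper's own use of this lemma in the proofs of Theorems~\ref{lem:mmu-integrity} and~\ref{lem:No-Exfiltration}.
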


The \emph{no-exfiltration} property guarantees that 
a transition executed by the guest 
does not modify the secure resources:

\begin{theorem}[No-Exfiltration]\label{lem:No-Exfiltration}
  Let $\tuple{\sigma,h} \in \IdealStateSpace_I$.
  
  If $\tuple{\sigma,h} \to_0 \tuple{\sigma',h'}$ and $\neg \Sec(\sigma)$
  then
  $O_s(\tuple{\sigma,h}) = O_s(\tuple{\sigma',h'})$.
\end{theorem}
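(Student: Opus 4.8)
The plan is to exploit the fact that a $\to_0$ step is, by definition, an unprivileged ARMv7 transition $\ArmState \to_{\UserMode} \ArmState'$ that leaves the abstract hypervisor state untouched ($h' = h$), so that the only thing to check is that the three kinds of component making up $O_s$ --- the memory outside the guest region $G_m$, the banked registers, and the coprocessor registers --- are all preserved. Since $\tuple{\ArmState,\HypState}\in\IdealStateSpace_I$ the functional invariant $I(\tuple{\ArmState,\HypState})$ holds, and I would first extract from it the two structural facts I need: (i) the state is MMU-safe, $\mathit{mmu}_s(\ArmState)$, because the page-type policy keeps all active $L1$/$L2$ page tables inside $L1$/$L2$-typed blocks, which are never writable in $\UserMode$ (this is where soundness of the reference counter, i.e.\ the reference-counting lemmas of Section~\ref{sec:tlsproof}, is used); and (ii) using in addition $\neg\Sec(\ArmState)$, every physical address that is $\UserMode$-writable in $\ArmState$ lies in $G_m$. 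For (ii) the argument is: by the page-type policy any descriptor of the active tables that grants user-mode write access points to a $D$-typed block inside $G_m$; and since $\neg\Sec(\ArmState)$ the reserved domains $2$--$15$ are disabled, so $\mathit{mmu}$ --- which forbids accesses through an inactive domain --- does not grant the guest write access to anything behind those domains (in particular not to the trusted services' memory). This is exactly why the hypothesis $\neg\Sec(\ArmState)$ is needed: with the secure services active, a write to their own region would be permitted and the conclusion would fail.

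With (i) in hand, Property~\ref{prop:User-No-Exfiltration} (ARM-integrity) applied to $\ArmState \to_{\UserMode} \ArmState'$ yields at once $\ArmState.\coregs = \ArmState'.\coregs$, which disposes of the coprocessor-register component of $O_s$, and $wd(\ArmState,\ArmState',\UserMode)$, i.e.\ $\ArmState$ and $\ArmState'$ differ in memory only at addresses that are writable in $\UserMode$. Combining this with (ii), every memory address at which $\ArmState$ and $\ArmState'$ disagree lies in $G_m$, hence the restriction of the memory to the complement of $G_m$ --- which includes the hypervisor's own code and data as well as the trusted services' memory --- is identical in the two states. For the banked registers I would invoke the basic property of the HOL4 ARMv7 model that an instruction executed in $\UserMode$ cannot write the banked registers of the privileged modes, since these are simply not addressable from user mode; this is a fact about the relation $\to_{\UserMode}$. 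Together with $h' = h$, these three observations give $O_s(\tuple{\ArmState,\HypState}) = O_s(\tuple{\ArmState',\HypState'})$, which is the claim.

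The main obstacle is the pair of extractions (i)--(ii): phrasing the invariant $I$ so that it is simultaneously strong enough to entail $\mathit{mmu}_s(\ArmState)$ and the inclusion ``$\UserMode$-writable $\subseteq G_m$'', and yet weak enough to be preserved (Theorem~\ref{lem:invariant}). Concretely one must show that the reference-counter bookkeeping really does account for all writable mappings and all page-table references, so that the $D$-typed blocks are exactly those that may be mapped writable in user mode, and one must handle correctly the interaction between the ARM domain mechanism and the predicate $\Sec$, since it is the domains --- not the page-type policy by itself --- that fence the guest off from the trusted services' memory. Once the invariant is set up with these consequences, the proof of no-exfiltration itself is the short combination of Property~\ref{prop:User-No-Exfiltration}, the domain argument, and the triviality $h' = h$ sketched above.
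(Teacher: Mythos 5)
Your proposal is correct and follows essentially the same route as the paper: reduce a $\to_0$ step to $h'=h$ plus an unprivileged ARM transition, use the invariant to obtain MMU-safety (the paper's Lemma~\ref{lem:mmu-safe}) and the inclusion of user-writable addresses in $G_m$ under $\neg\Sec$ (the paper's Lemma~\ref{lem:user-mem}), then apply ARM-integrity (Property~\ref{prop:User-No-Exfiltration}) to preserve coprocessor registers and restrict memory changes to $G_m$, with banked registers unchanged by user-mode execution. Your aside on the domain mechanism is a reasonable elaboration of why $\neg\Sec$ is needed in the guest-isolation step, which the paper folds into Lemma~\ref{lem:user-mem}.
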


The \emph{no-infiltration} property  is a
non-interference property
guaranteeing that guest instructions and hypercalls
executed on behalf of the guest do not depend on any information stored in
resources not accessible by the guest.
\begin{theorem}[No-Infiltration]\label{lem:No-Infiltration}
  Let
  $\tuple{\sigma_1,h_1} ,\tuple{\sigma_2,h_2}  \in \IdealStateSpace_I$, $i \in\{0,1\}$,
  and assume  that 
  $O_{g}(\tuple{\sigma_1,h_1}) = O_{g}(\tuple{\sigma_2,h_2})$,
  $\neg \Sec(\sigma_1)$, and  $\neg \Sec(\sigma_2)$.

  If $\tuple{\sigma_1,h_1} \to_i \tuple{\sigma'_1,h'_1}$ and
  $\tuple{\sigma_2,h_2} \to_i \tuple{\sigma'_2,h'_2}$
  then
  $O_{g}(\tuple{\sigma'_1,h'_1}) = O_{g}(\tuple{\sigma'_2,h'_2})$.
\end{theorem}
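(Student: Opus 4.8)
The plan is to prove the statement by case analysis on $i\in\{0,1\}$, following the two clauses that define the TLS transition relation. Observe first that $O_g(\tuple{\sigma,h})=\tuple{\sigma.\uregs,\mathit{mem}_g(\sigma),\sigma.\coregs}$ does not mention the abstract hypervisor state, so the hypothesis $O_g(\tuple{\sigma_1,h_1})=O_g(\tuple{\sigma_2,h_2})$ unfolds to $\sigma_1.\uregs=\sigma_2.\uregs$, $\sigma_1.\coregs=\sigma_2.\coregs$, and agreement of $\sigma_1,\sigma_2$ on the static guest region $G_m$; moreover both states lie in $\IdealStateSpace_I$ and satisfy $\neg\Sec$. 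It therefore suffices to track how $\sigma$ evolves.

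For $i=0$ we have $h'_j=h_j$ and $\sigma_j\to_{\UserMode}\sigma'_j$, and the goal reduces to ARM-confidentiality (Property~\ref{prop:User-No-Infiltration}) applied to $\sigma_1$ and $\sigma_2$. Its premises are discharged from the hypothesis and the invariant $I$: equality of $\uregs$ and $\coregs$ is given; $\sigma_1\equiv_{\mathit{mmu}}\sigma_2$ holds because the translation depends only on the coprocessor registers and on the page tables, which $I$ confines to $G_m$; $\mathit{mmu}_s(\sigma_1)$ and $\mathit{mmu}_s(\sigma_2)$ follow from $I$ (no unprivileged-writable address overlaps a page table); and the set $A$ of user-accessible physical addresses satisfies $A\subseteq G_m$ because $\neg\Sec(\sigma_1)$ together with $I$ confines the guest, so $\sigma_1$ and $\sigma_2$ agree on $A$. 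Property~\ref{prop:User-No-Infiltration} then yields post-step agreement on registers, coprocessor registers, and memory restricted to $A$. To lift the memory agreement from $A$ to all of $G_m$, apply ARM-integrity (Property~\ref{prop:User-No-Exfiltration}): an unprivileged step writes only to addresses writable in $\UserMode$, a subset of $A$, so every address in $G_m\setminus A$ is left untouched in both runs and was already equal. Hence $O_g(\tuple{\sigma'_1,h_1})=O_g(\tuple{\sigma'_2,h_2})$.

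For $i=1$ we have $\sigma_j\to_{\KernelMode}\sigma''_j$ with $\priv(\sigma_j)=\UserMode$, and $\tuple{\sigma'_j,h'_j}=H_a(\tuple{\sigma''_j,h_j})$. First, the exception-entry step is driven entirely by guest-observable data: the exception cause and its parameters (the SWI immediate and argument registers, or the faulting virtual address and access type of a data abort, or the fetched undefined instruction, or an incoming interrupt line) are functions of $\uregs$, $\coregs$ and $\mathit{mem}_g$ — the instruction is fetched from a guest-mapped page, abort data is computed from the guest page tables — while the entry itself rewrites only banked registers, the execution mode, and the status flags, which it sets identically in both runs, leaving $\mem$ and $\coregs$ intact; consequently $\sigma''_1$ and $\sigma''_2$ have equal guest observation, present the hypervisor with the same request, and still satisfy $I$, $\mathit{mmu}_s$, and MMU-equivalence. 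Second, one shows $O_g(H_a(\tuple{\sigma''_1,h_1}))=O_g(H_a(\tuple{\sigma''_2,h_2}))$: the effect of the abstract handler on the guest observation is a function of the guest observation of $\sigma''$ and of the part of $h$ that $I$ determines from $\mathit{mem}_g$ and $\coregs$ — the reference counters and the page-type tags consulted by the nine hypercalls — and since each handler by construction reads from and writes to only registers, coprocessor registers, guest memory, and ghost data, never copying secure-world memory into a guest-visible location, the two handler runs leave identical guest observations. The per-hypercall case analysis, together with the abort/interrupt forwarding handlers, is the bulk of the work.

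The main obstacle is the handler case, and within it the claim that $I$ pins down the guest-relevant fragment of $h$ from $\mathit{mem}_g$ and $\coregs$: allocation calls such as \textit{L1create} branch on a reference counter and on a page type, so no-infiltration genuinely fails unless the invariant forces those counters (and the tags the handlers test) to agree whenever the guest memory agrees. Establishing this requires the same strengthened form of $I$ used for Theorem~\ref{lem:invariant} — reference counters equal to the actual number of writable-user descriptors and PT-descriptors, type tags consistent with the structural checks — followed by an essentially syntactic traversal of all nine hypercalls verifying that their guest-visible outputs depend on nothing else. The subsidiary confinement fact $A\subseteq G_m$ under $\neg\Sec$, already needed in the $i=0$ case, is a shorter but similarly invariant-dependent argument.
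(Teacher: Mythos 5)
Your unprivileged case ($i=0$) is correct and is essentially the paper's argument: the paper likewise reduces to Property~\ref{prop:User-No-Infiltration}, using Lemma~\ref{lem:user-mem} to show that the user-accessible addresses are contained in $G_m$. The only difference is that the paper instantiates the set $A$ of that property directly as $G_m$ (which is a legitimate superset of the accessible addresses), so the conclusion already gives agreement on all of $G_m$ and your extra patching step via ARM-integrity on $G_m\setminus A$ is unnecessary, though harmless. Your justification of $\sigma_1\equiv_{\mathit{mmu}}\sigma_2$ (page tables confined to $G_m$ plus equal coprocessor registers) is the right one.

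The handler case is where your proof has a genuine gap, and it is precisely at the step you flag as the "main obstacle": the claim that the invariant $I$ pins down the handler-relevant fragment of $h$ from $\mathit{mem}_g$ and $\coregs$. The invariant only constrains $h$ to be \emph{consistent} with the memory — every block with $\pgtype\neq\Data$ must contain a sound page table, and $\pgrefs$ must equal the actual reference count \emph{given} the type assignment — it does not make $\pgtype$ a function of the memory. A block whose content happens to form a sound page table (e.g.\ an all-zero, fully unmapped L1) can be typed $\Data$ in one invariant state and $L1$ in another with identical guest memory and identical reference counters; the hypercalls (\textit{L1free}, \textit{L1create}, \textit{switch}, \dots) branch on exactly this type, so your reduction of the handler's guest-visible output to a function of $O_g$ does not go through. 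The paper does not attempt this determinacy argument at all: its proof of the privileged case is a relational symbolic evaluation of $H_a$ on the \emph{pair} of states, discharging, for every intermediate assignment, the verification condition that the value (including values read from $h$ and from memory) coincides in the two runs and that every memory access lands in $G_m$. In other words, the agreement of the $h$-reads is tracked as part of the two-run relation rather than derived from the invariant, and your proposal needs either that relational machinery or an explicit additional hypothesis relating $h_1$ and $h_2$ to close the case.
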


\subsection{The Implementation Model}\label{sec:implmodel}
%
A critical problem of verifying low level platforms is that 
intermediate states of a handler that reconfigures  the MMU can break the semantics
of the high level language (e.g. C).
For example, a handler can change a page table and (erroneously) unmap the region of virtual memory
where the handler data structure (or the code) are located.
For this reason we introduce
the implementation model, that is sufficiently detailed to expose
misbehaviour of the hypervisor accesses to the observable part of
the memory (i.e. page tables, guest memory and internal data structures).
The implementation interleaves standard unprivileged transitions
and hypervisor functionalities. In contrast to the TLS, these
functionalities now store their internal data in system memory,
accessed by means of virtual addresses. 
In practice, in the implementation model the hypervisor
functionalities are expressed as executable specifications, yet they are very
close to the execution of the actual hardware at instruction semantics level.
We demonstrate these differences by comparing two fragments of the TLS
and the implementation specifications.

The TLS models the update of a guest page table entry
as $\sigma'.mem = write_{32}$ $(\sigma.mem, pa, desc)$, where $pa$ is the
physical address of the entry, $desc$ is a word
representing the new descriptor and $write_{32}$ is a function that
yields a new memory having four consecutive bytes updated.
At the implementation level the same operation is represented as

\begin{alltt}
if \negat \mmu{}(\sigmas, PL1, Gpa2va(pa), wt)
 then \botm
 else \wrt(\sigmas.mem,
         \mmu{}(\sigmas,PL1,Gpa2va(pa), wt),desc)  
\end{alltt}
%
\noindent where $\mathit{mmu}$ is the formal model of the ARMv7 MMU (introduced in Section~\ref{sec:armv7}) and
$\mathit{Gpa2va}$ is the function used by the hypervisor to compute the
virtual address of a physical address that resides in guest memory. This
function is statically defined and is the inverse of the 
injective translation established by the hypervisor master page table.

The implementation can fail to match the TLS for two reasons: (i) the
current page table can prevent the hypervisor from accessing the computed
virtual address, and then the implementation terminates in a failing state (denoted
by $\bot$), (ii) the current address translation does not respect the expected injective
mapping, thus $\mathit{mmu}(\sigma, \KernelMode, Gpa2va(pa), \mathit{wt}) \neq pa$ and
the implementation writes in an address that differs from the one updated by the
TLS.

The next example shows the difference between accesses to the reference counter in the TLS and at implementation level.
The TLS models this operation as $h.\pgrefs(b)$, where $b$ is the
physical block. The implementation models the same operation using memory offsets as follows:

\begin{alltt}
 if \negat \mmu{}(\sigmas, PL1, \mathunderline{tbl}{va} + 4*b, rd)
  then \botm
  else \rd(\sigmas.mem,
         \mmu{}(\sigmas, PL1, \mathunderline{tbl}{va} + 4*b, rd)) 
           & 0xCFFFFFFF
\end{alltt}


This representation is directly reflected in the hypervisor code. For each block, the page type (two
bits)  and the reference counter (30 bits) are placed contiguously in
a word. These words form an array, whose initial virtual address is $tbl_{va}$.

The handlers are represented by a HOL4 function $H_r$ from
ARMv7 states to ARMv7 states. The function is the
executable specification of the various exception handlers including
the MMU mapping/remapping/unmapping functionalities and is composed sequentially of the functional specifications for the corresponding code segments.

\newcommand{\impto}{\rightarrowtail}

Then, the state transition relation $\impto_{i \in \{0,1\}} \subseteq \RealStateSpace  \times
(\RealStateSpace \cup \{\bot\})$  determines the implementation behaviour as follows:
\begin{itemize}
  \item If $\sigma \to_{\UserMode} \sigma'$  then $\sigma \impto_0 \sigma'$;
    instructions executed in unprivileged mode that do not
    raise exceptions behave according to the standard ARMv7
    semantics.
  \item If $\sigma \to_{\KernelMode} \sigma'$ and $\mode(\sigma) =
    \UserMode$ then $\sigma \impto_1 H_r(\sigma')$;
    whenever an exception is raised, the hypervisor is
    executed and its behaviour is modelled by the function $H_r$.
    The function yields either a state whose execution mode is unprivileged or
    $\bot$.
\end{itemize}

%
To show implementation soundness  we exhibit a refinement property
relating abstract states $\tuple{\sigma_1,h}$ to real states $\sigma_2$.
The refinement relation  $\mathcal{R}$ (that is
  left-total and surjective with the exception of the faulty state 
$\bot$)
requires that: (i) the registers and coprocessors contain the same
value in both states, (ii) the guest memory contains the same values in both
states, (iii) the hypervisor data structures of the TLS state are projected into a part of hypervisor memory, and (iv)
the reserved virtual addresses are always mapped in the same way as they are mapped in the
master page table.
Observations of the guest $O^{r}_{g}$ and secure observations $O^{r}_{s}$
are defined on real states using the hypervisor data structure mapping in analogy
with the corresponding observations on abstract states defined above. We require the refinement relation $\mathcal{R}$ to be a bisimulation relation, that is preserved by computations of the abstract and implementation model.

\begin{theorem}[Implementation refinement]\label{lem:refinement}
  Let $\tuple{\ArmState_1, h} \in \IdealStateSpace_I$ and $\ArmState_2
  \in \RealStateSpace$ such that $\tuple{\ArmState_1, h}\ \mathcal{R} \
  \ArmState_2$. Let $i\in\{0,1\}$.
  \begin{itemize}
  \item If $\ArmState_2 \impto_i \ArmState'_2$ then exists $\tuple{\ArmState'_1, h'}$ such that
  $\tuple{\ArmState_1, h} \to_i \tuple{\ArmState'_1, h'}$ and 
    $\tuple{\ArmState'_1, h'}\ \mathcal{R} \ \ArmState'_2$.
  \item If $\tuple{\ArmState_1, h} \to_i \tuple{\ArmState'_1, h'}$ then exists $\ArmState'_2$ such that $\ArmState_2 \impto_i \ArmState'_2$ 
   and 
  $\tuple{\ArmState'_1, h'}\ \mathcal{R} \ \ArmState'_2$.
  \end{itemize}
\end{theorem}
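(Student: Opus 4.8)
The plan is to establish that $\mathcal{R}$ is a bisimulation by splitting the argument along two axes: the transition label $i\in\{0,1\}$ and the simulation direction. Since both the abstract handler $H_a$ and the implementation handler $H_r$ are deterministic functions of their pre-state, and unprivileged \armv{} transitions are likewise determined by the executed instruction, once forward simulation is in hand the backward direction follows from the observation (made earlier in the thesis) that corresponding initial states admit a unique pair of corresponding executions; only the possibility that $H_r$ diverges to $\bot$ needs independent care. I would therefore concentrate on the forward direction.

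The case $i=0$ is the easy one. Here $\ArmState_2\impto_0\ArmState'_2$ means $\ArmState_2\to_{\UserMode}\ArmState'_2$, and the matching abstract step is $\tuple{\ArmState_1,\HypState}\to_0\tuple{\ArmState'_2,\HypState}$, using that $\ArmState_1$ and $\ArmState_2$ agree on registers, coprocessors and guest memory (clauses (i)--(ii) of $\mathcal{R}$) together with determinism of the \armv{} step relation. To see that $\mathcal{R}$ is re-established I would invoke Property~\ref{prop:User-No-Exfiltration} and Theorem~\ref{lem:mmu-integrity}: an MMU-safe unprivileged step writes only to guest-writable physical addresses and leaves the coprocessor registers untouched, hence it cannot disturb the projected hypervisor data structures, the master page table, or the reserved-address mapping — so clauses (iii)--(iv) of $\mathcal{R}$ persist — while guest memory and registers change identically on the two sides. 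MMU-safety of $\ArmState_1$ (and hence of $\ArmState_2$) is a consequence of the invariant $I$, which is preserved by Theorem~\ref{lem:invariant}.

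The substance is the handler case $i=1$. I would first decompose $H_r$ (and $H_a$) along the dispatch structure into the nine hypercalls of Figure~\ref{APIfig} plus the abort/interrupt-routing paths, and prove the refinement branch by branch. The key auxiliary lemma is an \emph{address-resolution} lemma: whenever $\tuple{\ArmState,\HypState}\in\IdealStateSpace_I$, the current page table maps the reserved hypervisor address window one-to-one exactly as the master page table does, so $\mathit{mmu}(\ArmState,\KernelMode,\mathit{Gpa2va}(pa),\req)=pa$ for every $pa\in G_m$ and every $\req$, and likewise the data-structure array resolves correctly at $tbl_{va}+4b$ for each physical block $b$, with all these accesses permitted in $\KernelMode$. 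Given this lemma, each elementary operation of the implementation specification — a page-table update of the $write_{32}$ form shown in Section~\ref{sec:implmodel}, or a masked reference-counter access of the $h.\pgrefs(b)$ form — is seen to (a) not yield $\bot$ and (b) act on exactly the physical location that the corresponding TLS fragment touches, once one discharges the bit-layout arithmetic (two type bits and thirty counter bits packed at $tbl_{va}+4b$, the section/page offset computations, the $\mathit{0xCFFFFFFF}$ masking). Composing the per-operation correspondences sequentially along a handler yields the branch-local simulation, and the updated abstract component $\tuple{\pgtype,\pgrefs}$ is read off from the updated hypervisor memory region; clauses (i)--(iii) of $\mathcal{R}$ then hold by construction and clause (iv) because no handler writes into the reserved window or the mirrored hypervisor regions, with post-state membership in $\IdealStateSpace_I$ (needed downstream) supplied by Theorem~\ref{lem:invariant}.

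The main obstacle is sustaining the address-resolution lemma \emph{through} the execution of a handler that is itself reconfiguring page tables: an intermediate state of, say, \textit{L1map} need not satisfy the full invariant $I$, so one cannot simply quote Theorem~\ref{lem:invariant} at every line. Instead I must isolate the weaker conjunct of $I$ — that the reserved window and the mirrored hypervisor regions are mapped as in the master page table and that no guest-writable alias of a page table exists — and show, by inspection of each API branch, that this conjunct is an inductive invariant of the handler's own instruction sequence, since every page-table edit a handler performs is guaranteed by its page-type checks to leave those fixed regions alone. This bookkeeping, together with the low-level arithmetic side conditions — precisely the places where the verification exposed the arithmetic-overflow, offset-mismatch, self-reference, and out-of-guest-bounds bugs — is where the real effort lies; gluing the local simulations into the global bisimulation is then routine.
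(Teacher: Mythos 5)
Your proposal follows essentially the same route as the paper: the same split into guest ($i=0$) and handler ($i=1$) transitions, the same reduction of the guest case to the ISA-level integrity/MMU-safety properties, and your ``address-resolution lemma'' is precisely the paper's Lemma~\ref{ref:mem} (together with Lemma~\ref{ref:mmu}), feeding the same compositional per-operation argument for the handlers, including the obligation that $H_r$ never reaches $\bot$. The only point to tighten is the guest case: matching the \emph{effects} of the two unprivileged steps (not just their enabledness) needs ARM-confidentiality/No-Infiltration explicitly, since $\ArmState_1$ and $\ArmState_2$ may differ on memory outside $G_m$.
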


Finally we show that the security property of the TLS and the
refinement relation directly transfer the 
mmu-integrity/no-exfiltration/no-infiltration to the implementation. 
We use $\RealStateSpace_I$ to
represent the space of consistent implementation states: States $\ArmState_2$ such
that if $\tuple{\ArmState_1,h}\  \mathcal{R}\ \ArmState_2$ then
$I(\tuple{\ArmState_1,h})$. 
\begin{corollary}[Implementation security transfer]\label{lem:MLS-sec}
  Let $\ArmState_1, \ArmState_2 \in \RealStateSpace_I$, $i\in \{0,1\}$,
  $O^r_g(\ArmState_1) = O^r_g(\ArmState_2)$:
  \begin{itemize}
    \item if $\ArmState_1 \impto_0 \ArmState'_1$ then
      $\ArmState_1 \equiv_{mmu} \ArmState'_1$
    \item if $\ArmState_1 \impto_0 \ArmState'_1$ and  $\neg \Sec(\sigma_1)$ then
      $O^r_s(\ArmState_1) = O^r_s(\ArmState'_1)$
    \item if $\ArmState_1 \impto_i \ArmState'_1$,
      $\ArmState_2 \impto_i \ArmState'_2$, and 
      $\neg \Sec(\sigma_1)$ and  $\neg \Sec(\sigma_2)$
      then
      $O^r_g(\ArmState'_1) = O^r_g(\ArmState'_2)$
  \end{itemize}
\end{corollary}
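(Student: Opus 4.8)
The plan is that this is genuinely a \emph{corollary}: once the bisimulation $\mathcal{R}$ of Theorem~\ref{lem:refinement} and the three TLS security theorems (Theorems~\ref{lem:mmu-integrity}, \ref{lem:No-Exfiltration}, \ref{lem:No-Infiltration}) are in place, each bullet is obtained by a uniform ``pull back along $\mathcal{R}$, apply the TLS theorem, push forward along $\mathcal{R}$'' argument. The one preparatory step I would isolate as a lemma is an \emph{observation-transfer} statement: whenever $\tuple{\tau,h}\ \mathcal{R}\ \ArmState$ with $\tuple{\tau,h}\in\IdealStateSpace_I$, then $O^r_g(\ArmState) = O_g(\tuple{\tau,h})$, $O^r_s(\ArmState) = O_s(\tuple{\tau,h})$, $\Sec(\ArmState) = \Sec(\tau)$, and $\tau \equiv_{mmu} \ArmState$. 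The first three are essentially built into the definitions of $\mathcal{R}$ and of the real-state observations $O^r_g, O^r_s$ (which are defined by re-reading the hypervisor-data-structure mapping), together with the fact that $\Sec$ depends only on the coprocessor registers, which $\mathcal{R}$ keeps in sync. The fourth uses the invariant $I$ carried inside $\IdealStateSpace_I$ and $\RealStateSpace_I$: by $I$ every active page table resides in guest memory and the reserved hypervisor region is mapped identically in both states (clause (iv) of $\mathcal{R}$), so the ARMv7 translation function — which depends only on the coprocessor registers and the translation-relevant memory content — yields the same answer in $\tau$ and $\ArmState$ at every privilege level.

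With this lemma, the three bullets go as follows. For MMU-integrity, take $\ArmState_1\in\RealStateSpace_I$ with $\ArmState_1 \impto_0 \ArmState'_1$; by definition of $\RealStateSpace_I$ it has a $\mathcal{R}$-preimage $\tuple{\tau_1,h_1}$ which satisfies $I$, hence lies in $\IdealStateSpace_I$; apply the backward direction of Theorem~\ref{lem:refinement} to get a TLS step $\tuple{\tau_1,h_1}\to_0\tuple{\tau'_1,h'_1}$ with $\tuple{\tau'_1,h'_1}\ \mathcal{R}\ \ArmState'_1$ — and since nothing is $\mathcal{R}$-related to $\bot$, this already excludes $\ArmState'_1 = \bot$; apply Theorem~\ref{lem:mmu-integrity} to obtain $\tau_1 \equiv_{mmu} \tau'_1$; and close with the lemma and transitivity/symmetry of $\equiv_{mmu}$: $\ArmState_1 \equiv_{mmu} \tau_1 \equiv_{mmu} \tau'_1 \equiv_{mmu} \ArmState'_1$. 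For no-exfiltration, proceed identically, additionally transferring $\neg\Sec(\ArmState_1)$ to $\neg\Sec(\tau_1)$, applying Theorem~\ref{lem:No-Exfiltration} to get $O_s(\tuple{\tau_1,h_1}) = O_s(\tuple{\tau'_1,h'_1})$, and rewriting both sides via the lemma to conclude $O^r_s(\ArmState_1) = O^r_s(\ArmState'_1)$. For no-infiltration, pick preimages $\tuple{\tau_1,h_1},\tuple{\tau_2,h_2}\in\IdealStateSpace_I$ of $\ArmState_1, \ArmState_2$; the lemma turns $O^r_g(\ArmState_1) = O^r_g(\ArmState_2)$ into $O_g(\tuple{\tau_1,h_1}) = O_g(\tuple{\tau_2,h_2})$ and each $\neg\Sec(\ArmState_j)$ into $\neg\Sec(\tau_j)$; backward simulation on $\ArmState_1\impto_i\ArmState'_1$ and $\ArmState_2\impto_i\ArmState'_2$ yields TLS steps to states $\mathcal{R}$-related to $\ArmState'_1$ and $\ArmState'_2$; Theorem~\ref{lem:No-Infiltration} gives $O_g$-equality of the two TLS successors; and the lemma once more pushes this back to $O^r_g(\ArmState'_1) = O^r_g(\ArmState'_2)$. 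No determinism argument is needed, since $\mathcal{R}$-relatedness alone moves observations in both directions; Theorem~\ref{lem:invariant} is only needed to make the transfer composable over whole traces, not for the single-step statement.

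The step I expect to be the main obstacle is the MMU-equivalence clause of the observation-transfer lemma, i.e. showing $\mathcal{R}$ (with $I$) forces $\tau \equiv_{mmu} \ArmState$ at \emph{all} privilege levels, including $\KernelMode$. The subtlety is that in the implementation model the hypervisor data structures live in memory whereas in the TLS they are abstract, so one must argue that this extra memory footprint is irrelevant to address translation — or that whatever part of it is translation-relevant coincides in both states by clause (iv) of $\mathcal{R}$ — while the guest-owned page tables, which do determine translation, are exactly the memory that $\mathcal{R}$ keeps identical. A secondary, purely bookkeeping obstacle is threading the faulty state $\bot$ through the argument; this is in fact discharged for free by Theorem~\ref{lem:refinement}, whose backward direction produces a genuine (non-$\bot$) $\mathcal{R}$-related TLS successor, so a consistent implementation step out of $\RealStateSpace_I$ can never fault, but this must be invoked explicitly at each use.
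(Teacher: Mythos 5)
Your proposal is correct and follows essentially the same route as the paper: the paper's proof also pulls each implementation step back along $\mathcal{R}$ (using left-totality/surjectivity and the backward direction of Theorem~\ref{lem:refinement}), applies the corresponding TLS theorem, and pushes the result forward, with your ``observation-transfer lemma'' corresponding exactly to the paper's Lemma~\ref{ref:mmu} (for the $\equiv_{mmu}$ clause, proved from the invariant's placement of page tables in guest memory plus equality of coprocessor registers) together with the definitional fact that $\mathcal{R}$ makes $O^r_g$, $O^r_s$ and $\Sec$ agree across related states. The obstacle you flag — MMU-equivalence of related states at all privilege levels — is precisely what Lemma~\ref{ref:mmu} discharges, so nothing is missing.
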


\subsection{Binary Code Correctness}\label{sec:ver:binary}
In the ARMv7 model of Section~\ref{sec:armv7} the untrusted guest, the trusted services
and the hypervisor share the CPU, and the hypervisor behaviour is
modelled by the execution of its binary instructions.

Intuitively, internal hypervisor states cannot be observed by
the guest, since (i) during the execution of the handler the guest is
not active, (ii) the hypervisor does not support preemption and
(iii) the handlers do not raise nested exceptions.
For this reason, we introduce a weak transition relation, which hides
states that are privileged.
We write $\ArmState_0 \rightsquigarrow_i \ArmState_n$ if there is a finite execution
$\ArmState_0 \to_i \dots \rightarrow \ArmState_n$ such that
$\mode(\ArmState_n) = \UserMode$ and $\mode(\ArmState_j) = \KernelMode$ for $0 <j < n$.

Our goal is to exhibit a refinement property
relating implementation states $\sigma_1$ and real states $\sigma_2$.
The refinement relation  $\mathcal{R'}$ (that is left-total with the exception of the faulty state 
$\bot$ and surjective)
requires that: (i) the registers and coprocessors contain the same
value in both states, (ii) the guest memory contains the same values in both
states, (iii) the memory holding the hypervisor data structures 
contains the same values in both states. For the observations of the guest $O^r_{g}$ on real states, the same definition as for the implementation model are used, i.e.~the guest can observe the same addresses in both models. Again the refinement is required to establish a bisimulation.

\begin{theorem}[Real Refinement]\label{lem:binary-verification}
  Let $\ArmState_1, \ArmState_2  \in \RealStateSpace_I$ such that
  $\ArmState_1\ \mathcal{R'} \ \ArmState_2$. Let $i\in\{0,1\}$.
\begin{itemize}
   \item If $\ArmState_2 \rightsquigarrow_i \ArmState'_2$ then exists $\ArmState'_1$ such that
  $\ArmState_1 \to_i \ArmState'_1$ and 
    $\ArmState'_1\ \mathcal{R'} \ \ArmState'_2$.
  \item If $\ArmState_1 \impto_i \ArmState'_1$ then exists $\ArmState'_2$ such that $\ArmState_2 \rightsquigarrow_i \ArmState'_2$ and  $\ArmState'_1\ \mathcal{R'} \ \ArmState'_2$.
\end{itemize}
\end{theorem}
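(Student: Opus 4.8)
The argument is by a case split on the execution mode $i$, handling the unprivileged case ($i=0$) entirely inside HOL4 and reducing the privileged case ($i=1$) to a finite family of Hoare contracts for the handler binaries that are then discharged with the external toolchain. For $i=0$ both $\impto_0$ and the weak relation $\rightsquigarrow_0$ collapse to a single user-mode step $\to_{\UserMode}$ (a one-element execution with no intermediate privileged state), so the implementation and the real model execute the same ARMv7 instruction. Since $\ArmState_1,\ArmState_2\in\RealStateSpace_I$ inherit MMU-safety from the TLS invariant $I$ (Theorem~\ref{lem:invariant}) transported along the implementation refinement $\mathcal{R}$, and since $\mathcal{R'}$ equates the user registers, the coprocessor registers, the guest memory, and the memory holding the hypervisor data structures, Property~\ref{prop:User-No-Exfiltration} shows the step only alters guest-writable memory and leaves the coprocessors fixed, while Property~\ref{prop:User-No-Infiltration} shows the step is determined by the $\mathcal{R'}$-visible part of the state. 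Both simulation clauses for $i=0$ then follow by the same inductive reasoning already used for the $i=0$ case of Theorem~\ref{lem:refinement}, and are formalised in HOL4.

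For $i=1$, on the implementation side $\ArmState_1\impto_1 H_r(\bar\sigma_1)$ where $\bar\sigma_1$ is the state reached once the exception has been taken, whereas on the real side $\ArmState_2\rightsquigarrow_1\ArmState'_2$ runs the binary of the whole handler. Because the hypervisor is non-preemptive, takes no nested exceptions, and --- for states in $\RealStateSpace_I$, where the master page table maps all handler code and data structures --- performs no faulting access (in particular $H_r$ never returns $\bot$ from such states, so the backward clause is well posed), the weak transition is defined and all its intermediate states are privileged, hence invisible to $\mathcal{R'}$, which relates only states where a guest is executing. The plan is then to prove in HOL4 a top-level reduction theorem: a certifying procedure reads the executable specification $H_r$ and the definition of $\mathcal{R'}$ and emits, for each handler compiled to machine code $C$, a contract $\{P\}\,C\,\{Q\}$ whose precondition $P$ says ``the real state is $\mathcal{R'}$-related to some $\ArmState_1$'' and whose postcondition $Q$ says ``the resulting real state is $\mathcal{R'}$-related to $H_r(\bar\sigma_1)$ and control has returned to user mode''; the reduction theorem states that if every handler fragment satisfies its contract then $\mathcal{R'}$ is a bisimulation on the $i=1$ transitions, mirroring the contract-based methodology used at the TLS-to-implementation step.

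Discharging the contracts is the part that can be automated: using the HOL4 ARMv7 model one \emph{lifts} $C$ to BIL together with a certifying equivalence theorem, the iterative STP-based solver resolves the indirect jumps so that $C$ becomes loop-free up to a few bounded validation loops (handled by unrolling or supplied invariants), BAP computes the weakest precondition $\mathit{WP}(C,Q)$ --- using offset, alignment and size information for the hypervisor data structures extracted from the C source via GDB --- and an SMT solver checks $P\Rightarrow\mathit{WP}(C,Q)$. So far this has been carried out for one of the nine API calls, and the connection between the HOL4 model and BAP/STP/SMT rests on explicit faithfulness assumptions, which is why the theorem is stated as only partially established. Composing the verified $\mathcal{R'}$ with $\mathcal{R}$ then yields a bisimulation between the TLS and the real model, transporting MMU-integrity, no-exfiltration and no-infiltration (Theorems~\ref{lem:mmu-integrity}, \ref{lem:No-Exfiltration}, \ref{lem:No-Infiltration} and Corollary~\ref{lem:MLS-sec}) down to the binary level.

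The main obstacle is the $i=1$ case: since the handlers reconfigure the MMU in the middle of their own execution, the C-level semantics cannot be trusted and the reasoning must be done on the instruction semantics, which makes the weakest-precondition formulas blow up (requiring the optimised WP generation), forces the resolution of indirect jumps, and needs manually supplied data-layout facts and loop invariants; the backward clause additionally requires a termination argument for the handler binaries, and the whole privileged-case proof currently depends on the unverified assumptions bridging HOL4 and the external tools.
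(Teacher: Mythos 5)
Your proposal matches the paper's own treatment: the $i=0$ case is discharged in HOL4 by collapsing both transition relations to a single user-mode step and reusing the MMU-safety/no-exfiltration/no-infiltration reasoning, while the $i=1$ case is reduced via a HOL4 certifying procedure to per-handler Hoare contracts $\{P\}C\{Q\}$ with $P$ asserting existence of an $\mathcal{R'}$-related implementation state and $Q$ asserting preservation of $\mathcal{R'}$, which are then checked by lifting to BIL, resolving indirect jumps with STP, computing weakest preconditions in BAP, and discharging $P\Rightarrow\mathit{WP}(C,Q)$ with an SMT solver. You also correctly note the theorem's partial status (one API call verified) and the unverified assumptions bridging HOL4 and the external tools, so nothing further is needed.
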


Finally one must show that the security properties are transferred to the real model. 
\begin{corollary}[Real Security Transfer]\label{lem:MLS-binary}
  Let $\ArmState_1, \ArmState_2 \in \RealStateSpace_I$, $i\in \{0,1\}$,
  $O^r_g(\ArmState_1) = O^r_g(\ArmState_2)$:
  \begin{itemize}
    \item if $\ArmState_1 \rightsquigarrow_0 \ArmState'_1$ then
      $\ArmState_1 \equiv_{mmu} \ArmState'_1$
    \item if $\ArmState_1 \rightsquigarrow_0 \ArmState'_1$ and  $\neg \Sec(\sigma_1)$ then
      $O^r_s(\ArmState_1) = O^r_s(\ArmState'_1)$
    \item if $\ArmState_1 \rightsquigarrow_i \ArmState'_1$,
      $\ArmState_2 \rightsquigarrow_i \ArmState'_2$, and 
      $\neg \Sec(\sigma_1)$ and  $\neg \Sec(\sigma_2)$
      then
      $O^r_g(\ArmState'_1) = O^r_g(\ArmState'_2)$
  \end{itemize}
\end{corollary}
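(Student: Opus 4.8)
The plan is to derive Corollary~\ref{lem:MLS-binary} from Theorem~\ref{lem:binary-verification} in exactly the way Corollary~\ref{lem:MLS-sec} is derived from Theorem~\ref{lem:refinement}: the bisimulation $\mathcal{R'}$ is used to sandwich each of the three guarantees between the implementation model, where they are already available via Corollary~\ref{lem:MLS-sec}, and the real ARMv7 model. Concretely, given a consistent real state $\ArmState_1$, surjectivity of $\mathcal{R'}$ supplies an implementation state $\tilde\ArmState_1$ with $\tilde\ArmState_1\ \mathcal{R'}\ \ArmState_1$; from a weak real step $\ArmState_1 \rightsquigarrow_i \ArmState'_1$ the first clause of Theorem~\ref{lem:binary-verification} produces a matching implementation step (written $\impto_i$ in Section~\ref{sec:implmodel}) $\tilde\ArmState_1 \impto_i \tilde\ArmState'_1$ with $\tilde\ArmState'_1\ \mathcal{R'}\ \ArmState'_1$; Corollary~\ref{lem:MLS-sec} is applied to the $\tilde\ArmState$-run, and its conclusion is transported back along $\mathcal{R'}$. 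The two-trace clause (no-infiltration) is handled the same way, running the construction on both $\ArmState_1 \rightsquigarrow_i \ArmState'_1$ and $\ArmState_2 \rightsquigarrow_i \ArmState'_2$ in parallel.

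For this to close I would first record a few elementary preservation facts about $\mathcal{R'}$. Because $\mathcal{R'}$ equates user and banked registers, coprocessor registers, guest memory, and the memory holding the hypervisor data structures, it follows that (i) $\tilde\ArmState\ \mathcal{R'}\ \ArmState$ implies $\tilde\ArmState \equiv_{mmu} \ArmState$, since the active page tables live in guest memory and the reserved/master-table region is mapped identically in both models, so the MMU verdict coincides; (ii) $\mathcal{R'}$ preserves the guest observation $O^r_g$ and the secure observation $O^r_s$ (here I would note that the statement of $\mathcal{R'}$ has to be read as pinning down \emph{all} of the non-guest memory together with the banked and coprocessor registers, which is exactly what keeps $O^r_s$ invariant); and (iii) $\mathcal{R'}$ preserves $\Sec(\cdot)$, as $\Sec$ depends only on coprocessor registers. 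I would also fix the notion of a consistent real state, as the analogue of $\RealStateSpace_I$ defined via $\mathcal{R}$ in Section~\ref{sec:implmodel}, so that $\ArmState_1$ consistent and $\tilde\ArmState_1\ \mathcal{R'}\ \ArmState_1$ force $\tilde\ArmState_1$ to be a consistent implementation state, which is the hypothesis Corollary~\ref{lem:MLS-sec} needs. Each clause then becomes a short diagram chase: mmu-integrity via $\ArmState_1 \equiv_{mmu} \tilde\ArmState_1 \equiv_{mmu} \tilde\ArmState'_1 \equiv_{mmu} \ArmState'_1$; no-exfiltration via $O^r_s(\ArmState_1) = O^r_s(\tilde\ArmState_1) = O^r_s(\tilde\ArmState'_1) = O^r_s(\ArmState'_1)$; and no-infiltration similarly, after first transporting $O^r_g(\ArmState_1) = O^r_g(\ArmState_2)$ and $\neg\Sec(\ArmState_1), \neg\Sec(\ArmState_2)$ down to the implementation level. (For mmu-integrity there is also a direct route on the real model, since $\rightsquigarrow_0$ is a single unprivileged instruction: consistency yields MMU-safety, Property~\ref{prop:User-No-Exfiltration} then leaves the coprocessor registers untouched and confines writes to user-writable addresses, which the functional invariant forces to be $\DataType$ blocks and hence outside the active page tables.)

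The real obstacle lies not in this corollary but in the refinement it presupposes: Theorem~\ref{lem:binary-verification} asks for $\mathcal{R'}$ to be established as a bisimulation down to the binary code --- lifting each hypervisor handler into BAP, generating the contracts $\{P\}C\{Q\}$ from the implementation model and $\mathcal{R'}$, resolving indirect jumps, and discharging the weakest-precondition obligations --- and this is precisely the part the paper reports as only partially completed (one API call). Granting Theorem~\ref{lem:binary-verification}, the remaining care points for the corollary are modest: choosing the statement of $\mathcal{R'}$ rich enough that it genuinely transfers $O^r_s$ (and not merely the page-type/reference-counter array), and verifying that consistency is carried across $\mathcal{R'}$ so that Corollary~\ref{lem:MLS-sec} applies to the pulled-back states. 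Finally, the whole chain of transfers is sound only under the linearizable-memory assumption of the Cambridge ARM model, i.e.\ modulo the cache-flushing discipline of Section~\ref{sec:cache}; making the argument robust against mismatched-cacheability aliases is addressed in the later papers rather than here.
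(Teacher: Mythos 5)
Your proposal is correct and follows essentially the same route the paper takes: the paper does not spell out a separate proof for Corollary~\ref{lem:MLS-binary}, but its proof sketch of Corollary~\ref{lem:MLS-sec} is exactly the diagram chase you describe (pull back along the left-total/surjective bisimulation, apply the already-established security property one level up, and transport the conclusion back using the fact that the refinement relation preserves $\equiv_{mmu}$, $O^r_g$, $O^r_s$, $\Sec$, and consistency), and Corollary~\ref{lem:MLS-binary} is obtained by instantiating the same argument with $\mathcal{R'}$ and Theorem~\ref{lem:binary-verification} in place of $\mathcal{R}$ and Theorem~\ref{lem:refinement}. Your auxiliary observations about what $\mathcal{R'}$ must pin down for $O^r_s$ to transfer, and about consistency being carried across the relation, are precisely the side conditions the paper's HOL4 development discharges.
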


\subsection{Execution Safety and End-to-End Information Flow Security}
Note that we do not prove explicitly execution safety.
The reason is that the transition relations of the ARM CPU
and the TLS are left-total.
Left-totality for the ARM CPU depends on the fact that
the physical CPU never halts (with the exception of the privileged ``wait''
instruction that is never used by the hypervisor).
Left-totality for the TLS holds because the virtualization API is modelled by
HOL4 total functions over TLS states;
every function is equipped with a termination proof, which is either
automatically inferred by the theorem prover or has been manually verified.
The only transitions that can yield a dead state ($\bot$) are the 
hypervisor transitions of the implementation model, due to incorrect memory
accesses. Proving that this model can never reach the state $\bot$ is part of
the proof of Theorem~\ref{lem:refinement}. It makes use of Lemma~\ref{ref:mem},
which shows that all hypervisor memory accesses are correct.

We do not prove standard end-to-end information flow properties because their
definitions depend on the actual trusted services. This is often the case when
two components are providing services to each other. For example, if the trusted
service is the run-time monitor of Section~\ref{sec:applications}, then it should be able to directly read the
memory of the untrusted Linux (to compute the signatures). Additionally, the
trusted service can be allowed to affect the behaviour of the guest, for
example by rebooting it or by changing its process table if a malware is
detected.

 However, our verification results enable the enforcement of end-to-end security
 by properly restricting the capability of the trusted services. In fact, 
 these services are executed non-privileged, thus their execution is constrained
 by properties~\ref{prop:User-No-Exfiltration} and
 ~\ref{prop:User-No-Infiltration}.
Moreover, their memory mapping is static, is configured in the master page table
of the hypervisor, and is independent of the guest configuration. If complete isolation is needed, it is sufficient to configure these entries of the master page table properly, use properties~\ref{prop:User-No-Exfiltration} and
 ~\ref{prop:User-No-Infiltration} together with Theorem~\ref{lem:mmu-integrity} to prove
 that the trusted services cannot affect and are independent of the guest resources.
This enables the trace properties to be established and consequently to obtain end-to-end security.

\section{TLS Consistency}\label{sec:tlsproof}
We proceed to describe the strategy for proving the TLS consistency properties
of Section~\ref{sec:tls}. To this end we summarise the structure of the system
invariant.
The system invariant consists of two predicates:
 one ($RC$) ensures soundness of the reference counter, and
the other ($TC$) guarantees that the state of the system is well typed.

The reference counter is sound (i.e. $RC\tuple{\ArmState, h}$) if
for every physical block $b$, the reference counter $h.\pgrefs(b)$ is
equal to $\sum_{i \in \{0 \dots  2^{20}-1\}} \mathit{count}(\tuple{\ArmState, h}, i, b)$, where $count$ is a
function that counts the number of references from the block $i$ to
the block $b$, according to the reference-counter policy:
\begin{itemize}
 \item if $b$ is a data block and $i$ is a page table, i.e.~$h.\pgtype(b) = \Data$ and $h.\pgtype(i) \neq \Data$, then $count$ is
   the number of page table descriptors in $i$ that are writable in non-privileged mode and
   that point to $b$,
 \item if $b$ is a L2 page table and $i$ is a L1 page table then $count$ is
   the number of page table descriptors in $i$ that use the table $b$
   to fragment the section, and
 \item if $i$ is not a page table, i.e.~$h.\pgtype(i) = \Data$, then $\mathit{count}(\tuple{\ArmState, h}, i, b)=0$.
\end{itemize}

A system state is well typed ($TC\tuple{\ArmState, h}$), if
the MMU is enabled, the current L1 page table is inside a physical block of type L1,
and each physical block $b$ that does not have type
\textit{data} ($h.pgtype(b) \neq \Data$) contains a sound page table ($\sound(\tuple{\ArmState, h}, b)$) and
resides in the guest memory ($pa\in G_m$ for all $pa$ such that $pa[31:12]=b$).
The predicate $\sound$ ensures that
(i) no unpredictable setting is allowed,
(ii) page tables grant write access only to blocks with type \textit{data},
(iii) page tables forbid any access in $\UserMode$ to
blocks outside the guest memory, and (iv) every L1 page table descriptor points 
to a block typed L2.
Section~\ref{sec:arch:constraints} and Figure~\ref{fig:invariant}
exemplify these constraints.

The proofs of the theorems of Section~\ref{sec:tls} have been obtained using the HOL4 theorem
prover and the lemmas are described in the following.

\begin{lemma}[Invariant implies MMU-safety]\label{lem:mmu-safe}
 If $\tuple{\sigma, h} \in \IdealStateSpace_I$  then $\mathit{mmu}_s(\sigma)$.
\end{lemma}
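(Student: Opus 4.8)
The plan is to unfold the definition of $\mathit{mmu}_s$ and reduce the claim to a frame property of the ARMv7 MMU model. Fix an arbitrary state $\sigma'$ with $\sigma.\coregs = \sigma'.\coregs$ and $wd(\sigma,\sigma',\UserMode)$; the goal is $\sigma \equiv_{mmu} \sigma'$. The heart of the argument is the observation that no byte belonging to a page table can differ between $\sigma$ and $\sigma'$: by $wd(\sigma,\sigma',\UserMode)$ the only addresses $pa$ at which the memories may disagree are those with $\mathit{mmu}_{ph}(\sigma,\UserMode,pa,wt)$, so it suffices to show that every physical address lying in a block $b$ with $h.\pgtype(b) \neq \Data$ is \emph{not} writable in $\UserMode$ in $\sigma$, i.e. $\neg\,\mathit{mmu}_{ph}(\sigma,\UserMode,pa,wt)$.

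First I would establish this no-overwrite claim from the invariant. Since $\tuple{\sigma,h}\in\IdealStateSpace_I$, the well-typedness component $TC$ of the invariant gives that the MMU is enabled, the active L1 table resides in L1-typed block(s), and every non-$\Data$ block contains a sound page table; by $\sound$, any descriptor (section descriptor or L2 entry) that grants $\UserMode$ write access points into a region all of whose physical blocks are $\Data$-typed, and every PT-descriptor points to an L2-typed block. Hence if $\mathit{mmu}_{ph}(\sigma,\UserMode,pa,wt)$ held for some $pa$ in an L1- or L2-typed block $b$, the successful write-translation $mmu(\sigma,\UserMode,va,wt)=pa$ would have to end in a descriptor that grants $\UserMode$ write to $b$, forcing $h.\pgtype(b)=\Data$ — a contradiction, since block types are unique. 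Therefore all L1- and L2-typed blocks have byte-identical contents in $\sigma$ and $\sigma'$, and in particular the active L1 block is unchanged.

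Next I would prove the frame property, $mmu(\sigma,\PLm,va,\req) = mmu(\sigma',\PLm,va,\req)$ for all $\PLm$, $va$, $\req$. Because $\sigma.\coregs=\sigma'.\coregs$, both walks start at the same active L1 base, which $TC$ places in an L1-typed block, so by the previous step the first-level descriptor read returns the same value in both states. If that descriptor is a fault or a section descriptor, the translation result is a function only of it and of the (shared) coprocessor and domain configuration, so the outcomes agree. If it is a PT-descriptor, soundness of the active L1 forces it to point to an L2-typed block, whose contents again coincide in $\sigma$ and $\sigma'$, so the second-level descriptor read also agrees, and the final translation — a function of these two descriptors and $\coregs$ — is identical. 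Since the ARMv7 page walk performs at most these two memory reads, this case analysis is exhaustive, yielding $\sigma \equiv_{mmu} \sigma'$ and hence $\mathit{mmu}_s(\sigma)$.

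I expect the main obstacle to be this last step carried out against the HOL4 ARMv7+MMU model: one must match the informal two-level walk to the concrete definition of $mmu$, including the $16$KB / four-block layout of an L1 table, the domain access-control checks carried in the coprocessor registers, and the exclusion of unpredictable configurations guaranteed by $\sound$ (which is also what lets the equal page-table contents transfer to equal MMU behaviour in $\sigma'$). By contrast, the no-overwrite claim is comparatively routine once $TC$ and $\sound$ are unfolded, and it does not require the reference-counter soundness component $RC$ of the invariant.
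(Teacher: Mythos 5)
Your proof is correct and follows essentially the same route as the paper's: you show from the well-typedness part of the invariant that no address in a non-$\Data$ block is $\UserMode$-writable, conclude that the active L1 and all referenced L2 tables are byte-identical in any write-derivable state with the same coprocessor registers, and then observe that the MMU's translation is a function only of $\coregs$ and those tables. Your added detail on the two-level walk and the remark that only $TC$ (not $RC$) is needed are consistent with, and a faithful expansion of, the paper's proof sketch.
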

Lemma \ref{lem:mmu-safe} demonstrates an important property of the system invariant;
a state that satisfies the invariant has the same MMU behaviour as any state whose memory differs
only for addresses that are writable in unprivileged mode.
The proof of the lemma depends on the formal model of the ARMv7 MMU (but not on its instruction set); there the MMU behaviour is determined by coprocessor registers and the contents of the active L1 and referenced L2 page tables. The invariant guarantees that the active L1 page table of $\sigma$ resides in four consecutive blocks that have type L1 and every page table descriptor in this table points to a block typed L2. Moreover, only \textit{data} blocks may be writable in unprivileged mode and write attempts to other blocks will be rejected. We examine a state $\sigma'$ that is write-derivable in unprivileged mode from $\sigma$, but has the same coprocessor registers, selecting the same active L1 page table. Since the content of the page tables is unchanged, the MMU in $\sigma'$ behaves exactly like in $\sigma$.

\paragraph{\normalfont\textbf{Proof sketch of Theorem~\ref{lem:mmu-integrity}}}
To demonstrate MMU-integrity the ARM-integrity property is used. By definition
of the TLS transition relation, if $\tuple{\sigma,h} \to_0 \tuple{\sigma',h'}$ then
(in the ARM model) $\sigma \to_{\UserMode} \sigma'$.
Moreover, Lemma~\ref{lem:mmu-safe} guarantees $\mathit{mmu}_s(\sigma)$. Thus,
Property~\ref{prop:User-No-Exfiltration} can be used to conclude that
$wd(\ArmState, \ArmState', \UserMode)$ and $\ArmState.\coregs = \ArmState'.\coregs$, i.e. $\ArmState'$ is a state write-derivable from $\sigma$ and coprocessor registers have not changed.
Finally, it suffices to apply Definition~\ref{def:mmu-safe} (MMU-safety) to show
that $\sigma \equiv_{mmu} \sigma'$.

\begin{lemma}[Guest isolation]\label{lem:user-mem}
  Let $\tuple{\sigma, h} \in \IdealStateSpace_I$.
  For every physical address $pa$ and access request $req$
  if 
  $\neg \Sec(\sigma)$ and
  $\mathit{mmu}_{ph}(\sigma, \UserMode, pa, req)$
  then $G_m(pa)$.
\end{lemma}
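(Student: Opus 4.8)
The plan is to unfold $\mathit{mmu}_{ph}$ and then follow the ARMv7 page-table walk step by step, using the well-typedness component $TC$ of the invariant together with the soundness predicate $\sound$ to argue that every descriptor that can be exercised by a $\UserMode$ access necessarily resolves inside $G_m$. First I would expand the hypothesis $\mathit{mmu}_{ph}(\ArmState, \UserMode, pa, req)$ into a witness $va$ with $mmu(\ArmState, \UserMode, va, req) = pa$. In the HOL4 MMU model this translation is computed from the coprocessor registers $\ArmState.\coregs$ (which determine the active L1 base address and the domain access control settings) and from the contents of the active L1 page table and at most one referenced L2 page table. The invariant $TC\tuple{\ArmState, h}$ gives us that the MMU is enabled, that the active L1 occupies four consecutive blocks all of type $L1$, and that every block $b$ with $h.\pgtype(b)\neq\Data$ satisfies $\sound(\tuple{\ArmState, h}, b)$ and resides entirely in $G_m$; hence the active L1 is a sound page table located in guest memory, and so is any L2 table it references.

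Next I would case-split on the L1 descriptor selected for $va$. If it is a section descriptor, then since $mmu$ actually grants the access in $\UserMode$ its access-permission bits allow user access and its domain is enabled; because $\neg\Sec(\ArmState)$ excludes the reserved domains 2--15 from being active, the descriptor sits in a guest domain, and clause~(iii) of $\sound$ (page tables forbid any access in $\UserMode$ to blocks outside the guest memory) forces $pa\in G_m$, i.e.\ $G_m(pa)$. If instead the L1 descriptor is a PT-descriptor, clause~(iv) of $\sound$ makes it point to a block $b'$ of type $L2$; by $TC$ that block lies in $G_m$ and is itself a sound page table, so walking into it and inspecting the L2 descriptor selected for $va$ and repeating the access-permission / non-reserved-domain / clause~(iii) argument again yields $pa\in G_m$. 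The remaining MMU outcomes --- a faulting or mis-typed descriptor, or one whose domain is disabled --- cannot occur here, because by assumption the access was granted.

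The main obstacle is faithfulness to the detailed model rather than any conceptual subtlety. The case analysis must be carried out for every ARMv7 descriptor shape (sections, supersections, small and large pages) and for the full domain mechanism (the ``no access'', ``client'' and ``manager'' modes), and the formal statement of $\sound$ must be strong enough that clauses~(iii) and~(iv) discharge all of these, in particular linking the informal ``forbids $\UserMode$ access to non-guest blocks'' condition to the exact access-permission and domain computation performed by $mmu$. These are the same structural facts about the active page tables already exploited in the proof of Lemma~\ref{lem:mmu-safe}, so the bulk of the work is the bookkeeping over descriptor formats, which the HOL4 development should factor into reusable lemmas about $\sound$ and $mmu$ that can be shared with that proof.
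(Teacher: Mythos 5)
Your proposal is correct and follows essentially the same route as the paper: the paper's proof likewise unfolds the MMU translation in the formal ARMv7 model and appeals directly to the invariant, singling out part~(iii) of $\sound$ (page-table entries grant guest access only to addresses inside guest memory) as the decisive clause. Your additional detail on the descriptor case split, clause~(iv) for PT-descriptors, and the role of $\neg\Sec$ in excluding the reserved domains is exactly the bookkeeping the HOL4 development carries out.
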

The proof of Lemma~\ref{lem:user-mem} uses the formal ARMv7 MMU model and directly follows from the invariant. In particular, part (iii) of predicate $\sound$ demands that entries of a page table grant access permissions to the guest only if the entry points to a physical address that is inside the guest memory.

\paragraph{\normalfont\textbf{Proof sketch of Theorem~\ref{lem:No-Exfiltration}}}
Similar to the proof of Theorem~\ref{lem:mmu-integrity},
the definition of the transition relation and Lemma~\ref{lem:mmu-safe} yield that $\tuple{\sigma,h} \to_0 \tuple{\sigma',h'}$ implies $h=h'$, $\sigma \to_{\UserMode} \sigma'$ and $\mathit{mmu}_s(\sigma)$.
Then Property~\ref{prop:User-No-Exfiltration} gives $\ArmState.\coregs= \ArmState'.\coregs$ and $wd(\ArmState, \ArmState', \UserMode)$,
meaning that (according to the contraposition of Definition~\ref{def:mmu-der}) the memories of $\ArmState$  and $\ArmState'$ contain
the same value for every physical address that is not writable in mode $\UserMode$ in $\ArmState$.
By Lemma~\ref{lem:user-mem} the guest can only obtain write permissions to the physical addresses belonging to its own memory, thus
the memories of $\ArmState$  and $\ArmState'$ have the same value for every physical address not in $G_m$. Moreover banked registers cannot be changed in unprivileged mode. Consequently, $O_s(\tuple{\sigma,h}) = O_s(\tuple{\sigma',h'})$ holds as claimed.

\paragraph{\normalfont\textbf{Proof sketch of Theorem~\ref{lem:No-Infiltration}}}
We proceed separately for unprivileged and privileged
transitions. For unprivileged transition the ARM-confidentiality property is used.
As proven above, from the definition of the transition relation,
Lemma~\ref{lem:mmu-safe}, $\tuple{\sigma_1,h_1} \to_0 \tuple{\sigma'_1,h'_1}$, and 
$\tuple{\sigma_2,h_2} \to_0 \tuple{\sigma'_2,h'_2}$ we obtain
$h_1=h'_1$, $h_2=h'_2$, $\sigma_1 \to_{\UserMode} \sigma'_1$,
$\sigma_2 \to_{\UserMode} \sigma'_2$, $\mathit{mmu}_s(\sigma_1)$ and
$\mathit{mmu}_s(\sigma_2)$. 
Since $O_{g}(\tuple{\sigma_1,h_1}) = O_{g}(\tuple{\sigma_2,h_2})$,
the user registers, guest memories (i.e. the content for addresses in $G_m$),
and coprocessor registers are the same in $\sigma_1$ and $\sigma_2$.
The definition of $\mathit{mmu}_s(\sigma_1)$ yields $\sigma_1\equiv_{\mathit{mmu}}\sigma_2$.
Moreover, Lemma~\ref{lem:user-mem} shows that the guest can obtain an access permission
only to the physical addresses in $G_m$, thus
the memories of $\sigma_1$ and $\sigma_2$ contain the same value for every address in
$G_m \supseteq \{ pa \mid \exists \req.\ \mathit{mmu}_{ph}(\sigma_1, \UserMode,  pa, \req) \}$.
This enables Property~\ref{prop:User-No-Infiltration}, which in turn justifies that
$\sigma'_1.\uregs = \sigma'_2.\uregs\ \ \text{and}\ \ \forall pa \in G_m.\ \sigma'_1.mem(pa) = \sigma'_2.mem(pa)$, 
i.e. the guest observations in $\sigma'_1$ and $\sigma'_2$ are the same.

The proof of the Theorem~\ref{lem:No-Infiltration} for hypervisor
transitions has been obtained by performing relational analysis. The function $H_a$
accesses only three state components: the hypervisor data structures (i.e. $h$),
the user registers and the memory (in order to validate page tables).
The function $H_a$ is symbolically evaluated on the states $\tuple{\sigma_1, h_1}$ and $\tuple{\sigma_2, h_2}$;
whenever $H_a$ updates an intermediate variable, it must be demonstrated that the value of the variable is the
same in both executions,
whenever  $H_a$ modifies a state component (e.g. memory or register), it must be demonstrated that the
equivalence of guest observation is preserved.
These tasks are completely automatic for assignments that only depend on intermediate variables and user registers.
For every assignment that depends on memory accesses, a new verification condition is generated to require
that the accessed addresses are the same in both executions and to guarantee that
such address is in the guest memory. Finally, these verification conditions are verified,
showing that $H_a$ never accesses memory outside $G_m$.

Finally, we prove Theorem~\ref{lem:invariant} by showing that the
invariant is preserved first by guest transitions (Lemma \ref{lem:inv-usr})
and then by the abstract handlers (Lemma \ref{lem:inv-hyper}). 

\begin{lemma}[Invariant vs guest]\label{lem:inv-usr}
  Let $\tuple{\sigma, h} \in \IdealStateSpace_I$.
  If $\tuple{\sigma, h} \to_0 \tuple{\sigma', h'}$ then 
  $I(\tuple{\sigma', h'})$.
\end{lemma}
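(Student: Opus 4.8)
First I would unfold the definition of the TLS transition relation in the case $i=0$: by construction $\tuple{\sigma,h}\to_0\tuple{\sigma',h'}$ holds only when $\sigma\to_{\UserMode}\sigma'$ and $h'=h$, so the abstract hypervisor state is untouched and it suffices to establish $I(\tuple{\sigma',h})$. Since $\tuple{\sigma,h}\in\IdealStateSpace_I$, Lemma~\ref{lem:mmu-safe} gives $\mathit{mmu}_s(\sigma)$, and Property~\ref{prop:User-No-Exfiltration} (ARM-integrity) applied to the unprivileged step then yields $wd(\sigma,\sigma',\UserMode)$ together with $\sigma.\coregs=\sigma'.\coregs$: the memory of $\sigma'$ differs from that of $\sigma$ only at physical addresses writable in user mode, and the coprocessor registers are unchanged.

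The crux of the argument is to show that no byte of any non-$\Data$ physical block is modified by the step. I would derive this from the type-consistency conjunct $TC(\tuple{\sigma,h})$ of the invariant: $TC$ guarantees that the active L1 table lies in an L1-typed block and is sound, and that every referenced L2 table lies in an L2-typed block and is sound; by clause (ii) of $\sound$ every page-table descriptor granting a user-mode write points to a block of type $\Data$, and by the "no unpredictable settings" clause the domain configuration cannot turn a client mapping into a manager override. Hence $\{pa \mid \mathit{mmu}_{ph}(\sigma,\UserMode,pa,wt)\}$ is contained in the union of the $\Data$ blocks --- and this holds uniformly whether $\Sec(\sigma)$ or not, since it depends only on the soundness of the page tables walked by the MMU, not on which partition is active (so Lemma~\ref{lem:user-mem} alone, which assumes $\neg\Sec(\sigma)$, would not suffice). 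Combined with $wd(\sigma,\sigma',\UserMode)$, every physical block $b$ with $h.\pgtype(b)\neq\Data$ therefore has identical contents in $\sigma$ and $\sigma'$. It is essential here that $\mathit{mmu}_s(\sigma)$ holds: a user-level instruction may issue several writes, but because the coprocessor registers and the walked page tables are fixed, the writable set is constant throughout the step, so no intermediate configuration can widen it.

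With $h$ unchanged, $\sigma.\coregs=\sigma'.\coregs$, the contents of all non-$\Data$ blocks preserved, and $G_m$ static, both conjuncts of $I$ transfer directly. For $RC$: $\mathit{count}(\tuple{\cdot,h},i,b)$ inspects only $h.\pgtype$ and, when $i$ is a page-table block, the descriptors stored in $i$ --- none of which change --- so $h.\pgrefs(b)=\sum_i\mathit{count}(\tuple{\sigma,h},i,b)=\sum_i\mathit{count}(\tuple{\sigma',h},i,b)$ for every $b$. For $TC$: the MMU is still enabled, the active L1 is still selected by an unchanged TTBR inside an L1-typed block, every non-$\Data$ block still resides in $G_m$ and still contains a sound page table ($\sound$ depends only on block contents and $h$), and the master-page-table mirroring constraints persist because those regions are not user-writable. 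Thus $I(\tuple{\sigma',h})$ and the lemma follows.

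\textbf{Main obstacle.} The delicate point is the second paragraph: bridging the ARM-level notion of write-derivability with the $\sound$ predicate to prove that user-mode writes can never reach page-table blocks. This needs a careful case analysis of the ARMv7 MMU model --- section versus page descriptors, the AP encoding, and especially the domain mechanism (client versus manager domains) --- to be sure a user-mode write permission always implies a $\Data$-typed target and that this conclusion is robust to the guest/secure-services distinction. The remaining bookkeeping (reference-counter sums, TTBR selection, mirrored regions) is routine once block contents and coprocessor state are known to be preserved.
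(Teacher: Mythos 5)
Your proposal is correct and follows essentially the same route as the paper's proof: unfold the $\to_0$ case to get $h'=h$ and $\sigma\to_{\UserMode}\sigma'$, use Lemma~\ref{lem:mmu-safe} and ARM-integrity to obtain write-derivability, and then argue (as in the proof of Lemma~\ref{lem:mmu-safe}) that the invariant's soundness conditions force every user-writable address into a $\Data$-typed block, so the contents of all non-$\Data$ blocks --- the only memory the invariant depends on --- are preserved. Your added observation that this must be derived from the $\sound$ predicate directly rather than from Lemma~\ref{lem:user-mem} (which assumes $\neg\Sec(\sigma)$) is a correct and worthwhile refinement of the paper's terser sketch.
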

This lemma demonstrates that the invariant is
preserved by guest transitions. Its proof depends on the ARM-integrity property.
It is straightforward to show that the invariant only
depends on the content of the physical blocks that are not typed
$\Data$ and the hypervisor data (i.e. $h$ and $h'$).
Similar to the proof of Theorem~\ref{lem:No-Exfiltration}, the definition
of the transition relation, Lemma~\ref{lem:mmu-safe}
and Property~\ref{prop:User-No-Exfiltration}
guarantee that if $\tuple{\sigma,h} \to_0 \tuple{\sigma',h'}$ then $h=h'$, $\sigma \to_{\UserMode} \sigma'$, $\mathit{mmu}_s(\sigma)$ and $wd(\ArmState, \ArmState', \UserMode)$.
As in the proof of Lemma~\ref{lem:mmu-safe} it is shown
that in $\sigma$ every block that is not typed $\Data$ is not writable,
concluding that the invariant is preserved.

\begin{lemma}[Invariant vs hypervisor]\label{lem:inv-hyper}
  Let $\tuple{\sigma, h} \in \IdealStateSpace_I$.
  If $\tuple{\sigma, h} \to_1 \tuple{\sigma', h'}$ then 
  $I(\tuple{\sigma', h'})$.
\end{lemma}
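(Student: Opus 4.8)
Unfolding the definition of the TLS transition relation, a hypervisor step $\tuple{\sigma,h}\to_1\tuple{\sigma',h'}$ means $\tuple{\sigma',h'}=H_a(\tuple{\bar\sigma,h})$ for the state $\bar\sigma$ obtained from $\sigma$ by the ARMv7 exception entry (banking registers, masking interrupts, jumping to the vector table). Exception entry touches neither memory nor the coprocessor registers that determine the MMU configuration, and leaves the abstract hypervisor state untouched, so $I(\tuple{\bar\sigma,h})$ is immediate from $I(\tuple{\sigma,h})$; in particular $\bar\sigma$ is still $\mathit{mmu}_s$ (Lemma~\ref{lem:mmu-safe}) and its active L1 still sits in four contiguous blocks of type $L1$. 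It therefore remains to show $I(H_a(\tuple{\bar\sigma,h}))$, which I would do by case analysis on the exception cause, i.e.\ on the dispatch performed by $H_a$.

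For a non-hypercall exception (data/prefetch abort, undefined instruction, hardware interrupt), $H_a$ reflects the exception into the untrusted guest by emulating the hardware: it banks the guest's registers and writes the guest exception frame to memory using the guest's own address translation and permissions. By Lemma~\ref{lem:user-mem} (together with the $\sound$ part of the invariant) every physical address the guest may write lies in a $\Data$-typed block inside $G_m$, so these writes leave all non-$\Data$ blocks and $h$ unchanged; since $I = RC \wedge TC$ is a predicate on $h$ together with the contents of the non-$\Data$ blocks only, it is preserved. The same holds, a fortiori, for an unrecognised hypercall or for any memory-management hypercall whose validation checks fail: there $H_a$ only writes an error code into a guest register and returns, so again $h$ and the non-$\Data$ blocks are untouched.

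For an accepted memory-management hypercall the argument is uniform once one observes that each accepting handler (i)~modifies $h.\pgtype$ on at most one block $b_t$, (ii)~writes page-table entries only inside $b_t$ (or, for \textit{map}/\textit{unmap}, also the descriptor word inside an already typed block), and (iii)~adjusts $h.\pgrefs$ exactly on the blocks whose incoming reference count changes. For each of the nine calls I would then check separately that (a)~the explicit validation performed by the handler is precisely what is needed for $\sound(\tuple{\sigma',h'},b)$ to hold for every non-$\Data$ block $b$: for the blocks left untouched this is inherited from $I(\tuple{\bar\sigma,h})$, for the single modified or newly promoted block $b_t$ it is the checked condition (no unpredictable descriptor, writable access granted only to $\Data$ blocks inside $G_m$, every L1 PT-descriptor pointing to an $L2$ block), and for \textit{free}/\textit{unmap} it follows because removing an entry or demoting a page table only weakens the soundness requirement; and (b)~the $\pgrefs$ update re-establishes $RC$, using the auxiliary lemmas of this section on $\countt$ (Lemma~\ref{lem:inv-sound-type} and Lemma~\ref{Lemma:ref_cnt_counters}): changing one descriptor, or changing one block's type, alters $\countt(\tuple{\sigma',h'},i,b)$ only for that source block $i$ and by exactly the delta the handler applies to $\pgrefs(b)$. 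For \textit{switch} both (a) and (b) are trivial — $h$ is unchanged so $RC$ is immediate, and $TC$ holds because the handler checks that the new active L1 occupies four contiguous $L1$ blocks, no re-validation of page-table contents being needed precisely because $RC$ already forbids a created L1 from granting the guest write access to a page table.

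The main obstacle is step (b) for \textit{L1create}/\textit{L2create} and \textit{L1map}/\textit{L2map}. Promoting a block $b_t$ from $\Data$ to $L1$/$L2$ makes $\countt(\cdot,b_t,b')$ jump from $0$ to the number of qualifying descriptors of $b_t$ targeting $b'$, simultaneously for every $b'$; one must prove that the handler's traversal of the new page table visits exactly this multiset of targets and increments each $\pgrefs(b')$ accordingly, that the $30$-bit counter never overflows, and — the subtle point flagged as a bug class in the introduction — that self-referencing descriptors (a descriptor of $b_t$ pointing into $b_t$ itself) are accounted for consistently. Dually, for \textit{create} one needs the converse reading of $RC$: the precondition $h.\pgrefs(b_t)=0$ together with $RC(\tuple{\bar\sigma,h})$ forces that no descriptor anywhere in the system currently grants the guest write access to $b_t$ or uses $b_t$ as an L2 table, so promoting $b_t$ cannot retroactively violate the soundness clause forbidding writable descriptors to non-$\Data$ blocks in some other page table; this is what Lemma~\ref{lem:inv-sound-type} supplies. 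Discharging these counting obligations — each essentially a finite but large sum manipulation over the $2^{20}$ blocks — inside HOL4 is where the bulk of the proof effort sits; everything else reduces to monotonicity of $\sound$ and locality of $\countt$.
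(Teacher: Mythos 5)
Your proposal follows essentially the same route as the paper: decompose the step into the exception-entry transition (for which the invariant is preserved — the paper justifies this via the ARM-integrity property rather than your direct "exception entry touches no memory" argument, but to the same effect) followed by $H_a$, then do a per-handler case analysis in which Lemmas~\ref{lem:inv-sound-type}, \ref{Lemma:ref_cnt_counters}, and the locality of $\countt$ and $\sound$ reduce the obligation to re-checking well-typedness and reference-counter soundness only for the blocks the handler actually touches. The key points you flag — the zero-reference-counter precondition enabling type changes, and the counting obligations for \textit{create}/\textit{map} — are exactly where the paper places the proof burden, so the proposal is correct and matches the paper's argument.
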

The lemma demonstrates that the invariant is preserved by the handler
functionalities and shows the functional correctness of
the TLS design. By definition, if $\tuple{\sigma, h} \to_1
\tuple{\sigma', h'}$ then there exists $\sigma''$ such that
$\sigma \to \sigma''$, $\mode(\sigma'') = \KernelMode$ and $\tuple{\sigma',
h'} = H_a(\tuple{\sigma'', h})$.
 Similar to
the proof of Lemma~\ref{lem:inv-usr},
Property~\ref{prop:User-No-Exfiltration} is used to guarantee that
the invariant is preserved by this transition:
$I(\tuple{\sigma'', h})$.
Then we show that the invariant is preserved by the abstract handler
$H_a$.

This verification task requires the introduction of several supporting
lemmas. The idea is that according to the input request, the abstract handler only changes a small
part of the system state. For instance, when $H_a$ maps a section, only the current L1 page table is modified, the contents of other blocks are unchanged.
In order to demonstrate that the invariant is indeed preserved for
the parts of the state that are not affected by $H_a$, we introduce a number of additional lemmas.
These lemmas are sufficiently general to be used to verify
different virtualization mechanisms that involve direct paging and
they prove the intuition that the type of a
block can be safely changed when its reference counter is zero. 
\begin{definition}
Let $h$ and $h'$ be two abstract hypervisor states.
The predicate $type_s(h,h')$ holds if and only if $h.\type(b) \neq h'.\type(b)$ implies $h.\refs(b) = 0$ for all blocks $b$.
\end{definition}

Changing the type of a block can affect the soundness of page tables that reference that block. The following lemma expresses the key property that soundness of page tables is preserved for all type changes of other blocks, as long as the reference counters of that blocks are zero:
\begin{lemma}\label{lem:inv-sound-type}
  Assume $I\tuple{\ArmState,h}$ and $type_s(h,h')$. For every block $b$ such that
  $h.\type(b) = h'.\type(b)$,  if $\sound(\tuple{\ArmState, h}, b) $ then $\sound(\tuple{\ArmState,h'}, b)$.
\end{lemma}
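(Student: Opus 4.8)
The plan is to observe that, with the ARM state $\ArmState$ fixed, the predicate $\sound(\tuple{\ArmState,h},b)$ reads the hypervisor state $h$ only in two ways: through $h.\type(b)$ (which selects which of the four soundness clauses are active — the L1 clauses vs.\ the L2 clauses), and through $h.\type(c)$ for the blocks $c$ that occur as \emph{targets} of descriptors stored in block $b$ (these descriptors being read from the fixed memory $\ArmState.\mem$). Clauses (i) and (iii) of $\sound$ — absence of unpredictable MMU settings, and the prohibition of $\UserMode$ accesses to physical addresses outside $G_m$ — refer to nothing beyond $\ArmState$, the content of $\ArmState.\mem$ restricted to $b$, and the static set $G_m$; hence they transfer verbatim from $\sound(\tuple{\ArmState,h},b)$ to $\sound(\tuple{\ArmState,h'},b)$. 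The selector $h.\type(b)=h'.\type(b)$ is exactly the hypothesis. So the only work is to show that clauses (ii) and (iv), which constrain $h'.\type(c)$ for the relevant target blocks $c$, still hold.

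First dispose of the degenerate case $h.\type(b)=\Data$: then $h'.\type(b)=\Data$ too, $b$ holds no page table, and clauses (ii) and (iv) are vacuous under both $h$ and $h'$, so $\sound(\tuple{\ArmState,h'},b)$ holds trivially. Otherwise $b$ is a page table in both states. Let $d$ be any descriptor in $b$ relevant to clause (ii) or (iv), with target block $c$. If $d$ grants $\UserMode$ write access to $c$, then $\sound(\tuple{\ArmState,h},b)$ (clause (ii)) gives $h.\type(c)=\Data$, and since $b$ is a page table ($h.\type(b)\neq\Data$), $d$ is one of the descriptors counted by $\mathit{count}(\tuple{\ArmState,h},b,c)$. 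If instead $d$ is a PT-descriptor, clause (iv) forces $h.\type(b)=L1$ and $h.\type(c)=L2$, and $d$ is again counted by $\mathit{count}(\tuple{\ArmState,h},b,c)$ (the L1-to-L2 case of the reference-counter policy). In either case $\mathit{count}(\tuple{\ArmState,h},b,c)\ge 1$, so the reference-counter soundness part $RC$ of $I\tuple{\ArmState,h}$ gives $h.\refs(c)=\sum_i \mathit{count}(\tuple{\ArmState,h},i,c)\ge 1>0$. Then the contrapositive of $type_s(h,h')$ yields $h.\type(c)=h'.\type(c)$, which is precisely the typing demanded of $c$ by clause (ii) resp.\ (iv) in the state $h'$. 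Collecting this over all relevant descriptors of $b$, together with clauses (i), (iii) and the preserved selector, gives $\sound(\tuple{\ArmState,h'},b)$.

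The only delicate point — and the real content of the lemma — is the dependency analysis: one must verify that no clause of $\sound$ covertly depends on the type of a target block $c$ that is \emph{not} tracked by $\mathit{count}$, since for such a $c$ the predicate $type_s$ would give no information at all. The argument rests on the exact match between the soundness clauses and the reference-counter policy of Section~\ref{sec:arch:constraints}: the only target types $\sound(\tuple{\ArmState,h},b)$ inspects are those of data blocks reached by writable descriptors and of L2 blocks reached by PT-descriptors, and these are exactly the references that $\mathit{count}$ tallies. (The self-reference case $c=b$ is harmless: it would force $h.\refs(b)\ge 1$ and hence $h.\type(b)=h'.\type(b)$, which is already assumed.) This lemma is then the workhorse behind Lemma~\ref{lem:inv-hyper}: an abstract handler changes block types only where the reference counter is zero, i.e.\ consistently with $type_s$, so soundness of every page table residing outside the affected blocks survives the change.
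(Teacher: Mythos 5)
Your proof is correct and takes essentially the same route as the paper's: the paper's (much terser) argument likewise observes that type changes can only threaten clauses (ii) and (iv), and that $type_s$ together with the reference-counter soundness $RC$ forces the types of all blocks actually referenced from $b$ to be unchanged. Your elaboration of the dependency analysis and the appeal to $RC$ is exactly the content the paper leaves implicit.
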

The proof of Lemma \ref{lem:inv-sound-type} hinges on the fact that type changes can only break parts (ii) and (iv) of the page table soundness condition. However, if the type is only changed for blocks that are not referenced by any page table, soundness is preserved trivially.

We exemplify the usage of Lemma~\ref{lem:inv-sound-type} when proving
Lemma~\ref{lem:inv-hyper}. Assume that $H_a$ is allocating a new L2 page table in the block $b'$
(i.e. changing the type of $b'$ from $\Data$ to $L2$). This operation can break soundness of any other block
$b$. In fact, $b$ can be a page table containing a writable mapping to $b'$, thus $b$ is sound in
$\tuple{\ArmState, h}$ but is unsound in $\tuple{\ArmState, h'}$. The side condition $type_s(h,h')$ ensures
that this case cannot occur: to safely allocate a new page table, the reference counter of $b'$ must be zero, thus
$b$ cannot contain a writable mapping to $b'$. 

Similarly, the following lemma shows that, if the page type is changed
only for blocks with zero references, then for all other page tables, the number of references is unchanged.
\begin{lemma}\label{Lemma:ref_cnt_counters}
  Assume $I\tuple{\ArmState,h}$ and $type_s(h,h')$. 
  For all blocks $b,b'$ if
  $h.\type(b) = h'.\type(b)$ then
   $\countt(\tuple{\ArmState, h}, b, b') =   \countt(\tuple{\ArmState, h'}, b, b') $.
\end{lemma}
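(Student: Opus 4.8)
The plan is to exploit the fact that, for a fixed machine state $\ArmState$, the value $\countt(\tuple{\ArmState, h}, b, b')$ is determined by only three ingredients: the page descriptors stored in block $b$ (which live in $\ArmState$ and hence do not change), the type $h.\type(b)$ of the \emph{source} block, and the type $h.\type(b')$ of the \emph{target} block --- the latter selecting which of the two non-trivial clauses of the reference-counter policy of Section~\ref{sec:tlsproof} applies (writable user mappings into a $\Data$ block, or $\LOneType$ page-table descriptors pointing into an $\LTwoType$ block), with the count being $0$ in every other situation. Since $\ArmState$ is fixed and $h.\type(b) = h'.\type(b)$ by hypothesis, the first, trivial observation is that when moreover $h.\type(b') = h'.\type(b')$ the two counts are literally the same expression evaluated on the same data. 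It therefore remains to treat the case $h.\type(b') \neq h'.\type(b')$, where I claim both counts vanish.

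So suppose $h.\type(b') \neq h'.\type(b')$. From $type_s(h,h')$ we immediately get $h.\refs(b') = 0$. First I would kill the left-hand count: by the reference-counter soundness part $RC$ of the invariant $I\tuple{\ArmState,h}$ we have $h.\refs(b') = \sum_i \countt(\tuple{\ArmState,h}, i, b')$, and since each summand is a non-negative integer this forces $\countt(\tuple{\ArmState,h}, b, b') = 0$. For the right-hand count $\countt(\tuple{\ArmState,h'}, b, b')$ I would case-split on $h'.\type(b) = h.\type(b)$: if it equals $\Data$ the count is $0$ because only page tables contribute references; otherwise $b$ is a page table in both states and I appeal to the page-table soundness component $\sound(\tuple{\ArmState,h}, b)$ of the invariant. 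Part (ii) of $\sound$ says every descriptor of $b$ writable in unprivileged mode points, in $h$, to a $\Data$-typed block, and part (iv) says every $\LOneType$ page-table descriptor of $b$ points, in $h$, to an $\LTwoType$-typed block. Now $h'.\type(b')$ differs from $h.\type(b')$, so: if $h'.\type(b') = \Data$ then $h.\type(b') \neq \Data$, hence no writable-user descriptor of $b$ points to $b'$ and the count is $0$; if $h'.\type(b') = \LTwoType$ (the only remaining case that could make the count non-zero, and only when $b$ is $\LOneType$) then $h.\type(b') \neq \LTwoType$, hence no page-table descriptor of $b$ points into $b'$ and again the count is $0$; and in all other sub-cases neither clause of the policy applies, so the count is $0$ by definition. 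Hence both sides equal $0$.

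The only genuinely delicate point is this last step: $RC$ and $type_s(h,h')$ constrain only the state $h$ and say nothing directly about reference counts in $h'$, so the argument cannot simply be symmetrised. The soundness invariant $\sound(\tuple{\ArmState,h}, \cdot)$ is exactly what bridges the gap --- it records that the descriptors physically present in $b$ were already, in $h$, consistent with $b'$ not having any type that would let it acquire a reference in $h'$. Beyond that, the proof is routine unfolding of the definitions of $\countt$, $RC$ and $\sound$; one should also note in passing that $\countt$ depends on no block types other than those of $b$ and $b'$, so no further global reasoning is required. As with Lemma~\ref{lem:inv-sound-type}, this lemma will then be used in the proof of Lemma~\ref{lem:inv-hyper} to show that the handler operations preserve soundness of the reference counters on the portion of the state they leave untouched.
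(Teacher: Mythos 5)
Your proof is correct. The paper in fact states Lemma~\ref{Lemma:ref_cnt_counters} without giving any proof (only the companion Lemma~\ref{lem:inv-sound-type} gets a one-line sketch), so there is nothing to compare against line by line; but your argument is exactly the one the surrounding text suggests, namely that type changes can only interact with $\countt$ through the clauses selected by the target block's type, which correspond to parts (ii) and (iv) of $\sound$. The decomposition is right: when $h.\type(b')=h'.\type(b')$ the two counts are syntactically the same function of the unchanged memory content of $b$; otherwise $type_s(h,h')$ gives $h.\refs(b')=0$, the $RC$ component of $I\tuple{\ArmState,h}$ forces every summand, hence $\countt(\tuple{\ArmState,h},b,b')$, to be zero, and the $h'$-side count vanishes because parts (ii) and (iv) of $\sound(\tuple{\ArmState,h},b)$ guarantee that no writable (resp.\ PT) descriptor of $b$ points to a block that is not $\Data$ (resp.\ not $L2$) in $h$ --- which is precisely the situation of $b'$ after the type flip. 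You also correctly identify the one genuinely delicate point: since $I$ and $RC$ are only assumed for $h$, the zero-ness of the $h'$-count cannot be obtained by symmetry and must be routed through the soundness of $b$'s descriptors in $h$. This is a complete and faithful reconstruction of the omitted proof.
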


Finally we use the following lemma to show that the well-typedness of a
block and its counted outgoing references are
independent from the content of the other physical blocks.
\begin{lemma}\label{Lemma:unchanged_counters}
  Let $\sigma, \sigma' \in \Sigma$
  such that $I\tuple{\ArmState,h}$. 
  If $\ArmState$ and $\ArmState'$  have the
  same memory content for the block $b$ then
  $\sound(\tuple{\ArmState', h}, b)$ and
  for every block $b'$
  $\countt(\tuple{\ArmState', h}, b, b') =
   \countt(\tuple{\ArmState', h}, b, b')$.
\end{lemma}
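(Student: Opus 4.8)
The plan is to prove Lemma~\ref{Lemma:unchanged_counters} by unfolding the definitions of $\sound$ and $\countt$ and observing that both are \emph{block-local}: the value of $\sound(\tuple{\sigma,h},b)$ and of $\countt(\tuple{\sigma,h},b,b')$ is determined entirely by the hypervisor state $h$, the static parameters of the architecture (the set $G_m$, the 4 KB block size, the fixed $L1$/$L2$ descriptor formats), and the bytes of physical block $b$ alone; no clause of either definition ever reads the memory contents of a block different from $b$. Since by hypothesis $\sigma$ and $\sigma'$ have identical content on block $b$ and share the same $h$, both conclusions then follow by rewriting.

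First I would handle $\countt$. By the reference-counter policy recalled in Sections~\ref{sec:arch:constraints} and~\ref{sec:tlsproof}, $\countt(\tuple{\sigma,h},b,b')$ is defined by a case split on $h.\pgtype(b)$: it is $0$ when $h.\pgtype(b)=\Data$; when $h.\pgtype(b)$ is $L1$ or $L2$ it is obtained by decoding the (fragment of a) page table stored in block $b$ --- i.e.~the descriptors residing in that 4 KB region --- and counting those that point into $b'$ and are either writable in $\UserMode$ (the data case, which in addition requires $h.\pgtype(b')=\Data$) or are PT-descriptors referencing $b'$ (the case where $b$ is an $L1$ block and $b'$ an $L2$ block). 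Every ingredient of this computation is read from $h$ (the types of $b$ and $b'$) or from the bytes of block $b$; hence $\sigma.\mem$ and $\sigma'.\mem$ agreeing on block $b$ gives $\countt(\tuple{\sigma,h},b,b')=\countt(\tuple{\sigma',h},b,b')$ directly.

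Next I would treat $\sound$. From $I\tuple{\sigma,h}$ we obtain $\sound(\tuple{\sigma,h},b)$: trivially when $h.\pgtype(b)=\Data$, and from the typing component of the invariant ($TC$, which demands $\sound(\tuple{\sigma,h},b)$ for every non-$\Data$ block) otherwise. Now the four clauses of $\sound$ --- (i) no unpredictable MMU setting, (ii) writable descriptors target only $\Data$ blocks, (iii) no $\UserMode$ access to blocks outside $G_m$, (iv) $L1$ descriptors target $L2$ blocks --- are each decided by inspecting the descriptors decoded from block $b$ together with $h.\pgtype$ of the \emph{target} blocks and the static set $G_m$; none of them consults the contents of any block other than $b$. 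Since $\sigma$ and $\sigma'$ agree on block $b$, $\sound(\tuple{\sigma,h},b)$ and $\sound(\tuple{\sigma',h},b)$ are the same Boolean, so $\sound(\tuple{\sigma',h},b)$ holds, as required.

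The only real difficulty is a bookkeeping one: one must check, clause by clause and case by case against the actual HOL4 definitions, that neither $\sound$ nor $\countt$ hides a dependency on foreign memory --- the natural trap being the ``no unpredictable setting'' clause and the four-block $L1$ layout, where it is tempting to phrase soundness over the whole 16 KB table rather than over the single block $b$. Once the definitions are confirmed block-local the argument is a routine rewrite, with $I\tuple{\sigma,h}$ used only to supply the soundness of block $b$ in $\sigma$ (and to make the decoding case analysis exhaustive). This lemma, alongside Lemmas~\ref{lem:inv-sound-type} and~\ref{Lemma:ref_cnt_counters}, is precisely what the proof of Lemma~\ref{lem:inv-hyper} needs in order to conclude that $H_a$, which rewrites only a bounded set of blocks, leaves the soundness and the outgoing reference counts of every untouched block unchanged.
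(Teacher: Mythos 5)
Your proof is correct and matches the paper's intent exactly: the paper states this lemma with only the one-line gloss that ``the well-typedness of a block and its counted outgoing references are independent from the content of the other physical blocks,'' and your block-locality argument (both $\sound$ and $\countt$ read only $h$, static data, and the bytes of block $b$ itself) is precisely the elaboration of that observation. You also correctly read through the statement's typos (the second occurrence of $\ArmState'$ in the $\countt$ equation should be $\ArmState$, and soundness of $b$ in $\sigma$ is supplied by the invariant, vacuously for $\Data$ blocks), and your remark about its role in the proof of Lemma~\ref{lem:inv-hyper} agrees with the paper.
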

For every functionality of the virtualization API (see Figure~\ref{APIfig}), Lemmas~\ref{lem:inv-sound-type}, ~\ref{Lemma:ref_cnt_counters} and~\ref{Lemma:unchanged_counters} help to limit the proof of Lemma~\ref{lem:inv-hyper} to only checking the well-typedness and soundness of the reference counter for the blocks that are affected by $H_a$.

\paragraph{\normalfont\textbf{Proof of Theorem~\ref{lem:invariant}}} The theorem directly
follows from Lemmas~\ref{lem:inv-usr} and~\ref{lem:inv-hyper}.


\section{Refinement}\label{sec:refinement}
To verify the implementation refinement relation (i.e. prove Theorem~\ref{lem:refinement}) we proved two auxiliary lemmas:
\begin{lemma}[Real MMU]\label{ref:mmu}
  Let $\tuple{\ArmState_1, h}\in\IdealStateSpace_I$ and $\ArmState_2
  \in \RealStateSpace$. If $\tuple{\ArmState_1, h}\ \mathcal{R} \
  \ArmState_2$ then
  $\ArmState_1 \equiv_{mmu} \ArmState_2$.
\end{lemma}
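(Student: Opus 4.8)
The plan is to reduce the statement to the one structural fact about the ARMv7 MMU already exploited in the proof of Lemma~\ref{lem:mmu-safe}: the result of $mmu(\cdot,\PLm,va,\req)$ is a function solely of the coprocessor registers and of the contents of the active L1 page table together with the L2 page tables reachable from it. Granting this, it suffices to show that both ingredients coincide in $\ArmState_1$ and $\ArmState_2$, and this is precisely what $\mathcal{R}$ and the invariant $I$ provide: clause (i) of $\mathcal{R}$ gives equality of all registers and coprocessor registers, clause (ii) gives equality of the guest memory $G_m$, and the invariant confines all relevant page tables to $G_m$.

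Concretely, I would unfold $\equiv_{mmu}$ and fix an arbitrary virtual address $va$, privilege level $\PLm$, and request $\req\in\{rd,wt,ex\}$, reducing the goal to $mmu(\ArmState_1,\PLm,va,\req)=mmu(\ArmState_2,\PLm,va,\req)$. From clause (i) of $\mathcal{R}$ the two states have identical coprocessor registers, hence the same \TTBRzero{} (so the walk starts at the same L1 base), the same \DACR{} (so domain faults are computed identically), and the same \SCTLR{} (so translation is enabled in both and proceeds through the same scheme). It then remains to prove that the walk reads equal memory in both states. Because $\tuple{\ArmState_1,h}\in\IdealStateSpace_I$, the well-typedness ($TC$) part of $I$ tells us that the active L1 table lies in a block of type $L1$, that every block of type $L1$ or $L2$ lies entirely inside $G_m$, and --- via $\sound$ --- that every page-table descriptor of that L1 table points to a block of type $L2$, again inside $G_m$. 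Hence, for a virtual address translated through the guest page tables, every byte consulted during the walk belongs to $G_m$, and by clause (ii) of $\mathcal{R}$ it has the same value in $\ArmState_1$ and $\ArmState_2$; consequently the two walks return the same physical address or the same denial. For virtual addresses in the reserved hypervisor range, clause (iv) of $\mathcal{R}$ (together with the hypervisor-memory-map part of $I$, which mirrors the master page table into every guest L1) fixes the translation to the master page table in both states, so agreement holds there too. Since these cases exhaust all $va$, the fixed instance is discharged, and generalising over $va,\PLm,\req$ yields $\ArmState_1\equiv_{mmu}\ArmState_2$.

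The main obstacle is, as for Lemma~\ref{lem:mmu-safe}, to make the ``the MMU only touches the active page tables'' claim precise against the concrete HOL4 ARMv7 MMU model: one needs an explicit over-approximation of the set of physical addresses the model reads during a walk and a proof, using $\sound$ and the $TC$-part of $I$, that this set is contained in $G_m$ for guest-range addresses and otherwise governed by clause (iv). The delicate point within this is the hypervisor mappings mirrored inside the guest L1 tables: one must verify that these entries are either section descriptors, which the guest-range argument already handles, or that they only affect addresses of the reserved range covered by clause (iv) --- so that the guest-range and reserved-range cases genuinely cover every $va$ with no gap. Once this footprint lemma is established, the rest is a routine case split discharged mostly by rewriting with the definitions of $\mathcal{R}$ and $I$.
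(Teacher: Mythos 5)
Your proof is correct and follows essentially the same route as the paper: the invariant confines the active L1 and all referenced L2 tables to the guest memory $G_m$, whose contents agree in the two states by clause (ii) of $\mathcal{R}$, and the coprocessor registers agree by clause (i), so the MMU walk — which depends only on these — yields identical results. The extra care you take over the mirrored hypervisor mappings and clause (iv) is a sensible elaboration but not a different argument, since those entries live inside the L1 table, which is already covered by the guest-memory equality.
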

The Lemma shows that TLS and implementation states have the same MMU
configuration. Its proof uses that the system invariant requires page
tables to be allocated inside the guest memory, whose content is
the same in the TLS and implementation states. Moreover, coprocessor registers contain the same data.

\begin{lemma}[Hypervisor page tables]\label{ref:mem}
  Let $\tuple{\ArmState_1, h} \in \IdealStateSpace_I$ and $\ArmState_2
  \in \RealStateSpace$. If $\tuple{\ArmState_1, h}\ \mathcal{R} \
  \ArmState_2$ then:
  \begin{enumerate}
    \item For all $pa$ and $\mathit{req}$, if $pa \in G_m$ then $mmu(\ArmState_2,
      \mathit{Gpa2va}(pa), \KernelMode, \mathit{req}) =  pa$.
    \item For every block $b$ and access request $\mathit{req}$,
      $mmu(\ArmState_2, tbl_{va} + 4*b, \KernelMode, \mathit{req}) = tbl_{pa} + 4*b$, where $tbl_{pa}$ is the physical address where the hypervisor data structure is allocated.
  \end{enumerate}
\end{lemma}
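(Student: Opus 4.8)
The plan is to reduce both claims to statements about the currently active level-one (L1) page table of $\ArmState_2$, together with any L2 tables it references, and then read off the translations directly from the ARMv7 MMU model. The starting points are the definition of the refinement relation $\mathcal{R}$ and the system invariant. From $\tuple{\ArmState_1,h}\,\mathcal{R}\,\ArmState_2$ I first obtain, by Lemma~\ref{ref:mmu}, that $\ArmState_1 \equiv_{mmu} \ArmState_2$, so it suffices to argue about translations in either state; I will work in $\ArmState_2$ since the claims are phrased there. The ingredients I will use are: clauses (ii)/(iii) of $\mathcal{R}$, which fix the contents of guest memory and of the memory region holding the hypervisor data structures, and clause (iv), which says the reserved virtual addresses are mapped exactly as in the hypervisor master page table; and the fact that $\tuple{\ArmState_1,h}\in\IdealStateSpace_I$, whose well-typedness component $TC$ guarantees that the active L1 lives in a type-$L1$ block and that the reserved part of the virtual address space is mirrored from the master page table with an injective translation and privileged access. (This mirroring is exactly what Section~\ref{sec:architecture} describes, and it is built into the invariant precisely so that lemmas of this kind, and their preservation in Theorem~\ref{lem:invariant} and Theorem~\ref{lem:refinement}, go through.)

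For part~(1), one fixes $pa\in G_m$ and $\mathit{req}$. By construction $\mathit{Gpa2va}$ maps $G_m$ into the reserved virtual window and is the inverse of the injective translation the master page table realises on that window, so $\mathit{Gpa2va}(pa)$ is a well-defined virtual address whose L1 descriptor (or L1 PT-descriptor plus L2 entry) is, by the mirroring clause of the invariant, identical in $\ArmState_2$ to the corresponding master descriptor; clause (iv) of $\mathcal{R}$ together with the memory agreement of clauses (ii)/(iii) guarantees that if this descriptor is a PT-descriptor the referenced L2 table also coincides with the master one. Unfolding the ARMv7 two-level translation on that fixed descriptor — whose base-address, domain, access-permission and XN bits are pinned down by the master page table — yields translated physical address $pa$ and grants the access permissions the master page table assigns to that region in $\KernelMode$ (read, write, and, as the universal quantification over $\mathit{req}$ in the statement requires, execute). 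Hence $mmu(\ArmState_2,\mathit{Gpa2va}(pa),\KernelMode,\mathit{req}) = pa$.

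Part~(2) is the same argument applied to the whole word array $[\,tbl_{va},\,tbl_{va}+4\cdot 2^{20}\,)$, which the master page table maps $1$-to-$1$ onto $[\,tbl_{pa},\,tbl_{pa}+4\cdot 2^{20}\,)$. For a block index $b\in\{0,\dots,2^{20}-1\}$ the word address $tbl_{va}+4b$ lies inside this window (the arithmetic stays within $2^{32}$ because the window is $4$\,MB and suitably aligned), so its L1/L2 entries are again the fixed master entries, the invariant and clauses (ii)/(iii)/(iv) of $\mathcal{R}$ make them agree in $\ArmState_2$, and unfolding the translation gives $tbl_{pa}+4b$ with the required privileged access granted.

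The main obstacle is bookkeeping rather than anything conceptual: one must (a) pin down in the invariant exactly which L1 (and L2) descriptors encode the reserved mappings and verify, against the detailed ARMv7 MMU model, that feeding those descriptors to $mmu(\cdot,\cdot,\KernelMode,\cdot)$ really produces the intended injective translation with the required privileged rights — this means unfolding section- versus page-descriptor cases, domains, and access-permission and XN bits; (b) confirm that the reserved virtual window is disjoint both from any virtual addresses the guest may have mapped on its own and from the hypervisor code and vector-table region, so that the mirrored entries are never shadowed — this disjointness has to be stated as part of the invariant $I$ (or of the soundness predicate $\sound$); and (c) discharge the address arithmetic showing that $\mathit{Gpa2va}$ and $tbl_{va}+4(\cdot)$ stay inside the window for all of $G_m$ and all block indices, which forces $G_m$ and the block array to be sized to fit the reserved window. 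None of this needs the instruction-set semantics — only the MMU model and the invariant — so the proof stays within the memory-subsystem fragment of the HOL4 development.
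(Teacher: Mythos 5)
Your proposal is correct and matches the paper's (very terse) treatment of this lemma: the paper offers only a one-sentence gloss, but the intended argument is exactly what you give — clause (iv) of $\mathcal{R}$ pins the reserved virtual window to the master page table's injective mapping, $\mathit{Gpa2va}$ and the $tbl_{va}$ array are by construction inverses of / contained in that mapping, and unfolding the ARMv7 MMU model on those fixed descriptors yields the stated translations. Your detour through Lemma~\ref{ref:mmu} is harmless but unnecessary, since both claims concern $\ArmState_2$ directly and clause (iv) already constrains its page tables.
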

The Lemma shows that the implementation is always able to access the guest memory
and the hypervisor data structures, and that the computed physical
addresses match the expected values.

\paragraph{\normalfont\textbf{Proof sketch of Theorem~\ref{lem:refinement}}}
To prove that the refinement is preserved by all possible transitions
 we verify independently the guest and
hypervisor transitions.
For guest transitions, Theorem~\ref{lem:No-Exfiltration} (No-Exfiltration) and
Lemma~\ref{lem:mmu-safe} (MMU-safety) guarantee
that the guest can change neither the memory outside $G_m$ nor
the page tables. Thus it is sufficient to show that the physical addresses
of the hypervisor data structures are outside the guest memory. 
Moreover, Theorem~\ref{lem:No-Infiltration} (No-Infiltration) guarantees that the
guest transition is not affected by a part of the state that is not
equivalent in $\tuple{\ArmState_1, h}$ and $\ArmState_2$.
For the hypervisor transition we used a compositional approach. First,
we verified that all low-level operations (i.e. reads and
updates of the page tables, reads and updates of the hypervisor data
structures) preserve the refinement relation. Then we compose these
results to show that the TLS and implementation transitions behave equivalently.


\paragraph{\normalfont\textbf{Proof sketch of Corollary~\ref{lem:MLS-sec}}}
The proof depends on the fact the the relation $\mathcal{R}$ is left-total and surjective.
Proving that the security properties of the TLS are transferred to the
implementation model is simplified by the definition of the
refinement relation.
For example, Lemma~\ref{ref:mmu} and Theorem~\ref{lem:mmu-integrity} are used to show that the MMU configuration cannot be changed by the untrusted guest.
Assume $\ArmState_2 \impto_{0} \ArmState'_2$ and let 
$\tuple{\ArmState_1, h}$ be a TLS state such that
$\tuple{\ArmState_1, h} \to_{0} \tuple{\ArmState'_1, h'}$ and
$\tuple{\ArmState_1, h} \mathcal{R} \ \ArmState_2$.
Since the refinement is preserved by all transitions (Theorem~\ref{lem:refinement}), exists $\tuple{\ArmState'_1, h'}$ such that
$\tuple{\ArmState'_1, h'} \mathcal{R} \ \ArmState'_2$.
Lemma~\ref{ref:mmu}  yields $\ArmState_1 \equiv_{\mathit{mmu}} \ArmState_2$
and Theorem~\ref{lem:mmu-integrity} (MMU-integrity) guarantees that 
$\ArmState_1 \equiv_{\mathit{mmu}} \ArmState'_1$,
thus $\ArmState_2 \equiv_{\mathit{mmu}} \ArmState'_1$.
Finally, Lemma~\ref{ref:mmu} yields $\ArmState'_1 \equiv_{\mathit{mmu}} \ArmState'_2$, thus $\ArmState_2 \equiv_{\mathit{mmu}} \ArmState'_2$.
Similar reasoning is used to prove that properties \emph{no-exfiltration} and
\emph{no-infiltration} are transferred to the implementation model,
by showing that, if two TLS states have the same observations
(i.e. $O_g(\tuple{\ArmState_1, h}) = O_g(\tuple{\ArmState'_1, h'})$ or $O_s(\tuple{\ArmState_1, h}) = O_s(\tuple{\ArmState'_1, h'})$) and
the states are refined by two implementation states (i.e. $\tuple{\ArmState_1, h} \mathcal{R} \ \ArmState_2$ and $\tuple{\ArmState'_1, h'} \mathcal{R} \ \ArmState'_2$),
then the two implementation states have the same observations (i.e. $O^r_g(\ArmState_2) = O^r_g(\ArmState'_2)$ or $O^r_s(\ArmState_2) = O^r_s(\ArmState'_2)$).

\section{Binary Verification}\label{sec:binary}

Binary analysis is key requirement to ensure security of low-level software platform, like hypervisors.
Machine code verification obviates the necessity of trusting the compilers. Moreover, low level programs mix
structured code (e.g. implemented in C) with assembly and use instructions (e.g.
mode switches and coprocessor interactions) that are not part of
the high level language, thus making difficult to use verification tools that target user level code.

For our hypervisor the main goal of the verification of the binary code
is to prove Theorem~\ref{lem:binary-verification}.
This verification relies on Hoare logic
and requires several steps.
The first step (Section~\ref{sec:binary:contracts})
is transforming the relational  reasoning into a set of contracts for the
hypervisor handlers and guaranteeing that the refinement
is established if all contracts are satisfied.
Let $C$ be the binary code of one of the handlers,
the contract $\{P\}C\{Q\}$ states that
if the precondition $P$ holds in the starting state of $C$, then the
postcondition $Q$ is guaranteed by $C$. 

Then, we adopt a standard semi-automatic strategy
to verify the contracts. First, the weakest liberal precondition $\mathit{WLP}(C,Q)$ is computed on the
starting state, then it is verified that
 the precondition $P$ implies
the weakest precondition. 

The second verification step (computation of weakest preconditions)
 can be performed directly in HOL4
using the ARMv7 model. However, this task requires a significant
engineering effort. We adopted a more practical
approach, by using the Binary Analysis Platform (BAP)~\cite{Brumley:2011:BBA:2032305.2032342}.
The BAP tool-set provides
platform-independent utilities to extract control flow graphs
and program dependence graphs, to perform symbolic execution
and to perform weakest-precondition calculations. These utilities reason
on the BAP Intermediate Language (BIL), a small and formally specified
language that models instruction evaluation as compositions of
variable reads and writes in a functional style.

The existing BAP front-end to translate ARM programs to BIL
lacks several features required to handle our binary code: Support of
ARMv7, management of processor status registers,
banked registers for privileged modes and coprocessor registers. 
For this reason we developed a new front-end, which is presented in
Section~\ref{sec:binary:lifter}, that converts an ARMv7 assembly
fragment to a BIL program.

The final verification step consists of checking that the precondition $P$ implies
the weakest precondition. This task can be fully automated
if the predicate $P \Rightarrow \mathit{WLP}(C, Q)$ is equivalent to a
predicate of the form $\forall \vec x.A$ where $A$ is quantifier free.
The validity of $A$ can then be checked using a Satisfiability Modulo
Theory (SMT) solver that supports bitvectors to handle operations on words.
In this work, we used STP~\cite{DBLP:conf/cav/GaneshD07}.

An alternative approach for binary verification is to use the ``decompilation''
  procedure developed by Myreen~\cite{DBLP:conf/fmcad/MyreenGS08}. This procedure takes an
  ARMv7 binary and produces a HOL4 function that behaves equivalently (i.e. implements 
the same state transformation).
This result allows to lift the verification
of properties of assembly programs to reasoning on HOL4
functions. However, the latter task can be expensive due to the lack of
automation in HOL4.

\subsection{Soundness of the Verification Approach}

  The procedure described here to establish the functional correctness of the hypervisor code relies on four main arguments.
  \begin{enumerate}
  \item The HOL4 procedures that evaluate the effects of a given instruction in
    the ARMv7 model specify the updates to the processor state correctly.
    We use the ARMv7 step theorems to guarantee the correctness of this task.
  \item The lifter transforms this state update information into an equivalent
    list of single-variable assignments in BIL. The correctness of this part of
    the lifter is an assumption for now.
  \item The expressions in each update of a processor component are correctly
    translated to BIL expressions in the list of assignments, preserving their
    semantics. This has been proven for our lifter.
  \item The binary code fragment that is lifted is actually executed on the ARMv7 hardware.
  \end{enumerate}
  The last argument relies on the fact that the boot loader places the
  unmodified hypervisor image to the right place in memory. This is another
  assumption since we do not verify our boot loader. Furthermore there must not
  be self-modifying code. The easiest way to enforce this is to partition the
  hypervisor memory via its page table into data and code region and prove an
  invariant that the first is non-executable but the latter is non-writeable.
  There is no such protection against self-modifying code in the hypervisor at
  the moment. Finally, one needs to show that the binary code is not
  interrupted, thus proving that the hypervisor is in deed non-preemptive. We do
  not have a full proof of the statement, but there are provisions in the lifter
  to show the absence of ARMv7 interrupts and exceptions. 

  For system call and unknown instructions, the lifter generates BIL statements
  that always fail, i.e., one can only verify programs in BAP that do not use
  such instructions. We follow the same approach for fetches, jumps, and memory
  instructions accessing constant addresses which are not mapped in the
  hypervisor's static page table. Thus such operations cannot produce pre-fetch
  or data aborts. Additional care has to be taken to distinguish data and code
  regions to avoid permission faults due to writes to the code region or fetches
  from the data region, however there are no such checks at the moment. Indirect
  jumps are solved dynamically based on the lifted BIL program (see
  Sect.~\ref{sec:tools}) and for any jump to a location that is not defined in
  the program, i.e., not in the region accessible by the hypervisor, analysis
  with BAP will give an error. For dynamic memory accesses the lifter is able to
  insert assertions that the corresponding address is mapped, however the
  feature is currently not activated automatically. At last, the reception
  of external interrupts should not be re-enabled during hypervisor execution. Currently this invariant is not
  checked in the code verification but it could be easily added as an assertion
  between every instruction.

\subsection{The Contracts Generation}\label{sec:binary:contracts}
Let $C$ be the binary code of one of the handlers,
we define the precondition $P$ and the postcondition $Q$ such that
the contract subsumes the refinement:
\begin{itemize}
  \item $P(\ArmState_2) = $ exists $\ArmState_1$ such that
    $\ArmState_1 \mathcal{R'} \ArmState_2$ 
  \item $Q(\ArmState'_2, \ArmState_2) = $ for all $\ArmState_1,
    \ArmState'_1$ if $\ArmState_1 \mathcal{R'} \ArmState_2$ and 
    $\ArmState_1 \impto_1 \ArmState'_1$ then 
    $\ArmState'_1 \mathcal{R'} \ArmState'_2$
\end{itemize}
These contracts are not directly suitable 
for the verification of the binary code because the contracts 
quantify on states ($\ArmState_1$ and $\ArmState'_1$) that
are in relation with the pre-state ($\ArmState_2$) and post-state ($\ArmState'_2$)
of the binary code.
We developed an HOL4 inference procedure specific for the structure of our hypervisor.
The output of the procedure is a
proof guaranteeing that the original contract $\{P\}C\{Q\}$
is satisfied if a ``simplified'' contract $\{P'\}C\{Q'\}$ is met.
That is, 
for every $\ArmState_2, \ArmState'_2$ the predicate
$P'(\ArmState_2) \Rightarrow Q'(\ArmState'_2, \ArmState_2)$
implies 
$P(\ArmState_2) \Rightarrow Q(\ArmState'_2, \ArmState_2)$.

This procedure makes heavy use of the simplification rules
and decision procedures of HOL4.
We informally summarise how this procedure works for the memory resource.
The precondition $P'$ is generated by transferring the hypervisor invariant $I$
from the abstract model down to the real model.
This is possible because (i)  $\mathcal{R'}$ constrains the memory holding the
hypervisor data structures to be the same in $\ArmState_2$ and
$\ArmState_1$, (ii)  $\mathcal{R}$ (the refinement between the abstract model and the implementation model)
guarantees that this memory area contains a projection of the hypervisor data structures in the TLS
state, (iii) on the TLS state the hypervisor invariant holds.

For the postcondition $Q'$ we proceed as follows.
If $\ArmState_1 \to_1 \ArmState'_1$ then $\ArmState'_1 =
H_r(\ArmState_1)$. Let $A$ be the set of memory addresses that are
constrained by the refinement relation $\mathcal{R}'$ and let $B$ be the set
of addresses that are modified by $H_r$.
The set $B$ is usually easy to identify in HOL4, thanks to its
symbolic execution capability and the lemmas that have been already proven for
the tasks of Section~\ref{sec:refinement}.
For each handler we demonstrate that $B \subseteq A$.

For every address $a \in (\bar B) \cap A$ (namely addresses
constrained by the refinement relation that are not updated) 
we add to $Q'$ the constraint $\ArmState'_2.\mathit{mem}(a) =
\ArmState_2.\mathit{mem}(a)$. This uses $\ArmState_1.\mathit{mem}(a) = \ArmState_2.\mathit{mem}(a)$ and the refinement for the address
$a$ ($\ArmState'_1.\mathit{mem}(a) = \ArmState'_2.\mathit{mem}(a)$).

For every address $a \in B$  we make use of the HOL4 rewriting engine to
obtain a naive symbolic execution of the handler specification.
We use HOL4 to symbolically compute $H_r(\ArmState_1)$ then we use the
precondition $\ArmState_1 \mathcal{R'} \ArmState_2$ to rewrite the
result and make sure that this is expressed only in terms of
$\ArmState_2$. Let $\mathit{exp}$ be the resulting expression, we add to $Q'$
the constraint $\ArmState'_2.\mathit{mem}(a) = \mathit{exp}.\mathit{mem}(a)$.

When the symbolic execution is too complex (e.g. too many outcomes are
possible according to the initial state), we split the verification by
generating multiple contracts $\{P_1\}C\{Q_1\}, \dots,
\{P_n\}C\{Q_n\}$, where $P_i = P(\ArmState_2) \wedge A_i(\ArmState_2)$
and $\bigvee_i A_i$ is a valid  formula (i.e. all possible cases are
taken into account).

\subsection{Translation of ARMv7 to BIL}\label{sec:binary:lifter}
The target language of the ARMv7 BAP front-end is BIL, a simple single-variable
assignment language tailored to model the behaviour of assembly
programs and developed to be platform independent.
A BIL program is a sequence of statements. Each statement
 can affect the system state by assigning
the evaluation of an expression to a variable, (conditionally or unconditionally) modifying the control flow, 
terminating the system in a failure state if an assertion does
not hold and unconditionally halting the system in a
successful state. 
The BIL data types for expressions and variables include boolean,
words of several sizes and memories. The
main constituent of BIL statements are expressions, that include
constants, standard bit-vector binary and unary operators,
and type casting function. Additionally, an expression 
can read several words from a memory or generate a new
memory by changing a word in a given one.

We developed the new
front-end on top of the HOL4 ARM model (see Section~\ref{sec:armv7}),
so that the soundness of the transformation from an ARM assembly
instruction to its corresponding BIL program relies on the correctness
of the ARM model used in HOL4 and not on a possibly different formalization of
ARMv7. Our approach is illustrated in Figure~\ref{fig:lifter}.

\begin{figure}
  \centering
  \includegraphics[scale=0.45]{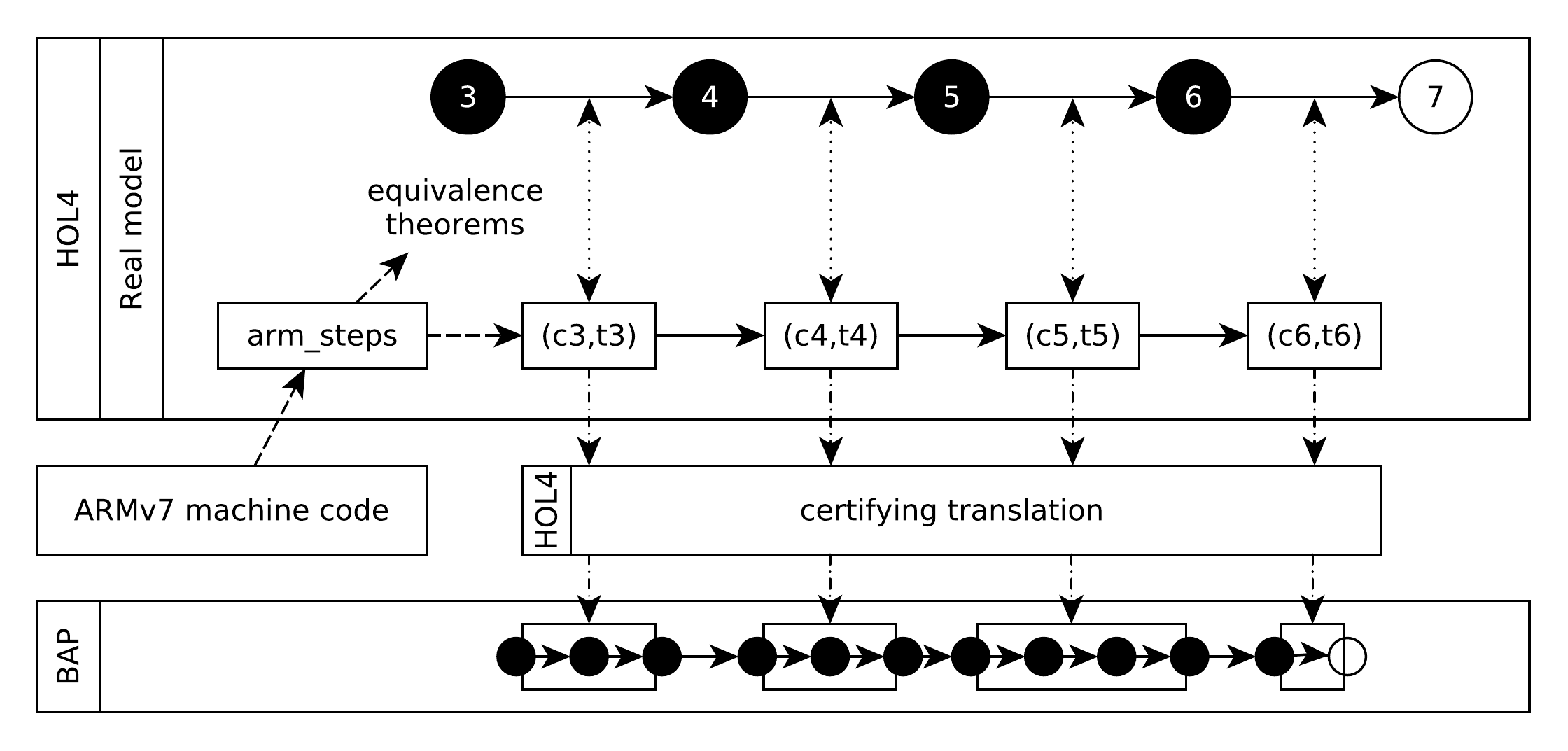}
  \caption{Lifting machine code to BIL using the HOL4 ARMv7 model. The $\mathit{arm\_steps}$ function translates machine instructions into steps $st_i$ consisting of guards $c_i$ and transition functions $t_i$. Their effect is equivalent to the hypervisor computation in the real model (states 3-7, cf. Fig.~\ref{fig:trace}). Each step $st_i$ is in turn translated into equivalent BIL code.}
  \label{fig:lifter}
\end{figure}

The HOL4 model provides a
mechanism to  statically compute the effects of an instruction via the
$\mathit{arm\_steps}$ function.
Let $\mathit{inst}$ be 
the encoding of an instruction, then
$\mathit{arm\_steps}(\mathit{inst})$ returns the possible execution steps
$\{\mathit{st}_1,\dots,\mathit{st}_n\}$.
Each step $\mathit{st}_i=(c_i,t_i)$ consists of the condition $c_i$ that enables the
transition and the function $t_i$ that transforms the starting state
into the next state.
The function $\mathit{arm\_steps}$ is a certifying HOL4 procedure, since its output is a theorem demonstrating that for every $\ArmState \in \Sigma$ if 
$\mathit{fetch}(\ArmState)=\mathit{inst}$ and $c_i(\ArmState)$ holds then
$\ArmState \to_{\KernelMode} t_i(\ArmState)$.
For standard ARM decoding the function $\mathit{fetch}$
reads four bytes from memory starting from 
the address pointed to by the program counter.

The translation procedure involves the following steps,
(i) mapping HOL4 ARM states to BIL states and
(ii) for each instruction of the given assembly fragment 
 producing the BIL fragment that emulates the $\mathit{arm\_steps}$ outputs.
To map an ARM state to the corresponding BIL state
we use a straightforward approach. A BIL variable
is used to represent a single component of the machine
state: for example, the variable $\mathit{R0}$ represents the register number zero
and the variable $\mathit{MEM}$ represents the system memory.

To transform an ARM instruction to the corresponding BIL fragment
we need to capture all the possible effects of its execution
 in terms of affected registers, flags and memory locations.
The generated BIL fragment should simulate the
behaviour of the instruction executed on an ARM machine.
Therefore,  to obtain a BIL fragment for an instruction we need to translate the
predicates $c_i$ and their corresponding transformation functions
$t_i$.
This task is accomplished using symbolic evaluation of the predicates and
the transformation functions. The input of the evaluation is a symbolic
state in which independent variables are used to represent each state
register, flag, coprocessor register and memory.
This approach allows us to obtain a one-to-one mapping between the
symbolic state variables and the BIL state variables.
To transform a predicate $c_i$, we apply the predicate to a symbolic
ARMv7 state, thus
obtaining a symbolic boolean expression in which free-variables are a subset of the symbolic state variables. Similarly, to map a transformation function $t_i$, we apply $t_i$ to
a symbolic state, thus obtaining
a new state in  which each  register, flag and affected memory
location is assigned a symbolic expression. 
Intuitively, for each instruction we produce the following BIL fragment:
\begin{lstlisting}[mathescape]
label GUARD_1
cjmp $|b_1|$, EFFECT_1, GUARD_2
...
label GUARD_N
cjmp $|b_n|$, EFFECT_n, ERROR
label ERROR
assert false
\end{lstlisting}
Where ``cjmp''  is the BIL instruction for conditional jump and
$|b_i|$ is a BIL boolean expression obtained by translating the
symbolic evaluation of $c_i$.
Then, for each step $i$ we symbolically evaluate the transformation
$t_i$ and for each field (i.e. memory locations, registers, flags and coprocessor
registers) that has been updated
we transform the resulting symbolic expression and assign it to
the corresponding BIL variable, generating a fragment
\begin{lstlisting}[mathescape]
label EFFECT_i
var_1 = $|\mathit{exp}_1|$
...
var_n = $|\mathit{exp}_n|$
\end{lstlisting}

The described lifting procedure is straightforward. However, its
soundness depends on the correct transformation of HOL4 terms
(e.g. $|b_n|$ and  $|\mathit{exp}_n|$) to BIL expressions. Since 
the number of HOL4 operators that occur in the generated expressions is
huge, we cannot rely on a simple syntactical transformation to obtain
a robust conversion of them to BIL.
Moreover, the transformation of HOL4 terms to BIL expressions is
used to convert the pre/post conditions of our contracts from HOL4 to BAP. 
For this reason, we formally modelled in HOL4 the BIL expression
language (by providing a deep embedding of BIL expression in HOL4) and 
the translation procedure $\mathit{liftExp}$ certifies its output:
\[
\mathit{liftExp}(\mathit{exp}) = (\mathit{exp'}, \vdash \mathit{exp}=\mathit{exp}')
\]
In particular, the translation procedure procedure yields a theorem demonstrating that
the HOL4 input term $\mathit{exp}$ is equivalent to the BIL expression  $\mathit{exp}'$.

In order to dynamically generate the certifying theorem, the 
translation procedure is implemented in ML, which is the 
HOL4 meta language. 
The translation syntactically analyses and deconstructs the input
expressions to select the theorems to use in the HOL4 conversion
and rewrite rules. For terms composed by nested expressions the
procedure acts recursively.

\subsection{Supporting Tools}\label{sec:tools}
To compute the weakest precondition of a program is necessarily to statically know the
control flow graph (CFG) of the program. This means that the algorithm depends 
on the absence of indirect jumps. Even if the hypervisor C-code avoids
their explicit usage (e.g. by not using function pointers), the
compiler introduces an indirect jump for each function exit point (e.g. the instruction at the address \verb|0x20C| in Figure~\ref{fig:IndirectJumpExample},  is an indirect jump).
Solving an indirect jump (i.e. enumerating all possible locations that can be target of the jump)
is depending on checking the correctness of other
properties of the application
(e.g. 
 the link register, which is usually used to track the return address of functions, can be pushed and popped from the stack, thus making
 the correctness of the control flow dependent on the integrity of the
 stack itself).
Since we are interested in solving indirect jumps of code
fragments that must respect contracts (Hoare triples
$\{P\}C\{Q\}$), we implemented a simple iterative procedure that uses
STP to discover all possible indirect jump targets under the contract
precondition $P$.

 \begin{enumerate}
   \item \label{IND_JMP_START}
     The CFG of the of 
     $C$ fragment is computed using BAP. From the CFG, the list $L$ of reachable
     addresses containing an indirect jump is extracted
   \item For each address $a \in L$, the code fragment $C$ is modified as follows:
     \begin{enumerate}
       \item let $\mathit{exp_a}$ be the expression used in the indirect jump
       \item the indirect jump is substituted with an assertion, which requires
         $\mathit{exp_a}$ to be different from a fresh variable $\mathit{fv}_a$; if such
         assertion fails, i.e.~$\mathit{exp}_a=\mathit{fv}_a$, the modified fragment $C$ terminates with a fault,
         otherwise it correctly terminates
     \end{enumerate}
   \item the new fragment has no indirect jump; the weakest precondition $\mathit{WP}$ of the postcondition $\mathit{true}$
     (i.e. correct termination) is computed
   \item  the SMT solver searches for an assignment of the free variables (including all
     $\mathit{fv}_i$) that invalidates $P \Rightarrow \mathit{WP}$
   \item if the SMT solver discovers a counterexample which involves the indirect jump at the address $a$,
     then it also discovers a possible target for this jump via selected assignment of the variable $\mathit{fv}$.
     Let $\mathit{exp}$ be the expression used in the indirect jump.
     The fragment $C$ is transformed by substituting the indirect jump with a conditional statement;
     if $\mathit{exp}$ is equal to $\mathit{fv}$ then jump to the fixed address $\mathit{fv}$, otherwise
     jump to the expression $\mathit{exp}$: \verb|jmp exp| will be transfed into \begin{lstlisting}
cjmp exp == fv; value; new_label
label new_label: jmp exp
     \end{lstlisting}
     
   \item if the SMT solver does not find a counterexample, then every indirect jump is either
     unreachable or all its possible targets have been discovered.
     The fragment $C$ is transformed by substituting every indirect jump with
     an assertion that always fails (\verb|assert false|).
   \item The procedure is restarted. Note that the inserted conditional statements prevent that
     the discovered assignments of $\mathit{fv}_a$ can be used to invalidate the formula
     by the SMT solver in the next iteration.
 \end{enumerate}
In order to handle the greater complexity of the hypervisor code respect to the separation
kernel verified in~\cite{dam2013formal}, we re-engineered this tool as a BAP plug-in. 
A particular problem that we face is that the CFG can contain
loops if the same internal function of the hypervisor is called twice from different
points in the program. Integrating the procedure with BAP allowed us to reuse
the existing loop-unfolding algorithms to break these artificial loops.

 \newsavebox{\IndJmpASM}
\begin{lrbox}{\IndJmpASM}
\begin{lstlisting}
0x100 bl #0x200  

0x104 ...
0x108 bl #0x200  

0x10C ...
// function
0x200 push LR    


0x204 str R1, R2 

0x208 pop LR     


0x20C b LR
...
\end{lstlisting}
\end{lrbox}

\newsavebox{\IndJmpBIL}
\begin{lrbox}{\IndJmpBIL}
\begin{lstlisting}
label pc_0x100 PC = 0x100;
  LR = PC+4; jmp pc_0x100
label pc_0x104 ...
label pc_0x108 PC = 0x108;
  LR = PC+4; jmp pc_0x100
label pc_0x10C ...

label pc_0x200 PC = 0x200;
  mem=store32(mem, SP, LR);
  SP=SP-4
label pc_0x204 PC = 0x204;
  mem=store32(mem, R1, R2);
label pc_0x208 PC = 0x208;
  LR=load32(mem, SP+4);
  SP=SP+4
label pc_0x20C PC = 0x20C;
  jmp LR
\end{lstlisting}
\end{lrbox}

 \begin{figure}
 \centering
 \begin{subfigure}[b]{0.3\linewidth} 
     \usebox{\IndJmpASM}
     \caption{Assembly}
     \label{fig:IndirectJumpExample-arm}
 \end{subfigure}\hspace{1cm}
 \begin{subfigure}[b]{0.4\linewidth} 
     \usebox{\IndJmpBIL}
     \caption{BIL}
     \label{fig:IndirectJumpExample-bap}
 \end{subfigure}
   \caption{Indirect jump example}
   \label{fig:IndirectJumpExample}  
 \end{figure}

 \begin{figure}
 \begin{subfigure}[t]{0.25\linewidth} 
   \raisebox{0.25\height}{\includegraphics[width=0.9\linewidth]{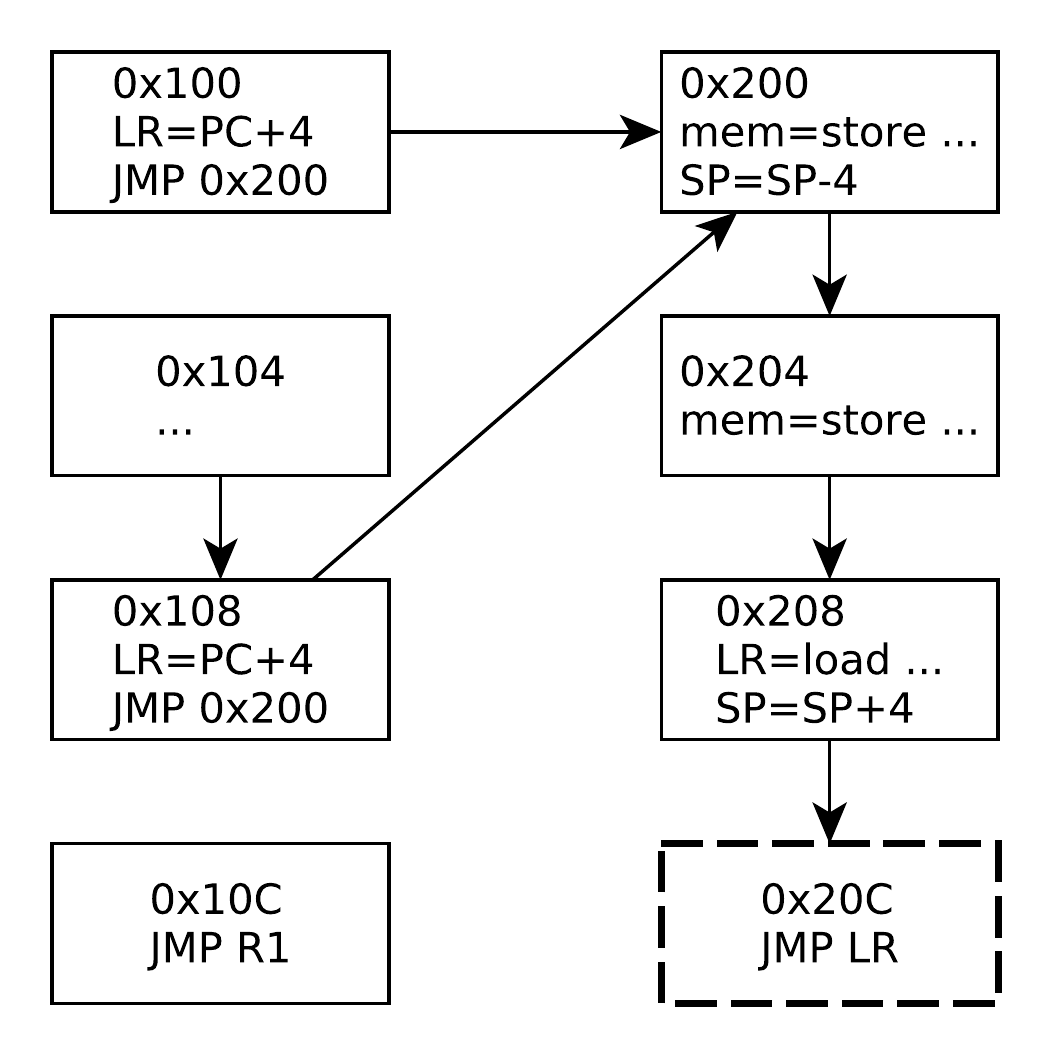}}
     \caption{}
     \label{fig:IndirectJumpExample-cfg1}
  \end{subfigure}\hspace{0.25cm}
 \begin{subfigure}[t]{0.25\linewidth} 
   \includegraphics[width=1\linewidth]{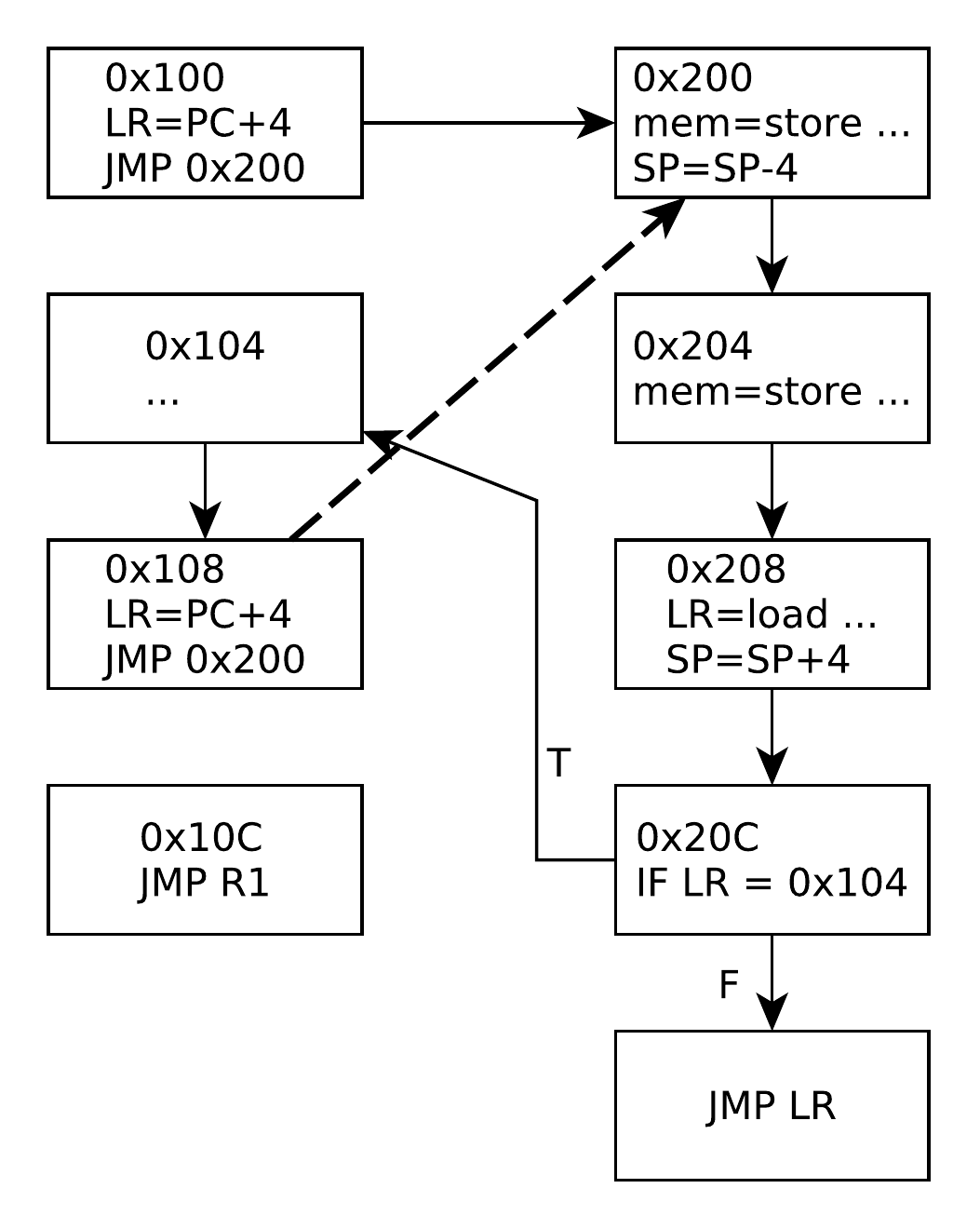}
     \caption{}
     \label{fig:IndirectJumpExample-cfg2}
  \end{subfigure}\hspace{0.25cm}
 \begin{subfigure}[t]{0.45\linewidth} 
   \includegraphics[width=1\linewidth]{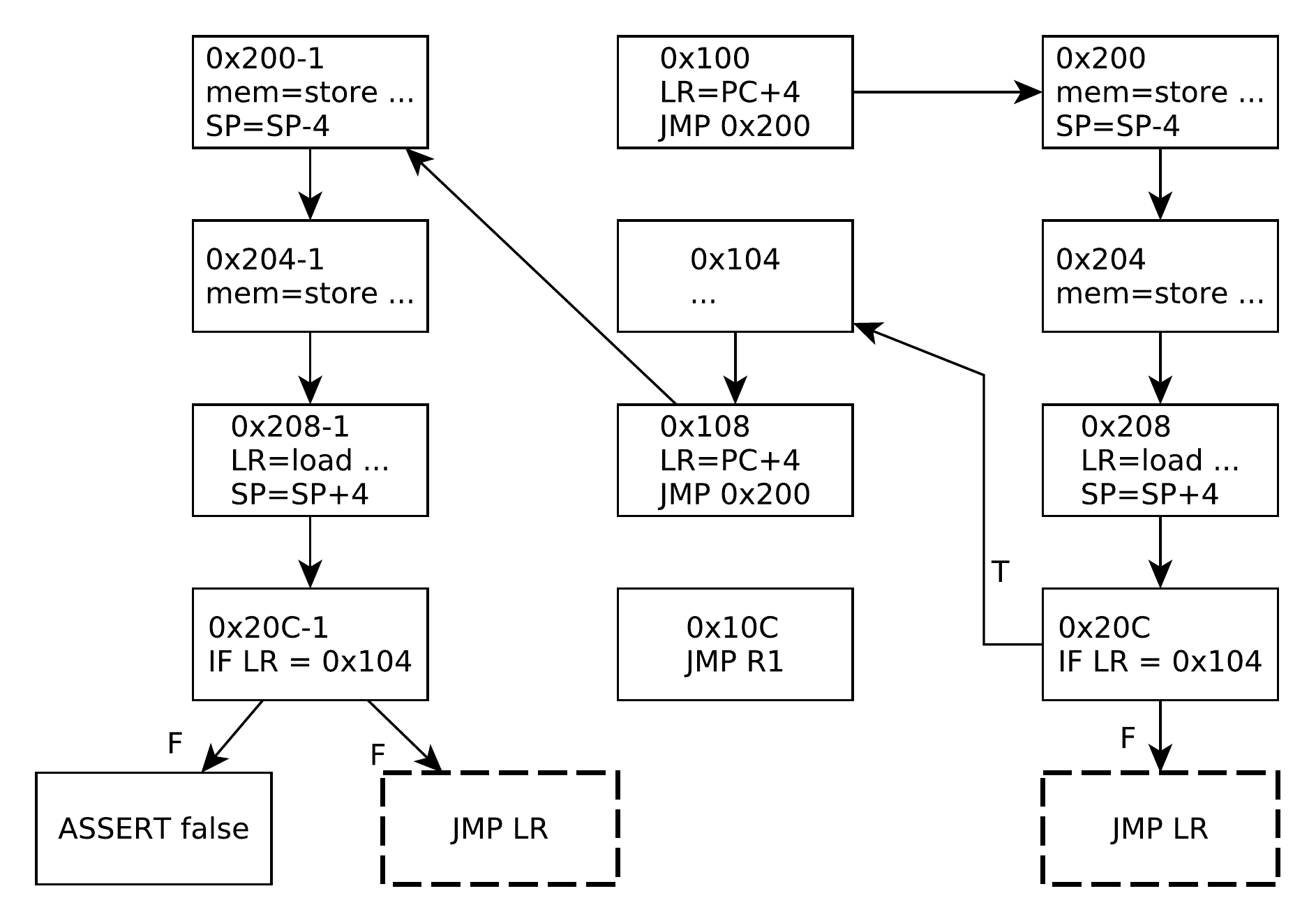}
     \caption{}
     \label{fig:IndirectJumpExample-cfg3}
  \end{subfigure}\\
 \begin{subfigure}[b]{0.45\linewidth} 
   \includegraphics[width=1\linewidth]{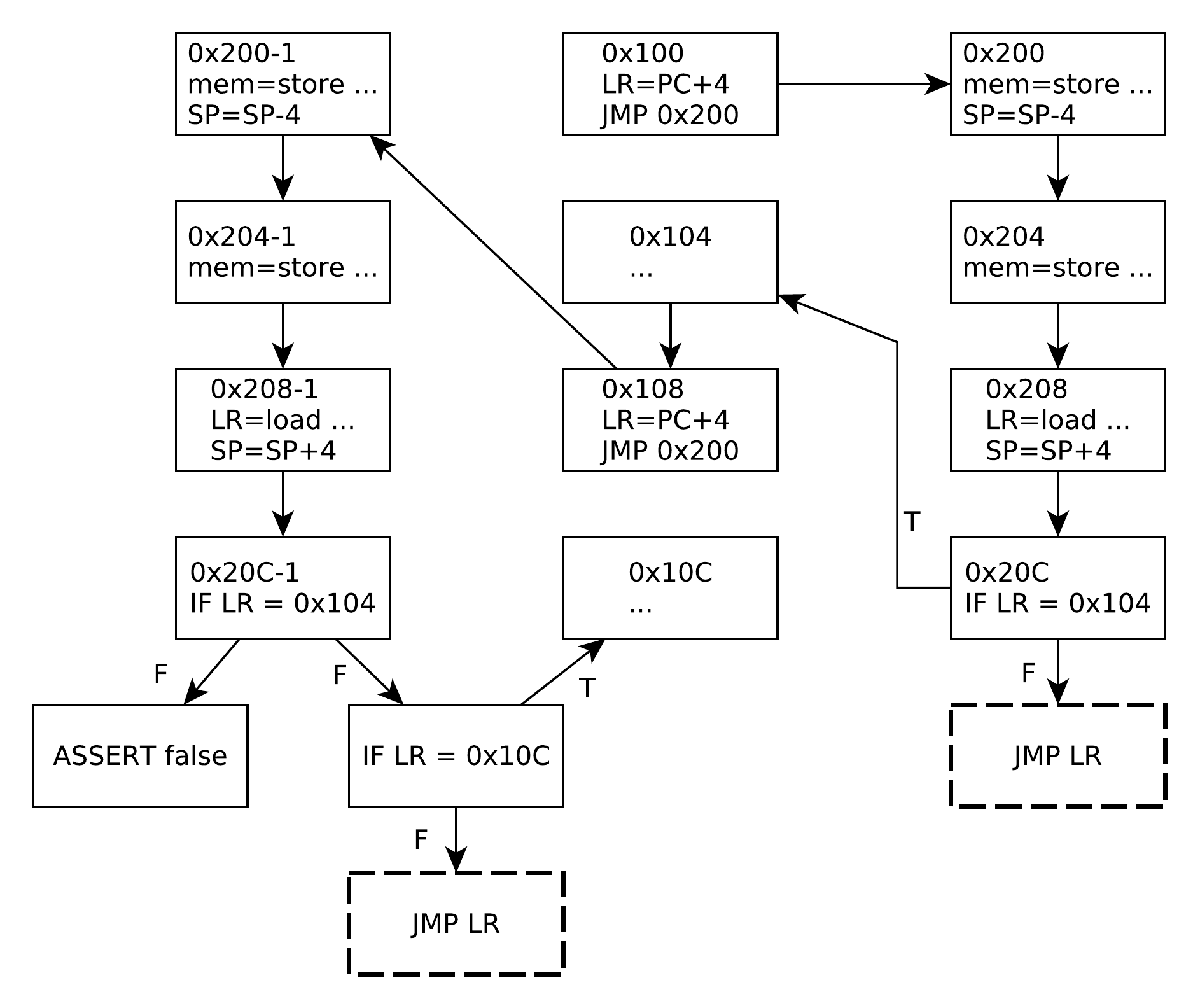}
     \caption{}
     \label{fig:IndirectJumpExample-cfg4}
  \end{subfigure}\hspace{1cm}
 \begin{subfigure}[b]{0.45\linewidth} 
   \includegraphics[width=1\linewidth]{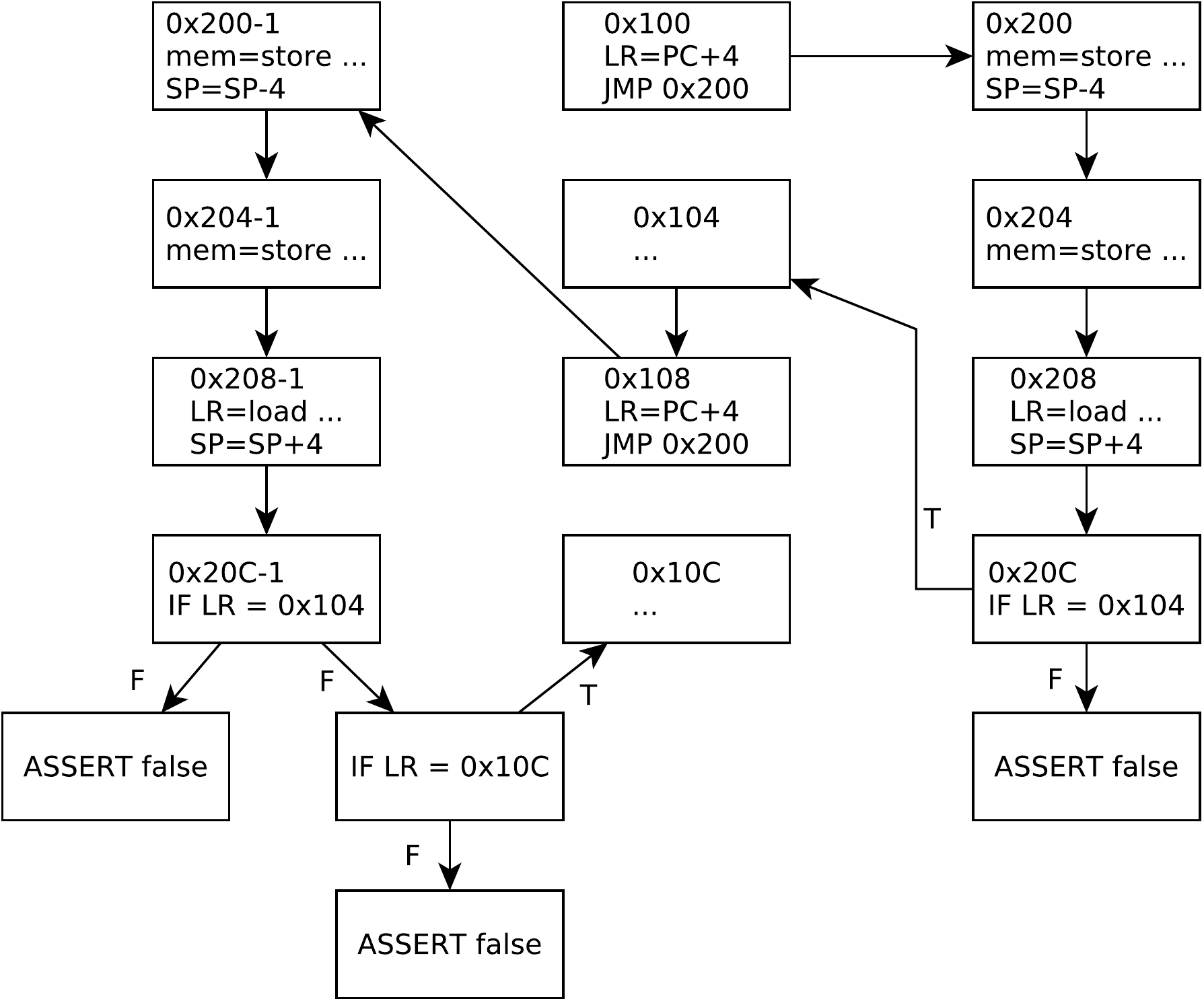}
     \caption{}
     \label{fig:IndirectJumpExample-cfg5}
  \end{subfigure}
   \caption{Execution of the indirect jump solver}
   \label{fig:IndirectJumpExample1}  
 \end{figure}

We use figures~\ref{fig:IndirectJumpExample} and~\ref{fig:IndirectJumpExample1} to demonstrate the algorithm. 
The assembly program (Figure~\ref{fig:IndirectJumpExample-arm}) contains a function at
\verb|0x200| , which is invoked twice (from \verb|0x100| and \verb|0x108|).
This function push the link register in the stack (\verb|0x200|), writes the
content of the register \verb|R2| into the memory pointed by \verb|R1|
(\verb|0x204|), pop the link register in the stack (\verb|0x208|)  and returns 
(\verb|0x20C|).
We
assume that the precondition used is strong enough to ensure correct
manipulation of the stack (e.g. the value of the stack pointer \verb|SP| and the
value of register \verb|R1| used as pointer in the instruction at \verb|0x204| are
distant at least one word). 
Figure~\ref{fig:IndirectJumpExample-bap} and ~\ref{fig:IndirectJumpExample-cfg1}
depict the BIL translation of the program and its initial CFG
respectively. The CFG has only one reachable indirect jump (in
\verb|0x20C|), whose expression is \verb|LR|.
The SMT solver discovers a possible target for this jump (in this case
\verb|0x104|) and the program is transformed by substituting the indirect jump
with a conditional statement, obtaining CFG is depicted in
Figure~\ref{fig:IndirectJumpExample-cfg2}.
This CFG has an artificial loop due to the two invocations of the
same function. Figure~\ref{fig:IndirectJumpExample-cfg3} depicts the CFG
obtained by unrolling the loop once. The program has now two reachable indirect
jumps, the procedure is repeated and the SMT solver discovers that \verb|0x10C|
is a possible target of the jump in \verb|0x20C-1|. The CFG is transformed as
Figure~\ref{fig:IndirectJumpExample-cfg4}. This CFG has still two 
indirect jumps. However, the SMT solver discovers that there is no assignment to
the initial variables of the program that enables the activation of these jumps. Thus all
indirect jumps have been resolved, the remaining ones are unreachable and are
suppressed, obtaining the CFG in Figure~\ref{fig:IndirectJumpExample-cfg5}.

In addition to solving indirect jumps, effective application of the verification strategy required the
implementation of several tools and optimisation of the weakest
precondition algorithm of BAP.
Weakest preconditions can grow exponentially with regard to the number of
instructions. 
Even though this problem cannot be solved in general, we
can handle the most common case for ARM
binaries, namely the sequential composition of several conditionally
executed arithmetical instructions. This pattern matches the
optimisation performed by the compiler to avoid small
branches.
We improved the BAP weakest precondition algorithm by adding a
simplification function that identifies these cases.
For some fragments of the code this straightforward strategy
strongly reduced the size of the precondition; e.g. for one fragment
consisting of 27 C lines compiled to 35 machine instructions
the size of the precondition has been reduced from 8 GB to 15 MB.

Furthermore, machine code (and BIL) lacks information on data types (except for
the native types like word and bytes) and represents the whole memory
as a single array of bytes. Writing predicates and invariants
is complex because their definition depends on location, alignment and size of
data-structure fields. Moreover, the behaviour of compiled code often depends
on the content of static memory used to represent constant values of the high level language.
We developed a set of tools that integrate HOL4 and
GDB to extract information from the C source
code and the compiled assembly.
With the support of these tools we are able to write the 
invariants and contracts of the hypervisor independently
of the actual symbol locations and data structure offsets 
produced by the compiler. 

Figure \ref{fig:binaryVerif} summarises the work-flow of our binary verification approach.
\begin{figure}
  \centering
  \includegraphics[width=0.9\linewidth]{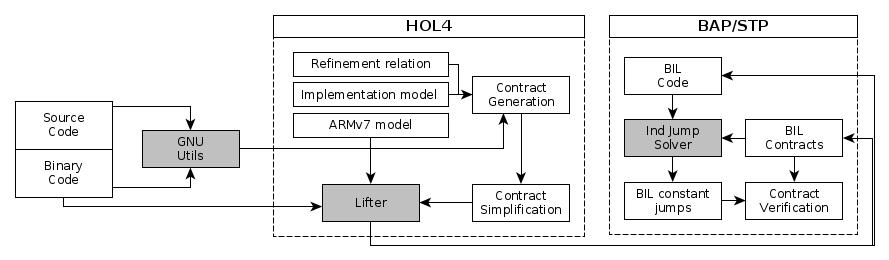}
  \caption{Binary verification work-flow:           
           \textit{Contract Generation}, generating pre and post conditions based on the specification of the low-level abstraction and the refinement relation; \textit{Contract Simplification},
            massaging contracts to make them suitable for verification; \textit{Lifter}, lifting handlers machine code and the generated contracts in HOL4 to BIL, \textit{Ind Jump Solver}, procedure to resolve
            indirect jumps in the BIL code; \textit{BIL constant jumps}, BIL fragments without indirect jumps; \textit{Contract Verification} using SMT solver to verify contracts. 
            Here, grey boxes are depicting the tools that have been developed to automate the verification as much as possible.}
  \label{fig:binaryVerif}
\end{figure}

\subsection{Limitations}\label{sec:limitation}
The binary verification of the hypervisor has not been completed yet due to some
time consuming tasks that require better automation.
First,
the inference procedure of Section~\ref{sec:binary:contracts} uses the HOL4
simplification rules and decision procedures, however it is
not completely automatic and must be adapted for every handler.
Without taking into account the specificity of each handler,
a naive procedure can easily generates contracts that cannot be handled by SMT
solvers. For every handler, we manually specialize the procedure to generate
contracts that have no quantifier in the precondition and only
universal quantifiers in the postcondition. 

Further complexity arises due to presence of loops.
In theory, loops can be automatically handled by
unfolding, since all loops in the hypervisor code
iterate over fixed and limited ranges (e.g. the number of descriptors
in a page table). Practically, this increases the size of the
code (1024 times for handlers working on L2, and up to 4096*256
for handlers on L1) beyond the limit of programs that
can be analyzed with BAP; thus the majority of loops 
must be manually handled.

By design, every loop in the hypervisor is also present in the
specification.
Let $C=C_1;while(B)\{C_2\};C_3$ be the handler fragment and
let $H_{r}(\ArmState) = let\ \ArmState_1 := H_{1}(\ArmState)\ in\ 
let\ \ArmState_2 := FOR(b, H_2, \ArmState_1)\ in\
H_3(\ArmState_2)
$ be the specification.
The problem of verifying that the refinement is preserved
(i.e. if $\ArmState \mathcal{R'} \ArmState'$,
and $C(\ArmState)$ is the state produced by the program $C$,
and  $H_{r}(\ArmState')$ is the state produced by the specification $H_r$ then
$C(\ArmState) \mathcal{R'} H_{r}(\ArmState')$)
is reduced in verifying three refinements:
\begin{itemize}
  \item $\ArmState \mathcal{R'} \ArmState'$ implies 
    $C_1(\ArmState) \mathcal{R'}_1 H_1(\ArmState')$
  \item $\ArmState \mathcal{R'}_1 \ArmState'$ implies 
    $C_2(\ArmState) \mathcal{R'}_1 H_2(\ArmState')$
  \item $\ArmState \mathcal{R'}_1 \ArmState'$ implies 
    $C_3(\ArmState) \mathcal{R'} H_3(\ArmState')$
\end{itemize}
that is, a new refinement relation/invariant $\mathcal{R'}_1$ must be identified
 for the loop.
 This usually  means identifying register allocation, allocations of variables
 on the stack etc. Due to lack of tools and integration with the compiler, this
 task is manually performed and requires to additionally specialize 
 the inference procedure of Section~\ref{sec:binary:contracts}.


\section{Implementation}
The implementation of the hypervisor demonstrates the feasibility of our approach. The actual implementation targets BeagleBoard-xM (which is equipped with an ARM Cortex-A8) and supports the execution of Linux as the untrusted guest.
The hypervisor executes both the untrusted guest and
  the trusted services in unprivileged mode, and their execution is
  cooperatively scheduled.
Theorems~\ref{lem:invariant},~\ref{lem:mmu-integrity} and~\ref{lem:switch} guarantee that the main security properties of the system (i.e.~the correct setup of the page tables) cannot be violated by either the guest or the trusted services. Moreover, the untrusted guest cannot directly affect the trusted services or directly extract information from their states (Theorems~\ref{lem:No-Exfiltration} and ~\ref{lem:No-Infiltration}).
This isolation is achieved by the complete mediation of the MMU settings and the allocation of the ARM domains 2-15 to the secure services. 
This approach limits the number of secure services to fourteen. However, this
mechanism has the benefit of using the same page tables for both the guest and
the trusted services (by reserving an area of the hypervisor virtual memory for
the latter). This reduces the cost of context switch, since TLB and caches do
not need to be cleaned. If more trusted services are needed, a separate page
tables can be used. 

The core of the hypervisor is the virtualization of the memory subsystem. This
is provided by the handlers that are the subject of the verification and that
are modelled by the transformations $H_a$ and $H_r$ (Section~\ref{sec:goal}).
This core have been extended with additional handlers to provide further
functionalities, which are needed to host a complete OS and to implement
useful secure services. 
Since these additional handlers are not involved in
the virtualization of the memory subsystem, establishing that they preserve the
invariant (Theorem~\ref{lem:invariant}) usually requires only to demonstrate
that they do not directly change the physical blocks that contain the page
tables and their memory safety.

\subsection{Linux Support}\label{sec:linux}
The Linux kernel 2.6.34
has been modified to run on top of the hypervisor.
This task required modification of architecture-dependent parts of the Linux kernel like 
execution modes, low-level exception routines and page table management. 
High-level OS functions such as process, resource and memory manager, file system, and networking did not require any modifications.
This also introduce the additional handlers of the hypervisor that are not part
of the formal verification.

\paragraph{CPU Privilege Modes}
In the absence of hardware supports, like virtualization extension,
the target CPU includes only two execution modes: privileged and
unprivileged (user).
As for other approaches based on paravirtualization,
since the hypervisor executes as privileged, the Linux kernel has been modified to execute 
as unprivileged. To separate kernel and user applications, the hypervisor manages two separate unprivileged execution contexts:
virtual user and virtual kernel modes. In x86 these virtual modes can be implemented by
segmentation. This approach is not possible for CPUs that do not provide this feature (e.g. x86 64-bit and ARM). 
Instead, we reserve the ARM domain 0 for the kernel virtual mode. Whenever the
guest kernel requests a switch to virtual user mode (invoking the dedicated hypercall) we disable  the domain 0, thus any 
access to the kernel virtual addresses generates a fault. 

Note that the main security goal here is not to guarantee this
OS-internal isolation, but to maintain the separation between the virtualized components (such as the Linux guest vs. secure data or services residing in non-guest memory).

\paragraph{CPU Exceptions} 
CPU exceptions such as aborts and interrupts change the processor mode to privileged.
These exceptions must therefore be handled in the hypervisor, which after 
validation can forward them to the unprivileged exception handlers of the Linux kernel.
The hypervisor supplies the kernel exception handlers with some 
privileged data needed to correctly service an on-going exception
(e.g. for pre-fetch abort, the privileged fault address and fault
status registers are forwarded to the guest).
 The exception handlers in the Linux kernel have thus been slightly
 modified to support this.
Among the exceptions that are forwarded to the Linux kernel there are the hardware
interrupts delivered by the timer. This allows Linux to implement an internal
time based scheduler.

\paragraph{Memory Management}
To paravirtualize the kernel, we modified the architecture dependent layer of its memory management. In the modified Linux all accesses to the coprocessor registers or to the active page
tables are done by issuing the proper hypercalls. The architecture independent layer of the memory management has been left unmodified. In order to speed up the execution of Linux,
a minimal emulation layer has been moved from the Linux kernel into the hypervisor itself. 
This layer reduces the overhead by translating a guest request into a sequence
of invocations of the APIs that virtualize the MMU . Since the emulation layer
accesses page tables only through the virtualization API, showing memory safety of this
component is sufficient to extend the coverage of the verification.


\subsection{Run-time Overhead}\label{sec:benchmark}
The port of the Linux kernel 2.6.34 on the hypervisor allows us to present a rough
comparison of our approach with existing paravirtualized hypervisors for the ARM architecture~\cite{iqbal2009overview}.
The purpose of the evaluation is more to demonstrate that our approach
actually runs with reasonable efficiency.
A serious evaluation is out of scope of this work. It requires a more optimised implementation, and a more
comprehensive evaluation.
 
The run-time evaluation is done using LMBench~\cite{McVoy:1996:LPT:1268299.1268322} running on Linux 2.6.34 with and without virtualization. 
The outcome, measured on an ARMv7-A Cortex-A8
system (BeagleBoard-xM~\cite{beageboardxm}), is presented in
Table~\ref{tab:performanceLMB}.
The significant virtualization overhead for the fork benchmarks is due to a large number of simple operations (in this case, write access to a page-table) being replaced with a large number of expensive hypercalls. It may be possible to reduce this overhead with minimal optimisation (e.g. batching).
In Table~\ref{tab:performanceLMB} we also report measures
from~\cite{iqbal2009overview}, where the authors compare several
existing hypervisors for ARM. 
We point out that these performance numbers have been obtained from different sources, testing different ARM cores, boards and hosted Linux
kernels. Hence we do not claim to be able to draw any hard conclusions from these figures about the
relative performance of the hypervisors or their underlying architectures.
With the purpose of demonstrating that the hypervisor can run
efficiently real applications, we also measured the overhead
introduced when executing tar, dd and several compression tools.

The second column reports the latency for the version of the hypervisor that aggressively flushes the caches (i.e.
the caches are
completely clean and invalidated whenever an exception or an interrupt is raised, while the hypervisor in the first column
limits cache flushes to the cases of context switch).
This naive approach guarantees that
the actual CPU respects the fully sequential memory model, but introduces severe performance penalties especially in the application benchmarks.
Less conservative approaches (e.g.~evicting only the necessary physical
addresses or forcing the page tables to be allocated in memory regions that are
always cacheable) can be adopted for some processor implementations, but they
require a more fine-grained modelling including caches and and an adaptation of
the verification approach for their justification, as discussed in~\cite{DBLP:conf/sp/GuancialeNBD16}.

\begin{table}[]
\tiny
\centering
\label{tab:microbench}
\begin{tabular}{l|r|r||r|r|r}

Benchmark                            & Hypervisor & \shortstack{Aggressive\\cache flushes}                  & L4Linux & Xen   & OKL4  \\  
\hline
null syscall                         & 329\%      & 332\%                                                   & 3043\%  & 150\% & 60\%  \\  
read                                 & 160\%      & 181\%                                                   & 844\%   & 90\%  & 15\%  \\  
write                                & 193\%      & 201\%                                                   & 877\%   & 85\%  & 24\%  \\  
stat                                 & 83\%       & 84\%                                                    & 553\%   &       & 7\%   \\  
fstat                                & 118\%      & 122\%                                                   & 945\%   &       & 42\%  \\  
open/close                           & 121\%      & 119\%                                                   & 433\%   &       &       \\  
select(10)                           & 78\%       & 84\%                                                    & 4461\%  &       & 14\%  \\  
sig handler install                  & 237\%      & 245\%                                                   & 1241\%  &       & 16\%  \\  
sig handler overhead                 & 226\%      & 237\%                                                   & 1281\%  & 82\%  & -14\% \\  
protection fault                     & 40\%       & 39\%                                                    & 975\%   &       & 67\%  \\  
pipe                                 & 168\%      & 3073\%                                                  & 450\%   & 74\%  & 31\%  \\  
fork+exit                            & 195\%      & 1861\%                                                  & 950\%   & 247\% & 8\%   \\  
fork+execve                          & 187\%      & 1787\%                                                  & 591\%   & 239\% & 5\%   \\  
pagefaults                           & 435\%      & 8740\%                                                  & 567\%   &       &       \\  
\end{tabular}
   \caption{
Latency benchmarks. LMBench micro benchmarks  for the Linux kernel v2.6.34 running naively on BeagleBoard-xM, paravirtualized on the hypervisor without
    cache flushing (\textit{Hypervisor}), with aggressive flushing
    (\textit{Aggressive cache flushes}), and the other hypervisors (\textit{L4Linux}, \textit{Xen}, \textit{OKL4}).
    Figures in the three last columns have been obtained from different ARM cores, boards and hosted Linux kernels}
   \label{tab:performanceLMB}
\end{table}

\begin{table}[]
\tiny
\centering
\label{tab:appbench}
\begin{tabular}{l|r|r}
Applications                         & Hypervisor & \shortstack{Aggressive\\cache flushes}                \\                                      
\hline 
tar (500 KB)                           & 0\%        & 171\%                                                  \\
tar (1 MB)                             & 0\%        & 108\%                                                  \\
dd (10 MB)                             & 100\%      & 1000\%                                                 \\
dd (20 MB)                             & 79\%       & 932\%                                                  \\
dd (40 MB)                             & 76\%       & 1061\%                                                 \\
jpg2gif(5 KB)                         & 0\%        & 117\%                                                  \\
jpg2bmp(5 KB)                         & 0\%        & 175\%                                                  \\
jpg2bmp(250 KB)                       & 0\%        & 27\%                                                   \\
jpg2bmp(750 KB)                       & -1\%       & 24\%                                                   \\
Jpegtrans(270', 5 KB)                 & 0\%        & 700\%                                                  \\
Jpegtrans(270', 250 KB)               & 14\%       & 300\%                                                  \\
Jpegtrans(270', 750 KB)               & 11\%       & 176\%                                                  \\
Bmp2tiff(90 KB)                      & 0\%        & 500\%                                                  \\
Bmp2tiff(800 KB)                     & 0\%        & 300\%                                                  \\
Ppm2tiff(100 KB)                     & 0\%        & 600\%                                                  \\
Ppm2tiff(250 KB)                     & 0\%        & 700\%                                                  \\
Ppm2tiff(1.3 MB)                     & 50\%       & 350\%                                                  \\
Tif2rgb(200 KB)                      & 200\%      & 1100\%                                                 \\
Tif2rgb(800 KB)                      & 25\%       & 575\%                                                  \\
Tif2rgb(1.200 MB)                    & 31\%       & 462\%                                                  \\
sox(aif2wav -r 8000 --bits 16 100 KB) & 50\%       & 600\%                                                  \\
sox(aif2wav -r 8000 --bits 16 500 KB) & 75\%       & 350\%                                                  \\
sox(aif2wav -r 8000 --bits 16 800 KB) & 83\%       & 267\%                                                  \\
\end{tabular}
\caption{
Latency benchmarks. Application benchmarks for the Linux kernel v2.6.34 running natively on BeagleBoard-xM, paravirtualized on the hypervisor without
    cache flushing (\textit{Hypervisor}), with aggressive flushing (\textit{Aggressive cache flushes}).}
    \label{tab:performanceApp}
\end{table}
\begin{table}
  \tiny
  \center
\begin{tabular}{c|r|r|r}
  Processes  & \shortstack{Direct Paging\\256 MB} & \shortstack{Direct
    Paging\\1 GB} & Shadow page table \\
\hline
  32  & 56 KB  & 224 KB  & 608 KB \\
  64  & 64 KB  & 256 KB  & 1216 KB \\
  128 & 72 KB  & 288 KB  & 2432 KB 
\end{tabular}
  \caption{
Memory footprint. Comparison of memory usage of \textit{Shadow page table} and \textit{direct paging}.}
  \label{tab:MemoryUsage}
\end{table}

\subsection{Memory Footprint}
The main difference between our proposal and the existing verified hypervisors is the MMU virtualization mechanism. 
The direct paging approach requires a table which contains at most
$\mem_{\fnsize} / block_{\fnsize}$ entries, where $\mem_{\fnsize}$ is the total
available physical memory and $\block_{\fnsize}$ is the minimum page size
(here, 4 KB). 
Each entry in this table uses $2 + \log_2 \ max_{\fnref}$ bits, with
the first two bits used to record entry type and $max_{\fnref}$ being the
maximum number of references pointing to the same page. 
Assuming this number is bound by the number of processes, Table~\ref{tab:MemoryUsage} indicates 
the memory overhead introduced by direct paging.

It should be noted that on ARMv7, most operating systems including
Linux dedicate one L1 page to each process and at least three L2
pages to map the stack, the executable code and the heap.
Then the OS itself has a minimum footprint of $16\text{ KB}
+ 3*1 \text{ KB}$ per process. This footprint is doubled if the underlying
hypervisor uses shadow page tables.

\section{Evaluation}
The hypervisor is implemented in C (and some assembly) and consists of 4529 lines of code (LOC).
Excluding platform dependent parts, the hypervisor core is no larger than 2066 LOC.
The virtualization of the memory subsystem consists of 1200 LOC.
To paravirtualize Linux we changed 1025 LOC of its kernel,
950 in the ARM specific architecture folder and 75 in init/main.c.
The paravirtualization is binary compatible with existing userland
applications, thus we do not need to recompile either hosted applications
or the libc.
For comparison, the only other hypervisor that implements direct
paging
is the Xen hypervisor, which consists of 100 KLOC and its design is not suitable
for verification.
Instead, the small code base of our hypervisor makes it easier to
experiment with different virtualization paradigms and enables formal verification of its correctness.
The formal specification consists of 1500 LOC of HOL4 and
intentionally avoids any high level construct, in order to make the HOL4 model
as similar as possible to the C implementation, at the price of
increasing the verification cost. The complete proof consists of 18700 
LOC of HOL4.


 The verification highlighted a number of bugs in the initial design of the APIs:
 (i) arithmetic overflow when updating the reference counter, caused by
 not preventing the guest to create an unbounded number of references to a
 physical block,
 (ii) bit field and offset mismatch,
 (iii) missing check that a newly allocated page table prevents the
 guest to overwrite the page table itself,
 (iv) usage of the signed shift operator where the unsigned one was
 necessary and
 (v) approval of guest requests that cause unpredictable MMU behaviour.
 Moreover, the verification of the implementation model identified three
 additional bugs exploitable by the guest by requesting the validation
 of page tables outside the guest memory.
Finally, the methodology described in Section~\ref{sec:binary} has been
experimented in the verification of the binary code of one of the
hypercalls. This experiment identified a buffer overflow in the binary
code that was missing in implementation model.
The HOL4 model uses a 10-bit variable to store an untrusted parameter
which is later used to index the entries of a page table. The binary
code uses a 32-bit registers to store the same parameter, thus
causing an overflow when accessing the L2 page table if the received
parameter is bigger than 1023.
The bug has been fixed by sanitising the input using the mask
\verb|parameter = parameter & 0x3ff|.

The project was conducted in three steps. 
The design, modelling and verification of the APIs for MMU virtualization required nine
person months. Here, the most expensive tasks have been the
verification of Theorems~\ref{lem:invariant} and ~\ref{lem:refinement}.
The C implementation of the APIs and
the Linux port has been accomplished in three person months. 
While the implementation team was completing the Linux port the
verification team started the verification of the refinement, which has taken three months so far. This work is continuing, in order to complete the verification from the HOL4 implementation level down to assembly.


\section{Applications}\label{sec:applications}
Applications of the hypervisors include the deployment of trusted cryptographic services and trusted controllers.
In the first scenario, the hypervisor core is extended with the
handlers required to implement message passing. These handlers allow
(i) Linux to send a message to the trusted service,  (ii) the trusted
service to reply with an encrypted message and (iii) the two
partitions to cooperatively schedule themselves. The isolation
properties guarantee that the untrusted guest cannot access the
cryptographic keys stored in the memory of the trusted services.
The second scenario includes a device (e.g. a physical sensor) whose IO is  memory mapped. The guest is forbidden to access the memory where the IO registers are mapped, thus guaranteeing that the trusted controller is the only subject capable of directly affecting the device. The complete Linux system can be used to provide a rich and complex user interface (either graphical or web based) for the controller logic without affecting its security.

The MMU virtualization solution demonstrated here can be used by other ARM-based software platforms than the hypervisor reported above. A fully fledged hypervisor (e.g. XEN) can use our approach to support hardware that lacks virtualization extensions (e.g. Cortex-A8, Cortex-A5, ARM11).
The mechanism can also be used by compiler-based virtual machines and unikernels, which need to monitor the memory configuration and protect it from the rest of the system (e.g. SVA uses a non-verified implementation of direct paging). 
Customers of cloud infrastructures can also benefit from our approach (see Figure~\ref{fig:cloud}).
In this setting, if the virtualization extensions are available,
the most privileged execution mode is controlled by the software platform
managed by the cloud provider (e.g. a hypervisor).
Thus, these extensions cannot be used by the customer to isolate its untrusted Linux from its own trusted services. In this setup, our mechanism can be used to fulfil this requirement.

\begin{figure}
  \centering
  \begin{subfigure}[b]{0.45\linewidth}
    \includegraphics[width=1\linewidth]{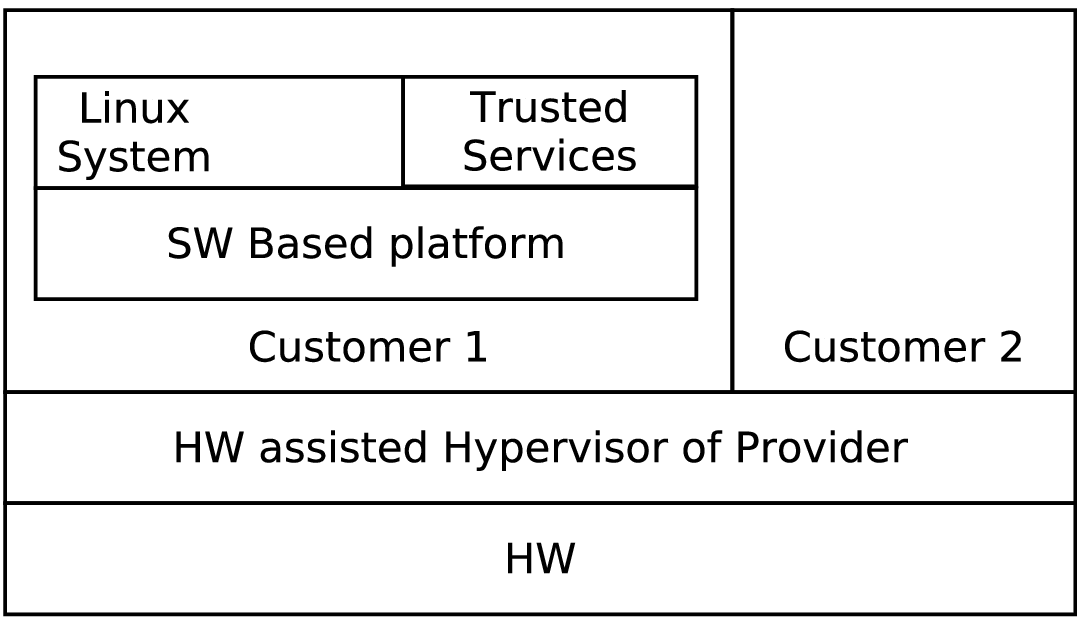}
    \caption{Usage of SW-based virtualization in a cloud platform}
    \label{fig:cloud}
  \end{subfigure}
  \qquad
  \begin{subfigure}[b]{0.45\linewidth}
    \includegraphics[width=1\linewidth]{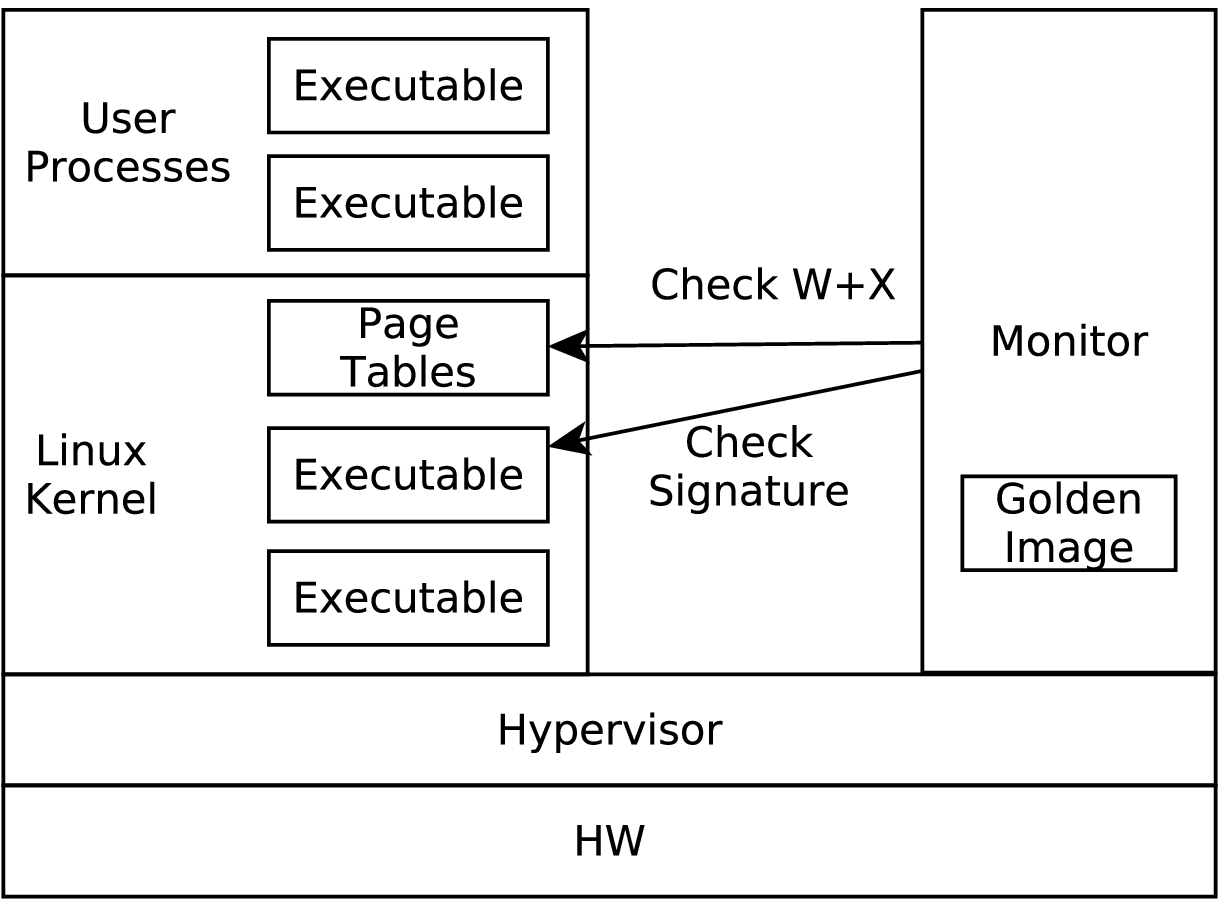}
    \caption{Deployment of a run-time monitor preventing code injection}
    \label{fig:monitor}
  \end{subfigure}
  \caption{Applications of the secure virtualization platform.}
  \label{fig:apps}
\end{figure}

An interesting application of 
isolating platforms is the external protection of an untrusted commodity OS from internal threats, as demonstrated in~\cite{criswell2014kcofi}.
Trustworthy components are deployed together and properly isolated from the application OS (see Figure~\ref{fig:monitor}).
These components are used as an aid for the application OS to 
restrict its own attack surface, by guaranteeing the impossibility of certain malicious
behaviours. In~\cite{DBLP:conf/esorics/ChfoukaNGDE15}, we show that this approach can be used to
implement an embedded device that hosts a Linux system provably free of binary
code injection. Our goal is to formally prove that the target system prevents all forms of binary code
injection even if the adversary has full control of the hosted Linux and 
no analysis of Linux is performed. 

The system is implemented as a run-time monitor. The monitor
forces an untrusted Linux system to obey the executable space
protection policy (usually represented as $W \oplus X$); a memory area
can be either executable or writable, but cannot be both. 
The protection of executable space allows the monitor to
intercept all changes to the executable code performed by a user
application or by the Linux kernel itself. On top of this infrastructure, we
use standard signature checking to prevent code injection.
Here, integrity of an executable physical block stands for
the block having a valid signature. Similarly, the integrity of the system code
depends on the integrity of all executable physical blocks. The valid signatures
are assumed to be known by the run-time monitor. We refer to this information
as the ``golden image'' (GI) and it is held by the monitor.

We configured the hypervisor to support the following interaction protocol:
\begin{enumerate}
\item For each hypercall invoked by a guest, the hypervisor forwards the request to the monitor. 
\item The monitor validates the request based on its validation mechanism.
\item The monitor reports to the hypervisor the result of the hypercall validation. 
\end{enumerate}
Since the  hypervisor supervises the changes of the page tables, 
the monitor is able to intercept all the memory layout modifications.
This makes the monitor able to know whether a physical block is writable:
if there exists at least one virtual mapping pointing to such block and having
writable access permission. Similarly it is
possible to know which physical block is executable.

Then the signature checking is implemented in the obvious way:
whenever Linux requests to change a page table (i.e. causing to change
the domain of the executable code) the monitor (i) identifies
the physical blocks that can be made executable by the request, (ii)
computes the block signature and (iii) compares the result with the
content of the golden image.
This policy is sufficient to prevent code injection
that is caused by changes of the memory layout setting, due to 
the  hypervisor forwarding to the monitor all requests to change 
the page tables.


Figure~\ref{fig:monitor} depicts the architecture of the system;
both the run-time monitor and the untrusted Linux are deployed as two guests
of the hypervisor. 
Using a dedicated guest on top
of the hypervisor permits to decouple the enforcement of the security policies
from the other hypervisor functionalities, thus keeping the
trusted computing base minimal.
Moreover, having the security policy wrapped inside a guest supports both
the tamper-resistance and the trustworthiness of the monitor. In fact, the monitor can take advantage from the isolation properties
provided by the hypervisor. This avoids malicious
interferences coming from the other guests (for example from a process of
an OS running on a different partition of the same machine). Finally,
decoupling the run-time security policy from the other functionalities of the
hypervisor makes the formal specification and verification of the monitor
more affordable.

The formal model of the system (i.e. consisting of the hypervisor,
the monitor and the untrusted Linux) is built on top of the models
presented in Section~\ref{sec:tls}.
Here we
leave unspecified the algorithm used to sign and check
signatures, so that our results can be used for different intrusion
detection mechanisms. 
The golden image $\mathit{GI}$ is a finite set of signatures $\{s_1,
\dots, s_n\}$, where the signatures are selected from a domain $S$.
We assume the existence of a function $\mathit{sig} : 2^{4096*8}
\rightarrow S$ that computes the signature of the content of a block.
The system behaviour is modelled by the following rules:

\begin{enumerate}
 \item 
$\begin{array}{ccccc}
  \frac{\tuple{\sigma, h} \to_{0} \tuple{\sigma', h'}}
       {\tuple{\sigma, h, \mathit{GI}} \to_{0} \tuple{\sigma', h', \mathit{GI}}}
\end{array}$
 \item 
$\begin{array}{ccccc}
  \frac{\tuple{\sigma, h} \to_{1} \tuple{\sigma', h'}
       \ \mathit{validate}(\mathit{req}(\tuple{\sigma, h}), \tuple{\sigma, h, \mathit{GI}})
       }
       {\tuple{\sigma, h, \mathit{GI}} \to_{1} \tuple{\sigma', h', \mathit{GI}}}
\end{array}$
 \item
$\begin{array}{ccccc}
  \frac{\tuple{\sigma, h} \to_{1} \tuple{\sigma', h'}
       \ \neg \mathit{validate}(\mathit{req}(\tuple{\sigma, h}), \tuple{\sigma, h, \mathit{GI}})
       }
       {\tuple{\sigma, h, \mathit{GI}} \to_{1} \epsilon(\tuple{\sigma, h, \mathit{GI}})}
\end{array}$
\end{enumerate}

User mode transitions (e.g. Linux activities)  require neither the
hypervisor nor the monitor intermediation.
Theorem~\ref{lem:No-Exfiltration} justifies the fact
that, by construction, the transitions executed by the untrusted component cannot 
affect the monitor state; (i) the golden image is constant and (ii)
the monitor code can be statically identified and abstractly modelled.
The executions in privileged mode require the 
intermediation of the monitor. If the monitor validates the request,
then the standard behaviour of the hypervisor is executed. Otherwise
the hypervisor performs a special
operation to reject the request, by reaching the state that is returned by a function $\epsilon$.
Hereafter, the function 
$\epsilon$ is assumed to be the identity. 
Alternatively, 
$\epsilon$ can transform the state so that the requestor is informed
about the rejected operation, by updating the user registers according
to the desired calling convention.
The function $\mathit{validate}(\mathit{req}(\tuple{\sigma, h}), \tuple{\sigma, h, \mathit{GI}})$
represents the validation mechanism of the monitor, which checks at
run-time possible violations of the security policies. 

To formalize the top level goal of our verification
we introduce some auxiliary notations. 
The ``working set'' identifies the physical blocks that host
executable binaries and their corresponding content.
Let $\sigma$ be a machine state. The working set of $\sigma$ is defined as
\[
\begin{array}{l}
  \mathit{WS}(\sigma) = \{\tuple{\mathit{bl}, \mathit{content}(\mathit{bl}, \sigma)} \mid
  \exists \mathit{pa}. \mathit{mmu}_{ph}(\sigma, \UserMode, \mathit{pa}, \mathit{ex}) \wedge \mathit{pa} \in \mathit{bl}
  \}
\end{array}
\]

By using a code signing approach, we say that the integrity of a
physical block is satisfied if the signature of the block's content belongs
to the golden image.
  Let $\mathit{cnt} \in 2^{4096*8}$ be the 4 KB content of a physical block $bl$
   and $\mathit{GI}$ be the golden image
\[
  \mathit{int}(\mathit{GI}, \mathit{bl}, \mathit{cnt}) = \mathit{sig}(\mathit{bl}, \mathit{cnt}) \in \mathit{GI}
\]
Notice that our security property can be refined to fit different
anti-intrusion mechanisms. For example, $\mathit{int}(\mathit{GI}, \mathit{bl}, \mathit{cnt})$ can be
instantiated with the execution of an anti-virus scanner.

The system state is free of malicious code injection if
the signature check is satisfied for the whole executable code. That is:
  Let $\sigma$ be a machine state, $bl$ be a physical block and $\mathit{GI}$
  be the golden image
\[
  \mathit{int}(\mathit{GI}, \sigma)\ \Leftrightarrow\ \forall \tuple{\mathit{bl},\mathit{cnt}} \in \mathit{WS}(\sigma)\  .\  \mathit{int}(\mathit{GI}, \mathit{bl}, \mathit{cnt})
\]

Finally, in~\cite{DBLP:conf/esorics/ChfoukaNGDE15} we demonstrate our top level goal: No code injection can succeed.
\begin{theorem}
  If $\tuple{\sigma, h, \mathit{GI}}$ is a state reachable from the initial
  state of the system $\tuple{\sigma_0, h_0, \mathit{GI}}$ then 
  $\mathit{int}(\mathit{GI}, \sigma)$
\end{theorem}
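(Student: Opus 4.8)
The plan is to prove the statement by induction on the length of the execution that witnesses reachability, carrying along a strengthened invariant rather than $\mathit{int}(\mathit{GI},\sigma)$ alone. Concretely I would work with the conjunction $J(\tuple{\sigma,h,\mathit{GI}}) \equiv I(\tuple{\sigma,h}) \wedge \mathit{WX}(\sigma) \wedge \mathit{int}(\mathit{GI},\sigma)$, where $I$ is the TLS functional invariant of Section~\ref{sec:tls} and $\mathit{WX}(\sigma)$ says that no physical block is simultaneously writable and executable in $\UserMode$ (the $W\oplus X$ property the monitor is meant to enforce). The base case requires $J$ of the initial state $\tuple{\sigma_0,h_0,\mathit{GI}}$, which is a proof obligation on the trusted boot configuration and is assumed. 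The theorem is then immediate from the inductive claim that every transition rule of the composite system preserves $J$, since $J$ implies $\mathit{int}(\mathit{GI},\sigma)$.

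For the inductive step I would case-split on the three rules. Rule~1 (an unprivileged Linux step $\tuple{\sigma,h}\to_0\tuple{\sigma',h}$) is discharged from the already-established TLS theorems: Theorem~\ref{lem:mmu-integrity} gives $\sigma\equiv_{mmu}\sigma'$, so the set of blocks carrying an executable $\UserMode$ mapping is unchanged; Theorem~\ref{lem:No-Exfiltration} together with $\mathit{WX}(\sigma)$ shows the guest can only write blocks that are writable in $\UserMode$, which by $\mathit{WX}$ are disjoint from the executable blocks, so the contents of every block in $\mathit{WS}$ are preserved; and Theorem~\ref{lem:invariant} preserves $I$, while $\mathit{WX}$ persists because neither the page tables nor the permission bits changed. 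Hence $\mathit{int}(\mathit{GI},\sigma')$ and $J$ hold. Rule~3 (monitor rejects, $\epsilon$ the identity or a pure update of user registers) is immediate: memory, page tables, and the golden image are untouched, so $J$ transfers verbatim.

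The interesting case is Rule~2 (a hypercall that the monitor validates). Theorem~\ref{lem:invariant} again preserves $I$, and for $\mathit{int}$ and $\mathit{WX}$ I would unfold the definition of $\mathit{validate}$: the monitor inspects $\mathit{req}(\tuple{\sigma,h})$, determines the blocks the requested page-table change could newly expose as executable in $\UserMode$, rejects the request unless $\mathit{sig}$ of each such block lies in $\mathit{GI}$, and rejects any request that would leave some block both executable and writable. Using complete mediation (Theorems~\ref{lem:invariant}, \ref{lem:mmu-integrity}, \ref{lem:switch}), page-table state, domain registers, and hence the executability of any block can only change through such a validated hypercall; moreover the only memory a hypercall writes are page-table blocks and hypervisor data structures, which the allocation invariant and $\mathit{WX}$ keep disjoint from $\mathit{WS}$ (modulo the side condition noted below). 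Thus every block in $\mathit{WS}(\sigma')$ is either already in $\mathit{WS}(\sigma)$ with unchanged contents or was just validated, so in both cases has a signature in $\mathit{GI}$, and $\mathit{WX}$ is re-established; this closes the induction.

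\textbf{Main obstacle.} The crux is justifying that $\mathit{validate}$ really covers \emph{every} path by which executable code can appear: it must account not only for \textit{L1map}/\textit{L2map} but also for \textit{switch}, which changes which blocks are executable without writing memory, and it must maintain the $W\oplus X$ disjointness across the whole nine-call API, so that a block validated while executable can never later be re-exposed as writable, and (the deferred side condition) so that page-table blocks written by the hypervisor are never themselves executable or are re-validated if they are. All of this rests on complete mediation, i.e. Theorems~\ref{lem:invariant}--\ref{lem:switch}, and on the sequentially consistent memory model; the latter reliance is exactly why the argument is sound only after the aggressive cache flushing of Section~\ref{sec:cache} and would fail on the cache-aware model of Paper~\ref{paper:sp}. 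Formalising the monitor's case analysis over all hypercalls and discharging the ``hypervisor writes stay out of $\mathit{WS}$'' obligation for each is where most of the real work lies.
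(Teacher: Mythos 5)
Your overall skeleton---induction on the reachability trace carrying a strengthened invariant, a case split on the three transition rules, the trivial rejected case, and reuse of the TLS theorems for unprivileged steps---is exactly the paper's strategy (Theorems~\ref{lem:monitor:invariant:user} and~\ref{lem:monitor:invariant:hyper} plus Lemma~\ref{lem:monitor:invariant:safe}). However, there is a genuine gap in your choice of invariant. Your conjuncts $\mathit{WX}(\sigma)$ and $\mathit{int}(\mathit{GI},\sigma)$ are both statements about the \emph{currently active} MMU configuration (they are phrased via $\mathit{mmu}_{ph}(\sigma,\UserMode,\cdot,\cdot)$, hence via the active L1 only). That invariant is not inductive: consider a block $pb$ that is executable under the active page table but writable under some other, previously validated, inactive page table $T$. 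Your $\mathit{WX}(\sigma)$ does not exclude this. The guest invokes \textit{switch} to $T$ (validated ``always'', no check), at which point $pb$ leaves the working set and your invariant still holds; it then writes unsigned code into $pb$ through $T$; it then switches back. On the switch-back step $pb$ re-enters $\mathit{WS}(\sigma')$ with an invalid signature, and your Rule~2 case analysis (``either already in $\mathit{WS}(\sigma)$ with unchanged contents or just validated'') has no branch that covers it. You correctly flag \textit{switch} and re-exposure as writable as ``the main obstacle'', but the resolution is not extra work inside your induction---it requires replacing the invariant.

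The paper's fix, which is the essential idea of its proof, is to quantify the $W\oplus X$ and signature conditions over \emph{all potential page tables}, i.e.\ over every physical block $bl$ with $\tau(bl)\neq\DataType$, not just the active one: for every mapping $(vb,pb,wt,ex)$ in any such block, $\msound_{W\oplus X}(wt,ex,\rho_{wt},\rho_{ex},pb)$ and $\msound_S(ex,pb,\sigma,GI)$ must hold, where the two reference counters $\rho_{wt},\rho_{ex}$ aggregate writable and executable references from \emph{all} allocated page tables. This global disjointness (a block executable anywhere has $\rho_{wt}=0$, and vice versa) is what the monitor's run-time checks on \textit{map}/\textit{create}/\textit{link} actually enforce (Table~\ref{tbl:monitor:sec-policies} and Lemma~\ref{lem:monitor:invariant:hyper-helper2}), and it is precisely what lets \textit{switch} be accepted without any re-validation while keeping the invariant inductive; $\mathit{int}(\mathit{GI},\sigma)$ then need not be carried as a conjunct at all, since it follows from the strengthened invariant by Lemma~\ref{lem:monitor:invariant:safe}. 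Without this strengthening your induction cannot be closed except by re-validating every block at every \textit{switch}, which is not what the verified system does.
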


We implemented a prototype of the system.
The monitor code consists of 720 lines of C and 100 lines have been
added to the hypervisor to support the needed interactions among the
hosted components.

\section{Concluding Remarks}
We have presented a memory virtualization platform for ARM based on direct paging,  an approach inspired by the paravirtualization mechanism of Xen
\cite{xen}, and the Secure Virtual Architecture \cite{criswell2007secure}. The platform has been verified down to a detailed model of a commodity CPU architecture (ARMv7-A), and we have
shown a hypervisor based on the platform capable of hosting a Linux system while provably isolating it from other services. The hypervisor has been implemented on real hardware and shown 
to provide promising performance, although the benchmarks presented here are admittedly preliminary. The verification is done with respect to a top-level model that augments a real machine
state with additional model components. The verification shows complete mediation, memory isolation, and information flow correctness with respect to the top-level model. As the main 
application we demonstrated how the virtualization mechanism can be used to support a provably secure run-time monitor for Linux that provides secure updates along with the W$\oplus$X policy.

The main precursor work on formally verified MMU virtualization uses the simulation-based
approach of Paul et al \cite{alkassar2010automated,PSS12,ACKP12}.
In \cite{alkassar2010automated,PSS12} shadow page tables are used to provide full virtualization, including virtual memory, for  ``baby VAMP'', a simplified MIPS, using VCC. 
Full virtualization is generally more complex than
the paravirtualization approach studied in the present paper, but the machine model is simplified,
information flow security is not supported by the simulation
framework, and neither applications nor implementation on real hardware are reported.
In \cite{ACKP12} the same simulation-based approach is used to study TLB
virtualization on an abstract version of the x64 virtual memory architecture.
Other related work on verification of microkernels and hypervisors
such as seL4 \cite{DBLP:conf/sosp/KleinEHACDEEKNSTW09} or the Nova project \cite{SteinbergK10} does not address MMU virtualization in detail. It may be argued that
the emergence of hardware based virtualization support makes software MMU virtualization obsolete. We argue that this is not the case. First, many
platforms remain or are currently in development that do not yet support virtualization extensions, second, many application hardening frameworks such as 
Criswell et al.~\cite{Criswell:2009:MSL:1855768.1855774}, KCoFi~\cite{criswell2014kcofi}, Overshadow \cite{Chen:2008:OVA:1346281.1346284}, Inktag \cite{Hofmann:2013:ISA:2451116.2451146} and Virtual Ghost \cite{Criswell:2014:VGP:2541940.2541986} rely on some form of MMU virtualization for their
internal security, and third, some use cases, e.g.~in cloud scenarios, could make good use of software based MMU virtualization to harden VMs without
relying on cloud provider hardware.

Our results are not yet complete. The MMU virtualization approach does not
support DMA.  To securely enable DMA
the behaviour of the specific DMA controller must be formally modelled
(in~\cite{schwarz2014formal} the authors describe a framework for such extensions and
establish Properties~\ref{prop:User-No-Exfiltration} and~\ref{prop:User-No-Infiltration} for the resulting model)
and the hypervisor must (i) mediate all accesses to the memory area where
the controller's  registers are mapped, (ii) enable a DMA channel
only if the pointed physical blocks is \DataType  \ and (iii) update
the reference counters accordingly. 
Several embedded platforms are equipped with IOMMUs, that provide HW support to
isolate/confine external peripherals that use DMA. However SW based isolation of
DMA is still interesting since it can be used in the scenarios where these HW
extensions are not available (e.g. CortexM microcontrollers), they are not
accessible (e.g. when they are managed by a cloud provider), or in time
critical applications since the page walks introduced by the IOMMU can slow down the
peripheral and make worst case execution time analysis more difficult.

A tricky problem concern the treatment of unpredictable behaviour in the ARMv7
architecture. The Cambridge ISA model~\cite{DBLP:conf/itp/FoxM10} maps transitions resulting in
unpredictable behaviour to $\bot$. We
ignore this for the following reason.
Our verification shows that unpredictable behaviour never arises during
hypervisor code execution. This is so since the ARMv7 step theorems 
used by the lifter are defined only for predictable instructions, and since
our invariant guarantees that the MMU configuration is always well defined.
As a result unpredictable behaviour can arise only during non-privileged
execution, the analysis of which we have in effect deferred to other work~\cite{schwarz2014formal}. 

Finally more work is needed to properly reflect caches, TLBs, and, further down the line, multi-core. The soundness of the current implementation depends on
the type of data cache, and on flushing the cache when needed, in order to support a linearizable memory model. To enable more aggressive optimisation, 
and to fully formally secure our virtualization framework on processors with weaker cache guarantees, the model must be extended to reflect cache behaviour.



\chapter[Trustworthy Prevention of Code Injection in Linux on Embedded  Devices]
{\texorpdfstring{Trustworthy Prevention of Code Injection in Linux on Embedded \\ Devices}{Trustworthy Prevention of Code Injection in Linux on Embedded  Devices}}\label{paper:esorics}
\chaptermark{Trustworthy Prevention of Code Injection}
\backgroundsetup{position={current page.north east},vshift=1cm,hshift=-5cm,contents={\VerBar{RoyalBlue}{2cm}}}
\BgThispage
\begin{center}
Hind Chfouka, Hamed Nemati, Roberto Guanciale, Mads Dam, Patrik Ekdahl
\end{center}

\begin{abstract}

We present MProsper, a trustworthy system to prevent code
injection in Linux on embedded devices. 
MProsper is a formally verified run-time monitor, which
forces an untrusted Linux to obey the executable space
protection policy; a memory area
can be either executable or writable, but cannot be both. 
The executable space protection allows the MProsper's monitor to
intercept every change to the executable code performed by a user
application or by the Linux kernel. On top of this infrastructure, we
use standard code signing to prevent code injection.
MProsper is deployed on top of the Prosper hypervisor and is
implemented as an isolated guest. Thus MProsper inherits the security
property verified for the hypervisor: (i) Its code and data cannot be
tampered by the untrusted Linux guest and (ii) all changes to the
memory layout is intercepted, thus enabling MProsper to completely
mediate every operation that can violate the desired security property.
The verification of the monitor has been
performed using the HOL4 theorem prover and by extending the existing
formal model of the hypervisor with the formal specification of the
high level model of the monitor.
\end{abstract}

\newcommand{\accreq}{\mathit{accreq}}
\newcommand{\relRule}[4]{\{#1:#2\to #3 \}}
\newcommand{\ArmStateVar}{\sigma}
\newcommand{\mmudisbl}{\sigma.\SCTLR = 0}
\newcommand{\mmuenbl}{\sigma.\SCTLR \neq 0}
\newcommand{\descType}[3]{$#1(#2) = #3$}
\newcommand{\accPermChecker}[3]{#3(#1(#2))}
\newcommand{\RecCone}[1]{{#1}.c1}
\newcommand{\RecCtwo}[1]{{#1}.c2}
\newcommand{\RecMem}[1]{{#1}.mem}
\newcommand{\RecTTBR}[1]{{#1}.TTBR_{zero}}
\newcommand{\RecPgType}[1]{{#1}.\mathit{pgtype}}
\newcommand{\RecPgRefs}[1]{{#1}.rc}
\newcommand{\MaxRefCnt}{MAX}
\newcommand{\StaticMemType}{G_{mem}}
\newcommand{\MemTypeVar}{t}
\newcommand{\LOneSection}{SEC}
\newcommand{\LOnePt}{PT}
\newcommand{\SmallPageType}{SP}
\newcommand{\Types}[3]{#1 \vdash #2 : #3}
\newcommand{\inGuestMem}[1]{\Types \StaticMemType {#1} \GuestMemType}
\newcommand{\ptType}[3]{\Types {\RecPgType{#1}} {#2} {#3}}
\newcommand{\readDesc}[3]{$read_{#1}(#2, \sigma.mem, #3)$}
\newcommand{\readDescLOne}[3]{read_{L_1}(#1, #2, #3)}
\newcommand{\readDescLTwo}[3]{read_{L_2}(#1, #2, #3)}
\newcommand{\pgType}[1]{#1.type}
\newcommand{\pgAdd}[1]{#1.pa}
\newcommand{\pgPhBlock}[1]{#1.blk}
\newcommand{\secPhBlock}[1]{#1.sec}
\newcommand{\pgAP}[1]{#1.ap}
\newcommand{\funApl}[2]{$#1 #2$}
\newcommand{\Unmapped}{\bullet}
\newcommand{\HyperStateVar}{h}
\newcommand{\TlsStateVar}{s}
\newcommand{\PgTypeVar}{\tau}
\newcommand{\PgRefVar}{\rho}
\newcommand{\TlsState}[2]{\left \langle {#1},{#2} \right \rangle}
\newcommand{\refUpdateDec}[3]{\left\{\begin{array}{l l}
						#1(#2) - 1& #3\\
						#1(#2)\ & \mathit{otherwise}
					   \end{array}
					 \right.}					 
\newcommand{\refUpdateInc}[3]{\left\{\begin{array}{l l}
						#1(#2)++ & #3\\
						#1(#2)\ & \mathit{otherwise}
					   \end{array}
					 \right.}
\newcommand{\replc}[2]{#1 := #2}
\newcommand{\LabelBlk}{\mathit{bl}}
\newcommand{\LabelVar}{\alpha}
\newcommand{\LabelUser}{0}
\newcommand{\LabelSwitchLOne}[1]{switchL1(#1)}
\newcommand{\LabelUnmapEntryLTwo}[2]{unmapEntryL2\tuple{#1, #2}}
\newcommand{\LabelMapLTwo}[4]{mapL2\tuple{#1, #2, #3, #4}}
\newcommand{\LabelCreateLOne}[1]{createL1(#1)}
\newcommand{\LabelCreateLTwo}[1]{createL2\(#1)}
\newcommand{\LabelFreeLOne}[1]{freeL1(#1)}
\newcommand{\LabelFreeLTwo}[1]{freeL2(#1)}
\newcommand{\LabelMapSecLOne}[1]{mapSecL1(#1)}
\newcommand{\LabelMapPtLOne}[1]{mapPtL1(#1)}
\newcommand{\LabelUnmapEntryLOne}[1]{unmapEntryL1(#1)}
\newcommand{\tlsTrans}[1]{\xrightarrow[]{#1}}
\newcommand{\inferTrans}[3]{#1 \tlsTrans{#2} #3}
\newcommand{\LTwoAP}[1]{l2\_acc\_perm(#1)}
\newcommand{\LOneSecAP}[1]{l1\_acc\_perm(#1)}
\newcommand{\MakeSPDesc}[2]{[#1,#2]}
\newcommand{\writeEntry}[4]{update\_entry(#1, #2, #3, #4)}
\newcommand{\ApGuestWritable}[1]{(0,wt) \in #1}
\newcommand{\ApGuestReadable}[1]{(0,rd) \in #1}
\newcommand{\OpLabel}[2]{\overset{#2}{#1}}
\newcommand{\TlsTrans}[1]{\OpLabel{\rightarrow}{#1}}
\newcommand{\MemTypeEquiv}[1]{\OpLabel{\equiv}{\StaticMemType:{#1}}}
\newcommand{\MemBlockUnchanged}[1]{\OpLabel{\equiv}{#1}}
\newcommand{\MmuEquiv}{\OpLabel{\equiv}{mmu}}
\newcommand{\Invariant}{\mathcal{I}}
\newcommand{\InvStates}{\mathcal{Q}_{\Invariant}}
\newcommand{\InvType}{\mathcal{I}_T}
\newcommand{\InvTypeOne}{\mathcal{I}_{T_1}}
\newcommand{\InvTypeTwo}{\mathcal{I}_{T_2}}
\newcommand{\InvRC}{\mathcal{I}_C}
\newcommand{\RcCount}{cnt}

\section{Introduction}
Even if security is a critical issue of IT systems,
commodity OSs are not designed with security in mind. Short time to
market, support of legacy features, and adoption of binary blobs are only
few of the reasons that inhibit the development of secure commodity
OSs. Moreover, given the size and complexity of modern OSs, 
the vision of comprehensive and formal verification of them  is as
distant as ever. At the same time the necessity of adopting commodity
OSs can not be avoided; modern IT systems require complex network
stacks, application frameworks etc. 

The development of verified low-level execution platforms for system
partitioning (hypervisors~\cite{CavalcantiD09,SOFSEM}, separation
kernels~\cite{INTEGRITY,dam2013formal}, or
microkernels~\cite{DBLP:conf/sosp/KleinEHACDEEKNSTW09}) has enabled an efficient strategy to develop systems with provable security properties without having to verifying the entire software. The idea is to  partition the system into small and trustworthy
components with limited functionality running alongside  large commodity software components that provide little or no assurance. 
For such large commodity software it is not realistic to restrict the
adversary model. For this reason, 
the goal is to show, preferably using formal verification, that the architecture satisfies the desired
security properties, even if the commodity software is completely compromised.

An interesting usage of this methodology is when the 
trustworthy components are used as an aid for the application OS to 
restrict its own attack surface, by proving the impossibility of certain malicious
behaviors. In this paper, we show that this approach can be used to
implement an embedded device that hosts a Linux system provably free of binary
code injection. Our goal is to formally prove that the target system prevents all forms of binary code
injection even if the adversary has full control of the hosted Linux and 
no analysis of Linux itself is performed. This is necessary to make the verification feasible, since
Linux consists of million of lines of code and even a high level 
model of its architecture is subject to frequent changes.

Technically, we use Virtual Machine Introspection (VMI).
VMI is a virtualized architecture, where an untrusted guest is monitored by an external observer.
VMI has been proposed as a solution
to the shortcomings of network-based and host-based intrusion
detection systems. Differently from
network-based threat detection, VMI monitors the
internal state of the guest.
Thus, the VMI does not depend on information obtained from monitoring
network packets which may not be accurate or sufficient.
Moreover, differently from
host-based threat detection, VMIs place the
monitoring component outside of the guest, thus 
making the monitoring itself tamper proof. 
A further benefit of VMI monitors is that they can rely on
trusted information received directly from the underlying hardware,
which is, as we show, out of the attackers reach.

Our system, MProsper, 
is implemented as a run-time monitor. The monitor
forces an untrusted Linux system to obey the executable space
protection policy (usually represented as $W \oplus X$); a memory area
can be either executable or writable, but cannot be both. 
The protection of executable space allows MProsper to
intercept all changes to the executable code performed by a user
application or by the Linux kernel itself. On top of this infrastructure, we
use standard code signing to prevent code injection.

Two distinguishing features of MProsper are its execution on top of
a formally verified hypervisor (thus guaranteeing  integrity) and
the verification of its high level model (thus demonstrating that the
security objective is attained). To the best of our knowledge this
is the first time the absence of binary code injection
has been verified for a commodity OS. The verification of the monitor has been
performed using the HOL4 theorem prover and by extending the existing
formal model of the hypervisor \cite{SOFSEM} with the formal specification of the
monitor's run-time checks. 

The paper is organized as follows: Section~\ref{sec:background}
introduces the target CPU architecture (ARMv7A), the architecture of
the existing hypervisor and its interactions with the hosted Linux
kernel, the threat model and the existing formal models;
Section~\ref{sec:design} describes the MProsper architecture and
design, it also elaborates on the additional software required to host
Linux;
Section~\ref{sec:model} describes the formal model of the monitor and
formally states the top level goal: absence of code injection;
Section~\ref{sec:verification} presents the verification strategy,
by summarizing the proofs that have been implemented in HOL4;
Section~\ref{sec:eval} demonstrates the overhead of MProsper through
standard microbenchmarks, it also presents measures of the code
and proof bases; finally, Sections~\ref{sec:related} and~\ref{sec:conclusions} present
the related work and the concluding remarks.

\newcommand{\msound}{\mathit{sound}}
\newcommand{\mcontent}{\mathit{content}}
\newcommand{\mmap}{\mathit{map}}
\newcommand{\munmap}{\mathit{unmap}}
\newcommand{\mfree}{\mathit{free}}
\newcommand{\mcreate}{\mathit{create}}
\newcommand{\mswitch}{\mathit{switch}}
\newcommand{\mvalidate}{\mathit{validate}}
\newcommand{\mWS}{\mathit{WS}}
\section{Background}\label{sec:background}
\subsection{The Prosper Hypervisor}
The Prosper hypervisor supports
the execution of an untrusted Linux guest~\cite{SOFSEM} along with 
several trusted components.
The hosted Linux is paravirtualized; both applications and kernel are
executed unprivileged (in user mode)
while privileged operations are delegated to the hypervisor, which is
invoked via hypercalls. 
The physical memory region allocated to each component
is statically defined. The hypervisor guarantees spatial isolation of the
hosted components; a component can not directly affect (or be affected by)  the
content of the memory regions allocated to other components.
Thus, the interactions among the hosted components are possible only
via controlled communication channels, which are supervised by the
hypervisor.

The Prosper hypervisor and the MProsper monitor target the ARMv7-A
architecture, which is the most widely adopted instruction set architecture in
mobile computing.
In ARMv7-A, the virtual memory is configured via page tables that reside in
physical memory. The architecture provides two levels of page
tables, in the following called L1s and L2s.
These tables represent the configuration of the
Memory Management Unit (MMU) and define the access permissions to the
virtual memory.
As is common among modern architectures, the entries of ARMv7 page tables
 support the NX (No eXecute) attribute: an instruction can be
 executed only if it is fetched from 
a virtual memory area whose NX bit is not set.
Therefore, the system executable code is a subset of the content
of the physical blocks that have at least an executable virtual mapping.

To isolate the components, the hypervisor takes control of the MMU and configures
 the pagetables so that no illicit access is possible. 
This MMU configuration can not be static; 
the hosted Linux must be able to reconfigure the layout of its own
memory (and the memory of the user programs). 
For this reason the hypervisor virtualizes the memory subsystem.
This virtualization consists of a set of APIs 
that enable Linux to request the creation/deletion/modification of a
page table and to switch the one currently used by the MMU.

Similarly to Xen~\cite{xen}, the virtualization of the memory subsystem 
is accomplished by direct paging.
Direct paging allows the guest to allocate the page tables inside its
own memory and to directly manipulate them while the tables are not in active use by the MMU.
Once the page tables are activated, the hypervisor must guarantee that
further updates are possible only via the virtualization API.

The physical memory is fragmented into  blocks of 4 KB. Thus, a 32-bit
architecture has $2^{20}$ physical blocks.
We assign a type to each physical block, that can be:  \textit{data}: the block can be written by the guest,
$\LOneType$: contains part of an $\LOneType$ page table and should not be
writable by the guest, 
$\LTwoType$: contains four $\LTwoType$ page tables and should not be
writable by the guest.
We call the $\LOneType$ and $\LTwoType$ blocks ``potential'' page
tables, since the hypervisor allows to select only these memory areas
to be used as page tables by the MMU.

\begin{table}[t]
\begin{center}
\begin{tabular}{|l|l|}
        \hline
  \textbf{request $r$} & \textbf{DMMU behavior}\\
  \hline
  $\mswitch(bl)$ & makes block $bl$ the active page table\\
  \hline
  $\mfree_{\LOneType}(bl)$ and $\mfree_{\LTwoType}(bl)$ & frees block $bl$, by
  setting its type to $\DataType$\\
  \hline
  $\munmap_{\LOneType}(bl, idx)$,  $\munmap_{\LTwoType}(bl, idx)$ &
 unmaps entry $idx$ of the page table\\
 & stored in block $bl$\\
\hline
$\mathit{link}_{\LOneType}(bl, idx, bl')$
&
maps entry $idx$ of block $bl$ to point the \\
&  L2 stored in $bl'$
\\
  \hline
  $\mmap_{\LTwoType}(bl, idx, bl', ex, wt, rd)$  and
  & map entry $idx$ of block $\ bl\ $  to point to\\
  $\mmap_{\LOneType}(bl, idx, bl', ex, wt, rd)$    
  & block $bl'$ and  granting rights $ex, wt, rd$\\
  & to user mode
  \\
  \hline
  $\mcreate_{\LTwoType}(bl)$  and $\mcreate_{\LOneType}(bl)$
  &
makes block $bl$ a potential $\ \LTwoType/\LOneType$, by\\
&setting its type to $\LTwoType/\LOneType$
\\
\hline
\end{tabular}
\caption{DMMU API}
\label{tbl:dmmu}
\end{center}
\end{table}

Table~\ref{tbl:dmmu} summarizes the APIs that
manipulate the page tables. The set of these functions is called
DMMU. 
Each function validates the page type,
guaranteeing that page tables are write-protected.
A naive run-time check of the page-type policy is not efficient, since
it requires to re-validate the L1 page table whenever the
\textit{switch} hypercall is invoked.
To efficiently enforce that only blocks typed $\DataType$
can be written by the guest the hypervisor maintains a reference counter,
which tracks for each block the sum of descriptors
providing access in user mode to the block.
The intuition is that a hypercall can change the type of a
physical block (e.g. allocate or free a page table) only if the
corresponding reference counter is zero.

\begin{figure}
\center
\includegraphics[width=0.5\linewidth]{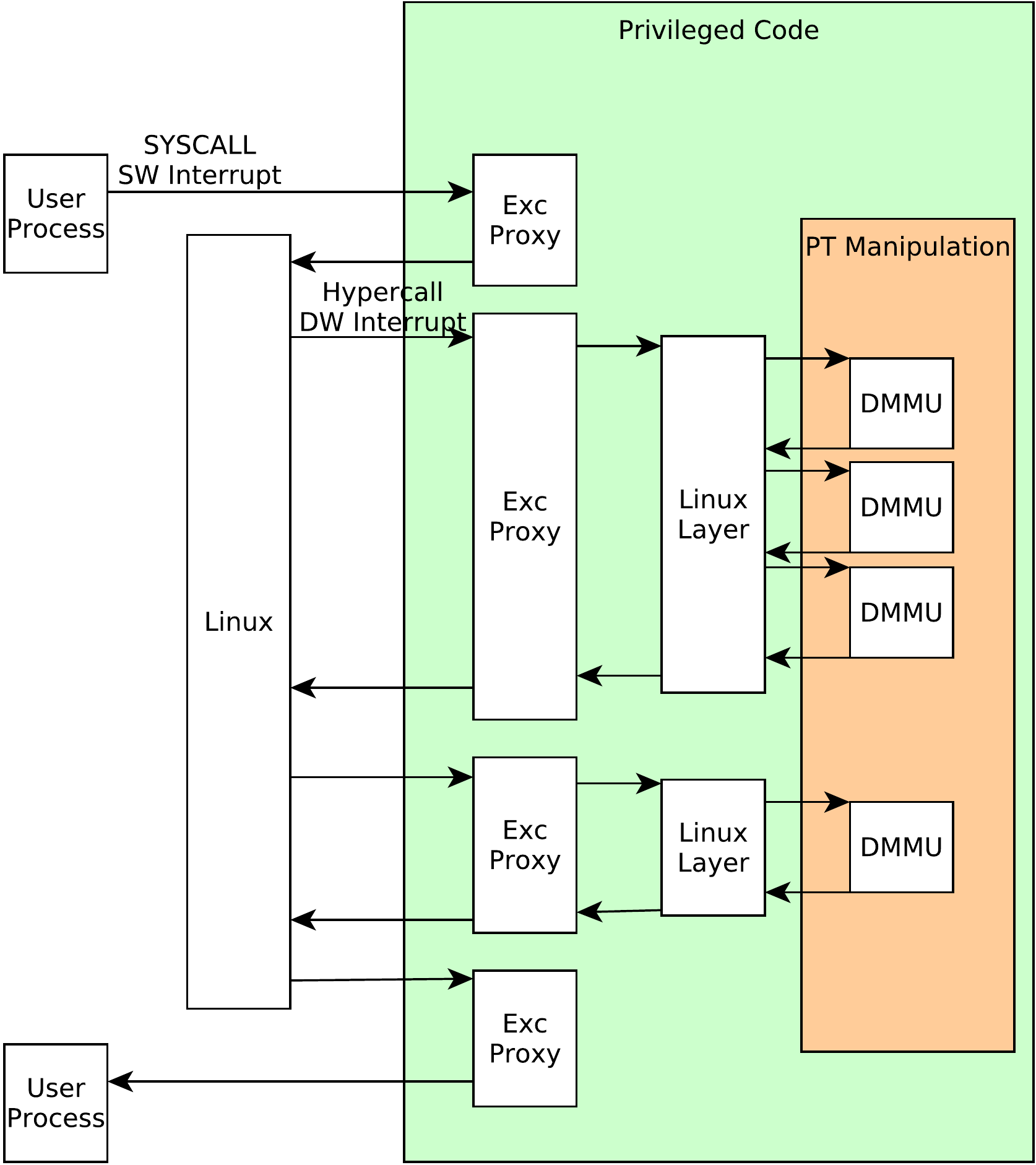}
\caption{Hypervisor Architecture}\label{fig:architecture:hyper}
\end{figure}

A high level view of the hypervisor architecture is depicted in
Fig.~\ref{fig:architecture:hyper}.
The hypervisor is the only component that is executed in
privileged mode. It logically consists of three layers: (i)
an interface layer (e.g. the exception handlers) that is independent
from the hosted software, (ii) a Linux specific layer and (iii) a
critical core (i.e. the DMMU), which is the only component that manipulates the
sensible resources (i.e. the page tables).
Fig.~\ref{fig:architecture:hyper} demonstrates
the behavior of the system when a user process in the Linux guest spawns a new process.

This design has two main benefits: (i) the critical
part of the hypervisor is small and does not depend on the
hosted software and (ii) the Linux-specific layer enriches the
expressiveness of the hypercalls, thus reducing the number of context
switches between the hypervisor and the Linux kernel.
From a verification point of view, to guarantee security of the complete
system it is not necessary to verify functional correctness of the Linux
layer; it suffices to verify that this layer never changes directly the
sensitive resources and that its execution does not depend on
the sensitive data. These tasks can be accomplished using
sand-boxing techniques~\cite{sehr2010adapting} or 
tools for information flow analysis~\cite{DBLP:conf/ccs/BalliuDG14}.

\subsection{The Attack Model}
The Linux guest is not trusted, thus we take into account an 
attacker that has complete control of the partition that hosts 
Linux. The attacker can force user programs and the Linux kernel to follow
arbitrary flows and to use arbitrary data. 
The attacker can invoke the hypervisor, including the DMMU API, through software interrupts and exceptions.
Other transitions into privileged memory are prevented by the hypervisor.  
The goal of the attacker is ``code injection'', for example using a buffer overflow to inject malicious
executable code. This attack is normally performed
by a malicious software that is able to write code into a data storage area of
another process, and then cause this code to be executed.

In this paper we exemplify our monitor infrastructure using code signing.
Signing the system code is a widely used approach to confirm
the software author and guarantee (computationally speaking) that the
code has not 
been altered or corrupted, by use of a cryptographic hash.
Many existing code signing systems rely on a public key
infrastructure (PKI) to provide both code authenticity and integrity.
Here we use code signing to define integrity of the system code:
integrity of an executable physical block stands for the block having
 a valid signature.
Similarly, the integrity of the system code depends on the integrity of 
all executable physical blocks.
The valid signatures are assumed to be known by the runtime monitor.
We refer to this information as the ``golden image'' (GI) and it is held
by the monitor. 

In order to make injected code detectable, we also assume that 
the attacker is computationally bound; it can not modify the
injected code to make its signature compliant with the golden image.
We stress that our goal is not to demonstrate the security properties
of a specific signature scheme. In fact the monitor can be equipped
with an arbitrary signature mechanism and the signature mechanism
itself is just one of the possible approaches that can be used to
check integrity of the system code. For this reason we do not
elaborate further on the computational power of the attacker.

\subsection{Formal Model of the Hypervisor}\label{sec:hypermodel}
Our formal model is built on top of the existing HOL4 model for
ARMv7~\cite{DBLP:conf/itp/FoxM10}. This has
been extended with a detailed formalization of the ARMv7 MMU, so that every
memory access uses virtual addresses and respects the constraints
imposed by the page tables.

An ARMv7 state is a record $\sigma =
\tuple{\regs,\coregs,\mem}\in\Sigma$, where $\regs$, $\coregs$ and $\mem$, respectively, represent the registers, coprocessors and memory. 
In the  state $\sigma$, the function $\mode(\sigma)$ determines the current privilege execution mode, 
which  can be either $\UserMode$ (user mode, used by Linux and the
monitor) or
$\KernelMode$ (privileged mode, used by the hypervisor). 

The system behavior is modeled by a state transition relation $\xrightarrow{l
  \in \{\UserMode, \KernelMode\}} \subseteq \RealStateSpace \times
\RealStateSpace$, representing
the complete execution of a single ARM instruction. 
Non-privileged transitions  ($\ArmStateVar \xrightarrow{\UserMode} \ArmStateVar'$) start and end in $\UserMode$ states.
All the other transitions ($\ArmStateVar \xrightarrow{\KernelMode} \ArmStateVar'$) involve at least one state in privileged level.
A transition from $\UserMode$ to $\KernelMode$ is done by raising
an exception, that can be caused by software interrupts,
illegitimate memory accesses, and hardware interrupts.

The transition relation queries the MMU to translate the virtual
addresses and to check the access permissions.
The MMU is represented by the function $mmu(\ArmStateVar, PL, va, \accreq)
\rightarrow pa \cup \{\bot\}$: it takes the state
$\ArmStateVar$, a privilege level $PL$, a virtual address
$va \in 2^{32}$ and the requested access right $\accreq \in \{rd, wt,
ex\}$,
for \textit{readable}, \textit{writable}, and \textit{executable} in
non-privileged respectively, and returns either the corresponding physical address $pa \in
2^{32}$ (if the access is granted) or a fault ($\bot$).

In~\cite{SOFSEM} we show that a system hosting the hypervisor
resembles the following abstract model.
A system state is modeled by a tuple $\tuple{\ArmStateVar, \HyperStateVar}$, consisting of an ARMv7 state $\ArmStateVar$  and an abstract hypervisor state $\HyperStateVar$,
of the form $\tuple{\PgTypeVar, \PgRefVar_{ex},  \PgRefVar_{wt}}$.
Let $\LabelBlk \in 2^{20}$ be the index of a physical block and $t \in
\{D,L1,L2\}$, 
$\PgTypeVar \vdash \LabelBlk : t$ tracks the type of the
block and  $\PgRefVar_{ex}(\LabelBlk), \PgRefVar_{wt}(\LabelBlk) \in
2^{30}$ track the reference counters: the
number of page tables entries (i.e. entries of
physical blocks typed either $L1$ or $L2$) that map to the physical
block $\LabelBlk$  and are executable or writable respectively.

The transition relation for this model is 
$\tuple{\sigma, h} \xrightarrow{\alpha} \tuple{\sigma', h'}$, where
$\alpha \in \{0,1\}$, and is
defined by the following inference rules:
\begin{itemize}
  \item if 
    $\sigma \xrightarrow{\UserMode} \sigma'$
    then 
    $\tuple{\sigma, h} \xrightarrow{0} \tuple{\sigma', h}$
    ; instructions executed in non-privileged mode that do not
    raise exceptions behave equivalently to the standard ARMv7
    semantics and do not affect the abstract hypervisor state.
  \item if $\sigma \xrightarrow{\KernelMode} \sigma'$ 
    then $\tuple{\sigma, h} \xrightarrow{1} H_r(\tuple{\sigma', h})$,
    where $r=req(\sigma')$;
    whenever an
    exception is raised, the hypervisor is
    invoked through a hypercall, and the reached state is resulting
    from the execution  of the handler $H_r$
\end{itemize}
Here, $req$ is a function that models the hypercall calling
conventions; the target hypercall is identified by
the first register of  $\sigma$, and the other registers provide the hypercall's
arguments. The handlers $H_r$ formally model the behavior of the
memory virtualization  APIs of the hypervisor (see Table~\ref{tbl:dmmu}).

Intuitively, guaranteeing spatial isolation means confining the guest
to manage a part of the physical memory available for the guest uses. 
In our setting, this part is determined statically and identified by the predicate
$G_m(bl)$, which holds if the physical block $bl$ is part of the
physical memory assigned to the guest partition.
Clearly, no security property can be guaranteed if the system
starts from a non-consistent state; for example the guest can not be allowed to change the MMU behavior by directly writing the page tables.
For this reason we introduce a system invariant $I_H(\tuple{\sigma,  h})$ that is used to constrain the set of consistent initial states. Then the hypervisor guarantees that the invariant is preserved by every transition:
\begin{preposition}\label{prep:hypervisor:invariant}
  Let $I_H(\tuple{\sigma,h})$.  If 
  $\tuple{\sigma,h} \xrightarrow{i} \tuple{\sigma',h'}$
  then
  $I_H(\tuple{\sigma',h'})$.
\end{preposition}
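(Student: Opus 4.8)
The plan is to prove Proposition~\ref{prep:hypervisor:invariant} by case analysis on the transition label $i \in \{0,1\}$, reusing the two-stage decomposition behind the invariant-preservation result for the base hypervisor model (Theorem~\ref{lem:invariant}). The crucial structural fact, which I would establish first as an auxiliary lemma, is that $I_H\tuple{\sigma,h}$ depends only on (i) the abstract hypervisor state $h = \tuple{\PgTypeVar,\PgRefVar_{ex},\PgRefVar_{wt}}$ and (ii) the contents of the physical blocks whose type is not $D$, i.e.\ those that (potentially) hold page tables. Consequently any transition that modifies neither $h$ nor the non-data blocks preserves $I_H$ automatically, and any transition that does touch them can be analyzed locally.

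For the guest step ($i=0$) we have, by definition of the transition relation, $\sigma \xrightarrow{\UserMode} \sigma'$ and $h = h'$. From $I_H\tuple{\sigma,h}$ one derives MMU-safety of $\sigma$ (the invariant forces the active L1 into L1-typed blocks whose descriptors point only to L2-typed blocks, and only $D$-blocks are writable in user mode), and MMU-safety is exactly the precondition of Property~\ref{prop:User-No-Exfiltration}. That property yields that $\sigma'$ is write-derivable from $\sigma$ in $\UserMode$ and that the coprocessor registers are unchanged; combined with the fact that every non-$D$ block lies in $G_m$ but is not writable in user mode, this shows the contents of all non-data blocks are preserved. Since $h = h'$, the auxiliary lemma gives $I_H\tuple{\sigma',h'}$.

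For the hypercall step ($i=1$) we have $\sigma \xrightarrow{\KernelMode} \sigma''$ into an intermediate privileged state, and $\tuple{\sigma',h'} = H_r(\tuple{\sigma'',h})$ with $r = req(\sigma'')$. The first transition merely raises an exception from user mode, so the argument of the $i=0$ case (again via Property~\ref{prop:User-No-Exfiltration}) shows $I_H\tuple{\sigma'',h}$. It then remains to show each handler $H_r$, for $r$ ranging over the DMMU API of Table~\ref{tbl:dmmu}, maps invariant-satisfying states to invariant-satisfying ones. Here I would reuse the standard direct-paging toolbox: (a) soundness of the page table in a block $b$ is preserved under any type change restricted to blocks with zero reference counters, provided $b$'s own type is unchanged; (b) under the same restriction, the counted outgoing references of every unchanged block are unchanged; and (c) soundness and outgoing reference counts of a block depend only on that block's own memory content. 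With these, verifying each handler reduces to checking well-typedness and reference-counter soundness for the small, statically known set of blocks actually touched by $H_r$ --- only the active L1 for a section map, only the target block and its counter for a create/free, and so on --- a finite case split discharged by symbolic evaluation of the handler specifications in HOL4.

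The main obstacle will be the hypercall case, and within it the reference-counter bookkeeping. Because the MProsper model refines the hypervisor state to carry two counters ($\PgRefVar_{ex}$ for executable mappings, $\PgRefVar_{wt}$ for writable ones) rather than a single aggregate, the lemmas (a)--(c) and the per-handler obligations must be re-established for this two-counter setting: one must show $H_r$ updates $\PgRefVar_{ex}$ and $\PgRefVar_{wt}$ exactly in step with the descriptors it installs or clears, and that a type change is approved only when the relevant counters are zero, so no surviving page table retains a dangling writable or PT reference into a just-retyped block. The residual difficulty is simply the size of the case analysis over the nine API functions and its interaction with the ARMv7 MMU model when passing between the abstract ``block index'' view and concrete physical addresses --- tedious but routine once the lemmas above are in place.
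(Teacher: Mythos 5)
Your proposal is correct and follows essentially the same route as the thesis: the paper treats this proposition as imported from the prior hypervisor verification, and the actual proof (Theorem~1 of the memory-isolation paper, via the lemmas ``Invariant vs guest'' and ``Invariant vs hypervisor'') uses exactly your decomposition --- ARM-integrity plus MMU-safety for the unprivileged case, and your auxiliary facts (a)--(c) are precisely Lemmas~\ref{lem:inv-sound-type}, \ref{Lemma:ref_cnt_counters} and \ref{Lemma:unchanged_counters} used to localise the per-handler case split. Your observation that the two-counter state $\tuple{\PgTypeVar,\PgRefVar_{ex},\PgRefVar_{wt}}$ requires re-establishing these lemmas is also apt.
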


We use the function $\mcontent: \Sigma \times 2^{20} \rightarrow
2^{4096*8}$  that returns the content of a physical block in a system
state as a value of $4$~KB.
Proposition~\ref{prep:hypervisor} summarizes some of the security
properties verified in~\cite{SOFSEM}:
the untrusted guest can not directly change 
(1) the memory allocated to the other components,
(2) physical blocks that contain potential page tables,
(3) physical blocks whose writable reference counter is zero and
(4) the behavior of the MMU.

\begin{preposition}\label{prep:hypervisor}
  Let $I_H(\tuple{\sigma,(\tau, \rho_{wt}, \rho_{ex})})$.
  If 
  $\tuple{\sigma,(\tau, \rho_{wt}, \rho_{ex})} \xrightarrow{0} \tuple{\sigma',h'}$
  then:
\begin{itemize}[] 
   \item \label{prep:hypervisor:guesttblocks} 
-- For every $bl$ such that $\neg G_m(bl)$ then $content(bl, \sigma) = content(bl, \sigma')$
          \item \label{prep:hypervisor:ptblocks}
-- For every $bl$ such that $\tau(bl) \neq \DataType$ then
  $\mcontent(bl, \sigma) = \mcontent(bl, \sigma')$
  \item \label{prep:hypervisor:wtblocks}
-- For every $bl$ if $\mcontent(bl, \sigma) \neq \mcontent(bl, \sigma')$ then
  $\rho_{wt}(bl) > 0$
  \item \label{prep:hypervisor:mmu-safe}
-- For every $va,PL,acc$  we have $mmu(\sigma, va, PL, acc) = mmu(\sigma', va, PL, acc)$
\end{itemize}

\end{preposition}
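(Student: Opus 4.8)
The plan is to derive all four claims from the behaviour of unprivileged ARMv7 transitions, reusing the machinery developed for the Top Level Specification in Paper~\ref{paper:JCS}. By the definition of the transition relation, $\tuple{\sigma,(\tau,\rho_{wt},\rho_{ex})} \xrightarrow{0} \tuple{\sigma',h'}$ implies $h' = (\tau,\rho_{wt},\rho_{ex})$ and $\sigma \xrightarrow{\UserMode} \sigma'$ in the underlying ARM model. The crucial preliminary step is to show that the hypervisor invariant entails MMU-safety, i.e.\ $I_H(\tuple{\sigma,h}) \Rightarrow \mathit{mmu}_s(\sigma)$, in exact analogy with Lemma~\ref{lem:mmu-safe}: the well-typedness part of $I_H$ forces the active L1 table to reside in four consecutive blocks typed $\LOneType$, every descriptor of which points to a block typed $\LTwoType$, and only $\DataType$-blocks may be writable in user mode; hence any state write-derivable from $\sigma$ in user mode with the same coprocessor registers selects the same page-table contents and therefore the same MMU behaviour. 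With $\mathit{mmu}_s(\sigma)$ in hand, Property~\ref{prop:User-No-Exfiltration} (ARM-integrity) applies and yields $wd(\sigma,\sigma',\UserMode)$ together with $\sigma.\coregs = \sigma'.\coregs$.

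The fourth claim (MMU behaviour preserved) then follows immediately: $wd(\sigma,\sigma',\UserMode)$ and $\sigma.\coregs = \sigma'.\coregs$ place the pair $(\sigma,\sigma')$ in the hypothesis of the MMU-safety definition, so $\sigma \equiv_{\mathit{mmu}} \sigma'$, which is precisely $mmu(\sigma, va, PL, acc) = mmu(\sigma', va, PL, acc)$ for all $va, PL, acc$. This is the argument of Theorem~\ref{lem:mmu-integrity} transported verbatim.

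For the first two claims I would exploit the page-table soundness component of $I_H$. From $wd(\sigma,\sigma',\UserMode)$, $\mathit{mem}$ can differ between $\sigma$ and $\sigma'$ only at physical addresses that are writable in user mode. The soundness predicate inside $I_H$ guarantees that a page-table entry grants a user-mode permission only to an address inside the guest memory---giving the first claim, exactly as in Lemma~\ref{lem:user-mem}---and grants write access only to blocks typed $\DataType$, giving the second claim. The third claim is the reference-counter argument, in the same spirit as Theorem~\ref{lem:No-Exfiltration}: the soundness part of $I_H$ for $\rho_{wt}$ asserts that $\rho_{wt}(bl)$ equals the number of user-writable page-table descriptors pointing to $bl$; since a block whose content changes must contain at least one user-writable physical address, there is at least one such descriptor, hence $\rho_{wt}(bl) > 0$. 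Together these discharge bullets (1)--(3), and Proposition~\ref{prep:hypervisor:invariant} ensures throughout that we are reasoning about a consistent state.

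The main obstacle I anticipate is the interaction between the reference-counter bookkeeping and the fact that a single ARMv7 instruction may issue a series of memory writes (store-pair, unaligned stores): without the $\mathit{mmu}_s$ premise an intermediate write could in principle reconfigure the page tables so as to grant additional write permissions, breaking the chain of reasoning---this is exactly why MMU-safety must be established first and threaded through Property~\ref{prop:User-No-Exfiltration}. A secondary subtlety, specific to this model rather than Paper~\ref{paper:JCS}, is that the reference counter is split into $\rho_{wt}$ and $\rho_{ex}$, so the reference-counter invariant must be stated and maintained for each component separately, with care that it is $\rho_{wt}$, not $\rho_{ex}$, that controls writability of a block. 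Everything else is a routine reuse of the HOL4 lemmas already proved for the TLS.
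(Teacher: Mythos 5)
Your proposal is correct and follows essentially the same route as the thesis: Paper~\ref{paper:esorics} does not prove this proposition in place but imports it from the hypervisor verification, and your reconstruction matches the proofs given there --- establishing MMU-safety from the invariant (Lemma~\ref{lem:mmu-safe}), applying ARM-integrity to obtain write-derivability and unchanged coprocessor registers, and then discharging the four bullets via the soundness of page tables and of the reference counter exactly as in Lemma~\ref{lem:user-mem}, Theorem~\ref{lem:mmu-integrity}, and Theorem~\ref{lem:No-Exfiltration}. Your remarks on the necessity of the $\mathit{mmu}_s$ premise for multi-write instructions and on the split $\rho_{wt}/\rho_{ex}$ counters are both accurate and correctly handled.
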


\section{Design}\label{sec:design}

We configured the hypervisor to support the interaction protocol
of Figure~\ref{fig:protocol}; the monitor mediates accesses
to the DMMU layer.
Since the  hypervisor supervises the changes of the page tables
the monitor is able to intercept all modifications to the memory layout.
This makes the monitor able to know if a physical block is writable: This is
the case 
if there exists at least one virtual mapping pointing to the block with a
guest writable access permission. Similarly it is
possible to know if a physical block is executable.
Note that the identification of the executable code (also called
``working set'') does not rely
on any information provided by the untrusted guest. Instead, the monitor
only depends on HW information, which can not be tampered by an attacker.

\begin{figure}
  \begin{boxedminipage}{\textwidth}
    \begin{enumerate}
    \item For each DMMU hypercall invoked by a guest, the hypervisor forwards the hypercall's request to the monitor. 
    \item The monitor validates the request based on its validation mechanism.
    \item The monitor reports to the hypervisor the result of the hypercall validation. 
    \end{enumerate}
  \end{boxedminipage}
  \caption{The interaction protocol between the Prosper hypervisor and the monitor}
  \label{fig:protocol}
\end{figure}

The first policy enforced by the monitor is code signature:
Whenever Linux requests to change a page table (i.e. causing to change
the domain of the working set) the monitor (i) identifies
the physical blocks that can be made executable by the request, (ii)
computes the block signature and (iii) compares the result with the
content of the golden image.
This policy is sufficient to prevent code injection
that are caused by changes of the memory layout setting, due to 
the  hypervisor forwarding to the monitor all requests to change 
the page tables.

However, this policy is not sufficient to guarantee integrity of the
working set. In fact, operations that modify the content of a physical
block that is executable can violate the integrity of the executable code.
These operations cannot be intercepted by the monitor, since they are not
supposed to raise any hypercall. In fact, a simple write operation in
a block typed $\DataType$
does not require the hypervisor intermediation since no
modification of the memory layout is introduced. To prevent code
injections performed 
 by writing malicious code in an executable area of the memory, the
 monitor enforces the executable
space protection policy $W \oplus X$, preventing physical blocks from being simultaneously writable and executable.
As for the hypervisor, a naive run-time check of the executable
space protection is not efficient. Instead, we reuse the hypervisor
reference counters: we accept a hypercall that makes a block
executable (writable) only if the writable (executable) reference counter
of the block is zero.

An additional complication comes from the Linux architecture. An
unmodified Linux kernel will not survive the policies enforced by
the monitor, thus its execution will inevitably fail. For example,
when a user process is running there are at least two virtual memory
regions that are mapped to the same physical memory where the process
executable resides: (i) the user ``text segment'' and (ii) the
``kernel space''  (which is an injective map to the whole physical
memory).
When the process is created, Linux requests to set the text
segment as executable and non writable. However, Linux does not
revoke its right to write inside this memory area using its kernel
space. This
setting is not accepted by the monitor, since it violates $X
 \oplus W$, thus making it impossible to execute a user process.

Instead
of adapting a specific Linux kernel we decided to implement
a small emulation layer that has two functionalities:
\begin{itemize}
  \item It proxies all requests from the Linux layer to the
    monitor. If the emulator receives a request that can be rejected
    by the monitor (e.g. a request setting as writable a memory region
    that is currently executable) then the emulator (i) downgrades the
    access rights of the request (e.g. setting them as non writable)
    and (ii) stores the information about the  suspended right  in a 
    private table.
  \item It proxies all data and prefetch aborts. The monitor looks up
    in the private table to identify if the abort is due
    to an access right that has been previously downgraded by the
    emulator. In this case the monitor attempts (i) to downgrade the
    existing mapping that conflicts with the suspended access right
    and (ii) to re-enable the suspended access right.
\end{itemize}
Note that a malfunction of the emulation layer does not affect the
security of the monitor. Namely, we do not care if the emulation
layer is functionally correct, but only that it does not access
sensible resources directly. 

\begin{figure}[t]
\center
\includegraphics[width=0.5\linewidth]{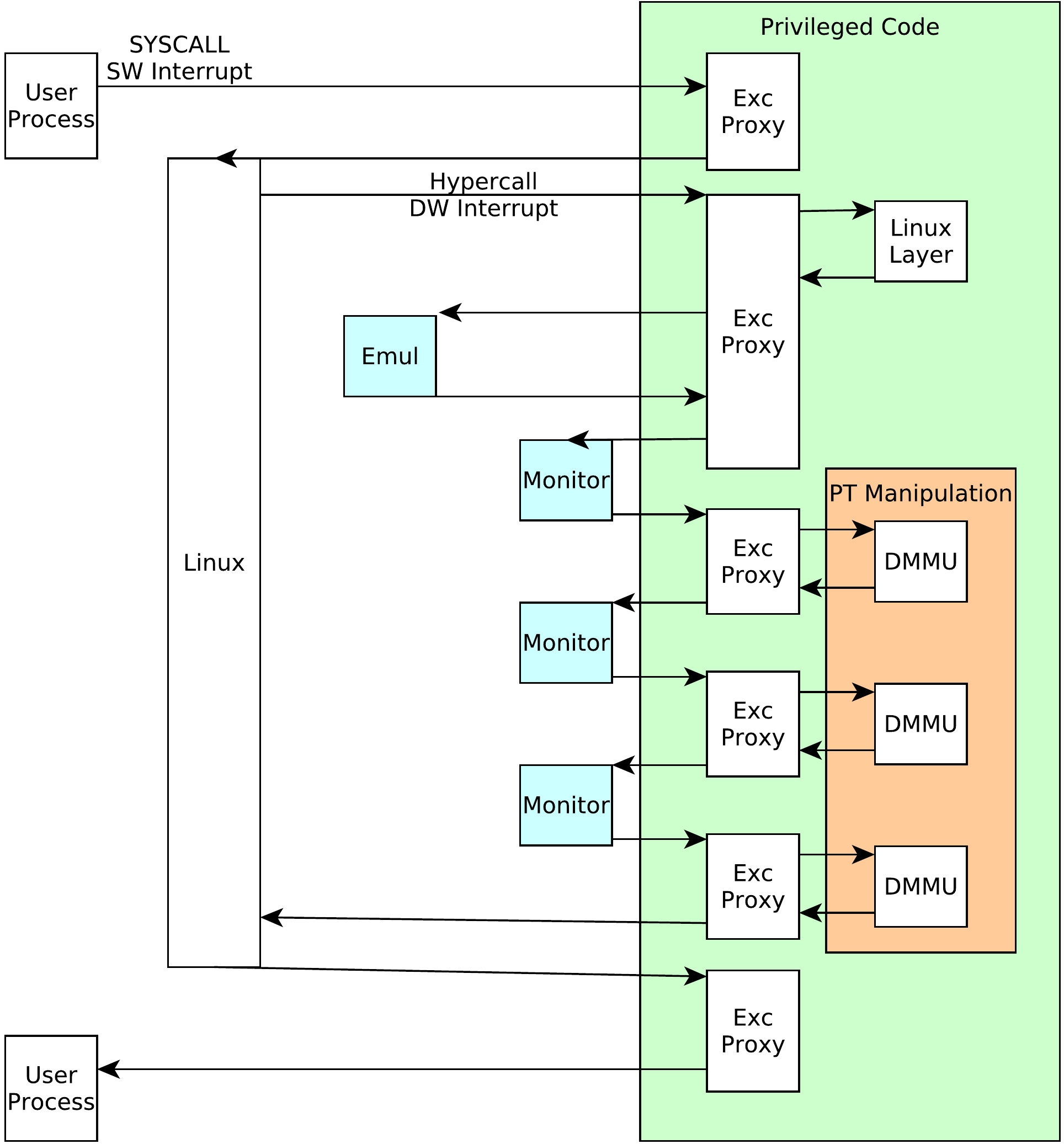}
\caption{MProsper's Architecture}\label{fig:architecture:final}
\end{figure}

Fig.~\ref{fig:architecture:final} depicts the architecture of
MProsper.
Both the runtime monitor and the emulator are deployed as two guests
of the Prosper hypervisor. 
The Linux layer prepares a list of requests in  a
buffer shared with the emulation guest.
After the Linux layer returns, the hypervisor
activates the emulation guest, which manipulates the requests (or adds
new ones) as discussed before. Then the hypervisor iteratively asks
the monitor to validate one of the pending requests and upon success it
commits the request by invoking the corresponding DMMU function.

Using a dedicated guest on top
of the hypervisor permits to decouple the enforcement of the security policies
from the other hypervisor functionalities, thus keeping the
trusted computing base minimal.
Moreover, having the security policy wrapped inside a guest supports both
the tamper-resistance and the trustworthiness of the monitor. In fact, the monitor can take advantage from the isolation properties
provided by the hypervisor. This avoids malicious
interferences coming from the other guests (for example from a process of
an OS running on a different partition of the same machine). Finally,
decoupling the run-time security policy from the other functionalities of the
hypervisor makes the formal specification and verification of the monitor
more affordable.

\newcommand{\integrity}{\mathit{integrity}}

\section{Formal Model of MProsper}\label{sec:model}
The formal model of the system (i.e. consisting of the hypervisor,
the monitor and the untrusted Linux) is built on top of the models
presented in Section~\ref{sec:hypermodel}.
Here we
leave unspecified the algorithm used to sign and check
signatures, so that our results can be used for different intrusion
detection mechanisms. 
The golden image $GI$ is a finite set of signatures $\{s_1,
\dots, s_n\}$, where the signatures are selected from a domain $S$.
We assume the existence of a function $sig : 2^{4096*8}
\rightarrow S$ that computes the signature of the content of a block.
The system behavior is modeled by the following rules:

\begin{enumerate}
 \item 
$\begin{array}{cc}
  \frac{\tuple{\sigma, h} \xrightarrow{0} \tuple{\sigma', h'}}
       {\tuple{\sigma, h, GI} \xrightarrow{0} \tuple{\sigma', h', GI}}
\end{array}$

 \item
$\begin{array}{cc}
  \frac{\tuple{\sigma, h} \xrightarrow{1} \tuple{\sigma', h'}
       \ \mvalidate(req(\tuple{\sigma, h}), \tuple{\sigma, h, GI})
       }
       {\tuple{\sigma, h, GI} \xrightarrow{1} \tuple{\sigma', h', GI}}
\end{array}$
       
 \item
$\begin{array}{cc}
  \frac{\tuple{\sigma, h} \xrightarrow{1} \tuple{\sigma', h'}
       \ \neg \mvalidate(req(\tuple{\sigma, h}), \tuple{\sigma, h, GI})
       }
       {\tuple{\sigma, h, GI} \xrightarrow{1} \epsilon(\tuple{\sigma, h, GI})}
\end{array}$
\end{enumerate}

User mode transitions (e.g. Linux activities)  require neither
hypervisor nor  monitor intermediation.
Proposition~\ref{prep:hypervisor:guesttblocks} justifies the fact
that, by construction, the transitions executed by the untrusted component can not 
affect the monitor state; (i) the golden image is constant and (ii)
the monitor code can be statically identified and abstractly modeled.
Executions in privileged mode require monitor 
intermediation. If the monitor validates the request,
then the standard behavior of the hypervisor is executed. Otherwise
the hypervisor performs a special
operation to reject the request, by reaching the state that is returned by a function $\epsilon$.
Hereafter, the function 
$\epsilon$ is assumed to be the identity. 
Alternatively, 
$\epsilon$ can transform the state so that the requestor is informed
about the rejected operation, by updating the user registers according
to the desired calling convention.

The function $\mvalidate(req(\tuple{\sigma, h}), \tuple{\sigma, h, GI})$
represents the validation mechanism of the monitor, which checks at
run-time possible violations of the security policies. In Table~\ref{tbl:monitor:sec-policies} we briefly
summarize the policies for the different access requests.
Here, $PT$ is a function that
yields the list of mappings granted by a page table,
where each mapping is a tuple $(vb,pb,wt,ex)$ containing the virtual block
mapped ($vb$), the pointed physical block ($pb$) and the unpriviledged
rights to execute ($ex$) and write ($wt$).
The rules in Table~\ref{tbl:monitor:sec-policies} are deliberately
more abstract than the ones modeled in HOL4 and are used to
intuitively present the behavior of the monitor. For example, the
function $PT$ is part of the hardware model and is not explicitly used
by the monitor code, it is instead more similar to an iterative
program. This makes our verification more difficult, but it also makes the monitor
model as near as possible to the actual implementation, enabling further
verification efforts that can establish correctness of the implementation.

Note that the monitor always checks that a mapping is not writable and
executable simultaneously.
Furthermore, if a mapping grants a writable access then the executable
reference counter of the pointed physical block must
be zero, guaranteeing that this mapping does not conflict (according
with the executable space protection policy) with any other allocated
page table. Similarly, if a mapping grants an executable access,
then the writable reference counter of the pointed block must be zero.

\begin{table}[t]
\begin{tabular}{|l|l|}
        \hline
\textbf{request 
  $r$} & 
\textbf{$\mvalidate(r, \tuple{\sigma, (\tau, \rho_{wt}, \rho_{ex}), GI})$ holds iff}\\
  \hline
  $\mswitch(bl)$ & always\\
  \hline
  $\mfree_{L1}(bl)$ and $\mfree_{L2}(bl)$ & always\\
  \hline
  $\munmap_{L1}(bl, idx)$, $\munmap_{L2}(bl, idx)$, &  $\rho_{ex}(bl) =  0$ \\ and $link_{L1}(bl, idx, bl')$ &
\\
  \hline
  $\mmap_{L2}(bl, idx, bl', ex, wt, rd)$  and
  & $\msound_{W \oplus X}(wt,ex,\rho_{wt}, \rho_{ex}, bl')\ $ and \\
  $\mmap_{L1}(bl, idx, bl', ex, wt, rd)$ 
  & $\msound_{S}(ex,bl',\sigma, GI)
  \wedge \rho_{ex}(bl) = 0$
  \\
  \hline
  $\mcreate_{L2}(bl)$  and $\mcreate_{L1}(bl)$
  & $\forall (vb,pb,wt,ex) \in PT(\mcontent(bl, \sigma))$. \\
  & $\>\>\>\msound_{W \oplus X}(wt,ex,\rho_{wt}, \rho_{ex}, pb)$ and \\
  & $\>\>\>\msound_{S}(ex,pb,\sigma, GI)$\\ 
  & \\
  & $\forall (vb',pb',wt',ex') \in PT(\mcontent(bl, \sigma)).$ \\
  & $\>\>\>\textit{no-conflict}(vb,pb,wt,ex)(vb',pb',wt',ex')$

\\
\hline
\multicolumn{2}{l}{
 where}\\
\multicolumn{2}{l}{
  $\msound_{W \oplus X}(wt,ex,\rho_{wt}, \rho_{ex}, bl) = 
  \left (
  \begin{array}{l}
(ex \Rightarrow \neg wt \wedge \rho_{wt}(bl) = 0)\ \wedge\ \\
(wt \Rightarrow \neg ex \wedge \rho_{ex}(bl) = 0)
\end{array}
\right )
 $
\vspace{1em}
}
 \\
\multicolumn{2}{l}{
  $\msound_{S}(ex,bl,\sigma, GI) = 
(ex \Rightarrow \integrity(GI,bl,\mcontent(bl,\sigma)))
 $
 \vspace{1em}
}
 \\
\multicolumn{2}{l}{
  $\textit{no-conflict}(vb,pb,wt,ex)(vb',pb',wt',ex') = 
\left (
\begin{array}{l}
(vb \neq vb' \wedge pb = pb') \Rightarrow \\
(ex \Rightarrow \neg wt' \wedge wt \Rightarrow \neg ex')
\end{array}
\right )
 $
}
\end{tabular}
\caption{Security policies for the available access requests}
\label{tbl:monitor:sec-policies}
\end{table}

To formalize the top goal of our verification
we introduce some auxiliary notations. 
The working set identifies the physical blocks that host
executable binaries and their corresponding content.
\begin{definition}
  Let $\sigma$ be a machine state. The working set of $\sigma$ is defined as
\[
  \mWS(\sigma) = \{\tuple{bl, content(bl, \sigma)} \mid
  \exists pa,va. mmu(\sigma, \UserMode, va, ex) = pa \wedge pa \in bl
  \}
\]
\end{definition}

By using a code signing approach, we say that the integrity of a
physical block is satisfied if the signature of the block's content belongs
to the golden image.
\begin{definition}
  Let $cnt \in 2^{4096*8}$ be the 4KB content of a physical block $bl$
   and $GI$ be the golden image. Then $ \integrity(GI, bl, cnt)$ if, and
only if, $sig(bl, cnt) \in GI$
\end{definition}
Notice that our security property can be refined to fit different
anti-intrusion mechanisms. For example, $\integrity(GI, bl, cnt)$ can be
instantiated with the execution of an anti-virus scanner.

The system state is free of malicious code injection if
the signature check is satisfied for the whole executable code. That is:
\begin{definition}
  Let $\sigma$ be a machine state, $bl$ be a physical block and $GI$
  be the golden image. Then $\integrity(GI, \sigma)$ if, only only if, for all $ \tuple{bl,cnt} \in \mWS(\sigma)$, $\integrity(GI, bl, cnt)$
\end{definition}

Finally, we present our top level proof goal: No code injection can succeed.
\begin{preposition}
  If $\tuple{\sigma, h, GI}$ is a state reachable from the initial
  state of the system $\tuple{\sigma_0, h_0, GI}$ then 
  $\integrity(GI, \sigma)$
\end{preposition}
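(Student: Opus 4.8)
The plan is to strengthen the statement to an inductive invariant and prove it by induction on the length of the execution from $\tuple{\sigma_0,h_0,GI}$. Writing $h=(\tau,\rho_{wt},\rho_{ex})$ for an abstract hypervisor state, I would take the invariant $\mathcal{J}(\tuple{\sigma,h,GI})$ to be the conjunction of: (a) the hypervisor invariant $I_H(\tuple{\sigma,h})$; (b) the executable-space-protection property, i.e.\ $\rho_{ex}(bl)>0 \Rightarrow \rho_{wt}(bl)=0$ and $\rho_{wt}(bl)>0 \Rightarrow \rho_{ex}(bl)=0$ for every block $bl$; (c) signature validity, i.e.\ $\integrity(GI,\sigma)$ itself; and (d) the consistency condition that every $\tuple{bl,cnt}\in\mWS(\sigma)$ has $\rho_{ex}(bl)>0$, so that the executable reference counter over-approximates the working set. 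Clause (d) is essentially reference-counter soundness, already established for the hypervisor in Section~\ref{sec:tlsproof} of Paper~\ref{paper:JCS}, but carrying it inside $\mathcal{J}$ avoids re-deriving it at each step. The statement presupposes the system starts in a ``good'' configuration; concretely I would assume $\mathcal{J}(\tuple{\sigma_0,h_0,GI})$ — in particular that $I_H$ holds initially and that the golden image matches the initially loaded code — so that the base case is immediate, and $\integrity(GI,\sigma)$ then follows from $\mathcal{J}$ for every reachable state.

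For the inductive step I would split on the three transition rules. Rule (1), an unprivileged step $\tuple{\sigma,h}\xrightarrow{0}\tuple{\sigma',h}$: here $h$ is unchanged, so clauses (b) and (d) are preserved syntactically; by Proposition~\ref{prep:hypervisor} (the clause stating that any block with zero writable references keeps its content, together with the MMU-safety clause) the content of every block with $\rho_{wt}(bl)=0$ and the MMU behaviour are unchanged; since every block of the working set has $\rho_{ex}(bl)>0$ and hence, by (b), $\rho_{wt}(bl)=0$, we get $\mWS(\sigma')=\mWS(\sigma)$, whence $\integrity(GI,\sigma')$; and $I_H$ is preserved by Proposition~\ref{prep:hypervisor:invariant}. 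The emulation guest's transitions are exactly of this form, so its possible malfunction is harmless: it never touches sensitive resources. Rule (3), the rejected hypercall, is trivial because $\epsilon$ is the identity (for a non-trivial $\epsilon$ one would additionally require it to touch only user registers).

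The bulk of the work — and the step I expect to be the main obstacle — is Rule (2), a validated DMMU hypercall $\tuple{\sigma,h}\xrightarrow{1}\tuple{\sigma',h'}$ with $\mvalidate(\mathit{req}(\tuple{\sigma,h}),\tuple{\sigma,h,GI})$. Here I would argue by cases over the nine operations of Table~\ref{tbl:dmmu}, in each case using reference-counter soundness to translate the numeric side conditions of $\mvalidate$ (Table~\ref{tbl:monitor:sec-policies}) into statements about the actual page-table entries before and after the step. The key sub-lemma is that a DMMU operation modifies memory content only inside blocks typed $\LOneType$ or $\LTwoType$, so whenever $\mcontent(bl,\sigma)\neq\mcontent(bl,\sigma')$ the block $bl$ is a page table; combined with clause (d) and the $\msound_{S}$ clause of $\mvalidate$, this shows every block executable in $\sigma'$ either (i) was executable in $\sigma$ with unchanged content — so it retains its valid signature — or (ii) is freshly made executable by the request, in which case $\msound_{S}$ was checked and yields its signature validity directly; this re-establishes (c) and (d). The $\msound_{W\oplus X}$ and $\textit{no-conflict}$ clauses, again via reference-counter soundness, re-establish (b): for the $\mathit{map}$ operations a single new descriptor cannot turn a block into one that is simultaneously write-referenced and execute-referenced, and for $\mcreate$ the conjunction of $\msound_{W\oplus X}$ over all entries of the new table together with $\textit{no-conflict}$ between those entries rules out the subtle case where two entries of the freshly activated page table jointly break $W\oplus X$. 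Finally $I_H$ is preserved by Proposition~\ref{prep:hypervisor:invariant}. The delicate points are precisely the aggregate reasoning for $\mcreate$ (a whole $\LTwoType$ block of $256$ entries, or four such for an $\LOneType$, going live at once) and the bookkeeping that the monitor's counters agree with $\rho_{wt}$ and $\rho_{ex}$ in the hypervisor model — which is why the argument leans heavily on the reference-counter soundness lemmas of Paper~\ref{paper:JCS}.
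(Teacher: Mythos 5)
Your overall strategy---strengthen to an inductive invariant and induct on the length of the trace, splitting on the three transition rules---is exactly the paper's strategy, and your treatment of rules (1) and (3) matches the paper's lemma on user transitions (page tables and executable blocks are frozen because $\rho_{ex}>0$ forces $\rho_{wt}=0$). However, your invariant is too weak to be inductive at rule (2), specifically for the operations that activate \emph{previously validated} page tables without re-checking their contents: $\mathit{switch}$ (validated ``always''), $\mathit{link}_{L1}$ (validated only by $\rho_{ex}(bl)=0$), and $\mathit{create}_{L1}$ when the new L1 contains PT-descriptors pointing to already-existing L2 blocks. Your clauses (c) and (d) speak only about the \emph{current} working set $\mathit{WS}(\sigma)$, which is computed from the page table that is active in $\sigma$. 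After a $\mathit{switch}$, the working set becomes the set of blocks executable under the newly activated table; nothing in your induction hypothesis says those blocks carry valid signatures in the current state, and the monitor performs no signature check at switch time. One can argue that each such block was checked when its executable mapping was created and that its content cannot have changed since (because $\rho_{ex}>0$ and clause (b) force $\rho_{wt}=0$), but that is a history argument, not a consequence of your state invariant, so the induction does not close as written.

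The fix is precisely the paper's choice of invariant: quantify over all \emph{potential} page tables---every block whose type is not data---and require $\mathit{sound}_{W\oplus X}$ and $\mathit{sound}_{S}$ for every mapping they contain; equivalently, add to your $\mathcal{J}$ the clause that every block granted executable access by \emph{any} allocated page table (in particular, every block with $\rho_{ex}(bl)>0$) currently has a valid signature. The paper explicitly motivates this stronger formulation by the need to accept context switches without re-validation. With that clause added, $\mathit{switch}$ and $\mathit{link}_{L1}$ preserve the invariant almost trivially (the new working set is contained in the set of blocks already covered by the hypothesis), and the remainder of your case analysis for the map, create, unmap, and free operations goes through essentially as in the paper's lemmas on privileged transitions.
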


\section{Verification Strategy}\label{sec:verification}

Our verification strategy consists of introducing a state invariant $I(s)$ that is preserved by any possible transition and demonstrating that the invariant
guarantees the desired security properties.

\begin{definition}\label{def:monitor:invariant}
$I(\sigma, (\tau, \rho_{wt}, \rho_{ex}), GI)$ holds if
\[
\begin{array}{l}
I_H(\sigma, (\tau, \rho_{wt}, \rho_{ex})) \wedge 
\mbox{} \\
\forall\ bl\ .\ (\neg (\tau(bl) = \DataType)) \Rightarrow
\forall\ (vb,pb,wt,ex) \in PT(\mcontent(bl, \sigma)) .\\
\hskip0.5cm 
\msound_{W \oplus X}(wt,ex,\rho_{wt}, \rho_{ex}, pb) \wedge 
\msound_{S}(ex,pb,\sigma, GI)
\end{array}
\]
\end{definition}

Clearly, the soundness of the monitor depends on the soundness of the
hypervisor, thus $I$ requires that the hypervisor's invariant $I_H$
holds.
Notice that the invariant constrains not only the page tables
currently in use, but it constrains all potential page tables, which
are all the blocks that have type different from $\DataType$. This allows to
speed up the context switch, since the guest simply re-activates a page
table that has been previously validated.
Technically, the invariant guarantees 
protection of the memory that can be potentially executable 
and the correctness of the corresponding signatures.

We verified independently that the invariant is preserved by
unprivileged transitions (Theorem~\ref{lem:monitor:invariant:user})
and by privileged transitions
(Theorem~\ref{lem:monitor:invariant:hyper}). Moreover, 
Lemma~\ref{lem:monitor:invariant:safe} demonstrates that the
monitor invariant guarantees there is no malicious content in the
executable memory.
\begin{lemma}
  \label{lem:monitor:invariant:safe}
  If $I(\tuple{\sigma, (\tau, \rho_{wt}, \rho_{ex}), GI})$ then
  $\integrity(GI, \sigma)$.
\end{lemma}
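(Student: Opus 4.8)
The plan is to prove Lemma~\ref{lem:monitor:invariant:safe} directly by unfolding the definition of $\integrity(GI,\sigma)$ and exhibiting, for each block in the working set, a potential page table whose entry witnesses that its signature is in the golden image. First I would fix a state $\tuple{\sigma,(\tau,\rho_{wt},\rho_{ex}),GI}$ satisfying the invariant $I$ and an arbitrary pair $\tuple{bl,cnt}\in\mWS(\sigma)$; by definition of the working set there are $pa,va$ with $mmu(\sigma,\UserMode,va,ex)=pa$ and $pa\in bl$, and $cnt=content(bl,\sigma)$. The goal is to show $\integrity(GI,bl,cnt)$.

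The key step is to reconstruct, from the MMU translation, the concrete page-table entry that granted the executable permission. This relies on the formal ARMv7 MMU model: whenever $mmu(\sigma,\UserMode,va,ex)$ yields a physical address, there is an active L1 page table (residing, by the hypervisor invariant $I_H$, in blocks of type $L1$), and either a section descriptor in it or an entry of a referenced L2 page table (in a block of type $L2$) that maps the virtual block of $va$ to $bl$ with the executable bit set. In either case we obtain a block $bl'$ with $\tau(bl')\neq\DataType$ and a mapping $(vb,pb,wt,ex')\in PT(\mcontent(bl',\sigma))$ with $pb=bl$ and $ex'$ true (the unprivileged executable right, since we asked for a $\UserMode$ execute access). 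Instantiating the second conjunct of $I$ (Definition~\ref{def:monitor:invariant}) at this $bl'$ and this mapping gives $\msound_S(ex',pb,\sigma,GI)$, which by definition of $\msound_S$ and $ex'=\mathit{true}$ yields $\integrity(GI,pb,\mcontent(pb,\sigma))$, i.e.\ $\integrity(GI,bl,content(bl,\sigma))=\integrity(GI,bl,cnt)$. Since $\tuple{bl,cnt}$ was arbitrary, $\integrity(GI,\sigma)$ follows.

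The main obstacle I expect is the gluing step between the abstract $PT$ function used in the monitor model and the low-level $mmu$ predicate of the ARMv7 model: one must prove a bridging lemma stating that any executable translation $mmu(\sigma,\UserMode,va,ex)=pa$ is justified by exactly one descriptor appearing in $PT$ of an active (hence non-$\DataType$, by $I_H$) page-table block, and that the page-walk only consults L1/L2 blocks whose types are pinned by $I_H$. This requires carefully exploiting the well-typedness part of the hypervisor invariant (the active L1 sits in four consecutive $L1$ blocks, every PT-descriptor points to an $L2$ block) together with the fact that $PT$ faithfully enumerates the mappings of a sound page table. Once that correspondence is in place the argument is a straightforward instantiation, so the remaining work is bookkeeping: handling the section-descriptor case and the two-level case uniformly, and checking that the executable bit tracked by $PT$ coincides with the $ex$ request answered by $mmu$. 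None of this needs new machinery beyond $I_H$, the ARMv7 MMU model, and the definitions of $\msound_S$ and $\mWS$ already stated.
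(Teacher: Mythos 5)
Your proof is correct and follows essentially the same route as the paper's: the paper's (one-line) argument is precisely that $\msound_S$, guaranteed by the invariant for every executable mapping of every potential page table, yields integrity of every block in the working set. Your additional attention to the bridging lemma between $mmu$ and $PT$ (justified via the well-typedness part of $I_H$) is exactly the detail the paper leaves implicit under the word ``straightforward.''
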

\begin{proof}
The proof is straightforward, following from $\msound_{S}$ of every
block that can be executable according with an arbitrary potential page table.
\end{proof}

Theorem~\ref{lem:monitor:invariant:user} demonstrates that the
invariant is preserved by instructions executed by the untrusted
Linux. This depends on Lemma~\ref{lem:monitor:invariant:user-ws},
which shows that the invariant forbids user transitions to change the
content of the memory that is executable.
\begin{lemma}
  \label{lem:monitor:invariant:user-ws}
  Let $\tuple{\sigma, (\tau, \rho_{wt}, \rho_{ex}), GI} \xrightarrow{0} \tuple{\sigma', h', GI'}$ and $I(\tuple{\sigma, h, GI})$ then
\[
\forall bl\ .\ ( \neg (\tau(bl) = \DataType)) \Rightarrow
\left (
\begin{array}{l}
  PT(\mcontent(bl, \sigma')) = PT(\mcontent(bl, \sigma)) \wedge\\
  \forall (vb,pb,wt,ex) \in PT(\mcontent(bl, \sigma'))\ .\\
  \hskip1cm
  (ex \Rightarrow \mcontent(pb, \sigma) = \mcontent(pb, \sigma'))
\end{array}
\right )
\]
\end{lemma}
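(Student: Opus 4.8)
\textbf{Proof proposal for Lemma~\ref{lem:monitor:invariant:user-ws}.}
The plan is to reduce this lemma entirely to the hypervisor's guest-step guarantees in Proposition~\ref{prep:hypervisor} combined with the structural fact that the predicates $PT(\cdot)$, $\msound_{W \oplus X}$, and $\msound_{S}$ all depend only on the \emph{content} of the relevant physical blocks. First I would unfold the transition: from $\tuple{\sigma, (\tau, \rho_{wt}, \rho_{ex}), GI} \xrightarrow{0} \tuple{\sigma', h', GI'}$ and the monitor's rule~(1) of Section~\ref{sec:model}, we obtain immediately that $h' = (\tau, \rho_{wt}, \rho_{ex})$, $GI' = GI$, and $\tuple{\sigma, (\tau, \rho_{wt}, \rho_{ex})} \xrightarrow{0} \tuple{\sigma', (\tau, \rho_{wt}, \rho_{ex})}$ in the underlying hypervisor model. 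Since $I(\tuple{\sigma,h,GI})$ entails $I_H(\tuple{\sigma,(\tau,\rho_{wt},\rho_{ex})})$ by Definition~\ref{def:monitor:invariant}, Proposition~\ref{prep:hypervisor} applies to this guest transition.

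Next I would establish the first conjunct $PT(\mcontent(bl, \sigma')) = PT(\mcontent(bl, \sigma))$ for every $bl$ with $\tau(bl) \neq \DataType$. By Proposition~\ref{prep:hypervisor:ptblocks} (the bullet stating that blocks of non-data type are unchanged by guest transitions), $\mcontent(bl, \sigma) = \mcontent(bl, \sigma')$ for exactly these blocks, and since $PT$ is a function of block content, the equality of the mapping lists is a direct rewrite. For the second conjunct, fix a mapping $(vb,pb,wt,ex) \in PT(\mcontent(bl,\sigma'))$ with $ex$ set; I must show $\mcontent(pb,\sigma) = \mcontent(pb,\sigma')$. Here the key step is: from the invariant $I$, applied to the (potential) page table block $bl$ and this very mapping, $\msound_{W \oplus X}(wt, ex, \rho_{wt}, \rho_{ex}, pb)$ holds, and since $ex$ is set this forces $\rho_{wt}(pb) = 0$. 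Now I invoke Proposition~\ref{prep:hypervisor:wtblocks}: a guest transition can only change the content of a block whose writable reference counter is strictly positive. Since $\rho_{wt}(pb) = 0$, the block $pb$ is untouched, giving $\mcontent(pb,\sigma) = \mcontent(pb,\sigma')$ as required. (Note that because $PT(\mcontent(bl,\sigma')) = PT(\mcontent(bl,\sigma))$ is already proved, the mapping $(vb,pb,wt,ex)$ is also a mapping in the pre-state, so the invariant is applicable.)

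The main obstacle I anticipate is bureaucratic rather than conceptual: carefully threading the equality of $h$ across the transition so that the reference counters $\rho_{wt},\rho_{ex}$ in the hypotheses (used via the invariant on the pre-state) are literally the same objects appearing in Proposition~\ref{prep:hypervisor:wtblocks} (stated on the pre-state reference counters), and making sure that the HOL4 formalization of $PT$ --- which the excerpt flags as being an iterative/hardware-model construct rather than a simple function --- genuinely enjoys the ``depends only on block content'' property I rely on. Establishing that extensionality lemma for $PT$ (that two states with equal content on a block yield equal mapping lists for that block) is the one genuinely non-trivial supporting fact; everything else is a short chain of rewrites from Proposition~\ref{prep:hypervisor}. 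Once Lemma~\ref{lem:monitor:invariant:user-ws} is in hand, Theorem~\ref{lem:monitor:invariant:user} follows by observing that $I$ restricted to potential page tables is preserved (their content and hence $PT$, $\msound_{W\oplus X}$, $\msound_S$ are all frozen by the lemma, and $I_H$ is preserved by Proposition~\ref{prep:hypervisor:invariant}), and the whole-system goal $\integrity(GI,\sigma)$ for every reachable state follows by induction on the length of the execution, using Lemma~\ref{lem:monitor:invariant:safe} at the end together with Theorems~\ref{lem:monitor:invariant:user} and~\ref{lem:monitor:invariant:hyper} for the two kinds of step.
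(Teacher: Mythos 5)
Your proposal is correct and follows essentially the same route as the paper's own proof: the first conjunct from the hypervisor guarantee that non-data-typed blocks are unchanged by guest transitions, and the second from the invariant's $\msound_{W\oplus X}$ condition forcing $\rho_{wt}(pb)=0$ for executable targets, combined with the guarantee that guest steps can only modify blocks with positive writable reference counter. Your version is merely more explicit about the bookkeeping (threading $h'=h$, and the extensionality of $PT$ in block content) that the paper's two-paragraph sketch leaves implicit.
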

\begin{proof}
 Proof is straightforward and we split it in two parts. First we show that page tables remain constant after user transitions and then 
 we prove that executable block cannot be changed by the user.
 From Proposition~\ref{prep:hypervisor} we know since the hypervisor invariant holds in the state $\tuple{\sigma, h, GI}$, user transitions are not allowed
 to change the page tables. Thus the user transitions preserve mappings of all page tables in the memory.
 
 By the monitor's invariant we know that in $\tuple{\sigma, h, GI}$ all the mappings in page tables comply with the policy $sound_{W \oplus X}$. That is, 
 if a mapping in a page table grants executable permissions on block to the user, the block must be write protected against unprivileged accesses. Moreover,
 we know that page tables remain the same after an unprivileged transition and the user cannot change his access permission. This proves that the content of
 executable block is not modifiable by unprivileged transitions.
\end{proof}

\begin{theorem}
  \label{lem:monitor:invariant:user}
  If $\tuple{\sigma, h, GI} \xrightarrow{0}
  \tuple{\sigma', h', GI'}$ and $I(\tuple{\sigma, h, GI})$ then $I(\tuple{\sigma', h',GI'})$.
\end{theorem}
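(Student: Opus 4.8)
The plan is to exploit the decomposition of $I$ in Definition~\ref{def:monitor:invariant} together with the fact that an unprivileged transition disturbs almost nothing the invariant depends on. First I would unfold the transition rule for $\xrightarrow{0}$: by the first inference rule of Section~\ref{sec:model}, $\tuple{\sigma,h,GI}\xrightarrow{0}\tuple{\sigma',h',GI'}$ is derived from $\tuple{\sigma,h}\xrightarrow{0}\tuple{\sigma',h}$, which in turn comes from $\sigma\xrightarrow{\UserMode}\sigma'$; hence $h'=h$ and $GI'=GI$. Writing $h=(\tau,\rho_{wt},\rho_{ex})$, this means the block types $\tau$, both reference counters $\rho_{wt},\rho_{ex}$, and the golden image are unchanged across the step, so the only component of the invariant that could conceivably break is the content of the potential page tables.

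Next I would discharge the hypervisor part of $I$: since $I(\tuple{\sigma,h,GI})$ implies $I_H(\tuple{\sigma,h})$, Proposition~\ref{prep:hypervisor:invariant} immediately gives $I_H(\tuple{\sigma',h'})$. For the remaining conjunct I would fix an arbitrary block $bl$ with $\tau(bl)\neq\DataType$ and an arbitrary mapping $(vb,pb,wt,ex)\in PT(\mcontent(bl,\sigma'))$. Here Lemma~\ref{lem:monitor:invariant:user-ws} does the heavy lifting: it states $PT(\mcontent(bl,\sigma'))=PT(\mcontent(bl,\sigma))$, so this very mapping already belongs to the page table in $\sigma$, and it states that whenever $ex$ holds we have $\mcontent(pb,\sigma')=\mcontent(pb,\sigma)$. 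I would then obtain $\msound_{W\oplus X}(wt,ex,\rho_{wt},\rho_{ex},pb)$ directly from $I(\tuple{\sigma,h,GI})$, since this predicate mentions only $wt,ex$, the unchanged counters, and $pb$; and $\msound_{S}(ex,pb,\sigma',GI)$ follows by a case split on $ex$: if $ex$ is false it is vacuous, and if $ex$ is true it reduces to $\integrity(GI,pb,\mcontent(pb,\sigma'))$, which equals $\integrity(GI,pb,\mcontent(pb,\sigma))$ by the content-preservation clause of Lemma~\ref{lem:monitor:invariant:user-ws} and holds because $I(\tuple{\sigma,h,GI})$ supplies $\msound_{S}(ex,pb,\sigma,GI)$. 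Reassembling the conjuncts yields $I(\tuple{\sigma',h',GI'})$.

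I do not expect a serious obstacle, since the architectural reasoning has all been pushed into Proposition~\ref{prep:hypervisor} and Lemma~\ref{lem:monitor:invariant:user-ws}. The one point requiring care is that the invariant quantifies over \emph{all} potential page tables (every block of type $\neq\DataType$), not just the active one; this is precisely why it matters that $\tau$ is unchanged by the step (so no new potential page tables appear) and that Lemma~\ref{lem:monitor:invariant:user-ws} is phrased for an arbitrary such block rather than only the one currently installed in the MMU. A secondary bookkeeping point is to confirm that the "executable" hypothesis under which Lemma~\ref{lem:monitor:invariant:user-ws} guarantees content preservation is the same $ex$ flag appearing in the $\msound_{S}$ clause of the invariant, which it is by construction of both statements.
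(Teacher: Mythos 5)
Your proof is correct and follows essentially the same route as the paper's: unfold the transition rule to get $h'=h$ and $GI'=GI$, discharge $I_H$ via Proposition~\ref{prep:hypervisor:invariant}, and use Lemma~\ref{lem:monitor:invariant:user-ws} to argue that the page-table mappings and the contents of executable blocks are unchanged, so both $\msound_{W\oplus X}$ (with unchanged counters) and $\msound_{S}$ carry over from the pre-state invariant. The only cosmetic difference is that the paper phrases the $W\oplus X$ step as a contradiction appealing directly to Proposition~\ref{prep:hypervisor} (non-data blocks are unchanged), whereas you route both conjuncts through the lemma; the underlying fact is the same.
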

\begin{proof}
From the inference rules we know that $h'=h$, $GI'=GI$ and that
the system without the monitor behaves as
$\tuple{\sigma, h} \xrightarrow{0} \tuple{\sigma', h}$.
Thus, Proposition~\ref{prep:hypervisor:invariant} can be used to
guarantee that the hypervisor invariant is preserved ($I_H(\sigma',
h')$).

If the second part of the invariant is violated then there must exist
a mapping in one (hereafter $bl$) of the allocated page tables that is compliant
with the executable space protection policy in $\sigma$ and 
violates the policy in $\sigma'$. Namely, $\mcontent(bl, \sigma')$
must be different from $\mcontent(bl, \sigma)$. This contradicts
Proposition~\ref{prep:hypervisor:ptblocks}, since the type of
the changed block is not data ($\tau(bl) \neq
\DataType$).

Finally we must demonstrate that every potentially executable block
contains a sound binary. Lemma~\ref{lem:monitor:invariant:user-ws}
guarantees that the blocks that are potentially executable are the
same in $\sigma$ and $\sigma'$ and that the content of these blocks is
unchanged. Thus is sufficient to use the invariant $I(\sigma, h, GI)$, to
demonstrate that the signatures of all executable blocks are correct.
\end{proof}

To demonstrate the functional correctness of the monitor
(Theorem~\ref{lem:monitor:invariant:hyper} i.e. that the invariant is
preserved by privileged transitions) we introduce two auxiliary lemmas:
Lemma~\ref{lem:monitor:invariant:hyper-ws} shows that the monitor
correctly checks the signature of pages that are made executable.
Lemma~\ref{lem:monitor:invariant:hyper-helper2} expresses that
executable space protection is preserved for all hypervisor data
changes, as long as a block whose reference counter
(e.g. writable; $\rho'_{wt}$) becomes non zero has the other reference counter
(e.g. executable; $\rho_{ex}$) zero. 

\begin{lemma}
  \label{lem:monitor:invariant:hyper-ws}
If $\tuple{\sigma, h, GI} \xrightarrow{1}
  \tuple{\sigma', (\tau', \rho'_{wt}, \rho'_{ex}), GI'}$ and $I(\tuple{\sigma, h, GI})$ then for all $bl$,
  $ \tau'(bl) \neq \DataType \Rightarrow
  \forall (vb',pb',wt,ex) \in PT(\mcontent(bl, \sigma')). \
  \msound_{S}(ex,pb',\sigma', GI)$.
\end{lemma}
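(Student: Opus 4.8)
The plan is to case-split on whether the privileged transition $\tuple{\sigma, h, GI} \xrightarrow{1} \tuple{\sigma', (\tau', \rho'_{wt}, \rho'_{ex}), GI'}$ comes from a rejected request (rule~3) or an accepted one (rule~2), and, in the accepted case, to go through the nine DMMU hypercalls one by one. If the request is rejected, $\epsilon$ is the identity, hence $\sigma' = \sigma$, $\tau' = \tau$ and $GI' = GI$, and the claim is immediate from the $\msound_{S}$ conjunct of $I(\tuple{\sigma, h, GI})$. So from here on I would assume the request is accepted, so that the underlying hypervisor transition $\tuple{\sigma, h} \xrightarrow{1} \tuple{\sigma', h'}$ runs the handler $H_r$ of the requested operation, with $\mvalidate$ holding on the corresponding abstract request.

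The workhorse is the observation — read off from the functional specifications of the DMMU handlers — that $H_r$ changes the memory content of at most one block, the page table block $bl_0$ named in the request, never changes the content of a block of type $\DataType$, and changes the type map $\tau$ for at most $bl_0$ (from $\DataType$ to L1/L2 in the $\mcreate$ cases, and from L1/L2 to $\DataType$ in the $\mfree$ cases). Fix a block $bl$ with $\tau'(bl) \neq \DataType$ and a mapping $(vb',pb',wt,ex) \in PT(\mcontent(bl,\sigma'))$ with $ex$; I need $\integrity(GI, pb', \mcontent(pb',\sigma'))$. For $\mswitch$, $\mfree$, $\munmap$ and $link_{L1}$, every mapping of a potential page table block in $\sigma'$ either already occurs in $\sigma$ with unchanged pointed-block content, or is inherited from a block that was already a validated potential page table (the L2 named in $link_{L1}$), so $\msound_{S}$ follows from $I(\tuple{\sigma,h,GI})$; here $\tau'(bl)\neq\DataType$ excludes the freed block. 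For $\mmap$, a freshly added executable mapping is validated by the $\msound_{S}(ex,bl',\sigma,GI)$ clause of $\mvalidate$, and a pre-existing mapping is treated as above, using that $\mmap$ rewrites only $bl_0$. For $\mcreate$, the content of the newly typed block $bl_0$ is unchanged, so $PT(\mcontent(bl_0,\sigma')) = PT(\mcontent(bl_0,\sigma))$, and the $\forall (vb,pb,wt,ex) \in PT(\mcontent(bl_0,\sigma)).\ \msound_{S}(ex,pb,\sigma,GI)$ clause of $\mvalidate$ carries over to $\sigma'$ because no pointed block changed content, while each $bl \neq bl_0$ is covered by the old invariant. Throughout, $I_H$ on $\sigma'$ (preserved by Proposition~\ref{prep:hypervisor:invariant}) keeps the block-type and guest-memory facts consistent.

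The hard part will be discharging the side condition that a block receiving an executable mapping can never coincide with the page table block $bl_0$ that $H_r$ writes to — otherwise the signature that $\mvalidate$ checked against $\mcontent(bl_0,\sigma)$ would already be stale by the time $H_r$ updates $bl_0$. This is exactly where the self-referencing page table checks and the guest-memory-containment checks discovered during the hypervisor verification have to be invoked: a block that is, or becomes, a page table lies in guest memory with a non-data type, whereas a block whose signature lies in $GI$ is executable code, and the two roles are kept apart by $I_H$ together with the reference-counter conditions $\rho_{ex}(bl_0) = 0$ and $\rho_{wt}(pb') = 0$ demanded by $\mvalidate$. Carrying out this argument uniformly across all nine handlers, and relating the abstract request semantics of Table~\ref{tbl:monitor:sec-policies} to the concrete effect of $H_r$, is the bulk of the work; once the ``only $bl_0$ changes, and $bl_0$ is never an executable target'' invariant is established, the rest is routine.
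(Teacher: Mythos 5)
Your proposal is correct and follows essentially the same route as the paper's proof: split on whether the request is rejected (where $\epsilon$ being the identity makes the claim immediate) or accepted, then do a case analysis over the DMMU operations, using the $\msound_{S}$ clause of $\mvalidate$ for freshly added executable mappings and the invariant $I$ for pre-existing ones. You are in fact more explicit than the paper's sketch about the one genuinely delicate point --- that the modified page-table block must not itself be an executable target, which is discharged by the $\rho_{ex}(bl)=0$ condition in $\mvalidate$ (the very condition whose omission from the original $link_{L1}$ policy is reported as a bug in Section~\ref{sec:eval}).
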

\begin{proof}
 Proof is done by case analysis on the type of the performed operation and its the validity. If the requested operation is not valid then the proof is trivial, since
 $\epsilon$ is the identity function and it yields the same state. Therefore, $\msound_{S}$ holds.
 
 However, if the request is validated by th monitor and it is committed by the hypervisor we prove Lemma~\ref{lem:monitor:invariant:hyper-ws} by doing case analysis on the type of
 the performed operation. Let assume that the requested operation is $map_{L2}$. 
 \begin{itemize}
  \item[(i)] If the requested mapping does not set the executable flag $ex$, the working set would not be changed by committing the request $\mWS(\sigma) = \mWS(\sigma')$ and the 
  predicate $\msound_{S}$ holds trivially.
  \item[(ii)] If the requested mapping sets the executable flag $ex$ (i.e. the  mapping grants an executable access to a physical block $pb$), from validity of the request
  we know that  $\msound_{S}(ex,pb,\sigma, GI)$. Thus adding this mapping preserves the soundness of state and the lemma holds in this case as well.
 \end{itemize}

 Proof for the other operations can be done similarly. 
\end{proof}

\begin{lemma}
  \label{lem:monitor:invariant:hyper-helper2}
  If $\tuple{\sigma, h, GI} \xrightarrow{1} \tuple{\sigma', (\tau', \rho'_{wt}, \rho'_{ex}), GI'}$ and assuming that:
  \begin{itemize}
   \item[(i)] $I(\tuple{\sigma, (\tau, \rho_{wt}, \rho_{ex}), GI})$, 
   \item[(ii)] $\forall\ bl.\ (\rho_{ex}(bl) = 0 \wedge \rho'_{ex}(bl) > 0) \Rightarrow (\rho_{wt}(bl) = 0)$, and
   \item[(iii)] $\forall\ bl.\ (\rho_{wt}(bl) = 0 \wedge \rho'_{wt}(bl) > 0) \Rightarrow (\rho_{ex}(bl) = 0)$.
  \end{itemize}  
  For all blocks $bl$, if   $\msound_{W \oplus X}(wt,ex,\rho_{wt}, \rho_{ex}, bl)$ then $\msound_{W \oplus X}(wt,ex,\rho'_{wt}, \rho'_{ex}, bl)$
\end{lemma}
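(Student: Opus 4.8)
The plan is to prove Lemma~\ref{lem:monitor:invariant:hyper-helper2} by a case analysis on the access flags $wt$ and $ex$ of the mapping under consideration. The predicate $\msound_{W \oplus X}(wt,ex,\rho_{wt},\rho_{ex},bl)$ of Table~\ref{tbl:monitor:sec-policies} unfolds to $(ex \Rightarrow \neg wt \wedge \rho_{wt}(bl)=0) \wedge (wt \Rightarrow \neg ex \wedge \rho_{ex}(bl)=0)$, so when the hypothesis holds at most one of $ex$, $wt$ is true. If neither holds, the conclusion is immediate since $\msound_{W \oplus X}$ imposes no constraint on the counters of $bl$; and if the hypervisor rejects the request, the state is returned unchanged by $\epsilon$ ($\rho'_{wt}=\rho_{wt}$, $\rho'_{ex}=\rho_{ex}$) and there is again nothing to prove. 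This leaves the two symmetric cases of an executable, non-writable mapping and of a writable, non-executable one; I would carry out the first in detail.

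So assume $ex$ and $\neg wt$. From the hypothesis $\msound_{W \oplus X}(wt,ex,\rho_{wt},\rho_{ex},bl)$ we get $\rho_{wt}(bl)=0$, and the obligation becomes $\rho'_{wt}(bl)=0$. Suppose $\rho'_{wt}(bl)>0$ for contradiction. Since $\rho_{wt}(bl)=0$, instantiating side condition (iii) at $bl$ gives $\rho_{ex}(bl)=0$. On the other hand, this lemma is applied to a mapping that actually resides in a potential page table of $\sigma$ (a block typed $\LOneType$ or $\LTwoType$), so the reference-counter soundness predicate $RC$ --- a conjunct of the hypervisor invariant $I_H$, which by hypothesis (i) holds in $\sigma$ --- counts that executable mapping and forces $\rho_{ex}(bl)\geq 1$, contradicting $\rho_{ex}(bl)=0$. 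Hence $\rho'_{wt}(bl)=0$. The writable, non-executable case is dual: $\rho_{ex}(bl)=0$ is given, and $\rho'_{ex}(bl)>0$ would, via side condition (ii), force $\rho_{wt}(bl)=0$, contradicting $\rho_{wt}(bl)\geq 1$, which holds because the writable mapping is counted by $RC$.

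The routine part is the unfolding and the elimination of the trivial cases; the only real care needed is the bookkeeping of which counter must vanish and which side condition governs its growth --- in the executable case it is the \emph{writable} counter $\rho_{wt}(bl)$ that must stay $0$ and condition (iii) (about $\rho_{wt}$ going from $0$ to positive) that is used, whereas in the writable case the roles of $\rho_{wt}$, $\rho_{ex}$ and of conditions (ii), (iii) are swapped. The one contextual ingredient beyond the stated hypotheses is that the considered mapping genuinely contributes to $\rho_{ex}(bl)$ or $\rho_{wt}(bl)$ in $\sigma$; I would discharge this from the $RC$ conjunct of $I_H$ together with the fact, established at the point where Lemma~\ref{lem:monitor:invariant:hyper-helper2} is invoked inside the proof of Theorem~\ref{lem:monitor:invariant:hyper}, that the mapping is read from a block whose type differs from $\DataType$. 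Neither $\msound_S$ nor the per-page-table $\msound_{W \oplus X}$ conjuncts of the invariant $I$ are needed here --- condition (i) is used only through $RC$.
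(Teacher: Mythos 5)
Your proof is correct and follows essentially the same route as the paper's: both arguments come down to the observation that the antecedent of condition (iii) (resp.\ (ii)) cannot fire for a block whose executable (resp.\ writable) reference counter is positive in $\sigma$, which forces $\rho'_{wt}(bl)=0$ (resp.\ $\rho'_{ex}(bl)=0$) exactly when the hypothesis $\msound_{W\oplus X}$ demands it. The one point where you are more careful than the paper is in making explicit that the positivity of the relevant counter must be imported from outside the stated hypotheses --- via the reference-counter soundness conjunct of $I_H$ and the fact that the mapping actually resides in a block typed $\LOneType$/$\LTwoType$ --- whereas the paper's proof simply case-splits on ``the block is executable/writable'' and silently skips the configuration (both counters zero with $ex$ or $wt$ set) in which the lemma, read literally with free flags, would fail; that configuration is indeed excluded only at the point of use inside Theorem~\ref{lem:monitor:invariant:hyper}, as you note.
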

\begin{proof}
 Since in the initial state the invariant holds, we know that the state $\tuple{\sigma, h, GI}$ complies with the policy $W \oplus X$ . Thus for each block $bl$ only one of its 
 counters $\rho_{ex}(bl)$ and $\rho_{wt}(bl)$ can be greater than zero. This implies that only one of the assumptions $(ii)$, $(iii)$ is true. Let assume that in $\tuple{\sigma, h, GI}$
 the block $bl$ is executable, therefore $\rho_{wt}(bl) = 0$. If in  $\tuple{\sigma, h', GI}$, the block $bl$ is still executable, its writable permission has not been changed and 
 $\msound_{W \oplus X}(wt,ex,\rho'_{wt}, \rho'_{ex}, bl)$ holds. However, its permissions has been changed in $\tuple{\sigma, h', GI}$ it has to be happened according to the assumption 
 $(iii)$, this means that $\msound_{W \oplus X}(wt,ex,\rho'_{wt}, \rho'_{ex}, bl)$ will be hold true. 
\end{proof}

\begin{theorem}
  \label{lem:monitor:invariant:hyper}
  If $\tuple{\sigma, h, GI} \xrightarrow{1} \tuple{\sigma', h', GI'}$ and $I(\tuple{\sigma, h, GI})$ then $I(\tuple{\sigma', h',GI'})$.
\end{theorem}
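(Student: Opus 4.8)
\textbf{Proof proposal for Theorem~\ref{lem:monitor:invariant:hyper}.}
The plan is to mirror the structure of the proof of Theorem~\ref{lem:monitor:invariant:user}, but now the work is spread over the three inference rules for privileged transitions. First I would dispatch rules (1) and (3): rule (1) is not applicable since $\xrightarrow{1}$ was taken, and for rule (3) the reached state is $\epsilon(\tuple{\sigma,h,GI})$, which by assumption is the identity, so $I$ is preserved trivially. The interesting case is rule (2), where the monitor has validated $\mvalidate(req(\tuple{\sigma,h}),\tuple{\sigma,h,GI})$ and the hypervisor commits the corresponding DMMU operation, so that $\tuple{\sigma,h}\xrightarrow{1}\tuple{\sigma',h'}$ in the underlying hypervisor model and $GI'=GI$. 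Here I would split $I$ into its two conjuncts. Preservation of the hypervisor invariant $I_H(\tuple{\sigma',h'})$ follows directly from Proposition~\ref{prep:hypervisor:invariant}. It then remains to show that for every block $bl$ with $\tau'(bl)\neq\DataType$ and every mapping $(vb,pb,wt,ex)\in PT(\mcontent(bl,\sigma'))$ we have $\msound_{W\oplus X}(wt,ex,\rho'_{wt},\rho'_{ex},pb)$ and $\msound_{S}(ex,pb,\sigma',GI)$.

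The signature obligation $\msound_{S}$ is exactly the content of Lemma~\ref{lem:monitor:invariant:hyper-ws}, so I would simply invoke it. For the $W\oplus X$ obligation I would use Lemma~\ref{lem:monitor:invariant:hyper-helper2}, which reduces the goal to (a) $\msound_{W\oplus X}$ holding for the relevant mappings when evaluated against the \emph{old} reference counters $\rho_{wt},\rho_{ex}$, and (b) the two ``monotonicity'' side conditions on the counters, namely that whenever a counter of a block goes from zero to positive the complementary counter was already zero. Condition (b) is discharged by a case analysis over the DMMU operations of Table~\ref{tbl:dmmu}: $\mswitch$ and $\mfree$ do not increase any counter; $\munmap$ and $link_{L1}$ only decrease counters or change a block already known to have $\rho_{ex}=0$ from the validation rule; $\mmap$ increases counters only for the pointed block $bl'$, for which the validation rule requires $\msound_{W\oplus X}(wt,ex,\rho_{wt},\rho_{ex},bl')$, precisely giving the needed zero-counter on the complementary side; and $\mcreate$ raises counters according to the entries of the new page table, which the validation rule has checked pointwise (including the internal $\textit{no-conflict}$ condition). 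For (a) I would again use a case analysis: for the blocks not touched by the handler, the mappings are unchanged and the old invariant applies together with Propositions~\ref{prep:hypervisor:ptblocks} and~\ref{prep:hypervisor:wtblocks}; for the block directly modified by $\mmap$/$\munmap$/$link_{L1}$ the added or removed mapping is exactly the one the validation rule examined; and for a block whose type is newly set to $L1/L2$ by $\mcreate$, every mapping of $PT(\mcontent(bl,\sigma))$ has been individually validated.

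The main obstacle I expect is the $\mcreate_{L1}$/$\mcreate_{L2}$ case, where an entire 4~KB block is reinterpreted as a (potential) page table, so the invariant must be established for \emph{all} of its mappings simultaneously, and the reference counters of many pointed blocks change at once. The subtlety is that the validation of $\mcreate$ checks soundness of each mapping against the counters as they are \emph{before} the allocation, while the invariant must hold for the counters \emph{after}; reconciling this requires the $\textit{no-conflict}$ clause (two mappings inside the freshly created table must not jointly give $W$ to one alias and $X$ to another alias of the same physical block) together with the fact that $\mcreate$ is accepted only when the block being retyped is itself referenced by nobody able to mutate it. Threading these facts precisely through the bookkeeping of how $H_r$ updates $\tau$, $\rho_{wt}$ and $\rho_{ex}$ — and showing that the mappings of \emph{other}, previously validated page tables remain sound under the now-larger counters, which is where Lemma~\ref{lem:monitor:invariant:hyper-helper2} does the heavy lifting — is the delicate part of the argument. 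Everything else is routine rewriting on the abstract model, and since the monitor and emulation layer are modelled as ordinary unprivileged guests, no extra reasoning about them is needed here beyond Proposition~\ref{prep:hypervisor:guesttblocks}, which keeps their abstract state constant across rule~(1).

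\begin{proof}
See the proof proposal above; the formal argument is carried out in HOL4 by the case analysis on the DMMU operations outlined there, using Proposition~\ref{prep:hypervisor:invariant}, Lemma~\ref{lem:monitor:invariant:hyper-ws} and Lemma~\ref{lem:monitor:invariant:hyper-helper2}.
\end{proof}
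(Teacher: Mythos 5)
Your proposal is correct and follows essentially the same route as the paper: the rejected-request case is trivial because $\epsilon$ is the identity, $I_H$ is preserved by Proposition~\ref{prep:hypervisor:invariant}, the $\msound_S$ conjunct is Lemma~\ref{lem:monitor:invariant:hyper-ws}, and the $W\oplus X$ conjunct is handled by a per-request case analysis that establishes the counter hypotheses of Lemma~\ref{lem:monitor:invariant:hyper-helper2} so that soundness of unchanged blocks carries over and only the newly validated entries need to be checked. Your treatment of the $\mcreate$ case is in fact more detailed than the paper's sketch, which only works through the $\mmap_{L2}$ case explicitly.
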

\begin{proof}
When the request is not validated
($\neg \mvalidate$) then the proof is trivial, since 
$\epsilon$ is the identity function.

If the request is validated by the monitor and committed by the
hypervisor, then the inference rules guarantee that $GI'=GI$ and that
the system without the monitor behaves as
$\tuple{\sigma, h} \xrightarrow{0} \tuple{\sigma', h}$.
Thus, Proposition~\ref{prep:hypervisor:invariant} can be used to
guarantee that the hypervisor invariant is preserved ($I_H(\sigma',
h')$). Moreover, Lemma~\ref{lem:monitor:invariant:hyper-ws}
demonstrates that the $\msound_{S}$ part of the invariant holds.

The proof of the second part (the executable space protection)
 of the invariant is the most challenging
task of this formal verification. This basically establishes the
functional correctness of the monitor and that its run-time policies
are strong enough to preserve the invariant (i.e. they enforce protection of the potentially executable space).
Practically speaking, the proof consists of several cases: one for
each possible request. The structure of the proof for each case is
similar. For example, for $r=\mmap_{\LTwoType}(bl, idx, bl', ex, wt, rd)$, we (i)
prove that the hypervisor (modeled by the function $H_{r}$) only
changes entry $idx$ of the page table stored in block $bl$ (that is, all
other blocks that are not typed $\DataType$ are unchanged), (ii)  we
show that only the counters of physical block $bl'$ are changed, and (iii) we
establish the hypothesis of
Lemma~\ref{lem:monitor:invariant:hyper-helper2}.
This enables us to infer $\msound_{W \oplus X}$ for the unchanged blocks
and to reduce the proof to only check the correctness of the entry
$idx$ of the page table in the block $bl$.
\end{proof}

Finally, Theorem~\ref{thm:top-goal} composes our results,
demonstrating that no code injection can succeed.
\begin{theorem}\label{thm:top-goal}
  Let $\tuple{\sigma, h, GI}$ be a state reachable from the initial
  state of the system $\tuple{\sigma_0, h_0, GI_0}$ and 
  $I(\tuple{\sigma_0, h_0, GI_0})$, then 
  $\integrity(GI, \sigma)$ holds
\end{theorem}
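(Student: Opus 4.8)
The plan is to prove the slightly stronger statement that the invariant $I$ holds in \emph{every} state reachable from $\tuple{\sigma_0,h_0,GI_0}$, and then to read off the desired property from Lemma~\ref{lem:monitor:invariant:safe}. Concretely, I would argue by induction on the number of transitions in a computation $\tuple{\sigma_0,h_0,GI_0} \xrightarrow{i_1} \cdots \xrightarrow{i_n} \tuple{\sigma,h,GI}$ witnessing reachability.

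The base case is the computation of length zero: here $\tuple{\sigma,h,GI} = \tuple{\sigma_0,h_0,GI_0}$ and $I(\tuple{\sigma,h,GI})$ holds by the hypothesis $I(\tuple{\sigma_0,h_0,GI_0})$. For the inductive step, suppose $\tuple{\sigma_0,h_0,GI_0} \to^{*} \tuple{\sigma'',h'',GI''} \xrightarrow{i} \tuple{\sigma,h,GI}$ with $i \in \{0,1\}$, and assume $I(\tuple{\sigma'',h'',GI''})$ by the induction hypothesis. If $i = 0$ --- an unprivileged Linux step, i.e.\ the first inference rule of Section~\ref{sec:model} --- then preservation of the invariant is exactly Theorem~\ref{lem:monitor:invariant:user}. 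If $i = 1$ --- a privileged step, taken either through the accepting rule guarded by $\mvalidate$ or through the rejecting rule that passes the state to $\epsilon$ --- then preservation of the invariant is exactly Theorem~\ref{lem:monitor:invariant:hyper}, which already internalises the case split on whether the monitor validates the request. In either case we obtain $I(\tuple{\sigma,h,GI})$, completing the induction.

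Having established $I(\tuple{\sigma,h,GI})$ for the arbitrary reachable state $\tuple{\sigma,h,GI}$, we apply Lemma~\ref{lem:monitor:invariant:safe} to conclude $\integrity(GI,\sigma)$, which is the claim of Theorem~\ref{thm:top-goal}.

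Because the two invariant-preservation theorems and the safety lemma have already been discharged, there is essentially no residual obstacle at this stage: the proof is a one-line composition of Theorems~\ref{lem:monitor:invariant:user} and~\ref{lem:monitor:invariant:hyper} with Lemma~\ref{lem:monitor:invariant:safe}, wrapped in a routine induction on trace length. The only points deserving a moment's care are that each transition leaves $GI$ unchanged (immediate from all three inference rules, since $\epsilon$ is taken to be the identity) and that the composed transition relation is total over states satisfying $I$, so that ``reachable'' behaves as expected; both follow by inspection, reusing the totality of the underlying Linux and hypervisor transition relations. The genuine difficulty of the development lives upstream, in the functional-correctness argument for the monitor's run-time checks on the \textit{map} and \textit{create} requests inside Theorem~\ref{lem:monitor:invariant:hyper} (and its supporting Lemmas~\ref{lem:monitor:invariant:hyper-ws} and~\ref{lem:monitor:invariant:hyper-helper2}), but that work is completed before we reach this final statement.
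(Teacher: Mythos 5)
Your proof is correct and follows exactly the paper's own argument: an induction over the trace showing $I$ is preserved by Theorems~\ref{lem:monitor:invariant:user} and~\ref{lem:monitor:invariant:hyper}, followed by an application of Lemma~\ref{lem:monitor:invariant:safe} to extract $\integrity(GI,\sigma)$. The paper states this more tersely but the content is identical.
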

\begin{proof}
  Theorems~\ref{lem:monitor:invariant:user} and
  ~\ref{lem:monitor:invariant:hyper} directly show that the invariant
  is preserved for an arbitrary trace. Then,
  Lemma~\ref{lem:monitor:invariant:safe} demonstrates that every
  reachable state is free of malicious code injection.
\end{proof}

\section{Evaluation}\label{sec:eval}
The verification has been performed using the HOL4 interactive theorem
prover.
The specification of the high level model of the monitor adds
710 lines of HOL4 to the existing model of the hypervisor.
This specification is intentionally low level and does not depend on any 
high level theory of HOL4.
This increased the difficulty of the proof (e.g., it musts handle finite
arithmetic overflows), that consists of 4400 lines of HOL4.
However, the low level of abstraction allowed us to
directly transfer the model to a practical implementation and 
to identify several bugs of the original design.
For example,
the original policy for $link_{L1}(bl, idx, bl')$ did not contain
the condition $\rho_{ex}(bl) = 0$, allowing to violate the integrity of the
working set if a block is used to store an L1 page table and is itself 
executable.

The monitor code consists of 720 lines of C and the emulator
consists of additional 950 lines of code. Finally, 100 lines have been
added to the hypervisor to support the needed interactions among the
hosted components.

We used LMBench to measure the overhead introduced on user processes
hosted by Linux.
We focused on the benchmarks ``fork'', ``exec'' and
``shell'', since they require the creation of new processes and thus
represent the monitors worst case scenario. 
As macro-benchmark, we measured in-memory compression of two
data streams.
The benchmarks have been
executed using Qemu to emulate a Beagleboard-Mx. Since we are not
interested in evaluating a specific signature scheme, we computed the
signature of each physical 
block as the xor of the contained words, allowing us to focus
on the overhead introduced by the monitor's infrastructure.
Table~\ref{tbl:benchmarks} reports the benchmarks for different prototypes of the monitor, thus enabling to compare the overhead introduced by different design choices. 
``No monitor'' is the
base configuration, where neither the monitor or the emulation layer
are enabled. In ``P Emu'' the emulation layer is enabled and deployed
as component of the hypervisor. This benchmark is used to measure the
overhead introduced by this layer, which can be potentially removed at
the cost of modifying the Linux kernel. 
In ``P Emu + P Mon'' both the monitor and the emulation layer are
deployed as privileged software inside the hypervisor.
Finally, in ``P Emu + U Mon'' the monitor
is executed as unprivileged guest.

\begin{table}[t]
\scriptsize
\centering
\begin{tabular}{l|r|r|r|r|r}
Benchmark & fork & exec & shell & \multicolumn{2}{c}{tar -czvf 1.2 KB/12.8 MB}\\
        \hline
No monitor    & 0.010 & 0.010 & 0.042 & \hspace{3.5em} 0.05 & 20.95 \\
P Emu         & 0.011 & 0.011 & 0.048 & 0.09 & 21.05 \\
P Emu + P Mon & 0.013 & 0.013 & 0.054 & 0.10 & 21.02 \\
P Emu + U Mon & 0.017 & 0.017 & 0.067 & 0.11 & 20.98 \\
\end{tabular}
\caption{Qemu benchmarks [in second]}
\label{tbl:benchmarks}
\end{table}
\section{Related Work}\label{sec:related}
Since a comprehensive verification of commodity SW is not possible, 
it is necessary to architect systems so that the trusted computing
base for the desired properties is small enough to be verified, and that
the untrusted code cannot affect the security properties.
Specialized HW (e.g. TrustZone and TPM)
has been proposed to support this approach and has been used to implement secure storage and
attestation.
The availability of platforms like hypervisors and microkernels extended the adoption of this
approach to use cases that go beyond the ones that can be handled
using static HW based solutions.

For example, in~\cite{klein2010verified} the authors use the seL4 microkernel to
implement a secure access controller (SAC) with the purpose
of connecting one front-end terminal to either of two back-end networks one at a
time.
The authors delegate the complex (and non security-critical) functionalities (e.g. IP/TCP routing, WEB front-end) to
untrusted Linuxes, which are isolated by the microkernel from a small
and trusted router manager. The authors describe how the system's information
flow properties  can be verified disregarding the behavior of the
untrusted Linuxes.

Here, we used the trustworthy components to help the insecure Linux to restrict its own attack 
surface;
i.e. to prevent binary code injection.
Practically, our proposal uses 
Virtual Machine Introspection (VMI), which has been first introduced by
Garfinkel \textit{et al.}~\cite{Azmandian:2011:VMM:2007183.2007189}
and Chen \textit{et al.}~\cite{Chen:2001:VBR:874075.876409}.
Similarly to MProsper, other proposals (including
Livewire~\cite{Azmandian:2011:VMM:2007183.2007189},
VMWatcher~\cite{Jiang:2007:SMD:1315245.1315262} and
Patagonix~\cite{litty2008hypervisor}) use VMI, code signing and executable space
protection to prevent binary code injection in commodity OSs. 
However, all existing proposals rely on untrusted hypervisors
and their designs have not been subject of formal verification.

Among others non trustworthy  VMIs, hytux~\cite{hytux},
SecVisor~\cite{Seshadri:2007:STH:1294261.1294294} and
NICKLE \cite{Riley:2008:GPK:1433006.1433008} focus on protecting
integrity of the sole guest kernel.
SecVisor establishes a trusted channel with the user, which must
manually confirm all changes to the kernel.
NICKLE uses a \textit{shadow memory} to keep copy of authenticated
modules and guarantees that any instruction fetch by the kernel is
routed to this memory.

OpenBSD 3.3 has been one of the first OS enforcing executable space protection ($W \oplus X$).
Similarly, Linux (with the PaX and Exec Shield patches), NetBSD and Microsoft's OSs (using Data Execution
Prevention (DEP)) enforce the same policy.
However, we argue that due to the size of the modern kernels, trustworthy
executable space protection can not be achieved without the external
support of a trusted computing base.
In fact,  an attacker targeting the kernel can circumvent 
the protection mechanism, for example using \textit{return-oriented
  programming}~\cite{Shacham:2007:GIF:1315245.1315313}. 
The importance of enforcing executable space protection from a
privileged point of view (i.e. by VMI)
is also exemplified by~\cite{liakh2010analyzing}. Here, the authors
used model checking techniques to identify several misbehaviors of
the Linux kernel that violate the desired property.

\section{Concluding Remarks}\label{sec:conclusions}
We presented a trustworthy code injection prevention
system for Linux on embedded devices. 
The monitor's trustworthiness is based on
two main principles
(i) the trustworthy hypervisor guarantees the
monitor's tamper resistance and that all memory operations that
modify the memory layout are mediated,
(ii) the formal verification of design demonstrates  that the
top security goal is guaranteed by the run-time checks executed by the
monitor. 
These are distinguishing features of MProsper, since
it is the first time that absence of binary code injection has been
verified for a commodity OS.

Even if the MProsper's formal model is not yet at the level of the actual binary
code executed on the machine, this verification effort is important to
validate the monitor design; in fact we were able to spot security
issues that were not dependent on the specific implementation of the
monitor. The high level model of the monitor is actually a state transition
model of the implemented code, operating on the actual ARMv7 machine
state.  Thus the
verified properties can be transferred to the actual implementation by using standard
refinement techniques (e.g. ~\cite{Sewell:2013:TVV:2491956.2462183}).

Our ongoing work include the development of a end-to-end secure infrastructure, where an
administrator can remotely update the software of an embedded
device. Moreover, we are experimenting with other run-time binary analysis techniques that go beyond code signature checking: for example an
anti-virus scanner can be integrated with the monitor, enabling to
intercept and stop self-decrypting malwares.

\clearpage

\thispagestyle{empty}
\cleartoleftpage
\thispagestyle{empty}
\section*{ARM Data-Cache Terminology}

In the following we give a short summary of some of the terms used in the next two papers and a figure illustrating the structure of a data-cache in the ARM architecture:

\begin{figure}[h!]
 \centering
 \includegraphics[width=0.4\linewidth]{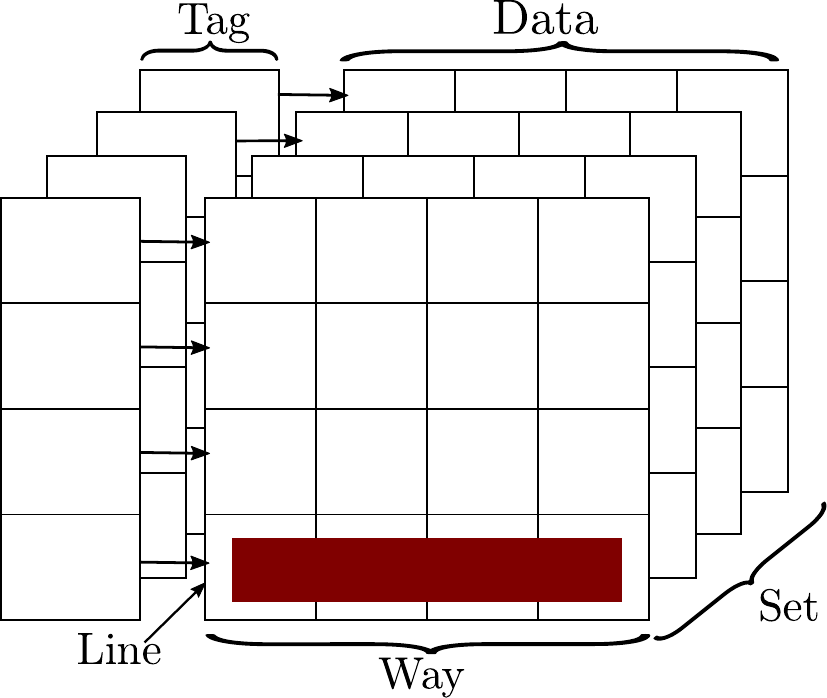}
 \captionsetup{labelformat=empty}
 \caption{Data-cache structure for the ARM architecture.}
 \label{fig:cache-term}
\end{figure}

\textbf{Tag: } The tag is the part of an address (mostly physical address) stored in the cache which identifies the corresponding memory address associated to a line in the cache.

\textbf{Line: } A line in the cache contains a block of contiguous words for the memory. Each line include also a \textit{valid flag} (which indicates if the line contains a valid data) and a \textit{dirty flag}
(which indicates if the line has been changed since it was read from memory).

\textbf{Set: } Memory addresses are logically partitioned into sets of lines that are congruent w.r.t. a set index; usually set index depends on either virtual addresses (then the cache
is called virtually indexed) or physical addresses (then the cache is called physically indexed).

\textbf{Way: } The cache contains a number of ways which can hold one corresponding line for every set index.

\textbf{Cache Flushing: } Flushing means writing back into the memory a cache line and cleaning the dirty flag.

\textbf{Eviction: } Eviction means writing back into the memory a cache line and invalidating it in the cache.

\textbf{Memory Coherence: } Is the problem of ensuring that a value read from a memory location (using either a cacheable or an uncacheable address) is always the most recently written value to that location.

\chapter{Cache Storage Channels: Alias-Driven Attacks and Verified Countermeasures}\label{paper:sp}
\chaptermark{Cache Storage Channels}
\backgroundsetup{position={current page.north east},vshift=1cm,hshift=-7cm,contents={\VerBar{Purple}{2cm}}}
\BgThispage

\begin{center}
Roberto Guanciale, Hamed Nemati, Mads Dam, Christoph Baumann
\end{center}
\newenvironment{paraenum}{\begin{inparaenum}[(1)]}{\end{inparaenum}}

\begin{abstract}
  Caches pose a significant challenge to formal proofs of security for code executing on application processors, as the cache access pattern of security-critical services may leak secret information. This paper reveals a novel attack vector,
  exposing a low-noise cache storage channel that can be exploited by
  adapting well-known timing channel analysis techniques. The vector can
  also be used to attack various types of security-critical 
  software such as hypervisors and application security monitors. 
  The attack vector uses virtual
  aliases with mismatched memory attributes and self-modifying code to
  misconfigure the memory system, allowing an attacker to place 
  incoherent copies of the same physical address into the caches and observe
  which addresses are stored in different levels of cache. We design
  and implement three different attacks using the new vector on
  trusted services and report on the discovery of an 128-bit key from
  an AES encryption service running in TrustZone on Raspberry Pi
  2. Moreover, we subvert the integrity properties of an ARMv7
  hypervisor that was formally verified against a cache-less model. We
  evaluate well-known countermeasures against the new attack vector and
  propose a verification methodology that allows to formally prove the effectiveness of defence mechanisms on the binary code of the trusted software. 
\end{abstract}

\newcommand{\ArmMachineVar}{\ArmState}
\newcommand{\RelP}{\mathrel{\mathcal{P}}}
\section{Introduction}

Over the past decade huge strides have been made to realise the long-standing vision of formally verified execution platforms, including hypervisors~\cite{CavalcantiD09,SOFSEM}, separation kernels~\cite{dam2013formal,INTEGRITY}, and microkernels~\cite{DBLP:conf/sosp/KleinEHACDEEKNSTW09}.
Many of these platforms have been comprehensively verified, down to machine code~\cite{DBLP:conf/sosp/KleinEHACDEEKNSTW09} and Instruction Set Architecture (ISA)~\cite{dam2013formal} levels, and provide unprecedented security and isolation guarantees.

Caches are mostly excluded from these analyses. The verification of both seL4~\cite{KleinAEMSKH14} and  the Prosper kernels~\cite{dam2013formal,SOFSEM} assume that caches are invisible and ignore timing channels.
The CVM framework from the Verisoft project~\cite{AlkassarHLSST09} treats caches only in the context of device management~\cite{HillebrandRP05}. For the verification of user processes and the remaining part of the kernel, caches are invisible. %
Similarly, the Nova \cite{SteinbergK10,TewsVW09} and CertiKOS
\cite{GuVFSC11} microvisors do not consider caches in their formal
analysis. 

How much of a problem is this? It is already well understood that caches are one of the key features of modern
commodity processors that make a precise analysis of, e.g., timing and/or power consumption exceedingly difficult, and that this can be exploited to mount timing-based side channels, even for kernels that have been fully verified \cite{CockGMH14}. These channels, thus, must be counteracted by model-external means, e.g., by adapting scheduling intervals~\cite{StefanScheduling13} or cache partitioning~\cite{Raj:2009:RMI:1655008.1655019,Kim:2012:SSP:2362793.2362804}. 

The models, however, should preferably be sound with respect to the features that \textit{are} reflected, such as basic memory reads and writes. Unfortunately, as we delve deeper into the Instruction Set Architecture we find that this expectation is not met: Certain configurations of the system enable an attacker to exploit caches to build storage channels. Some of these channels are especially dangerous since they can be used to compromise both confidentiality and integrity of the victim, thus breaking the formally verified properties of isolation.

The principle idea to achieve this, is to break coherency of the memory system by deliberately not following the programming guidelines of an ISA. In this report we focus on two programming faults in particular:
\begin{enumerate}
  \item Accessing the same physical address through virtual aliases with mismatched cacheability attributes.
  \item Executing self-modifying code without flushing the instruction cache.
\end{enumerate}
Reference manuals for popular architectures (ARM, Power, x64) commonly warn that not following such guidelines may result in unpredictable behaviour. However, since the underlying hardware is deterministic, the actual behaviour of the system in these cases is quite predictable and can be reverse-engineered by an attacker.

The first fault results in an incoherent memory configuration where cacheable and uncacheable reads may see different values for the same physical address after a preceding write using either of the virtual aliases. Thus the attacker can discover whether the physical address is allocated in a corresponding cache line. For the second fault, jumping to an address that was previously written without flushing the instruction cache may result in the execution of the old instruction, since data and instruction caches are not synchronised automatically. By carefully selecting old and new instructions, as well as their addresses, the attacker can then deduce the status of a given instruction cache line.

Obtaining this knowledge, i.e., whether certain cache lines contain attacker
data and instructions, is the basic principle behind the Prime+Probe flavor of
access-driven timing channel attacks~\cite{Tromer:2010:ECA:1713125.1713127}. This type of attack can
be adapted using the new attack vector. The main advantage of this approach is
that the cache storage channels presented here are both more stealthy, less
noisy, and easier to measure than timing channels. Moreover, an 
incoherent data cache state can be used to subvert the integrity of trusted services that depend on untrusted inputs. 
Breaking the memory coherency for the inputs exposes vulnerabilities that enable
a malicious agent to bypass security monitors and possibly to compromise the integrity of the trusted software.

The attacks sketched above have been experimentally validated in three realistic scenarios.
We report on the implementation of a prototype that extracts a 128-bit key from an AES encryption
service running in TrustZone on Raspberry Pi 2.
We use the same platform to implement a process that extracts the exponent of a modular exponentiation
procedure executed by another process.
Moreover, implementing a cache-based attack we subverted the integrity properties of an ARMv7 hypervisor that was formally verified against a cache-less model. The scenarios are also used to evaluate several existing
countermeasures against cache-based attacks as well as new ones that are targeted to the alias-driven attack vector.

Finally, we propose a methodology to repair the formal analysis of the trusted
software, reusing existing means as much as possible. Specifically,
we show (1) how a countermeasure helps restoring integrity of a previously formally
verified software and (2) how to prove the absence of cache storage side channels.
This last contribution includes the adaptation of an existing tool \cite{DBLP:conf/ccs/BalliuDG14} to analyse the binary code of the trusted software.

\section{Background}\label{sec:background}
Natural preys of side-channel attacks are implementations of cryptographic
algorithms, as demonstrated by early works of Kocher~\cite{Kocher:1996:TAI:646761.706156} and
Page~\cite{Page02theoreticaluse}.
In cache-driven attacks, the adversary exploits the caches to acquire knowledge about
the execution of a victim and uses this knowledge to infer the victim's internal variables.
These attacks are usually classified in three groups, that differ by
the means used by the attacker to gain knowledge.
In ``time-driven attacks'' (e.g.~\cite{Tsunoo03cryptanalysisof}), the attacker,
who is assumed to be able to trigger an encryption, 
measures (indirectly or directly) the execution time of the victim and
uses this knowledge to estimate the number of cache misses and hits
of the victim.
In ``trace-driven attacks'' (e.g.~\cite{Aciicmez:2006:TCA:2092880.2092891,Page02theoreticaluse,zhang2012cross}), the adversary
has more capabilities: he can profile the cache activities during the
execution of the victim  and thus observe the cache effects
of a particular operation performed by the victim.
This highly frequent measurement can be possible due to the
adversary being interleaved with the victim by the scheduler of the operating system
or because the adversary executes on a separate core and monitors a shared cache.
Finally, in ``access-driven attacks'' (e.g.~\cite{Neve:2006:AAC:1756516.1756531,Tromer:2010:ECA:1713125.1713127}), the
attacker determines the cache indices modified by the victim. This
knowledge is obtained indirectly, by observing cache side effects of victim's
computation on the behaviour of the attacker.

In the literature, the majority of trace and access driven attacks
use timing channels as the key attack vector. These vectors rely on time
variations to load/store data and to
fetch instructions in order to estimate the cache activities of the victim:
the cache lines that are evicted, the cache misses, the cache hits, etc.

Storage channels, on the other hand, use system variables to
carry information. The possible presence of these channels raises 
concerns, since they invalidate the results of formal verification.
The attacker can use the storage channels without the support of an
external measurement (e.g. current system time), so there is no external variable
such as time or power consumption that can be manipulated by the victim to close the channel
and whose accesses can alert the victim about malicious intents.
Moreover, a storage channel can be less noisy than timing channels that are 
affected by scheduling, TLB misses, speculative execution,
and power saving, for instance. Finally, storage channels can pose risk
to the integrity of a system, since they can be used to bypass reference
monitors and inject malicious data into trusted agents.
Nevertheless, maybe due to the practical complexities in implementing these
channels, few works in literature address cache-based storage channels.

One of the new attack vectors of this paper is based on
mismatched cacheability attributes and has pitfalls other than
enabling access-driven attacks.
The vector opens up for Time Of Check To Time Of Use (TOCTTOU) like
vulnerabilities. A trusted agent may check data stored in the cache
that is not consistent with the data
that is stored in the memory by a malicious software.
If this data is later evicted from the cache, it
can be subsequently substituted by the unchecked item placed in the
main memory.
This enables an attacker to bypass a reference monitor, possibly
subverting the security property of formally verified software.

Watson \cite{Watson07} demonstrated this type of vulnerability
for Linux system call wrappers. 
He uses concurrent memory accesses, using preemption to change the
arguments to a system call in  user memory after they were validated.
Using non-cacheable aliases one
could in the same way attack the Linux system calls that read from
the caller's memory.
A further victim of such attacks is represented by run time monitors.
Software that dynamically loads untrusted modules
often uses Software-based Fault Isolation
(SFI)~\cite{Wahbe:1993:ESF:168619.168635,Silver:1996:IAS:867744} to isolate untrusted
components from the trusted ones.
If an on-line SFI verifier is used (e.g. because the
loaded module is the output of a just-in-time compiler), then caches can be
used to mislead the verifier to accept stale data.
This enables malicious components to break the  SFI assumptions and
thus the desired isolation.

In this paper we focus on scenarios where the victim and the attacker
are hosted on the same system. An instance of such scenarios consists of
a malicious user process that attempts to compromise either another
user process, a run-time monitor or the operating system itself. In a cloud environment,
the attacker can be a (possibly compromised) complete operating system
and the victim is either a colocated guest, a virtual machine
introspector or the underlying hypervisor.
Further instances of such scenario are systems that use specialised hardware to
isolate security critical components from untrusted operating systems.
For example, some ARM processors implement TrustZone~\cite{TrustZone}. This mechanism can be used to
isolate and protect the system components that implement remote attestation, trusted
anchoring or virtual private networks (VPN).
In this case, the attacker is either a compromised operating system
kernel or an untrusted user process threatening a TrustZone application.

\section{The New Attack Vectors: Cache Storage Channels}\label{sec:attacks}

Even if it is highly desirable that the presence of caches is transparent to program behaviour, this is usually not the case unless the system
configuration satisfies some architecture-specific constraints.
Memory mapped devices provide a trivial example: If the address representing the output register of a
memory mapped UART is cacheable, the output of a program is never visible on the serial cable, since the output characters are overwritten in the cache instead of being sent to the physical
device. These behaviours, which occur due to misconfigurations of the system, can raise to security threats.

To better understand the mechanisms that constitute our attack vectors,
we summarise common properties of modern architectures. 
The vast majority of general purpose systems use set-associative caches:
\begin{enumerate}[(i)]
  \item Data is transferred between memory and cache in blocks of fixed
size, called cache lines.
  \item The memory addresses are logically partitioned into sets of lines
that are congruent wrt. a set index; usually set index depends on
either virtual addresses (then the cache is called virtually
indexed) or physical addresses (then the cache is called
physically indexed);
  \item The cache contains a number of ways which can hold one
corresponding line for every set index.
  \item  A cache line stores both the data, the corresponding
physical memory location (the tag) and a dirty flag
(which indicates if the line has been changed since it was read from
memory).
\end{enumerate}

Caches are used by
processors to store frequently accessed information and thus to
reduce the number of accesses to main memory.
A processor can use separate instruction and data caches in a Harvard
arrangement (e.g. the L1 cache in ARM Cortex A7) or unified caches
(e.g. the L2 cache in ARM Cortex A7).
Not all memory areas should be cached; for instance, accesses to addresses
representing registers of memory mapped devices should always be directly sent to the main memory subsystem.
For this reason,
modern Memory Management Units (MMUs) allow to configure, via the page
tables, the caching policy on a per-page basis,
allowing a fine-grained control over if and how areas of memory are cached.

In Sections~\ref{sec:attack:conf:dcache},~\ref{sec:attack:int:dcache} and~\ref{sec:attack:conf:icache}
we present three new attack vectors that depends on misconfigurations
of systems  and caches. These attacks exploit the following
behaviours:
\begin{itemize}
\item Mismatched cacheability attributes; if the data cache reports
  a hit on a memory location that is marked as non-cacheable, the cache might access the
  memory disregarding such hit. ARM calls
  this event ``unexpected cache hit''.
\item Self-modifying code; even if the executable code is updated,
  the processor might execute the old version of it if this has been stored in the
  instruction cache.
\end{itemize}
The attacks can be used to threaten both confidentiality and
integrity of a target system. Moreover, two of them use new
storage channels suitable to mount access driven attacks. This is
particularly concerning, since so far only a noisy timing channel could be used to launch attacks of this kind, which makes real implementations
difficult and slow.
The security threats are particularly severe whenever the attacker is
able to (directly or indirectly) produce the misconfigurations that
enable the new attack vectors, as described in Section~\ref{sec:attack:scenario}.

\begin{figure}[t]
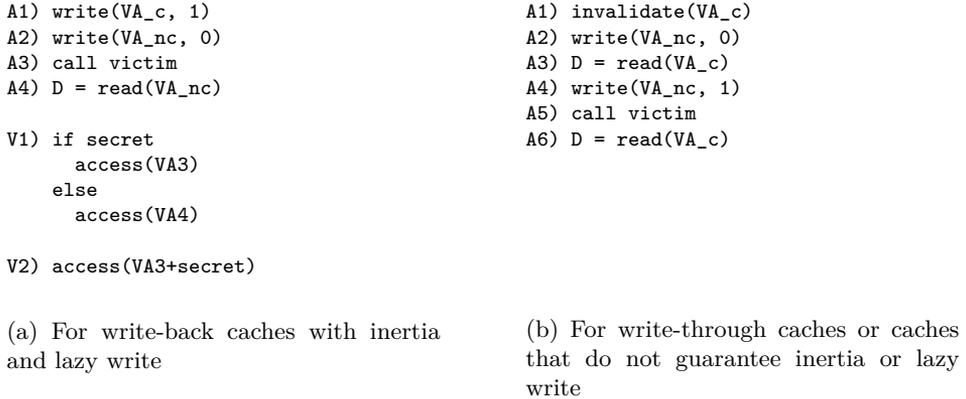

\centering
\begin{subfigure}[t]{0.45\textwidth}
 \begin{Verbatim}[fontsize=\footnotesize]
A1) write(VA_c, 1)
A2) write(VA_nc, 0)
A3) call victim
A4) D = read(VA_nc)

V1) if secret
      access(VA3)
    else
      access(VA4)

V2) access(VA3+secret)
 \end{Verbatim}
  \caption{For write-back caches with inertia and lazy write}
  \label{fig:conf1a}
\end{subfigure}\hspace{1cm}
\begin{subfigure}[t]{0.45\textwidth}
\begin{Verbatim}[fontsize=\footnotesize]
A1) invalidate(VA_c)
A2) write(VA_nc, 0)
A3) D = read(VA_c)
A4) write(VA_nc, 1)
A5) call victim
A6) D = read(VA_c)





\end{Verbatim}
  \caption{For write-through caches or caches that do not guarantee inertia or lazy write}
  \label{fig:conf1b}
\end{subfigure}
\caption{Confidentiality threat due to data-cache}
\end{figure}

\subsection{Attacking Confidentiality Using Data-Caches}\label{sec:attack:conf:dcache}
Here we show how an attacker can use 
mismatched cacheability attributes to mount
access-driven cache attacks; i.e.~measuring which data-cache lines
are evicted by the execution of the victim.

We use the program in Figure~\ref{fig:conf1a} to demonstrate the
attacker programming model.
For simplicity, we assume that the cache is physically indexed,
it has only one way and that it uses the write allocate/write back policy.
We also assume that the attacker can access the virtual addresses
$va_c$ and $va_{nc}$,  both pointing to the physical address $pa$;
$va_{c}$ is cacheable while $va_{nc}$ is not.
The attacker writes $1$ and $0$ into the virtual addresses $va_{c}$
and $va_{nc}$ respectively, then it invokes the victim. After the
victim returns, the attacker reads back from the address $va_{nc}$.


Let $idx$ be the line index corresponding to the address $pa$.
Since $va_c$ is cacheable, the instruction in $A1$ stores the value $1$ in the cache
line indexed by $idx$, the line is flagged as dirty and its tag is set to $pa$.
When the instruction in $A2$ is executed, since $va_{nc}$ is non-cacheable, the system ignores the
``unexpected cache hit'' and the value $0$ is directly written into the memory,
bypassing the cache.
Now, the value stored in main memory after the execution of the victim depends
on the behaviour of the victim itself;
if the victim accesses at least one address
whose line index is $idx$, then the dirty line is evicted and the value $1$ is
written back to the memory;
otherwise the line is not evicted and the physical memory still contains the value $0$ in $pa$.
Since the address is non-cacheable, the value that is read from $va_{nc}$ in $A4$ depends on the victim's behaviour.

This mechanism enables the attacker to probe if the line index $idx$ is
evicted by the the victim. If the attacker has available a
pair of aliases (cacheable and non-cacheable) for every cache line index,
the attacker is able to measure the list of cache lines that are
accessed by the victim, thus it can mount an access-driven cache
attack.
The programs V1 and V2 in Figure~\ref{fig:conf1a} exemplify two victims of such
attack; in both cases the lines evicted by the programs depend on a
confidential variable $\mathit{secret}$ and the access-driven cache
attack can extract some bits of the secret variable.

{Note that we assumed that the data cache (i) is ``write-back'', 
(ii) has ``inertia'' and (iii) uses ``lazy write''.
That is, (i) writing is done in the cache and the write access to the memory is postponed,
(ii) the cache evicts a line only  when the corresponding space is needed to
store new content, and 
(iii) a dirty cache line is written back only when it is evicted.}
This is not necessarily true%
{; the cache can be write-through or it can
  (speculatively) write back and
clean dirty lines when the memory bus is
unused.
}
Figure~\ref{fig:conf1b} presents an alternative attack whose
success does not depend on this assumption.
The attacker ($A1$-$A3$) stores the value $0$ in the cache,
by invalidating the corresponding line, writing $0$ into the memory
and reading back the value using the cacheable virtual address.
Notice that after step $A3$ the cache line is clean, since the
attacker used the non-cacheable virtual alias to write the value $0$.
Then, the attacker writes $1$ into the memory, bypassing the
cache.
The value read in $A6$ using the cacheable address depends
on the behaviour of the victim;
if the victim accesses at least one address
whose line index is $idx$, then the cache line for $pa$ is evicted and
the instruction in $A6$ fetches the value from the memory, yielding
the value $1$;
otherwise the line is not evicted and the cache still contains the value $0$ for $pa$.

\subsection{Attacking Integrity Using Data-Caches}\label{sec:attack:int:dcache}
\begin{figure}[t]
\centering
\begin{subfigure}[t]{0.2\textwidth}
\begin{Verbatim}[fontsize=\footnotesize]
V1) D = access(VA_c)
A1) write(VA_nc, 1)
V2) D = access(VA_c)
V3) if not policy(D)
       reject
    [evict VA_c]
V4) use(VA_c)
\end{Verbatim}
\end{subfigure}
\caption{Integrity threat due to data-cache}
\label{fig:int}
\end{figure}
Mismatched cacheability attributes may also produce integrity threats,
by enabling an attacker to modify critical data in an unauthorized or undetected manner.
Figure~\ref{fig:int} demonstrates an integrity attack. 
Again, we assume that the data-cache is direct-mapped, that it is physically indexed and that its write policy is write allocate/write back.
For simplicity, we limit the discussion to the L1 caches. 
In our example, $va_c$ and $va_{nc}$ are  virtual addresses pointing to the same memory location $pa$; $va_c$ is the cacheable alias
while $va_{nc}$ is non-cacheable. Initially, the memory location $pa$ contains the value $0$ and the corresponding cache line is either invalid or the line has valid data but it is clean.
In a sequential model where reads and writes are guaranteed to take place in program order and their effects are instantly visible to all system components,
the program of Figure~\ref{fig:int}  has the following effects: V1) a victim accesses address $va_c$, reading $0$; A1) the
attacker writes $1$ into $pa$ using the virtual alias
$va_{nc}$; V2) the victim accesses again $va_c$, this time
reading $1$; V3) if $1$ does not respect a security policy, then the
victim rejects it;
otherwise V4) the victim uses $1$ as the input for a security-relevant functionality.

On a real processor with a relaxed memory model the same system can behave differently, in
particular:
V1) using $va_c$, the victim reads initial value $0$
from the memory at the location $pa$
and fills the corresponding line in the cache;
A1) the attacker use $va_{nc}$ to write $1$ directly into the memory,
bypassing the cache; 
V2) the victim accesses again $va_c$, reading $0$ from the
cache; V3) the policy is evaluated based on $0$;
possibly, the cache line is evicted and, since it is not dirty, the memory
is not affected; V4) the next time that the victim accesses $pa$ it
will read $1$ and will use this value as input of the functionality,
but $1$ has not been checked against the policy.
This enables an attacker to bypass a reference monitor, here
represented by the check of the security policy, and to inject unchecked
input as parameter of security-critical functions.

\subsection{Attacking Confidentiality Using Instruction Caches}\label{sec:attack:conf:icache}
\begin{figure}[t]
\centering
\begin{subfigure}[t]{0.2\textwidth}
\begin{Verbatim}[fontsize=\footnotesize]
A1) jmp A8
A2) write(&A8, {R0=1})
A3) call victim
A4) jmp A8
A5) D = R0
...
A8) R0=0
A9) return

V1) if secret
      jmp f1
    else
      jmp f2
\end{Verbatim}
\end{subfigure}
\caption{Confidentiality threat due to instruction-cache}
\label{fig:conf2}
\end{figure}
Similar to data caches, instruction caches can be used to
mount access-driven cache attacks; in this case the attacker probes the
instruction cache to extract information about the victim execution path.

Our attack vector uses self-modifying code.
The program in Figure~\ref{fig:conf2} demonstrates the principles of
the attack. We assume that the instruction cache is physically indexed
and that it has only one way. We also assume that the attacker's
executable address space is cacheable and that the processor
uses separate instruction and data caches.

Initially, the attacker's program contains a function at the address
$A8$ that writes $0$ into the register $R0$ and immediately returns.
The attacker starts in $A1$, by invoking the function at $A8$.
Let $idx$ be the line index corresponding to the address of $A8$:
Since the executable address space is cacheable, the
execution of the function has the side effect of temporarily storing the
instructions of the function into the instruction cache.
Then ($A2$), the attacker modifies its own code, overwriting the
instruction at $A8$ with an instruction that
updates register $R0$ with the value $1$.
Since the processor uses separate instruction and data
caches the new instruction is not written into the
instruction cache.
After that the victim completes the execution of its own code,
the attacker ($A4$) re-executes the function at $A8$.
The instruction executed by the second invocation of the function depends on
the behaviour of the victim: 
if the execution path of the victim contains at least one address
whose line index is $idx$ then the attacker code is evicted,
the second execution of the function fetches the new instruction
from the memory and the register is updated with the value $1$;
otherwise, the attacker code is not evicted and
the second execution of the function uses the old code
updating the register with $0$.

In practice, the attacker can probe if the line index $idx$ is
evicted by the victim. By repeating the probing phase for every
cache line,  the attacker can mount access-driven instruction cache
attacks.
The program V1 in Figure~\ref{fig:conf2} exemplifies a victim of such
attack, where the control flow of the victim (and thus the lines
evicted by the program) depends on a confidential variable $\mathit{secret}$.

\subsection{Scenarios}\label{sec:attack:scenario}

In this section we investigate practical applications and limits
of the attack vectors.
To simplify the presentation, we assumed one-way physically indexed caches. However, 
all attacks above can be straightforwardly applied to virtually indexed caches.
Also, the examples can be extended to support
multi-way caches if the way-allocation strategy of the cache does not
depend on the addresses that are accessed: the attacker repeats the cache filling phase using several
addresses that are all mapped to the same line index.
The attack presented in Section~\ref{sec:attack:conf:icache}
also assumes that the processor uses separate instruction and data caches.
This is the case in most modern processors, since they usually use the ``modified
Harvard architecture''. Modern x64 processors, however, implement a snooping mechanism that invalidates corresponding instruction cache lines automatically 
in case of self-modifying code (\cite{Intel}, Vol.~3, Sect.~11.6); in such a scenario the attack cannot succeed.

The critical assumptions of the attacks 
are the ability of building virtual aliases
with mismatched cacheability attributes
(for the attacks in Sections~\ref{sec:attack:conf:dcache} and~\ref{sec:attack:int:dcache})
and the ability of self-modifying code (for the attack in Section~\ref{sec:attack:conf:icache}).
These assumptions can be easily met if the attacker
is a (possibly compromised) operating system and the victim is a
colocated guest in a virtualized environment.
In this case, the attacker is usually free to create several virtual aliases and
to self-modify its own code.
A similar scenario consists of systems that use specialised hardware to
isolate security-critical components (like SGX and TrustZone), where a malicious operating system shares the caches
with trusted components.
Notice also that in case of TrustZone and hardware assisted virtualization,
the security software (e.g. the hypervisor) is not informed 
about the creation of setups that enable the attack vectors, since it usually does not interfere with the
manipulation of the guest page tables. 

In some cases it is possible to enable the attack vectors even if the
attacker is executed in non-privileged mode.
Some operating systems can allow user processes to reconfigure cacheability of their own virtual
memory. The main reason of this functionality is to speed up
some specialised computations that need to avoid polluting the cache
with data that is accesses infrequently~\cite{qu2005using}.
In this case two malicious programs can collude to build the
aliases having mismatched attributes.

Since buffer overflows can be used to inject
malicious code, modern operating systems enforce the executable space
protection policy: a memory page can be either writable or executable,
but it can not be both at the same time. However, to support just in
time compilers, the operating systems allow user processes to change
at run-time the permission of virtual memory pages, allowing to switch
a writable page into an executable and vice versa (e.g. Linux provides
the syscall ``mprotect'', which changes protection for a
memory page of the calling process). Thus, the attack of
Section~\ref{sec:attack:conf:icache} can still succeed if:
(i) initially the page containing the function $A8$ is executable,
(ii) the malicious process requests the operating system to switch the page as
writable (i.e. between step $A1$ and $A2$)
and (iii) the process requests the operating system to change back the page as
executable before re-executing the function (i.e. between step $A2$
and $A4$).
If the operating system does not invalidate the instruction cache whenever the
permissions of memory pages are changed, the confidentiality threat
can easily be exploited by a malicious process.

In Sections~\ref{sec:background} and~\ref{sec:related}  we provide a summary of existing
literature on side channel attacks that use caches. In general, every
attack (e.g.~\cite{Aciicmez:2006:TCA:2092880.2092891,zhang2012cross,Neve:2006:AAC:1756516.1756531}) that is access-driven and that has been
implemented by probing access times can be reimplemented
using the new vectors. However, we stress that the new vectors have
two distinguishing characteristics with respect to the time based ones:
{(i) the probing phase does not need the support of an external
measurement,
(ii) the vectors build a cache-based storage channel that has relatively low noise compared channels based on execution time which depend on many other factors than cache misses, e.g., TLB
misses and branch mispredictions.

In fact, probing the cache state by measuring execution time
requires the attacker to access the system time. If this resource is not
directly accessible in the execution level of the attacker, the attacker needs
to invoke a privileged function that can introduce delays and noise in the cache state
(e.g. by causing the eviction from the data cache when accessing internal
data-structures). For this reason, the authors
of~\cite{zhang2012cross} disabled the timing virtualization of XEN (thus
giving the attacker direct access to the system timer) to demonstrate
a side channel attack.}
Finally, one of the storage channels presented here poses integrity threats clearly
outside the scope of timing based attacks.




\section{Case Studies}\label{sec:experiments}
To substantiate the importance of the new attack vectors, and the need to augment the verification
methodology to properly take caches and cache attributes into account, we examine the
attack vectors in practice. Three cases are presented: A malicious OS that extracts a secret
AES key from a cryptoservice hosted in TrustZone, a malicious paravirtualized OS that subverts the memory protection of
a hypervisor, and a user process
that extracts the exponent of a modular exponentiation
procedure executed by another process.

\subsection{Extraction of AES Keys}
AES~\cite{daemen2013design} is a widely used symmetric encryption scheme that
uses a succession of rounds, where four operations (SubBytes,
ShiftRows, MixColumn and AddRoundKey) are iteratively applied to
temporary results. For every round $i$, the algorithm derives the sub key $K_i$ from an
initial key $k$. For AES-128 it is possible to derive $k$ from any sub
key $K_i$.

Traditionally, efficient AES software takes advantage of precomputed
SBox tables to reach a high performance and compensate the lack of native
support to low-level finite field operations. The fact that disclosing access
patterns to the SBoxes can make AES software insecure is well known in literature
(e.g.~\cite{AESattack,Tromer:2010:ECA:1713125.1713127,Aciicmez:2006:TCA:2092880.2092891}).
The existing implementations of these attacks
probe the data cache using time channels, here we demonstrate that
such attacks can be replicated using the storage channel described in
Section~\ref{sec:attack:conf:dcache}. With this aim, we implement the
attack described in~\cite{Neve:2006:AAC:1756516.1756531}.

The attack exploits a common implementation pattern.
The last round of AES is slightly different from the others since the
MixColumn operation is skipped. For this reason, implementations often
use four SBox tables $T_0, T_1, T_2, T_3$ of 1KB
for all the rounds except the last one, whereas a dedicated $T_4$
is used.
Let $c$ be the resulting cipher-text, $n$ be the
total number of rounds and $x_i$ be the intermediate output of the
round $i$. The last AES round computes the cipher-text as follows:
\[
 c = K_n \oplus \textit{ShiftRows}(\textit{SubBytes}(x_{n-1}))
\]
Instead of computing $\textit{ShiftRows}(\textit{SubBytes}(x_{n-1}))$, the implementation
accesses the precomputed table $T_4$ according to an index that depends on
$x_{n-1}$. Let $b[j]$ denote the $j$-th byte of $b$ and $[T_4\ \textit{output}]$ be one of the actual accesses to $T_4$, then 
\[
 c[j] = K_n[j] \oplus [T_4\ \textit{output}]\,.
\]
Therefore, it is straightforward to compute $K_n$ knowing the
cipher-text and the entry yielded by the access to $T_4$:
\[
 K_n[j] = c[j] \oplus [T_4\ \textit{output}]
\]
Thus the challenge is to identify the exact $[T_4\ output]$ for a
given byte $j$. We use the ``non-elimination'' method described in~\cite{Neve:2006:AAC:1756516.1756531}.
Let $L$ be a log of encryptions, consisting of a set of pairs
$(c_l, e_l)$. Here, $c_l$ is the resulting cipher-text and $e_l$ is
the set of cache lines accessed by the AES implementation.
We define $L_{j,v}$ to be the subset of $L$ such that the byte $j$ of the
cipher-text is $v$:
\[
  L_{j,v} = \{(c_l, e_l) \in L\ \mbox{such that} \ c_l[j]=v\}
\]
Since $c[j] = K_n[j] \oplus [T_4\ output]$ and the key is constant, 
if the $j$-th byte of two cipher-texts have the same value then the
accesses to $T_4$ for such cipher-text must contain at least one
common entry.
Namely, the cache line accessed by the implementation while computing
$c[j] = K_n[j] \oplus [T_4\ output]$ is (together with some false
positives) in the non-empty set
\[
E_{j,v} = \bigcap_{(c_l, e_l) \in L_{j,v}} e_l
\]
Let $T_4^{j,v}$ be the set of distinct bytes of $T_4$ that can be
allocated in the cache lines $E_{j,v}$. Let $v,v'$ be two different values recorded
in the log for the byte $j$. We know that exist $t_4^{i,v} \in T_4^{j,v}$ and
$t_4^{i,v'} \in T_4^{j,v'}$ such that $v = K_n[j] \oplus t_4^{i,v}$
and $v' = K_n[j] \oplus t_4^{i,v'}$. Thus
\[
v \oplus v' = t_4^{j,v} \oplus t_4^{j,v'}
\]
This is used to iteratively shrink the sets $T_4^{j,v}$
and $T_4^{j,v'}$ by removing the pairs that do not satisfy the
equation. The attacker repeats this process until for a byte value $v$
the set $T_4^{j,v}$ contains a single value; then the byte $j$ of key
is recovered using $K_n[j] = v \oplus t_4^{j,v}$.
Notice that the complete process can be repeated for every byte
without gathering further logs and that the attacker does not need to
know the plain-texts used to produce the cipher-texts.

We implemented the attack on a Raspberry Pi 2~\cite{rpi2}, because this platform
is equipped with a widely used CPU (ARM Cortex A7) and allows to use
the TrustZone extensions.
The system starts in TrustZone and executes the bootloader of our minimal TrustZone operating system. This installs a
secure service that allows an untrusted kernel to encrypt blocks
(e.g. to deliver packets over a VPN) using a secret key. This key is intended to
be confidential and should not be
leaked to the untrusted software. 
The trusted service is implemented using an existing AES library for
embedded devices~\cite{wolfssl},
that is relatively easy to deploy in the resource constrained environment
of TrustZone. However, several other implementations (including 
OpenSSL~\cite{openssl}) expose the same weakness due to the use of precomputed SBoxes.
The boot code terminates by exiting TrustZone and activating the
untrusted kernel. This operating system is not able to directly access the TrustZone memory but can invoke the secure service by executing Secure Monitor
Calls (SMC).

In this setting, the attacker (the untrusted kernel), which is executed
as privileged software outside TrustZone, is free to
manipulate its own page tables (which are different from the ones used
by the TrustZone service). Moreover, the attacker can invalidate
and clean cache lines, but may not use debugging
instructions to directly inspect the state of the caches.

The attacker uses the algorithm presented in
Figure~\ref{fig:conf1b}, however several considerations
must be taken into account to make the attack practical.
The attacker repeats the filling and probing phases
for each possible line index (128) and way (4) of the data-cache.
In practice, since the cache eviction strategy is pseudo random, the
filling phase is also repeated several times, until the L1 cache is
completely filled with the probing data (i.e. for every pair of
virtual addresses used, accessing to the two addresses yield different
values).

On Raspberry Pi 2, the presence of a unified L2 cache can obstruct the
probing phase: even if a cache line is evicted from the L1 cache by
the victim, the system can temporarily store the line into the L2 cache,
thus making the probing phase yield false negatives.
It is in general possible to extend the attack to deal with L2 caches
(by repeating the filling and probing phases for every line index and
way of the L2 cache subsystem), however, in Raspberry Pi 2 the L2 cache
is shared between the CPU and the GPU, introducing a considerable amount of noise in
the measurements. For this reason we always flushes the L2 cache between the step
$A5$ and $A6$ of the attack. We stress that this operation can be done by the
privileged software outside TrustZone without requiring any support by
TrustZone itself.

To make the demonstrator realistic, we allow the TrustZone service to
cache its own stack, heap, and static data. This pollutes the data
extracted by the probing phase of the attack: it can now yield false positives due to access of
the victim to such memory areas.
The key extraction algorithm can handle such false positives, but we decide to
filter them out to speed up the analysis phase.
For this reason, the attacker first identifies the
cache lines that are frequently evicted independently of the resulting
cipher-text (e.g. lines where the victim stack is probably allocated)
and removes them from the sets $E_{j,v}$.
As common, the AES implementation defines the SBox tables as
consecutive arrays. Since they all consists of 1 KB of data, the
cache lines where different SBoxes are allocated are non-overlapping, helping
the attacker in the task of reducing the sets $E_{j,v}$ to contain a
single line belonging to the table $T_4$ and of filtering out
all evictions that are due the previous rounds of AES.

For practical reasons we implemented the filling and probing phase
online, while we implemented the key extraction algorithm as a offline Python
program that analyses the logs saved by the online phase.
The complete online phase (including the set-up of the page tables)
consists of 552 lines of C, while the Python programs consists of
152 lines of code. The online attacker generates a stream of random
128 bits plain-texts and requests to the TrustZone service their encryption.
Thus, the frequency of the attacker's  measurements isolates one
AES encryption of one block per measurement. Moreover, even if the
attacker knows the input
plain-texts, they are not used in the offline phase.
We repeated the attack for several randomly generated keys and in the
worst case, the offline phase recovered the complete 128-bit key after
850 encryption in less than one second.

\subsection{Violating Spatial Isolation in a Hypervisor}\label{sec:impl:hyper}
A hypervisor is a low-level execution platform controlling accesses to system resources
and is used to provide isolated partitions on a shared hardware. The partitions are used to
execute software with unknown degree of trustworthiness. Each partition has 
access to its own resources and cannot encroach on protected parts
of the system, like the memory used by the hypervisor or the other partitions.
Here we demonstrate that a malicious operating system (guest) running on a hypervisor
can gain illicit access to protected resources using the mechanism described in
Section~\ref{sec:attack:int:dcache}.

As basis for our study  we use a hypervisor~\cite{SOFSEM}
that has been formally verified previously with respect to a cache-less model.
The hypervisor runs on an \armv~Cortex-A8 processor~\cite{Cortexa8},
where both  L1 and L2 caches are enabled.
On \armv~the address translation depends on the page
tables stored in the memory. Entries of the page tables encode a
virtual-to-physical mapping for a memory page as well as access
permissions and cacheability setting. On Cortex-A8 the MMU
consults the data cache before accessing the main memory whenever a
page table descriptor must be fetched.

The architecture is paravirtualized by the hypervisor for
several guests. Only the hypervisor is executing in privileged mode,
while the guests are executed in non-privileged mode and need to
invoke hypervisor functionality to alter the critical resources of the
system, like page tables.

A peculiarity of the hypervisor (and others~\cite{xen}) that makes it 
particularly relevant for our purpose is the use of so-called direct paging~\cite{SOFSEM}.
%
%
Direct paging enables a guest to manage its own memory space with assistance of the hypervisor. 
Direct paging allows the guest to allocate the page tables inside its own memory and to
directly manipulate them while the tables are not in active use by the
MMU. Then, the guest uses dedicated hypervisor calls to effectuate
and monitor the transition of page tables between passive and active state.
The hypervisor provides a number of system calls that support
the allocation, deallocation, linking, and activation of guest page
tables. These calls need to read the content of page tables that are
located in guest memory and ensure that the proposed MMU setup
does not introduce any illicit access grant. Thus
the hypervisor acts as a reference monitor of the page tables.

As described in Section~\ref{sec:attack:int:dcache}, on a Cortex-A8 processor sequential consistency
is not guaranteed if the same memory location is accessed by virtual aliases with mismatched cacheability
attributes. This opens up for vulnerabilities. The hypervisor may check
a page table by fetching its content from the cache. However, if the content of the page table in the cache
is clean and different from what has been placed by the attacker in the main memory and the page table is later evicted
from the cache, the MMU will use a configuration that is different from
what has been validated by the hypervisor.

Figure~\ref{cache:attacks} illustrates how a guest can use the aliasing of the physical memory to bypass
the validation needed to create a new page table.
Hereafter we assume that the guest and the hypervisor use two different virtual addresses to point to the same memory location.
Initially, the hypervisor (1) is induced to load a valid page table in the cache.
This can be done by writing a valid page table, requesting the hypervisor to verify and allocate it and
then requesting the hypervisor to deallocate the table.
Then, the guest (2) stores an invalid page table in the same memory location.
If the guest uses a non-cacheable virtual alias, the guest write (3) is directly applied to the memory bypassing the cache.
The guest (4) requests the hypervisor to validate and allocate this memory area, so that it can later be used
as page table for the MMU.
At this point, the hypervisor is in charge of verifying that the memory area contains a valid page table and of revoking
any direct access of the guest to this memory. In this way, a validated page table can be later used securely
by the MMU.
Since the hypervisor (4) accesses the same physical location through the cache, it
can potentially validate stale data, for example the ones fetched during the step (1).
At a later point in time, the validated data is evicted from the cache. This data is not written back to the memory since the hypervisor has only checked
the page table content and thus the corresponding cache lines are clean. 
Finally, the MMU (5) uses the invalid page table and its settings become
untrusted.

{
  Note that this attack is different from existing ``double mapping'' attacks.
  In double-mapping attacks the same physical memory is
  mapped ``simultaneously'' to multiple virtual memory addresses used by
  different agents; the attack occurs when the untrusted agent owns the
  writable alias, thus being able to directly modify the memory
  accessed by the trusted one.
  Here, the attacker  exploits the fact that the
  same physical memory is first allocated to the untrusted agent and then
  re-allocated to the trusted one. After that the ownership is
  transferred (after step A1), the untrusted agent has no mapping to this memory
  area. However, if the cache contains stale data the trusted agent may be compromised.
  Moreover, the attack does not depend on misconfiguration of the TLBs; the hypervisor is programmed to completely clean the TLBs whenever the MMU is reconfigured.
}

We implemented a malicious guest that managed to bypass the hypervisor validation using the above mechanism.
The untrusted data, that is used as configuration of the MMU, is used to obtain writable access to the master
page table of the hypervisor. This enables the attacker to reconfigure
its own access rights to all memory pages and
thus to completely take over the system.

\begin{figure}
{\small
  
  \begin{center}
\includegraphics[width=0.6\linewidth]{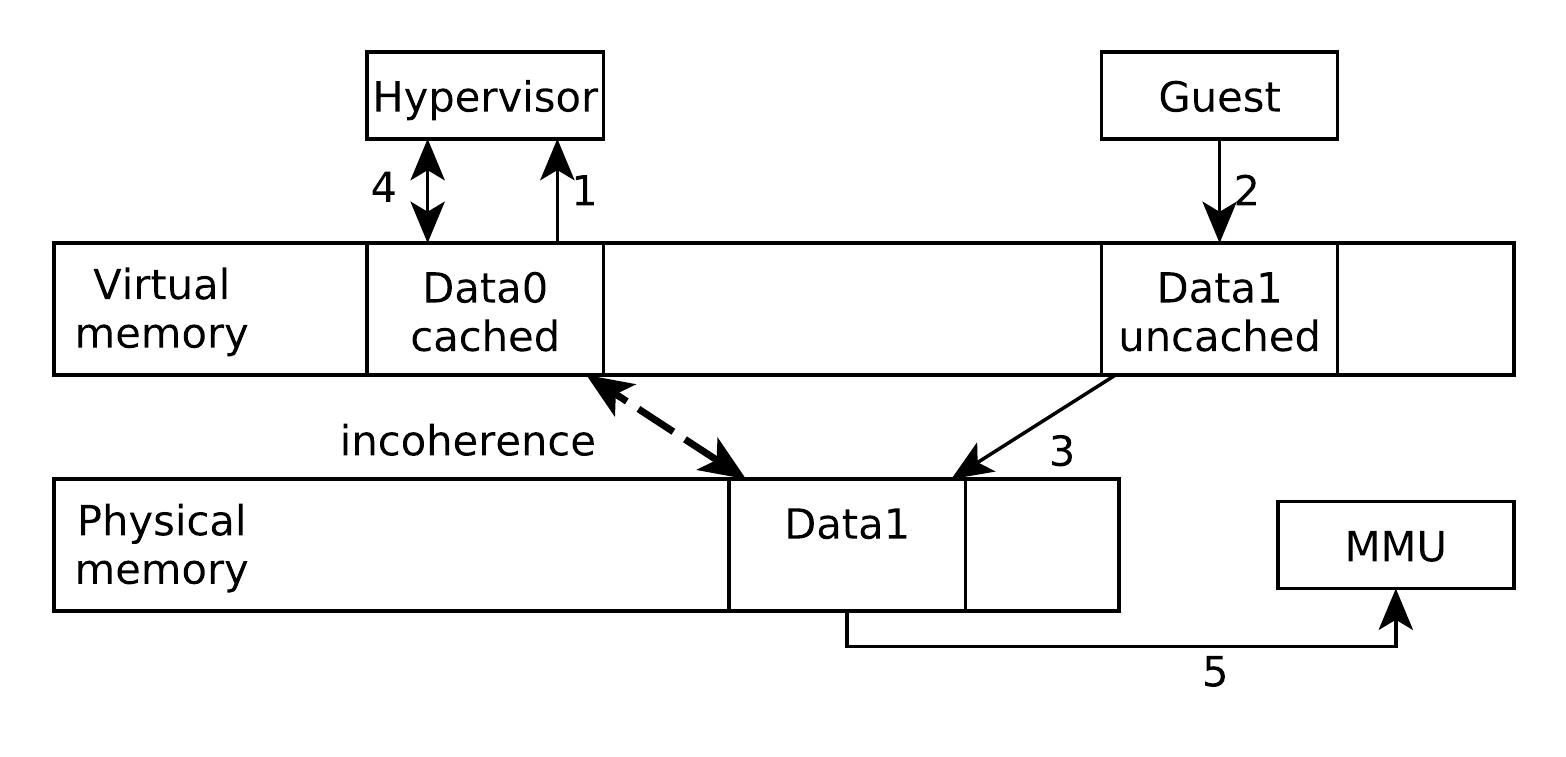}
  \end{center}
  \caption{Compromising integrity of a direct paging mechanism using incoherent memory. The MMU is configured to use a page table that was not validated by the hypervisor.}
  \label{cache:attacks}
}
\end{figure}

Not all hypervisors are subject to this kind of vulnerability. For example, if a hypervisor uses shadow paging, then 
guest pages are copied into the hypervisor's own memory where they are transformed into so-called shadow page tables. The guest has no access to this memory area 
and the hypervisor always copies cached data (if present), so the attack described above cannot be replicated.
On the other hand, the adversary can still attack secure services hosted by the hypervisor, for example a virtual machine introspector.
In~\cite{DBLP:conf/esorics/ChfoukaNGDE15} the hypervisor is used to implement a run-time monitor to protect an untrusted guest from its internal threats.
The monitor is deployed in a separate partition to isolate it from the untrusted guest.
The policy enforced by the monitor is executable space protection:
each page in the memory can be either writable or executable but not both at the same time. 
The monitor, via the hypervisor, intercepts all changes to the executable codes. This allows to use standard signature checking to prevent code injection.
Each time the guest operating system tries to execute an application, the monitor checks if the binary of the application has a valid signature.
In case the signature is valid, the monitor requests the hypervisor to
make
executable the physical pages that contain the binary code.
The security of this system depends on the fact that the adversary cannot directly modify a validated executable due to executable space protection.
However, if a memory block of the application code is accessed using virtual aliases with mismatched cacheability attributes, 
the untrusted guest can easily mislead the monitor to validate wrong data and execute unsigned code.

\subsection{Extraction of Exponent From a Modular Exponentiation Procedure}\label{sec:impl:sqmult}
\begin{figure}
\centering
\begin{subfigure}[t]{0.2\textwidth}
 \begin{Verbatim}[fontsize=\footnotesize]
y := 1
for i = m down to 1
    y = Square(y)
    y = ModReduce(y, N)
    if e_i == 1
       y = Mult(y,x)
       y = ModReduce(y, N)
\end{Verbatim}
\end{subfigure}
\caption{Square and multiply algorithm}\label{fig:sq:mult}
\end{figure}
The square and multiply algorithm of Figure~\ref{fig:sq:mult} is
often used to compute the modular exponentiation $x^e \mathop{\mathrm{mod}} N$,
where $e_m$ \dots $e_1$ are the bits of the binary representation of $e$.
This algorithm has been exploited in access-driven attacks,
since the sequence of function calls directly leaks
$e$, which corresponds to the private key in several decryption
algorithms.
Here we demonstrate that an attacker that is interleaved with a victim
can infer $e$ using the storage channel described in
Section~\ref{sec:attack:conf:icache}.

The attack was implemented on Raspberry Pi 2. We build a 
setting where a malicious process (e.g. a just in time compiler)
can self-modify its own code. Moreover, we implement a scheduler
that allows the attacker to be scheduled after every loop of the
victim.\footnote{Forcing the scheduler of a general purpose OS to grant such
high frequency of measurements  is out of the scope of this paper. The interested reader can refer to~\cite{Neve:2006:AAC:1756516.1756531,zhang2012cross}.}

The attacker uses the vector presented in
Figure~\ref{fig:conf2}, repeating the filling and probing phases
for every way of the instruction cache
and for every line index where the code of
the functions \verb|Mult| and \verb|ModReduce| can be mapped.
Due to the separate instruction and data L1 caches, the presence of the
L2 cache does not interfere with the probing phase. However, we must
ensure that the instruction overwritten in the step ($A2$) does not
sit in the L1 data-cache when the step ($A4$) is executed. Since user
processes cannot directly invalidate or clean cache lines, we
satisfy this requirement by adding a further step ($A3.b$). This step
writes several addresses whose line indices in the L1 data-cache are
the same of the address $\&A8$, thus forcing the eviction from the L1
data-cache of the line that has contains the instruction stored at $\&A8$.

We repeated the attack for several randomly generated values of $e$
and in each case the attacker correctly identified the execution path
of the victim. This accuracy is due to the simple environment (no
other process is scheduled except the victim and the attacker) and the
lack of noise that is typical in attacks that use time channels.



\section{Countermeasures}

Literature on access-based timing channel attacks suggests a number of well-known countermeasures. Specifically, for attacks on the confidentiality of AES encryption, a rather comprehensive 
list of protective means is provided in \cite{AESattack}. Some of the approaches are specific to AES, e.g., using registers instead of memory or dedicated hardware instructions for the SBox 
table look-up. Others are specific to the timing attack vector, e.g., reducing the accuracy of timing information available to the attacker. Still, there are well-known solutions addressing 
the presence of caches in general, thus they are suitable to defend against attacks built on the cache storage channel described in this paper.

In what follows we identify such known general countermeasures
(Sections~\ref{sec:count:disabled} and~\ref{sec:normalize}.1-5)
and propose new ones that are specific to the attack vector using uncacheable
aliases (Sections~\ref{sec:count:seqconst},~\ref{sec:normalize}.6, and~\ref{sec:countermeasure:hw}). In addition it is examined which countermeasures are suitable to protect against
the integrity threat posed by incoherent aliases in the memory system and propose a fix for the hypervisor example.

Different countermeasures are evaluated by implementing them for the AES and hypervisor scenarios introduced in the previous section and analysing their performance. The corresponding benchmark results are shown in Tables \ref{table benchmark hypervisor} and \ref{table benchmark AES}. Since our main focus is on verifying systems in the presence of caches, for each group of countermeasures we also sketch how a correctness proof would be conducted. Naturally, such proofs require a suitable model of the memory system including instruction and data caches.

It should be emphasised that the verification of the countermeasures is meant to be performed separately from the verification of the overall system which is usually assuming a much simpler
memory model for feasibility. The goal is to show that the countermeasures neutralise the cache storage channels and re-establish a coherent memory model. The necessary program verification
conditions from such a proof can then be incorporated into the overall verification methodology, supporting its soundness.

\subsection{Disabling Cacheability}\label{sec:count:disabled}

The simplest way to eliminate the cache side channel is to block an attacker from using the caches altogether. In a virtualization platform, like an operating system or a hypervisor,
this can be achieved by enforcing the memory allocated to untrusted guest partitions to be uncacheable. Consequently, cache-driven attacks on confidentiality and integrity of a system are
no longer possible. Unfortunately, this countermeasure comes at great performance costs, potentially slowing down a system by several orders of magnitude. On the other hand, a proof of the 
correctness of the approach is straight-forward. Since the attacker cannot access the caches, they are effectively invisible to him. The threat model can then be specified using 
a coherent memory semantics that is a sound abstraction of a system model where caches are only used by trusted code.

\begin{table}[t]
\scriptsize
\centering
\begin{tabular}{l|r|r|r|r|r}
LMbench micro benchmark &	Native  &	Hyp            &	ACPT            &	SelFl           &	Flush\\
\hline
null syscall            &       0.41    &       1.75           &      	1.76            &       1.77            &	1.76\\
read	                &       0.84    &	2.19           &	2.20            &	2.20	        &	2.38\\
write	                &       0.74	&       2.09           &	2.10            &	2.15		&       2.22\\
stat	                &       3.22    &	5.61           &	5.50            &       5.89            &	5.92\\
fstat                   &	1.19    &	2.53           &	2.55            &	2.56            &	2.65\\
open/close	        &       6.73	&       14.50          &	14.42	        &       14.86		&       14.71\\
select(10)	        &       1.86	&       3.29           &	3.30            &	3.33            &	3.42\\
sig handler install     &	0.85	&       2.87           &	2.89            &	2.92            &	2.95\\
sig handler overhead	&       4.43	&       14.45          &	14.48           &	15.11           &	14.91\\
protection fault	&       2.66	&       3.73	       &        3.83            &	3.91            &	3.70\\
pipe                    &	21.83	&       48.78	       &        47.79	        &       47.62           &	692.91\\
fork+exit               &	1978	&       5106           &        5126            &	6148            &	38787\\
fork+execve             &  	2068	&       5249	       &        5248            &	6285            &	39029\\
pagefaults              &	3.76	&       11.21          &	11.12	        &       21.55	        &	332.82\vspace{3mm}\\

Application benchmark   &	Native  &	Hyp             &	ACPT            &	SelFl   	&	Flush\\
\hline
tar (500K)              &	70	&       70	        &       70	        &       70	        &	190\\
tar (1M)                &	120     &	120             &	120             &	120             &	250\\
tar (2M)	        &       230     &	210             &	200             &	210             &	370\\
dd (10M)                &	90	&       140	        &       140             &	160             &	990\\
dd (20M)                &	190     &	260             &	260             &	570             &	1960\\
dd (40M)                &	330     &	500             &	450             &	600             &	3830\\
jpg2gif(5KB)            &	60      &	60              &	60              &	60              &	130\\
jpg2gif(250KB)          &	920     &	810             &	820             &	830             &	1230\\
jpg2gif(750KB)	        &       930	&       870             &	870             &	880             &	1270\\
jpg2bmp(5KB)	        &       40      &	40              &	40              &	40              &	110\\
jpg2bmp(250KB)	        &       1350    &	1340            &	1340            &	1350            &       1720\\
jpg2bmp(750KB)	        &       1440    &	1420	        &       1420            &	1430            &	1790\\
jpegtrans(270', 5KB)	&       10      &       10              &	10              &      	10              &	80\\
jpegtrans(270', 250KB)	&       220     &	240             &	240             &	250             &	880\\
jpegtrans(270', 750KB)	&       380     &	400             &	400             &	420		&       1050\\
bmp2tiff(90 KB)	        &       10      &	10              &	10              &	10		&       60\\
bmp2tiff(800 KB)	&       20	&       20              &	20              &	20              &	80\\
ppm2tiff(100 KB)	&       10      &	10              &	10              &	10              &	70\\
ppm2tiff(250 KB)	&       10      &	10              &	10              &	20		&       80\\
ppm2tiff(1.3 MB)	&       20      &	30              &	30              &	30              &	90\\
tif2rgb(200 KB)	        &       10      &	20              &	20              &	20              &       120\\
tif2rgb(800 KB)         &	40      &	40              &	40              &	50              &	270\\
tif2rgb(1.200 MB)	&       130     &	160             &	160             &	180             &	730\\
sox(aif2wav 100KB)      &	20      &	20              &	20              &	30              &	140\\
sox(aif2wav 500KB)	&       40	&       60	        &       60	        &       60		&       180\\
sox(aif2wav 800KB)	&       60	&       100	        &       100	        &       110		&       220
\end{tabular}\vspace{3mm}\\
\caption{Hypervisor Micro and Application Benchmarks.
LMbench micro benchmarks [$\mathit{\mu s}$] and application benchmarks [$\mathit{ms}$] for the Linux kernel v2.6.34 running natively on BeagleBone Black, paravirtualized 
on the hypervisor without protection against the integrity threat (Hyp), with always cacheable page tables (ACPT), with selective flushing (SelFl), and with full cache flushes on entry (Flush).}
\label{table benchmark hypervisor}
\end{table}
\subsection{Enforcing Memory Coherency}\label{sec:count:seqconst}
Given the dramatic slowdown expected for a virtualization platform, it seems out
of the question to completely deny the use of caches to untrusted guests.
Nevertheless, the idea of enforcing that guest processes cannot break 
{memory coherency} through uncacheable aliases still seems appealing.

\subsubsection{Always Cacheable Guest Memory} When making all guest memory uncacheable is prohibitively expensive, an intuitive alternative could be to just make all guest memory cacheable.
Indeed, if guests are user processes in an operating system this can be easily implemented by adapting the page table setup for user processes accordingly, i.e., enforcing cacheability for
all user pages. Then user processes cannot create uncacheable aliases to measure cache contents and start cache-based time-of-check-to-time-of-use attacks on 
their host operating system.

However, for hypervisors, where guests are whole operating systems, the approach has several drawbacks. First of all, operating systems are usually controlling memory mapped I/O devices which should be operated through uncacheable memory accesses. If a hypervisor would make all memory accesses of a guest OS cacheable, the OS will not be able to properly control I/O devices and probably not work correctly. Thus, making all untrusted guest memory cacheable only works for (rather useless) operating systems that do not control I/O devices. Furthermore, there are cases when a guest can optimise its performance by making seldomly used pages uncacheable \cite{qu2005using}.  

\subsubsection{$C\oplus U$ Policy}

Instead of making all guest pages cacheable, a hypervisor could make sure that
at all times a given physical page can either be accessed in cacheable or
uncacheable mode ($C\oplus U$ policy). To this end it would need to monitor the
page table setup of the guests and forbid them to define both cacheable and
uncacheable aliases of the same physical address. Then guests may set up
uncacheable virtual pages only if no cacheable alias exists for the targeted
physical page. Moreover, the hypervisor has to flush a cacheable page from the
caches when it becomes uncacheable, in order to remove stale copies of the page
that might be abused to set up an alias-driven cache attack. In this way, the
hypervisor would enforce {memory coherency} for the guest memory by making sure that no content from
uncacheable guest pages is ever cached and for cacheable pages cache entries may
only differ from main memory if they are dirty.

A Trust-zone cryptoservice that intends to prevent a malicious OS to use
memory incoherency to measure the Trust-zone accesses to the cache can use
TZ-RKP~\cite{azab2014hypervision} 
and extend its run-time checks to force the OS to respect the $C\oplus U$
policy.

\subsubsection{Second-Stage MMU}Still, for both the static and the dynamic case, the $C\oplus U$ policy may be expensive to implement for fully virtualizing hypervisors that rely on a second stage of address translation. For example, the ARMv8 architecture provides a second stage MMU that is controlled by the hypervisor, while the first stage MMU is controlled by the guests. Intermediate physical addresses provided by the guests are then remapped through the second stage to the actual physical address space. The mechanism allows also to control the cacheability of the intermediate addresses, but it can only enforce non-cacheability. In order to enforce cacheability, the hypervisor would need to enforce it on the first stage of translation by intercepting the page table setup of its guests, which creates an undesirable performance overhead and undermines the idea of having two independently operated stages of address translation.

\subsubsection{$W\oplus X$ Policy}Unfortunately, enforcing cacheability of
memory accesses does not protect against the instruction-cache-based
confidentiality threat described earlier. In order to prevent an attacker from
storing incoherent copies for the same instruction address in the memory system, the hypervisor would also need to prohibit self-modifying code for the guests, i.e., ensure that all
guest pages are either writable or executable ($W\oplus X$ policy). Since operating systems regularly use self-modification, e.g., when installing kernel updates or swapping in pages,
the association of pages to the executable or writable attribute is dynamic as well and must be monitored by the hypervisor. It also needs to flush instruction caches when an executable
page becomes writable.

Overall, the solutions presented above seem to be more suitable for paravirtualizing hypervisors, that are invoked by the guests explicitly to configure their virtual memory. Adding the required changes to the corresponding MMU virtualization functionality seems straightforward. In fact, for the paravirtualizing hypervisor presented in this paper a tamper-proof security monitor has been implemented and formally verified, which enforces executable space protection on guest memory and checks code signatures in order to protect the guests from malicious code injection \cite{DBLP:conf/esorics/ChfoukaNGDE15}.

\subsubsection{Always Cacheable Page Tables}To protect the hypervisor against
the integrity threat a lightweight specialization of the $C\oplus U$ policy
introduced above was implemented. It is based on the observation that
uncacheable aliases can only subvert the integrity of the hypervisor if they are
constructed for the inputs of its MMU virtualization functions. Thus the
hypervisor needs only to enforce the $C \oplus U$ policy, and consequently 
memory coherency, on its inputs. While this can be achieved by flushing the caches appropriately (see Section~\ref{sec:normalize}), a more efficient approach is to allocate the page tables of the guests in regions that are always cacheable. These regions of physical memory are fixed for each guest and the hypervisor only validates a page table for the guest if it is allocated in this area. In all virtual addresses mapping to the area are forced to be cacheable. Obviously, also the guest system needs to be adapted to support the new requirement on the allocation of page tables. However, given a guest system that was already prepared to run on the original hypervisor, the remaining additional changes should be straight-forward. For instance, the adaptation of the hypervisor required changes to roughly 35 LoC in the paravirtualized Linux kernel and an addition of 45 LoC to the hypervisor for the necessary checks.

The performance of the hypervisor with always cacheable page tables (ACPT) can be observed in Table~\ref{table benchmark hypervisor}. Compared to the original hypervisor there are basically
no performance penalties. In some cases the new version even outperforms the original hypervisor, due to the ensured cacheability of page tables. It turns out that in the evaluated 
Linux kernel, page tables are not always allocated in cacheable memory areas. The correctness of the approach is discussed in detail in Section~\ref{sec:verification}. The main verification
condition to be discharged in a formal proof of integrity is that the hypervisor always works on coherent memory, hence any correctness proof based on a 
coherent model also holds in a more detailed model with caches.

\begin{table}[t]
\scriptsize
\centering
\begin{tabular}{l|r|r|r|r}
\multirow{2}{*}{AES encryption}         & \multicolumn{2}{c|}{5 000 000 $\times$ 16B}    & \multicolumn{2}{c}{10 000 $\times$ 8KB} \\
                                        & Time          & Throughput    & Time          & Throughput\\
\hline
Original SBoxes                         & 23s           & 3.317 MB/s    & 13s           & 6.010 MB/s\\
Compact Last SBox                       & 24s	        & 3.179 MB/s    & 16s           & 4.883 MB/s\\
Scrambled Last SBox	                & 30s	        & 2.543 MB/s    & 20s           & 3.901 MB/s\\
Uncached Last SBox                      & 36s	        & 2.119 MB/s    & 26s           & 3.005 MB/s\\
Scrambled All SBoxes                    & 132s          & 0.578 MB/s    & 125s          & 0.625 MB/s\\
Uncached All SBoxes                     & 152s      	& 0.502 MB/s    & 145s          & 0.539 MB/s
\end{tabular}\vspace{3mm}\\
\caption{AES Encryption Benchmarks. AES encryption on Raspberry Pi 2 of one block (128 bits = 16 Bytes) and 512 blocks for different SBox layouts.} 
\label{table benchmark AES}
\end{table}

\subsection{Repelling Alias-Driven Attacks}\label{sec:normalize}
The countermeasures treated so far were aimed at restricting the behaviour of the attacker to prevent him from harvesting information from the cache channel or break
memory coherency in an attack on integrity. A different angle to the problem lies in focusing on the trusted victim process and ways it can protect itself against an unrestricted attacker
that is allowed to break memory coherency of its memory and run alias-driven cache attacks. The main idea to protect integrity against such attacks is to (re)establish 
coherency for all memory touched by the trusted process. For confidentiality, the idea is to adapt the code of the victim in a way that
its execution leaks no additional information to the attacker through the cache channel. Interestingly, many of the techniques described below are suitable for both purposes, neutralizing 
undesirable side effects of using the caches.

\subsubsection{Complete Cache Flush} One of the traditional means to tackle cache side channels is to flush all instruction and data caches before executing trusted code. In this way, 
all aliases in the cache are either written back to memory (in case they are dirty) or simply removed from the cache (in case they are clean). Any kind of priming of the caches by the 
attacker becomes ineffective since all his cache entries are evicted by the trusted process, foiling any subsequent probing attempts using addresses with mismatched cacheability. Similarly, 
all input data the victim reads from the attacker's memory are obtained from coherent main memory due to the flush, thus thwarting 
alias-driven attacks on integrity.

A possible correctness proof that flushing all caches eliminates the information side channel would rely on the assertion that, after the execution of the trusted service, an attacker
will always make the same observation using mismatched aliases, i.e., that all incoherent lines were evicted from the cache. Thus he cannot infer any additional knowledge from the cache
storage channel. Note, that here it suffices to flush the caches before returning to the attacker, but to protect against the integrity threat, data caches need to be flushed before any
input data from the attacker is read.

For performance evaluation the flushing approach was implemented in the AES and hypervisor examples. At each call of an AES encryption or hypervisor function, all data and instruction caches are flushed completely. Naturally this introduces an overhead for the execution of legitimate guest code due to an increased cache miss rate after calls to trusted processes. At the same time the trusted process gets slowed down for the same reason, if normally some of its data and instructions were still allocated in the caches from a previous call. Additionally the flushing itself is often expensive, e.g., for ARM processors the corresponding code has to traverse all cache lines in all ways and levels of cache to flush them individually. That all these overheads can add up to a sizeable delay of even one order of magnitude is clearly demonstrated by the benchmarks given in Tables \ref{table benchmark AES} and \ref{table benchmark hypervisor}.

\subsubsection{Cache Normalization}Instead of flushing, the victim can eliminate the cache information side channel by reading a sequence of memory cells so that the cache is brought into a known state. For instruction caches the same can be achieved by executing a sequence of jumps that are allocated at a set of memory locations mapping to the cache lines to be evicted. In the context of timing channels this process is called normalization. If subsequent memory accesses only hit the normalized cache lines, the attacker cannot observe the memory access pattern of the victim, because the victim always evicts the same lines. However the correctness of this approach strongly depends on the hardware platform used and the replacement policy of its caches. In case several memory accesses map to the same cache line the normalization process may in theory evict lines that were loaded previously. Therefore, in the verification a detailed cache model is needed to show that all memory accesses of the trusted service hit the cache ways touched during normalization.  

\subsubsection{Selective Eviction}The normalization method shows that cache side effects can be neutralized without evicting the whole
cache. In fact, it is enough to focus on selected cache lines that are critical for integrity or confidentiality. For example, the integrity threat on the hypervisor can be eliminated
by evicting the cache lines corresponding to the page table provided by the attacker. The flushing or normalization establishes  
memory coherency for the hypervisor's inputs, thus making sure it validates the right data. The method of selective flushing was implemented for the hypervisor scenario and benchmark
results in Table \ref{table benchmark hypervisor} show, as one would expect, that it is more efficient than flushing the whole cache, but still slower than our specialized ACPT solution. 

To ensure confidentiality in the AES example it suffices to evict the cache
lines occupied by the SBoxes. Since the 
incoherent entries placed in the same cache lines are removed by the victim using flushing or normalization, the attacker subsequently cannot measure key-dependent data accesses to
these cache lines. For the modular exponentiation example the same technique can be used, evicting only the lines in the instruction cache where the code of the functions
\verb|Mult| and \verb|ModReduce| is mapped. 

The correctness of selective eviction of lines for confidentiality depends on the fact that accesses to other lines do not leak secret information through the cache side channel, e.g., for the AES encryption algorithm lines that are not mapped to an SBox are accessed in every computation, independent of the value of the secret key. Clearly, this kind of trace property needs to be added as a verification condition on the code of the trusted service. Then the classic confidentiality property can be established, that observations of the attacker are the same in two computations where only the initial values of the secret are different (non-infiltration~\cite{Heitmeyer:2008:AFM:1340674.1340715}).

\subsubsection{Secret-Independent Memory Accesses}The last method of eliminating the cache information side channel is a special case of this approach. It aims to transform the victim's code such that it produces a memory access trace that is completely independent of the secret, both for data accesses and instruction fetches. Consequently, there is no need to modify the cache state set up by the attacker, it will be transformed in the same way even for different secret values, given the trusted service receives the same input parameters and all hidden states in the service or the cache model are part of the secret information.

As an example we have implemented a modification of AES suggested in \cite{AESattack}, where a 1KB SBox look-up table is scrambled in such a way that a look-up needs to touch all cache lines occupied by the SBox. In our implementation on Raspberry Pi 2 each L1 cache line consists of 64 Bytes, hence a 32bit entry is spread over 16 lines where each line contains two bits of the entry. While the decision which 2 bits from every line are used is depending on the secret AES key, the attacker only observes that the encryption touches the 16 cache lines occupied by the SBox, hence the key is not leaked.

Naturally the look-up becomes more expensive now
because a high number of bitfield and shift operations is required to reconstruct the original table entry.
For a single look-up, a single memory access is substituted by 16 memory accesses, 32 shifts,
16 additions and 32 bitfield operations.
The resulting overhead is roughly 50\% if only the last box is scrambled (see Table~\ref{table benchmark AES}). This is sufficient if all SBoxes are mapped to the same cache lines and the attacker cannot interrupt the trusted service, probing the intermediate cache state. Scrambling all SBoxes seems prohibitively expensive though, slowing the encryption down by an order of magnitude. However, since the number of operations depends on the number of lines used to store the SBox, if the system has bigger cache lines the countermeasure becomes cheaper.

\subsubsection{Reducing the Channel Bandwidth}

Finally for the AES example there is a countermeasure that does not completely eliminate the cache side channel, but makes it harder for the attacker to derive the secret key. The idea described in \cite{AESattack} is to use a more compact SBox that can be allocated on less lines, undoing an optimization in wolfSSL for the last round of AES. There the look-up only needs to retrieve one byte instead four, still the implementation word-aligns these bytes to avoid bit masking and shifting. By byte-aligning the entries again, the table shrinks by a factor of four, taking up four lines instead of 16 on Raspberry Pi 2. Since the attacker can distinguish less entries by the cache line they are allocated on, the channel leaks less information. This theory is confirmed in practice where retrieving the AES key required about eight times as many encryptions compared to the original one. At the same time, the added complexity resulted in a performance delay of roughly 23\% (see Table~\ref{table benchmark AES}).

\subsubsection{Detecting Memory Incoherency}
A reference monitor (e.g. the hypervisor) can counter 
the integrity threat by preventing the invocation of the 
critical functions (e.g. the MMU virtualization functions) if 
memory incoherency is detected.
The monitor can itself use mismatched cache attributes to
detect incoherency as follows.
For every address that is used as the input of a critical function, the
monitor checks if reading the location using the cacheable and
non-cacheable aliases yield the same result. If the two reads differs, then 
memory incoherency is detected and the monitor rejects the request, otherwise
then request is processed.

\subsection{Hardware Based Countermeasures}\label{sec:countermeasure:hw}
The cache-based storage channels rely on misbehaviour of the
system due to misconfigurations. For this reason, the hardware
could directly take care of them. The vector based on mismatched
cacheability attributes can be easily made ineffective if the processor does
not ignore unexpected cache hits. For example, if a physical address is written
using a non-cacheable alias, the processor can invalidate every line
having the corresponding tag. 
Virtually indexed caches are usually equipped with similar mechanisms
to guarantee that there can not be aliases inside the cache itself.

Hardware inhibition of the vector that uses the instruction cache 
can be achieved using a snooping mechanism that invalidates instruction cache lines whenever self-modification is detected, similar to what happens in x64 processors.
In architectures that perform weakly ordered memory accesses and aggressive speculative execution, implementing such a mechanism can become quite complex and make the out-of-order execution
logic more expensive. There is also a potential slow-down due to misspeculation when instructions are fetched before they are overwritten.

\vskip 10pt
Overall, the presented countermeasures show that a trusted service can be
efficiently secured against alias-driven cache attacks if two properties are
ensured: (1) for integrity, the trusted service may only
  accesses  coherent  memory (2) for confidentiality, the cache must be transformed in a way such
  that the attacker cannot observe memory accesses depending on
  secrets. In next section, a verification methodology presented that aims to prove these properties for the code of the trusted service.

\section{Verification Methodology}\label{sec:verification}
The attacks presented in Section~\ref{sec:experiments} demonstrate that the presence of caches can make a trustworthy, i.e. formally verified,
program vulnerable to both confidentiality and security threats. These vulnerabilities depend on the fact that for some resources (i.e.
some physical addresses of the memory) the actual system behaves
differently from what is predicted by the formal model:
we refer to this misbehaviour as ``loss of sequential consistency''.

As basis for the study we assume a low level program (e.g.~a hypervisor, a separation kernel, a security monitor, or a TrustZone crypto-service)
running on a commodity CPU such as the ARMv7 Cortex A7 of Raspberry Pi 2. We refer to the trusted program as ``the kernel''. 
The kernel shares the system with an untrusted application, henceforth ``the application''. We assume that the kernel has been
subject to a pervasive formal verification that established its functional correctness and isolation properties using a model that reflects the ARMv7 ISA
specification to some level of granularity. For instance for both seL4 and the Prosper kernel the
processor model is based on Anthony Fox's cacheless L3 model of ARMv7~\footnote{In case of Prosper, augmented with a detailed model of the MMU \cite{SOFSEM}.}.

We identify two special classes of system resources (read: Memory locations):
\begin{itemize}
\item Critical resources: These are the resources whose integrity must be protected, but which the 
application needs access to for its correct operation.
\item Confidential resources: These are the resources that should be read protected against the
application.
\end{itemize}
There may in addition be resources that are both critical and confidential. We call those \textit{internal} resources.
Examples of critical resources are the page tables of a hypervisor,
the executable code of the untrusted software in a run-time monitor,
and in general the resources used by the invariants needed for the verification of
functional correctness.
Confidential (internal) resources can be cryptographic keys, internal kernel data structures, or the memory of a guest colocated with the application.

The goal is to repair the formal analysis of the kernel, reusing as much as possible of the prior analysis.
In particular, our goals are:
\begin{enumerate}
\item To demonstrate that critical and internal resources
cannot be directly affected by the application and that for these resources the actual
system behaves according to the formal specification (i.e.~that 
sequential consistency is preserved and the integrity attacks
described in Section~\ref{sec:attack:int:dcache} cannot succeed).
\item To guarantee that no side channel is present due to caches,
i.e.~that the real system exposes all and only the channels that are present
in the formal functional specification that have been used to verify
the kernel using the formal model.
\end{enumerate}

\subsection{Repairing the Integrity Verification}\label{sec:ver:integrity}
For simplicity, we assume that the kernel accesses all resources
using cacheable virtual addresses.
To preserve integrity we must ensure two properties:
\begin{itemize}
\item That an address belonging to a critical resource cannot be directly or indirectly modified by the
application. 
\item Sequential consistency of the kernel.
\end{itemize}
 The latter property is equivalent to guaranteeing that what
is observed in presence of caches is exactly what is predicted by the
ISA specification.

The verification depends on a system invariant that must be preserved
by all executions:
For every address that belongs to the critical and internal resources,
if there is a cache hit and the corresponding cache line
differs from the main memory then the cache line must be dirty.
The mechanism used to establish this invariant depends on the
specific countermeasure used.
It is obvious that if the caches are disabled
(Section~\ref{sec:count:disabled}) the invariant holds, since the
caches are always empty.
In the case of ``Always Cacheable Memory''
(Section~\ref{sec:count:seqconst}) the invariant is preserved because
no non-cacheable alias is used to access these resources: the
content of the cache can differ from the content of the memory only
due to a memory update that changed the cache, thus the corresponding
cache line is dirty.
Similar arguments apply to the $C\oplus U$ Policy, taking into account
that the cache is cleaned whenever a resource type switch from cacheable ($C$) to
uncacheable ($U$) and vice versa.

  More complex reasoning is necessary for other countermeasures, where the
  attacker can build uncacheable aliases in its own memory. In this case we know that the
system is configured so that the application cannot write the critical resources, since otherwise the integrity
property cannot be established for the formal model in the first place. Thus, 
if the cache contains critical or internal data different from main memory it must have been written there by the kernel that only uses cacheable memory only, hence the line is dirty as well.

To show that a physical address $pa$ belonging to a critical resource cannot
not be directly or indirectly modified by the application we proceed
as follows.
By the assumed formal verification, the application has
no direct writable access to $pa$, otherwise the integrity
property would not have been established at the ISA level.
Then, the untrusted application can not directly update $pa$ neither in
the cache nor in the memory. The mechanism that can be used to indirectly update the
view of the kernel of the address $pa$ consists in evicting a cache line that has
a value for $pa$ different from the one stored in the memory and that
is not dirty. However, this case is prevented by the new invariant.

Proving that sequential consistency of the kernel is preserved is
trivial: The kernel always uses cacheable addresses so it is unable to break
the new invariant: a memory write always updates the cache line if
there is a cache hit.

\subsection{Repairing the Confidentiality Verification}
Section~\ref{sec:attacks} demonstrates the capabilities of the
attacker: Additionally to the resources that can be accessed in the
formal model (registers, memory locations access to which is granted
by the MMU configuration, etc) the attacker is able to measure which cache
lines are evicted.
Then the attacker can (indirectly) observe all the resources
that can affect the eviction. Identifying this set of resources is
critical to identify the constraints that must be satisfied by the
trusted kernel. For this reason, approximating this set
(e.g. by making the entire cache observable) can strongly reduce the freedom of the trusted code. 
A more refined (still conservative) analysis considers observable by
the attacker the cache line tag\footnote{On direct mapped caches, we can disregard the line tag, because they contain only one way for each line. In order to observe the tags of addresses accessed by the kernel, the attacker requires at least two ways per cache line: one that contains an address accessible by the kernel and one that the attacker can prime in order to measure whether the first line has been accessed.} 
  and whether a cache line is empty (cache line emptiness). Then to guarantee confidentiality it is necessary to ensure that, while the application is executing, the cache line tag and emptiness never depend on the confidential resources.
We stress that this is a sufficient condition to guarantee that
no additional information is leaked due to presence of caches with respect to
the formal model

Showing that the condition is met by execution of the application
is trivial. By the assumed formal verification we already know that the
application has no direct read access (e.g.~through a virtual memory mapping)
to confidential resources. On the other hand, the kernel is able to access these
resources, for example to perform encryption. The goal is to show that the
caches do not introduce any channel that has not been taken into
account at the level of the formal model. Due to the overapproximation described above,
this task is reduced to a ``cache-state non-interference property'', i.e.
showing that if an arbitrary functionality of the kernel is executed then the cache line
emptiness and the line tags in the final state do not depend on confidential data.

The analysis of this last verification condition depends on the countermeasure used by the kernel. If the kernel always terminates
with caches empty, then
the non-interference property trivially holds, since a constant value
can not carry any sensible information.
This is the case if the kernel always flushes the caches before
exiting, never use cacheable aliases (for both program counter and
memory accesses) or the caches are completely disabled.

In other cases (e.g. ``Secret-Independent Memory Accesses'' and ``Selective
Eviction'') the verification condition is further decomposed to two
tasks:
\begin{enumerate}
 \item Showing that starting from two states that
   have the same cache states, if
   two programs access at the same time the same memory locations then the final states have the same
   cache states.
 \item Showing that the sequence of memory accesses performed by the
   kernel only depends on values that are not confidential.
\end{enumerate}
The first property is purely architectural and thus independent of the
kernel. Hereafter we summarise the reasoning for a system with a
single level of caches, with separated instruction and data caches and
whose caches are physically indexed and physically tagged (e.g. the L1
memory subsystem of ARMv7 CPUs).
We use
$\ArmMachineVar_1,\ArmMachineVar'_1,\ArmMachineVar_2,\ArmMachineVar'_2$
to range over machine 
states and $\ArmMachineVar_1 \to \ArmMachineVar'_1$ to represent the
execution of a single instruction.
From an execution $\ArmMachineVar_1 \to \ArmMachineVar_2 \dots \to
\ArmMachineVar_n$ we define two projections:
$\pi_I(\ArmMachineVar_1 \to \ArmMachineVar_2 \dots \to
\ArmMachineVar_n)$ is the list of encountered program counters and
$\pi_D(\ArmMachineVar_1 \to \ArmMachineVar_2 \dots \to
\ArmMachineVar_n)$ is the list of executed memory operations (type of
operation and physical address).
We define $\RelP$ as the biggest relation such that if $\ArmMachineVar_1 \RelP \ArmMachineVar_2$ then
for both data and instruction cache
\begin{itemize}
  \item a line in the cache of $\ArmMachineVar_1$ is empty if and only
    if the same line is empty in $\ArmMachineVar_2$, and
  \item the caches of $\ArmMachineVar_1$ and $\ArmMachineVar_2$ have the
    same tags for every line.
\end{itemize}
The predicate $\RelP$ is preserved by executions
$\ArmMachineVar_1 \to \dots$ and $\ArmMachineVar_2 \to \dots$
if the corresponding projections are \emph{cache safe}:
\begin{inparaenum}[(i)]
\item the instruction tag and index of $\pi_I(\ArmMachineVar_1
  \to \dots)[i]$ is equal to 
  the instruction tag and index of $\pi_I(\ArmMachineVar_2
  \to \dots)[i]$
\item if  $\pi_D(\ArmMachineVar_1
  \to \dots)[i]$ is a read (write) then 
  $\pi_D(\ArmMachineVar_2  \to \dots)[i]$ is a read (write)
\item the cache line tag and index of the address in $\pi_D(\ArmMachineVar_1
  \to \dots)[i]$ is equal to 
  the cache line tag and index of the address in $\pi_D(\ArmMachineVar_2
  \to \dots)[i]$
\end{inparaenum}

Consider the example in Figure~\ref{fig:conf1a}, where 
 $va_3$ and $va_4$ are different addresses. In our
current setting this is secure only if $va_3$ and $va_4$ share the same
data cache index and tag (but they could point to different positions within a
line). Similarly, 
the example in Figure~\ref{fig:conf2}
is secure only if the addresses of both targets of the conditional
branch have the same instruction cache index and tag.
Notice that these conditions are less restrictive than the ones
imposed by the program counter security model.
Moreover, these restrictions dot not forbid completely data-dependent
look-up tables.
For example, the scrambled implementation of AES presented In
Section~\ref{sec:normalize} satisfies the rules that we identified
even if it uses data-dependent look-up tables.

In practice, to show that the trusted code satisfies the cache safety policy,
we rely on a relational observation equivalence and we use
existing tools for relational verification that
support trace based observations. In our experiments we adapted the
tool presented in~\cite{DBLP:conf/ccs/BalliuDG14}.
The tool executes two analyses of the code.
The first analysis handles the instruction cache:
we make every instruction observable and we
require that the matched instructions have the same
set index and tag for the program counter. The second 
analysis handles the data cache:
we make every memory access an
observation and we require that the matched memory accesses use the
same set index and tag (originally the tool considered observable only memory
writes and required that the matched memory writes access the same
address and store the same value).
Note that the computation of set index and tag are platform-dependent, thus when porting the same verified code to a processor, whose caches use a different method for indexing lines, the code might not be cache safe anymore.
To demonstrate the feasibility of our approach we applied the tool to
one functionality of the hypervisor described in
Section~\ref{sec:impl:hyper}, which is implemented by 60 lines of
assembly and whose analysis required $183$ seconds.


\section{Related Work}\label{sec:related}
Kocher~\cite{Kocher:1996:TAI:646761.706156} and Kelsey et
al.~\cite{Kelsey:2000:SCC:1297828.1297833} were the  first  to  demonstrate
cache-based side-channels. 
They showed that these channels contain enough information to
enable an attacker to extract the secret key of cryptographic
algorithms.
Later, Page formally studied cache
side-channels and showed how one can use them to attack 
na\"{\i}ve implementations of the DES cryptosystem~\cite{Page02theoreticaluse}.
Among the existing cache attacks, the trace-driven and  access-driven
attacks are the most closely related to this paper since
they can be reproduced using the vectors presented in Section~\ref{sec:attacks}.

In trace-driven attacks~\cite{Page02theoreticaluse} an adversary
profiles the cache activities while the victim is executed.
Ac{\i}i\c{c}mez showed
a trace-driven cache attack on the first two rounds of AES~\cite{Aciicmez:2006:TCA:2092880.2092891}, which
has been later improved and extended by X. Zhao~\cite{eprint-2010-22957}
to compromise a CLEFIA block cipher. 
A similar result is reported in~\cite{Bertoni:2005:APA:1058430.1059131}. 
In an access-driven, or Prime+Probe, attack the adversary can determine the cache sets
modified by the victim. 
In several papers this technique is used to compromise real
cryptographic algorithms like RSA~\cite{percival2005cache,InciGAES15} and
AES~\cite{Gullasch:2011:CGB:2006077.2006784,
  Neve:2006:AAC:1756516.1756531,
  Tromer:2010:ECA:1713125.1713127}.

Due to the security concerns related to cache channels,
research on the security implications of shared caches has
so far been focusing on padding~\cite{ZhangAM12} and
mitigation~\cite{Agat00} techniques to address timing channels.
Notably, Godfrey and Zulkernine have proposed efficient host-based
solutions to close timing channels through selective flushing and
cache partitioning~\cite{GodfreyZ14}.
In the \textsc{StealthMem}
approach~\cite{Kim:2012:SSP:2362793.2362804}
each guest is given exclusive access to a small portion
of the shared cache for its security critical computations. By
ensuring that this stealth memory is always allocated in the cache, no
timing differences are observable to an attacker.

In literature, few works investigated cache based storage channels.
In fact, all implementations of the above attacks use timing channels as 
the attack vector.
Brumley~\cite{Brumley15} recently conjectured 
 the existence of a storage channel that can be implemented using cache debug
functionality on some ARM embedded microprocessors. However, the ARM
technical specification~\cite{Cortexa7}  explicitly states that
such debug instructions can be executed only by privileged software
in TrustZone, making practically 
impossible for an attacker to access them with the exception of
a faulty hardware implementation.

The attack based on mismatched cacheability attributes opens up for
TOCTTOU like vulnerabilities.
Watson \cite{Watson07} demonstrated this vulnerability for Linux system call wrappers.
A similar approach is used in~\cite{Bratus:2008:TTT:1422929.1422932} to invalidate security
guarantees, attestation of a platform's software, provided by a Trusted Platform Module (TPM).
TPM takes  integrity measurements only before software is loaded into the memory, and it assumes that once the software is loaded 
it remains unchanged. However, this assumption is not met if the
attacker can indirectly change the software before is used. 

Cache-related architectural problems have been exploited before to 
bypass memory protection.
In~\cite{wojtczuk2009attacking,duflot2009getting} the authors use a
weakness of some Intel x86 implementations to 
bypass SMRAM protection and execute 
malicious code in 
System Management Mode (SMM).
The attack relies on the fact that the SMRAM protection is implemented by the
memory controller, which is external to the CPU cache. 
A malicious operating system first marks the SMRAM memory region as cacheable and write-back,
then it writes to the physical addresses of the SMRAM.
Since the cache is unaware of the SMRAM
configuration, the writes
are cached and do not raise exceptions.
When the execution is transferred to SMM, the
CPU fetches the instructions from the
poisoned cache.
While this work shows similarities to the integrity threat posed by cache
storage channels, the above attack is specific to certain Intel implementations
and targets only the highest security level of x86.
On ARM, the cache keeps track which lines have been filled 
due to accesses performed by TrustZone SW.
The TrustZone SW can configure via its page tables the memory regions that
are considered ``secure'' (e.g. where its code and internal data structure are
stored). A TrustZone access to a secure memory location can hit a cache line
only if it belongs to TrustZone.

The attack vectors for data caches presented in this paper abuse undefined behaviour in the ISA specification (i.e., accessing the same memory address with different cacheability types)
and deterministic behaviour of the underlying hardware (i.e., that non-cacheable accesses completely bypass the data caches and unexpected cache hits are ignored). While we focused on an 
ARMv7 processor here, there is a strong suspicion that other architectures exhibit similar behaviour. In fact, in experiments we succeeded to replicate the behaviour of the memory subsystem 
on an ARMv8 processor (Cortex-A53), i.e., uncacheable accesses do not hit valid entries in the data cache. For Intel x64, the reference manual states that memory 
type aliases using the page tables and page attribute table (PAT) ``may lead to undefined operations that can result in a system failure'' (\cite{Intel}, Vol.~3, 11.12.4). It is also explicitly 
stated that the accesses using the (non-cacheable) WC memory type may not check the caches. Hence, a similar behaviour as on ARM processors should be expected. On the other hand, some Intel
processors provide a self-snooping mechanism to support changing the cacheability type of pages without requiring cache flushes. It seems to be similar in effect as the hardware countermeasure
suggested in Section~\ref{sec:countermeasure:hw}. In the Power ISA manual (\cite{Power}, 5.8.2), memory types are assumed to be unique for all aliases of a given address. Nevertheless this is 
a software condition that is not enforced by the architecture. When changing the storage control bits in page table entries the programmer is required to flush the caches. This also hints 
to the point that no hardware mechanisms are mandated to handle unexpected cache hits. 

Recently, several works successfully verified low level execution platforms
that provide trustworthy mechanisms to isolate commodity software.
In this context caches are mostly excluded from the analysis. An exception is
the work by Barthe et al. \cite{DBLP:conf/types/BartheBCCL13} that provide an abstract
model of cache behaviour sufficient to replicate various timing-based exploits
and countermeasures from the literature such as \textsc{StealthMEM}.

The verification of seL4 assumes that caches are correctly
handled~\cite{KleinAEMSKH14} and ignores timing channels.  The
bandwidth of timing channels in seL4 and possible countermeasures were
examined by Cock et al~\cite{CockGMH14}. While
storage based channels have not been addressed, integrity of the
kernel seems in practice to be preserved by the fact that system call
arguments are passed through registers only.

The VerisoftXT project targeted the verification of
Microsoft Hyper~V and a semantic stack was devised to underpin the
code verification with the VCC tool \cite{CohenPS13}. Guests are
modelled as full x64 machines where caches cannot be made transparent
if the same address is accessed in cacheable and uncacheable mode,
however no implications on security have been discussed. Since the
hypervisor uses a shadow page algorithm, where guest translations are
concatenated with a secure host translation, the integrity properties do
not seem to be jeopardised by any actions of the guest.

Similarly the Nova \cite{SteinbergK10,TewsVW09} and CertiKOS
\cite{GuVFSC11} microvisors do not consider caches in their formal
analysis, but they use hardware which supports a second level address
translation which is controlled by the host and cannot be affected by
the guest. Nevertheless the CertiKOS system keeps a partition
management software in a separate partition that can be contacted by
other guests via IPC to request access to resources. This IPC
interface is clearly a possible target for attacks using uncacheable
aliases.

In any case all of the aforementioned systems seem to be
vulnerable to cache storage channel information leakage, assuming they
allow the guest systems to set up uncacheable memory mappings. In
order to be sound, any proof of information flow properties then needs
to take the caches into account. In this paper we show for the first
time how to conduct such a non-interference proof that treats also possible
data cache storage channels.



\section{Concluding Remarks}
We presented novel cache based attack vectors that use storage
channels and we demonstrated their usage to threaten integrity and
confidentiality of real software.
To the best of our knowledge, it is the first time that cache-based
storage channels are demonstrated on commodity hardware. 

The new attack vectors partially invalidate the results of formal
verification performed at the 
ISA level. In fact, using storage-channels, the adversary can extract
information without accessing variables that are external to the ISA
specification. This is not the case for timing attacks and power
consumption attacks. 
For this reason it is important to provide methodologies to fix the
existing verification efforts.
We show that for some of the existing countermeasures this task can be
reduced to checking relational observation equivalence. To make this
analysis practical, we adapted an existing tool~\cite{DBLP:conf/ccs/BalliuDG14}
to check the conditions that are sufficient to prevent information
leakage due to the new cache-channels. In general, the additional checks in the code verification need to be complemented by a correctness proof of a given countermeasure on a 
suitable cache model. In particular it has to be shown that memory coherency for the verified code is preserved by the countermeasure and that an attacker cannot observe sensitive 
information even if it can create non-cacheable aliases.

The attack presented in Section~\ref{sec:attack:int:dcache} 
raises particular concerns, since it poses integrity threats that
cannot be carried out using timing channels.
The possible victims of such an attack are systems where
the ownership of memory is transferred from the untrusted agent 
to the trusted one and where the trusted agent checks the content of this memory
before using it as parameter of a critical function.
After that the ownership is
transferred, if the cache is not clean, the trusted agent may validate stale
input while the critical function uses different data.
The practice of transferring ownership between security domains is usually
employed to reduce memory copies and is used for example by
hypervisors that use direct paging, run-time monitors
that inspect executable code to prevent execution of malware, as well as reference monitors
that inspect the content of IP packets or validate requests for device drivers.

There are several issues we leave out as future work.
We did not provide a mechanism to check the security of some
of the countermeasures like Cache Normalisation and we did not
apply the methodology that we described to a complete software.
Moreover, the channels that we identified probably do not cover all the existing
storage side channels. Branch prediction, TLBs, sharebility attributes
are all architectural details that, if misconfigured,
can lead to behaviours that are inconsistent with the ISA
specification. If the adversary is capable of configuring these
resources, like in virtualized environments, it is important to
identify under which conditions the trusted software preserves its security
properties. 

From a practical point of view, we focused our experiments on
exploiting the L1 cache. For example, to 
extract the secret key of the AES service on Raspberry Pi 2
we have been forced to flush and clean the L2 cache. The reason is
that on this platform the L2 cache is shared with the GPU and we have little to no knowledge
about the memory accesses it performs. On the other hand, shared L2
caches open to the experimentation with concurrent attacks, where the attacker 
can use a shader executed on the GPU.
Similarly, here we only treated cache channels on a
  single processor core.
Nevertheless the same channels can be built in a multi-core settings
using the shared caches (e.g. L2 on Raspberry Pi 2). The new vectors can then be
used to
replicate known timing attacks on shared caches
(e.g.~\cite{InciGAES15}).


\chapter{Formal Analysis of Countermeasures against Cache Storage Side Channels}\label{paper:csf}
\chaptermark{Formal Analysis of Countermeasures}
\backgroundsetup{position={current page.north east},vshift=1cm,hshift=-9cm,contents={\VerBar{OrangeRed}{2cm}}}
\BgThispage

\begin{center}
Hamed Nemati, Roberto Guanciale, Christoph Baumann, Mads Dam
\end{center}

\begin{abstract}
Formal verification of systems-level software such as 
hypervisors and operating systems can enhance system trustworthiness. However,
without taking  low level features like caches into account
the verification may become unsound. While this is a well-known fact w.r.t.~timing leaks, few works have addressed latent cache storage side-channels.
We present a verification methodology to analyse 
soundness of countermeasures used to neutralise cache storage channels.
We apply the proposed methodology to existing countermeasures,
showing that they allow to restore integrity and
prove confidentiality of the system.
We decompose the proof effort into verification conditions that allow for an
easy adaption of our strategy to various software and hardware platforms.
As case study, we extend the verification
of an existing hypervisor whose integrity can be tampered
using cache storage channels.
We used the HOL4 theorem prover
to validate our security analysis,
applying the verification methodology to formal models of ARMv7
and ARMv8.
\end{abstract}

\newenvironment{prove}[1]{\noindent { {\textit{Proof}}\textit{#1.}~}}{\hfill $\square$} \let\proof\relax\let\endproof\relax

\newtheorem*{definition*}{Definition}
\newtheorem{prop}[theorem]{Proof Obligation}
\newtheorem{assumption}[theorem]{Assumption}

\usetikzlibrary{calc,decorations.pathmorphing,shapes,arrows}
\newcommand\xrsquigarrow[1]{%
\mathrel{%
\begin{tikzpicture}[baseline= {( $ (current bounding box.south) + (0,-0.5ex) $ )}]
  \node[inner sep=.5ex] (a) {$\scriptstyle #1$};
  \path[draw,implies-,double distance between line centers=1.5pt,decorate,
    decoration={zigzag,amplitude=0.7pt,segment length=1.2mm,pre=lineto,
    pre   length=4pt}] 
    (a.south east) -- (a.south west);
\end{tikzpicture}}%
}

\tikzset{decorate sep/.style 2 args=
{decorate,decoration={shape backgrounds,shape=circle,shape size=#1,shape sep=#2}}}
\newcommand*{\DashedArrow}{\mathbin{\tikz 
\draw [->,
line join=round,
decorate, decoration={
    zigzag,
    segment length=3,
    amplitude=1.2,post=lineto,
    post length=2pt
}]  (0,0.8ex) -- (1.0em,0.8ex);
}}
\newcommand{\WkTrs}[1]{\DashedArrow_{#1}}
\newcommand{\WTrs}[1]{\rightarrow^{*}_{#1}}
\newcommand{\WTrsN}[2]{\rightarrow^{#2}_{#1}}
\newcommand{\implc}{\Rightarrow}

\newcommand{\emptymsg}{\perp}
\newcommand{\idx}{id}
\newcommand*\midpoint[1]{\widebar{#1}}
\newcommand{\Exe}{\mathit{Cv}}
\newcommand{\altExe}{\mathit{Mv}}
\newcommand{\Trans}[1]{\xrightarrow{}}

\newcommand{\Act}{\mathbb{A}}
\newcommand{\Csh}{\mathbb{C}}
\renewcommand{\L}{\mathbb{L}}
\newcommand{\D}{\mathbb{D}}
\newcommand{\His}{\Act^\ast}
\newcommand{\his}{\mathit{H}}
\newcommand{\fhis}[2]{\his(#1,#2)}
\newcommand{\Chis}{\mathit{hist}}
\newcommand{\SL}{\mathit{slice}}
\newcommand{\M}{\mathbb{M}}
\newcommand{\N}{\mathbb{N}}
\newcommand{\B}{\mathbb{B}}
\newcommand{\T}{\B^{T}} 
\newcommand{\I}{\mathbb{B}^N}
\renewcommand{\S}{\mathbb{SL}}
\newcommand{\VA}{\mathbb{VA}}
\newcommand{\PA}{\mathbb{PA}}
\newcommand{\regId}{\mathbb{RG}}

\newcommand{\RegionUser}{\mathbf{G_m}}
\newcommand{\RegionKernel}{\mathbf{K_m}}
\newcommand{\RegionKernelV}{\mathbf{K_{vm}}}

\newcommand{\SimR}{\mathrel{\mathcal{R}_{\mathit{sim}}}}
\newcommand{\BsimR}{\mathrel{\mathcal{R}_{\tiny {sim+\oracle}}}}
\newcommand{\exmode}{\mathit{m}}
\newcommand{\UMode}{\mathit{U}}
\newcommand{\TMode}{\mathit{P}}
\newcommand{\Modecnt}{\mathit{Mode}}
\newcommand{\Mode}[1]{\Modecnt(#1)}
\newcommand{\callConv}[1]{\textit{ex-entry}(#1)}
\newcommand{\Critical}{\mathit{Ext}}
\newcommand{\SeqObs}{\mathcal{O}}
\newcommand{\RealObs}{\mathcal{O}}
\newcommand{\Obs}{\mathcal{O}}
\newcommand{\ObsEq}[1]{\sim_{#1}}
\newcommand{\cmeq}{\approx}
\newcommand{\cmstep}{\mathrel{\hat{\rightarrow}}}
\newcommand{\cmsteps}{\mathrel{\hat{\DashedArrow}}}
\newcommand{\Obscoh}{\mathit{coh}}

\newcommand{\resource}{r}
\newcommand{\resSpc}{\mathbb{R}}
\newcommand{\seclevel}{\ell}
\newcommand{\acc}{\mathit{acc}}
\newcommand{\cl}{\mathit{cl}}
\newcommand{\RealMonitor}{\mathit{Mon}}
\newcommand{\RealMMU}{\mathit{Mmu}}
\newcommand{\RealMonInt}{\mathit{\bar I}_{\prod}}
\newcommand{\RealMonIntSW}{\mathit{\bar I}}
\newcommand{\RealMonIntSWPriv}{\overline{\mathit{II}}}
\newcommand{\RealState}{\bar \sigma}
\newcommand{\RealSpace}{\bar \Sigma}
\newcommand{\RealTrs}[1]{\rightarrow_{#1}}
\newcommand{\touched}[1]{\ [{#1}]}
\newcommand{\TlsTrs}[1]{\rightarrow_{#1}}

\newcommand{\SeqMonInt}{\mathcal{I}_{\prod}}
\newcommand{\SeqMonIntSW}{\mathit{I}}
\newcommand{\SeqMonIntSWPriv}{\mathit{II}}
\newcommand{\SeqSpace}{\Sigma}
\newcommand{\SeqTrs}[1]{\rightarrow_{#1}}
\newcommand{\histev}{\mathit{h}}
\newcommand{\inst}{dop}

\newcommand\defequiv{\mathrel{\overset{\makebox[0pt]{\mbox{\normalfont\tiny\sffamily def}}}{\equiv}}}
\newcommand{\Preg}{\mathit{M_{ac}}}

\newcommand{\cachetrans}[1]{\mathrel{\rightarrowtail_{#1}}}
\newcommand{\oracle}{\tiny \varOmega}
\newcommand{\oracleTrs}{\rightarrowtail}
\newcommand{\orclsp} {\mathbb{O}}
\newcommand{\drvabl}{\rhd}
\newcommand{\coh}{\mathit{Coh}}
\newcommand{\class}{\mathbb{T}}
\newcommand{\TypeDom}{\Delta}
\newcommand{\clevel}{\mathit{Type}}

\newcommand{\id}{\mathit {id}}
\newcommand{\cache}{\mathit {cache}}
\newcommand{\ureg}{\mathit {reg}}
\newcommand{\breg}{\mathit {breg}}
\newcommand{\cop}{\mathit {coreg}}
\newcommand{\psrsold}{\mathit {psrs}}
\newcommand{\pc}{\mathit {pc}}
\newcommand{\stkp}{\mathit {sp}}
\newcommand{\lr}{\mathit {lr}}
\newcommand{\cpsr}{\mathit {cpsr}}
\newcommand{\psr}{\mathit {psr}}
\newcommand{\domCritc}{\mathit{CR}}
\newcommand{\domConf}{\mathit{CO}}

\newcommand{\critc}  {\mathrm{cr}}
\newcommand{\pscritc}{\mathrm{sr}}
\newcommand{\user}   {\mathrm{ur}}
\newcommand{\swcritc}{\mathrm{scr}}
\newcommand{\smem}   {\mathrm{mem}}
\newcommand{\scache} {\mathrm{cache}}

\newcommand{\slc}{\mathit{SL}}
\newcommand{\fslc}[2]{\slc(#1,#2)}
\mathchardef\mhyphen="2D 
\newcommand{\fdirty}[2]{\mathit{c}\mhyphen\mathit{dirty}(#1,#2)}
\newcommand{\fhit}[2]{\mathit{c}\mhyphen\mathit{hit}(#1,#2)}
\newcommand{\cdSl}[2]{\mathit{c}\mhyphen\mathit{cnt}({#1},{#2})}
\newcommand{\way}[2]{\mathit{W}(#1,#2)}
\newcommand{\fway}[3]{\mathit{W}(#1,#2,#3)}
\newcommand{\si}    {\mathit {si}}
\newcommand{\tags}  {\mathit {tag}}
\newcommand{\widx}  {\mathit {widx}}
\newcommand{\ftouch}{\mathit {touch}}
\newcommand{\flfill}{\mathit {lfill}}
\newcommand{\fevict}{\mathit {evict}}
\newcommand{\fwriba}{\mathit {wriba}}
\newcommand{\falias}{\mathit {alias}}
\newcommand{\invba} {\mathit {invba}}
\newcommand{\clnba} {\mathit {clnba}}
\newcommand{\ffill} {\mathit {fill}}
\newcommand{\alloc} {\mathit{alloc}}
\newcommand{\fread} {\mathit {read}}
\newcommand{\fwrite}{\mathit {write}}
\newcommand{\touch} {\mathrm{touch}}
\newcommand{\evict} {\mathrm{evict}}
\newcommand{\lfill} {\mathrm{lfill}}
\newcommand{\cread} {\mathrm{read}}
\newcommand{\cwrite}{\mathrm{write}}
\newcommand{\domres}[2]{\mathit{Dom}(#1, {#2})}
\newcommand{\tagstate}{\mathit{ts}}
\newcommand{\action}{\mathit{act}}
\newcommand{\filter}{\mathcal{\varphi}}
\newcommand{\push}{\mathit{push}}
\newcommand{\pop}{\mathit{pop}}
\newcommand{\tail}{\mathit{back}}
\newcommand{\qu}{\mathbb{Q}}
\newcommand{\po}{Obligation}
\newcommand{\wt}{\mathit{wt}}
\newcommand{\rdd}{\mathit{rd}}

\newcommand{\Conf}{\domConf}

\newcommand*{\Scale}[2][4]{\scalebox{#1}{$#2$}}%
\newcommand{\fsize}[1]{\Scale[0.84]{#1}}
\newcommand{\ssize}[1]{\Scale[1]{#1}}
\newcommand{\monDom}{\mathit{MD}}
\newcommand{\NC}{\mathit{NC}}
\newcommand{\CR}{\domCritc}
\newcommand{\cohR}{\mathit{cohR}}
\newcommand{\CM}{\mathit{CM}}
\newcommand{\rCM}{\overline{\mathit{CM}}}
\newcommand{\CRx}{\CR_\ex}
\newcommand{\col}{\kappa}
\newcommand{\AL}{L}


\section{Introduction}\label{sec:intor}
Formal verification of low-level software such as microkernels, hypervisors, and drivers has made big strides in recent
years~\cite{sel4,AlkassarHLSST09,WildingGRH10,dam2013formal,Heitmeyer:2006:FSV:1180405.1180448,zhao2011armor,SteinbergK10,GuVFSC11}.
We appear to be approaching the point where the promise of provably secure, practical system software is becoming a reality. 
However, existing verification uses models that are far simpler than contemporary state-of-the-art hardware. Many features pose significant challenges: Memory models, pipelines, speculation, out-of-order execution, peripherals, and various coprocessors, for instance for system management. In a security context, caches are notorious. They have been known for years to give rise to timing side channels that are difficult to fully counteract \cite{2016arXiv161204474G}.
Also, cache management is closely tied to memory management, which---since it governs memory mapping, access control, and cache configuration through page-tables residing in memory---is one of the most complex and 
 security-critical components in the computer architecture flora.  

Computer architects strive to hide this complexity from application programmers, but system software, device drivers, and high-performance software, for which tuning of cache usage is critical, need explicit control over 
features like cacheability attributes. In virtualization scenarios, for instance, it is critical for performance to be able to delegate cache management authority for pages belonging to a guest OS to the guest itself.
With such a delegated authority a guest is free to configure
its share of the memory system as it wishes, including configurations that may break conventions normally expected for a well-behaved OS. For instance, a guest OS will usually be able to create memory aliases and to
set cacheability attributes as it wishes. Put together, these capabilities can, however, give rise to memory incoherence, since the same physical address can now be pointed to by two virtual 
addresses, one to cache and one to memory. This opens up for cache storage
attacks on both confidentiality and integrity, as was shown in~\cite{DBLP:conf/sp/GuancialeNBD16}.  Similarly to cache timing channels 
that use variations in execution time to discover hardware hidden state, storage channels use aliasing to profile cache activities and to attack system confidentiality. However, while the timing channels are external to models
used for formal analysis and do not invalidate verification of integrity
properties, storage channels simply make the models unsound: Using them for security analysis can lead to conclusions that are false.

This shows the need to develop verification frameworks for low-level system
software that are able to adequately reflect the presence of caches. 
It is particularly desirable if this can be done in a manner that allows
to reuse existing  verification tools on simpler models that do not consider
caches. This is the goal we set ourselves in this paper. 
We augment an existing cacheless model by adding a cache and accompanying cache
management functionality in MMU and page-tables.
We use this augmented model to derive proof obligations that can be imposed to ensure absence of both integrity and confidentiality
attacks. 
This provides a verification framework that we use to analyse soundness of
countermeasures.
The countermeasures are formally modelled as new
proof obligations that can be analysed on the cacheless model to
ensure absence of vulnerabilities due to cache storage channels. 
Since these obligations can be verified using the cacheless model, existing
tools~\cite{DBLP:conf/ccs/BalliuDG14,Brumley:2011:BBA:2032305.2032342,Song:2008:BNA:1496255.1496257}
(mostly not available on a cache enabled model) can automate this task  to a large extent.
We then complete the paper by repairing the verification of 
an existing and vulnerable
hypervisor~\cite{DBLP:journals/jcs/GuancialeNDB16}, sketching
how the derived proof obligations are discharged.


\section{Related Work}\label{sec:related}
\textbf{\itshape Formal Verification}
Existing work on formal verification do not 
takes into account cache storage channels.
The verification of seL4 assumes a sequential memory model and leaves cache
issues to be managed by means external to model~\cite{DBLP:conf/sosp/KleinEHACDEEKNSTW09, KleinAEMSKH14}.
Cock et al~\cite{CockGMH14} examined the bandwidth of timing channels in seL4 and possible countermeasures including cache coloring. 
The verification of the Prosper kernel ~\cite{dam2013formal, DBLP:journals/jcs/GuancialeNDB16} assumes that caches are invisible and correctly handled. 
Similarly Barthe et al.~\cite{DBLP:conf/fm/BartheBCL11} ignores caches for the verification of an isolation property for an idealised hypervisor.
Later, in~\cite{Barthe:2012:CRO:2354412.2355248} the authors extended the model
to include an abstract account of caches and verified that timing channels
are neutralised by cache flushing.
The CVM framework ~\cite{AlkassarHLSST09} treats caches only in the context of device management~\cite{HillebrandRP05}. Similarly, the Nova
\cite{SteinbergK10,TewsVW09, Becker2016} and CertiKOS \cite{Gu:2016:CEA:3026877.3026928, gu2015deep, GuVFSC11} microvisors do not consider caches in their formal analysis. 
In a follow-up paper~\cite{Costanzo:2016:EVI:2908080.2908100} the verification
was extended to machine code level, using a sequential memory model
and relying on the absence of address aliasing.

In scenarios such as OS virtualization, where untrusted software is allowed 
to configure cacheability of its own memory, all of the above systems can be
vulnerable to cache storage channel attacks. For instance, these channels can be
used to 
create illicit information flows among threads of seL4.

\textbf{\itshape Timing Channels}
Timing attacks and countermeasures have been formally verified to varying degrees of detail in the literature.
Almeida et al.~\cite{DBLP:conf/fse/AlmeidaBBD16} prove functional correctness and leakage security of MEE-CBC in presence of  timing attackers.
Barthe et al. \cite{DBLP:conf/types/BartheBCCL13} provide an abstract model of cache behaviour sufficient to replicate various timing-based exploits
and countermeasures from the literature such as \textsc{StealthMEM}~\cite{Kim:2012:SSP:2362793.2362804}. In a follow-up work Barthe et al.~\cite{Barthe:2014:SNC:2660267.2660283} formally showed that cryptographic algorithms that are implemented based-on
\textsc{StealthMEM} approach and thus are S-constant-time\footnote{An implementation is called S-constant-time, if it does not branch on secrets and only memory accesses to stealth addresses can be secret-dependent.}
are protected against cache-timing attacks. 
FlowTracker~\cite{silvatecnica} detects leaks using an information flow analysis at
compile time. Similarly, Ford et al.~\cite{GuVFSC11} uses information flow analysis based on explicit labeling to detect side-channels, and Vieira uses a deductive formal approach to
guarantee that side-channel countermeasures are correctly deployed~\cite{vieira2012formal}.
Other related work includes those adopting formal analysis to either check the rigour of countermeasures~\cite{DBLP:journals/corr/DoychevK16, DBLP:conf/sp/GuancialeNBD16, Tiwari:2011:CUM:2000064.2000087} 
or to examine bandwidth of side-channels~\cite{Kopf:2012:AQC:2362216.2362268, Doychev:2013:CTS:2534766.2534804}. Zhou~\cite{Zhou:2016:SAD:2976749.2978324} proposed  page access based solutions to mitigate the access-driven cache attacks and used model checking to show these countermeasure restore security.
Illicit information flows due to caches can also be countered by masking timing fluctuations by noise injection~\cite{Wang:2007:NCD:1250662.1250723} or by clock manipulation~\cite{Hu:1992:RTC:2699806.2699810, Vattikonda:2011:EFG:2046660.2046671,resolutionTiming}. 
A extensive list of protective means for timing attacks is given in~\cite{AESattack, ge2016survey}.

By contrast, we tackle storage channels. These channels carry information
through  memory and, additionally to permit  illicit information flows, can be
used to compromise integrity.
Storage channels have been used
in~\cite{DBLP:conf/sp/GuancialeNBD16} to show how cache management features
could be used to attack both integrity and confidentiality of
several types of application.


\section{Threats and Countermeasures }\label{sec:attack}

\begin{figure} \footnotesize
\centering
\begin{subfigure}[t]{0.4\textwidth}
\begin{lstlisting}[]
    V1) D = access($va1$)
    A1) write($va2$,1);
        free($va2$)
    V2) D = access($va1$)
    V3) if not policy(D)
           reject
        [evict $va1$]
    V4) use($va1$)
\end{lstlisting}
\caption{Integrity}
\label{fig:int}
\end{subfigure} \hspace{2cm}
\begin{subfigure}[t]{0.4\textwidth}
\begin{lstlisting}[]
    A1) write($va1$, 1)
    A2) write($va2$, 0)
    V1) if secret
          access($va3$)
        else
          access($va4$)
        
    A3) D = access($va2$) 
\end{lstlisting}
\caption{Confidentiality}
\label{fig:conf}
\end{subfigure}
\caption{Mismatched memory attribute threats}
\end{figure}
The presence of caches and ability to configure cacheability of virtual alias
enable the class of  attacks called ``alias-driven
attacks''~\cite{DBLP:conf/sp/GuancialeNBD16}. These attacks are based on building virtual aliases with mismatched cacheability attributes to break memory coherence; i.e, causing inconsistency between the values
stored in a memory location and the corresponding cache line, without making the cache line dirty.
We present here two examples to demonstrate how integrity and confidentiality
can be attacked using this vector.

\subsection{Integrity}
\label{subsec:IntCm}
Figure~\ref{fig:int} demonstrates an integrity threat. 
Here, we assume the cache is direct-mapped, physically indexed and write-back.
Also, both the attacker and victim are executed interleaved on a single core.
Virtual addresses $va1$ and $va2$ are aliasing the same memory  $pa$, $va1$ is cacheable and
 $va2$ is uncacheable. Initially, the memory $pa$ contains the value $0$ and the corresponding cache line is empty.
In a sequential model reads and writes are executed in order and their effects are instantly visible: V1) a victim accesses $va1$, reading $0$; A1) the
attacker writes $1$ into $pa$ using $va2$ and
releases the alias $va2$; V2) the victim accesses again $va1$, this time reading $1$; V3) if $1$ does not respect a security policy, then the victim rejects it;
otherwise V4) the victim passes $1$ to a security-critical functionality.

On a CPU with a weaker memory model the same code behaves differently:
V1) using $va1$, the victim reads $0$ from the memory and fills the cache;
A1) the attacker uses $va2$ to directly write $1$ in memory, bypasses the cache, and then frees the mapping; 
V2) the victim accesses again $va1$, reading $0$ from the cache; V3) the security policy is evaluated based on $0$;
possibly, the cache line is evicted and, since it is not dirty, the memory is not affected; V4) next time the victim accesses $pa$ it
reads $1$, but $1$ is not the value that
has been checked against the security policy.
This permits the attacker to bypass the policy.\footnote{Note that the attacker release its alias $va2$ before returning the control to the victim, making this attack different from the standard double mapping attacks.}

Intuitive countermeasures against alias-driven attacks are to forbid the
attacker from allocating cacheable aliases at all or to make cacheable its
entire memory.
A lightweight specialization of these approaches is ``always cacheability'':
A fixed region of memory is made always cacheable and
the victim rejects any input pointing  
outside this region.
Coherency can also be achieved by flushing the entire cache before the victim
accesses the attacker memory. Unfortunately, this countermeasure comes with
severe performance penalties~\cite{DBLP:conf/sp/GuancialeNBD16}. 
``Selective eviction'' is a more efficient solution and 
consists in removing from the cache every location 
that is accessed by the victim and that has been previously accessed
by the attacker.
Alternatively, the victim can use mismatched cache attributes itself
to detect memory incoherence and abort dangerous requests.

\subsection{Confidentiality}
Figure~\ref{fig:conf} shows a
confidentiality threat. Both $va_1$ and
$va_2$ point to the location $pa$ and 
say $idx$ is the cache line index of $pa$.
All virtual addresses except $va_2$ are
cacheable, and we assume that both $pa$
and the physical address pointed by $va_3$ are allocated in the same cache line.
The attacker writes A1) $1$ in the cache, making the line dirty, and
A2) $0$ in the memory. 
From this point, the value stored in the memory after the execution of the victim depends on the victim behaviour; if the victim accesses at
least one address (e.g. $va_3$) whose line index is $idx$, then the dirty line is evicted and  $1$ is written back to the memory;
otherwise the line is not evicted and $pa$ still contains  $0$.
This allows the attacker to measure evicted lines and thus to launch an access-driven attack.
%
In the following we summarise some of the countermeasures against cache storage channels presented in~\cite{DBLP:conf/sp/GuancialeNBD16} and relevant to our work.

Similarly, to ensure that no secret can leak through the cache storage channel, one can forbid allocating cacheable aliases or always flush the cache after executing victim functionalities.
An alternative is cache partitioning, where each process gets a dedicated part
of the cache and there is no intersection between any two partitions. This
makes it impossible for the victim activities to affect the
 attacker behaviour, thus preventing the attacker to infer information about
 victim's internal variables.
A further countermeasure is secret independent memory accesses, which aims at transforming the victim's
code so that the victim  accesses do not depend on secret.

Cache normalisation can also be used to close storage channels. In this approach, the cache is brought to a known state by reading a sequence of memory cells. This guarantees the subsequent secret
dependent accesses only hit the normalised cache lines, preventing the attacker from observing access patterns of the victim.


\newcommand{\safetyp}{\psi}
\section{High-Level Security Properties}\label{sec:modeling}
In this work we consider a trusted system software (the ``kernel'')
that shares the system with an untrusted software (the ``application'').
Possible instances for the kernel include
hypervisors, runtime monitors, 
low-level operating system routines, and
cryptographic services. The application is a software that requests services from the kernel and can be a user process or even a complete operating system. 
The hardware execution mode used by the application is less
privileged than the mode use by the kernel. The application is
potentially malicious and takes the role of the attacker here.

Some system resources
are owned by the kernel and are called ``critical'', some other resources 
should not be disclosed to the application and are called ``confidential''.
The kernel dynamically tracks memory ownership and provides mechanisms
for secure ownership transfer. 
This enables the application to pass data to the kernel services, while avoiding expensive copy operations:
the application prepares the input inside its own memory, the ownership of this memory is transferred to the kernel, and the corresponding kernel routine operates on the input in-place.
Two instances of this are the direct-paging memory virtualization mechanism 
introduced by Xen~\cite{xen} and runtime monitors
that forbid self-modifying code and prevent execution of unsigned code~\cite{DBLP:conf/esorics/ChfoukaNGDE15}. 
In these cases, predictability of the kernel behaviour must be ensured, regardless of any incoherent memory configuration created by the application. 

In such a setting, a common approach to formalise security is
via an integrity and a confidentiality property.
We use $\SeqState \in \SeqSpace$ to represent a state of the system and
$\WkTrs{}$ to denote
a transition relation. The transition relation models
the execution of one instruction by the application or the execution of a
complete handler of the kernel. The integrity property
ensures functional correctness (by showing that a state
invariant $\SeqMonIntSW$ is preserved by all transitions) and that the critical resources can
not be modified by the application 
(by showing a relation $\safetyp$):
\begin{property}[Correctness]
\label{prop:highLevelIntegrity}
  For all $\SeqState$ if $\SeqMonIntSW(\SeqState)$ and $\SeqState \WkTrs{}
  \SeqState'$ then $\SeqMonIntSW(\SeqState')$ and $\safetyp(\SeqState, \SeqState')$ 
\end{property}
The confidentiality property ensures that 
confidential resources are not leaked to the application
and is expressed using standard non-interference. 
Let $\SeqObs$ be the application's observations (i.e., the resources that are not
confidential) and let $\ObsEq{\SeqObs}$ be observational equivalence (which
requires states to have the same observations),
then
confidentiality is expressed by the following property
\begin{property}[Confidentiality]
\label{thm:highLevelConfidentiality}
  Let $\SeqState_1, \SeqState_2$ are initial states of the system
  such that $\SeqState_1 \ObsEq{\SeqObs} \SeqState_2$. If 
  $\SeqState_1 \WkTrs{}^{*} \SeqState'_1$  then
  $\exists \SeqState_2'.\ \SeqState_2 \WkTrs{}^{*} \SeqState'_2$  and $\SeqState'_1 \ObsEq{\SeqObs} \SeqState'_2$
\end{property}

The ability of the application to configure cacheability of its resources can
lead to incoherency, making formal program analysis on a cacheless model
unsound.
Nevertheless, directly verifying properties~\ref{prop:highLevelIntegrity}
and~\ref{thm:highLevelConfidentiality} using a complete cache-aware model
is unfeasible for any software of meaningful size.
Our goal is to show that the countermeasures can be used to restore coherency.
We demonstrate that if the countermeasures are correctly implemented by the
kernel then verification of the security properties on the cache-aware model
can be soundly reduced to proof obligations on the cacheless model.

\section{Formalisation}
As basis for our study we define two models, a cacheless and a cache-aware model. The cacheless model represents a memory coherent single-core system where all caches are disabled. 
The cache-aware model is the same system augmented by a single level data cache.



 \newcommand{\wds}{\alpha}
\newcommand{\wdl}{2^{(3 + \wds)}}
\newcommand{\bitsize}{\mathit{w}}

\subsection{Cacheless Model}
\label{subsec:seqmodel}
The cacheless model is ARM-flavoured but general enough to apply to other architectures. A state
$\SeqState = \tuple{\ureg, \psrs ,\cop, \mem} \in \SeqSpace$ is a tuple of general-purpose registers $\ureg$ (including program counter $\pc$), control registers $\psrs$, coprocessor state $\cop$, and memory $\mem$. 
The core executes either in non-privileged mode $\UMode$
or privileged mode
$\TMode$, ${\Modecnt(\SeqState) \in \{\UMode, \TMode\}}$. The control registers
$\psrs$ encode the execution mode and other execution parameters
such as the arithmetic flags. The coprocessor
state $\cop$ determines a range of system configuration parameters.
The word addressable memory is represented by $mem:\PA \to \B^{\bitsize}$,
where  $\B = \{0,1\}$, $\PA$ be the sets of physical addresses, and $\bitsize$
is
the word size.

The set $\resSpc$ identifies all resources in the system,
including registers, control registers, coprocessor states and
physical memory locations (i.e. $\PA \subseteq \resSpc$).
We use $\Exe: \SeqSpace \times \resSpc \to \B^{*}$ 
to represent the \emph{core-view} of a resource, which looks up the resource
and yields the corresponding value; e.g., for a physical address $pa \in \PA$, $\Exe$ returns the memory content in $pa$, $\Exe(\SeqState, pa) = \SeqState.mem(pa)$.

All software activities are restricted by a hardware monitor. The monitor
configuration can depend on  coprocessor states
(e.g., control registers for a TrustZone memory controller) and
 on regions of memory (e.g., page-tables for a Memory Management Unit (MMU)). We use the predicate
$\RealMonitor(\SeqState,\resource,\exmode,\acc) \in \B$ to represent the
hardware monitor, which holds if in the state $\SeqState$ the access  $\acc \in \{\wt, \rdd\}$
(for write and read, respectively) to the resource $\resource \in \resSpc$  is granted for the execution mode $\exmode \in \{\UMode, \TMode\}$. In ARM the hardware monitor consists of a static and a dynamic part. The static access control 
is defined by the processor architecture, which prevents non-privileged modifications to coprocessor states and control registers and whose model is trivial. The dynamic part is defined by the MMU and controls memory accesses. We use 
${\RealMMU(\SeqState,va,\exmode,\acc) \in (\PA\times\mathbb{B}) \cup \{\perp\}}$ to model the memory management unit. This function yields  for a virtual address $va$ the translation and the cacheability attribute
if the access permission is granted and $\perp$ otherwise. Therefore,
$\RealMonitor(\SeqState, pa,\exmode,\acc) $ is defined as $  \exists va.\
\RealMMU(\SeqState,va,\exmode,\acc) = (pa, -)$.
Further, $\monDom: \SeqSpace \to \resSpc$ is the function determining resources (i.e., the coprocessor registers,  the current master page-table, the linked page-tables)
which affect the monitor's behaviour.

The behaviour of the system is defined by an LTS ${\RealTrs{\exmode} \subseteq
  \SeqSpace \times \SeqSpace}$, where ${\exmode \in \{\UMode, \TMode\}}$
and if $\SeqState \SeqTrs{\exmode} \SeqState'$ then $\Mode{\SeqState} =
\exmode$.
 Each transition represents the execution of a single instruction.
When needed, we let $\SeqState \RealTrs{\exmode} \SeqState' \touched{\inst}$ 
denote that the instruction executed has $\inst$ effects on the data memory
subsystem,
where $\inst$ can be $\wt(R)$, $\rdd(R)$, or $\cl(R)$ to represent update, read and
cache cleaning of resources $R \subseteq \resSpc$. 
Finally, we use $\SeqState_0 \WkTrs{} \SeqState_n$ to represent the weak  transition relation that holds if there is a finite execution 
${\SeqState_0\Trans{}\cdots\Trans{}\SeqState_n}$ such that $\Mode{\SeqState_n} =
\UMode$ and $\Mode{\SeqState_j}\neq \UMode$ for $0 < j < n$
(i.e., the weak transition hides internal states of the kernel). 
%
%


 \newcommand{\drt}[2]{\fdirty{#1}{#2}}

\subsection{Cache-Aware Model}\label{subsec:realmodle} 

We model a single-core processor with single level unified cache.
A state  $\RealState
\in \RealSpace$ has all the components of the cacheless model together with the
cache, $\RealState = \tuple{\ureg, \psrs ,\cop, \mem, \cache}$.
The function $\RealMMU$ and transition relation   ${\RealTrs{\exmode} \subseteq  \RealSpace \times \RealSpace}$
are extended to take into account caches. Other definitions of the previous subsection are extended trivially.
In the following we use
$\fhit{\RealState}{pa}$ to denote a cache hit for address $pa$,
$\drt{\RealState}{pa}$ to identify dirtiness of the corresponding cache-line  (i.e., if the value of $pa$ has been modified in cache but not written back in the memory),
and $\cdSl{\RealState}{pa}$ to obtain value for $pa$ stored in the cache.

Both the kernel and the hardware monitor  (i.e., MMU) have the same view of the memory.
In fact, the kernel always uses cacheable virtual addresses and the MMU always consults first
the cache when it fetches a page table descriptor.\footnote{This is for instance
  the case of ARM Cortex-A53 and ARM Cortex-A8.}
This allows us to define the core-view for memory resources $pa \in \PA$ 
in presence of caches:
\[
 \Exe(\RealState, pa) = 
 \begin{cases}
  \cdSl{\RealState}{pa} & :\ \fhit{\RealState}{pa}\\
  \RealState.mem(pa)   & :\ \ow \\
 \end{cases}
\]


 \subsection{Security Properties}
\label{sec:topgoal}

Since the application is untrusted, we assume its code is unknown and that 
it can
break its memory coherence at will.
Therefore, the behaviour of the application in the cacheless and the cache-aware models can differ significantly. In particular,
memory incoherence caused by mismatched cacheability may lead a control variable of the application to have different values in these two models, 
causing the application to have different control flows.
This makes the long-established method of verification by
refinement~\cite{Roever:2008:DRM:1525579} not feasible for analysing
behaviour of the attacker.

To accomplish our security analysis we identify a subset of resources that
are \emph{critical}: resources for which integrity must be preserved and
on which critical functionality depends.
The security type of registers, control and coprocessor registers is statically
assigned. The security type of memory locations, however, 
can dynamically change due to transfer of memory ownership;
i.e., the type of these resources depends on the state of the system.
The function 
$\domCritc :(\RealSpace \cup \SeqSpace) \rightarrow 2^\resSpc$
retrieves the subset of resources that are critical.
Function $\domCritc$ usually depends on a subset of resources, for instance the
internal kernel data-structures used to store the security type of memory resources.
The definition of function $\domCritc$ must be based on the \emph{core-view} to make it usable for both the
cacheless and cache-aware models. 
We also define
(for both cacheless and cache-aware state)
$\Critical_{\critc}(\SeqState) = \{(a, \Exe(\SeqState, a))\ |\  a \in
\domCritc(\SeqState) \}$, this is the function that extracts the value of all
critical resources.
Finally, the set of resources that are confidential is statically defined
and identified by $\domConf \subseteq \resSpc$.
This region of memory includes all 
internal kernel data-structures whose value can depend on
secret information.

Property~\ref{prop:highLevelIntegrity} requires to introduce a system invariant
$\RealMonIntSW$ that is software-dependent and defined per kernel.
The invariant specifies: (i) the shape of a sound page-table
(e.g., to prohibit the non-privileged writable accesses 
to the kernel memory and the page-tables themselves),
(ii) properties that permits the kernel to work properly (e.g., the kernel stack pointer and its data structures are correctly
configured), (iii) functional properties specific for the selected countermeasure, and (iv) cache-related properties  that allow to restore coherency. 
A corresponding  invariant $\SeqMonIntSW$ for the cacheless model is derived
from $\RealMonIntSW$ by excluding properties that constrain caches.
Property~\ref{prop:highLevelIntegrity} is formally demonstrated by two theorems:
one constraining the behaviour of the application and one showing functional
correctness of the kernel.
Let $\callConv{\RealState}$ be a predicate identifying the state of the system
immediately after switching to the kernel (i.e., when an exception handler is
executed the mode is privileged and the program counter points to the
exception table).
Theorem~\ref{thm:topIntegrityUser} enforces that the execution of the
application in the cache enabled setting cannot affect the critical resources.
\begin{theorem}[Application-integrity]
\label{thm:topIntegrityUser}
  For all $\RealState$, 
  if $\RealMonIntSW(\RealState)$ and $\RealState \TlsTrs{\UMode} \RealState'$ then 
  $\RealMonIntSW(\RealState')$,
  $\Critical_{\critc}(\RealState)=\Critical_{\critc}(\RealState')$, and
  if  $\Mode{\RealState'} \neq \UMode$ then  $\callConv{\RealState'}$
\end{theorem}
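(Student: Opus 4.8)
\textbf{Proof plan for Theorem~\ref{thm:topIntegrityUser} (Application-integrity).}
The plan is to reduce the claim about the cache-aware transition $\RealState \TlsTrs{\UMode} \RealState'$ to the already-available guarantees on the cacheless model (Property~\ref{prop:User-No-Exfiltration}, ARM-integrity, and the analogous fact for the cache-aware core), exploiting the fact that the application runs in non-privileged mode and that the hardware monitor $\RealMonitor$ forbids it to write critical resources. First I would observe that, by the definition of $\domCritc$ via the \emph{core-view} $\Exe$ and the definition of $\RealMonIntSW$, the invariant implies a \emph{coherence} property on exactly the critical and monitor-relevant resources: for every $pa \in \domCritc(\RealState) \cup \monDom(\RealState)$, if $\fhit{\RealState}{pa}$ and $\cdSl{\RealState}{pa} \neq \RealState.mem(pa)$ then the line is dirty. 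This is clause (iv) of the invariant, and it is the linchpin: it says the core-view and the memory agree on critical resources up to pending write-backs initiated by the kernel only.

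Next I would argue that a single non-privileged instruction cannot create an incoherence on a critical resource. The application uses only cacheable virtual addresses for which $\RealMMU$ grants access, and by $\RealMonIntSW$ (soundness of the active page table, clause (i)) none of those grant writable access to any $pa \in \domCritc(\RealState) \cup \monDom(\RealState)$. Hence the $\inst$-effects of the transition, whether $\wt(R)$, $\rdd(R)$ or $\cl(R)$, touch only resources disjoint from the critical and monitor-relevant sets, or touch them through reads/cache-line fills that do not change the core-view. Formally I would case-split on the shape of $\inst$: a write hits only non-critical lines; a read or line-fill may bring a critical line into the cache but, by the coherence property, with a value equal to $\RealState.mem(pa) = \Exe(\RealState,pa)$, so $\Exe$ is unchanged; an eviction of a clean critical line likewise leaves $\Exe$ unchanged, and a dirty critical line cannot exist by the invariant. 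This establishes $\Exe(\RealState,a) = \Exe(\RealState',a)$ for every $a \in \domCritc(\RealState)$, and, since $\domCritc$ depends only on resources in $\monDom$ which are themselves preserved, also $\domCritc(\RealState) = \domCritc(\RealState')$; together these give $\Critical_{\critc}(\RealState) = \Critical_{\critc}(\RealState')$.

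For preservation of the invariant $\RealMonIntSW(\RealState')$ I would split the conjuncts: the page-table-shape and kernel-configuration clauses (i)--(ii) follow because the resources they constrain ($\monDom$, kernel stack, kernel data structures) are all critical and hence unchanged by the previous paragraph; the countermeasure-specific functional clause (iii) is discharged per instantiation but always reduces to properties of critical resources; and the cache-coherence clause (iv) is re-established because the only new cache lines are clean fills whose value matches memory, so no fresh incoherence on a critical address is introduced. Finally, the entry-point condition $\callConv{\RealState'}$ when $\Mode{\RealState'} \neq \UMode$ is a property purely of the ARMv7 exception mechanism: any mode switch out of $\UMode$ is an exception that sets the program counter to the vector table and the mode to privileged, which is exactly $\callConv{\cdot}$; this is inherited from the underlying cache-aware core model and is independent of the cache.

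\textbf{Main obstacle.} The delicate point is the interaction between cache \emph{eviction} and the critical-resource view: an eviction is an internal, non-deterministic hardware event that can be triggered as a side effect of an ordinary application access, and I must be sure that no such eviction can silently write a \emph{stale} value over a critical memory location --- precisely the alias-driven attack of Section~\ref{subsec:IntCm}. Ruling this out is exactly what the cache-coherence conjunct (iv) of $\RealMonIntSW$ (together with the countermeasure being verified) is designed for, so the real work is (a) formulating clause (iv) strongly enough that it is both preserved by application steps and sufficient to block stale write-backs, yet weak enough to be re-established, and (b) checking that the chosen countermeasure (always-cacheability, selective eviction, or flushing) actually implies it. I expect (a) to be the crux; once the invariant is right, the case analysis above is routine.
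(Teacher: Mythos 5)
Your plan is essentially the paper's own proof: both rest on (a) the access-control clause of the invariant forbidding any application-writable alias to critical resources, (b) the coherency clause blocking stale write-backs, and (c) a case analysis of the possible hardware effects of a non-privileged instruction (write, read/line-fill, eviction), with the mode-switch clause delegated to the exception semantics of the core. The only structural difference is that the paper packages your inlined case split as a standalone, software-independent \emph{derivability} relation ($\RealState \drvabl_{\UMode} \RealState'$, with cases $D_{\emptyset}$, $D_{\rdd}$, $D_{\wt}$) proved once for the architecture (their Lemma on non-privileged transitions), and splits $\RealMonIntSW$ into functional, coherency, and countermeasure-specific conjuncts with separate proof obligations --- a modularisation that buys reuse across countermeasures and platforms but is mathematically the same argument. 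Two small slips in your write-up, neither fatal: a dirty critical line \emph{can} exist (the kernel may have written it through the cache), and the invariant only forbids clean-but-differing lines; eviction of a dirty coherent line writes back the cached value and so still preserves $\Exe$, which is exactly the $D_{\emptyset}$ case. Likewise, the application is not restricted to cacheable aliases --- that is the attack vector --- but your argument correctly relies only on the absence of \emph{writable} aliases to critical resources, so nothing breaks.
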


While characteristics of the application prevents establishing refinement for
non-privileged transitions, for the kernel we show that the use of proper countermeasures enables 
transferring the security properties from the cacheless to the cache-aware model. This demands proving that the two models behave equivalently for kernel transitions.
We prove the behavioural equivalence by showing
refinement between two models using forward
simulation. We define the simulation relation $\SimR$  
(guaranteeing equality of critical resources) and show that both the invariant and
the relation are preserved
by privileged transitions:
\begin{theorem}[Kernel-integrity]
\label{thm:topIntegrityPriv}
  For all $\RealState_1$ and $\SeqState_1$
  such that $\RealMonIntSW(\RealState_1)$, $\RealState_1 \SimR \SeqState_1$,
  and $\callConv{\RealState_1}$
  if $\RealState_1 \WkTrs{} \RealState_2$ then $\exists\SeqState_2 .\ \SeqState_1 \WkTrs{} \SeqState_2$,
  $\RealState_2 \SimR \SeqState_2$ and  $\RealMonIntSW(\RealState_2)$
\end{theorem}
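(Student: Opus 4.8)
The plan is to prove Theorem~\ref{thm:topIntegrityPriv} by forward simulation between the cache-aware and the cacheless models, restricted to the privileged (kernel) fragment of an execution. The key insight is that, although the application may leave the memory system incoherent, the portion of memory that the kernel and the MMU actually depend on is kept coherent by the deployed countermeasure, and this coherence is precisely what the cache-related conjuncts of $\RealMonIntSW$ encode. Concretely, I would first establish a \emph{coherence lemma}: whenever $\RealMonIntSW(\RealState)$ and $\callConv{\RealState}$ hold, every resource in $\monDom(\RealState)$ (coprocessor state, active and linked page tables) and every critical resource in $\domCritc(\RealState)$ is coherent, i.e.\ $\fhit{\RealState}{pa}$ implies $\cdSl{\RealState}{pa} = \RealState.mem(pa)$, so that the core-view $\Exe$ agrees with a plain memory read for everything the handler reads or that the MMU consults. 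This lemma is where the countermeasure-specific reasoning lives (always-cacheable region, selective eviction, or flush-on-entry), and it is the only part of the argument that is genuinely countermeasure-dependent.

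Given the coherence lemma I would prove a single-step simulation: if $\RealState_1 \SimR \SeqState_1$, $\RealMonIntSW(\RealState_1)$ and $\RealState_1 \RealTrs{\TMode} \RealState_2$, then there is $\SeqState_2$ with $\SeqState_1 \SeqTrs{\TMode} \SeqState_2$ and $\RealState_2 \SimR \SeqState_2$. The argument: $\SimR$ forces equality of the core-view of all non-cache resources; coherence makes $\RealMMU$ in the cache-aware model return the same translation and cacheability attribute as in the cacheless model; the instruction semantics then compute identical updates to registers, control/coprocessor state and the core-view of memory; and the only divergence is confined to the $\cache$ component, which $\SimR$ is designed to abstract away. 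Lifting this to the weak transition $\WkTrs{}$ is by induction on the length of the privileged run, carrying an intermediate relation (an extension of $\SimR$ to privileged states) together with an internal invariant $\RealMonIntSWPriv$ that need not constrain caches \emph{during} the handler but must re-imply exactly the coherence needed for the next step; at the final non-privileged state it collapses to $\SimR$ and $\RealMonIntSW$.

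The residual obligations are lightweight and are all stated over the cacheless model, so the existing toolchain applies: (i) the handler, run on the cacheless model, preserves $\SeqMonIntSW$ and establishes the countermeasure post-condition (the cache is clean/normalised, or only the always-cacheable region was touched), which combined with the cache-aware bookkeeping yields $\RealMonIntSW(\RealState_2)$; (ii) the handler never issues an uncacheable access to a critical resource or a page table, which together with the coherence lemma discharges the ``kernel accesses only coherent memory'' condition; (iii) $\domCritc$, $\monDom$ and $\domConf$ are defined purely through $\Exe$, hence evaluate identically on $\SimR$-related states, so $\Critical_{\critc}$ is preserved. Assembling (i)--(iii) with the step simulation gives $\SeqState_1 \WkTrs{} \SeqState_2$, $\RealState_2 \SimR \SeqState_2$ and $\RealMonIntSW(\RealState_2)$, as required.

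The main obstacle is the coherence lemma together with its re-establishment at handler exit: one must show that under the chosen countermeasure no kernel instruction and no MMU page-table walk during the handler can observe a stale cache line, and that the handler leaves the memory system in a configuration from which coherence holds again --- including the delicate case of an application-primed dirty line (as in Figure~\ref{fig:conf}) whose later eviction must not silently overwrite a critical resource. Formalising this needs a sufficiently detailed cache model (hit/dirty/eviction, write-back, and the ``unexpected cache hit'' rule) and a careful case split on the countermeasure; everything else reduces to cacheless reasoning already supported by the verification tools.
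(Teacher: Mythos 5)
Your proposal is correct and follows essentially the same route as the paper: a countermeasure-dependent coherency condition on the resources the kernel and the MMU touch, a step-indexed forward simulation carried by a two-state internal invariant $\RealMonIntSWPriv$, residual proof obligations discharged entirely on the cacheless model, and re-establishment of $\RealMonIntSW$ at handler exit. The only cosmetic difference is that the paper's $\SimR$ equates the cacheless core-view with the cache-aware \emph{memory-view} $\altExe$ rather than the core-view (so that resources that become critical mid-handler and are possibly incoherent at entry are handled), coherency being exactly the condition under which this collapses to the core-view equality your step argument uses.
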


Applying a similar methodology, in Section~\ref{sec:confidentiality} we prove
the confidentiality property (i.e., Theorem~\ref{thm:topConfidentiality}) in
presence of caches. 
Here, we use bisimulation (equality of the application's observations) 
as unwinding condition: 
\begin{theorem}[Confidentiality]
\label{thm:topConfidentiality}
  For all $\RealState_1$ and $\RealState_2$ such that
  $\RealMonIntSW(\RealState_1)$, $\RealMonIntSW(\RealState_2)$, and
  $\RealState_1 \ObsEq{\RealObs} \RealState_2$, if
  $\RealState_1 \WkTrs{} \RealState'_1$ then
  $\exists \RealState_2'.\ \RealState_2 \WkTrs{} \RealState'_2$
and $\RealState'_1 \ObsEq{\RealObs} \RealState'_2$ as well as $\RealMonIntSW(\RealState_1')$, $\RealMonIntSW(\RealState_2')$.
\end{theorem}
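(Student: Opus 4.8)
The plan is to prove Theorem~\ref{thm:topConfidentiality} by the same refinement-based strategy used for integrity (Theorems~\ref{thm:topIntegrityUser} and~\ref{thm:topIntegrityPriv}), but with a bisimulation relation carrying observational equivalence rather than a simulation carrying equality of critical resources. First I would fix an unwinding relation $\BsimR$ on cache-aware states that refines $\ObsEq{\RealObs}$: besides equating the application-visible resources $\RealObs$, it must constrain the cache in a way that is preserved by kernel transitions and that is implied by the chosen countermeasure (always-cacheability, selective eviction, cache partitioning, secret-independent accesses, or normalisation). The key property the countermeasure must buy us is that the \emph{tag state} and \emph{line-emptiness} of every cache line that the application can probe do not depend on the confidential resources $\domConf$ --- this is exactly the ``cache-state non-interference'' condition identified in the verification-methodology section of Paper C. So $\BsimR$ should additionally require that the two cache-aware states agree on cache-line tags and emptiness for all probe-accessible lines.

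Next I would split the weak transition ${\RealState_1 \WkTrs{} \RealState'_1}$ into the application step(s) that raise the exception and the privileged handler run, mirroring the case analysis in the proof sketches of Theorems~\ref{lem:No-Infiltration} and~\ref{thm:topIntegrityPriv}. For the non-privileged prefix I would invoke the (cache-aware) analogue of Property~\ref{prop:User-No-Infiltration}: an unprivileged step only reads/writes resources the MMU grants to $\UMode$, none of which are in $\domConf$, so $\RealObs$-equivalence and the cache-tag/emptiness agreement are preserved; here the invariant $\RealMonIntSW$ (in particular its MMU-safety and cache-coherence clauses) is needed so that the application cannot touch the kernel's page tables or create new incoherent aliases to confidential memory. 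For the privileged handler, I would first transport both executions down to the cacheless model using a refinement lemma (the confidentiality twin of Lemma~\ref{lem:refinement}/Theorem~\ref{thm:topIntegrityPriv}), whose soundness rests on the invariant's cache-coherence clause guaranteeing that the handler only ever accesses coherent memory --- so the cache-aware handler run agrees step-for-step with the cacheless one on the \emph{core-view} $\Exe$. Then I would apply Property~\ref{thm:highLevelConfidentiality} (the already-established cacheless non-interference result, proved with existing relational tools on the cacheless model, e.g.~the adaptation of~\cite{DBLP:conf/ccs/BalliuDG14}) to obtain a matching cacheless handler run ending in an $\ObsEq{\SeqObs}$-equivalent state, and finally lift back up, re-establishing the cache part of $\BsimR$ from the countermeasure's architectural guarantee. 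Composing the prefix and handler cases, and then closing under the reflexive-transitive step as in Property~\ref{thm:highLevelConfidentiality}, yields the theorem with $\ObsEq{\RealObs}$ recovered by forgetting the cache component of $\BsimR$.

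The main obstacle I expect is the cache-state part of the unwinding relation: showing that the \emph{attacker-observable} cache metadata (tags, emptiness, and for multi-way caches the per-way occupancy) stays confidential-data-independent \emph{through an entire kernel handler}, not just a single instruction. This is where the countermeasure model really does the work, and the proof obligation is genuinely architectural rather than functional --- it requires the cache-safety condition on the handler's memory-access and instruction-fetch traces (matched accesses use the same set index and tag), plus an argument that the replacement policy does not leak via which way gets evicted. I would therefore isolate this as a separate proof obligation on the cacheless model (two executions starting from cache-equivalent states that issue the same sequence of $(\text{index},\text{tag})$-matched accesses end in cache-equivalent states, by a purely hardware-level induction on the cache transition function), discharge it once per countermeasure, and feed it to the lifting step above. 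The remaining pieces --- the non-privileged preservation of $\RealObs$, the coherence-based refinement of the handler, and the appeal to the cacheless confidentiality theorem --- are essentially the confidentiality mirror of arguments already carried out for integrity, so I expect them to be routine once the cache invariant is in place.
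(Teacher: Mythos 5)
Your proposal follows essentially the same route as the paper: an observational-equivalence bisimulation whose cache component records tags, emptiness/dirtiness and (filtered) replacement history; application steps discharged by a countermeasure-independent hardware-level non-interference lemma under the invariant; kernel steps discharged by simulating both runs down to the cacheless model, invoking the cacheless confidentiality obligation, lifting back up for the coherent resources, and closing the potentially incoherent/cache part with a countermeasure-specific obligation reduced to $(\text{index},\text{tag})$-matched access traces verifiable on the cacheless model. The decomposition, the key lemmas, and the identification of the replacement-policy/tag-state argument as the genuinely architectural obligation all match the paper's proof.
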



\newcommand{\safe}{\mathit{safe}}

\section{Integrity}\label{sec:integrity}
Our strategy to demonstrate correctness of integrity countermeasures consists of
two steps. We first decompose the proof of
Theorems~\ref{thm:topIntegrityUser} and~\ref{thm:topIntegrityPriv} and show
that the integrity properties are met if a set of proof conditions are
satisfied. The goal of this step is to provide a theoretical framework that
permits to analyse soundness of a countermeasure without the need of dealing
with the complex transition relation of the cache-aware model.
Then, we demonstrate correctness of two countermeasures of Section~\ref{subsec:IntCm}, namely always cacheability and selective eviction, by showing that
if they are correctly implemented by the kernel then verification of the
integrity properties can be soundly reduced to analysing properties of the
kernel in the cacheless model.

\subsection{Integrity: Application Level (Theorem~\ref{thm:topIntegrityUser})}
\label{subsec:userIntegrity}
To formalise the proof
we introduce the auxiliary definitions of \textit{coherency}, \textit{derivability} and \textit{safety}.

\begin{definition*}[Coherency]
\label{def:coherency}
 In $\RealState \in \RealSpace$ a set of memory resources $R \subseteq \PA$ is coherent $(\coh(\RealState, R))$, 
 if for all $pa \in R$,
 such that $pa$ hits the cache and its value differs from the memory $($i.e.
 $\RealState.\mem(pa) \neq \cdSl{\RealState}{pa})$,
 the corresponding cache line is dirty $(\drt{\RealState}{pa})$.
\end{definition*}
Coherency of the critical resources  is essential to prove integrity. In fact, for an incoherent resource the \emph{core-view} can be changed indirectly without an explicit memory write, i.e., through evicting the clean
cache-line corresponding to the resource which has different values in
the cache and memory. Moreover, in some cores, (e.g., ARMv7/v8) the MMU looks first into the caches when it fetches a descriptor. 
Then if the page-tables are coherent, a cache eviction cannot indirectly affect the behaviour of the MMU. 

To allow the analysis of specific countermeasures to abstract from the
complex cache-aware transition system, we introduce the notion of
\emph{derivability}.
This is a relation overapproximating the effects over the memory and cache for instructions executed in non-privileged mode.
Derivability is an architectural property and it is independent of the software executing.
\begin{definition*}[Derivability]
\label{def:derivability}
We say $\RealState'$ is derivable from  $\RealState$ 
in non-privileged mode (denoted as $\RealState \drvabl_{\UMode} \RealState'$) if 
$\RealState.\cop = \RealState'.\cop$ and
for every $pa \in \PA$ one of the following properties holds:
\begin{itemize}
  \item[$D_{\emptyset}(\RealState, \RealState', pa)$:]
Independently of the access rights for the address $pa$, the corresponding memory can be changed due to an eviction
of a dirty line and subsequent write-back of the cached value into the memory.
Moreover, the cache can always change due to an eviction.
  \item[$D_{\rdd}(\RealState, \RealState', pa)$:]
If non-privileged mode can read the address $pa$, the cache state can change through a fill operation which loads the cache with the value of $pa$ in the memory.
\item[$D_{\wt}(\RealState, \RealState', pa)$:]
If non-privileged mode can write the address $pa$, it can either write directly into the cache, making it dirty, or bypass it, by using an uncacheable alias.
\end{itemize}
Figure~\ref{fig:derivability} reports the formal definition of these predicates.
\end{definition*}

\begin{figure}
\[\begin{array}{l}
\begin{array}{ll}
& D_{\emptyset}(\RealState, \RealState', pa) \defequiv 
 M'(pa) \neq M(pa)  \Rightarrow (
\drt{\RealState}{pa} \land
M'(pa) = \cdSl{\RealState}{pa})  \\
&
\,\,\ \land \ 
 \way{\RealState'}{pa} \neq \way{\RealState}{pa}  \Rightarrow \\
& \qquad\qquad \left (\neg \fhit{\RealState'}{pa}
 \land (\drt{\RealState}{pa} \implc M'(pa) = \cdSl{\RealState}{pa} )
 \right ) \\
\end{array} 
\\ \\
\begin{array}{ll}
&D_{\rdd}(\RealState, \RealState', pa)  \defequiv \RealMonitor(\RealState,pa, \UMode, \rdd) \\
&
\,\,\ \land \ 
 M'(pa) = M(pa) \land
 \way{\RealState'}{pa} \neq \way{\RealState}{pa}  \Rightarrow \\
& \qquad\qquad
  (\cdSl{\RealState'}{pa} = M(pa)  \land \neg \fhit{\RealState}{pa})
\end{array}
\\ \\
\begin{array}{rll}
&D_{\wt}(\RealState, \RealState', pa)  \defequiv 
 \RealMonitor(\RealState,pa, \UMode, \wt) \\
&
\,\,\ \land \ 
(\way{\RealState'}{pa} \neq \way{\RealState}{pa}  \Rightarrow 
\drt{\RealState'}{pa} )\\
&
\,\,\ \land \ 
(M'(pa) \neq M(pa) \implc \\ &  \qquad\qquad (\neg \drt{\RealState'}{pa} \implc  \exists va . \RealMMU(\RealState, va, \UMode, \wt) = (pa, 0)))
\end{array}
\end{array}\]
\caption{Derivability. Here $M \!=\!\RealState.\mem$,
  $M'\!=\!\RealState'.\mem$, and $\way{\RealState}{pa} = \tuple{\fhit{\RealState}{pa},
  \drt{\RealState}{pa}, \cdSl{\RealState}{pa}}$ 
denotes the cache-line corresponding to $pa$ in $\RealState.\cache$.}
\label{fig:derivability}
\end{figure}

\begin{definition*}[Safety]
\label{def:monitorsafe}
 A state $\ \RealState\ $ is safe, $\safe(\RealState)$, if for every state $\RealState'$, resource $\resource$, mode $\exmode$ and access request $\acc$ if $\RealState \drvabl_{\UMode} \RealState'$ then
 $\RealMonitor(\RealState,\resource, \exmode,\acc) = \RealMonitor(\RealState',\resource, \exmode,\acc)$.
\end{definition*}
A state is \emph{safe} if non-privileged executions cannot affect the hardware
monitor, i.e., only the kernel can change page-tables.

\newcommand{\RealInvF}{\RealMonIntSW_{\mathit{fun}}}
\newcommand{\RealInvCC}{\RealMonIntSW_{\mathit{coh}}}
\newcommand{\RealInvCMC}{\RealMonIntSW_{\mathit{cm}}}
To decompose proof of Theorem~\ref{thm:topIntegrityUser}, 
the invariant $\RealMonIntSW$ must be split in three parts:
a functional part $\RealInvF$
which only depends on the \emph{core-view} for the critical resources, 
an invariant $\RealInvCC$ 
which only depends on coherency of the critical resources, 
and an optional countermeasure-specific invariant $\RealInvCMC$ 
which depends on coherency of non-critical memory
resources such as resources in an always-cacheable region. 
\begin{prop}\label{po:monintor:inv:threeparts}
For all $\RealState$, $\RealMonIntSW(\RealState)\!=\!\RealInvF(\RealState)\!\wedge\!\RealInvCC(\RealState)\!\wedge\!\RealInvCMC(\RealState)$ 
and: 
\begin{enumerate}
\item \label{thm:invUserDrvbl}
  for all $\RealState'$ if $\Critical_{\critc}(\RealState)\!=\!\Critical_{\critc}(\RealState')$ then  $\RealInvF(\RealState)\!=\!\RealInvF(\RealState')$;
\item \label{thm:invUserCCChoerence}
  for all $\RealState'$ if $\coh(\RealState, \domCritc(\RealState))$, $\coh(\RealState', \domCritc(\RealState'))$, and $\Critical_{\critc}(\RealState)\!=\!\Critical_{\critc}(\RealState')$
  then $\RealInvCC(\RealState)\!=\!\RealInvCC(\RealState')$;
\item \label{thm:invUserCMChoerence}
 for all $\RealState'$ if $\RealMonIntSW(\RealState)$ and $\RealState \drvabl_{\UMode} \RealState' $ then $\RealInvCMC(\RealState')$.
\end{enumerate}
\end{prop}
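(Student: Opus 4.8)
\textbf{Plan of proof for Proof Obligation~\ref{po:monintor:inv:threeparts}.}
The statement is a bundle of structural requirements on how the cache-aware invariant $\RealMonIntSW$ must be organised, together with three monotonicity/preservation conditions for its three components. The first task is to actually choose the decomposition $\RealMonIntSW = \RealInvF \wedge \RealInvCC \wedge \RealInvCMC$. I would do this by inspecting the concrete invariant used for the kernel at hand and bucketing each conjunct according to its dependencies: conjuncts that are expressed purely in terms of the \emph{core-view} values of critical resources (page-table soundness, stack-pointer correctness, reference-counter consistency, countermeasure functional properties) go into $\RealInvF$; the single conjunct ``$\coh(\RealState,\domCritc(\RealState))$'' (and anything it trivially implies about the critical cache lines) goes into $\RealInvCC$; and any residual coherency requirement on a non-critical region --- e.g.\ ``$\coh(\RealState, \AL)$'' for the always-cacheable region $\AL$, or emptiness/normalisation conditions on selectively-evicted lines --- goes into $\RealInvCMC$ (empty if the countermeasure needs none). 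The point of the split is precisely that $\domCritc$, $\Critical_\critc$ and $\Exe$ are all defined via the \emph{core-view}, so they read identically on cacheless and cache-aware states; this is what makes the three parts cleanly separable.

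For part~\ref{thm:invUserDrvbl} the argument is essentially definitional: $\RealInvF$ was constructed to mention only $\Critical_\critc(\RealState)$, i.e.\ the set $\{(a,\Exe(\RealState,a)) \mid a \in \domCritc(\RealState)\}$, so if two states agree on that set they agree on every atomic proposition occurring in $\RealInvF$, hence on $\RealInvF$ itself. The only subtlety is to make sure $\domCritc(\RealState)$ is itself recoverable from $\Critical_\critc(\RealState)$ --- which it is, being its first projection --- and that $\domCritc$ depends only on the core-view of the kernel data structures that track memory ownership (which are themselves critical). Part~\ref{thm:invUserCCChoerence} is similarly short: $\RealInvCC$ is (equivalent to) $\coh(\RealState,\domCritc(\RealState))$, and the hypotheses hand us $\coh$ on both sides together with equality of $\Critical_\critc$ (hence of $\domCritc$), so $\RealInvCC(\RealState) \Leftrightarrow \coh(\RealState',\domCritc(\RealState')) \Leftrightarrow \RealInvCC(\RealState')$ follows immediately; if $\RealInvCC$ carries any derived consequences of coherency they are functions of the same data.

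Part~\ref{thm:invUserCMChoerence} is the real work, and its difficulty depends on which countermeasure is in play. The obligation says: assuming the full invariant holds in $\RealState$ and $\RealState \drvabl_\UMode \RealState'$ (an arbitrary non-privileged derivability step), the countermeasure-specific coherency invariant is re-established in $\RealState'$. For the always-cacheability countermeasure, $\RealInvCMC(\RealState) = \coh(\RealState,\AL)$ and one must show coherency of the always-cacheable region $\AL$ survives any derivability step. The key lemma is that for addresses in $\AL$ there is by $\RealMonIntSW$ no uncacheable alias available to the application, so the only applicable clauses of derivability are $D_\emptyset$ (eviction/write-back of a \emph{dirty} line, which keeps memory and cache consistent) and $D_\wt$ with a cacheable alias (which writes into the cache, setting the dirty bit, and thus lands in the ``dirty'' case of $\coh$) and $D_\rdd$ (a fill, which by definition copies the memory value into the cache, so cache and memory agree); in every case the coherency condition ``incoherent $\Rightarrow$ dirty'' is preserved. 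For selective eviction the reasoning is slightly different --- there $\RealInvCMC$ would assert that certain lines are absent from the cache or that the relevant region is coherent after the kernel's own eviction code has run, and preservation across \emph{application} steps follows because the application, lacking write access or a cacheable alias to those addresses, can only evict (clean) them further. The main obstacle I anticipate is discharging this case fully formally: it requires a careful, clause-by-clause case split on the three derivability predicates in Figure~\ref{fig:derivability} for each physical address, combined with the page-table-shape part of $\RealMonIntSW$ to rule out the ``hostile uncacheable alias'' configurations, and getting the interaction between the $\way{\cdot}{\cdot}$ bookkeeping and the dirty-bit bookkeeping exactly right. Once part~\ref{thm:invUserCMChoerence} is in hand, the three parts together will feed directly into the proof of Theorem~\ref{thm:topIntegrityUser}, via derivability and safety, in the subsequent section.
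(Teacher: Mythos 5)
Your proposal is correct and follows essentially the same route as the paper: the obligation is discharged per countermeasure by bucketing the invariant's conjuncts exactly as you describe, with parts 1 and 2 becoming definitional/trivial once $\RealInvF$ depends only on $\Critical_{\critc}$ and $\RealInvCC$ is the coherency of the critical resources, and part 3 carrying the real work via the derivability clauses together with the absence of uncacheable aliases to the protected region and the preservation of $\domCritc$ across non-privileged steps. The only small divergences are that for selective eviction the paper simply takes $\RealInvCMC(\RealState)=\mathit{true}$ (coherency of newly critical resources is instead tracked by the kernel's internal invariant via the eviction history), and for always cacheability it sets $\RealInvCC(\RealState)=\coh(\RealState,\Preg\cap\domCritc(\RealState))$, which coincides with your choice under the invariant since $\domCritc(\RealState)\subseteq\Preg$.
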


Also, the function   $\domCritc$ 
must be correctly defined: i.e. resources
affecting the set of critical resources are critical themselves.
\begin{prop}\label{po:type:delat:soundness}
  For all $\RealState, \RealState'$ if  
$\Critical_{\critc}(\RealState) = \Critical_{\critc}(\RealState')$
then $\domCritc(\RealState) = \domCritc(\RealState')$
\end{prop}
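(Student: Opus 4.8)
Proof Obligation 2 is a fairly routine but important consistency requirement: it says that the critical‑resource extraction map $\Critical_{\critc}$ is fine enough to determine which resources are critical. The plan is to reduce it to a statement purely about the definition of the auxiliary function $\domCritc$ and the core‑view function $\Exe$.

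First I would unfold the definitions. Recall that $\Critical_{\critc}(\RealState) = \{(a, \Exe(\RealState, a)) \mid a \in \domCritc(\RealState)\}$. The hypothesis $\Critical_{\critc}(\RealState) = \Critical_{\critc}(\RealState')$ therefore gives us two facts by projecting on the first component: $\domCritc(\RealState) = \domCritc(\RealState')$ follows \emph{directly}, since the first projection of $\Critical_{\critc}(\RealState)$ is exactly $\domCritc(\RealState)$ (each address appears in exactly one pair because $\Exe$ is a function). In other words, the obligation is almost immediate from the \emph{shape} of the definition of $\Critical_{\critc}$ — the pairing with the core‑view value carries the address set along with it, so equality of the pair‑sets entails equality of the index sets. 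I would state this as the one‑line core of the argument.

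However, there is a genuine subtlety that I expect to be the main obstacle: the statement must hold for \emph{both} a cache‑aware state $\RealState$ and a cacheless state, and $\domCritc$ is defined in terms of the \emph{core‑view} $\Exe$ precisely so that it is usable in both models. The actual content of the obligation is thus a \emph{design constraint on how one instantiates $\domCritc$ for a particular kernel}: whatever internal kernel data structures $\domCritc$ reads (e.g. the page‑type table and reference counters of the hypervisor case study), those locations must themselves be included in $\domCritc(\RealState)$, so that fixing their core‑view values (which is what $\Critical_{\critc}(\RealState)=\Critical_{\critc}(\RealState')$ does) pins down $\domCritc(\RealState)$. So the real proof, when discharged for a concrete kernel, proceeds by: (i) observing that $\domCritc(\RealState)$ depends only on $\Exe(\RealState,a)$ for $a$ ranging over some finite set $D_0$ of bookkeeping addresses; (ii) showing $D_0 \subseteq \domCritc(\RealState)$ whenever the invariant holds (this is part of the well‑formedness of the kernel's data layout); (iii) concluding that if the core‑views of all of $\domCritc(\RealState)$ agree in $\RealState$ and $\RealState'$, then in particular they agree on $D_0$, hence $\domCritc(\RealState) = \domCritc(\RealState')$.

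In the generic framework presented here — where $\domCritc$ is left abstract — the obligation is simply recorded as a proof obligation to be fulfilled at instantiation time, and the only thing to verify at the abstract level is the trivial projection argument of the previous paragraph. I would therefore present the abstract proof in two sentences (project $\Critical_{\critc}$ on its first coordinate, using that $\Exe$ is functional), and defer the substantive content to the case‑study section, where it is checked that the hypervisor's $\domCritc$, which reads the page‑type and reference‑counter arrays, is self‑referential in the required sense because those arrays reside in hypervisor memory and are part of every critical set under the invariant $\RealMonIntSW$.
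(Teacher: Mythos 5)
Your proposal is correct. The projection observation is sound: since $\Critical_{\critc}(\RealState)$ is by definition the set of pairs $(a,\Exe(\RealState,a))$ for $a\in\domCritc(\RealState)$, its first projection is exactly $\domCritc(\RealState)$, so equality of the pair-sets does entail equality of the index sets, and the obligation as literally stated is immediate for any choice of $\domCritc$. The paper does not take this route: it states the claim as a per-kernel proof obligation and discharges it only in the case study, with precisely the argument you give as the ``substantive content'' --- the hypervisor's $\domCritc(\RealState)=\hyperMem\cup\{pa \mid \hyperType(\RealState,pa)=pt\}$ depends only on the page-type bookkeeping, which resides in $\hyperMem$ and is always included in the critical set, so agreement of $\Critical_{\critc}$ pins down $\hyperType$ and hence $\domCritc$. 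In other words, you are right that the intended content is the design constraint that resources affecting the set of critical resources must themselves be critical, and that the formal statement, read against the given definition of $\Critical_{\critc}$, is weaker than that intent (indeed a tautology). Your two-level presentation --- the trivial abstract projection plus the per-instantiation self-referentiality check --- subsumes the paper's one-line discharge, and additionally flags that the formalization could be tightened, e.g.\ by phrasing the obligation directly as ``$\domCritc(\RealState)$ depends only on $\Exe(\RealState,a)$ for $a$ in some set that is itself contained in $\domCritc(\RealState)$.''
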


Safety is essential to prove integrity:
if a state is not safe, the application can potentially elevate its permissions by changing configurations of the hardware monitor and get access to resources beyond 
its rights. 
\begin{lemma}
\label{lem:monitorsafe}
 If $\RealMonIntSW(\RealState)$ then $\safe(\RealState)$
\end{lemma}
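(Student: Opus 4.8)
The plan is to reduce safety to the fact that a non-privileged step cannot alter the \emph{monitor domain} $\monDom(\RealState)$. First I would unfold the definition of $\safe$: fix an arbitrary $\RealState'$ with $\RealState \drvabl_{\UMode} \RealState'$, together with a resource $\resource$, a mode $\exmode$ and an access $\acc$, and reduce the goal $\RealMonitor(\RealState,\resource,\exmode,\acc) = \RealMonitor(\RealState',\resource,\exmode,\acc)$ to showing that $\RealState$ and $\RealState'$ agree on $\monDom(\RealState)$ --- i.e. $\RealState.\cop = \RealState'.\cop$ and $\Exe(\RealState,pa) = \Exe(\RealState',pa)$ for every page-table address $pa \in \monDom(\RealState)$ --- since $\RealMonitor$ (and hence $\RealMMU$) is determined by exactly these components, and on the cache-aware model the MMU resolves page-table descriptors through the same cache-first core-view $\Exe$ used by the core. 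The static part of the monitor (access control on control and coprocessor registers) is a pure function of $\RealState.\cop$ and the architecture, so it is discharged by the first conjunct alone.

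The coprocessor equality $\RealState.\cop = \RealState'.\cop$ is part of the definition of $\drvabl_{\UMode}$, so it comes for free. For the memory part I would first use the functional component $\RealInvF$ of $\RealMonIntSW$ (soundness of page tables) together with $\monDom(\RealState) \subseteq \domCritc(\RealState)$ to conclude that every $pa \in \monDom(\RealState) \cap \PA$ is critical and is \emph{not} writable in non-privileged mode, i.e. $\neg\RealMonitor(\RealState,pa,\UMode,\wt)$. Consequently, in the per-address disjunction provided by derivability, the write case $D_{\wt}(\RealState,\RealState',pa)$ is vacuous, and only $D_{\emptyset}$ (dirty write-back / line eviction) and $D_{\rdd}$ (cache fill from memory) can occur for such $pa$.

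Next I would close both remaining cases by establishing $\Exe(\RealState,pa) = \Exe(\RealState',pa)$, using coherence of the critical resources, $\coh(\RealState,\domCritc(\RealState))$, which follows from the cache-related invariant $\RealInvCC$. For $D_{\rdd}$ the memory is unchanged and a fill may only install the current memory value; if $pa$ was already cached the line is untouched, and if it was a clean hit then by coherence its cached value already equalled memory, so the core-view is preserved. For $D_{\emptyset}$, if the way for $pa$ changes the line becomes a miss, and the memory is updated to the cached value exactly when the line was dirty (and is otherwise unchanged); combining this with coherence (clean hit $\Rightarrow$ memory equals cache) shows that the new memory value $M'(pa)$ coincides with the old core-view of $pa$, which after the miss is again $M'(pa)$. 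Hence the core-view of every monitor-domain address is stable, $\RealState$ and $\RealState'$ agree on $\monDom(\RealState)$ (so in particular $\monDom(\RealState)=\monDom(\RealState')$), and the two monitor functions agree, establishing $\safe(\RealState)$.

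The main obstacle I anticipate is the $D_{\emptyset}/D_{\rdd}$ case analysis: it requires checking that the coherence predicate $\coh$ as packaged in $\RealInvCC$ is \emph{exactly} strong enough to rule out every indirect change to $\Exe(\cdot,pa)$ via eviction, and that the model assumption that the MMU consults the cache before memory is faithfully used (otherwise one would additionally have to reason about the raw memory value rather than the core-view). A secondary, lighter point is to confirm that $\monDom(\RealState) \subseteq \domCritc(\RealState)$ is indeed guaranteed by $\RealMonIntSW$ --- or derivable from the soundness constraints on $\domCritc$, cf.\ Proof Obligation~\ref{po:type:delat:soundness}; if it is not already in that form, it should be added as a small additional invariant clause.
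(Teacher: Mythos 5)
Your proof is correct and follows essentially the same route as the paper: the paper discharges this lemma by exactly the three facts you invoke — $\monDom(\RealState)\subseteq\domCritc(\RealState)$ (Obligation~\ref{po:monintor:dom:critic}, not \ref{po:type:delat:soundness} as you cite), non-writability of critical resources (Obligation~\ref{vc:critical:no:write}), and coherency of critical resources (Obligation~\ref{vc:critical:coherency}) — combined with the derivability case split; your write-up merely makes the $D_{\emptyset}/D_{\rdd}/D_{\wt}$ analysis explicit where the paper leaves it as a sketch.
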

Proof of Lemma~\ref{lem:monitorsafe} depends on the formal model of the hardware monitor and guarantees provided by the invariant. 
Using the invariant $\RealMonIntSW$, we identify three main proof obligations
that are needed to prove this lemma.
\begin{enumerate}
 \item The functional part of the invariant must guarantee that the resources that control the hardware monitor are considered critical.
\begin{prop}\label{po:monintor:dom:critic}
If $\RealInvF(\RealState)$ then $\monDom(\RealState ) \subseteq  \domCritc(\RealState)$
\end{prop}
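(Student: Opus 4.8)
The statement is a proof obligation to be discharged for a concrete kernel (here the direct‑paging hypervisor of~\cite{DBLP:journals/jcs/GuancialeNDB16} used as the case study, on the ARMv7/ARMv8 MMU models), not a theorem of the generic framework, and it is the first of the three conditions feeding the proof of Lemma~\ref{lem:monitorsafe} (that $\RealMonIntSW$ implies $\safe$). The plan is therefore to instantiate the three ingredients — $\monDom$, $\domCritc$ and $\RealInvF$ — and check the inclusion $\monDom(\RealState) \subseteq \domCritc(\RealState)$ by unfolding. Recall that $\monDom(\RealState)$ is read off from the formal MMU model $\RealMMU$ together with the static access‑control part of $\RealMonitor$: it consists of exactly the resources a translation query inspects, namely the relevant coprocessor/control registers (MMU‑enable bit, the translation‑table base register, domain registers), the physical blocks holding the currently active master page‑table, and the physical blocks holding every second‑level page‑table linked from it. Fixing an arbitrary $\RealState$ with $\RealInvF(\RealState)$, it then suffices to place each of these three families into $\domCritc(\RealState)$.

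First I would handle the control and coprocessor registers: their security type is assigned statically, so they are critical by the very definition of $\domCritc$, and this case uses nothing from $\RealInvF$. Second, for the active page‑table, the functional invariant $\RealInvF$ must carry the well‑typedness clause — the counterpart of the "$TC$" / well‑typed predicate of~\cite{DBLP:journals/jcs/GuancialeNDB16} — stating that the blocks addressed by the current translation base all carry the first‑level page‑table type; since $\domCritc$ classifies every block of page‑table type as critical, these blocks lie in $\domCritc(\RealState)$. Third, for the linked page‑tables, $\RealInvF$ must similarly carry the soundness clause that every page‑table descriptor of the active table points to a block of second‑level page‑table type, so those blocks are critical as well. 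Chaining the three cases yields $\monDom(\RealState) \subseteq \domCritc(\RealState)$; in HOL4 this amounts to a rewriting step that expands the definitions of the MMU model, of $\domCritc$ and of $\RealInvF$ and discharges the finitely many resulting membership goals using the typing facts already present in the development.

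The delicate point — and the reason the condition is isolated as an obligation rather than proved once and for all — is the choice of the concrete functional invariant. It must be strong enough to entail the inclusion, in particular to pin down the type not only of the active page‑table blocks but of \emph{all} transitively linked tables, while at the same time being "core‑view only" so that the first clause of Proof Obligation~\ref{po:monintor:inv:threeparts} holds (it may not look at cache state), being weak enough that Proof Obligation~\ref{po:type:delat:soundness} is compatible with it, and, crucially, being re‑establishable by every kernel handler after each privileged transition, as needed for Theorem~\ref{thm:topIntegrityPriv}. Making all of these hold simultaneously is the real obstacle; once the invariant is fixed, the present obligation itself is routine. A small subtlety to watch for is that $\monDom$ and $\domCritc$ must be phrased so the inclusion is robust under the dynamic memory‑ownership transfers the kernel performs, i.e.\ it must hold in every reachable state and not merely initially.
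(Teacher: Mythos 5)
Your proposal is correct and matches the paper's treatment: the paper discharges this obligation for the case-study hypervisor in one line, observing that the well-typedness clause of the functional invariant forces the MMU to use only blocks typed \emph{page-table}, which by the definition of $\domCritc$ (together with the statically critical coprocessor registers) are critical. Your more detailed decomposition into registers, active L1 blocks, and linked L2 blocks is exactly the unfolding the paper leaves implicit.
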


\item The application should not be allowed to directly affect the critical
 resources. This means there is no address writable in non-privileged mode that points to a critical resource.
\begin{prop}\label{vc:critical:no:write}
 If $\ \RealInvF(\RealState)\ $ and $\resource \in \domCritc(\RealState )$ then $\neg \RealMonitor(\RealState, \resource, \UMode, \wt)$
\end{prop}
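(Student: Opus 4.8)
The final statement to prove is Proof Obligation~\ref{vc:critical:no:write}: if $\RealInvF(\RealState)$ and $\resource \in \domCritc(\RealState)$ then $\neg \RealMonitor(\RealState, \resource, \UMode, \wt)$. This is not really a theorem to be proven from first principles but rather a proof obligation that must be discharged when instantiating the framework for a concrete kernel. So the plan is to describe how this obligation follows from the specific shape of the functional invariant $\RealInvF$ that a concrete kernel (here, the direct-paging hypervisor of~\cite{DBLP:journals/jcs/GuancialeNDB16}) is required to maintain.

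First I would unfold the definition of $\RealMonitor$ for memory resources, recalling that $\RealMonitor(\RealState, pa, \UMode, \wt)$ holds iff there exists a virtual address $va$ with $\RealMMU(\RealState, va, \UMode, \wt) = (pa, -)$; for the static resources (general registers, control registers, coprocessor state) the static part of the hardware monitor trivially denies non-privileged write access, so the only interesting case is $\resource = pa \in \PA$. Thus the obligation reduces to showing that no page-table entry reachable from the active master page-table grants a non-privileged writable mapping to any physical address in $\domCritc(\RealState)$. Here I would use the fact that $\domCritc(\RealState)$ for this hypervisor comprises exactly the kernel's own memory region $\RealMonitor$ data structures (the page-type and reference-counter array, the master page-table, the hypervisor code and static data) together with the guest memory blocks currently typed $L1$ or $L2$, i.e. the blocks holding active or passive page-tables.

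The key step is then to invoke the page-table soundness conditions that are part of $\RealInvF$ --- these are precisely the conditions already established in the cacheless verification (the predicate $\sound$ and the well-typedness conditions $TC$, $RC$ recalled in the earlier JCS paper of the excerpt, reformulated here in terms of the \emph{core-view} $\Exe$ so that they make sense in both models). Soundness guarantees: (i) every section/page descriptor granting non-privileged write permission points to a block typed $\Data$; (ii) the reference counter of any block with non-$\Data$ type is zero, hence no writable descriptor points to it; and (iii) the hypervisor's own memory (master page-table, data structures, code) lies outside guest memory and is mapped in all guest L1s with privileged-only access. Combining (i)--(iii): any $pa$ that is non-privileged-writable lies in a block typed $\Data$ and outside the hypervisor region, so $pa \notin \domCritc(\RealState)$; contrapositively, $\resource \in \domCritc(\RealState)$ implies $\neg\RealMonitor(\RealState, \resource, \UMode, \wt)$.

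The main obstacle is not logical depth but making the core-view reformulation airtight: the invariant $\RealInvF$ is stated to depend only on $\Exe(\RealState, \cdot)$ for critical resources, yet the MMU's translation $\RealMMU$ in the cache-aware model also reads page-table descriptors through the cache, so I must be careful that "the descriptors the MMU sees" and "the descriptors the invariant constrains" are the same objects --- this is exactly why $\Exe$ was defined to route through the cache on a hit, and why Proof Obligation~\ref{po:monintor:dom:critic} forces $\monDom(\RealState) \subseteq \domCritc(\RealState)$. So the proof would proceed: from $\RealInvF(\RealState)$ get $\monDom(\RealState) \subseteq \domCritc(\RealState)$ (Obligation~\ref{po:monintor:dom:critic}), hence the page-tables governing $\RealMMU$ are themselves critical and their core-view is the value constrained by the soundness part of $\RealInvF$; then apply the soundness argument above. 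I would expect this discharge to be a few hundred lines of HOL4, reusing almost verbatim the page-table soundness lemmas from the cacheless development (Lemmas analogous to \ref{lem:user-mem} and the $\sound$ predicate), the only genuinely new content being the $\Exe$-indirection bookkeeping.
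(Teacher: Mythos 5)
Your proposal is correct and matches the paper's treatment: Obligation~\ref{vc:critical:no:write} is not proved in the general framework but discharged in the case study, where property (ii) of the hypervisor invariant (page-table well-typedness/soundness: writable non-privileged descriptors point only to $\Data$-typed blocks inside guest memory) combined with $\domCritc(\RealState) = \hyperMem \cup \{pa \mid \hyperType(\RealState,pa)=pt\}$ yields exactly your contrapositive argument. Your extra care about the core-view/$\Exe$ indirection and Obligation~\ref{po:monintor:dom:critic} is consistent with how the paper handles the MMU reading descriptors through the cache, so no gap.
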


\item Finally, to prevent  the application from indirectly affecting the
  hardware monitor, e.g., by line eviction, the invariant must ensure coherency of critical resources.
\begin{prop}\label{vc:critical:coherency}
 If $\RealInvCC(\RealState)$  then $\coh(\RealState, \domCritc(\RealState ))$
\end{prop}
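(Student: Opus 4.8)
\textbf{Proof proposal for Proof Obligation~\ref{vc:critical:coherency}.}
The plan is to treat this as a discharge condition for a concrete kernel instantiation rather than as a standalone theorem: the framework only needs the coherency component $\RealInvCC$ of the cache-aware invariant to be chosen strong enough to pin down coherency of whatever set of resources is currently critical. The canonical way to meet this is to let $\RealInvCC(\RealState)$ carry, as an explicit conjunct, exactly the predicate $\coh(\RealState, \domCritc(\RealState))$ — together with any auxiliary clauses needed to make the full invariant inductive (for the hypervisor case study, e.g.\ that every block currently typed as a page table or owned by the hypervisor has a clean line whenever its cached value differs from memory). With such a definition the obligation follows immediately, and no reasoning about the cache-aware transition relation is required.

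First I would check that this choice is compatible with the other constraints the framework places on $\RealInvCC$, in particular Proof Obligation~\ref{po:monintor:inv:threeparts}, item~\ref{thm:invUserCCChoerence}, which demands that $\RealInvCC$ be a function only of coherency and of the core-view of critical resources. The conjunct $\coh(\RealState, \domCritc(\RealState))$ satisfies this by construction: it mentions only the hit/dirty/content status of the cache lines backing addresses in $\domCritc(\RealState)$, and by Proof Obligation~\ref{po:type:delat:soundness} the set $\domCritc(\RealState)$ is itself determined by $\Critical_{\critc}(\RealState)$. Hence two states that agree on the core-view of their critical resources and are both coherent there satisfy $\RealInvCC$ simultaneously. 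For the concrete kernel I would additionally spell out that $\domCritc$ — computed from the ownership/type metadata stored in kernel memory — and the coherency clause of $\RealInvCC$ range over exactly the same physical blocks, so that ``$r \in \domCritc(\RealState)$'' and ``$r$ is covered by the coherency clause'' coincide; this alignment is the only genuinely kernel-specific bookkeeping in the argument, and it is also what makes~\ref{vc:critical:coherency} usable in the proof of Lemma~\ref{lem:monitorsafe} and hence of Theorem~\ref{thm:topIntegrityUser}.

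The substantive effort, as expected, is not in this obligation but in \emph{preserving} the chosen $\RealInvCC$. For application steps this is precisely Proof Obligation~\ref{po:monintor:inv:threeparts}, item~\ref{thm:invUserCCChoerence}, combined with the derivability relation $\drvabl_{\UMode}$: the $D_{\emptyset}$ clause guarantees that an eviction either leaves memory unchanged (clean line) or writes back the cached value (dirty line), so a previously coherent critical block stays coherent after any non-privileged transition, provided the application has no write access to it — which is Proof Obligation~\ref{vc:critical:no:write}. For kernel steps, preservation is part of Theorem~\ref{thm:topIntegrityPriv}, and this is where the countermeasure does real work: a block entering the critical set on ownership transfer must already be coherent at that instant, which is ensured either by the always-cacheable allocation discipline or by selectively evicting the transferred block. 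The main obstacle I foresee is therefore not proving~\ref{vc:critical:coherency} itself — which is immediate once $\RealInvCC$ is defined appropriately — but choosing $\RealInvCC$ so that it is inductive under \emph{both} application and kernel steps without over-constraining the kernel; getting that balance right, and in particular handling the moment of ownership transfer, is where the bulk of the verification effort will go.
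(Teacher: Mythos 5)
Your reading is exactly the paper's: Obligation~\ref{vc:critical:coherency} is not proved in the general framework but discharged per countermeasure by choosing $\RealInvCC$ appropriately — the paper sets $\RealInvCC(\RealState) = \coh(\RealState, \domCritc(\RealState))$ for selective eviction and $\coh(\RealState, \Preg \cap \domCritc(\RealState))$ for always cacheability (using $\domCritc \subseteq \Preg$), in both cases making the obligation trivial, with the real work residing in preservation of the invariant. Your proposal matches this approach, including the compatibility check against Obligation~\ref{po:monintor:inv:threeparts}.\ref{thm:invUserCCChoerence} and the observation about where the genuine verification effort lies.
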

\end{enumerate}

We overapproximate the reachable states in non-privileged mode to prove
properties that do not depend on the software platform or countermeasure. This eliminates the need for revalidating properties of the instruction set 
(i.e., properties that must be verified for every possible instruction) in every new software scenario.
\begin{lemma} \label{lem:userswitchlem}
 For all $\RealState$ such that
 $\safe(\RealState)$
and $\coh(\RealState, \monDom(\RealState))$,\linebreak
 if  $\RealState  \RealTrs{\UMode} \RealState'$
 then 
 \footnote{In~\cite{khakpour2013machine} the authors proved a similar theorem for the HOL4 ARMv7 model provided by Anthony Fox et. al.~\cite{DBLP:conf/itp/FoxM10}.}
\begin{enumerate}
  \item\label{thm:trans2drvbl} $\RealState \drvabl_{\UMode} \RealState'$, i.e. non-privileged transitions from safe states can
    only lead into derivable states
  \item \label{thm:cnxtSwUsr}  if  $\Modecnt(\RealState')\!\neq\!\UMode$   then
    $\callConv{\RealState'}$, i.e., the mode can only change by entering an
    exception handler
\end{enumerate}
\end{lemma}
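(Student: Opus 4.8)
\textbf{Proof plan for Lemma~\ref{lem:userswitchlem}.}

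The plan is to reduce both claims to facts about a \emph{single} instruction step of the cache-aware model, since the hypothesis $\RealState \RealTrs{\UMode} \RealState'$ is exactly one such step. I would first unfold the definition of the cache-aware transition relation for the non-privileged case. A single instruction in this architecture decomposes into (i) possibly an instruction fetch, (ii) a bounded sequence of data-memory operations (reads, writes, cache line fills and evictions, write-backs), and (iii) an update of registers and control registers that does not touch the coprocessor state $\cop$ when executed in $\UMode$ (static access control forbids non-privileged writes to $\cop$). For part~\ref{thm:trans2drvbl} the goal is to show that the net effect on the pair $\tuple{\mem,\cache}$ falls within the derivability relation $\drvabl_{\UMode}$; for part~\ref{thm:cnxtSwUsr} the goal is to show that the only way $\Modecnt(\RealState')\neq\UMode$ can arise is by taking an exception, which by the architecture lands in the exception vector with privileged mode, i.e. $\callConv{\RealState'}$.

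For part~\ref{thm:trans2drvbl} I would argue address-by-address. Fix $pa\in\PA$. There are three cases depending on what the instruction does to $pa$. If the instruction performs no explicit access to $pa$, then the only changes to the $pa$-slice of $\tuple{\mem,\cache}$ come from incidental cache maintenance: an eviction of the line holding $pa$ (which, if dirty, writes $\cdSl{\RealState}{pa}$ back to memory and in any case clears the hit bit) or a fill triggered by another address mapped to the same line --- but a fill for $pa$ itself only happens on an access to $pa$. This is precisely $D_\emptyset(\RealState,\RealState',pa)$ together, when a read-triggered fill of $pa$ occurs as a side effect, with $D_\rdd$. If the instruction reads $pa$, then $\RealMonitor(\RealState,pa,\UMode,\rdd)$ holds (the MMU granted the access --- here I use $\safe(\RealState)$ and $\coh(\RealState,\monDom(\RealState))$ to guarantee the MMU's own view of the page tables is the coherent one, so the permission check is meaningful and is the one recorded in $\RealMonitor$), and the cache can change only by a fill loading $M(pa)$ into a clean line, which is $D_\rdd(\RealState,\RealState',pa)$. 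If the instruction writes $pa$, then $\RealMonitor(\RealState,pa,\UMode,\wt)$ holds; the write either hits a cacheable alias (making the line dirty) or goes through an uncacheable alias straight to memory, which is exactly the disjunction recorded in $D_\wt(\RealState,\RealState',pa)$. The equality $\RealState.\cop=\RealState'.\cop$ follows from static access control. Finally, because a single instruction issues only boundedly many data operations, composing the per-operation facts above still yields a relation contained in $\drvabl_{\UMode}$ (derivability is designed to be closed under such finite composition of non-privileged effects); this closure argument is the one place where I would need to be careful that, e.g., a fill followed by a dirtying write, or an eviction followed by a re-fill, is still captured --- derivability permits $\way{\RealState'}{pa}\neq\way{\RealState}{pa}$ in each clause precisely to absorb these.

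For part~\ref{thm:cnxtSwUsr} the argument is shorter: inspect the cases in which a non-privileged step changes the mode. In the ARM-flavoured model these are exactly the exception-raising transitions (undefined instruction, software interrupt, a memory access the MMU denies, or an asynchronous interrupt); a normal data-processing or memory instruction leaves $\psrs$ with $\UMode$ encoded. In every exception case the architecture sets the mode to privileged and the program counter to the corresponding entry of the exception vector table, which is the definition of $\callConv{\RealState'}$. Here $\safe(\RealState)$ is again used so that the MMU configuration that decides whether a memory access faults is the one reflected in $\RealMonitor$, ruling out a ``phantom'' change of control flow via an incoherent page table.

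The main obstacle I anticipate is not the mode-switch case but the closure step in part~\ref{thm:trans2drvbl}: proving that the composition of the individual memory/cache operations of one instruction stays inside $\drvabl_{\UMode}$, uniformly over all instructions of the ISA. This is the analogue of the instruction-set-wide verification effort done in~\cite{khakpour2013machine} for the cacheless HOL4 ARMv7 model, now lifted to the cache-aware model; it requires a detailed model of the cache (replacement policy, write-allocate behaviour, write-back vs.\ bypass for uncacheable aliases) and a case analysis over every instruction's memory footprint. Everything downstream --- Lemma~\ref{lem:monitorsafe}, Theorem~\ref{thm:topIntegrityUser} --- then uses only the abstract relation $\drvabl_{\UMode}$ and never re-opens the instruction set, which is exactly the payoff of isolating this lemma.
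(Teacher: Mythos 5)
Your plan is correct and matches what the paper intends: the paper gives no proof of Lemma~\ref{lem:userswitchlem} at all, but states it as a software-independent platform property whose verification is an instruction-set-wide case analysis on the cache-aware model, deferring (via the footnote) to the analogous cacheless result of~\cite{khakpour2013machine} --- exactly the decomposition and the ``main obstacle'' you identify. One refinement worth making explicit: the reason $\safe(\RealState)$ is a hypothesis is not merely that it makes the permission check ``meaningful,'' but that a single ARM instruction can issue a \emph{series} of writes, and without safety an early write could alter the page tables mid-instruction and grant later writes permissions not recorded in $\RealMonitor(\RealState,\cdot)$, so the per-address case analysis at the initial state would be unsound (this is the same point the thesis makes after Property~\ref{prop:User-No-Exfiltration} in Paper~A); correspondingly, $\coh(\RealState,\monDom(\RealState))$ is needed because the MMU fetches descriptors through the cache, so an eviction during the instruction could otherwise change translations even without a write.
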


Corollary~\ref{thm:trns2drvbl} shows that derivability can be used
as sound overapproximation of the behaviour of non-privileged transitions
if the invariant holds.
\begin{corollary}
\label{thm:trns2drvbl}
 For all $\RealState$ if $\RealMonIntSW(\RealState)$ and $\RealState
 \TlsTrs{\UMode} \RealState'$ then $\RealState \drvabl_{\UMode} \RealState'$
\end{corollary}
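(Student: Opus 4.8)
The goal is to prove Corollary~\ref{thm:trns2drvbl}, which states that derivability is a sound overapproximation of non-privileged transitions whenever the invariant $\RealMonIntSW$ holds. The plan is to reduce this to Lemma~\ref{lem:userswitchlem}, part~\ref{thm:trans2drvbl}, which already gives exactly this conclusion, but under the two hypotheses $\safe(\RealState)$ and $\coh(\RealState, \monDom(\RealState))$. So the core task is simply to discharge these two hypotheses from the single hypothesis $\RealMonIntSW(\RealState)$.

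First I would invoke Lemma~\ref{lem:monitorsafe}, which directly gives $\safe(\RealState)$ from $\RealMonIntSW(\RealState)$; this requires no further work since that lemma is stated earlier. Second, I would establish $\coh(\RealState, \monDom(\RealState))$. For this I would chain two facts: Proof Obligation~\ref{vc:critical:coherency} gives $\coh(\RealState, \domCritc(\RealState))$ from the coherency component $\RealInvCC(\RealState)$ of the invariant (extracted via the decomposition in Proof Obligation~\ref{po:monintor:inv:threeparts}), and Proof Obligation~\ref{po:monintor:dom:critic} gives $\monDom(\RealState) \subseteq \domCritc(\RealState)$ from the functional component $\RealInvF(\RealState)$. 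Since coherency of a set of addresses is evidently monotone with respect to subset inclusion — $\coh$ quantifies universally over addresses in the set — coherency of $\domCritc(\RealState)$ immediately implies coherency of the smaller set $\monDom(\RealState)$.

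With both hypotheses of Lemma~\ref{lem:userswitchlem} in hand, part~\ref{thm:trans2drvbl} of that lemma yields $\RealState \drvabl_{\UMode} \RealState'$ from $\RealState \RealTrs{\UMode} \RealState'$, which is exactly the statement of the corollary (noting $\RealState \TlsTrs{\UMode} \RealState'$ is the same transition relation as $\RealState \RealTrs{\UMode} \RealState'$ for the cache-aware model).

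There is no real obstacle here: the corollary is an easy consequence of the already-stated Lemma~\ref{lem:userswitchlem} together with the invariant-decomposition obligations. The only mildly delicate point is the monotonicity of $\coh$ under subset inclusion, but this is immediate from its definition as a universal quantification over the addresses of the set. All the genuine work — the instruction-set-level reasoning behind Lemma~\ref{lem:userswitchlem}, and the architecture-level reasoning behind Lemma~\ref{lem:monitorsafe} — has been deferred to those earlier results, so the proof of the corollary itself is a short composition of lemmas and proof obligations.
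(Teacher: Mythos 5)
Your proof is correct and follows exactly the paper's own route: it discharges the two hypotheses of Lemma~\ref{lem:userswitchlem}.\ref{thm:trans2drvbl} via Lemma~\ref{lem:monitorsafe}, Obligation~\ref{vc:critical:coherency}, and Obligation~\ref{po:monintor:dom:critic}, with the (trivial) monotonicity of $\coh$ under subset inclusion bridging $\domCritc(\RealState)$ to $\monDom(\RealState)$. Nothing is missing.
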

\begin{prove}{}
The statement directly follows by
Lemma~\ref{lem:userswitchlem}.\ref{thm:trans2drvbl} which is enabled by
Lemma~\ref{lem:monitorsafe}, 
Obligation~\ref{po:monintor:dom:critic}, and 
Obligation~\ref{vc:critical:coherency}.
\end{prove}

We now proceed to show that the hardware monitor enforces access policy
correctly; i.e., the application transitions cannot modify critical resources.
\begin{lemma}
\label{thm:User-No-Exfiltration}
  For all $\RealState, \RealState'$ such that
  $\RealMonIntSW(\RealState)$  if 
  $\RealState \drvabl_{\UMode} \RealState'$
  then $\Critical_{\critc}(\RealState) = \Critical_{\critc}(\RealState')$
\end{lemma}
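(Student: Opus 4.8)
\textbf{Proof plan for Lemma~\ref{thm:User-No-Exfiltration}.}
The statement says that if a non-privileged derivable step is taken from a state satisfying the invariant, then the \emph{core-view} of all critical resources is unchanged. The plan is to fix an arbitrary critical resource $\resource \in \domCritc(\RealState)$ and show $\Exe(\RealState, \resource) = \Exe(\RealState', \resource)$, since by definition $\Critical_{\critc}(\SeqState) = \{(a, \Exe(\SeqState, a)) \mid a \in \domCritc(\SeqState)\}$ and, by Obligation~\ref{po:type:delat:soundness}, $\domCritc(\RealState) = \domCritc(\RealState')$ whenever the critical values agree (so I will need to close that circularity carefully — see below). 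First I would split on whether $\resource$ is a memory resource in $\PA$ or one of the statically-typed resources (registers, control registers, coprocessor state). For the latter: derivability gives $\RealState.\cop = \RealState'.\cop$ directly, and the general-purpose/control registers are not memory, so the only way a non-privileged step changes them is via the ordinary register file update — but this is subsumed by the architectural overapproximation, and in fact for these resources derivability as stated only constrains $\cop$; the register components must be argued from the fact that critical registers are exactly those the hardware monitor forbids non-privileged writes to (Obligation~\ref{vc:critical:no:write}) together with the static part of $\RealMonitor$. I expect the register case to be essentially bookkeeping once the framework lemmas are in place.

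The core of the argument is the memory case. Fix $pa \in \domCritc(\RealState) \cap \PA$. By derivability, one of $D_{\emptyset}$, $D_{\rdd}$, $D_{\wt}$ holds for $pa$ (with $\RealState \drvabl_{\UMode} \RealState'$). The key observation is that $D_{\wt}(\RealState, \RealState', pa)$ requires $\RealMonitor(\RealState, pa, \UMode, \wt)$, which is impossible by Obligation~\ref{vc:critical:no:write} applied to $\RealInvF(\RealState)$ (available since $\RealMonIntSW(\RealState)$ implies $\RealInvF(\RealState)$ via Obligation~\ref{po:monintor:inv:threeparts}). So only $D_{\emptyset}$ or $D_{\rdd}$ can hold. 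In the $D_{\rdd}$ case the memory is explicitly unchanged ($M'(pa) = M(pa)$), and any cache-way change loads the \emph{memory} value into the cache, so the \emph{core-view} $\Exe$ is preserved. In the $D_{\emptyset}$ case the memory may change, but only by writing back the \emph{already-cached} value of $pa$ when the (dirty) line is evicted: $M'(pa) = \cdSl{\RealState}{pa}$; and after eviction there is a cache miss, so $\Exe(\RealState', pa) = M'(pa) = \cdSl{\RealState}{pa}$, while $\Exe(\RealState, pa) = \cdSl{\RealState}{pa}$ by the hit case — hence equal. The remaining sub-case of $D_{\emptyset}$, where the line is not evicted and the memory does not change, trivially preserves both. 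This is exactly where coherency of the critical resources enters: I must also rule out that a \emph{clean} line holding a value different from memory gets evicted (which would change $\Exe$ silently). But that situation is excluded precisely by $\coh(\RealState, \domCritc(\RealState))$, which follows from $\RealInvCC(\RealState)$ via Obligation~\ref{vc:critical:coherency}. So the coherency invariant is the linchpin of the memory case.

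The main obstacle I anticipate is the interaction between $\domCritc$ being state-dependent and the claim being about equality of $\Critical_{\critc}$. Concretely, to even state ``for all $pa \in \domCritc(\RealState')$, the value agrees'' I want $\domCritc(\RealState') = \domCritc(\RealState)$, but Obligation~\ref{po:type:delat:soundness} only delivers that \emph{after} I know the critical values agree. The clean way around this is to prove the value-equality for every $pa \in \domCritc(\RealState)$ first (the argument above never uses anything about $\domCritc(\RealState')$), conclude $\domCritc(\RealState') = \domCritc(\RealState)$ from Obligation~\ref{po:type:delat:soundness}, and only then assemble $\Critical_{\critc}(\RealState) = \Critical_{\critc}(\RealState')$ as sets of pairs. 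A secondary subtlety: I must make sure the \emph{core-view} $\Exe$ is the right notion throughout — it is designed to agree on both models — and that $\monDom(\RealState) \subseteq \domCritc(\RealState)$ (Obligation~\ref{po:monintor:dom:critic}) so that the page-table descriptors the MMU reads are themselves coherent and cannot be perturbed by the step; this is needed so that the derivability classification for $pa$ (which depends on $\RealMonitor(\RealState, pa, \UMode, \wt)$, a function of $\monDom$) is stable. I would invoke Lemma~\ref{lem:monitorsafe} (safety from the invariant) to keep the monitor configuration fixed across the transition, matching the hypotheses under which derivability was derived (Corollary~\ref{thm:trns2drvbl}). With these pieces the proof is a case analysis of three or four lines per branch; the conceptual work is entirely in lining up the obligations, not in computation.
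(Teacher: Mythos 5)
Your proposal is correct and follows essentially the same route as the paper's proof: rule out direct modification of critical resources via Obligation~\ref{vc:critical:no:write} (which eliminates the $D_{\wt}$ case of derivability), and rule out indirect modification through eviction of a clean-but-incoherent line via the coherency obligation~\ref{vc:critical:coherency}. The paper's proof is a three-sentence sketch of exactly this argument; your case analysis on $D_{\emptyset}/D_{\rdd}/D_{\wt}$ and your resolution of the $\domCritc$ circularity via Obligation~\ref{po:type:delat:soundness} are the details it leaves implicit.
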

\begin{prove} {}
Since $\RealMonIntSW(\RealState)$ holds, the hardware monitor prohibits
writable accesses of the application to critical resources (Obligation~\ref{vc:critical:no:write}) and $\safe(\RealState)$ holds (Lemma~\ref{lem:monitorsafe}).
Also,  derivability  shows that the application can directly change only
resources that are writable according to the monitor. Thus, the application
 cannot directly update $\domCritc(\RealState)$.
Beside, the invariant guarantees coherency of critical resources in
$\RealState$ (Obligation~\ref{vc:critical:coherency}). This prevents indirect modification of these resources.
\end{prove}

%
To complete the proof of Theorem~\ref{thm:topIntegrityUser} we additionally
need to show that coherency of critical resources
(Lemma~\ref{thm:coherencyUser}) 
 and invariant (Lemma~\ref{thm:invUser})
are preserved by non-privileged transitions. 
\begin{lemma}
\label{thm:coherencyUser}
 For all $\RealState$ if $\RealMonIntSW(\RealState)$ and $\RealState \drvabl_{\UMode} \RealState'\ $ then $\coh(\RealState', \domCritc(\RealState'))$
\end{lemma}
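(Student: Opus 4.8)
\textbf{Proof plan for Lemma~\ref{thm:coherencyUser}.}
The goal is to show that coherency of the critical resources survives an arbitrary non-privileged derivation step: assuming $\RealMonIntSW(\RealState)$ and $\RealState \drvabl_{\UMode} \RealState'$, we must establish $\coh(\RealState', \domCritc(\RealState'))$. The plan is to argue pointwise over $pa \in \domCritc(\RealState')$, using the fact that derivability only permits one of the three local behaviours $D_{\emptyset}$, $D_{\rdd}$, $D_{\wt}$ at each physical address, and to combine this with the structural guarantees of the invariant.

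First I would reduce the set of addresses to consider. By Obligation~\ref{po:type:delat:soundness} and Lemma~\ref{thm:User-No-Exfiltration} (whose hypotheses are met since $\RealMonIntSW(\RealState)$ and $\RealState \drvabl_{\UMode} \RealState'$), we have $\Critical_{\critc}(\RealState) = \Critical_{\critc}(\RealState')$ and hence $\domCritc(\RealState) = \domCritc(\RealState')$. So it suffices to prove coherency of $\domCritc(\RealState)$ in $\RealState'$. Next, Obligation~\ref{vc:critical:coherency} gives $\coh(\RealState, \domCritc(\RealState))$, and Obligation~\ref{vc:critical:no:write} tells us that no address in $\domCritc(\RealState)$ is writable in non-privileged mode in $\RealState$ (note $\RealInvF(\RealState)$ follows from $\RealMonIntSW(\RealState)$ via Obligation~\ref{po:monintor:inv:threeparts}). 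Therefore, for each critical $pa$, the case $D_{\wt}(\RealState,\RealState',pa)$ is ruled out, so only $D_{\emptyset}$ or $D_{\rdd}$ applies.

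Then I would do the pointwise case analysis. Fix $pa \in \domCritc(\RealState)$; by $\coh(\RealState,\domCritc(\RealState))$ we know that if $\fhit{\RealState}{pa}$ and $\RealState.\mem(pa) \neq \cdSl{\RealState}{pa}$ then $\drt{\RealState}{pa}$. We must show the same implication in $\RealState'$. In the $D_{\rdd}$ case, the memory is unchanged ($M'(pa)=M(pa)$) and the cache line either is unchanged or is replaced by a clean fill equal to $M(pa)$; in the former subcase coherency is inherited verbatim, in the latter subcase the new line is clean but also equals memory, so the premise of the coherency implication is false and it holds vacuously. In the $D_{\emptyset}$ case, if the memory changed it changed to the (dirty) cached value and the line becomes a miss, or, if the way changed, the resulting line is a miss; a miss trivially satisfies coherency since $\fhit{\RealState'}{pa}$ is false; if neither the memory nor the way changed, coherency is again inherited directly from $\RealState$. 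Assembling these subcases gives $\coh(\RealState',\domCritc(\RealState')) = \coh(\RealState',\domCritc(\RealState))$.

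The main obstacle I anticipate is the bookkeeping in the $D_{\emptyset}$ case when combined with the possibility that the cache way is reallocated to a \emph{different} physical address while still leaving a stale clean copy of $pa$ reachable — i.e., making sure the formal definition of $\way{\RealState'}{pa}$ and the $\neg\fhit{\RealState'}{pa}$ consequence in Figure~\ref{fig:derivability} really does preclude a clean-but-incoherent line surviving for $pa$. This requires carefully unfolding $D_{\emptyset}(\RealState,\RealState',pa) \equiv (M'(pa)\neq M(pa) \Rightarrow \drt{\RealState}{pa} \wedge M'(pa)=\cdSl{\RealState}{pa}) \wedge (\way{\RealState'}{pa}\neq\way{\RealState}{pa} \Rightarrow \neg\fhit{\RealState'}{pa} \wedge (\drt{\RealState}{pa}\Rightarrow M'(pa)=\cdSl{\RealState}{pa}))$ and checking that every branch either yields a miss or preserves the memory-equals-cache relationship; everything else is routine propagation of the invariant and the derivability predicate.
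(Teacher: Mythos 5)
Your proposal is correct and follows essentially the same route as the paper's (very terse) proof, which likewise rests on Obligation~\ref{vc:critical:coherency} (initial coherency of critical resources) and Obligation~\ref{vc:critical:no:write} (no non-privileged writable alias to critical resources, ruling out $D_{\wt}$ and hence any uncacheable write that could break coherency); your pointwise case analysis over $D_{\emptyset}$ and $D_{\rdd}$ just makes explicit what the paper leaves implicit. Your additional step reducing $\domCritc(\RealState')$ to $\domCritc(\RealState)$ via Lemma~\ref{thm:User-No-Exfiltration} and Obligation~\ref{po:type:delat:soundness} is a detail the paper's sketch glosses over but that is indeed needed, and it introduces no circularity.
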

\begin{prove}{}
The proof depends on \po~\ref{vc:critical:coherency} and
\po~\ref{vc:critical:no:write}: coherency can be invalidated only through
non-cacheable writes, which are not possible since  
aliases to critical resources that  are writable by non-privileged mode are forbidden.
\end{prove}

\begin{lemma}
\label{thm:invUser}
 For all $\RealState$ and $\RealState'$ if $\RealMonIntSW(\RealState)$ and $\RealState \TlsTrs{\UMode} \RealState'$ then $\RealInvF(\RealState')$
\end{lemma}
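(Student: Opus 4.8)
The plan is to prove Lemma~\ref{thm:invUser} by the standard decomposition of the invariant already set up in Obligation~\ref{po:monintor:inv:threeparts}, namely $\RealMonIntSW = \RealInvF \wedge \RealInvCC \wedge \RealInvCMC$, and to focus here only on the functional part $\RealInvF$ (the remaining two parts are handled by Lemma~\ref{thm:coherencyUser} together with the coherency argument for $\RealInvCC$, and by Obligation~\ref{po:monintor:inv:threeparts}.\ref{thm:invUserCMChoerence} for $\RealInvCMC$, so that collecting all three gives the full statement $\RealMonIntSW(\RealState')$). First I would invoke Corollary~\ref{thm:trns2drvbl} to replace the concrete transition $\RealState \TlsTrs{\UMode} \RealState'$ with the architectural overapproximation $\RealState \drvabl_{\UMode} \RealState'$; this is legitimate precisely because $\RealMonIntSW(\RealState)$ holds, which is the hypothesis of the lemma. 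From this point on, no reasoning about the cache-aware instruction semantics is needed --- everything is phrased in terms of derivability.

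Next I would apply Lemma~\ref{thm:User-No-Exfiltration}, whose hypotheses ($\RealMonIntSW(\RealState)$ and $\RealState \drvabl_{\UMode} \RealState'$) are now both available, to conclude $\Critical_{\critc}(\RealState) = \Critical_{\critc}(\RealState')$: the values of all critical resources are unchanged across an application step. The key observation is that $\RealInvF$ was designed to depend only on the core-view of the critical resources --- this is exactly Obligation~\ref{po:monintor:inv:threeparts}.\ref{thm:invUserDrvbl}, which states that $\Critical_{\critc}(\RealState) = \Critical_{\critc}(\RealState') \implies \RealInvF(\RealState) = \RealInvF(\RealState')$. Combining this with $\RealInvF(\RealState)$ (extracted from the hypothesis $\RealMonIntSW(\RealState)$ via the decomposition) yields $\RealInvF(\RealState')$, which is the claim.

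The proof is therefore a short chaining of three already-established facts (Corollary~\ref{thm:trns2drvbl}, Lemma~\ref{thm:User-No-Exfiltration}, Obligation~\ref{po:monintor:inv:threeparts}.\ref{thm:invUserDrvbl}), and the only genuine content has been pushed into those lemmas and obligations. The main obstacle --- or rather, the subtle design point --- is ensuring that the split of $\RealMonIntSW$ into $\RealInvF$, $\RealInvCC$, $\RealInvCMC$ is actually realisable for a concrete kernel: $\RealInvF$ must be expressible purely in terms of $\Critical_{\critc}$, so that it is insensitive to the cache-only perturbations (line fills, evictions of clean lines) that derivability permits but that do not alter any critical value. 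Any functional property the kernel relies on that secretly depends on a non-critical or cache-local quantity would have to be moved into $\RealInvCMC$ and re-proved under the $\drvabl_{\UMode}$ relation instead; getting this partition right for the hypervisor case study is where the real work lies, but at the level of this lemma it is simply assumed via Obligation~\ref{po:monintor:inv:threeparts}.
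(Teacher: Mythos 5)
Your proof is correct and follows exactly the paper's own argument: the paper likewise chains Corollary~\ref{thm:trns2drvbl} (derivability), Lemma~\ref{thm:User-No-Exfiltration} (preservation of the critical-resource values), and Obligation~\ref{po:monintor:inv:threeparts}.\ref{thm:invUserDrvbl} (insensitivity of $\RealInvF$ to anything but $\Critical_{\critc}$) to conclude $\RealInvF(\RealState')$. Your additional remarks on where the other conjuncts of $\RealMonIntSW$ are discharged are consistent with how the paper assembles Theorem~\ref{thm:topIntegrityUser}.
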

\begin{prove}{}
To show that non-privileged transitions preserve the invariant we use
Corollary~\ref{thm:trns2drvbl},
Lemma~\ref{thm:User-No-Exfiltration}, and Obligation~\ref{po:monintor:inv:threeparts}.\ref{thm:invUserDrvbl}.
\end{prove}

Finally, Lemma~\ref{lem:userswitchlem}.\ref{thm:cnxtSwUsr},
Corollary~\ref{thm:trns2drvbl},
Lemma~\ref{thm:User-No-Exfiltration}, Lemma~\ref{thm:coherencyUser}, Lemma~\ref{thm:invUser}, and \po~\ref{po:monintor:inv:threeparts}
imply Theorem~\ref{thm:topIntegrityUser}, completing the proof of integrity for non-privileged transitions.
%
%
%
%


 \subsection{Integrity: Kernel Level (Theorem~\ref{thm:topIntegrityPriv})}
\label{subsec:trustedIntegrity}
The proof of kernel integrity (Theorem~\ref{thm:topIntegrityPriv}) is done by:
(i) showing that the kernel preserves the invariant $\SeqMonIntSW$ in the cacheless model (\po~\ref{lem:invPrevsISA});
(ii) requiring a set of general proof obligations (e.g., the kernel does not jump
outside its address space)
that must be verified at the cacheless level (\po~\ref{lem:kernelCFGandVM}),
(iii) demonstrating that the simulation relation permits transferring the
invariant from the cache-aware model to the cacheless one
(Obligation~\ref{vc:isaInitStsfyInv}),
(iv) verifying a refinement between the cacheless model and the cache-aware
model (Lemma~\ref{lem:privIntegByMiniInv}) assuming correctness of the
countermeasure (\po~\ref{vc:kernel:resource:value:inmodels}), and
finally 
(v) proving that the refinement theorem allow to transfer the invariant from the
cacheless model to the cache-aware model (Lemma~\ref{lem:privInvEnd}). Figure~\ref{fig:simR}
indicates our approach to prove kernel integrity.

\begin{figure}
\centering
 \includegraphics[width=0.7\linewidth]{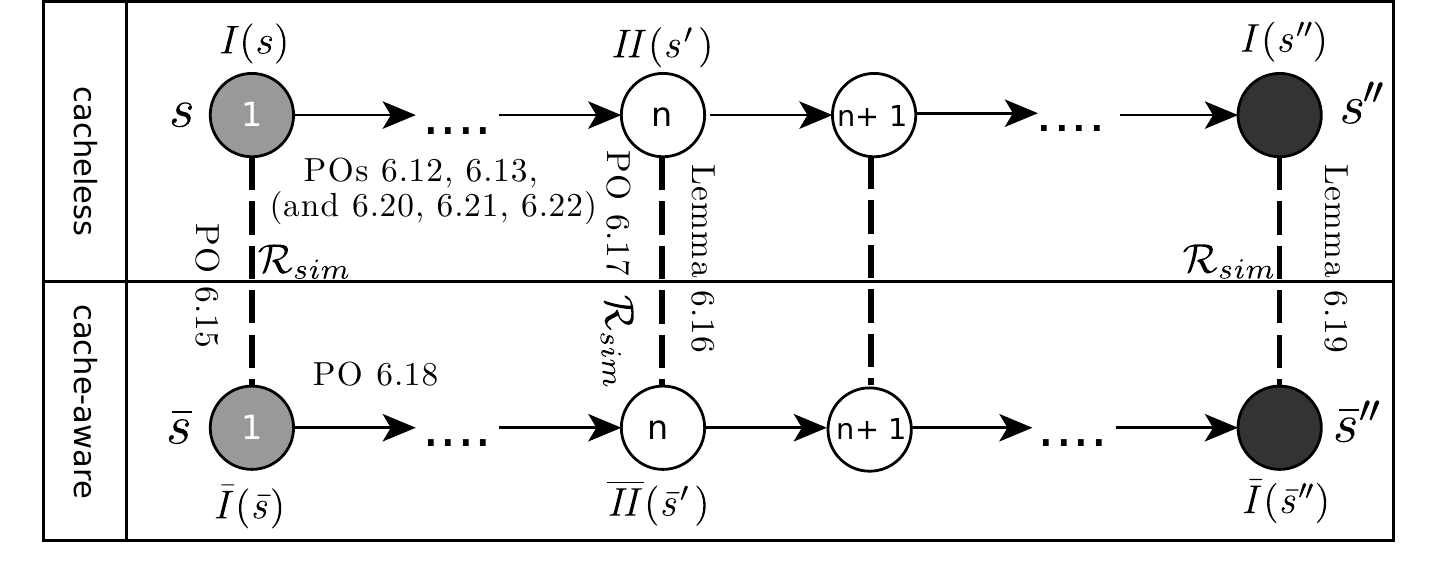}
 \caption{Verification of kernel integrity}  
 \label{fig:simR}
\end{figure}

%
Our goal is to lift the code verification to the cacheless model, so that existing tools can be used to discharge proof obligations.
The first proof obligation requires to show that the kernel is functionally
 correct when there is no cache:
\begin{prop}
\label{lem:invPrevsISA}
For all $\SeqState$ such that $\SeqMonIntSW(\SeqState)$ and
$\callConv{\SeqState}$ if $\SeqState \WkTrs{} \SeqState'$
then $\SeqMonIntSW(\SeqState')$.
\end{prop}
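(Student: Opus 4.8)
The plan is to establish Proof Obligation~\ref{lem:invPrevsISA} not as a standalone theorem but as a \emph{discharge condition} on the cacheless model that the kernel implementation must satisfy, and to show how it is proved for the concrete kernel at hand. First I would unfold the invariant $\SeqMonIntSW$ into its functional and coherency-free constituents, following the decomposition imposed by Obligation~\ref{po:monintor:inv:threeparts}: in the cacheless model the caches are absent, so $\RealInvCC$ and $\RealInvCMC$ collapse to trivially-true predicates, and $\SeqMonIntSW(\SeqState)$ reduces essentially to $\RealInvF(\SeqState)$ together with the page-table soundness, stack, and countermeasure-functional conditions that make up the software-dependent part of the invariant. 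This reduction is what makes the obligation tractable: it becomes a purely functional correctness statement about handler code, with no cache reasoning left in it.

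Next I would proceed handler by handler, exactly as in the TLS consistency argument of Section~\ref{sec:tlsproof} (Lemmas~\ref{lem:inv-sound-type}, \ref{Lemma:ref_cnt_counters}, \ref{Lemma:unchanged_counters}) and in the monitor verification of paper~\ref{paper:esorics} (Theorem~\ref{lem:monitor:invariant:hyper}). For each API call $H_r$ I would use the symbolic-execution capability of HOL4 to compute its effect on the cacheless state, show that it touches only a bounded and statically identifiable set of resources (e.g.\ for $\mmap_{\LTwoType}$ only one page-table entry and the reference counters of one physical block), and then establish that the subset of the invariant concerning \emph{untouched} blocks is preserved for free, while the invariant concerning \emph{touched} blocks follows from the run-time checks the handler performs before committing. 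The weak-transition relation $\WkTrs{}$ hides the internal privileged steps of the handler, and since the kernel is non-preemptive and raises no nested exceptions, it suffices to reason about the composed state transformer $H_r$ as a single atomic step, discharging the premise $\callConv{\SeqState}$ (which says we start at an exception-table entry point) to know exactly which handler is being invoked from $req(\SeqState)$.

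The part I expect to be the main obstacle is, as in the earlier papers, the preservation of the \emph{page-table soundness} component of the invariant across those handlers that change a block's type or its reference counters --- in particular showing that a newly validated page table cannot grant the application a writable alias to another page table or to kernel memory, which is precisely what makes the hardware monitor stay \emph{safe} in the sense of Definition~\ref{def:monitorsafe}. This requires the reference-counter bookkeeping lemmas (analogues of Lemmas~\ref{lem:inv-sound-type} and~\ref{Lemma:ref_cnt_counters}) to be re-proved in the present setting, and the additional countermeasure-functional clauses --- e.g.\ that always-cacheable page tables are allocated in the fixed cacheable region, or that the selective-eviction handler actually evicts every attacker-touched critical line --- must be threaded through the same case analysis. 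Once these are in place, finite-arithmetic overflow side conditions on the counters and the straightforward ``frame'' reasoning about untouched resources complete the obligation; because everything here lives in the cacheless model, the bulk of this work can be offloaded to the binary-analysis and weakest-precondition tooling (BAP, STP, the relational tool of~\cite{DBLP:conf/ccs/BalliuDG14}) rather than being done by hand in HOL4.
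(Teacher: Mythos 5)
Your proposal is correct and matches the paper's treatment: Obligation~\ref{lem:invPrevsISA} is indeed a per-kernel discharge condition stated purely on the cacheless model, and the paper discharges it for the case-study hypervisor exactly as you describe --- by reusing the handler-by-handler invariant-preservation argument of the earlier verification (Lemma~\ref{lem:inv-hyper} with its supporting reference-counter and type-change lemmas), extended with the countermeasure-specific clauses of the functional invariant. No gap to report.
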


To enable the verification of simulation, it must be guaranteed that
the kernel's behaviour is predictable:
(\ref{lem:kernelCFGandVM}.1) the kernel cannot execute instructions outside
the critical region (this property is usually verified by extracting the control
flow graph of the kernel and demonstrating that it is contained in a static
region of the memory), and
(\ref{lem:kernelCFGandVM}.2) the kernel does not change page-tables entries that
maps the kernel virtual memory (here this region is assumed to be static and identified by $\RegionKernelV$) which must be accessed only using cacheable aliases.
\begin{prop}
\label{lem:kernelCFGandVM}
For all $\SeqState$ 
such that $\SeqMonIntSW(\SeqState)$ and
$\callConv{\SeqState}$ if $\SeqState \WTrs{\TMode} \SeqState'$ then
\begin{enumerate}
\item
\label{lem:kernelPcReange}
if 
$\RealMMU(\SeqState', \SeqState'\!.\ureg.\pc, \TMode, \acc) = (pa, -)$ then $pa \in \domCritc(\SeqState)$
\item
\label{lem:ptFixedPriv}
for every $va \in \RegionKernelV$ and $\acc$ if $\RealMMU(\SeqState, va, \TMode, \acc) = (pa,c)$
then \linebreak $\RealMMU(\SeqState', va, \TMode, \acc) = (pa,c)$ and $c=1$
\end{enumerate}
\end{prop}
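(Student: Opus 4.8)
The plan is to discharge this proof obligation for a concrete kernel by a binary-level analysis carried out entirely on the cacheless model, so that standard weakest-precondition and relational tools (e.g.\ \cite{Brumley:2011:BBA:2032305.2032342,Song:2008:BNA:1496255.1496257,DBLP:conf/ccs/BalliuDG14}) can be applied, rather than reasoning directly over the cache-aware transition system. Since $\SeqState \WTrs{\TMode} \SeqState'$ is a chain of privileged single steps, the argument proceeds by induction on the length of that chain, which forces me to strengthen $\SeqMonIntSW$ into a during-handler invariant $\SeqMonIntSWPriv$: it need not be preserved across a whole handler but only holds while the kernel executes. One then shows $\SeqMonIntSWPriv$ is established by $\SeqMonIntSW(\SeqState) \wedge \callConv{\SeqState}$, is preserved by every privileged step, and at each reachable $\SeqState'$ entails both claims of the obligation.

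For the first claim (on the program counter), I would extract the control-flow graph of the kernel image, resolving the indirect jumps introduced by function epilogues with the SMT-based procedure already used for the hypervisor's binary verification~\cite{DBLP:journals/jcs/GuancialeNDB16}. The CFG yields a statically known set $\mathbf{K_c}$ of physical code addresses, and I add to $\SeqMonIntSWPriv$ the facts that $\mathbf{K_c} \subseteq \domCritc(\SeqState)$, that the master page-table and every linked page-table map $\mathbf{K_c}$ with privileged-execute rights, and---crucially---that no physical block outside $\mathbf{K_c}$ is mapped executable in privileged mode. Starting from the exception-vector entry guaranteed by $\callConv{\SeqState}$ and following CFG edges, every reachable value of $\SeqState'\!.\ureg.\pc$ then lies on a virtual page translating with execute rights into $\mathbf{K_c}$, so $\RealMMU(\SeqState', \SeqState'\!.\ureg.\pc, \TMode, \acc) = (pa,-)$ forces $pa \in \mathbf{K_c}$. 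Finally, the obligation refers to $\domCritc$ at the pre-state $\SeqState$, not at $\SeqState'$; this gap is closed by observing that a privileged run preserves the internal data-structures on which $\domCritc$ depends---exactly \po~\ref{lem:invPrevsISA} restricted to those structures---so that $\domCritc(\SeqState') = \domCritc(\SeqState)$ by \po~\ref{po:type:delat:soundness}.

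For the second claim (on the kernel virtual window), I pin down $\RegionKernelV$ in $\SeqMonIntSWPriv$: for every $va \in \RegionKernelV$ the cacheless translation is a fixed $\RealMMU(\SeqState, va, \TMode, \acc) = (pa,1)$, and the page-table slots realising these translations sit at statically known physical addresses disjoint from the slots any handler may edit. It then remains to show that no privileged step rewrites those slots, which follows from the functional analysis of the MMU-virtualisation API that \po~\ref{lem:invPrevsISA} already demands: each API handler edits only the slots designated by its sanitised arguments, and $\SeqMonIntSWPriv$ forbids those from overlapping the fixed kernel-window slots; since all data a handler reads and writes is analysed on the coherent cacheless model, symbolic execution suffices to check both the non-overlap and the preservation of the cacheability bit $c=1$.

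The main obstacle will be producing a sound and sufficiently precise CFG for the first claim: resolving the epilogue indirect jumps needs a precondition constraining the kernel stack and link register, which in turn is part of what $\SeqMonIntSWPriv$ must assert---a mild circularity broken, as in~\cite{DBLP:journals/jcs/GuancialeNDB16}, by baking the stack-shape conditions into $\SeqMonIntSWPriv$ and resolving jumps iteratively. A secondary difficulty is that both claims are universally quantified over all intermediate privileged states and, in the second case, over all $va \in \RegionKernelV$; rather than checking them pointwise, one phrases ``the kernel's code stays in $\mathbf{K_c}$'' and ``the kernel-window slots are untouched'' as clauses of $\SeqMonIntSWPriv$ and re-derives the quantified statements from step-wise preservation, which keeps the discharge modular and reusable across software platforms.
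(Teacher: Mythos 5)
This statement is a \emph{proof obligation}, not a theorem the paper proves: the paper explicitly delegates its discharge to per-kernel binary analysis on the cacheless model (extracting the control-flow graph and showing it stays in a static kernel region, and showing the handlers never rewrite the page-table slots backing $\RegionKernelV$), and in fact reports that this analysis was not carried out for the case study. Your plan coincides with that intended route, including the use of an internal invariant over intermediate privileged states and SMT-based resolution of indirect jumps.

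One small caveat: your closing step for claim~1, that $\domCritc(\SeqState') = \domCritc(\SeqState)$ because a privileged run preserves the typing data-structures, is false in general for this kind of kernel --- the whole point of direct paging is that handlers retype blocks, so $\domCritc$ does change across a handler run. It is also unnecessary: since you already place the static code region $\mathbf{K_c}$ inside $\domCritc(\SeqState)$ as part of the invariant at the entry state, $pa \in \mathbf{K_c} \subseteq \domCritc(\SeqState)$ follows directly without any claim about $\domCritc(\SeqState')$.
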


In order to establish a refinement, we introduce the  \emph{memory-view} $\altExe$ of the cache-aware model.
This function models what can be observed in the memory after a cache line
eviction and is used to handle dynamic growth of the critical resources, which are potentially not coherent when a kernel handler starts.
\[{
  \altExe(\RealState,r)\ \defequiv\ 
\begin{cases}
   \cdSl{\RealState}{pa} &:\ r\in\PA \land \fdirty{\RealState}{pa} \\
   \RealState.\mem(pa) &:\  r\in\PA \land \neg \fdirty{\RealState}{pa}\\
  \Exe(\RealState,r) &:\ \text{otherwise}
\end{cases}
}\]

Note that $\altExe$ only differs from $\Exe$ in memory resources that are cached, clean, and have different values stored in the cache and memory, 
i.e., incoherent memory resources. In particular, we can prove the following lemma about coherency:
\begin{lemma}
\label{lem:coherency1}
Given a memory resource $pa\in\PA$ and cache-aware state $\RealState$ then $\coh(\RealState,\{pa\}) \Leftrightarrow (\Exe(\RealState,pa)=\altExe(\RealState,pa))$, i.e., memory resources are coherent, iff both views agree on them.
\end{lemma}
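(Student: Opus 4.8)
The plan is to prove the biconditional by a straightforward finite case analysis on the cache status of the single address $pa$, obtained by unfolding the definitions of $\coh$, $\Exe$ and $\altExe$. Note first that $\coh(\RealState,\{pa\})$ quantifies over a singleton set, so it is just the single implication: if $\fhit{\RealState}{pa}$ and $\RealState.\mem(pa)\neq\cdSl{\RealState}{pa}$ then $\drt{\RealState}{pa}$. Before the case split I would record the basic well-formedness facts of the cache-aware model that are needed to make the unfolding meaningful: dirtiness entails a hit, i.e. $\fdirty{\RealState}{pa}\Rightarrow\fhit{\RealState}{pa}$, and $\cdSl{\RealState}{pa}$ denotes the value held in the relevant cache line exactly when $\fhit{\RealState}{pa}$ holds.

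Then I would split into three cases. If $\neg\fhit{\RealState}{pa}$, then $\Exe(\RealState,pa)=\RealState.\mem(pa)$ by definition, and since a miss forces $\neg\fdirty{\RealState}{pa}$ we also get $\altExe(\RealState,pa)=\RealState.\mem(pa)$; hence the right-hand side holds, and $\coh(\RealState,\{pa\})$ holds vacuously because the premise of its implication is false. If $\fhit{\RealState}{pa}$ and $\fdirty{\RealState}{pa}$, both views equal $\cdSl{\RealState}{pa}$, so the right-hand side holds, and $\coh(\RealState,\{pa\})$ holds because its conclusion (dirtiness) is satisfied. The only informative case is $\fhit{\RealState}{pa}\wedge\neg\fdirty{\RealState}{pa}$: here $\Exe(\RealState,pa)=\cdSl{\RealState}{pa}$ while $\altExe(\RealState,pa)=\RealState.\mem(pa)$, so the equality $\Exe(\RealState,pa)=\altExe(\RealState,pa)$ is precisely $\cdSl{\RealState}{pa}=\RealState.\mem(pa)$; and since in the clean-hit case the conclusion of the $\coh$ implication is false, $\coh(\RealState,\{pa\})$ is precisely the statement that the hypothesis $\RealState.\mem(pa)\neq\cdSl{\RealState}{pa}$ cannot hold, i.e. again $\cdSl{\RealState}{pa}=\RealState.\mem(pa)$. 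Thus both sides are equivalent in this case as well, and collecting the three cases yields the lemma.

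I do not expect any real obstacle here: the argument is a finite propositional case distinction driven by the cache model, and the only point requiring care is confirming that the HOL4 cache formalisation actually validates the auxiliary facts used above (namely $\fdirty{\RealState}{pa}\Rightarrow\fhit{\RealState}{pa}$ and that $\cdSl{\RealState}{pa}$ coincides with the cached value on a hit). Once those are in place, the proof reduces to rewriting with the definitions of $\coh$, $\Exe$ and $\altExe$ together with a small amount of case-splitting automation, and in HOL4 it would essentially be discharged by such a tactic.
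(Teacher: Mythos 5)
Your proof is correct and is essentially the argument the paper intends (the paper states the lemma without an explicit proof, and the three-way case split on hit/dirty status with unfolding of $\coh$, $\Exe$ and $\altExe$ is exactly what is needed). The auxiliary fact you flag, $\fdirty{\RealState}{pa}\Rightarrow\fhit{\RealState}{pa}$, holds definitionally in the paper's cache model, since $\fdirty{C}{pa}$ is defined as $\fhit{C}{pa}\land\fway{C}{i}{t}.d$.
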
 

We define the simulation relation between the cacheless and cache-aware models 
using the memory-view:
$\RealState \SimR \SeqState \equiv \forall \resource\in\resSpc.\ \Exe(\SeqState,r) = \altExe(\RealState,r)$.
The functional invariant of the cache-aware model and the invariant of the
cacheless model must be defined analogously, that is the two invariants are 
equivalent if the critical resources are coherent,
thus ensuring that functional properties can be transfered between the two models.
\begin{prop}
\label{vc:isaInitStsfyInv}
 For all $\RealState$ and $\SeqState$, such that $\RealInvCC(\RealState)$
 and $\RealState \SimR \SeqState$,
 holds
 $\RealInvF(\RealState) \Leftrightarrow \SeqMonIntSW(\SeqState)$
\end{prop}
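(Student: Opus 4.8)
The plan is to discharge Obligation~\ref{vc:isaInitStsfyInv} by showing that, under the stated hypotheses, the cache-aware state $\RealState$ and the cacheless state $\SeqState$ induce the \emph{same} critical projection $\Critical_{\critc}$, and then observing that both $\RealInvF$ and $\SeqMonIntSW$ are, by design, the same predicate of that projection. Concretely, I would proceed in three steps.

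First I would establish agreement of the two models on the critical resources. From $\RealInvCC(\RealState)$, \po~\ref{vc:critical:coherency} yields $\coh(\RealState,\domCritc(\RealState))$, so by Lemma~\ref{lem:coherency1} we get $\Exe(\RealState,a)=\altExe(\RealState,a)$ for every $a\in\domCritc(\RealState)$; the same identity holds for all register and coprocessor resources because $\altExe$ coincides with $\Exe$ on non-memory resources. Feeding this into the simulation relation $\RealState\SimR\SeqState$, i.e. $\forall r.\ \Exe(\SeqState,r)=\altExe(\RealState,r)$, gives $\Exe(\SeqState,a)=\Exe(\RealState,a)$ for every critical memory resource and every register/coprocessor resource.

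Second I would lift this pointwise agreement to an equality $\Critical_{\critc}(\RealState)=\Critical_{\critc}(\SeqState)$. Since $\domCritc$ is, by construction, a function of the core-view of the kernel's ownership-tracking data structures, and these structures are themselves critical, hence contained in $\domCritc(\RealState)$ by \po~\ref{po:monintor:dom:critic} and the intent of \po~\ref{po:type:delat:soundness}, the core-views that determine $\domCritc$ agree between the two states by the first step, so $\domCritc(\RealState)=\domCritc(\SeqState)$. Together with the already established pointwise equality of core-views on that common domain, this gives $\Critical_{\critc}(\RealState)=\Critical_{\critc}(\SeqState)$. Third I would close the biconditional: by \po~\ref{po:monintor:inv:threeparts}.\ref{thm:invUserDrvbl}, $\RealInvF$ depends only on $\Critical_{\critc}$, and because $\SeqMonIntSW$ is obtained from $\RealMonIntSW=\RealInvF\wedge\RealInvCC\wedge\RealInvCMC$ by deleting exactly the cache-coherency conjuncts $\RealInvCC$ and $\RealInvCMC$, the functional content of $\SeqMonIntSW$ on a cacheless state is literally that same predicate of $\Critical_{\critc}$. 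Evaluating both at the common value $\Critical_{\critc}(\RealState)=\Critical_{\critc}(\SeqState)$ yields $\RealInvF(\RealState)\Leftrightarrow\SeqMonIntSW(\SeqState)$.

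The main obstacle I anticipate is the second step, namely justifying $\domCritc(\RealState)=\domCritc(\SeqState)$ across the two semantics: \po~\ref{po:type:delat:soundness} is phrased only within a single model, so transferring it requires making explicit that $\domCritc$ is computed by one and the same core-view-based function in both models and that the resources it reads are structurally part of $\domCritc(\RealState)$, a mild circularity that must be broken by the design discipline on $\domCritc$ rather than by a further induction. For the concrete hypervisor this is unproblematic, since there $\domCritc$ is read off a fixed array of page-type and reference-count words living in the kernel's static data region; once that is pinned down, the remaining reasoning is routine rewriting with Lemma~\ref{lem:coherency1} and the simulation relation.
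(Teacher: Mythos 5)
Your proposal is correct and follows essentially the same route as the paper: the paper discharges this obligation by requiring that $\RealInvF$ and $\SeqMonIntSW$ be ``defined analogously using the core-view,'' so that coherency of the critical resources (via Lemma~\ref{lem:coherency1}) together with $\SimR$ forces the two states to agree on $\Critical_{\critc}$, which is exactly the chain you spell out. The circularity you flag around $\domCritc$ is likewise resolved in the paper only by the design discipline that $\domCritc$ is one core-view-based function on $\RealSpace\cup\SeqSpace$ reading statically-located critical data (cf.\ the case study's page-type table in $\hyperMem$), so your elaboration matches the intended discharge.
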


A common problem of verifying low-level software is coupling the invariant with every possible internal states of the kernel.
This is a major concern here, since the set of critical resources
changes dynamically and can be stale while the kernel is executing.
We solve this problem by defining a \emph{internal invariant}
$\RealMonIntSWPriv(\RealState, \RealState')$ 
which allows us to define properties of the state $\RealState'$ in relation with the initial state of
the kernel handler $\RealState$.
This invariant $\RealMonIntSWPriv$ (similarly $\SeqMonIntSWPriv$ for the cacheless model) includes:
(i) $\RealMonIntSW(\RealState)$ holds,
(ii) the program counter in $\RealState'$ points to the kernel memory,
(iii) coherency of resources critical in the initial state (i.e. $\coh(\RealState', \domCritc(\RealState))$)
(iv) all virtual addresses in $\RegionKernelV$ are cacheable and their mapping in $\RealState$ and $\RealState'$ is unchanged,
and (v) additional requirements that are countermeasure specific and will be described later.

\begin{lemma}
\label{lem:privIntegByMiniInv}
  For all $\RealState$ and $\SeqState$
  such that $\RealMonIntSW(\RealState)$
  ,
 $\callConv{\RealState}$, and $\RealState \SimR \SeqState$,
  if $\RealState \WTrsN{\TMode}{n} \RealState'$ and $\SeqState \WTrsN{\TMode}{n} \SeqState'$
  then $\RealState' \SimR \SeqState'$,  $\RealMonIntSWPriv(\RealState, \RealState')$, and $\SeqMonIntSWPriv(\SeqState, \SeqState')$  
\end{lemma}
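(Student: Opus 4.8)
The plan is to prove Lemma~\ref{lem:privIntegByMiniInv} by induction on the number $n$ of privileged steps, keeping the fixed handler-entry states $\RealState,\SeqState$ as parameters and carrying $\RealState' \SimR \SeqState'$ together with the two internal invariants $\RealMonIntSWPriv(\RealState,\RealState')$ and $\SeqMonIntSWPriv(\SeqState,\SeqState')$ as a joint induction hypothesis. For the base case $n=0$ we have $\RealState'=\RealState$ and $\SeqState'=\SeqState$: the simulation relation holds by assumption; the program-counter clause of $\RealMonIntSWPriv$ follows from $\callConv{\RealState}$; coherency of $\domCritc(\RealState)$ follows from the $\RealInvCC$ part of $\RealMonIntSW(\RealState)$ via Obligation~\ref{vc:critical:coherency} and Lemma~\ref{lem:coherency1}; the $\RegionKernelV$ clause is reflexive; and the countermeasure-specific clauses are discharged from $\RealMonIntSW$ (for ``always cacheability'' immediately, since the region is cacheable and disjoint from the attacker's writable aliases; for ``selective eviction'' after the entry stub, which is itself among the first privileged steps and whose effect is an eviction). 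The cacheless internal invariant $\SeqMonIntSWPriv(\SeqState,\SeqState)$ is obtained by transporting the cache-aware one along $\SimR$ using Obligation~\ref{vc:isaInitStsfyInv}, noting that $\callConv{\SeqState}$ follows from $\callConv{\RealState}$ because $\callConv{}$ depends only on mode and $\pc$, on which $\SimR$ restricts to equality.

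For the inductive step, write $\RealState \WTrsN{\TMode}{n} \RealState'' \RealTrs{\TMode} \RealState'$ and $\SeqState \WTrsN{\TMode}{n} \SeqState'' \RealTrs{\TMode} \SeqState'$; by the induction hypothesis $\RealState'' \SimR \SeqState''$ and both internal invariants hold at $\RealState'',\SeqState''$. The heart of the argument is a single-step refinement lemma for privileged execution --- the kernel-mode counterpart of Lemma~\ref{lem:userswitchlem}, phrased in terms of the memory-view $\altExe$ --- which I would establish as follows. (i) The fetched instruction word lies in $\domCritc(\RealState)$ by the program-counter-containment part of Obligation~\ref{lem:kernelCFGandVM}, hence is coherent by the internal invariant, so by Lemma~\ref{lem:coherency1} both models fetch the same instruction. (ii) Register, control-register and coprocessor reads coincide because $\SimR$ is equality on those resources. (iii) Every memory location the handler reads is coherent: for critical resources by the internal invariant, for attacker-provided inputs by the countermeasure-correctness Obligation~\ref{vc:kernel:resource:value:inmodels}; hence $\Exe(\RealState'',pa)=\altExe(\RealState'',pa)=\Exe(\SeqState'',pa)$ and the reads agree. (iv) The handler writes only through cacheable aliases, so a write in the cache-aware model updates the cache line (writing back the displaced dirty line on a conflicting eviction), which keeps $\altExe$ equal to the value the cacheless model writes directly into memory, while fills and evictions on unrelated lines preserve $\altExe$ by construction. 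Combining (i)--(iv) with determinism of the two LTSs yields $\RealState' \SimR \SeqState'$. Preservation of $\RealMonIntSWPriv$ then reduces to: the $\pc$ clause from Obligation~\ref{lem:kernelCFGandVM}; invariance of the $\RegionKernelV$ mappings from the page-table-invariance part of Obligation~\ref{lem:kernelCFGandVM}; coherency of $\domCritc(\RealState)$ and of the countermeasure region, which a cacheable write cannot break (if cache and memory disagree on the line, it is dirty); and the constant clause $\RealMonIntSW(\RealState)$, which is untouched. Finally $\SeqMonIntSWPriv(\SeqState,\SeqState')$ follows from Obligations~\ref{lem:invPrevsISA} and~\ref{lem:kernelCFGandVM}, which are stated for the cacheless model, transported along $\SimR$.

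The step I expect to be the main obstacle is the single-step privileged refinement lemma, points (iii)--(iv): it demands a precise architectural account of the cache-aware transition relation --- instruction fetch through possibly dirty cache lines, set-associative fill and eviction semantics, and the interaction of cacheable writes with a write-back cache --- and it is the place where the internal invariant $\RealMonIntSWPriv$ must be made exactly strong enough, with its countermeasure-specific clauses populated differently for ``always cacheability'' and ``selective eviction''. The key to keeping the induction generic is to confine all countermeasure-dependent reasoning to Obligation~\ref{vc:kernel:resource:value:inmodels} and those clauses of $\RealMonIntSWPriv$, so that the two concrete countermeasures are instantiations of one uniform proof skeleton; discharging that obligation for each countermeasure, on the cacheless model, is then the remaining work and is where existing binary-analysis tooling can be brought to bear.
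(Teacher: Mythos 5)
Your proposal is correct and follows essentially the same route as the paper's proof: induction on the execution length, with the inductive step establishing that the same instruction is fetched (via $\SimR$ on the program counter, Obligation~\ref{lem:kernelCFGandVM}.\ref{lem:kernelPcReange}, coherency from the internal invariant, and Lemma~\ref{lem:coherency1}), that accessed resources agree via the countermeasure-specific Obligation~\ref{vc:kernel:resource:value:internal:invariant:inmodels}, and that the internal invariants are preserved by delegating the countermeasure-dependent clauses to the same obligation. Your write-up is more explicit than the paper's sketch about the base case and about why cacheable writes preserve the memory-view, but the decomposition and the key lemmas invoked are identical.
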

\begin{prove}{}
By induction on the execution length. The base case is trivial, since no step is taken.
For the inductive case we first show that the instruction executed is the same in both the models:
$\SimR$ guarantees that the two states have the same program counter,
\po~\ref{lem:kernelCFGandVM}.\ref{lem:kernelPcReange} ensures that the program counter points to the kernel memory contained in the critical resources,
property (iii) of the invariant guarantees that this memory region is coherent, and
Lemma~\ref{lem:coherency1} shows that the fetched instruction  is the same.

Proving that the executed instruction preserves kernel virtual memory mapping is trivial, since \po~\ref{lem:kernelCFGandVM}.\ref{lem:ptFixedPriv}
ensures that the kernel does not change its own memory layout and that it only uses cacheable aliases.
Showing that the resources accessed by the instruction have the same value (thus guaranteeing $\SimR$ is preserved) in the cache-aware
and cacheless states depends on demonstrating their coherency. This is
countermeasure specific and is guaranteed by proof
obligation~\ref{vc:kernel:resource:value:internal:invariant:inmodels}.\ref{vc:kernel:resource:value:inmodels}.
Similarly, showing the internal invariant is maintained by privileged transitions depends on the specific countermeasure that is in place 
(\po~\ref{vc:kernel:resource:value:internal:invariant:inmodels}.\ref{vc:kernel:internal:invariant:inmodels}).
\end{prove}


\begin{prop}\label{vc:kernel:resource:value:internal:invariant:inmodels}
  For all 
  states $\RealState$ and $\RealState'$
  that satisfy the refinement (i.e. 
  $\RealMonIntSW(\RealState)$,
  $\callConv{\RealState}$, and $\RealState \SimR \SeqState$),
  after any number $n$ of instructions of the kernel that preserve the
  refinement and the internal invariants (
  $\RealState \WTrsN{\TMode}{n} \RealState'$,
  $\SeqState \WTrsN{\TMode}{n} \SeqState'$,
  $\RealState' \SimR \SeqState'$, 
  $\RealMonIntSWPriv(\RealState, \RealState')$, and
  $\SeqMonIntSWPriv(\SeqState, \SeqState')$
  ) 
  \begin{enumerate}
   \item \label{vc:kernel:resource:value:inmodels}
   if the execution of the $n+1$-th instruction in the cacheless model
   accesses resources $R$
  ($\SeqState' \SeqTrs{\TMode} \SeqState'' \touched{\inst}$ and either $\inst = rd(R)$
  or $\inst = wt(R)$) then
  $\coh(\RealState', R)$
   \item \label{vc:kernel:internal:invariant:inmodels}
   the execution of the $n+1$-th instruction in the cacheless and cache-aware models 
   ($\SeqState' \SeqTrs{\TMode} \SeqState''$ and $\RealState' \RealTrs{\TMode} \RealState''$) preserves 
   the internal invariants ($\SeqMonIntSWPriv(\SeqState, \SeqState'')$ and $\RealMonIntSWPriv(\RealState, \RealState'')$)
    \end{enumerate}
\end{prop}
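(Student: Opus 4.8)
The plan is to discharge Proof Obligation~\ref{vc:kernel:resource:value:internal:invariant:inmodels} separately for each of the two integrity countermeasures studied in the paper --- always cacheability and selective eviction --- since, as already observed, both the countermeasure-specific invariant $\RealInvCMC$ and clause~(v) of the internal invariant $\RealMonIntSWPriv$ are instantiated per countermeasure, and only once these are pinned down does the obligation become a concrete statement. In both cases I would argue by case analysis on the resource set $R$ touched by the $(n{+}1)$-th cacheless step. Non-memory resources in $R$ (registers, control and coprocessor registers) are coherent trivially, since $\coh$ only constrains $\PA$. The memory part of $R$ splits into (a) cells of $\domCritc(\RealState)$ --- which subsumes the kernel's static code, stack and data --- and (b) freshly-supplied inputs (or outputs) located in application memory.

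For part~\ref{vc:kernel:resource:value:inmodels} (coherency in the cache-aware state $\RealState'$ of everything the step accesses), case~(a) is immediate from clause~(iii) of $\RealMonIntSWPriv$, which already asserts $\coh(\RealState', \domCritc(\RealState))$. Case~(b) is where the countermeasures diverge. Under always cacheability, the kernel's runtime admission check guarantees that every accessed input address lies inside the fixed always-cacheable region; clause~(i) of $\RealMonIntSWPriv$ carries $\RealMonIntSW(\RealState)$ and hence $\RealInvCMC(\RealState)$, which essentially states coherency of that region, and I would make clause~(v) assert that this coherency persists throughout the handler. Its preservation is easy: by the shape restriction in the software invariant no uncacheable alias into the always-cacheable region is ever admitted, so the application cannot make a cell there incoherent --- this is exactly what Obligation~\ref{po:monintor:inv:threeparts}.\ref{thm:invUserCMChoerence} together with derivability gives for non-privileged steps --- and the kernel itself only uses cacheable aliases. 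Under selective eviction, clause~(v) instead records, relative to the program-counter position inside the handler, that every application-memory address the kernel is about to read has already been the target of an explicit cache-line eviction earlier in the same handler; from the cache semantics, after an eviction the line is either absent or clean with value equal to main memory, i.e.\ $\coh(\RealState', \{pa\})$ by Lemma~\ref{lem:coherency1}, and no later kernel instruction re-introduces incoherence since the kernel writes only through cacheable aliases (a cacheable write makes the line dirty, hence coherent).

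For part~\ref{vc:kernel:internal:invariant:inmodels} (preservation of both internal invariants by the $(n{+}1)$-th step), I would check the clauses one at a time. Clause~(i) refers only to the initial handler state and is untouched. Clauses~(ii) and~(iv) --- program counter inside the kernel region, kernel virtual mapping static and cacheable --- are discharged directly by Obligation~\ref{lem:kernelCFGandVM}.\ref{lem:kernelPcReange} and Obligation~\ref{lem:kernelCFGandVM}.\ref{lem:ptFixedPriv}, transported from the cacheless step to the cache-aware one using $\RealState'\SimR\SeqState'$ and Lemma~\ref{lem:coherency1}. Clause~(iii) is preserved because a kernel instruction touches a critical cell only via a cacheable alias (a cacheable write dirties the line, a cacheable read fills it from memory, and any background eviction of an already-coherent line leaves it coherent), while $\domCritc(\RealState)$ is frozen at the initial state so no cell is newly added. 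Clause~(v) is the countermeasure-specific bookkeeping sketched above, and its preservation mirrors the coherency reasoning of part~1.

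The main obstacle, I expect, is clause~(v) for selective eviction: this is the one place where the internal invariant must be tightly coupled to the control-flow position within the handler, stating for each reachable program point precisely which application-memory addresses are guaranteed clean-and-coherent, and one must then show both that the kernel's eviction code actually establishes this (which requires a faithful low-level account of the eviction instruction's effect on the single targeted line, including the write-back-if-dirty behaviour that could otherwise silently change memory out from under the cacheless model) and that every subsequent data access of the handler reads only addresses already covered. By contrast the always-cacheability instantiation should be comparatively routine, because there clause~(v) is a genuine global invariant that never mentions the program counter, and establishing it reduces to reusing the non-privileged derivability machinery already developed in Section~\ref{subsec:userIntegrity}.
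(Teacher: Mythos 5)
Your proposal is correct and follows essentially the same route as the paper: the obligation is discharged per countermeasure by instantiating clause~(v) of the internal invariant as a coherency assertion ($\coh(\RealState',\Preg)$ for always cacheability, and coherency of the critical resources together with the previously cleaned addresses for selective eviction) and then combining it with the cacheless-level access restrictions (Obligations~\ref{lem:kernelCFGandVM}, \ref{prop:cm:always-cacheable} and \ref{vc:selective:eviction:accesses:resources:evicted}). The only presentational difference is that the paper sidesteps the program-point coupling you flag as the main obstacle for selective eviction by adding a ghost history variable $\histev$ to the cacheless transition system that accumulates the cleaned resource sets, so clause~(v) becomes the state-indexed assertion $\coh(\RealState',\domCritc(\RealState)\cup\histev')$ rather than an invariant parametrised by control-flow location.
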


Additionally, the internal invariant must ensure the countermeasure specific
requirements of coherency for all internal states of the kernel.
\begin{prop}
\label{vc:inv:cacheless:to:cache}
 For all $\RealState$ and $\RealState'$ if $\RealMonIntSWPriv(\RealState, \RealState')$ then $\RealInvCMC(\RealState')$
\end{prop}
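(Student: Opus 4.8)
The plan is to discharge this obligation once per countermeasure, by choosing the countermeasure-specific conjunct — part~(v) — of the internal invariant $\RealMonIntSWPriv$ so that it \emph{directly subsumes} $\RealInvCMC$. Recall that $\RealInvCMC$ is the optional conjunct of the top-level invariant recording coherency of whatever non-critical memory the countermeasure depends on; on the application side it is already controlled by Obligation~\ref{po:monintor:inv:threeparts}.\ref{thm:invUserCMChoerence}, which we may assume. The present obligation is the dual statement needed to propagate that conjunct from the kernel-internal invariant back to the top-level one at every intermediate privileged state, so that Lemma~\ref{lem:privInvEnd}-style transfer goes through.

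\textbf{Always cacheability.} Here there is a fixed region $\AL \subseteq \PA$ of always-cacheable memory, and the kernel rejects any service request whose inputs fall outside $\AL$. We take $\RealInvCMC(\RealState') \defequiv \coh(\RealState', \AL)$, and we let part~(v) of $\RealMonIntSWPriv(\RealState,\RealState')$ be exactly $\coh(\RealState',\AL)$ together with the structural facts that make that conjunct inductive, namely: every virtual alias the current page-tables provide for an address of $\AL$ is cacheable, and no such alias is writable in mode $\UMode$. With this choice, $\RealMonIntSWPriv(\RealState,\RealState') \Rightarrow \RealInvCMC(\RealState')$ holds by simply projecting out the relevant conjunct. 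The real content is thereby relocated to Obligation~\ref{vc:kernel:resource:value:internal:invariant:inmodels}.\ref{vc:kernel:internal:invariant:inmodels} (preservation of $\RealMonIntSWPriv$ by each privileged instruction) and to establishing part~(v) at handler entry from $\RealMonIntSW(\RealState) \wedge \callConv{\RealState}$ — both of which are where the page-table–manipulation code is actually inspected.

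\textbf{Selective eviction and the degenerate case.} When the countermeasure needs no dedicated coherent region — e.g.\ selective eviction, where the kernel evicts precisely the lines it is about to read — no extra top-level coherency is required, so $\RealInvCMC$ may be taken to be $\mathit{True}$ and the obligation holds vacuously; the coherency that matters for such a handler is already captured by conjunct~(iii) of $\RealMonIntSWPriv$, i.e.\ $\coh(\RealState', \domCritc(\RealState))$, and is exercised through Obligation~\ref{vc:kernel:resource:value:internal:invariant:inmodels}.\ref{vc:kernel:resource:value:inmodels} rather than here. If one instead uses the variant that tracks a set of freshly-evicted locations in kernel state, $\RealInvCMC$ asserts coherency of that set and part~(v) of $\RealMonIntSWPriv$ is chosen to contain it verbatim, again making the implication definitional.

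\textbf{Main obstacle.} The implication itself is, by design, a projection; the difficulty is selecting the formulation of part~(v) that is simultaneously (a)~strong enough to imply $\RealInvCMC$, (b)~inductive under every instruction the kernel can execute, and (c)~re-established at each handler entry from $\RealMonIntSW$ and $\callConv{\cdot}$. For always cacheability this forces a careful argument that the kernel never introduces an uncacheable or non-privileged-writable alias of $\AL$ while it edits guest page-tables, and that every ownership transfer into $\AL$ is matched by the appropriate flush or eviction so that $\coh(\RealState',\AL)$ is never even transiently violated. Finding a conjunct tight enough to be inductive yet weak enough to be provable over the actual handler code is the delicate step, and — unlike the projection argument above — it cannot be done abstractly.
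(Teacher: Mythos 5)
Your proposal is correct and takes essentially the same route as the paper: the obligation is discharged per countermeasure by instantiating $\RealInvCMC$ and part~(v) of $\RealMonIntSWPriv$ so that the implication is a mere projection — the paper sets part~(v) to $\coh(\RealState',\Preg)$ which subsumes $\RealInvCMC(\RealState')=\coh(\RealState',\Preg\setminus\domCritc(\RealState'))$ for always cacheability, and sets $\RealInvCMC=\mathit{true}$ (making the obligation vacuous) for selective eviction. As you note, the substantive work is relocated to preserving the internal invariant under kernel steps and establishing part~(v) at handler entry, which is exactly where the paper places it.
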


Finally, we show that the invariant $\RealMonIntSW$ holds in a cache-aware state
when the control is returned  to non-privileged mode, i.e. when the invariant is
re-established in the cacheless model.
\begin{lemma}
\label{lem:privInvEnd}
 For all $\RealState$, $\RealState'$, and $\SeqState'$  if $\SeqMonIntSW(\SeqState')$, $\RealState' \SimR \SeqState'$, and
 $\RealMonIntSWPriv(\RealState, \RealState')$ then $\RealMonIntSW(\RealState')$ holds.
\end{lemma}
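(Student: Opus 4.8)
The plan is to unfold $\RealMonIntSW(\RealState')$ through \po~\ref{po:monintor:inv:threeparts} into the three conjuncts $\RealInvF(\RealState') \wedge \RealInvCC(\RealState') \wedge \RealInvCMC(\RealState')$ and discharge each one separately, letting the internal invariant $\RealMonIntSWPriv(\RealState,\RealState')$ supply all cache-coherency facts and letting the simulation relation $\RealState' \SimR \SeqState'$ together with $\SeqMonIntSW(\SeqState')$ supply the functional content. The countermeasure-specific clause of $\RealMonIntSWPriv$ is what makes the coherency bookkeeping work for whichever countermeasure (always cacheability, selective eviction, $\ldots$) is in force.

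First, $\RealInvCMC(\RealState')$ is immediate: it follows from \po~\ref{vc:inv:cacheless:to:cache}, since $\RealMonIntSWPriv(\RealState,\RealState')$ holds by hypothesis. Second, for $\RealInvCC(\RealState')$ I would reduce the goal, modulo the definition of $\RealInvCC$ (which asserts precisely coherency of the current critical resources, cf.\ \po~\ref{vc:critical:coherency}), to establishing $\coh(\RealState',\domCritc(\RealState'))$. To do this I split $\domCritc(\RealState')$ into the resources that were already critical when the handler was entered and the resources the handler acquired through ownership transfer. For the former, clause (iii) of $\RealMonIntSWPriv$ gives $\coh(\RealState',\domCritc(\RealState))$ outright. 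For the latter, the countermeasure-specific clause (v) of $\RealMonIntSWPriv$ — backed by $\RealInvCMC(\RealState')$ — guarantees coherency of the memory region in which the kernel manipulates freshly acquired inputs (the always-cacheable region, or the lines the kernel has selectively evicted before use); since any block migrating into $\domCritc(\RealState')$ during the handler must lie in that region, it is coherent in $\RealState'$. Conjoining the two pieces yields $\coh(\RealState',\domCritc(\RealState'))$, hence $\RealInvCC(\RealState')$.

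Third, with $\RealInvCC(\RealState')$ in hand and $\RealState' \SimR \SeqState'$ from the hypotheses, \po~\ref{vc:isaInitStsfyInv} gives $\RealInvF(\RealState') \Leftrightarrow \SeqMonIntSW(\SeqState')$, and $\SeqMonIntSW(\SeqState')$ holds by hypothesis, so $\RealInvF(\RealState')$ follows. Along the way one uses that coherency of the critical resources together with $\RealState' \SimR \SeqState'$ and \po~\ref{po:type:delat:soundness} forces $\domCritc(\RealState') = \domCritc(\SeqState')$ (via Lemma~\ref{lem:coherency1}, coherency makes the core-view of the critical resources in $\RealState'$ agree with the memory-view and hence with $\SeqState'$), so the cache-aware and cacheless invariants really speak about the same set of resources. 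Conjoining $\RealInvF(\RealState')$, $\RealInvCC(\RealState')$, and $\RealInvCMC(\RealState')$ gives $\RealMonIntSW(\RealState')$.

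I expect the second step to be the main obstacle: the delicate point is showing that $\domCritc(\RealState')$ is entirely covered by regions whose coherency we have actually tracked. When the critical set only shrinks during the handler this is routine, but ownership \emph{acquisition} forces us to lean on clause (v) of the internal invariant to witness coherency of the newly critical blocks, and to argue carefully that $\domCritc$ evaluated on the cache-aware state $\RealState'$ — which depends on the core-view, and hence on the cache — coincides with the intended set. This is exactly where the interplay between \po~\ref{po:type:delat:soundness}, Lemma~\ref{lem:coherency1}, clause (iii), and clause (v) of $\RealMonIntSWPriv$ has to be spelled out precisely, and where the proof becomes genuinely countermeasure-dependent.
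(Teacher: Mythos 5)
Your proposal is correct and follows essentially the same route as the paper: the invariant is split into $\RealInvF$, $\RealInvCC$, and $\RealInvCMC$, with $\RealInvCMC(\RealState')$ discharged by \po~\ref{vc:inv:cacheless:to:cache}, $\RealInvCC(\RealState')$ by the coherency clauses of $\RealMonIntSWPriv$, and $\RealInvF(\RealState')$ by transferring $\SeqMonIntSW(\SeqState')$ across $\SimR$ via \po~\ref{vc:isaInitStsfyInv}. Your extra care in the second step — covering the resources that become critical during the handler by appealing to the countermeasure-specific clause (v) rather than resting on clause (iii) alone — addresses a point the paper's one-line justification glosses over, and is exactly what the countermeasure instantiations (e.g.\ $\coh(\RealState',\Preg)$ for always cacheability, $\coh(\RealState',\domCritc(\RealState)\cup\histev')$ for selective eviction) are designed to supply.
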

\begin{prove}{}
The three parts of invariant $\RealMonIntSW(\RealState')$ are demonstrated
independently. 
\po~\ref{vc:inv:cacheless:to:cache} establishes $\RealInvCMC(\RealState')$.
Property (iii) of  $\RealMonIntSWPriv(\RealState, \RealState')$ guarantees
$\RealInvCC(\RealState')$. Finally,
\po~\ref{vc:isaInitStsfyInv} demonstrates 
 $\RealInvF(\RealState)$.
\end{prove}

\newcommand{\alwayscachebility}{{AC}}
\subsection{Correctness of countermeasures}\label{subsec:countermeasures}
Next, we turn to show that selected countermeasures for the integrity attacks
prevent usage of cache  to violate the integrity property. Thus,
we show that the countermeasures help to discharge the coherency related proof
obligations reducing verification of integrity to analysing properties 
of the kernel code using the cacheless model.

\paragraph{Always Cacheability} 
We use $\Preg \subseteq \PA$ to statically identify the region of physical
memory that should be always accessed using cacheable aliases. 
The verification that the always cacheability  countermeasure is in place
can be done by discharging the following proof obligations at the cacheless level:
(\ref{prop:cm:always-cacheable}.1) the hardware monitor does not permit uncacheable accesses to $\Preg$,
(\ref{prop:cm:always-cacheable}.2.a) the kernel never allocates critical resources outside $\Preg$,
thus restricting  the application to use $\Preg$ to communicate with the kernel,
and
(\ref{prop:cm:always-cacheable}.2.b) the kernel accesses only resources in $\Preg$:
\begin{prop}
\label{prop:cm:always-cacheable}
 For all $\SeqState$ such that  $\SeqMonIntSW(\SeqState)$
\begin{enumerate}
\item
\label{prop:cm:always-cacheable:mmu}
 for every $va$, $m$ and $\acc$ if  $\RealMMU(\SeqState,va,m,\acc)\!=\!(pa,c)$
 and $pa \in \Preg$ then  $c = 1$,
\item
 if $\SeqState \WTrs{\TMode}  \SeqState'$ then 
\begin{enumerate}
  \item
\label{lem:Gcrproperty}
 $\domCritc(\SeqState') \subseteq \Preg$ and
  \item
\label{lem:alwayscacheable_policy}
if $\SeqState' \SeqTrs{\TMode} \SeqState'' \touched{\inst}$ and $R$ are the
resources in $\inst$ then $R \subseteq \Preg$
\end{enumerate}
\end{enumerate}
\end{prop}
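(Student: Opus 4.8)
The plan is to discharge Obligation~\ref{prop:cm:always-cacheable} as a per-kernel verification task carried out \emph{entirely in the cacheless model}. This is sound because the proof obligations of Sections~\ref{subsec:userIntegrity} and~\ref{subsec:trustedIntegrity} already reduce Theorems~\ref{thm:topIntegrityUser} and~\ref{thm:topIntegrityPriv} to reasoning about the privileged transition relation $\SeqTrs{\TMode}$ on the cacheless transition system; it therefore suffices to establish the three always-cacheability conditions there, where existing weakest-precondition and SMT tooling applies. First I would fix $\Preg$ concretely as the static set of physical blocks comprising the kernel's private memory (its code, stack, the reference-counter and page-type arrays, and the storage reserved for guest page-tables) together with the fixed window through which the untrusted application hands inputs to the kernel. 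The obligation then asserts that every address the handlers ever dereference, and every block they ever mark critical, lies inside this set.

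For part~\ref{prop:cm:always-cacheable:mmu} I would strengthen the cacheless invariant $\SeqMonIntSW$ with a page-table soundness clause: every descriptor in the active and linked page-tables whose translation target lies in $\Preg$ has its cacheability bit set. Given this clause the claim is immediate from the $\RealMMU$ model, since a successful translation $\RealMMU(\SeqState,va,m,\acc)=(pa,c)$ with $pa\in\Preg$ is produced by walking such a descriptor and $\RealMMU$ reads $c$ directly from it, so $c=1$. The only additional duty is to show the new clause is preserved by the handlers, which folds into the functional-correctness \po~\ref{lem:invPrevsISA} and is handled by the same page-table validation code that already enforces the $W\oplus X$ and guest-isolation clauses of the invariant.

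Part~\ref{lem:Gcrproperty}--\ref{lem:alwayscacheable_policy} is a whole-handler code analysis. For each exception entry characterised by $\callConv{}$ I would extract the control-flow graph and, assuming the precondition $\domCritc(\SeqState)\subseteq\Preg$ supplied by the invariant, discharge two families of assertions on the cacheless model. For~\ref{lem:Gcrproperty} I reduce $\domCritc(\SeqState')\subseteq\Preg$ to a local check on the allocation routines: since $\domCritc$ is computed from the ownership metadata, it grows only where a handler sets a block's type to a page-table type, and there the block index is shown to lie in $\Preg$. For~\ref{lem:alwayscacheable_policy} I instrument every load, store, and fetch with the assertion that its effective address lies in $\Preg$ and discharge each instance, using loop invariants for the bounded iterations over the $256$ and $4096$ descriptor entries and resolving the indirect accesses into the metadata arrays down to concrete offsets within $\Preg$.

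The hard part will be condition~\ref{lem:alwayscacheable_policy}: an exhaustive, code-level guarantee that \emph{every} kernel memory access stays inside $\Preg$, including the descriptor reads the MMU itself performs while a handler runs. This is where loop invariants, the resolution of guest-supplied indices, and the confinement of the kernel's reads of attacker input to the fixed communication window must all be made precise, and it is precisely where a residual bug (an out-of-range page-table index, or a validation routine that reads a guest block lying outside $\Preg$) would be exposed. I expect the binary-analysis tooling to do the bulk of this work, with manual effort concentrated on supplying the loop invariants and the offset bounds.
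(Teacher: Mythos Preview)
Your proposal is essentially correct and aligns with how the paper treats this statement: it is a \emph{proof obligation}, not a theorem proved in the abstract framework, and the paper discharges it (in the case study of Section~\ref{sec:application}) exactly along the lines you describe---strengthening the cacheless invariant with a clause that all aliases to $\Preg$ are cacheable and that page-table blocks and hypervisor memory lie in $\Preg$ (their ``property~(iii)''), which immediately yields parts~\ref{prop:cm:always-cacheable:mmu} and~\ref{lem:Gcrproperty}, while part~\ref{lem:alwayscacheable_policy} is explicitly left to binary code analysis on the cacheless model. Your plan is somewhat more detailed than the paper's sketch (you spell out the instrumentation and loop-invariant work for~\ref{lem:alwayscacheable_policy}), but the decomposition and the key move of folding the cacheability constraints into $\SeqMonIntSW$ are the same.
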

These three properties, together with \po~\ref{vc:isaInitStsfyInv}, enable us to prove that the resources accessed by the instructions executed in the privileged mode have the same value in the cache-aware and
cacheless states (\po~\ref{vc:kernel:resource:value:internal:invariant:inmodels}.\ref{vc:kernel:resource:value:inmodels}).
In a similar vein, we instantiate part (v) of the internal invariant as $\coh(\RealState', \Preg)$, and then we use \po{s}~\ref{lem:kernelCFGandVM} and \ref{prop:cm:always-cacheable}.2.a to show that the internal invariant
is preserved by kernel steps (\po~\ref{vc:kernel:resource:value:internal:invariant:inmodels}.\ref{vc:kernel:internal:invariant:inmodels}).
To discharge coherency related proof obligations for non-privileged mode,
we set $\RealInvCC$ and $\RealInvCMC$ to be
$\coh(\RealState, \Preg \cap \domCritc(\RealState))$ and
$\coh(\RealState, \Preg\!\setminus\!\domCritc(\RealState))$
respectively. This makes the proof of 
Obligations~\ref{po:monintor:inv:threeparts}.\ref{thm:invUserCCChoerence},
~\ref{vc:critical:coherency}, 
and~\ref{vc:inv:cacheless:to:cache} trivial.

We use Lemma~\ref{lem:monitorsafe} to guarantee 
that derivability preserves 
page-tables and, thus, cacheability of the resources in $\Preg$,
and we use Lemma~\ref{thm:User-No-Exfiltration} and
\po~\ref{po:type:delat:soundness} to demonstrate that 
derivability preserves $\domCritc$. 
Finally, \po~\ref{prop:cm:always-cacheable}.\ref{prop:cm:always-cacheable:mmu}
 enforces that all aliases to $\Preg$ are cacheable,
demonstrating \po
s~\ref{po:monintor:inv:threeparts}.\ref{thm:invUserCMChoerence}.

\newcommand{\clean}{\mathit{cl}}
\paragraph{Selective Eviction}
This approach requires to selectively flush the lines that correspond to the
memory locations that become critical when the kernel acquire ownership of a
region of memory. 
To verify that the kernel correctly implements this countermeasure we need to
track evicted lines, by adding to the cacheless model a history variable
$\histev$.  
\[
\begin{array}{cc}
  \frac
      {\SeqState \SeqTrs{\TMode} \SeqState' \touched{\clean(R)}
      }
      {(\SeqState, \histev) \SeqTrs{\TMode} (\SeqState', \histev\ \cup\ R)\touched{\clean(R)}} 
  \quad \quad       
  \frac
      {\SeqState \SeqTrs{\TMode} \SeqState'  \touched{\inst}\ \wedge\
        \inst \neq \clean(R)
      }
      {(\SeqState, \histev) \SeqTrs{\TMode} (\SeqState', \histev)  \touched{\inst}}
\end{array}
\]
All the kernel accesses must be restricted to resources that are either critical or have been previously cleaned (i.e.  are in $\histev$).
\begin{prop} \label{vc:selective:eviction:accesses:resources:evicted}
 For all $\SeqState$ such that  $\SeqMonIntSW(\SeqState)$
 if $(\SeqState, \emptyset) \WTrs{\TMode}  (\SeqState', \histev')$,
 $(\SeqState', \histev') \SeqTrs{\TMode} (\SeqState'', \histev'')
 \touched{\inst}$, and either $\inst = rd(R)$ or $\inst = wt(R)$ 
 then $R \subseteq \domCritc(\SeqState) \cup \histev'$
\end{prop}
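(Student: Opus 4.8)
The plan is to discharge Proof Obligation~\ref{vc:selective:eviction:accesses:resources:evicted} for the concrete direct-paging hypervisor of the case study, since (like $\SeqMonIntSW$, $\domCritc$, and $\WTrs{\TMode}$) it is software specific and cannot be proved once and for all. The statement is entirely a property of each exception handler's memory footprint on the \emph{cacheless} model augmented with the history variable $\histev$, so the argument can reuse the binary-level machinery already available in the cacheless setting (symbolic execution, BAP, the relational tooling of~\cite{DBLP:conf/ccs/BalliuDG14}); no reasoning about the cache-aware transition relation is needed. First I would fix, from $\callConv{\SeqState}$, which of the finitely many handlers is running, and proceed by cases on the hypercall.

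For each handler I would compute the set of resources it reads or writes and partition it into three classes. (a) \emph{Statically critical} resources: the hypervisor code and stack, the master page table, and the type/reference-counter array; these lie in $\domCritc(\SeqState)$ by the definition of $\domCritc$ for this kernel, which I set up so that $\monDom(\SeqState)$ and the hypervisor's own data are contained in it (this is needed for Obligations~\ref{po:monintor:dom:critic} and~\ref{vc:critical:no:write} anyway). (b) \emph{Currently allocated page tables}, i.e.\ blocks typed $\mathit{L1}$ or $\mathit{L2}$ in $\SeqState$; the invariant $\SeqMonIntSW$ already forces every non-data block to be critical, so these too are in $\domCritc(\SeqState)$. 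Classes (a)+(b) cover \textit{switch}, \textit{L1free}, \textit{L2free}, \textit{L1unmap}, \textit{L2unmap}, \textit{link}, \textit{L1map}, \textit{L2map} entirely (the block they modify is always already typed $\mathit{L1}$/$\mathit{L2}$, the refcount checks read the critical array, and the type changes are pure updates of that array). (c) \emph{Newly acquired memory}: the candidate block(s) touched during validation and page-table initialisation in \textit{L1create} (four contiguous blocks) and \textit{L2create} (one block); these are still of the guest-writable (data) type in $\SeqState$, hence not yet critical, but become critical when the handler changes their type. For class~(c) I would show that each patched handler begins with a cache-clean loop over exactly the address range of those candidate block(s), so that after the loop those resources have been added to $\histev$, and that every subsequent read or write of that range occurs strictly after the clean loop in the handler's control-flow graph; hence the touched resources lie in $\histev'$. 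Combining the three classes yields $R \subseteq \domCritc(\SeqState) \cup \histev'$.

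The main obstacle I expect is class~(c): establishing, for arbitrary guest-chosen block arguments and over the finite-width arithmetic used for block indices and byte offsets, that the clean loop's coverage is a superset of the set of addresses subsequently accessed before the type change, and that no content access \emph{escapes} ahead of the clean. This is precisely the point at which the original hypervisor was vulnerable (the attack of Section~\ref{sec:attack} plants stale data in the cache for the validator to read), so the proof doubles as a correctness check of where the countermeasure is placed: a single validation access not dominated by the corresponding clean would break the obligation. Since the validation loop and the clean loop both range over the fixed entry counts of an $\mathit{L1}$ ($4096$) resp.\ $\mathit{L2}$ ($4\times 256$) table, I would discharge this by bounded reasoning / loop unrolling in the style of the original hypervisor verification. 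A secondary, much easier concern is confirming that no handler (in particular the \textit{switch} fast path and the deallocation handlers) touches any block that has just ceased to be critical; here the argument is immediate, as type changes back to the data type are effected purely by updates to the statically critical reference-counter array and are never accompanied by accesses to the freed block's content.

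Finally, once Obligation~\ref{vc:selective:eviction:accesses:resources:evicted} is in hand I would close the loop as in the Always-Cacheability case: instantiate part~(v) of the internal invariant $\SeqMonIntSWPriv$ as $\coh(\RealState',\domCritc(\RealState)\cup\histev')$ (tracking the cleaned lines in the cache-aware execution as well, which the refinement keeps in lockstep with the cacheless one), take $\RealInvCC(\RealState)=\coh(\RealState,\domCritc(\RealState))$ and $\RealInvCMC\equiv\mathit{True}$, and use Obligation~\ref{vc:selective:eviction:accesses:resources:evicted} together with the fact that a freshly cleaned line is coherent to discharge Obligation~\ref{vc:kernel:resource:value:internal:invariant:inmodels}.\ref{vc:kernel:resource:value:inmodels} (the accessed resources agree in the two models) and Obligation~\ref{vc:kernel:resource:value:internal:invariant:inmodels}.\ref{vc:kernel:internal:invariant:inmodels} (the internal invariant is preserved, using Obligations~\ref{lem:kernelCFGandVM} for the kernel footprint). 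Obligations~\ref{po:monintor:inv:threeparts}.\ref{thm:invUserCCChoerence}, \ref{vc:critical:coherency}, and \ref{vc:inv:cacheless:to:cache} then become immediate, completing the reduction of the integrity proof for the selective-eviction hypervisor to properties of its code on the cacheless model.
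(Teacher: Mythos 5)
Your proposal is not in conflict with the paper, but it is worth being clear about what the paper actually does with this statement: Obligation~\ref{vc:selective:eviction:accesses:resources:evicted} is a \emph{proof obligation}, not a theorem the paper proves. The paper never discharges it for a concrete system --- it explicitly defers obligations of this kind to binary analysis tools on the cacheless model, and the case-study hypervisor is repaired with the \emph{always-cacheability} countermeasure rather than selective eviction, so no handler-by-handler argument for this obligation appears anywhere in the text. What the paper does state (and what you correctly reproduce in your final paragraph) is only how the obligation is \emph{used}: together with Obligation~\ref{lem:kernelCFGandVM} and the instantiation of part~(v) of the internal invariant as $\coh(\RealState',\domCritc(\RealState)\cup\histev')$, it discharges Obligation~\ref{vc:kernel:resource:value:internal:invariant:inmodels}. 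Your handler-by-handler discharge plan --- partitioning each handler's footprint into statically critical data, already-allocated page tables (critical in $\SeqState$ by the definition $\domCritc(\SeqState)=\hyperMem\cup\{pa.\ \hyperType(\SeqState,pa)=pt\}$), and newly acquired blocks that must be dominated by a clean loop in the control-flow graph --- is a sensible and faithful elaboration of what such a code-level verification would look like, and correctly pinpoints the create handlers as the place where the original vulnerability lived. Two small remarks: the obligation as stated in the paper does not include the hypothesis $\callConv{\SeqState}$ (unlike its companion Obligation~\ref{vc:kernel:resource:history}), so strictly speaking your case split on ``which handler is running'' adds an assumption, though one that is clearly intended; and your closing paragraph restates the paper's own use of the obligation rather than contributing to its proof, so it could be omitted.
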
 

Moreover, it must be guaranteed that the set of critical resources always remains coherent.
\begin{prop}
\label{vc:kernel:resource:history}
 For all $\SeqState$ such that  $\SeqMonIntSW(\SeqState)$ and $\callConv{\SeqState}$
 if $(\SeqState, \emptyset) \WkTrs{}  (\SeqState', \histev')$
 then $\domCritc(\SeqState') \subseteq \domCritc(\SeqState) \cup \histev'$
\end{prop}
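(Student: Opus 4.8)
The plan is to prove the bound directly by induction over the privileged steps that make up the kernel handler execution, reducing the global statement to a single-step property of the selective-eviction code. First I would unfold the weak transition $(\SeqState, \emptyset) \WkTrs{} (\SeqState', \histev')$ as a chain of privileged steps $(\SeqState, \emptyset) = (\SeqState_0, \histev_0) \SeqTrs{\TMode} (\SeqState_1, \histev_1) \SeqTrs{\TMode} \cdots \SeqTrs{\TMode} (\SeqState_n, \histev_n) = (\SeqState', \histev')$, where $\callConv{\SeqState_0}$ holds and all $\SeqState_j$ with $j<n$ are privileged. From the instrumented transition rules the history accumulator is monotone: a $\clean(R)$ step extends it by $R$ and every other step leaves it untouched, so $\histev_0 \subseteq \histev_1 \subseteq \cdots \subseteq \histev'$. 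This monotonicity is what lets an upper bound phrased in terms of the final history $\histev'$ be maintained across the whole trace.

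The core of the argument is a single-step claim: for any privileged step $(\SeqState_k, \histev_k) \SeqTrs{\TMode} (\SeqState_{k+1}, \histev_{k+1})$ originating from a reachable internal state of the handler, every resource that newly becomes critical is already recorded in the history after the step, i.e. $\domCritc(\SeqState_{k+1}) \setminus \domCritc(\SeqState_k) \subseteq \histev_{k+1}$. Intuitively this holds because $\domCritc$ changes only when the kernel updates the internal data structures that store the security type of memory, and by design of the selective-eviction discipline the kernel issues a $\clean$ for a block before (or in the same step as) promoting it to critical status; releasing ownership only shrinks $\domCritc$ and is harmless for an upper bound. This is precisely the property one checks on the cacheless code: symbolically executing each ownership-transfer routine and verifying that at the instruction which marks a block critical, the corresponding line has already been added to $\histev$. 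It is the companion property to \po~\ref{vc:selective:eviction:accesses:resources:evicted}, the two together formalising the selective-eviction discipline on the kernel code.

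With the single-step claim in hand the induction is routine. The induction hypothesis is $\domCritc(\SeqState_k) \subseteq \domCritc(\SeqState) \cup \histev_k$; the base case $k=0$ is immediate since $\SeqState_0 = \SeqState$ and $\histev_0 = \emptyset$. For the inductive step I would write $\domCritc(\SeqState_{k+1}) \subseteq \domCritc(\SeqState_k) \cup (\domCritc(\SeqState_{k+1}) \setminus \domCritc(\SeqState_k))$, bound the first set by the hypothesis and the second by the single-step claim, and then absorb $\histev_k$ into $\histev_{k+1}$ using monotonicity, obtaining $\domCritc(\SeqState_{k+1}) \subseteq \domCritc(\SeqState) \cup \histev_{k+1}$. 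Instantiating at $k = n$ yields exactly the obligation.

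The hard part will be establishing the single-step claim cleanly. The delicate point is that $\domCritc$ is computed through the core-view $\Exe$ of the kernel's type data structures, so I must show that the only way a block's membership in $\domCritc$ can flip from non-critical to critical during a privileged step is through an explicit write to those data structures performed by the handler itself, and that the code never performs such a write without a preceding matching $\clean$ of the block being promoted. Pinning this down requires identifying every site in the handler that mutates the type table, rather than reasoning about the instruction set generically, and arguing that ownership promotion and line cleaning are coupled there. This is where the proof genuinely depends on the specific kernel implementation and must be discharged by the code-level analysis on the cacheless model, which is the setting in which the existing tools apply.
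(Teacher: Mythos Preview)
This statement is not proved in the paper. Note that it is declared with the \texttt{prop} environment, which the paper defines as ``Proof Obligation'', not as a theorem or lemma. In the methodology of Paper~D, proof obligations are verification conditions that must be discharged on the concrete kernel code in the cacheless model; the paper's contribution is to show that \emph{if} they hold, then the top-level security properties transfer to the cache-aware model. The paper explicitly leaves the obligations on kernel code (including this one and its companion \po~\ref{vc:selective:eviction:accesses:resources:evicted}) to be established by binary analysis tools or similar code-level verification, and does not give an argument for them.

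Your proposal is therefore not competing with a proof in the paper; rather, you are sketching a \emph{strategy} for how one would discharge this obligation on a particular kernel. As such, your decomposition is sensible: the history accumulator is indeed monotone by construction of the instrumented semantics, and reducing the global bound to a per-step claim that every newly-critical resource has been cleaned beforehand is exactly the shape one would expect. You also correctly identify where the real work lies, namely locating every site in the handler that updates the type table and checking that a matching clean precedes each promotion. That is precisely the kind of code-specific analysis the paper intends to delegate to external tools, and your framing is consistent with that intent.
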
 

This ensures that the kernel accesses only coherent resources and
allows to establish \po~\ref{vc:kernel:resource:value:internal:invariant:inmodels}.\ref{vc:kernel:resource:value:inmodels}.
To discharge \po~\ref{vc:kernel:resource:value:internal:invariant:inmodels}.\ref{vc:kernel:internal:invariant:inmodels}, we first define 
part (v) of the internal invariant as $\coh(\RealState', \domCritc(\RealState) \cup \histev')$ and then use
\po{s}~\ref{lem:kernelCFGandVM} and \ref{vc:selective:eviction:accesses:resources:evicted} to discharge it.
To discharge proof obligations of non-privileged mode in the cache-aware model, for state $\RealState$ the invariant must ensure
$\RealInvCC(\RealState) = \coh(\RealState, \domCritc(\RealState))$ and
$\RealInvCMC(\RealState) = true$. This makes the proof of
Obligations~\ref{po:monintor:inv:threeparts}.\ref{thm:invUserCCChoerence},~\ref{vc:critical:coherency}
and~\ref{vc:inv:cacheless:to:cache} trivial.
Part (v) of the internal invariant and \po~\ref{vc:kernel:resource:history} ensure that 
$\RealInvCC$ is established when the kernel returns to the application.


 \section{Confidentiality}
\label{sec:confidentiality}
This section presents the proof of the confidentiality property. The proof relies on the cache behaviour, hence we briefly present the cache model.
\subsection{Generic Data Cache Model}\label{sec:generic:cache}

We have formally defined a generic model which fits a number of common processor data-cache implementations. The intuition behind is that most data-caches 
are direct mapped or set-associative caches, sharing a similar structure: \begin{paraenum} \item Memory is partitioned into sets of lines which are congruent w.r.t. to a set index, \item data-caches
contain a number of ways which can hold one corresponding line for every set index, being uniquely identified by a tag, 
\item writes can make lines dirty, i.e., potentially 
different from the associated data in memory, \item there is a small set of common cache operations, e.g., filling the cache with a line from memory, \item an eviction policy controls the allocation of ways for new lines, and the eviction of old lines if the cache is full\end{paraenum}.


In addition to the cache contents, partitioned into line sets, we keep history $\his \in \His$ of internal cache actions performed for each line set. An action $a\in \Act$ can be \begin{paraenum}\item a read or write access to a present line, \item a line eviction, or \item a line fill\end{paraenum}. All actions also specify the tag of the corresponding line.

As each line set can only store limited amounts of entries, eviction policy $\fevict?(H,t)$ returns the tag of the line to be replaced at a line fill for a given tag $t$, or $\bot$ if eviction is not required. Evicted dirty lines are then written back into the memory. 

We assume here that the eviction policy is only depending on the action history of a line set and the tag of the line to be filled in. This is trivially the case for direct-mapped caches, where each line set at most contains one entry and no choice of eviction targets is possible, and for purely random replacements strategies, where the cache state and history is completely ignored. Also more sophisticated eviction policies, like LRU or pseudo-LRU, usually only depend on preceding operations on the same line set.

Another observation for these replacement strategies is that they only depend on finite subsequences of the history. For instance, LRU only depends on the past actions on the entries currently present in a given set and after a cache flush subsequent line fills and evictions are not influenced by operations from before the flush. We formalize this notion as the \emph{filtered history} on which a given eviction policy depends, computed from the full history by policy-specific filter $\filter:\His \to \His$. The idea is that the eviction policy makes the same decisions for filtered and unfiltered histories.

\begin{assumption}
  \label{lem:lruEvctSameTag}
  If $\filter$ is a policy-specific filter then
  for all $\his\in\His$ and tag $t$ holds, $\fevict?(\his, t) = \fevict?(\filter(\his), t)$.
\end{assumption}
We use this property in combination with our countermeasures in the confidentiality proof, in order to make cache states indistinguishable for the attacker.
A complete formalization of the cache state and semantics is omitted here, as these details are irrelevant for the general proof strategy outlined below (see Appendix for the detailed model).


%
\subsection{Observations in the Cache-Aware Model}
\label{sec:conf_proof}
The kernel and untrusted application share and control parts of the memory and the caches.
Our goal is to ensure that through these channels the application cannot infer anything about the confidential resources of the kernel.
%
We fix the confidential resources of the kernel as a static set $\Conf \subset
\resSpc$ and demand that they cannot be directly accessed by the application
(\po~\ref{prop:monitor}). 

The observations of the application are
over-approximated by the set $\Obs = \{r\mid r \notin \Conf\}$. Note, that some
of the observable resources may be kernel resources that are not directly
accessible by the application, but affect its behaviour (e.g., page tables).
Two cacheless states $\SeqState_1$ and $\SeqState_2$ are considered
observationally equivalent, if all resources in $\Obs$ have the same core-view.
Formally we define $\SeqState_1 \ObsEq{\Obs} \SeqState_2 \equiv \forall r \in
\Obs.\ \Exe(\SeqState_1,r) = \Exe(\SeqState_2,r)$. 

In the cache-aware model, naturally, also the cache has to be taken into
account. Specifically, for addresses readable by the attacker, both the
corresponding cache line and  underlying main memory content can be extracted using uncacheable aliases. For all other cache lines, we overapproximate the
observable state, assuming that the attacker can infer whether a tag is present in the cache (\emph{tag state}\footnote{Tags of kernel accesses can be identified when a line set contains both a 
line of the attacker and the kernel, and subsequent secret-dependent accesses may either cause a hit on that line or a miss which evicts the attacker's line.}), measure whether an entry is dirty, and derive the 
filtered history\footnote{The history of kernel cache operations may be leaked in a similar way as the tag state, affecting subsequent evictions in the application, however only as far as the filter for the 
eviction policy allows it.} of cache actions on all line sets. For caches and memories $C_1$, $C_2$, $M_1$, and $M_2$ we denote observational equivalence w.r.t.~a set $A\subseteq \PA$ by $(C_1,M_1) \cmeq_A (C_2,M_2)$. The
relation holds if:
\begin{enumerate}
  \item the memories agree for all $pa\in A$, i.e., $M_1(pa) = M_2(pa)$,
  \item the line sets with any index $i$ in $C_1$, $C_2$:
    \begin{enumerate}
      \item agree on the tags of their entries (same tag state),
      \item agree on the dirtiness of their entries
      \item agree on the contents of those entries that have tags pointing to addresses in $A$, and
      \item agree on their filtered history (``$\filter(\his_1(i)) = \filter(\his_2(i))$'').
\end{enumerate}
\end{enumerate}
Notice that $\cmeq_A$ implies core-view equivalence for any address in $A$.

%
%
Now we distinguish observable resources which are always coherent, from potentially incoherent non-critical memory resources $\NC\subseteq \PA \cap \Obs$. Intuitively, this set contains all observable addresses that the application may access through uncacheable aliases. For coherent observable resources we introduce relation
\[\RealState_1 \ObsEq{\Obscoh} \RealState_2\ \defequiv\ \forall r\in \Obs\setminus\NC.\ \Exe(\RealState_1,r) = \Exe(\RealState_2,r)\,, \] 
and define observational equivalence for the cache-aware model:
\[\begin{array}{lcl}
   \RealState_1 \ObsEq{\Obs} \RealState_2\ &\!\!\defequiv\!\!&  \RealState_1 \ObsEq{\Obscoh} \RealState_2  \\
   &\!\! \land \!\!&(\RealState_1.\cache, \RealState_1.\mem) \cmeq_\NC (\RealState_2.\cache, \RealState_2.\mem)\,.
  \end{array}
\]
Note that we are overloading notation here and that $\ObsEq{\Obscoh}$ and $\ObsEq{\Obs}$ are equivalence relations. Allowing to apply relation $\ObsEq{\Obscoh}$ also to cacheless states, we get the following corollary. 

\begin{corollary}\label{cor:seqobscoh}
For all $\SeqState_1,\SeqState_2$, if $\SeqState_1\ObsEq{\Obs}\SeqState_2$ then $\SeqState_1\ObsEq{\Obscoh}\SeqState_2$.
\end{corollary}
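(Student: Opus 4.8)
The plan is to unfold the two relations and observe that the conclusion is a straightforward weakening of the hypothesis. Recall that, on cacheless states, $\ObsEq{\Obs}$ is the observational equivalence underlying Property~\ref{thm:highLevelConfidentiality}, i.e., $\SeqState_1 \ObsEq{\Obs} \SeqState_2$ holds exactly when $\Exe(\SeqState_1, r) = \Exe(\SeqState_2, r)$ for every observable resource $r \in \Obs = \{r \mid r \notin \Conf\}$. The relation $\ObsEq{\Obscoh}$, applied to cacheless states as licensed by the remark preceding the corollary (it only mentions $\Exe$, which is available in both models), demands the same core-view equality but restricted to the smaller index set $\Obs \setminus \NC$ of observable resources that exclude the potentially incoherent non-critical memory locations $\NC \subseteq \PA \cap \Obs$.

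Concretely, I would fix $\SeqState_1, \SeqState_2$ with $\SeqState_1 \ObsEq{\Obs} \SeqState_2$, take an arbitrary $r \in \Obs \setminus \NC$, and note $\Obs \setminus \NC \subseteq \Obs$, hence $r \in \Obs$. Instantiating the hypothesis at $r$ gives $\Exe(\SeqState_1, r) = \Exe(\SeqState_2, r)$; since $r$ was arbitrary in $\Obs \setminus \NC$, this is precisely $\SeqState_1 \ObsEq{\Obscoh} \SeqState_2$.

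There is essentially no mathematical obstacle here; the only point requiring care is the bookkeeping about which model each relation is read in: $\ObsEq{\Obs}$ is defined by two different clauses for cacheless and cache-aware states, and the corollary uses the cacheless reading, whereas $\ObsEq{\Obscoh}$ is stated on cache-aware states but is meaningful verbatim on cacheless ones. Once this is pinned down the argument is a one-line set inclusion. The role of the corollary downstream is to let us descend from the full cache-aware observational equivalence (whose definition already bundles in $\ObsEq{\Obscoh}$ on the cache-aware side) to a coherent-resource equality that can be matched against cacheless executions when discharging the unwinding conditions for Theorem~\ref{thm:topConfidentiality}.
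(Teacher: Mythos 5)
Your proof is correct and matches the paper's (implicit) argument: the paper states this corollary without proof as an immediate consequence of the definitions, and the intended justification is exactly your observation that $\Obs\setminus\NC\subseteq\Obs$ together with the remark that $\ObsEq{\Obscoh}$ is read on cacheless states via the core-view. Nothing further is needed.
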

%

Confidentiality (Theorem~\ref{thm:topConfidentiality}) 
is demonstrated by showing that relation
$\ObsEq{\Obs}$ is a bisimulation (i.e., it is preserved for pairs of computations
starting in observationally equivalent states). 
Below, we prove this property separately for of application and kernel steps.

\subsection{Confidentiality: Application Level}
\label{subsec:userConf}
As relation $\ObsEq{\RealObs}$ is countermeasure-independent, 
verification for application's transition can be shown once and for all for a given
hardware platform, assuming the kernel invariant guarantees  several properties
for  the hardware configuration. First, the hardware monitor must ensure that confidential resources are only readable in privileged mode.
\begin{prop}\label{prop:monitor}
For all $\RealState$ such that $\RealMonIntSW(\RealState)$ if $\resource \in
\Conf$ then $\neg \RealMonitor(\RealState, \resource, \UMode, \rdd)$.
\end{prop}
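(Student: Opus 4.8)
The plan is to discharge this obligation in the same spirit as the integrity obligations above: by a case analysis on the kind of resource $\resource\in\Conf$ which, in each case, reduces the claim to a structural property that the kernel-specific invariant $\RealMonIntSW$ is required to carry. Recall that $\RealMonitor$ is the conjunction of a static architectural component — which never grants non-privileged access to coprocessor and control registers outside a fixed set determined by the ISA — and a dynamic component given by the MMU, with $\RealMonitor(\RealState, pa,\UMode,\rdd)\Leftrightarrow\exists va.\ \RealMMU(\RealState,va,\UMode,\rdd)=(pa,-)$ for memory resources. First I would split $\Conf$ into a register/coprocessor part and a memory part $\Conf\cap\PA$; for the former the goal is immediate from the architectural model of $\RealMonitor$, under the (easily discharged) side condition on the instantiation that $\Conf$ contains no register or coprocessor state the ISA exposes to non-privileged reads.

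For $\resource=pa\in\Conf\cap\PA$ I would unfold $\RealMonitor(\RealState,pa,\UMode,\rdd)$ to $\exists va.\ \RealMMU(\RealState,va,\UMode,\rdd)=(pa,-)$ and appeal to the page-table soundness contained in $\RealMonIntSW$. The key observation is that, in the cache-aware model, the MMU fetches descriptors through the cache and therefore sees exactly the core-view $\Exe$, which is also the quantity on which the functional part $\RealInvF$ of the invariant is phrased (\po~\ref{po:monintor:inv:threeparts}), and that the page-tables are critical, $\monDom(\RealState)\subseteq\domCritc(\RealState)$ (\po~\ref{po:monintor:dom:critic}); hence the soundness constraints the invariant imposes on the page-tables apply directly to the translations $\RealMMU$ actually produces. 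The obligation thus collapses to a purely cacheless statement: whenever $\SeqMonIntSW$ holds, no active or linked page-table grants a non-privileged readable virtual alias to an address in the static confidential region $\Conf$. This is the dual, relative to $\Conf$, of the guest-isolation property established for the hypervisor in~\cite{DBLP:journals/jcs/GuancialeNDB16} (there one proves that every physical address reachable in non-privileged mode lies in guest memory), so here it is enough to additionally require $\Conf$ to be disjoint from the guest's memory region.

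The kernel-specific part of the invariant is where the actual work lives, and I expect it to be the main obstacle. $\RealInvF$ must embed a page-table soundness predicate — in the style of clause~(iii) of the predicate $\sound$ used in the hypervisor case study — whose ``forbidden region'' for non-privileged readable mappings comprises the kernel's own memory, the page-tables themselves, \emph{and} the whole of $\Conf$; one then shows, by inspecting the formal MMU model, that any successful non-privileged read translation to $pa$ must traverse a descriptor that violates this predicate, contradicting $\RealInvF(\RealState)$. For a concrete kernel (for instance the hypervisor of~\cite{DBLP:journals/jcs/GuancialeNDB16}) this further reduces to checking, at the cacheless level and hence with the existing tool-chain, that no API handler ever validates such a descriptor and that the static master page-table maps $\Conf$ only in privileged mode. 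The subtlety to watch out for is the ARM \emph{domain} mechanism, which provides access control orthogonal to the page permission bits: the soundness predicate has to rule out reachability of $pa$ through \emph{any} enabled domain, not merely through the permission bits of a single descriptor, exactly as the $\sound$ predicate does in paper~\ref{paper:JCS}.
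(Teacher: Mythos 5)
You have correctly recognised what kind of statement this is, and that matters for the comparison: the paper itself offers no proof of it. It is a \emph{proof obligation} --- a verification condition imposed on the kernel-specific invariant $\RealMonIntSW$ that each instantiation must discharge --- and even the paper's case study only explicitly walks through the integrity-side obligations, leaving this confidentiality-side one to the instantiation. So there is no paper proof to match against; the fair comparison is with how the case study discharges the dual integrity obligation ($\neg \RealMonitor(\RealState, \resource, \UMode, \mathit{wt})$ for critical $\resource$), namely via the page-table--soundness clause of the invariant. Your sketch is exactly that argument dualised to read access and to the static set $\Conf$: split off the architecturally protected registers, unfold the monitor to $\exists va.\ \RealMMU(\RealState,va,\UMode,\rdd)=(pa,-)$ for memory resources, use the fact that the page tables are critical and coherent so the MMU sees the core-view on which the functional invariant is phrased, and reduce the claim to a cacheless soundness condition that no non-privileged readable alias reaches $\Conf$. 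This is consistent with the paper's methodology, and your cautions --- that $\Conf$ must be disjoint from the application's memory, that the ARM domain mechanism must be accounted for in the soundness predicate, and that the real work lives in defining and preserving that predicate across the handlers --- identify precisely the parts the paper delegates to the concrete kernel. No gap, beyond the inherent one that the obligation cannot be fully discharged without fixing an instantiation.
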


Secondly, the invariant needs to guarantee that the hardware monitor data is always observable. This implies that in observationally equivalent states the same access permissions are in place.
\begin{prop}\label{prop:samemmu}
 For all $\RealState$, if $\RealMonIntSW(\RealState)$ then $\monDom(\RealState) \subset \Obs$.
\end{prop}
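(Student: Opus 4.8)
The plan is to discharge this obligation by reducing it to a simple well-formedness requirement on the concrete kernel's functional invariant. First I would invoke \po~\ref{po:monintor:inv:threeparts} to split $\RealMonIntSW(\RealState)$ into $\RealInvF(\RealState)\wedge\RealInvCC(\RealState)\wedge\RealInvCMC(\RealState)$; since the statement mentions neither caches nor coherency, only the functional component $\RealInvF$ is needed. From \po~\ref{po:monintor:dom:critic} we already know $\monDom(\RealState)\subseteq\domCritc(\RealState)$, i.e.\ every resource that controls the hardware monitor is critical. What remains is to show that none of these resources is confidential, so that $\monDom(\RealState)\subseteq\resSpc\setminus\Conf=\Obs$.

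The key observation is that $\Conf$ is a static set fixed by the verifier (the kernel's secret-bearing internal data structures), whereas the resources in $\monDom$ are, by design, ``public'': the coprocessor registers that configure the MMU never hold secret-dependent values, the kernel's master page table occupies a fixed region and its entries are secret-independent, and in a direct-paging setting every linked guest page table resides in guest memory, which is readable by the application and hence already in $\Obs$ by construction. Concretely, I would strengthen $\RealInvF$ (for the hypervisor of~\cite{DBLP:journals/jcs/GuancialeNDB16}) with the clause $\monDom(\RealState)\cap\Conf=\emptyset$, or, more operationally, bound $\monDom(\RealState)$ by a static over-approximation --- the master-page-table region together with the guest memory region --- and require that over-approximation to be chosen disjoint from $\Conf$. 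Using \po~\ref{po:type:delat:soundness} together with the already-established kernel-integrity obligations, this clause is preserved by every transition, so it holds in any state satisfying $\RealMonIntSW$; the conclusion $\monDom(\RealState)\subseteq\Obs$ then follows immediately. Consistency with \po~\ref{prop:monitor} comes for free, since $\monDom$ being monitor-controlling and non-confidential means the access permissions it induces coincide in observationally equivalent states.

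The main obstacle is that this cannot be proved once and for all in the generic framework: like the software invariant itself, it must be re-discharged for each concrete kernel, and doing so requires a precise account of where that kernel places its confidential data relative to the page tables it validates, links, and activates. For the vulnerable hypervisor case study this amounts to establishing that (i) the hypervisor never links or switches to a page table overlapping hypervisor-internal (confidential) memory --- a property that largely piggy-backs on the already-verified ``sound page table'' conditions --- and (ii) the master page table is genuinely static and its mapping is secret-independent, so that its content contributes nothing to the attacker's observation. Beyond those two points the argument is routine bookkeeping: checking that the static choice of $\Conf$ was made consistently with the kernel's fixed memory layout, rather than a substantive theorem.
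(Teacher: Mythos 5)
Your proposal is correct and matches the paper's treatment: the paper gives no proof of this statement---it is stated as a proof obligation on the kernel's functional invariant (``the invariant needs to guarantee that the hardware monitor data is always observable''), to be discharged per instantiation exactly as you describe, by keeping the monitor-controlling resources (coprocessor registers and page tables) disjoint from the statically chosen confidential set $\Conf$ so that $\monDom(\RealState)\cap\Conf=\emptyset$. Your further sketch of how to discharge it for the hypervisor case study goes beyond what the paper spells out, but is consistent with its framework.
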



In addition, the monitor never allows the application to access coherent resources through uncacheable aliases.
\begin{prop}\label{prop:nc}
For all $\RealState$ such that $\RealMonIntSW(\RealState)$ for every
$\mathit{va}$, access right $\acc$ if \linebreak
$\RealMMU(\RealState,\mathit{va},\UMode,\acc)=(\mathit{pa},c)$ 
and $\mathit{pa} \in \PA\setminus\NC$ then $c=1$.
\end{prop}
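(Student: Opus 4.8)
The statement is a proof obligation on the concrete kernel rather than a theorem of the generic framework: both the invariant $\RealMonIntSW$ and the set $\NC$ are parameters that are fixed only when the framework is instantiated, so ``proving'' Obligation~\ref{prop:nc} means discharging it for a specific kernel. The plan is to do this for the vulnerable hypervisor of~\cite{DBLP:journals/jcs/GuancialeNDB16} that serves as the case study. That hypervisor already carries, as part of its invariant, a page-table soundness predicate that forbids the guest unprivileged writable aliases to kernel memory and to the page-tables themselves; the first step is to strengthen this predicate with a cacheability clause stating that any descriptor reachable by a non-privileged memory access whose translation target lies outside $\NC$ carries cacheable attributes. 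Equivalently, $\NC$ is pinned down as an (over-)approximation of the set of observable physical pages that the guest can reach through an uncacheable alias. Once this strengthened $\RealMonIntSW$ is in place, Obligation~\ref{prop:nc} is essentially a projection of the invariant, and it is used, together with Obligations~\ref{prop:monitor} and~\ref{prop:samemmu}, to establish the bisimulation for unprivileged steps underlying application-level confidentiality.

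The actual one-step derivation unfolds $\RealMMU(\RealState, va, \UMode, \acc)$, which performs the ARM two-level table walk starting from the active master page-table recorded in $\RealState.\cop$, and case-splits on the descriptor kinds encountered: an L1 section descriptor, an L1 table-pointer descriptor followed by an L2 small- or large-page descriptor, and the fault cases, for which the hypothesis $\RealMMU(\RealState,va,\UMode,\acc)=(pa,c)$ is vacuous. In each non-fault branch both the physical address $pa$ and the cacheability bit $c$ are read off the descriptor word, and the soundness clause of the invariant --- instantiated at the physical block(s) holding that descriptor, which the invariant classifies as page-table blocks --- yields exactly the implication $pa\notin\NC \Rightarrow c=1$. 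This is the confidentiality mirror of the integrity argument used for Obligation~\ref{vc:critical:no:write} and Lemma~\ref{lem:userswitchlem}. When the always-cacheability countermeasure is used the derivation collapses: one takes $\NC$ to be (an overapproximation of) the guest's legitimate uncacheable region, which by Obligation~\ref{prop:cm:always-cacheable}.\ref{prop:cm:always-cacheable:mmu} is disjoint from $\Preg$, so every non-privileged-reachable page outside $\NC$ lies in $\Preg$ and Obligation~\ref{prop:nc} follows directly from Obligation~\ref{prop:cm:always-cacheable}.\ref{prop:cm:always-cacheable:mmu}.

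The real work, and the main obstacle, is not this unfolding but making the strengthened soundness predicate genuinely inductive, i.e.~showing it is re-established by \emph{every} hypervisor hypercall that creates, links, or remaps a page-table --- this is precisely what keeps Obligation~\ref{lem:invPrevsISA} provable. Concretely one must revisit each validation routine and check that whenever it approves a descriptor mapping a guest-accessible block outside $\NC$ the cacheability bits of that descriptor are required to be set, and that a block cannot be retyped so as to silently leave $\NC$ while an uncacheable mapping to it survives. A secondary subtlety is that $\NC$ must be chosen consistently with the companion obligations: $\NC\subseteq\Obs$ (so that Obligation~\ref{prop:samemmu} remains meaningful) and $\NC\cap\Conf=\emptyset$ (so that Obligation~\ref{prop:monitor} is unaffected), both of which hold for the ``guest-reachable-uncacheable'' choice since the hardware monitor never grants the guest read access to confidential resources. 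Finally, on the ARMv7/ARMv8 models one must make sure that the ``effective'' cacheability that $\RealMMU$ actually computes from the several attribute sources (the TEX/C/B encoding plus inner/outer fields on ARMv7, the MAIR-indexed attributes on ARMv8, plus any second-stage overrides) is the quantity the strengthened predicate constrains, so that reading $c$ off a single descriptor field is in fact sound.
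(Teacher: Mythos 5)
Your reading is right: the paper never actually proves this statement---it is a per-instantiation verification condition, and the paper's only treatment of it is the definitional choice of $\NC$ as (an overapproximation of) the observable memory the application may reach through uncacheable aliases, together with the case-study invariant clause that all aliases to $\Preg$ are cacheable, which makes the obligation hold essentially by construction. Your discharge strategy---strengthening $\RealMonIntSW$ with a cacheability clause on non-privileged-reachable descriptors, unfolding the $\RealMMU$ table walk, and in the always-cacheability instantiation reducing to Obligation~\ref{prop:cm:always-cacheable}.\ref{prop:cm:always-cacheable:mmu} via the inclusion $\PA\setminus\NC\subseteq\Preg$---is sound and matches the paper's intent.
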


By this property it is then easy to derive the coherency of resources outside of $\NC$, assuming that the hardware starts in a coherent memory configuration and that the kernel never makes memory incoherent itself (\po~\ref{lem:kernelCFGandVM}.\ref{lem:ptFixedPriv}).
\begin{prop}\label{prop:cohnotnc}
For all $\RealState$, if $\RealMonIntSW(\RealState)$ then $\coh(\RealState,\PA\setminus\NC)$.
\end{prop}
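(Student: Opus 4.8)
The plan is to realize coherency of $\PA\setminus\NC$ as a conjunct of the cache-related part of the invariant $\RealMonIntSW$ --- under always-cacheability this is the natural strengthening of $\RealInvCC\wedge\RealInvCMC$ from $\Preg$ to the statically fixed region $\PA\setminus\NC$, which contains $\Preg$ together with the kernel's static memory --- and then to discharge \po~\ref{prop:cohnotnc} by the usual induction along the weak transition relation $\WkTrs{}$, reusing the pieces already established in Section~\ref{sec:integrity}. The base case is exactly the hypothesis stated just before the obligation, namely that the hardware boots in a memory-coherent configuration, so $\coh(\cdot,\PA\setminus\NC)$ holds in the initial state trivially. The inductive step splits, as everywhere else, into non-privileged and privileged transitions.

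For an application step $\RealState\TlsTrs{\UMode}\RealState'$ I would first replace the concrete transition by the derivability over-approximation $\RealState\drvabl_{\UMode}\RealState'$ via Corollary~\ref{thm:trns2drvbl} (whose hypotheses are met since $\RealMonIntSW(\RealState)$ holds), and then inspect the three clauses of Figure~\ref{fig:derivability} for an arbitrary $pa\in\PA\setminus\NC$. The only clause that can turn a coherent address incoherent is $D_{\wt}$, and it does so only by a write through an \emph{uncacheable} alias, i.e.\ with $\RealMMU(\RealState,va,\UMode,\wt)=(pa,0)$; but \po~\ref{prop:nc}, being a conjunct of $\RealMonIntSW$, forbids uncacheable aliases to any $pa\in\PA\setminus\NC$, so this branch is unreachable and every application write to $pa$ goes through the cache and marks the line dirty, which preserves coherency by definition. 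The clause $D_{\rdd}$ only performs a line fill, after which the cached value equals the memory value; $D_{\emptyset}$ only evicts, and a dirty line is written back ($M'(pa)=\cdSl{\RealState}{pa}$) so $pa$ becomes coherent, while after any eviction the address is no longer a hit ($\neg\fhit{\RealState'}{pa}$) and coherency holds vacuously. I would also need \po~\ref{prop:nc} itself to be preserved by the step, which follows from Lemma~\ref{lem:monitorsafe} (the application cannot reconfigure the monitor) together with \po~\ref{po:type:delat:soundness}.

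For a privileged step the key inputs are \po~\ref{lem:kernelCFGandVM}.\ref{lem:ptFixedPriv} and, under always-cacheability, \po~\ref{prop:cm:always-cacheable}.\ref{prop:cm:always-cacheable:mmu}: the kernel accesses memory only through cacheable aliases (its own data in $\RegionKernelV$ and the communication/critical region in $\Preg$), does not change the mapping of $\RegionKernelV$, and introduces no uncacheable alias to $\PA\setminus\NC$. Hence a kernel write to $pa\in\PA\setminus\NC$ again loads the cache and sets the dirty bit, preserving coherency, and because the kernel returns with \po~\ref{prop:nc} re-established no previously coherent address has become incoherent at the handler boundary. Stitching the two cases along $\WkTrs{}$ --- transferring the re-established invariant from the cacheless back to the cache-aware state with the same reasoning as in Lemma~\ref{lem:privInvEnd} --- yields \po~\ref{prop:cohnotnc}.

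The step I expect to be the real obstacle is not any individual case above, which are short, but getting the bookkeeping of the set $\NC$ exactly right against the dynamically growing critical set $\domCritc$, in particular under memory-ownership transfer: one must ensure that whenever a block passes from the application into $\domCritc$ it is already in $\PA\setminus\NC$ (as under always-cacheability, where $\domCritc\subseteq\Preg\subseteq\PA\setminus\NC$ by \po~\ref{prop:cm:always-cacheable}), or, for countermeasures such as selective eviction where $\domCritc$ may grow outside $\Preg$, that the corresponding line has been flushed before the handler returns and that the definition of $\NC$ (and hence of the bisimulation relation $\ObsEq{\RealObs}$ in Section~\ref{sec:confidentiality}) is reconciled with this. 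Pinning down a definition of $\NC$ that is simultaneously static, disjoint from $\domCritc$, and a superset of every address to which the application can ever hold an uncacheable alias is where the care is needed.
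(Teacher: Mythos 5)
Your proposal is correct and matches the paper's intended discharge of this obligation: the paper justifies it in one sentence as following from Obligation~\ref{prop:nc} (no uncacheable non-privileged aliases to $\PA\setminus\NC$), an initially coherent memory configuration, and the kernel never breaking coherency itself (\po~\ref{lem:kernelCFGandVM}.\ref{lem:ptFixedPriv}), which is exactly the induction you carry out. Your case analysis over the derivability clauses is the same argument the paper uses in Lemma~\ref{thm:coherencyUser} for the critical resources, transplanted to $\PA\setminus\NC$.
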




Finally, it has to be shown that non-privileged cache operations preserve the equivalence of coherent and incoherent resources.

\begin{lemma}\label{prop:cacheeq}
 Given a set of potentially incoherent addresses $A\subset \Obs\cap \PA$ and cache-aware states $\RealState_1$ and $\RealState_2$ such that \begin{enumerate}
   \item $\coh(\RealState_1,(\Obs\cap \PA)\setminus A)$ and $\coh(\RealState_2,(\Obs\cap \PA)\setminus A)$, 
   \item $\forall pa \in (\Obs\cap \PA)\setminus A.\ \Exe(\RealState_1,pa)= \Exe(\RealState_2,pa)$, and 
   \item $(\RealState_1.\cache, \RealState_1.\mem) \cmeq_{A} (\RealState_2.\cache, \RealState_2.\mem)$, 
\end{enumerate} if $\RealState_1 \RealTrs{\UMode} \RealState_1'\touched \inst$ and
$\RealState_2 \RealTrs{\UMode}\RealState_2'\touched \inst$, and 
$\inst$ is
cacheable, 
then \begin{enumerate}
\item $\coh(\RealState_1',(\Obs\cap \PA)\setminus A)$ and $\coh(\RealState_2',(\Obs\cap \PA)\setminus A)$,
\item $\forall pa \in (\Obs\cap \PA)\setminus A.\ \Exe(\RealState_1',pa)= \Exe(\RealState_2',pa)$, and
\item $(\RealState_1'.\cache, \RealState_1'.\mem) \cmeq_A (\RealState_2'.\cache, \RealState_2'.\mem)$.
\end{enumerate}
\end{lemma}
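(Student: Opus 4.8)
The plan is to prove Lemma~\ref{prop:cacheeq} by case analysis on the data-memory effect $\inst$, exploiting that it is \emph{cacheable} and \emph{identical} in the two runs. The guiding observation is that a cacheable non-privileged access gives rise to a fixed, short sequence of generic cache actions (at most a line fill, a forced eviction with write-back, and the read/write itself), acting on the line set determined by the accessed address $pa$; since $pa$ and the access type coincide in $\RealState_1$ and $\RealState_2$, these actions target the same set and tag in both runs, so what remains is to track the data flowing through them. First I would record three consequences of the hypotheses that are stable under the step: cacheable accesses cannot create incoherence, so no address in $(\Obs\cap\PA)\setminus A$ leaves the coherent region (this is exactly where ``$\inst$ cacheable'' is used, contrast \po~\ref{prop:nc}); the generic cache semantics is a function of the per-set action history, the operation, and the incoming data, so equal inputs yield equal tag states, dirty bits, and filtered histories; and, since $\Obs=\{r\mid r\notin\Conf\}$, any write-back that lands on a confidential address is invisible to conclusions~(1)--(3).

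The key step is the treatment of line fills and forced evictions. On a miss the line containing $pa$ is filled; by $\cmeq_A$ together with $\coh$ and hypothesis~(2), the memory value of $pa$ agrees in both runs when $pa\in A$, when $pa\in(\Obs\cap\PA)\setminus A$, and trivially when $pa\in\Conf$ (the loaded content is then not constrained by any conclusion), so the filled line has the same content and dirty bit in both. If the target set is full, an eviction is forced; here I would appeal to Assumption~\ref{lem:lruEvctSameTag} and the filtered-history component of $\cmeq_A$ (clause~2(d) of the relation $\cmeq_A$) to conclude that $\fevict?$ picks the same victim tag in both runs, and to the dirty-bit component (clause~2(b)) to conclude the write-back decision is the same. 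When the victim is dirty its value is written to memory; a further case split on whether the victim address lies in $A$ (use clause~2(c) of $\cmeq_A$), in $(\Obs\cap\PA)\setminus A$ (use $\coh$ plus hypothesis~(2), so the dirty cached value equals $\Exe$, which agrees), or in $\Conf$ (harmless) shows that the two memories stay equal on $(\Obs\cap\PA)\setminus A$ and on $A$. Re-establishing the conclusions is then bookkeeping: a freshly filled line is clean with the memory value, hence coherent; a freshly written line is dirty, hence coherent; an evicted line becomes uncached, hence vacuously coherent and its memory value equals the (equal) value it had in cache; and since both runs applied the same cache actions on the same tags to $\cmeq_A$-related caches, the resulting caches are again $\cmeq_A$-related.

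For the write case $\inst=\wt(\cdot)$ there is, additionally, the update of the target line with the written value $v$: if the target address is in $A$ we must re-establish clause~2(c) of $\cmeq_A$, which needs $v$ equal; if it is in $(\Obs\cap\PA)\setminus A$ we must re-establish hypothesis~(2), again needing $v$ equal; if it is confidential the value is irrelevant. Equality of $v$ in the two runs is not proved here but inherited from the surrounding bisimulation (the attacker computes $v$ from observable registers and observable memory, which coincide by $\ObsEq{\Obs}$; cf.\ Corollary~\ref{cor:seqobscoh} and \po~\ref{prop:samemmu}); I would state this as the single external input the lemma consumes, or equivalently strengthen $\inst$ so that $\wt$ records $v$. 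The read case $\inst=\rdd(\cdot)$ has no such update and changes memory only through the write-backs already handled, and $\inst=\cl(\cdot)$ does not occur for non-privileged transitions (were it admitted, it would only perform write-backs and reset clean flags, which are covered by the eviction analysis).

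I expect the eviction bookkeeping to be the main obstacle: showing that forcing an eviction in a set that may currently hold a dirty \emph{confidential} or dirty \emph{coherent-observable} line neither desynchronises the two cache states nor leaks a secret into observable memory. This relies delicately on (i) Assumption~\ref{lem:lruEvctSameTag} plus the filtered-history part of $\cmeq_A$ to make both runs evict the same victim, (ii) the dirty-bit part to make the write-back decision the same, and (iii) $\Conf\cap\Obs=\emptyset$ so that write-backs of confidential lines fall outside all three conclusions. A secondary, more tedious point is discharging the determinism of the generic cache model of Section~\ref{sec:generic:cache} --- i.e.\ that tag state, dirtiness and filtered history after an action depend only on these same data before the action together with the action and its operand --- which is precisely the property that lets ``same actions on $\cmeq_A$-related caches'' yield ``$\cmeq_A$-related caches''.
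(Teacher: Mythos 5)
Your proposal is correct and follows essentially the same route as the paper's proof: reduce the step to a case split on the internal cache actions ($\touch_{\mathrm{r}}$, $\touch_{\mathrm{w}}$, $\lfill$, $\evict$), use Assumption~\ref{lem:lruEvctSameTag} together with the filtered-history and dirty-bit components of $\cmeq_A$ to force the same victim and write-back decision in both runs, and split write-backs over $A$, $(\Obs\cap\PA)\setminus A$, and $\Conf$ exactly as you describe. Your explicit flagging of the written value $v$ as an external input (it is only discharged by the surrounding observational equivalence, not by $\inst$ itself) is a fair point that the paper's proof also leaves implicit.
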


This lemma captures three essential arguments about the underlying hardware: \begin{paraenum} \item on coherent memory, observational equivalence is preserved w.r.t.~the core-view and thus independent of whether the data is currently cached, \item cacheable accesses cannot break the coherency of these resources, and \item on potentially incoherent memory (addresses $A$), observational equivalence is preserved if cache and memory are equivalent for the corresponding lines, i.e., they have the same tag states, contents, and filtered history\end{paraenum}. 

These properties inherently depend on the specific cache architecture, in particular on the eviction policy and its history filter. For any two equal filtered histories the eviction policy selects the same entry to evict (Assumption~\ref{lem:lruEvctSameTag}), therefore corresponding cache states stay observationally equivalent. Moreover, they rely on the verification conditions that evictions do not change the core-view of a coherent memory resource and that cache line fills for coherent addresses read the same values from main memory. 

We have formally verified Lemma~\ref{prop:cacheeq} for an instantiation of our cache model that uses an LRU replacement policy (see Appendix).

Based on the proof obligations lined out above we can now prove confidentiality for steps of the application
\begin{lemma}
\label{thm:User-No-Infiltration}
  For all $\RealState_1$, $\RealState_2$, $\RealState_1'$ where $\RealMonIntSW(\RealState_1)$ and $\RealMonIntSW(\RealState_2)$ hold,
  if $\RealState_1 \ObsEq{\RealObs} \RealState_2$ and $\RealState_1 \RealTrs{\UMode}{} \RealState'_1$, then $\exists\RealState_2'.\ \RealState_2 \RealTrs{\UMode}{} \RealState'_2$ and $\RealState'_1 \ObsEq{\Obs} \RealState'_2$.
\end{lemma}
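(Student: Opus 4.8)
The statement is the application-level unwinding condition for the confidentiality bisimulation $\ObsEq{\RealObs}$ (Theorem~\ref{thm:topConfidentiality}), so the plan is to argue it directly on the cache-aware model, using Lemma~\ref{prop:cacheeq} as the main workhorse and the monitor-related proof obligations \ref{prop:monitor}--\ref{prop:cohnotnc} to reduce everything to observable, coherent state. Recall that $\RealState_1 \ObsEq{\RealObs} \RealState_2$ unfolds to $\RealState_1 \ObsEq{\Obscoh} \RealState_2$ (core-view equality on $\Obs\setminus\NC$) together with $(\RealState_1.\cache,\RealState_1.\mem) \cmeq_{\NC} (\RealState_2.\cache,\RealState_2.\mem)$. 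The proof splits into three tasks: (i) show that $\RealState_2$ admits a matching step executing the same instruction with the same data-memory effects $\inst$; (ii) show that this step re-establishes $\ObsEq{\Obscoh}$ on the post-states; and (iii) show it re-establishes $\cmeq_{\NC}$ on the post-states. Conjoining (ii) and (iii) yields $\RealState_1' \ObsEq{\RealObs} \RealState_2'$.

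\textbf{Task (i): the matching transition.} First I would observe that the instruction to be executed, and its behaviour, depend only on resources the hardware monitor lets the application read, which by \po~\ref{prop:monitor} excludes all of $\Conf$, hence lie in $\Obs$. The general-purpose and control registers (in particular $\ureg.\pc$) are always coherent and in $\Obs\setminus\NC$, so they agree by $\ObsEq{\Obscoh}$; the monitor data $\monDom(\RealState)$ is observable by \po~\ref{prop:samemmu} and coherent (it is not writable by the application, and by \po{s}~\ref{prop:nc}--\ref{prop:cohnotnc} any monitor-relevant memory outside $\NC$ is coherent), so $\RealMMU$ and $\RealMonitor$ behave identically in the two states. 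Consequently the fetched instruction word and all operands agree: a readable physical address is either outside $\NC$, where \po{s}~\ref{prop:nc}--\ref{prop:cohnotnc} give coherence and hence core-view equality, or inside $\NC$, where $\cmeq_{\NC}$ directly yields core-view equality. Thus $\RealState_2$ can take a step $\RealState_2 \RealTrs{\UMode} \RealState_2'$ executing the same instruction, with identical translated addresses and identical effects $\inst$ (also when the step raises an exception, since the exception class and its entry effects are themselves a function of the observable monitor/instruction state).

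\textbf{Tasks (ii) and (iii): preservation of the equivalence.} Here I would case-split on whether $\inst$ is cacheable. If it is cacheable, Lemma~\ref{prop:cacheeq} instantiated with $A = \NC$ applies almost verbatim: its hypotheses are exactly $\coh(\RealState_i,(\Obs\cap\PA)\setminus\NC)$ (from \po~\ref{prop:cohnotnc}), core-view equality on $(\Obs\cap\PA)\setminus\NC$ (from $\ObsEq{\Obscoh}$), and $\cmeq_{\NC}$ (given); its conclusion gives coherence and core-view equality on $(\Obs\cap\PA)\setminus\NC$ in the post-states plus $\cmeq_{\NC}$, which together with the fact that registers in $\Obs$ receive the same (observable, agreeing) values re-establishes $\ObsEq{\Obscoh}$ and $\cmeq_{\NC}$. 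If $\inst$ is an uncacheable access, then by \po~\ref{prop:nc} the accessed address must lie in $\NC$ (coherent addresses outside $\NC$ admit no uncacheable alias); such a write goes straight to memory with the same value in both traces and leaves the caches untouched, so $\cmeq_{\NC}$ is preserved and nothing in $\PA\setminus\NC$ or in the caches changes, whence $\ObsEq{\Obscoh}$ survives as well. Combining the cases completes the proof.

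\textbf{Main obstacle.} The delicate part is Task (i): pinning down precisely that ``the same instruction executes in both traces'' requires assembling several proof obligations at once --- observability \emph{and} coherence of the monitor data, coherence (or $\NC$-membership) of the fetch address, and absence of reads to confidential resources --- and in the excerpt these are spread across \po{s}~\ref{prop:monitor}--\ref{prop:cohnotnc}; care is needed to discharge the coherence of the page-table/monitor region, which is not literally one of those obligations but follows from the application's inability to write it together with \po{s}~\ref{prop:nc}--\ref{prop:cohnotnc}. A secondary obstacle is that Lemma~\ref{prop:cacheeq} is stated only for cacheable $\inst$, so the uncacheable case must be handled by the separate direct argument above, which is where \po~\ref{prop:nc} is essential.
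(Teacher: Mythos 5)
Your proposal is correct and follows essentially the same route as the paper's proof: a case split over hardware steps, using the observational equivalence of registers, coherent memory, and cache (via \po{s}~\ref{prop:monitor} and~\ref{prop:samemmu}) to show the same instruction executes with the same addresses, then distinguishing uncacheable accesses (which by \po~\ref{prop:nc} hit $\NC$, bypass the cache, and preserve $\cmeq_\NC$ trivially) from cacheable ones (discharged by Lemma~\ref{prop:cacheeq} with $A=\NC$ under the coherency guarantee of \po~\ref{prop:cohnotnc}). The decomposition into tasks (i)--(iii) and your explicit handling of the exception case are just a slightly more detailed write-up of the same argument.
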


\begin{prove}{~Lemma~\ref{thm:User-No-Infiltration}} 
We perform a case split over all possible hardware steps in non-privileged mode. Observational equivalence on cache, memory, and program counter
provide that the same instruction is fetched and executed in both steps.\footnote{We do not model instruction caches in the scope of this work, but assume a unified data and instruction cache. In fact, instruction caches allow further storage-channel-based attacks~\cite{DBLP:conf/sp/GuancialeNBD16} that would need to be addressed at this point in the proof.}
For hardware transitions that do not access memory, we conclude as in the proof on the cacheless model (\po~\ref{thm:seqConfCoh}, similar theorems were proved in \cite{Schwarz2016}). 

In case of memory instructions, since  general purpose registers are equivalent, the same addresses are computed for the operation. Obligations~\ref{prop:monitor} and~\ref{prop:samemmu} yield that the same access permissions are in place in $\RealState_1$ and $\RealState_2$ such that the application can only directly read observationally equivalent memory resources. Then we distinguish cacheable and uncacheable accesses to an address $\mathit{pa}$. 

In the latter case, by Obligation~\ref{prop:nc} and we know that $\mathit{pa}\in\NC$, and we obtain $\RealState_1.\mem(\mathit{pa}) =\RealState_2.\mem(\mathit{pa})$ from $\RealState_1 \ObsEq{\Obs} \RealState_2$. Now, since the access bypasses the caches, they are unchanged. Moreover, the memory accesses in both states yield the same result and the equivalence of cache and memory follows trivially. 

In case of cacheable accesses we know by Obligation~\ref{prop:cohnotnc} that resources $(\Obs\cap \PA)\setminus \NC$ are coherent and we apply Lemma~\ref{prop:cacheeq} to deduce the observational equivalence of cache and memory for addresses in $(\Obs\cap \PA)\setminus \NC$ and $\NC$. As also register resources are updated with the same values, we conclude $\RealState_1' \ObsEq{\Obs} \RealState_2'$.
\end{prove}


 \subsection{Confidentiality: Kernel Level}
\label{subsec:kernelConf}
Kernel level confidentiality ensures that the execution of kernel steps does not leak confidential information to the application. Specifically, this entails showing that at the end of the kernel execution observational equivalence of resources in $\Obs$ is (re)established.
First, the kernel should not leak confidential resources in absence of caches:
\begin{prop}
\label{thm:seqConfCoh}
For all $\SeqState_1$, $\SeqState_2$, $\SeqState_1'$ such that
$\SeqMonIntSW(\SeqState_1)$, $\SeqMonIntSW(\SeqState_2)$
$\callConv{\SeqState_1}$, 
$\callConv{\SeqState_2}$,
and $\SeqState_1 \ObsEq{\Obs} \SeqState_2$
if $\SeqState_1 \WkTrs{} \SeqState_1'$  then $\exists \SeqState_2'\ .\ \SeqState_2 \WkTrs{} \SeqState'_2$ and $\SeqState'_1 \ObsEq{\Obs} \SeqState'_2$.
\end{prop}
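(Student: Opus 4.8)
The plan is to treat this as a per-kernel obligation and discharge it by a standard relational (2-safety) argument that is reduced, on the cacheless model, to a single-instruction unwinding so that the relational binary-analysis tooling can be reused. Since the cacheless transition relation is deterministic (it is the ARM ISA model restricted to coherent memory, caches disabled), the existentially quantified run $\SeqState_2 \WkTrs{} \SeqState_2'$ is forced: $\SeqState_2'$ is whatever state is reached by running the handler from $\SeqState_2$. Hence the only real content of the statement is (i) that this second run actually terminates, with the same number of steps as the first, and (ii) that the resulting states satisfy $\SeqState_1' \ObsEq{\Obs} \SeqState_2'$, where $\Obs = \resSpc \setminus \Conf$.

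First I would unfold $\WkTrs{}$ into a finite chain of privileged single steps ${\SeqState_1 \SeqTrs{\TMode} \cdots \SeqTrs{\TMode} \SeqState_1'}$ with all intermediate states privileged and $\Mode{\SeqState_1'} = \UMode$, and symmetrically for the second run, and then strengthen the goal to an inductive coupling relation $\mathcal{B}$ between corresponding intermediate states. I would take $\SeqState_1^{(k)} \mathrel{\mathcal{B}} \SeqState_2^{(k)}$ to assert: (a) equal execution mode and equal program counter $\SeqState_1^{(k)}.\ureg.\pc = \SeqState_2^{(k)}.\ureg.\pc$; (b) $\SeqState_1^{(k)} \ObsEq{\Obs} \SeqState_2^{(k)}$, i.e.\ equal core-view of every non-confidential resource — which in particular covers the handler's code memory, the registers carrying the trapped hypercall number and arguments, and, via the monitor obligations (\po~\ref{prop:monitor} and \po~\ref{prop:samemmu}), the page tables that decide address translation; and (c) a flow invariant stating that every control-flow decision and every memory address computed by the handler up to step $k$ depends only on resources in $\Obs$. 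The base case is immediate: $\callConv{\SeqState_1}$ and $\callConv{\SeqState_2}$ put both states at an exception-vector entry in privileged mode, $\pc\notin\Conf$ so $\ObsEq{\Obs}$ already gives equal PC (hence the \emph{same} handler and the same observable arguments), $\ObsEq{\Obs}$ is a hypothesis, and (c) holds vacuously as no step has been taken.

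The preservation step is the main technical obligation. Given $\SeqState_1^{(k)} \mathrel{\mathcal{B}} \SeqState_2^{(k)}$ and $\SeqState_1^{(k)} \SeqTrs{\TMode} \SeqState_1^{(k+1)}$: from (a),(b) the same instruction word is fetched under the same translation (code memory and page tables lie in $\Obs$, cf.\ \po~\ref{lem:kernelCFGandVM} for the range of the kernel PC), so the same instruction executes; from (c) every branch condition and every accessed address evaluates identically, so the second run performs the matching step and (a),(c) are re-established; for (b), a write to a resource in $\Obs$ stores a value that by (b),(c) is a function of $\Obs$-resources, so observational equivalence is preserved, whereas a write to a resource in $\Conf$ does not touch $\Obs$ by definition, and reads do not change the state. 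Because the two runs stay in lockstep (same PC at every step), the second handler halts exactly when the first does — termination of the first being available from functional/total correctness of the kernel (the invariants $\SeqMonIntSW$/$\SeqMonIntSWPriv$) — so the traces have equal length, and at the return point (b) delivers $\SeqState_1' \ObsEq{\Obs} \SeqState_2'$. In practice I would not establish (b) and (c) by hand for each handler: I would submit the binary code of every handler, with the information-flow policy marking $\Conf$ as high and everything else as low, to the relational analysis tool of~\cite{DBLP:conf/ccs/BalliuDG14} (adapted to trace observations as in paper~\ref{paper:sp}), which certifies precisely the flow invariant and the observation-equivalence preservation, modulo loop invariants that must be supplied manually.

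The hard part will be handlers whose behaviour genuinely depends on confidential data — e.g.\ a cryptographic service with secret-dependent table look-ups — for which (c) is simply false and $\mathcal{B}$ must be weakened to tolerate differing confidential state while still forcing observational equivalence of the outputs; this requires an explicit declassification / secret-independence argument, e.g.\ that all secret-dependent accesses are confined to a region that is itself in $\Conf$, so (b) is untouched. For the MMU-virtualization handlers of the hypervisor case study I expect this not to arise: those handlers only read and write page tables, reference counters and the guest memory passed to them, none of which are in $\Conf$, so (c) holds outright and the unwinding goes through with essentially the same effort as the integrity proof. Two side conditions still need care — that the page tables governing translation remain in $\Obs$ throughout the handler (which follows from the integrity invariant $\SeqMonIntSWPriv$ together with \po~\ref{lem:kernelCFGandVM}.\ref{lem:ptFixedPriv} and the monitor obligations), and that $\Conf$ being static is genuinely compatible with the dynamic ownership discipline of the kernel — and it should be noted that this obligation only covers the privileged fragment of a weak transition, the unprivileged case of the overall confidentiality theorem (Theorem~\ref{thm:topConfidentiality}) being handled separately by Lemma~\ref{thm:User-No-Infiltration}.
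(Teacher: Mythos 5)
Your proposal is sound, but note that the paper does not actually prove this statement: it is a \emph{Proof Obligation}, i.e., a per-kernel verification condition stated on the cacheless model that the framework deliberately leaves to be discharged for each concrete system (the authors only indicate, in the case study and conclusion, that such obligations are to be verified by relational analysis of the kernel binary). Your lockstep unwinding — exploiting that $\SeqTrs{\TMode}$ is a total deterministic function, coupling the program counter and $\ObsEq{\Obs}$ across the two runs, and reducing preservation to per-instruction flow conditions dischargeable with the relational tool of~\cite{DBLP:conf/ccs/BalliuDG14} — is essentially the same strategy the authors use to establish the analogous no-infiltration property for the hypervisor handlers in the precursor verification, so it matches the intended discharge route.
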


The goal is now to show that the chosen countermeasure against information leakage through the caches allows transferring the confidentiality property to the cache-aware model. Formally,
this countermeasure is represented by a two-state property $\CM(\SeqState,\SeqState')$ on the cacheless and $\rCM(\RealState,\RealState')$ on the cache-aware model. Here the first argument 
is the starting state of the kernel execution, while the second argument is some arbitrary state that is reached from there by a privileged computation. Property $\CM(\SeqState,\SeqState')$
should only cover functional properties of the countermeasure that can be verified on the cacheless model as part of the kernel's internal invariant.
\begin{prop}\label{prop:cminv}
For all $\SeqState,\SeqState'$ with $\SeqState \SeqTrs{\TMode}^\ast \SeqState'$ and $\callConv{\SeqState}$, if $\SeqMonIntSWPriv(\SeqState,\SeqState')$ then $\CM(\SeqState,\SeqState')$
\end{prop}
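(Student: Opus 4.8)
The plan is to discharge this obligation entirely within the cacheless model, since by design $\CM$ is a purely functional two-state predicate about the kernel's internal bookkeeping and the resources it touches, with no reference to caches. First I would unfold the internal invariant $\SeqMonIntSWPriv(\SeqState, \SeqState')$ and isolate the countermeasure-specific conjuncts it contains (the part labelled (v) in the construction of Section~\ref{subsec:trustedIntegrity}), which must be instantiated per countermeasure exactly as was done for the integrity proof obligations. Concretely, for the cache-normalisation / selective-eviction-style confidentiality defences, I would define $\CM(\SeqState,\SeqState')$ to assert that whenever the kernel is at an internal state $\SeqState'$ reachable from the entry state $\SeqState$, the sequence of memory operations performed so far (tracked by a history variable analogous to $\histev$ in Obligation~\ref{vc:selective:eviction:accesses:resources:evicted}, or by a normalisation footprint) depends only on non-confidential resources. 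The invariant $\SeqMonIntSWPriv$ is then strengthened to carry exactly this information as an inductive hypothesis, so that the implication $\SeqMonIntSWPriv(\SeqState,\SeqState') \implc \CM(\SeqState,\SeqState')$ holds essentially by projection.

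The argument itself is then an induction on the length of the privileged computation $\SeqState \SeqTrs{\TMode}^\ast \SeqState'$, mirroring the structure of Lemma~\ref{lem:privIntegByMiniInv}. The base case is $\SeqState' = \SeqState$ (zero steps), where $\callConv{\SeqState}$ and $\SeqMonIntSW(\SeqState)$ establish the entry conditions of the countermeasure trivially --- at handler entry no secret-dependent access has yet occurred. For the inductive step I would appeal to the analogue of Obligation~\ref{vc:kernel:resource:value:internal:invariant:inmodels}.\ref{vc:kernel:internal:invariant:inmodels}: each single privileged instruction preserves $\SeqMonIntSWPriv$, and in particular preserves the conjunct encoding $\CM$, because the kernel code has been written (and is to be verified by the existing relational tools, e.g.~\cite{DBLP:conf/ccs/BalliuDG14}) so that its control-flow and address computations never branch on or index by confidential resources. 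This is where the proof obligation gets \emph{pushed down} to the code-verification layer: the step case is not self-proving but becomes a side condition on the kernel binary, discharged by the cacheless-model tools exactly as Obligation~\ref{lem:invPrevsISA} and Obligation~\ref{vc:kernel:resource:value:internal:invariant:inmodels} are.

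The main obstacle I anticipate is choosing $\CM$ (and correspondingly $\rCM$ in the subsequent cache-aware transfer) at the right level of abstraction: it must be strong enough that, together with Obligation~\ref{thm:seqConfCoh} and Lemma~\ref{prop:cacheeq}, it will suffice to re-establish $\ObsEq{\Obs}$ on the cache-aware side (the still-to-come kernel-level confidentiality theorem paralleling Lemma~\ref{lem:privInvEnd}), yet weak enough to be an inductive invariant provable step-by-step on the cacheless model without coupling to every intermediate kernel state. For normalisation in particular, $\CM$ must additionally record that by the time the kernel returns control the cache-relevant footprint has been brought to a \emph{fixed}, secret-independent shape --- so the predicate is genuinely two-state (comparing the exit configuration against a canonical one determined only by $\SeqState$'s observable part) rather than a one-state assertion. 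Getting this formulation to line up with the history-filter property (Assumption~\ref{lem:lruEvctSameTag}) so that the eviction policy cannot leak residual secret-dependent history is the delicate point; once $\CM$ is fixed correctly, Obligation~\ref{prop:cminv} itself reduces to the routine induction sketched above.
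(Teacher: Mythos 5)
This statement is a proof obligation rather than a theorem the paper proves: the paper explicitly stipulates that $\CM$ ``should only cover functional properties of the countermeasure that can be verified on the cacheless model as part of the kernel's internal invariant,'' i.e.\ it is meant to be discharged exactly as you propose --- by folding the countermeasure-specific conjunct into part~(v) of $\SeqMonIntSWPriv$ so the implication holds by projection, and pushing the per-instruction preservation down to code-level verification on the cacheless model. Your plan is therefore essentially the paper's intended route; the only difference is cosmetic, in that the paper's worked instantiations of $\CM$ are secret-independent code and cache flushing rather than normalisation or selective eviction.
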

The countermeasure property on the cache-aware model, on the other hand, extends $\CM$ with conditions on the cache state that prevent information leakage to the application. We demand that it can be established through the bisimulation between cacheless and cache-aware model for a given countermeasure. 
\begin{prop}\label{prop:rcminv}
For all $\SeqState,\SeqState',\RealState,\RealState'$ where $\SeqState \SeqTrs{\TMode}^\ast \SeqState'$, $\RealState \RealTrs{\TMode}^\ast \RealState'$, $\callConv{\SeqState}$, $\RealState \SimR \SeqState$, and $\RealState' \SimR \SeqState'$, if $\SeqMonIntSWPriv(\SeqState,\SeqState')$ then $\rCM(\RealState,\RealState')$ 
\end{prop}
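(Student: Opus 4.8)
The plan is to treat Obligation~\ref{prop:rcminv} the same way the integrity countermeasures were handled in Section~\ref{subsec:countermeasures}: it is discharged once per concrete countermeasure by choosing an appropriate instantiation of $\rCM$ (and of the cacheless companion $\CM$ used in Obligations~\ref{prop:cminv} and~\ref{thm:seqConfCoh}), and then verifying it. The common skeleton is as follows. First, from $\SeqMonIntSWPriv(\SeqState,\SeqState')$ and Obligation~\ref{prop:cminv} we recover the cacheless, purely functional content $\CM(\SeqState,\SeqState')$ of the countermeasure; since the refinement relation $\SimR$ equates the memory-view $\altExe$ of \emph{every} resource in $\RealState'$ and $\SeqState'$, this functional content transfers verbatim to the cache-aware side. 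What $\rCM$ adds on top of $\CM$ are conditions on the cache of $\RealState'$ that, together with $\RealState' \SimR \SeqState'$ and $\SeqState_1 \ObsEq{\Obs} \SeqState_2 \Rightarrow \SeqState'_1 \ObsEq{\Obs} \SeqState'_2$ (Obligation~\ref{thm:seqConfCoh}), are strong enough to close the cache-state components of $\ObsEq{\Obs}$ at handler exit. So the real work is to establish those cache conditions along the privileged computation $\RealState \RealTrs{\TMode}^\ast \RealState'$.

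To run that argument I would first recover $\RealMonIntSW(\RealState)$ from the hypotheses: $\SeqMonIntSWPriv(\SeqState,\SeqState')$ contains $\SeqMonIntSW(\SeqState)$, which transfers to $\RealInvF(\RealState)$ via $\RealState \SimR \SeqState$ and Obligation~\ref{vc:isaInitStsfyInv}, with the coherency and countermeasure parts of $\RealMonIntSW$ supplied by the integrity side. This lets me invoke Lemma~\ref{lem:privIntegByMiniInv}, so that the cache-aware handler run $\RealState \RealTrs{\TMode}^{n} \RealState'$ is matched step-for-step by the cacheless run $\SeqState \SeqTrs{\TMode}^{n} \SeqState'$, with $\SimR$ and both internal invariants $\RealMonIntSWPriv$, $\SeqMonIntSWPriv$ holding at every intermediate state. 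For the \textbf{complete-flush} countermeasure this immediately finishes: I would instrument the cacheless model with a cache-action history variable (exactly the device used for selective eviction in Section~\ref{subsec:countermeasures}) and define $\SeqMonIntSWPriv$ so that by handler exit the kernel has performed a full flush and issued no subsequent memory access; $\CM$ records this, Obligation~\ref{prop:cminv} discharges it on the cacheless model, and $\rCM$ then simply asserts that $\RealState'.\cache$ is empty, which holds trivially and is observationally equal in any two runs. For \textbf{cache partitioning} the picture is the same with ``empty cache'' replaced by ``the attacker's line sets are untouched'': $\SeqMonIntSWPriv$ tracks that every kernel memory operation lands in the kernel's partition, so the observable line sets in $\RealState'$ coincide with those in $\RealState$, and $\rCM$ is local.

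The main obstacle is the \textbf{secret-independent memory access} (and \textbf{normalisation}) countermeasures, where the cache is non-trivial at handler exit and $\rCM(\RealState,\RealState')$ has to assert that, for every observable line set, the tag state, dirtiness, observable contents, and \emph{filtered} history are functions of the non-confidential resources only. Establishing this requires a genuinely relational (two-run) argument: one must show that the sequence of cacheable memory operations issued by the kernel is identical in two $\ObsEq{\Obs}$-related runs, then combine this with Assumption~\ref{lem:lruEvctSameTag} (the eviction policy depends only on the filtered history) and with the integrity-side coherency guarantees (the kernel reads only coherent memory, so each line fill loads observationally equal data, and attacker-controlled incoherent aliases in $\NC$ are never read by the kernel) to conclude $\cmeq_\NC$-equivalence and $\ObsEq{\Obscoh}$-equivalence of the final cache/memory. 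The delicate points are (i) phrasing this relational trace property so that it is a property of the \emph{cacheless} model, bundling it into $\CM$ / $\SeqMonIntSWPriv$ so Obligation~\ref{prop:cminv} really captures it and existing relational tools (e.g.\ the one adapted in Section~\ref{sec:verification}) can verify it, and (ii) checking that the $\NC$ aliases the attacker may have installed before the handler cannot cause the two runs' kernel executions to diverge in which lines they load — which is where the integrity obligations about always-cacheability / selective eviction and coherency of critical resources are reused. Once $\rCM$ is established for each countermeasure in this way, combining it with Obligation~\ref{thm:seqConfCoh} and Lemma~\ref{lem:privIntegByMiniInv} yields the privileged half of Theorem~\ref{thm:topConfidentiality}.
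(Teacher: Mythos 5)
Your proposal matches the paper's treatment: Obligation~\ref{prop:rcminv} is not proved once and for all but is discharged per countermeasure by instantiating $\rCM$ and then, exactly as you describe, using Lemma~\ref{lem:privIntegByMiniInv} to run the cache-aware and cacheless handler executions in lockstep, the cacheless relational properties (\po~\ref{thm:seqConfCoh}, \po~\ref{prop:satanize}, and the countermeasure-specific trace obligations such as \po~\ref{prop:seqsecind} and \po~\ref{prop:flush}) to pin down the kernel's memory-operation trace, and Assumption~\ref{lem:lruEvctSameTag} together with the coherency guarantees to close the cache components of $\cmeq_\NC$. Your analysis of the flushing case (empty cache, trivially equivalent) and of the secret-independent case (a genuinely two-run argument about identical cacheable access sequences, pushed down to a cacheless relational verification condition) is the same decomposition the paper uses, so no gap to report.
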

%


Hereafter, to enable transferring non-interference properties from the
cacheless model to the cache-aware model we assume that the transition relations
$\SeqTrs{\TMode}$ are total functions for both models.
As we want to reuse Lemma~\ref{lem:privIntegByMiniInv} in the following proofs, we require a number of properties of the simulation relation and the invariants.
\begin{lemma}\label{lem:obsrsim}
For all $\RealState_1,\RealState_2, \SeqState_1,\SeqState_2$, such that $\RealMonIntSW(\RealState_1)$, $\RealMonIntSW(\RealState_2)$, $\RealMonIntSW(\SeqState_1)$, and $\RealMonIntSW(\SeqState_2)$ as well as $\RealState_1 \SimR \SeqState_1$ and $\RealState_2 \SimR \SeqState_2$:
\begin{center} (1)\ \  $\RealState_1 \ObsEq{\Obs} \RealState_2 \Rightarrow \SeqState_1 \ObsEq{\Obs} \SeqState_2$\qquad (2)\ \ $\RealState_1 \ObsEq{\Obscoh} \RealState_2 \Leftrightarrow \SeqState_1 \ObsEq{\Obscoh} \SeqState_2$\end{center}
\end{lemma}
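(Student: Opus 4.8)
The plan is to prove the two statements in sequence: first derive part (2) directly from the simulation relation $\SimR$ and the coherence guaranteed by the invariant, then bootstrap part (1) on top of (2) with an extra case analysis over the potentially incoherent addresses $\NC$. No induction or reasoning about transitions is needed — the lemma is purely about the static interaction of $\ObsEq{\Obs}$, $\ObsEq{\Obscoh}$ and $\SimR$.

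For part (2) I would first record the bridging fact that for any cache-aware state $\RealState$ with $\RealMonIntSW(\RealState)$ and any $r\in\Obs\setminus\NC$ the core-view and memory-view agree, $\Exe(\RealState,r)=\altExe(\RealState,r)$: for non-memory resources this is immediate from the definition of $\altExe$, and for $r\in\PA\setminus\NC$ it follows from $\coh(\RealState,\PA\setminus\NC)$ (\po~\ref{prop:cohnotnc}) together with Lemma~\ref{lem:coherency1}. Since $\RealState_i\SimR\SeqState_i$ gives $\Exe(\SeqState_i,r)=\altExe(\RealState_i,r)$, we obtain $\Exe(\SeqState_i,r)=\Exe(\RealState_i,r)$ for every $r\in\Obs\setminus\NC$ and $i\in\{1,2\}$. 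Unfolding the definitions of $\ObsEq{\Obscoh}$ on cacheless and cache-aware states, the equivalence $\RealState_1\ObsEq{\Obscoh}\RealState_2\Leftrightarrow\SeqState_1\ObsEq{\Obscoh}\SeqState_2$ then follows by chaining these equalities componentwise; the argument is symmetric, so both directions come out at once.

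For part (1) I would unfold $\RealState_1\ObsEq{\Obs}\RealState_2$ into its two conjuncts, $\RealState_1\ObsEq{\Obscoh}\RealState_2$ and $(\RealState_1.\cache,\RealState_1.\mem)\cmeq_\NC(\RealState_2.\cache,\RealState_2.\mem)$. The first conjunct, via part (2), already yields $\Exe(\SeqState_1,r)=\Exe(\SeqState_2,r)$ for all $r\in\Obs\setminus\NC$; it remains to prove the same equality for $pa\in\NC$. By $\SimR$ it suffices to show $\altExe(\RealState_1,pa)=\altExe(\RealState_2,pa)$, and here I would split on dirtiness: the dirtiness clause of $\cmeq_\NC$ makes the cache line for $pa$ dirty (hence present) in $\RealState_1$ iff in $\RealState_2$; in the dirty case the content clause of $\cmeq_\NC$ (entries whose tag points into $\NC$ agree on content) gives $\cdSl{\RealState_1}{pa}=\cdSl{\RealState_2}{pa}$, while in the clean case the memory-agreement clause gives $\RealState_1.\mem(pa)=\RealState_2.\mem(pa)$; either way $\altExe(\RealState_1,pa)=\altExe(\RealState_2,pa)$ by definition of $\altExe$. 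Since $\NC\subseteq\PA\cap\Obs$, the two ranges $\Obs\setminus\NC$ and $\NC$ cover all of $\Obs$, so we conclude $\SeqState_1\ObsEq{\Obs}\SeqState_2$.

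The only mildly delicate part is the bookkeeping in part (1): ensuring the case split on $\fdirty{\RealState_i}{pa}$ is matched exactly against the right clauses of $\cmeq_\NC$ (dirtiness, per-address memory agreement, and content agreement for $\NC$-tagged entries), and keeping track of the fact that dirtiness entails a cache hit. I also expect the hypotheses $\RealMonIntSW(\SeqState_1)$ and $\RealMonIntSW(\SeqState_2)$ to be unused — the invariant is needed only on the cache-aware states, to supply coherence of $\PA\setminus\NC$ — which I would simply remark on rather than attempt to exploit.
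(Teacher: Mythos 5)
Your proof is correct and follows essentially the same route as the paper, whose own justification is a one-line appeal to exactly the ingredients you use: the definition of $\SimR$, coherency of $\Obs\setminus\NC$ via Obligation~\ref{prop:cohnotnc} together with Lemma~\ref{lem:coherency1}, plus (for the $\NC$ addresses in part (1)) the memory-agreement, dirtiness, and content clauses of $\cmeq_\NC$. Your case split on dirtiness and your remark that the invariants on the cacheless states are not actually needed are both accurate elaborations of what the paper leaves implicit.
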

The properties follow directly from the definition of $\SimR$, the coherency of resources in $\Obs\setminus\NC$ (Obligation~\ref{prop:cohnotnc}), and Lemma~\ref{lem:coherency1}.


\begin{figure}
 \centering
 \includegraphics[width=0.7\linewidth]{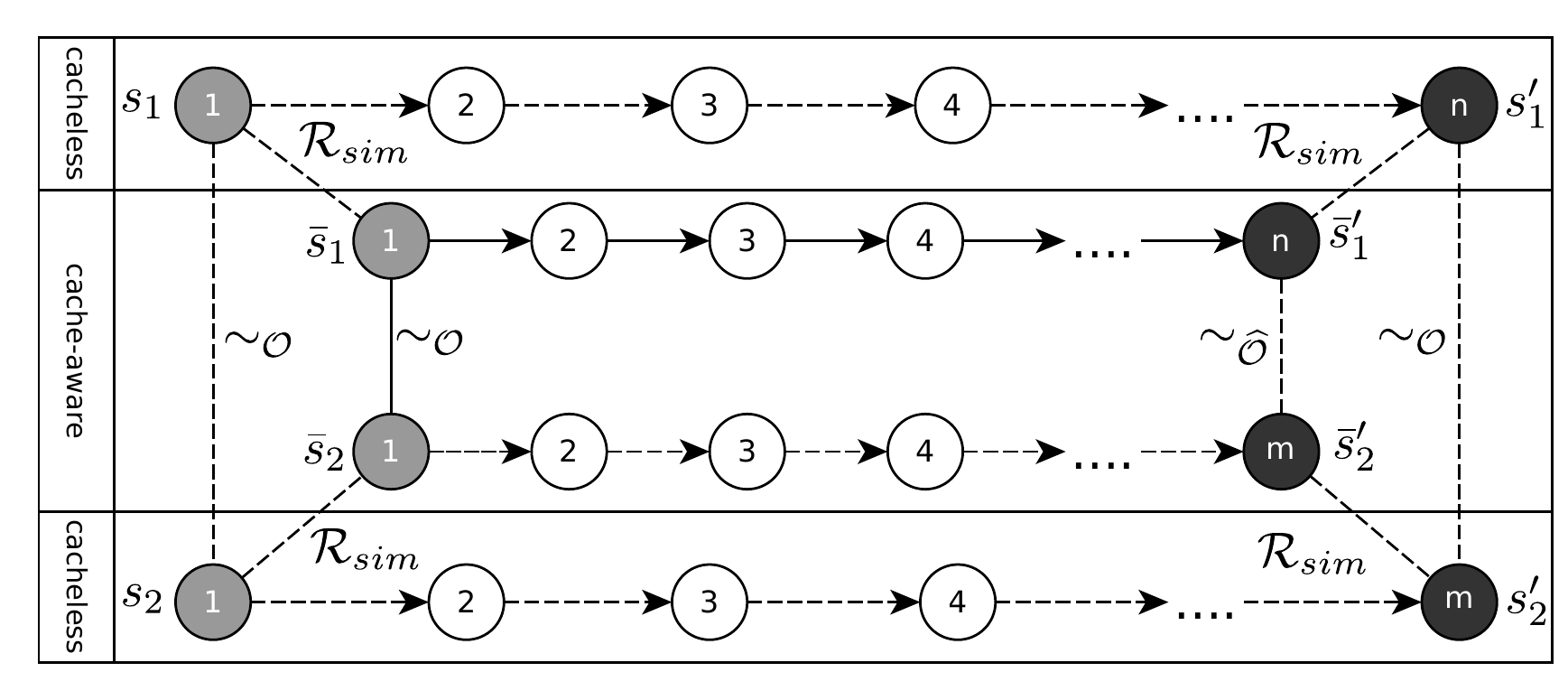}
 \caption{Transferring confidentiality property. Dashed lines indicate proof obligations.}
 \label{fig:confR}
\end{figure}

In addition we require the following technical condition on the relation of the cacheless and cache-aware invariant.
\begin{prop}\label{prop:rsiminv}
For each cache-aware state $\RealState$ with $\RealMonIntSW(\RealState)$, there exists a cacheless state $\SeqState$ such that $\RealState \SimR \SeqState$ and $\SeqMonIntSW(\SeqState)$.
\end{prop}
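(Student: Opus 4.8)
The plan is to prove this obligation constructively, by exhibiting an explicit cacheless witness obtained from $\RealState$ through a ``write-back flattening'' of the cache, and then discharging $\SeqMonIntSW$ by appealing to Obligation~\ref{vc:isaInitStsfyInv}. Concretely, given a cache-aware state $\RealState = \tuple{\ureg, \psrs, \cop, \mem, \cache}$ with $\RealMonIntSW(\RealState)$, I would define the cacheless state $\SeqState = \tuple{\ureg, \psrs, \cop, \mem'}$, copying the register, control-register and coprocessor components verbatim (their domains are shared by the two models) and setting $\mem'(pa) \defequiv \altExe(\RealState, pa)$ for every $pa \in \PA$. Intuitively $\mem'$ is the memory that results from evicting every dirty line of $\RealState.\cache$ and writing its content back; this is exactly the ``memory-view'' underlying the simulation relation $\SimR$ introduced in Section~\ref{subsec:trustedIntegrity}. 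Since $\altExe(\RealState,\cdot)$ restricted to $\PA$ is a total function of type $\PA \to \B^{\bitsize}$, $\SeqState$ is a well-formed element of $\SeqSpace$.

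First I would check $\RealState \SimR \SeqState$, i.e.\ $\forall r \in \resSpc.\ \Exe(\SeqState, r) = \altExe(\RealState, r)$. For a memory resource $pa \in \PA$ the cacheless core-view is just the memory content, so $\Exe(\SeqState, pa) = \SeqState.\mem(pa) = \mem'(pa) = \altExe(\RealState, pa)$ holds by construction. For any non-memory resource $r$ the core-view reads the corresponding component, which is identical in $\SeqState$ and $\RealState$, hence $\Exe(\SeqState, r) = \Exe(\RealState, r)$; and by the ``otherwise'' clause in the definition of $\altExe$ we have $\Exe(\RealState, r) = \altExe(\RealState, r)$. Thus $\SimR$ holds for all resources.

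Now $\RealMonIntSW(\RealState)$ unfolds (Obligation~\ref{po:monintor:inv:threeparts}) to $\RealInvF(\RealState) \wedge \RealInvCC(\RealState) \wedge \RealInvCMC(\RealState)$, so in particular $\RealInvF(\RealState)$ and $\RealInvCC(\RealState)$ hold. Having established $\RealState \SimR \SeqState$, Obligation~\ref{vc:isaInitStsfyInv} gives $\RealInvF(\RealState) \Leftrightarrow \SeqMonIntSW(\SeqState)$, and therefore $\SeqMonIntSW(\SeqState)$. Together with the simulation established in the previous step, $\SeqState$ is the required witness and the obligation is discharged. Observe that the two cache-specific conjuncts of the cache-aware invariant play asymmetric roles here: $\RealInvCMC$ is not needed at all, while $\RealInvCC$ enters only as the coherency hypothesis of Obligation~\ref{vc:isaInitStsfyInv} --- precisely the bridge that lets the purely functional part $\RealInvF$ cross the simulation relation into $\SeqMonIntSW$.

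Since the substance of the argument is delegated to Obligation~\ref{vc:isaInitStsfyInv}, the only genuine work inside this proof is the witness construction and the routine verification that it lies in $\SeqSpace$ and is $\SimR$-related to $\RealState$. The mild obstacles I anticipate are bookkeeping ones in the HOL4 formalisation: one must make $\Exe$ and $\altExe$ provably agree on non-memory resources (so that copying those components literally suffices), and one must confirm that $\altExe(\RealState,\cdot)$ has the memory signature $\PA \to \B^{\bitsize}$ when restricted to physical addresses. If the cacheless model carries additional instrumentation --- for instance the history variable $\histev$ used in the selective-eviction development --- the witness would additionally fix that component to its initial (empty) value, which is harmless because the base invariant $\SeqMonIntSW$ does not constrain it.
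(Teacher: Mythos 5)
Your proposal is correct, but it is worth noting that the paper never actually proves this statement: it is stated as a \emph{proof obligation} (``a technical condition on the relation of the cacheless and cache-aware invariant'') to be discharged per instantiation, and even in the case study no explicit discharge is given. What you have done is show that the obligation is not an independent assumption at all --- it follows generically from Obligation~\ref{po:monintor:inv:threeparts} and Obligation~\ref{vc:isaInitStsfyInv} once the canonical witness $\SeqState = \tuple{\ureg,\psrs,\cop,\altExe(\RealState,\cdot)|_{\PA}}$ is exhibited. The witness is the natural one given the definition of $\SimR$ (it is exactly the ``flush every dirty line'' state), the verification of $\RealState \SimR \SeqState$ is immediate from the case split in the definition of $\altExe$, and the step from $\RealInvF(\RealState)\wedge\RealInvCC(\RealState)$ to $\SeqMonIntSW(\SeqState)$ via Obligation~\ref{vc:isaInitStsfyInv} is sound. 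Your observation that $\RealInvCMC$ plays no role is also accurate. The only caveats are the ones you already flag yourself: the argument presupposes that $\Exe$ and $\altExe$ agree on non-memory resources when the shared components are copied verbatim (true by construction of the two models, but it is a definitional fact about $\Exe$ rather than something the obligations give you), and that any extra instrumentation of the cacheless state (such as $\histev$) is unconstrained by $\SeqMonIntSW$. Relative to the paper, your route buys a small but genuine simplification of the proof burden for each new kernel or countermeasure: once Obligations~\ref{po:monintor:inv:threeparts} and~\ref{vc:isaInitStsfyInv} are discharged, Obligation~\ref{prop:rsiminv} comes for free.
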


Now we can use Lemma~\ref{lem:privIntegByMiniInv} to transfer the equivalence of coherent resources after complete kernel executions.

\begin{lemma}
\label{thm:conftrans}
For all $\RealState_1$, $\RealState_1'$, $\RealState_2$ such that 
$\RealMonIntSW(\RealState_1)$, $\RealMonIntSW(\RealState_2)$, $\callConv{\RealState_1}$, $\callConv{\RealState_2}$, and $\RealState_1 \WkTrs{} \RealState_1'$, if $\RealState_1 \ObsEq{\Obs} \RealState_2$ then $\exists\RealState_2'.\ \RealState_2 \WkTrs{} \RealState'_2$ and $\RealState_1' \ObsEq{\Obscoh} \RealState_2'$.
\end{lemma}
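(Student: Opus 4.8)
The plan is to transfer the statement through the cacheless model, using the privileged-step refinement of Lemma~\ref{lem:privIntegByMiniInv} in both directions together with the cacheless confidentiality obligation (Obligation~\ref{thm:seqConfCoh}) and the equivalence-transfer lemma~\ref{lem:obsrsim}. First I would fix cacheless witnesses: by Obligation~\ref{prop:rsiminv} there are states $\SeqState_1,\SeqState_2$ with $\RealState_i \SimR \SeqState_i$ and $\SeqMonIntSW(\SeqState_i)$ for $i\in\{1,2\}$. Since $\SimR$ equates the core-view of every resource, in particular of the mode register and the program counter, the hypotheses $\callConv{\RealState_i}$ carry over to $\callConv{\SeqState_i}$, and Lemma~\ref{lem:obsrsim}(1) turns $\RealState_1 \ObsEq{\Obs} \RealState_2$ into $\SeqState_1 \ObsEq{\Obs} \SeqState_2$.

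Next I would lift the cache-aware handler execution down to the cacheless model. Let $n$ be the number of steps of $\RealState_1 \WkTrs{} \RealState_1'$. Because $\SeqTrs{\TMode}$ is a total function, the cacheless run of length $n$ from $\SeqState_1$ exists and is unique; by induction on its prefixes, applying Lemma~\ref{lem:privIntegByMiniInv} at each step and using that $\SimR$ equates the execution mode, all intermediate states remain privileged and the last one is non-privileged, so this run is a complete handler execution $\SeqState_1 \WkTrs{} \SeqState_1'$ with $\RealState_1' \SimR \SeqState_1'$. Obligation~\ref{thm:seqConfCoh} then produces $\SeqState_2'$ with $\SeqState_2 \WkTrs{} \SeqState_2'$ and $\SeqState_1' \ObsEq{\Obs} \SeqState_2'$. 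Symmetrically --- using totality and determinism of $\RealTrs{\TMode}$ and the same mode-matching argument --- the cache-aware run of the matching length from $\RealState_2$ is a complete handler execution $\RealState_2 \WkTrs{} \RealState_2'$, and Lemma~\ref{lem:privIntegByMiniInv} gives $\RealState_2' \SimR \SeqState_2'$. Along the way the invariants at the four final states are re-established by Obligation~\ref{lem:invPrevsISA} and Lemma~\ref{lem:privInvEnd}, so that Lemma~\ref{lem:obsrsim} is applicable below.

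Finally I would close the diagram: from $\SeqState_1' \ObsEq{\Obs} \SeqState_2'$ and Corollary~\ref{cor:seqobscoh} we get $\SeqState_1' \ObsEq{\Obscoh} \SeqState_2'$, and Lemma~\ref{lem:obsrsim}(2) --- instantiated with $\RealState_1' \SimR \SeqState_1'$, $\RealState_2' \SimR \SeqState_2'$ and the invariants --- yields $\RealState_1' \ObsEq{\Obscoh} \RealState_2'$, which is the conclusion. I expect the main obstacle to be the bookkeeping of this two-way lifting of complete handler runs: one has to argue, using only determinism and totality of the transition functions plus step-wise preservation of $\SimR$ (hence of the mode), that running the opposite model for the matching number of privileged steps terminates exactly when control is returned to the application, so that the weak transition $\WkTrs{}$ is respected on both sides and Lemma~\ref{lem:privIntegByMiniInv} genuinely applies; once this is in place the confidentiality reasoning itself is a one-line consequence of Obligation~\ref{thm:seqConfCoh}. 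Note also that this lemma deliberately concludes only $\ObsEq{\Obscoh}$ rather than full $\ObsEq{\Obs}$: equivalence of the cache contents and of the potentially-incoherent region $\NC$ requires the countermeasure-specific argument built on $\CM$/$\rCM$ (Obligations~\ref{prop:cminv} and~\ref{prop:rcminv}), which is treated separately.
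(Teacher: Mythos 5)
Your proposal follows exactly the paper's argument: map both cache-aware states to simulating cacheless states via Obligation~\ref{prop:rsiminv}, transfer observational equivalence with Lemma~\ref{lem:obsrsim}.1, use Lemma~\ref{lem:privIntegByMiniInv} to obtain the simulating cacheless computation, invoke Obligation~\ref{thm:seqConfCoh} for the matching cacheless run, lift it back with Lemma~\ref{lem:privIntegByMiniInv}, and conclude via Corollary~\ref{cor:seqobscoh} and Lemma~\ref{lem:obsrsim}.2. The extra bookkeeping you flag (totality/determinism and mode matching for the two-way lifting of complete handler runs) is implicit in the paper's proof but correctly identified; the argument is sound.
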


\begin{prove}{}We follow the classical approach of mapping both initial cache-aware states to simulating ideal states by \po~\ref{prop:rsiminv}, transferring observational equivalence by 
Lemma~\ref{lem:obsrsim}.1, and using Lemma~\ref{lem:privIntegByMiniInv} to obtain a simulating computation $\SeqState_1 \WkTrs{} \SeqState_1'$ for $\RealState_1 \WkTrs{} \RealState_1'$ on 
the cacheless level (cf. Fig.~\ref{fig:confR}). Obligation~\ref{thm:seqConfCoh} yields an observationally equivalent cacheless computation $\SeqState_2 \WkTrs{} \SeqState_2'$ starting in the
state simulating $\RealState_2$. Applying Lemma~\ref{lem:privIntegByMiniInv} again results in a corresponding cache-aware computation $\RealState_2 \WkTrs{} \RealState_2'$. We transfer
equivalence of the coherent resources in the final states using Corollary~\ref{cor:seqobscoh} and Lemma~\ref{lem:obsrsim}.2.\end{prove}

It remains to be shown that the kernel execution also maintains observational equivalence for the potentially incoherent memory resources. This property depends on the specific countermeasure used, but can be proven once and for all for a given hardware platform and cache architecture.
\begin{prop}
\label{lem:eqcache}
For all $\RealState_1$, $\RealState_1'$, $\RealState_2$, and $\RealState_2'$ with $\RealState_1 \WkTrs{} \RealState_1'$ and $\RealState_2 \WkTrs{} \RealState'_2$ as well as $\RealMonIntSW(\RealState_1)$, $\RealMonIntSW(\RealState_2)$, $\callConv{\RealState_1}$, $\callConv{\RealState_2}$, $\rCM(\RealState_1, \RealState_1')$ and $\rCM(\RealState_2,\RealState_2')$, if $\RealState_1 \ObsEq{\Obs} \RealState_2$ then 
\begin{center}$(\RealState_1'.\cache, \RealState'_1.\mem) \cmeq_\NC (\RealState'_2.\cache, \RealState'_2.\mem)$.\end{center}
\end{prop}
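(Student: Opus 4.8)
The plan is to read this as the last missing ingredient in the proof of kernel-level confidentiality (Theorem~\ref{thm:topConfidentiality}): Lemma~\ref{thm:conftrans} already delivers the matching computation $\RealState_2 \WkTrs{} \RealState_2'$ together with observational equivalence of the coherent, attacker-visible resources, $\RealState_1' \ObsEq{\Obscoh} \RealState_2'$, so all that remains is to show that the caches and the potentially-incoherent observable region agree in the final states, i.e.\ $(\RealState_1'.\cache,\RealState_1'.\mem) \cmeq_{\NC} (\RealState_2'.\cache,\RealState_2'.\mem)$. Since this proof obligation is meant to be discharged once per hardware/cache architecture and per countermeasure, the strategy splits into an architectural skeleton common to all countermeasures and a countermeasure-specific instantiation that supplies what $\rCM$ actually uses.

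For the skeleton I would first descend to the cacheless model. Using \po~\ref{prop:rsiminv} pick cacheless states $\SeqState_1,\SeqState_2$ with $\RealState_1 \SimR \SeqState_1$, $\RealState_2 \SimR \SeqState_2$ and $\SeqMonIntSW$; transfer observational equivalence via Lemma~\ref{lem:obsrsim}.1; and run Lemma~\ref{lem:privIntegByMiniInv} on both sides to obtain cacheless handler computations $\SeqState_1 \WkTrs{} \SeqState_1'$, $\SeqState_2 \WkTrs{} \SeqState_2'$ that simulate the two cache-aware computations instruction by instruction and keep $\SeqMonIntSWPriv$ valid throughout. By \po~\ref{prop:cminv} and \po~\ref{prop:rcminv} the countermeasure predicates $\CM$ and $\rCM$ then hold for the intermediate and final states of both runs, so the remaining work concerns only the caches, which are invisible to the cacheless model but pinned down by $\rCM$.

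The common core is an induction over the internal privileged steps, carrying as hypothesis the conjunction of (i) the cacheless simulation $\RealState_j \SimR \SeqState_j$ and (ii) agreement of the two cache states on the tag state, dirtiness, and filtered history of every line set plus $\cmeq_{\NC}$ on the $\NC$-relevant lines. In the step case, equivalence of the coherent resources together with $\SimR$ forces the same instruction with the same operands to execute in both runs, and its effect on the cache and on $\NC$ memory preserves the hypothesis by an analogue of Lemma~\ref{prop:cacheeq}: cacheable accesses to coherent addresses cannot break equivalence, uncacheable accesses to $\NC$ touch equal memory and bypass the cache, and because the eviction policy depends only on the filtered history (Assumption~\ref{lem:lruEvctSameTag}) it makes the same choice on both sides; kernel writes into $\NC$ must carry secret-independent values, which is where the countermeasure enters.

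The main obstacle is exactly that last point: during the \emph{hidden} privileged fragment the two executions might diverge on a secret-dependent branch or a secret-dependent address, which would break the ``same instruction'' step of the induction. Ruling this out is the job of $\rCM$, and it does so differently per countermeasure. For flushing and cache normalisation $\rCM$ states that the final cache is a canonical function of the observable state (empty, or the result of a fixed normalising read sequence), so internal divergence is harmless and $\cmeq_{\NC}$ follows from both caches being in the same canonical shape and the $\NC$-memory being equal (the latter obtained on the cacheless side, as in Lemma~\ref{thm:conftrans}); for partitioning and secret-independent-access countermeasures $\rCM$ instead pins the kernel's memory-access trace on observable addresses to coincide in the two runs, re-enabling the lock-step induction above. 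Hence the genuine effort is, for each concrete countermeasure, choosing $\CM$/$\rCM$ so that \po~\ref{prop:cminv} and \po~\ref{prop:rcminv} are provable and then deriving $\cmeq_{\NC}$ from $\rCM$ plus the architectural cache lemmas; showing that $\rCM$ is actually obtainable via the bisimulation of Lemma~\ref{lem:privIntegByMiniInv} is the delicate part, while the hardware half --- that fills and evictions respect $\cmeq_{\NC}$ --- is verified once for the concrete (e.g.\ LRU) cache, just like Lemma~\ref{prop:cacheeq}.
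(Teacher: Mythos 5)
Your proposal is correct and matches the paper's treatment: the statement is posed as a proof obligation to be discharged once per countermeasure and cache architecture, and the paper's discharge for secret-independent code and cache flushing follows exactly the route you describe --- obtain matching cacheless computations via \po~\ref{prop:rsiminv}, Lemma~\ref{lem:obsrsim} and Lemma~\ref{lem:privIntegByMiniInv}, then let the countermeasure-specific $\rCM$ (lock-step equal access traces with Assumption~\ref{lem:lruEvctSameTag}, respectively a canonical flushed cache state) deliver $\cmeq_\NC$ of the final states. The one ingredient you gloss over is \po~\ref{prop:satanize}: raw memory equality on $\NC$ (as opposed to core-view or memory-view equality, which is all the cacheless simulation gives you) requires that kernel writes to $\NC$ be followed by cache-clean operations, which is precisely why the paper demands that every countermeasure sanitise such writes.
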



Finally, we combine above results to prove that the confidentiality of the system is preserved on the cache-aware level. 

\begin{prove}{~Theorem~\ref{thm:topConfidentiality}}
Performing an induction on the computation length, we distinguish application and complete kernel steps. In the first case we apply Lemma~\ref{thm:User-No-Infiltration}. The invariants on both states are preserved by Lemmas~\ref{lem:monitorsafe}, \ref{lem:userswitchlem}.\ref{thm:trans2drvbl}, and~\ref{thm:invUser}. For kernel computations, we first step into non-privileged states $\RealState_1''$, $\RealState_2''$ deducing $\RealMonIntSW(\cdot)$ and $\callConv{\cdot}$ for both states by Theorem~\ref{thm:topIntegrityUser}, as well as $\RealState_1''\ObsEq{\Obs}\RealState_2''$ by Lemma~\ref{thm:User-No-Infiltration}. By Lemma~\ref{thm:conftrans} we show the existence of a cache-aware computation $\RealState_2'' \WkTrs{} \RealState_2'$ that preserves the equivalence of coherent resources. Using Lemma~\ref{lem:privIntegByMiniInv} we obtain a corresponding cacheless computation. By Obligation~\ref{prop:rcminv} we get $\rCM(\RealState_1'', \RealState_1')$ and $\rCM(\RealState_2'', \RealState_2')$. Invariants $\RealMonIntSW(\RealState_1')$ and $\RealMonIntSW(\RealState_2')$ hold due to Obligation~\ref{lem:invPrevsISA} and Lemma~\ref{lem:privInvEnd}. We conclude by Obligation~\ref{lem:eqcache}.
\end{prove}

The core requirements that enable this result are Proof Obligations~\ref{prop:cminv},~\ref{prop:rcminv}, and~\ref{lem:eqcache}. Below we discuss how to discharge them for two common countermeasures.

\subsection{Correctness of Countermeasures}\label{confidentiality:correctnessOfCountermeasure}

For a given cache leakage countermeasure, one needs to define predicates $\CM$, $\rCM$ and discharge the related proof obligations. Here we describe the countermeasures of secret-independent code and cache flushing that are well-known to work against cache timing channels. However, in the presence of uncacheable aliases they are not sufficient to prevent information leakage. In particular, if a secret-dependent value in memory is overshadowed by secret-independent dirty cache entry for the same address, the application can still extract the memory value by an uncacheable read. 

To tackle this issue we require that all countermeasures sanitise writes of the kernel to addresses in $\NC$ by a cache cleaning operation. Then if the core-views of $\NC$ are equivalent in two related states, so are the memories. A sufficient verification condition on the cacheless model is imposed as follows.
\begin{prop}\label{prop:satanize}
For all $\SeqState$, $\SeqState'$ such that $\CM(\SeqState, \SeqState')$ and $\SeqState \WkTrs{} \SeqState'$ performs $\inst_1\cdots\inst_n$ with $\inst_i=\wt(R_i)$ and $\mathit{pa}\in R_i$, if $\mathit{pa}\in\NC$ then there is a $j>i$ such that $\inst_j=\cl(R_j)$ and $\mathit{pa}\in R_j$. 
\end{prop}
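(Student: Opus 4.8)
The plan is to discharge this obligation separately for each concrete cache‑leakage countermeasure, by instantiating the cacheless countermeasure predicate $\CM(\SeqState,\SeqState')$ appropriately and folding the relevant part of it into the kernel's internal invariant $\SeqMonIntSWPriv$, so that Obligation~\ref{prop:cminv} is simultaneously met. The key observation that makes this tractable is that the statement is phrased entirely over the cacheless weak transition relation $\WkTrs{}$ together with its $\inst$‑annotations of memory effects; it therefore reduces to a control‑flow/trace property of the handler code that can be verified with the binary‑analysis machinery already used elsewhere in the paper, with no cache reasoning needed at this point — the cache‑side consequences of write sanitisation are drawn later, via Obligations~\ref{prop:rcminv} and~\ref{lem:eqcache}.

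For the \emph{full cache flush} countermeasure I would let $\CM(\SeqState,\SeqState')$ require, in addition to the functional properties already carried by $\SeqMonIntSWPriv$, that every handler computation terminates with a complete flush: a suffix of $\inst$‑operations of the form $\cl(R_{k_1}),\dots,\cl(R_{k_m})$ whose union covers all of $\PA$, in particular $\NC$. Then for an arbitrary write $\inst_i=\wt(R_i)$ with $\mathit{pa}\in R_i\cap\NC$, the flushing suffix supplies the required $\inst_j=\cl(R_j)$ with $\mathit{pa}\in R_j$ and $j>i$; the ordering $j>i$ is immediate since the flush is a suffix of the whole computation and $\WkTrs{}$ contains no intervening non‑privileged (hence application‑visible) step. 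Establishing $\CM$ from $\SeqMonIntSWPriv$ then amounts to proving, on the cacheless model, that each exception handler ends with the flush code — a straightforward postcondition checkable with the existing contract tools.

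For the \emph{secret‑independent code with write sanitisation} countermeasure I would take the per‑store sanitisation discipline itself as the content of $\CM$: whenever the handler issues a store that may target an address in $\NC$, that store is followed, before the handler returns, by a cache‑clean of the same address. Discharging the obligation is then an intra‑procedural check on the handler code: for each store instruction, over‑approximate the set of addresses it may touch using the constraints provided by $\SeqMonIntSW$ (the kernel's own memory layout is static by \po~\ref{lem:kernelCFGandVM}, and application inputs live in a statically known region), and verify that if this set meets $\NC$ then a matching $\cl$ occurs on every path to the handler exit. Since $\NC$ is a fixed set, this is again a property of the cacheless traces and can be automated with the relational/weakest‑precondition tools used in the paper; Obligation~\ref{prop:cminv} is met by including the discipline in $\SeqMonIntSWPriv$.

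The main obstacle I expect is the precision of the over‑approximation of store targets: a store whose effective address is data‑dependent (for instance an index into an input buffer the application placed in $\NC$) forces $\SeqMonIntSWPriv$ to carry enough information to decide whether that address lies in $\NC$, and, when it does, to connect it with the matching clean. In the presence of loops this means the invariant must assert ``every write performed so far to an $\NC$ address has since been cleaned'', so that the eventual clean can be placed in the same or a later iteration while still preserving $j>i$; unrolling copes with the bounded loops of the kernel, but a genuine loop invariant is cleaner. A lesser subtlety is that $\WkTrs{}$ hides only \emph{privileged} intermediate states, so one must appeal to non‑preemption of the kernel (already guaranteed by the model and \po~\ref{lem:kernelCFGandVM}) to rule out a nested exception returning control to the application between the dirtying write and its clean.
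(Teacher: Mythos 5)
The statement you are asked to prove is a \emph{Proof Obligation}, not a theorem: the paper never proves it in general, but imposes it as a verification condition to be discharged per kernel and per countermeasure by (binary) code analysis on the cacheless model — which is exactly the framing you adopt. Your general strategy (fold the write-sanitisation discipline into $\CM$ so that \po~\ref{prop:cminv} is simultaneously met, then check the resulting trace property of the $\inst$-annotated computation under $\WkTrs{}$ with the existing contract tools, relying on non-preemption so no application step intervenes between a dirtying write and its clean) is the paper's intended route, and your remarks on over-approximating store targets and on loop invariants of the form ``every $\NC$ write so far has been cleaned'' match the concerns the paper delegates to the binary-analysis stage. The one substantive divergence is your instantiation for the flushing countermeasure: you place the complete flush as a \emph{suffix} of the handler and let it supply the matching $\cl$ for every earlier $\wt$ to $\NC$, whereas the paper's \po~\ref{prop:flush} places the flush as a contiguous subsequence anywhere in the computation, \emph{forbids} writes to $\NC$ up to and including the flush (its condition 2), and only constrains post-flush $\NC$ writes to be secret-independent, leaving their cleaning to the blanket sanitisation requirement of \po~\ref{prop:satanize} itself. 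Your suffix-flush variant is a legitimate alternative — it discharges \po~\ref{prop:satanize} trivially and still leaves the cache in a known state at exit for \po~\ref{lem:eqcache} — but it is not the decomposition the paper uses, and if one keeps the paper's version one must still verify per-write cleans for any $\NC$ writes occurring after the flush.
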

This condition is augmented with countermeasure-specific proof obligations that guarantee the observational equivalence of caches.

\subsubsection{Secret-Independent Code}

The idea behind this countermeasure is that no information can leak through the data caches, if the kernel accesses the same memory addresses during all computations from observationally equivalent entry states, i.e., the memory accesses do not depend on the confidential resources. We approximate this functional requirement by predicate $\CM(\SeqState,\SeqState')$:

\begin{prop}\label{prop:seqsecind}
If a complete kernel execution $\SeqState \WkTrs{} \SeqState'$ performs $\inst_1~ \cdots~ \linebreak \inst_n$ on resources $R_1\cdots R_n$, then for all $\SeqState_2,\SeqState_2'$ with $\SeqState \ObsEq{\Obs} \SeqState_2$ and $\SeqState_2 \WkTrs{} \SeqState_2'$ operations $\inst_1'\cdots\inst_n'$ on $R_1'\cdots R_n'$ are performed such that addresses in $R_i$ and $R_i'$ map to the same set of tags.
\end{prop}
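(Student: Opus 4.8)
The plan is to discharge this obligation by relational ($2$-safety) reasoning over two lock-step computations of the cacheless transition system $\SeqTrs{\TMode}$, using a self-composition / product-program construction together with the relational binary analyser of~\cite{DBLP:conf/ccs/BalliuDG14}, adapted so that the observation attached to every memory instruction is the pair \emph{(set index, tag)} of the accessed address rather than the address itself. First I would establish that the two runs proceed in lock step: the program counter and the control registers are in $\Obs$, so $\SeqState\ObsEq{\Obs}\SeqState_2$ together with \po~\ref{lem:kernelCFGandVM}.\ref{lem:kernelPcReange} (the kernel only fetches from its own, coherent, code region) and the invariant preservation of \po~\ref{lem:invPrevsISA} give that at every step the same instruction is fetched and every branch condition agrees in the two runs; hence the traces have equal length $n$ and $\inst_i,\inst_i'$ are operations of the same kind on matching address computations. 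Existence of the companion computation $\SeqState_2\WkTrs{}\SeqState_2'$ is then immediate from totality of $\SeqTrs{\TMode}$ and termination of the handler, both already assumed.

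The core of the argument is a relational invariant, maintained across the handler body, asserting (i) equality of every $\Obs$-resource in the two states, and (ii) that each value flowing into an address computation, once projected to its set index and tag, is independent of the confidential resources $\Conf$. Labelling $\Conf$ as secret and $\Obs$ as public, (ii) is a standard information-flow / dependence property: because the kernel code is written in the secret-independent discipline, for each matching pair $\inst_i,\inst_i'$ the addresses in $R_i$ and $R_i'$ share the same set index and tag, which is exactly the claim that they map to the same set of tags. Loops over fixed ranges (page-table descriptors and the like) are handled either by unrolling, as in the existing binary-level verification, or by a relational loop invariant; the bounds and aliasing facts needed to keep the analysis sound come from the functional invariant of \po~\ref{lem:invPrevsISA}. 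Since $\CM$ is intended to be a conjunct of the internal invariant $\SeqMonIntSWPriv$, discharging this obligation also yields \po~\ref{prop:cminv}; the companion cache-level obligations \po~\ref{prop:rcminv} and \po~\ref{lem:eqcache} are then obtained generically from the cache model, using Assumption~\ref{lem:lruEvctSameTag} and the cache-equivalence reasoning of Lemma~\ref{prop:cacheeq}.

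The delicate point — and the main obstacle — is that ``same set of tags'' is strictly weaker than ``same address'', and deliberately so: it has to accommodate secret-dependent table look-ups that are nevertheless spread over a fixed set of cache lines (the scrambled-SBox countermeasure), for which the concrete look-up address genuinely depends on the secret and a purely syntactic non-interference check would fail. The real work is to make the relational analysis precise enough to still conclude set/tag equality in such cases: for every secret-dependent access one must over-approximate the \emph{set} of tags it may touch, prove that this set is secret-independent and identical in the two runs (e.g., ``the whole scrambled table region is touched in every computation, regardless of the key''), or carry a stronger relational invariant tying the two runs' table indices together. Getting this through on the binary code of a realistic handler, and making the adapted tool scale over the handler's loops, is where essentially all the effort lies; the lock-step argument and the plumbing through \po{s}~\ref{lem:kernelCFGandVM} and \ref{lem:invPrevsISA} are comparatively routine.
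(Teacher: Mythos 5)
This statement is a proof obligation that the paper itself does not prove but explicitly delegates to "a relational analysis at the binary level" using an adaptation of the tool of~\cite{DBLP:conf/ccs/BalliuDG14} in which memory accesses are observed at the granularity of set index and tag — which is precisely the strategy you propose, including the observation (also made in the paper) that tag-level rather than address-level equality is what accommodates secret-dependent but line-confined look-ups. Your plan is therefore correct and takes essentially the same route as the paper, merely spelling out the lock-step and relational-invariant machinery that the paper leaves implicit.
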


Note that this allows to access different addresses depending on a secret, as long as both addresses have the same tag. Such trace properties can be discharged by a relational analysis at the binary level~\cite{DBLP:conf/ccs/BalliuDG14}. On the cache-aware level, the property is refined by:
\[\rCM(\RealState, \RealState')\ \defequiv\ \forall \RealState_2, \RealState_2'.\ \RealState \WkTrs{} \RealState' \land  \RealState \cmeq_\NC \RealState_2 \land \RealState_2\WkTrs{} \RealState_2' \Rightarrow  \RealState' \cmeq_\NC \RealState_2'\]
The refinement (\po~\ref{prop:rcminv}) is proven similar to Lemma~\ref{thm:conftrans} using the Lemma \ref{lem:privIntegByMiniInv} and the determinism of the hardware when applying the confidentiality of the system in the cacheless model (Lemma~\ref{thm:seqConfCoh}) as well as \po~\ref{prop:satanize}. Then, proving \po~\ref{lem:eqcache} is straightforward.

\subsubsection{Cache Flushing}

A simple way for the kernel to conceal its secret-dependent operations is to flush the cache before returning to the application. The functional requirement $\CM(\SeqState,\SeqState')$ implies:

\begin{prop}\label{prop:flush}
For any kernel computation $\SeqState \WkTrs{} \SeqState'$ performing data operations $\inst_1\cdots\inst_n$:
\begin{enumerate}
\item there exists a contiguous subsequence $\inst_i\cdots\inst_j$ of clean operations on all cache lines,
\item operations $\inst_1\cdots\inst_j$ do not write resources in $\NC$,
\item operations $\inst_{j+1}\cdots\inst_n$ accessed address tags and written values for $\NC$ do not depend on confidential resources,
\end{enumerate} 
\end{prop}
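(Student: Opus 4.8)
The plan is to read Proof Obligation~\ref{prop:flush} as a trace-structure requirement discharged entirely on the cacheless model, over the data-operation trace $\inst_1\cdots\inst_n$ produced by an arbitrary complete kernel computation $\SeqState \WkTrs{} \SeqState'$ starting from an entry state (so $\callConv{\SeqState}$ holds). Concretely, I would instantiate the flushing predicate $\CM(\SeqState,\SeqState')$ to assert that every handler path (i) passes through a fixed cache-flush routine before returning to non-privileged mode, (ii) issues no store targeting $\NC$ up to and including that routine, and (iii) performs only secret-independent $\NC$ accesses afterwards. The three clauses are then established independently, by a mixture of static control-flow analysis and a relational (two-run) analysis of the kernel code, reusing the machinery already described for the implementation-level verification.

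For clause~1 I would extract the control-flow graph of the kernel, using \po~\ref{lem:kernelCFGandVM}.\ref{lem:kernelPcReange} to confine the program counter to the kernel's static code region, and identify the single flush routine lying on every exit path. This routine is a bounded loop ranging over all cache indices and ways that emits only $\cl$-operations; unfolding it (the ranges are fixed and finite, exactly as for the loops handled in the binary verification) yields the contiguous subsequence $\inst_i\cdots\inst_j$ of clean operations covering all lines, with no interleaved read or write because the loop body contains no other memory instruction.

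Clause~2 is a reachability property on stores: on every path from handler entry to the end of the flush block, no $\inst_k=\wt(R_k)$ with $R_k\cap\NC\neq\emptyset$ is reachable. Since $\NC$ is the statically characterised non-cacheable observable region and the kernel accesses its own data through cacheable aliases by \po~\ref{lem:kernelCFGandVM}.\ref{lem:ptFixedPriv}, the pre-flush stores of the kernel lie outside $\NC$ by construction; I would confirm this with a data-flow check on store targets. Clause~3 is the genuinely relational part: for the tail $\inst_{j+1}\cdots\inst_n$ I must show that, for any two computations from observationally equivalent entry states, the tags accessed and the values written at $\NC$ addresses coincide along matched traces. This is precisely an observation-equivalence obligation, which I would discharge with the adapted relational tool of~\cite{DBLP:conf/ccs/BalliuDG14}, making $\NC$ store tags and values observable and requiring their equality.

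The main obstacle is clause~3. It is a 2-safety (hyper)property that must be established on real kernel code with loops and a dynamically growing set of critical and confidential resources, so the relational analysis has to be carried out carefully and in coordination with \po~\ref{prop:satanize}: every post-flush $\NC$ write must still be followed by a clean, and the flush of clause~1 must be shown to actually drain all pre-flush dirty lines before any secret-dependent activity, while clauses~2 and~3 together guarantee that no dirty $\NC$ line carrying a secret survives into the memory exposed to the application. Reconciling the ordering of cleans, pre-flush write freedom, and post-flush secret independence — so that the resulting trace simultaneously satisfies \po~\ref{prop:flush} and \po~\ref{prop:satanize} — is the delicate coordination point of the proof.
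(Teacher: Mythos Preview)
Your proposal is correct and essentially mirrors the paper's own (very brief) treatment: the paper does not give a formal proof of this proof obligation but only a sentence of guidance per clause, stating that condition~(1) ``can be verified by binary code analysis, checking that the expected sequence of clean operations is eventually executed,'' that condition~(3) ``is formalised and proven like the secret-independent code countermeasure'' (i.e., via the relational tool of~\cite{DBLP:conf/ccs/BalliuDG14}), and that condition~(2) ``is not strictly necessary, but it reduces the overall verification effort.'' Your plan simply fleshes out these hints with concrete CFG and data-flow machinery; the only point you miss is the paper's remark that clause~(2) is optional, and your closing discussion of coordinating with \po~\ref{prop:satanize} goes beyond what the paper says here.
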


Condition (3) is formalised and proven like the secret-independent code countermeasure discussed above. Condition (1) can be verified by binary code analysis, checking that the expected sequence of clean operations is eventually executed. We identify the resulting state by $\mathit{fl}(\SeqState)$. Condition (2) is not strictly necessary, but it reduces the overall verification effort. Then we demand by $\rCM(\RealState, \RealState')$:
\begin{definition*}
For all $\RealState''$ such that $\RealState \RealTrs{\TMode}^\ast \RealState''\RealTrs{\TMode}^\ast \RealState'$:
\begin{enumerate}
\item if $\mathit{fl}(\RealState'')$ or $\RealState''$ is a preceding state in the computation, then for all $\mathit{pa}\in \NC$ it holds that $\altExe(\RealState'',\mathit{pa})=\altExe(\RealState,\mathit{pa})$,
\item if $\mathit{fl}(\RealState'')$, all cache lines and their filtered histories are empty,
\item if $\mathit{fl}(\RealState'')$ and $\RealState''\WkTrs{} \RealState'$, then for all computations $\RealState_2\WkTrs{}\RealState_2'$ with $\RealState''\cmeq_\NC\RealState_2$ we have $\RealState'\cmeq_\NC\RealState_2'$.
\end{enumerate}
\end{definition*}

Here, Condition (1) holds as resources $\NC$ are not written and are affected only by cache evictions which preserve the memory-view. Condition (2) follow directly from the cache-flush semantics. Condition (3) is discharged using Lemma \ref{lem:privIntegByMiniInv} and \po~\ref{prop:flush}.3. In the proof of \po~\ref{lem:eqcache} we establish memory equivalence between intermediate states $\RealState_1''$ and $\RealState_2''$ where $\mathit{fl}(\RealState_1'')$ and $\RealState_2''$ using Condition (1). By Condition (2) we obtain $\RealState_1''\cmeq_\NC\RealState_2''$ and conclude by Condition (3).




 \section{Case Study}\label{sec:application}
As a case study  we use a real hypervisor capable of hosting a Linux guest along with security services that has been formally verified previously on
a cacheless model~\cite{dam2013formal,DBLP:journals/jcs/GuancialeNDB16} and
vulnerable  to attacks based on cache storage channel~\cite{DBLP:conf/sp/GuancialeNBD16}.

A hypervisor is a system software which controls access to resources and can be
used to create isolated partitions on a shared hardware. 
The hypervisor paravirtualizes the platform for several guests. Only the hypervisor executes in privileged mode, while guests entirely run in non-privileged mode and
need to invoke hypervisor functionalities to modify critical resources, such as
page-tables. The hypervisor uses direct paging~\cite{SOFSEM} to virtualize the
memory subsystem. Using this approach  a guest 
prepares a page-table in its own memory, which after validation, is used by the hypervisor to configure the MMU, without requiring
memory copy operations.
For validated page-tables the hypervisor ensures that the guest has no writable
accesses to the page-tables, thus ensuring that the guest cannot change the MMU configuration. 
This mechanism makes the hypervisor  particularly
relevant for our purpose since the critical resources change dynamically
and ownership transfer is used for communication.

To efficiently implement direct paging, the hypervisor keeps a type (either
\textit{page-table} or \textit{data}) and a reference counter for each
physical memory page. The counter
tracks (i) for a data-page the number of virtual aliases that enable
non-privileged writable
accesses to this page,  and (ii) for a page-table the number of times the page is
used as page-table. The intuition is that the
hypervisor can change type of a page (e.g., when it allocates or frees a page-table) only if the corresponding
reference counter is zero.

The security of this system can be subverted if page-tables are accessed using virtual aliases with mismatched cacheability attributes: The application can potentially mislead the hypervisor
to validate stale data and to make a non-validated page a page-table. The
hypervisor must counter this threat by preventing incoherent memory from being a
page-table. Here, we fix the vulnerability forcing the guest to create
page-tables only inside an always cacheable region of the memory.

\newcommand{\hyperMem}{\mathit{HM}}
\newcommand{\hyperType}{T}
We instantiate the general concepts of Section~\ref{subsec:seqmodel} for the
case study hypervisor. 
The hypervisor uses a static region of physical memory $\hyperMem$
to store its stack, data structures and code.
This region includes the data structure used to keep the reference counter and type 
for memory pages. Let $\hyperType(\SeqState, pa) = pt$ represent that
the hypervisor data-structure types the page containing the address $pa$
as \textit{page-table},
then the critical resources are $\domCritc(\SeqState) = \hyperMem \cup \{pa.\ \hyperType(\SeqState, pa) =
pt\}$. 

The state invariant $\RealMonIntSW$ guarantees:
(i) soundness of the reference counter, 
(ii) that the state of the system is well typed; i.e., the MMU uses only memory pages that are typed \textit{page-table},  and page-tables
forbid  non-privileged accesses to pages outside $\hyperMem$ or not typed data.
Since the hypervisor uses always cacheability, the invariant also requires (iii)
that
$\hyperMem \subseteq \Preg$, if $\hyperType(\SeqState, pa) = pt$ then $pa \in
\Preg$, coherency of $\Preg$,
and that all aliases to $\Preg$ are cacheable.

\po~\ref{po:monintor:inv:threeparts}.\ref{thm:invUserDrvbl} is 
demonstrated by showing that the functional part of the invariant only depends
on page-tables and the internal data-structures, which are critical and contained in $\hyperMem$.
\po s~\ref{po:monintor:inv:threeparts}.\ref{thm:invUserCCChoerence}
and~\ref{po:monintor:inv:threeparts}.\ref{thm:invUserCMChoerence} are
demonstrated by correctness of always cacheability.
\po~\ref{po:type:delat:soundness} trivially holds, since type of memory pages
only depends on the internal data structure of the hypervisor, which is always
considered critical.
Property (ii) of the invariant guarantees
\po~\ref{po:monintor:dom:critic}, since the MMU can use only memory blocks that are typed
page-table, which are considered critical.
The same property ensures that all critical resources are not writable by the
application, thus guaranteeing \po~\ref{vc:critical:no:write}.
Moreover, \po s~\ref{vc:critical:coherency} and~\ref{vc:kernel:resource:value:inmodels}
are  guaranteed by soundness of the
countermeasure.
\po~\ref{vc:isaInitStsfyInv} is proved by showing that the functional invariants
for the two models are defined analogously using the core-view. 
Finally, property (iii) guarantees countermeasure specific \po
s~\ref{prop:cm:always-cacheable}.1 and~\ref{prop:cm:always-cacheable}.2.a.

There remains to verify obligations on the hypervisor code. This can be done using a
binary analysis tool, since the obligations are defined solely on the cacheless model:
(\po~\ref{lem:invPrevsISA}) the hypervisor handlers are correct, preserving the
functional invariant,
(\po~\ref{lem:kernelCFGandVM}) the control flow graph of the hypervisor is correct
and the hypervisor never changes its own memory mapping,
(\po~\ref{prop:cm:always-cacheable}.2.b) all memory accesses of the hypervisor
are in the always cacheable region.

\section{Implementation}
\label{sec:impl}
We used the  HOL4 interactive theorem prover~\cite{hol4}
to validate our security analysis. We focused the validation on
Theorems~\ref{thm:topIntegrityUser} and~\ref{thm:topIntegrityPriv},
since the integrity threats posed by storage channels
cannot be countered by means external to model
(e.g., information leakage can be neutralised by introducing noise in the cache state).

Following the proof strategy of Section~\ref{sec:integrity},
the proof has been divided in three layers:
an abstract verification establishing lemmas that are platform-
and countermeasure-independent, the verification of
soundness of countermeasures, and a part that is platform-specific.
For the latter, we instantiated the models 
for both ARMv7 and ARMv8, 
using extensions of the models of Anthony Fox~\cite{DBLP:conf/itp/FoxM10, ARMV8} that include the cache model of Section~\ref{sec:generic:cache}. 
The formal specification used in our analysis consists of roughly 2500 LOC of
HOL4, and the complete proof consists of 10000 LOC.

For the case study we augmented the existing hypervisor~\cite{dam2013formal,DBLP:journals/jcs/GuancialeNDB16} with the always cacheability
countermeasure. This entailed some engineering effort to adapt the memory allocator of the Linux kernel to allocate page-tables inside $\Preg$.
The adaptation required changes to 45 LoC in the hypervisor and an addition of  35 LoC in the paravirtualized Linux kernel and imposes a negligible performance overhead. 
The formal model of the hypervisor has been modified to
include the additional checks performed by the hypervisor to prevent
allocation of page-tables outside $\Preg$ and
to forbid uncacheable aliases to $\Preg$.
Similarly, we extended the functional invariant with the
new properties guaranteed by the adopted countermeasure.
The model of the hypervisor has been used to show that the new design
 preserves the functional invariant. The invariant and the formal model of 
ARMv7 processor have been used to validate the proof obligations that do not require
binary code analysis.

We did not analyse the binary code of the hypervisor.
However, we believe that the proof of the corresponding proof obligations can be
automated to a large extent using binary analysis tools (e.g.~\cite{DBLP:journals/jcs/GuancialeNDB16}) 
or using refinement techniques (e.g.~\cite{Sewell:2013:TVV:2491956.2462183}). 

Finally, we validated the analysis of Section~\ref{subsec:userConf},
instantiating the verification strategy for the formal model of ARMv7 processor
and the flushing countermeasure. 


\section{Conclusion}
We presented an approach to verify countermeasures
for cache storage channels. We identified the conditions
that must be met by a security mechanism to neutralise the
attack vector and we verified correctness of some of the
existing techniques to counter  both integrity and confidentiality
attacks.

The countermeasures are formally modelled as new
proof obligations that can be imposed on the cacheless model to
ensure the absence of vulnerability due to cache storage channels. 
The result of this analysis are theorems in Section~\ref{sec:topgoal}.
They demonstrate that a software satisfying a set of 
proof obligations (i.e., correctly implementing the countermeasure)
is not vulnerable  because of cache storage channels. 
Since these proof obligations can be 
verified using a memory coherent setting, existing
verification tools can be used to analyse the target software. For example,
 the proof obligations required to demonstrate that a countermeasure is in place (e.g. \ref{prop:cm:always-cacheable}, \ref{vc:selective:eviction:accesses:resources:evicted}, \ref{prop:satanize}, \ref{prop:seqsecind}) can
 be verified using existing binary analysis tools.

While this paper exemplifies the approach for unified single-level data-caches, our methodology can be extended to counter leakage through timing channels and accommodate more complex scenarios and other hardware features 
too. For instance our approach can be used to counter storage channels due to enabling multi-core processing, multi-level caches, instruction caches, and TLB.

In a multi-core setting the integrity proof can be straightforwardly adopted. However, for confidentiality new countermeasures such as stealth memory or cache partitioning must be used to ensure that secret values cannot be
leaked. This entails defining new proof obligations to make sure that the countermeasures are correctly implemented and protect secret values. In the \textsc{StealthMem} approach~\cite{Kim:2012:SSP:2362793.2362804} each core is given exclusive access to a small portion
of the shared cache for its security critical computations. By ensuring that this stealth memory is always allocated in the cache, and thus no eviction can happen, one can guarantee that no storage channel can be built.

Multi-level caches can be handled iteratively in a straightforward fashion, starting from the cacheless model and adding CPU-closer levels of cache at each iteration.
Iterative refinement has three benefits: Enabling the use of existing (cache unaware) analysis tools for verification, enabling transfer of results of sections \ref{subsec:trustedIntegrity}, \ref{subsec:countermeasures},
\ref{subsec:kernelConf} and \ref{confidentiality:correctnessOfCountermeasure} to the more complex models, and allowing 
to focus on each hardware feature independently, so at least partially counteracting the pressure towards ever larger and more complex global models. For non-privileged transitions the key tools are derivability and
Lemmas \ref{lem:userswitchlem} and \ref{prop:cacheeq}. These fit new hardware features well. For instance, for separate instruction- and data-caches, coherency should be extended to require that instruction-cache hits are equal to core-view (and when this is not the case the 
address should then become attacker observable). Since derivability is not instruction dependent, Lemma \ref{lem:userswitchlem} (and consequently Theorem \ref{thm:topIntegrityUser}) can be easily re-verified.  

It worth noting that sometimes it is possible to transfer by refinement also properties for non-privileged transitions. This is the case for TLBs without virtualisation extensions: As non-privileged instructions are unable to directly
modify the TLB, incoherent behaviours can arise only by the assistance of kernel transitions. If a refinement has been proved for kernel transitions (i.e., that the TLB is properly handled by the kernel), then a refinement
can be established for the non-privileged transitions too.

In general, the verification of refinement for privileged transitions requires to show (i) that the refinement is established if a countermeasure is in place, (ii) that the countermeasure ensures the kernel to not 
leave secret dependent footprints in incoherent resources, and (iii) that the kernel code implements the countermeasures. For instance, instruction-caches require that instructions fetched by the kernel is the same in both models,
e.g., because the kernel code is not self-modifying or because the caches are flushed before executing modified code. For confidentiality, $\pc$-security can be shown to guarantee that instruction-cache is not affected by secrets.

Note also that the security analysis requires trustworthy models of hardware,
which are needed to verify platform-dependent proof obligations. Some of these
properties (e.g., Assumption~\ref{lem:lruEvctSameTag}) require extensive tests to demonstrate that corner cases are
correctly handled by models.
For example, while the conventional wisdom is that flushing caches can close side-channels, a new study~\cite{2016arXiv161204474G} showed flushing does not sanitise caches thoroughly and leaves some channels active, e.g.
instruction cache attack vectors. 


There are several open questions concerning side-channels due to similar shared
low-level hardware features such as branch prediction units,
which undermine the soundness of formal verification. This is an unsatisfactory situation since formal proofs are costly and should pay off by giving reliable guarantees.
Moreover, the complexity of contemporary hardware is such that a verification approach allowing reuse of models and proofs as new hardware features are added is essential for 
formal verification in this space to be economically sustainable. 
Our results represent a first step towards giving these guarantees in the presence of low level storage channels.


\newpage

\section{Appendix}\label{sec:appndx}

\subsection{Generic Data Cache Model}\label{sec:generic:cache}

Below with give the definition of our generic cache model which is the basis for discharging Assumptions~\ref{lem:lruEvctSameTag},~\ref{prop:cacheeq} and Proof Obligations~\ref{prop:rcminv},~\ref{lem:eqcache}.
 
Our cache model does not fix the cache size, the number of ways, the format of set indices, lines, tags, or the eviction policy. Moreover, cache lines can be indexed according to physical or virtual addresses, but are physically tagged, and our model allows both write-back and write-through caching. 

Let  $n$ and $m$ are the bitsize of virtual and physical addresses, then $\VA = \B^{n-\wds}$ and $\PA=\B^{m-\wds}$ are 
the sets of word-aligned addresses where $2^\wds$ is the word size in number of bytes (e.g. in a 64-bit architecture $\wds = 3$).
Our cache has $2^N$ sets, its lines are of size  $2^{L}$ words 
and we use functions $\si : \VA \times \PA \to \B^{N}$, $\tags : \PA \to \T$, and $\widx : \VA \times \PA \to \B^{L}$
to compute the set indices, tags and word indices of the cache. We have $T = m-(N+L)$ for physically-indexed and $T = m-L$ for virtually-indexed caches. 

We define a \emph{cache slice} as a mapping from a tag to a cache line,  $\ssize{\S= \T \to \L \cup \{\bot\}}$; $\bot$ if no line is cached for the given tag.
A line $\ssize{ln\in \mathbb{L}}$ is a pair $(ln.D,ln.d) \in \D \times \B$, $\D$ is the mapping from word-indices to data, and $d$ indicates the line dirtiness. 

Then, a cache $C\in\Csh$ maps a set index $i$ to a slice $C(i).\SL \in  \S$ and a history of actions performed $C(i).\Chis \in \Act^\ast$ on that slice, i.e., 
$\ssize{\Csh = \B^N \to \S \times \Act^\ast}$. The history records internal cache actions of type $\Act = \{\touch_{\mathrm{r}\mid\mathrm{w}}\,t, \evict\,t, \lfill\,t\}$, where $t\in\T$ denotes the tag associated with the action. 

Here, $\touch_{\mathrm{r}\mid\mathrm{w}}\,t$ denotes a read or write access to a line tagged with $t$, $\lfill\,t$ occurs when a line for tag $t$ is
loaded from memory and placed in the cache. Similarly, $\evict\,t$ represents the eviction of a line with tag $t$. To simplify formalization of properties, we define a number of abbreviations
in Figure~\ref{fig:abber}. Note that in case of virtual indexing, we assume an unaliased cache, i.e., each physical address is cached in at most one cache slice.
\begin{figure}[]
  \centering
\[
 \begin{array}{c}
 \fslc{C}{i}  \equiv C(i).\SL,           \quad 
 \fway{C}{i}{t}  \equiv (\fslc{C}{i})(t),     \quad
 \fhis{C}{i}  \equiv C(i).\Chis, \\
 \cdSl{C}{pa} \equiv \fway{C}{i}{t}.D(wi),  \quad  
 \fhit{C}{pa} \equiv  \fway{C}{i}{t} \neq \bot, \\
 \fdirty{C}{pa} \equiv \fhit{C}{pa} \land \fway{C}{i}{t}.d
\end{array}
\]
\caption{Abbreviations; for $pa\!\in\!\PA$ mapped by $va\!\in\!\VA$ and where $i$, $t$ and $wi$ are the corresponding set index, tag and word index, and $C\!\in\!\Csh$.}
\label{fig:abber}
\end{figure}
\begin{figure}[]
\[
\begin{array}{lcl}
\ftouch &:& \S \times \His \times \T \times \B^L \times (\B^{\bitsize}\cup \{\bot\}) \to \S \times \His  \\
\flfill &:& \S \times \His \times \M \times  \PA \to \S \times \His \\
\fevict &:& \S \times \His \times \T  \to \S \times \His  \vspace{1mm}\\
\fwriba &:& \L \times \M \to \M \qquad \qquad \fevict?\ \ \, :\ \ \, \His \times \T \to \T  \cup \bot
\end{array}
\]
\caption{Internal operations of the cache. The fifth input of $\ftouch$ is either $\bot$ for read accesses or the value $v'\in\B^w$ being written to the cache line.}
\label{fig:icfun}
\end{figure}

The semantics of internal cache actions on a cache slice and history are given by the corresponding functions in Figure~\ref{fig:icfun}. If tag $t$ hits the cache in the line set $i$, then $\ftouch$ and $\fevict$ update the cache $C \in \Csh$ as follows, where $D_i'=\fway{C}{i}{t}.D[ wi \mapsto v']$ is the resulting line of a write operation at word index
$wi$ with value $v'$:
\begin{eqnarray*} \arraycolsep=2pt
{
  (\fslc{C'}{i},\,\fhis{C'}{i})\!:= \left\{
     \begin{array}{ll}
       (\fslc{C}{i}\hspace{56pt},\fhis{C}{i}@(\touch_{\mathtt{r}}\,t))   \!&:\!\touch_{\mathrm{r}}\,t    \\
       (\fslc{C}{i}[t \mapsto (D_i',1)]\hspace{2.5pt},\fhis{C}{i}@(\touch_{\mathrm{w}}\,t))\!&:\!\touch_{\mathrm{w}}\,t    \\
       (\fslc{C}{i}[t \mapsto \bot]\hspace{23pt},\fhis{C}{i}@(\evict\,t))          \!&:\!\evict\,t                 \\                                
     \end{array} \right.
}     
\end{eqnarray*}

For cache misses on a physical address $pa$ with tag $t$, function $\flfill$ loads memory content $v=\mem[pa + L\cdot 2^{\alpha}-1 : pa]$ and places it as a clean line into the cache:
$(\fslc{C'}{i},\fhis{C'}{i}) := (\fslc{C}{i}[t \mapsto (\lambda i. v[i],0)],\ \fhis{C}{i}@(\lfill\,t))$.

As the cache can only store limited amounts of lines, eviction policy $evict?(\his, t)$ returns the tag of the line to be replaced at a line fill for tag $t$, or $\bot$ if eviction is not required. Evicted dirty lines are then written back into the memory using function $wriba$. As explained in the main text, the eviction policy is only depending on a finite subset of the history represented by filter function $\filter:\His\to\His$ (Assumption~\ref{lem:lruEvctSameTag}). 

The definitions above provide the minimal building blocks to define a detailed cached memory system that responds to reads, writes, and eviction requests from the core. Below we give an example.

\subsection{Operational Write-back Cache Semantics}
With the help of functions in Fig.~\ref{fig:icfun} and abbreviations of Fig.~\ref{fig:abber}.
 we give semantics to a write-back cache with LRU replacement strategy. Figure~\ref{fig:mops} lists the interface available to the core to control the cache. When one of these functionalities
 is called, the cache uses the internal actions to update its state according to the requested operation. In what follows we set $t=\tags(va,pa)$ and $i=\si(va,pa)$.

\begin{figure}
\small  \medmuskip=\muexpr\medmuskip*5/8\relax \centering
$\begin{array}{ll}
\ffill  &: \Csh \times \M \to \VA \times \PA \to \Csh \times \M\\
\fread  &: \Csh \times \M \times \VA \times \PA \to \Csh \times \M \times \B^{\bitsize}\\
\fwrite &: \Csh \times \M \times \VA \times \PA \times \B^{\bitsize} \to \Csh \times \M \\
\invba &: \Csh \times \M \times \VA \times \PA \to \Csh \times \M \\
\clnba &: \Csh \times \M \times \VA \times \PA \to \Csh \times \M \\
\end{array}$
\caption{Core accessible interface. Functions $clnba$ and $invba$ clean and invalidate cache lines for given virtual and physical addresses; $\clnba$ only resets the dirty bit and writes
back dirty lines, while $\invba$ also evicts the line. Function $\ffill$ is used to pre-load lines into the cache.}
\label{fig:mops}
\end{figure}

Function $\ffill(C,M,va,pa)$ loads the cache $C\in\Csh$ by invoking $\flfill$. However, if the cache is full the eviction policy determines the line to evict to make space. 
Using $\fwriba$, the evicted line is then written back in the memory $M \in \M$. We denote this conditional eviction  by $\alloc(C,M,va,pa)$, which is defined as:
\[\arraycolsep=0.5pt\def\arraystretch{1.3} {
\left \{ \begin{array}{ll}
(C[i \mapsto \fevict(\fslc{C}{i},\fhis{C}{i},t')], \fwriba(\fway{C}{i}{t'},M)) &:\!\fevict?(\fhis{C}{i},t)=t' \\
(C,M)                                                                          &:\!\fevict?(\fhis{C}{i},t)=\bot
\end{array}
\right. }
\]
We save the result of this function as the pair $(\bar C, \bar M)$. Moreover, if an alias for the filled line is present in another cache slice, i.e., a line with the same tag, that line has to be evicted as well. We define this condition as follows:
\[{ \falias?(C,t,i,i') \equiv \exists va',pa'.\ i' = si(va', pa') \land i \neq i' \land \fway{C}{si(va', pa')}{t}\ \neq \bot }\]
Then alias detection and eviction $\falias(C,M,va,pa)$ is defined as:
\[\arraycolsep=2pt {
\left \{ \begin{array}{ll}
(C[i' \mapsto \fevict(\fslc{C}{i'},\fhis{C}{i'},t)], \fwriba(\fway{C}{i'}{t},M)) &: \falias?(C,t,i,i') \\
(C,M)                                                                         &: \ow
\end{array}
\right. }
\]
The result of this function applied to $(\bar C, \bar M)$ is saved as $(\hat C, \hat M)$. The combination of these actions with a line fill is denoted by $\mathit{fillwb}(C,M,va,pa)$ and defined below.
\[{
\begin{cases}
(\hat{C}[i \mapsto \flfill(\fslc{\hat{C}}{i},\fhis{\hat{C}}{i}, M, pa)], \hat M) &:\ \neg\fhit{C}{pa} \\
(C,M) &: \ \text{otherwise}
\end{cases}
}\]
Thus, $(\tilde C , \tilde M) = \mathit{fillwb}(C,M,pa,va)$. Now the definition of reading, writing, flushing, and cleaning the cache is straightforward, for $x = C,M,va,pa$ we have:
\[\arraycolsep=3pt {
\begin{array}{lcl}
\fread(x) &=& (\tilde{C}[i \mapsto \ftouch(\fslc{\tilde C}{i},\fhis{\tilde C}{i},t,i,\bot), \tilde M, \cdSl{\tilde C}{pa})\\
\fwrite(x,v) &=& (\tilde{C}[i \mapsto \ftouch(\fslc{\tilde C}{i},\fhis{\tilde C}{i},t,i,v), \tilde M) \\
\invba(x) &=& (C[i \mapsto \fevict(\fslc{C}{i},\fhis{C}{i},t)], \fwriba(\fway{C}{i}{t},M)) \\
\clnba(x) &=& (C[i \mapsto \fslc{C}{i}[t \mapsto \fway{C}{i}{t}[d\mapsto 0]]], \fwriba(\fway{C}{i}{t},M))
\end{array}
}\]

\newcommand{\cons}{\mathit{Cons}}
\newcommand{\flru}{\filter_{\mathit{lru}}}
\newcommand{\last}{\mathit{last}}
\newcommand{\pres}[1]{T_{#1}}
Other cache functionalities can be defined similarly. It remains to instantiate the eviction policy and its filter $\filter$. We choose the Least Recently Used (LRU) policy, which always replaces the element that was not used for the longest time if there is no space. 
In a $k$ set associative cache we model LRU as a decision queue $q \in \qu$ of size $k$. This queue maintains an order on how the tags in a cache set are accessed. In this ordering the queue's front is the one that has been touched most recently and its back points to the tag to be replaced next upon a miss (or an empty way).

We assume two functions to manipulate queue content: \begin{inparaenum} \item $\push: \qu \times \T \to \qu$ adjusts the queue to make room for the coming tag, inserting the tag at the front of the queue, and 
\item $\pop: \qu \times \T \to \qu$ removes the input tag and shifts all elements to the front. \end{inparaenum} Additionally $\tail: \qu \to \T \cup \bot$ returns the back ($k-1th$) element in the queue or $\bot$ if there is still space. We construct the queue recursively from the history $h$ with the function $\cons : h \to \qu$:
\[{
\begin{array}{lcl}
\cons(\varepsilon) &=& \emptyset \\
\cons(h@a) &=& \begin{cases}
\pop_t(\cons(h)) &:\ a=\evict\,t \\
\push_t(\cons(h)) &:\ a=\lfill\,t \\
\push_t(\pop_t(\cons(h))) &:\ a=\touch_{\mathrm{r\mid w}}\,t
\end{cases}
\end{array}
}
\]
Then the eviction policy $\fevict?$ is defined by:
\[\fevict?_{\mathit{lru}}(h,t) := \tail(\cons(h)) \]
\begin{proposition}
 On each cache slice, the LRU replacement strategy depends only on the action history for the (at most $k$) present tags, since their lines were last filled into the cache.
\end{proposition}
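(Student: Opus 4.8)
The plan is to reduce the claim to the filter property demanded by Assumption~\ref{lem:lruEvctSameTag}: I would exhibit a concrete filter $\flru : \His \to \His$ and prove that $\cons(h) = \cons(\flru(h))$ for every history $h$, whence $\fevict?_{\mathit{lru}}(h,t) = \tail(\cons(h)) = \tail(\cons(\flru(h))) = \fevict?_{\mathit{lru}}(\flru(h),t)$ follows immediately, and moreover the proposition's informal statement is made precise. Concretely, $\flru(h)$ is defined as the subsequence of $h$ obtained by keeping, for each tag $t'$ still present in the slice after processing $h$, exactly those actions on $t'$ that occur at or after the last $\lfill\,t'$ in $h$, and deleting everything else — in particular all $\evict$ actions and all actions on tags that were never filled or have since been evicted. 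By construction $\flru(h)$ mentions only the at-most-$k$ present tags and, for each, only actions since its last fill, which is precisely the dependency asserted by the proposition.

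The heart of the argument is an invariant proved by induction on the length of $h$: the queue $\cons(h)$ consists exactly of the present tags of the slice (hence at most $k$ of them, matching the associativity bound), listed from front to back in strictly decreasing order of the position in $h$ of their most recent $\lfill$ or $\touch$ action; and $\cons(h) = \cons(\flru(h))$. The base case $h = \varepsilon$ is trivial. For the inductive step $h@a$ I would case split on $a$. If $a = \lfill\,t$ or $a$ is a touch on $t$, then $t$ becomes or stays present with its last-access position updated to the end; $\push_t$ moves $t$ to the front while leaving the relative order of the other tags intact (a push only shifts everyone else back by one slot); on the filtered side, $\flru(h@a)$ gains this single action on $t$ and, in the $\lfill$ case, drops any earlier retained actions on $t$, so both the ordering invariant and the equality $\cons(h@a) = \cons(\flru(h@a))$ are preserved. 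If $a = \evict\,t$, then the cache semantics guarantee $t$ was the back element $\tail(\cons(h))$; $\pop_t$ removes it without disturbing the order of the remaining tags; $t$ leaves the present set; and $\flru(h@a)$ is obtained from $\flru(h)$ by deleting all its retained actions on $t$, again preserving the invariant.

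I expect the $\evict$ case to be the main obstacle, because $\flru$ discards evicted tags wholesale: one must verify that removing every retained action on $t$ from $\flru(h)$ yields exactly $\flru(h@a)$ and that $\cons$ of this deletion equals $\pop_t(\cons(\flru(h)))$. This needs a small auxiliary lemma stating that $\cons$ commutes with deleting all actions on a tag that currently sits at the back of the queue; its proof uses the ordering part of the invariant (so that no later $\push$ in $\flru(h)$ ever refers to $t$ again — indeed $t$ has no retained actions after the point it reached the back) together with the routine commutation facts that $\push$ and $\pop$ on distinct tags permute the queue only up to a position shift and thus preserve relative order. Once this lemma is established, the remaining cases are bookkeeping, and the proposition is the direct consequence that $\flru(h)$ depends syntactically only on the present tags and their histories since the last fill.
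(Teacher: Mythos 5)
Your overall route is the paper's own: you define the same filter $\flru$ (retain, for each currently present tag, only the actions since its last line fill), reduce the proposition to the key lemma $\cons(h) = \cons(\flru(h))$, and prove that by induction on the length of $h$ with a case split on the last action, using the cache-semantics invariants that touches and evictions only concern present tags while fills only concern absent ones. The fill and touch cases go through essentially as you describe.

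The genuine gap is in the eviction case. You justify your auxiliary commutation lemma by asserting that ``the cache semantics guarantee $t$ was the back element $\tail(\cons(h))$.'' That is false in this model: only the eviction performed by $\alloc$ during a line fill targets $\fevict?(\fhis{C}{i},t)$, i.e.\ the LRU victim at the back. The maintenance operation $\invba$ evicts the line of an arbitrary given address, and $\falias$ evicts an aliased line in another slice; both append $\evict\,t$ to the history for a tag $t$ that need not sit at the back of the queue --- and the kernel's flushing and selective-eviction countermeasures rely on exactly these operations, so such histories do arise. Consequently your auxiliary lemma, stated only for the back element, does not cover all cases, and its proof sketch (``$t$ has no retained actions after the point it reached the back'') does not apply to an interior tag. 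The repair is to generalize the lemma to an arbitrary present tag: for every $t \in T_h$, deleting all retained actions on $t$ from $\flru(h)$ and then applying $\cons$ equals $\pop_t(\cons(\flru(h)))$. This is precisely the strengthened induction hypothesis the paper carries, $\forall t\in T_h.\ \pop_t(\cons(h)) = \cons(\flru(h,T_h\setminus\{t\}))$, and it is no harder to establish than your back-element version, since $\push$ and $\pop$ preserve the relative order of all other tags regardless of where $t$ sits; with that change the rest of your argument survives unchanged.
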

We capture this part of the action history through the filter function $\flru : \His \to \His$. Its definition depends on a number of helper functions. First we introduce the last action $\last(h,t)$ on a tag $t$ in history $h$ and the set of tags $\pres{h}\subset \T$ that are currently present in a slice according to history information. For action $a\in\Act$, $\tags(a)$ returns the associated tag.
\[{
\begin{array}{rcl}
\last(\varepsilon, t) &=& \bot \qquad\quad
\last(h@a,t) \ \ =\ \ \begin{cases}
a &:\ \tags(a) = t \\
\last(h,t) &:\ \text{otherwise}
\end{cases}\vspace{2mm}\\
T_h &=& \{t \mid \last(h,t) \notin \{\bot, \evict\,t\}\}
\end{array}
}\]
For a set of tags $T\subset \T$ we define the LRU filter includes the least recent fills and subsequent touches to tags in the set, but leaves out evictions and actions on irrevelant tags.
\[{
\begin{array}{rcl}
\flru(\varepsilon, T) &=& \varepsilon \qquad\qquad
\flru(h, \emptyset) \ \ =\ \ \varepsilon \vspace{2mm}\\
\flru(h@a,T) &=& \begin{cases}
\flru(h,T)@a &:\ \exists t.\ a = \touch_{\mathrm{r|w}}\, t \land t\in T\\
\flru(h,T\setminus \{t\})@a &:\ \exists t.\ a = \lfill\, t \land t\in T\\
\flru(h,T) &:\ \text{otherwise}
\end{cases}
\end{array}
}\]
Then we set $\flru(h):= \flru(h,\pres{h})$. 

\subsection{Proof of Assumption~\ref{lem:lruEvctSameTag}}

Proving the claim \[\fevict?(h, t) = \fevict?(\flru(h), t)\] boils down to the following property.
\begin{lemma}\label{thm:lruQueueEq}
 For all $h\in\His$, $\cons(h) = \cons(\flru(h))$.
\end{lemma}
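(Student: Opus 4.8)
\textbf{Proof plan for Lemma~\ref{thm:lruQueueEq}.}
The plan is to prove the stronger statement that for every history $h$ and every set of tags $T\subseteq\T$ with $\pres{h}\subseteq T$ (or, more precisely, with $T$ ``compatible'' with $h$ in the sense that the tags of $T$ not yet evicted behave correctly), one has $\cons(h)=\cons(\flru(h,T))$. Induction directly on $\cons(h)$ is awkward because $\flru$ recurses on the reversed structure (peeling actions off the right end) while also carrying a shrinking tag set as a second argument; so first I would establish the right lemmas about how $\cons$, $\push$, $\pop$ and $\tail$ interact, and only then do the induction on the length of $h$ written as $h=h'@a$.

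The key steps, in order, would be: (1) Prove the basic algebraic facts about the queue operations: $\pop_t(\push_t(q))$ and $\push_t(\pop_t(q))$ behave as expected, $\pop_t(q)=q$ when $t\notin q$, and that $\cons(h)$ only ever contains tags in $\pres{h}$ — i.e. $\mathrm{elems}(\cons(h))\subseteq\pres{h}$, proved by a straightforward induction on $h$ using the definition of $\pres{\cdot}$. (2) Prove a ``commutation'' lemma: appending an action $a$ to $h$ and appending it to $\flru(h)$ yields queues related by one more application of the corresponding queue operation — more precisely $\cons(\flru(h@a,T)) = f_a(\cons(\flru(h,T)))$ where $f_a$ is $\push_t$, $\pop_t\!\cdot\!\push_t$ or the identity according to the three cases in the definition of $\flru$, and that the discarded actions (evictions, and touches/fills on tags outside $T$) do not affect $\cons$ once restricted to tags in $T$. (3) Combine (1) and (2): since $\cons(h)$ depends, by (1), only on the actions touching currently-present tags, and $\flru(h)=\flru(h,\pres{h})$ retains exactly those actions (the least-recent fill for each present tag and all subsequent touches), the two constructions produce the same queue. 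Induction on $|h|$ closes this, with the base case $\cons(\varepsilon)=\emptyset=\cons(\varepsilon)$ and $\cons(\flru(h,\emptyset))=\cons(\varepsilon)=\emptyset$ trivial. (4) Finally, the assumption $\fevict?(h,t)=\tail(\cons(h))=\tail(\cons(\flru(h)))=\fevict?(\flru(h),t)$ follows immediately.

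The main obstacle I expect is step (2), specifically handling the interleaving of evictions with fills and touches. When $\flru$ drops an $\evict\,t$ action but retains a later $\lfill\,t$, one has to argue that the $\pop_t$ that $\cons$ would perform on the full history is ``absorbed'' by the fact that in the filtered history the tag $t$ was never in the queue to begin with (or was removed earlier), so that $\push_t$ on the filtered queue lands in the same position. This requires a careful invariant coupling $\cons(h)$ and $\cons(\flru(h,T))$ that says: the two queues agree exactly on the sub-order of tags in $T\cap\pres{h}$, with tags outside that set simply absent from the filtered queue. Formulating this invariant precisely and checking it is preserved across all action cases (and across the shrinking of $T$ in the $\lfill$ case) is the delicate part; once it is in place the rest is bookkeeping. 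I would also need the side condition $\pres{h}\subseteq T$ to be maintained, which holds because $\pres{h@a}$ differs from $\pres{h}$ only by adding the tag of a fill (which is exactly when $T$ shrinks by that tag in $\flru$) or removing the tag of an evict.
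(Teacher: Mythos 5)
Your overall architecture (queue‑operation algebra, $\mathrm{elems}(\cons(h))\subseteq\pres{h}$, a one‑step commutation lemma, induction on $|h|$) matches the paper's, but the specific strengthening you induct on is wrong, and the error is exactly at the point you yourself flag as the main obstacle. You propose to prove $\cons(h)=\cons(\flru(h,T))$ for all $T$ with $\pres{h}\subseteq T$. This is false: take $h=[\lfill\,t,\;\touch_{\mathrm{r}}\,t,\;\evict\,t]$ and $T=\{t\}\supseteq\pres{h}=\emptyset$. Then $\cons(h)=\emptyset$, but unfolding the filter gives $\flru(h,\{t\})=[\lfill\,t,\;\touch_{\mathrm{r}}\,t]$ (the eviction is dropped in the ``otherwise'' case while the earlier fill and touch on $t\in T$ are retained), so $\cons(\flru(h,\{t\}))$ contains $t$. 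The same example refutes your coupling invariant under the side condition $\pres{h}\subseteq T$: the filtered queue is \emph{not} the restriction of $\cons(h)$ to $T\cap\pres{h}$ when $T$ contains evicted tags. The inclusion you need is the reverse one. If you trace the recursion $\flru(h)=\flru(h,\pres{h})$ from right to left, peeling off $\evict\,t$ leaves $T$ fixed while $\pres{\cdot}$ of the shorter prefix grows by $t$, and peeling off $\lfill\,t$ shrinks both by $t$; so every recursive call satisfies $T\subseteq\pres{h'}$. Correspondingly, in the forward induction the case $h@(\evict\,t)$ gives $\flru(h@(\evict\,t))=\flru(h,\pres{h}\setminus\{t\})$ and $\cons(h@(\evict\,t))=\pop_t(\cons(h))$, so what you must already know at $h$ is a statement about the filter applied with a \emph{smaller} tag set; your induction hypothesis, quantified over supersets of $\pres{h}$, says nothing about $\flru(h,\pres{h}\setminus\{t\})$ and therefore cannot discharge this case.

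The paper fixes this by strengthening the induction hypothesis in precisely that direction: in addition to $\cons(h)=\cons(\flru(h))$ it carries $\forall t\in\pres{h}.\ \pop_t(\cons(h))=\cons(\flru(h,\pres{h}\setminus\{t\}))$, which is literally your ``restriction'' invariant specialised to removing a single present tag. Re‑establishing this extra clause after each appended action is where the remaining work lies, and it uses two cache‑semantics invariants you implicitly rely on but should state (an $\evict\,t$ or $\touch\,t$ can only occur when $t\in\pres{h}$, and an $\lfill\,t$ only when $t\notin\pres{h}$) together with idempotence and commutativity of the filter, $\flru(h,T)=\flru(\flru(h,T),T)$ and $\flru(\flru(h,T),T')=\flru(\flru(h,T'),T)$. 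With the side condition reversed and the extra clause added, your steps (1)--(4) go through essentially as the paper's proof does.
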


It is proven by induction on the length of $h$, with one additional induction hypothesis:
\[\forall t\in\pres{h}.\ \pop_t(\cons(h),t) = \cons(\flru(h,\pres{h}\setminus\{ t\}))\] 
We need two invariants of the cache semantics for histories $h'=h@a$: \begin{inparaenum}\item if $a$ is an eviction or a touch on tag $t$, then $t\in\pres{h}$, and \item if $a$ is a line fill of tag $t$ then $t\notin \pres{h}$. \end{inparaenum} Moreover, we use idempotence and commutativity properties of applying $\flru(h,T)$ several times.
\[
\begin{array}{rcl}
\flru(h,T) &=& \flru(\flru(h,T),T) \\
\flru(\flru(h,T),T') &=& \flru(\flru(h,T'),T)
\end{array}
\]
With these arguments the detailed proof becomes a straightforward exercise. We also introduce the following lemmas for later use.
\begin{lemma}\label{lem:flrumonoton}
For two histories $h_1$, $h_2$ and two tag sets $T$ and $T'\subseteq T$, if $\flru(h_1,T) = \flru(h_2,T)$ then $\flru(h_1,T') = \flru(h_2,T')$.
\end{lemma}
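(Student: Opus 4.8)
The plan is to reduce Lemma~\ref{lem:flrumonoton} to a single \emph{nested-filtering} identity: for every history $h$ and all tag sets $T'\subseteq T$,
\[
\flru(h,T') \;=\; \flru(\flru(h,T),T')\,.
\]
This is a sharpening of the idempotence and commutativity properties of $\flru$ already invoked in the proof of Assumption~\ref{lem:lruEvctSameTag}; intuitively it says that filtering a history first by the coarser set $T$ and then by $T'$ discards nothing that is relevant to $T'$. Granting it, the lemma is one line: if $\flru(h_1,T)=\flru(h_2,T)$ then
\[
\flru(h_1,T') = \flru(\flru(h_1,T),T') = \flru(\flru(h_2,T),T') = \flru(h_2,T')\,.
\]

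First I would prove the nested-filtering identity by induction on the length of $h$, with the induction hypothesis quantified over \emph{all} shorter histories and all pairs of tag sets satisfying the subset relation (this generality is needed because the line-fill case instantiates the hypothesis with shrunken sets). The base cases follow immediately from the defining equations of $\flru$: for $h=\varepsilon$ both sides reduce to $\varepsilon$ via $\flru(\varepsilon,\cdot)=\varepsilon$, and for $T'=\emptyset$ both sides reduce to $\varepsilon$ via $\flru(\cdot,\emptyset)=\varepsilon$. For the step $h@a$ I would case-split on the last action $a$ and on where its tag lies relative to the three regions $T'$, $T\setminus T'$, and the complement of $T$. A touch action $\touch_{\mathrm{r|w}}\,t$ is routine: when $t\in T'$ both sides prepend $a$ and recurse on $h$ with $T'$, resp.\ $T$, closing by the induction hypothesis; when $t\in T\setminus T'$ the left side drops $a$ via the ``otherwise'' clause and the right side's outer filter also drops $a$, again reducing to the induction hypothesis with the same sets; when $t\notin T$ both sides drop $a$ outright. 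Eviction actions, and any residual ``otherwise'' actions, are dropped on both sides and handled directly by the induction hypothesis.

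The delicate case, and the one I expect to be the main obstacle, is the line-fill action $a=\lfill\,t$, since it is the only action that shrinks the tag set in the recursive call. When $t\in T'$, the left side unfolds to $\flru(h,T'\setminus\{t\})@a$ and the right side unfolds (two applications of the line-fill clause) to $\flru(\flru(h,T\setminus\{t\}),T'\setminus\{t\})@a$; here one must observe that $T'\setminus\{t\}\subseteq T\setminus\{t\}$ in order to apply the induction hypothesis to $h$ with the shrunken sets. When $t\in T\setminus T'$, the left side uses the ``otherwise'' clause and reduces to $\flru(h,T')$, whereas the right side's inner filter keeps $a$ and recurses with $T\setminus\{t\}$ before the outer filter drops $a$; closing this case requires the observation that $t\notin T'$ together with $T'\subseteq T$ yields $T'\subseteq T\setminus\{t\}$, so the induction hypothesis again applies. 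When $t\notin T$ the inner filter already drops $a$ and the argument proceeds as in the touch case. Once all cases are discharged the nested-filtering identity holds, and with it Lemma~\ref{lem:flrumonoton}; I would also remark that, combined with idempotence, the identity recovers and slightly strengthens the commutativity property $\flru(\flru(h,T),T')=\flru(\flru(h,T'),T)$ in the restricted regime $T'\subseteq T$.
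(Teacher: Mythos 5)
Your proof is correct, but it takes a genuinely different route from the paper. The paper disposes of this lemma in one sentence by inducting directly on the length of the \emph{filtered} histories $\flru(h_1,T)=\flru(h_2,T)$, i.e.\ it argues simultaneously about the two histories and peels off matching actions of the common filtered trace. You instead factor the two-history statement through a single-history \emph{nested-filtering} identity, $\flru(h,T')=\flru(\flru(h,T),T')$ for $T'\subseteq T$, proved by induction on the length of the raw history $h$, after which the lemma is a two-line substitution. Your case analysis is sound: the only delicate point is the $\lfill\,t$ clause, which shrinks the tag set in the recursive call, and you correctly observe that in the subcase $t\in T'$ the induction hypothesis is invoked with the pair $(T\setminus\{t\},\,T'\setminus\{t\})$ and in the subcase $t\in T\setminus T'$ with $(T\setminus\{t\},\,T')$, both of which preserve the subset relation; this is exactly why the hypothesis must be quantified over all admissible pairs of tag sets, as you note. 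What your approach buys is a reusable structural fact about $\flru$ that strictly generalises the idempotence property $\flru(h,T)=\flru(\flru(h,T),T)$ already used in the proof of Lemma~\ref{thm:lruQueueEq} (take $T'=T$), and it avoids reasoning about two histories at once; what the paper's approach buys is brevity, since it never needs to introduce or verify the auxiliary identity. Either argument discharges the lemma.
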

We prove this property by induction on the length of the filtered histories using the definition of $\flru(h,T)$.
\begin{lemma}\label{lem:taghistinv}
All cache actions preserve the following invariant on slices $i$ of cache $C$:
\[ T_{\his(C,i)} = \{t\mid  \fway{C}{i}{t} \neq \bot \} \]
\end{lemma}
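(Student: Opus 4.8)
The statement to prove is Lemma~\ref{lem:taghistinv}: for every cache slice $i$ of a cache $C$, the set $T_{\his(C,i)}$ (tags whose last action in the history is not an eviction) equals the set of tags that are currently present in the slice, i.e.\ $\{t \mid \fway{C}{i}{t} \neq \bot\}$.

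\textbf{Overall approach.} This is an invariant-preservation argument, so the plan is to proceed by induction over the sequence of cache operations that build up $C$ from an empty cache. First I would establish the base case: for the empty cache, every slice has empty history, so $T_\varepsilon = \emptyset$ by definition of $\last$ and $T_h$, and likewise no tag is present ($\fway{C}{i}{t} = \bot$ for all $t$). For the inductive step I would assume the invariant holds for $C$ and show it is preserved by each of the core-accessible operations that can modify slice $i$, namely $\ftouch$, $\flfill$ (line fill), and $\fevict$, since these are the only primitives that append to $\his(C,i)$ and change $\fslc{C}{i}$. The key observation is that each primitive appends exactly one action whose tag is the one being manipulated, and simultaneously updates $\fslc{C}{i}$ at precisely that tag: $\touch_{\mathrm{r|w}}\,t$ leaves $t$ present (and present tags are unchanged otherwise), $\lfill\,t$ makes $t$ present, and $\evict\,t$ removes $t$.

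\textbf{Key steps in order.} (1) Unfold the definitions of $\last(h,t)$ and $T_h$, noting that appending action $a$ to $h$ only changes $\last(h@a,\cdot)$ at the tag $\tags(a)$, leaving $\last$ for all other tags unchanged; hence $T_{h@a}$ differs from $T_h$ at most in whether $\tags(a)$ is included. (2) Case split on $a$. For $a = \touch_{\mathrm{r|w}}\,t$: the cache update sets $\fslc{C'}{i}[t]$ to a non-$\bot$ line and leaves all other entries; by the semantics $\touch$ is only applied on a hit, so $t$ was already present, hence membership of every tag is unchanged on both sides, and the invariant carries over using the IH. For $a = \lfill\,t$: the fill is only performed when $\neg\fhit{C}{pa}$, so $t\notin\pres{h}$ before; afterwards $\fway{C'}{i}{t}\neq\bot$ and $\last(h@(\lfill\,t),t) = \lfill\,t \notin \{\bot,\evict\,t\}$, so $t$ is added to both sides, matching. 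For $a = \evict\,t$: the slice update sets $\fslc{C'}{i}[t] = \bot$ and the history gets $\last = \evict\,t$, so $t$ is removed from both sides. (3) Handle the composite operations $\ffill$, $\fread$, $\fwrite$, $\invba$, $\clnba$ by observing each is a finite composition of the three primitives (plus a write-back $\fwriba$, which touches only memory, not the cache slice structure), so the invariant follows by iterating the single-primitive argument; note also that $\alloc$ and $\falias$ can trigger an $\fevict$ on slice $i$ (or another slice), which is already covered. One subtlety to address explicitly: $\clnba$ only resets the dirty bit, so $\fslc{C'}{i}[t]$ stays non-$\bot$ and the history is unchanged w.r.t.\ the tag-presence structure — this must be checked to not perturb $T_h$.

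\textbf{Main obstacle.} The routine part is the per-primitive case analysis; the actual care needed is in being precise that $T_h$ depends on $\his(C,i)$ \emph{only through} the last action per tag, and that the cache semantics maintain the operational side-conditions that tie presence to history (e.g.\ $\ftouch$ is invoked only on hits, $\flfill$ only on misses). These are exactly the ``invariants of the cache semantics'' already invoked in the proof sketch of Assumption~\ref{lem:lruEvctSameTag}, so the main work is to state them carefully for slice $i$ and verify they hold, after which Lemma~\ref{lem:taghistinv} falls out. I expect no deep difficulty, just the bookkeeping of ensuring every operation that appends to the history is matched by the corresponding structural change in the slice, including the aliased-slice eviction case where an $\evict\,t$ action is appended to a \emph{different} slice's history than the one being filled.
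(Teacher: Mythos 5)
Your proposal is correct and matches the paper's intent: the paper dispatches this lemma with the single sentence ``This follows directly from the semantics of the cache actions,'' and your induction over the operation sequence with a per-action case split (touch, fill, evict, plus lifting to the composite operations) is exactly the elaboration that one-liner presupposes. You also correctly flag the only real subtleties --- that $T_h$ depends on the history only through the last action per tag, that touch/fill are guarded by hit/miss side conditions established by the composite operations, and that the alias eviction appends to a different slice's history --- so nothing is missing.
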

This follows directly from the semantics of the cache actions.

\subsection{Proof of Assumption~\ref{prop:cacheeq}}

Before we conduct the proof we first give a formal definition of observational equivalence of cache and memory wrt.~a set of addresses $A$. Let 
$\mathit{TG_A} = \{\tags(a) \mid a \in A\}$ and $\mathit{SI_A} = \{\si(a) \mid a \in A\}$ denote the sets of tags and set indices corresponding to addresses in $A$, then:
\[
\begin{array}{l}
  (C_1,M_1) \cmeq_A (C_2,M_2)\ \  \defequiv \\
  \qquad\begin{array}{ll}
& \forall pa \in A.\ M_1(pa) = M_2(pa)     \\ 
&\land\ \forall i, t.\ \fway{C_1}{i}{t} = \bot \Leftrightarrow \fway{C_2}{i}{t} = \bot \\
&\land\ \forall i, t.\ \fway{C_1}{i}{t}.d = \fway{C_2}{i}{t}.d \\
&\land\ \forall i \in \mathit{SI_A}, t \in \mathit{TG_A}.\\
&\qquad\qquad \fway{C_1}{i}{t} \neq \bot \implc \fway{C_1}{i}{t} = \fway{C_2}{i}{t}  \\
&\land\ \forall i.\ \filter(\fhis{C_1}{i}) = \filter(\fhis{C_2}{i})
  \end{array}
  \end{array}
\]
Now let $C_1=\RealState_1.\cache$, $C_2=\RealState_2.\cache$, $M_1=\RealState_1.\mem$, $M_2=\RealState_2.\mem$, and similar for the primed states, then it needs to be shown:

\[
\begin{array}{l}
\forall \RealState_1,\RealState_2,\RealState_1',\RealState_2',\inst. \ \\
  \qquad\begin{array}{ll}
&\coh(\RealState_1,(\Obs\cap\PA)\setminus A) \\
\land& \coh(\RealState_2,(\Obs\cap\PA)\setminus A) \\ 
\land& \forall pa \in (\Obs\cap\PA)\setminus A.\ \Exe(\RealState_1,pa)= \Exe(\RealState_2,pa) \\
\land& (C_1, M_1) \cmeq_{A} (C_2, M_2) \\
\land& \RealState_1 \RealTrs{\UMode}\RealState_1'\touched{\inst} \\
\land& \RealState_2 \RealTrs{\UMode}\RealState_2'\touched{\inst}
  \end{array}\\
\ \ \Longrightarrow\\
  \qquad\begin{array}{ll}
&\coh(\RealState_1',(\Obs\cap\PA)\setminus A) \\
\land& \coh(\RealState_2',(\Obs\cap\PA)\setminus A) \\ 
\land& \forall pa \in (\Obs\cap\PA)\setminus A.\ \Exe(\RealState_1',pa)= \Exe(\RealState_1',pa) \\
\land& (C_1', M_1') \cmeq_{A} (C_2', M_2') 
  \end{array}
  \end{array}
\]

In general, cache operations cannot make coherent resources incoherent, thus we can focus on the last two claims. All memory instructions are broken down into a sequence of internal cache actions so it suffices to make a case split on the possible cache actions $a\in\Act$. We outline the general proof strategy for each case below. 
\begin{description}
 \item{$a=\touch_{\mathrm{r}}\,t$} --- A read hit does not change contents of cache and memory at all. We only need to consider the changes to the action history of the affected slice $i$. By definition of $\flru$ we have: 

\[
\begin{array}{lcl}
\flru(\his(C_1',i)) &=& \flru(\his(C_1,i)@a) \\
                    &=& \flru(\his(C_1,i))@a \\
                    &=& \flru(\his(C_2,i))@a \\
                    &=& \flru(\his(C_2,i)@a) \\
                    &=& \flru(\his(C_2',i)) 
\end{array}
\]

 \item{$a=\touch_{\mathrm{w}}\,t$} --- The case of write hits is analoguous to the read case, with the exception that the data content and dirty bit may change. Nevertheless the written line is present in both caches with the same contents and the dirty bit becomes $1$ in both states after the write operation. Since the same value is written, we can also show the claim that the data content for tags $t\in\mathit{TG}_A$ are equal.
 \item{$a=\lfill\,t$} --- A line fill leaves the memory and dirty bits unchanged and since we have the same tag states, the line fill occurs in both caches. 

For tags $t\notin\mathit{TG}_A$ that belong to coherent addresses we know that the core-view stays unchanged because a line is only fetched if it was not present in the cache before and the memory content that was visible in the core-view of the pre-state is loaded into the cache to be visible in the core-view of the post-state.

For $t\in \mathit{TG}_A$, relation $\cmeq_{A}$ guarantees the equivalence of the memory contents directly for addresses $A$ and again the same line is filled into the cache. 

In both cases, the tag states stay equivalent because the same tag is added into the cache slice. Concerning the history of the cache slice, we get from the definition of $\flru$ with $h_1=\his(C_1,i)$ and $h_2=\his(C_2,i)$:

\[
\begin{array}{lcl}
\flru(\his(C_1',i)) &=& \flru(h_1@a)\\
                    &=& \flru(h_1@a,T_{h_1@a}) \\
                    &=& \flru(h_1@a,T_{h_1} \cup \{a\})\\
                    &=& \flru(h_1,T_{h_1})@a \\
                    &=& \flru(h_1)@a \\ 
                    &=& \flru(h_2)@a \\
                    &=& \flru(h_2,T_{h_2})@a \\
                    &=& \flru(h_2@a,T_{h_2}\cup \{a\}) \\
                    &=& \flru(h_2@a,T_{h_2@a}) \\
                    &=& \flru(h_2@a) \\
                    &=& \flru(\his(C_2',i))
\end{array}
\]

\item{$a=\evict\,t$} --- For coherent resources evictions do not change the core-view, as any line that is evicted was either dirty before and thus written back to memory, maintaining its addresses' core-view, or it was clean but coherent with the corresponding memory content that becomes visible in the core-view after the eviction. If a confidential line is evicted there is nothing more to show. 

For tags in a line $i$ tag states and filtered histories are equal. By Assumption~\ref{lem:lruEvctSameTag} the eviction policy yields the same result in both states, thus if a line is evicted it is done so in both caches and these lines have the same tag. 

For evicted coherent lines or confidential lines we argue as above. For lines belonging to the set $A$ we know that they have the same contents, so if they are dirty, memory changes in the same way. In case they are clean, memories stay unchanged and are still equivalent. 

In all cases the tag state is manipulated in the same way, as the same tags are evicted, thus they stay equal. The filtered histories for line $i$ are still the same by definition of $\flru$ and the equality of tag states.

\[
\begin{array}{lcll}
\flru(\his(C_1',i) &=& \flru(h_1@a) \\
                   &=& \flru(h_1@a,T_{h_1@a}) \\
                   &=& \flru(h_1@a,T_{h_1} \setminus \{a\})\\
                   &=& \flru(h_1,T_{h_1} \setminus \{a\}) \\ 
                   &=& \flru(h_1,T_{h_2} \setminus \{a\}) & (Lemma~\ref{lem:taghistinv})\\
                   &=& \flru(h_2,T_{h_2}\setminus \{a\})  & (Lemma~\ref{lem:flrumonoton}) \\
                   &=& \flru(h_2@a,T_{h_2}\setminus \{a\}) \\
                   &=& \flru(h_2@a,T_{h_2@a}) \\
                   &=& \flru(h_2@a) \\
                   &=& \flru(\his(C_2',i))
\end{array}
\]

\end{description}

This concludes the proof of Assumption~\ref{prop:cacheeq}.



\bibliographystyle{kthplain}
\bibliography{bib/biblio}

\end{document}